\newlength\@quotemargin
\newenvironment{xquote}[1]{
       \def\xquote@author{#1}%
       \par%
       \parshape 1 \@quotemargin \dimexpr\textwidth-2 \@quotemargin \relax%
       \noindent\ignorespaces%
   }
   {
       \par%
       \parshape 1 \@quotemargin \dimexpr\textwidth-2 \@quotemargin \relax%
       \flushright{\normalfont--- \xquote@author}%
       \par\bigskip%
   }
\renewenvironment{quote}
               {\list{}{\rightmargin=\@quotemargin \leftmargin=\@quotemargin}%
                \item\relax}
               {\endlist}
\newtheorem{theorem}{Theorem}[chapter]
\newcommand{\theoremalias}[2]{
	\newaliascnt{#1}{theorem}
	\newtheorem{#1}[#1]{#2}
	\aliascntresetthe{#1}
	\expandafter\def\csname #1autorefname\endcsname{#2}
}
\let\oldexample\example
\renewcommand{\example}{\oldexample\normalfont}
\newenvironment{proof}[1][Proof]{\emph{#1. }}{\hfill$\square$}
\newenvironment{bottompar}{\par\vspace*{\fill}\noindent}{\clearpage}
\DeclareMathAlphabet{\mathscr}{OT1}{pzc}{m}{it}
\newcommand{\Cont}{\mathcal{C}}
\newcommand{\X}{\EuScript{X}}
\newcommand{\V}{\EuScript{V}}
\newcommand{\U}{\EuScript{U}}
\newcommand{\Vee}{V_{\mathrm{ee}}}
\newcommand{\Vext}{V_{\mathrm{ext}}}
\newcommand{\VH}{V_{\mathrm{H}}}
\newcommand{\Vx}{V_{\mathrm{x}}}
\newcommand{\SO}{\mathrm{SO}}
\newcommand{\Bounded}{\EuScript{B}} 
\newcommand{\lilo}{o}
\newcommand{\bigO}{\EuScript{O}}
\DeclareRobustCommand{\onehalf}{\textstyle \frac{1}{2}}
\newcommand{\overarrow}[1]{{\buildrel{#1}\over\longrightarrow\;}}
\newcommand{\mtext}[1]{\mathop{\quad\mbox{#1}\quad}}
\newcommand{\1}{\mathbf{1}}
\newcommand{\R}{\mathbb{R}}
\newcommand{\C}{\mathbb{C}}
\newcommand{\N}{\mathbb{N}}
\renewcommand{\d}{\,\mathrm{d}} 
\newcommand{\e}{\mathrm{e}}
\renewcommand{\i}{\mathrm{i}}
\renewcommand{\H}{\mathcal{H}}
\newcommand{\id}{\mathrm{id}}
\newcommand{\esssup}[1]{\mathop{\mathrm{ess}\,\sup}_{#1}}
\newcommand{\sign}{\mathop{\mathrm{sign}}}
\newcommand{\q}[1]{``#1''}
\newcommand{\vKS}{v_{\mathrm{KS}}}
\newcommand{\vH}{v_{\mathrm{H}}}
\newcommand{\vxc}{v_{\mathrm{xc}}}
\newcommand{\Lloc}{L^1_{\mathrm{loc}}}
\newcommand{\AC}{\mathcal{AC}}
\newcommand{\ran}{\operatorname{ran}}
\newcommand{\supp}{\operatorname{supp}}
\newcommand{\Tr}{\operatorname{Tr}}
\newcommand{\Lip}{\mathrm{Lip}}
\DeclareRobustCommand{\onehalf}
	{\textstyle \frac{1}{2}}
\begin{document}
\frontmatter 
\begin{titlepage}

{\Huge\noindent
The Density-Potential Mapping in\\[0.1em]
Quantum Dynamics}\\[0.5em]
{\Large\q{There and Back Again}}

\vspace{0.7cm}
{\Large\noindent Markus Penz}\\
Basic Research Community for Physics\\

\noindent Dissertation submitted to the Faculty of Mathematics, Computer Science and Physics of the University of Innsbruck in partial fulfillment of the requirements for the degree of doctor of science.\\
Advisor: Gebhard Grübl, University Innsbruck\\
Secondary advisors: Michael Ruggenthaler, Max Planck Institute for the Structure and Dynamics of Matter, and Robert van Leeuwen, University of Jyväskylä

\noindent 
Digital Edition\\ 
Innsbruck, 18\textsuperscript{th} October 2016

\begin{bottompar}
\textbf{Abstract.} 
This work studies in detail the possibility of defining a one-to-one mapping from charge densities as obtained by the time-dependent Schrödinger equation to external potentials. Such a mapping is provided by the Runge--Gross theorem and lies at the very core of time-dependent density functional theory. After introducing the necessary mathematical concepts, the usual mapping \q{there}---from potentials to wave functions as solutions to the Schrödinger equation---is revisited paying special attention to Sobolev regularity. This is scrutinised further when the question of functional differentiability of the solution with respect to the potential arises, a concept related to linear response theory. Finally, after a brief introduction to general density functional theory, the mapping \q{back again}---from densities to potentials thereby inverting the Schrödinger equation for a fixed initial state---is defined. Apart from utilising the original Runge--Gross proof this is achieved through a fixed-point procedure. Both approaches give rise to mathematical issues, previously unresolved, which however could be dealt with to some extent within the framework at hand.
\\



\ccLogo \; \ccZero \quad To the extent possible under law, the author has waived all copyright and related or neighbouring rights to this work. It thereby belongs to public domain. This work is published from Austria.
%
\end{bottompar}

\end{titlepage}

\tableofcontents
\mainmatter 


\chapter{Easy Reading}

\begin{xquote}{Marcus Aurelius, \emph{Meditations}}
Everything we hear is an opinion, not a fact. Everything we see is a perspective, not the truth.
\end{xquote}

\section{Preamble}


The belief in the continuous progress of knowledge along with a steady or even exponential rise in economical wealth is contrasted by obvious limits to growth, outbreak of social struggles, or profound cultural changes, but still persists.  Science especially is frequently thought as gradually improving our view towards an objective world, revealing what is \emph{true} and providing us with ever expanding models to predict the future. This view is contrasted by the historic accounts of \citeasnoun{fleck} and \citeasnoun{kuhn} who collected evidence for revolutionary changes in the science world that created radical new thought collectives using completely different concepts and language. That a body of knowledge is in itself not a linearly progressing set of statements also became apparent in the writing of this thesis. Parts that were once in the final chapter suddenly found their way to the far front just to be later moved to an appendix (that does not exist anymore) or  utterly removed. It became obvious that a naturally built presentation is hard to achieve, and the demanded form only allows for a treatise page by page. A more suitable format for the rhizomatic structure of knowledge \cite{deleuze-guattari} without definite center, no full hierarchy, and many connections into all directions, which also marks the typical working progress, would be a hypertext document. Maybe this will find consideration in future academic standards.

A linearly oriented thought style also built linear structures as the basis of many scientific models, quantum theory is to a large part no exception. Only the sudden rupture of wave-function collapse introduces a discontinuous yet epistemologically highly problematic element. The following study of potential-density mappings in quantum mechanics is fully immersed in this orthodox framework. It does not even need any reference to measurements and thus avoids all speculation about interpretations. It rather searches to give a precise relation between the physical formulation and mathematical objects, tries to embed the structures into mathematical branches, especially relating to functional analysis and the study of partial differential equations. Its aim is to open the field of time-dependent density functional theory (TDDFT) more for mathematical investigation, to fortify its formal foundations or---on the contrary---to criticise weaknesses in terms of rigour. In the case of the time-independent variant of density functional theory (DFT) much effort has already been laid into this task but we missed a comparable scrutiny in TDDFT.

Of course it is no secret that non-linear phenomena play an essential role in physics. But it is due to the greater availability of mathematical methods that one usually concentrates on linearisations, like the small perturbation of a ground state. \citeasnoun[p.~VII]{dautray-lions-1} express it this way:

\begin{quote}
It has been observed for a long time that the majority of the phenomena of mathematical physics are \emph{non-linear} [...] However, having the possibility of using in a systematic -- and almost \q{commonplace} -- way the procedures for calculating \emph{approximate solutions} of the state of the system, precise results can generally only be obtained in \emph{the linear cases}.
\end{quote} 

Especially in quantum many-body theory, the breeding grounds of TDDFT, it seems to be a typical approach to force as much classical language on the quantum picture as it can sustain to still reproduce some notable non-classical effects. Models are the representations of such effects and experiments are designed to make them visible in nature.

\section{Acknowledgements}

First of all I extend my gratitude to all my friends and family who constitute the tight social web around me that actually makes me who I am, and that enables me to live and work joyfully. You are the best.

Only for the people who also accompanied me in my scientific activities I will also give names. Josef Rothleitner who sadly passed away in 2011 was the one who guided me through theoretical physics in my undergraduate years and sparked primal curiosity in the subject. Gebhard Grübl, who was already my diploma thesis advisor and is now my official doctoral advisor, was the main source of knowledge and advice in all the years that came after. He is surely my most important teacher and I am also happy to call him a friend. With Michael Ruggenthaler I enjoyed lots of exciting bits of knowledge, but we also sat through the dull parts of university life together; this makes him a true companion. Further he served as a principal advisor for this work which was conducted in close cooperation with him. The second main collaborator to name is Robert van Leeuwen who helped Michael and me a lot in getting a grip on mathematical issues in TDDFT and with whom it has always been a great pleasure to work.

Of course there are many other people who helped me to accomplish this work by giving advice or answering questions. My thanks go out to Iva Brezinova, Klaas Giesbertz, Franz Gmeineder, Paul Lammert, Neepa Maitra, S{\o}ren Nielsen, Mikko Salo, and Barry Simon among others.

I have to name a few close friends that were not directly involved in my research activity but helped to create the intellectual surrounding for it to prosper. Most notably I want to mention Alexander Steinicke who achieved this at our very home. I would like to name Lior Aermark as a very special and dear friend I got to know via science. And finally I should not forget the members of the \emph{Basic Research Community for Physics}; it is our collective mission for the future to foster a cooperative, respectful, and open-minded atmosphere for scientific research also outside traditional institutions.

\section{Epistemological disclaimer}
\label{sect-epistem-disclaimer}

\begin{xquote}{\citeasnoun{fleck-1929}}
Natural science is the art of shaping a democratic reality and to be directed by it -- thus being reshaped by it. It is an eternal, synthetic rather than analytic, never-ending labour -- eternal because it resembles that of a river that is cutting its own bed. This is the true and living natural science.
\end{xquote}

Although as a doctoral thesis this piece of work is expected to represent an individual intellectual endeavour, it clearly depends on a whole history of scientific development. As a highly specialised product of science it is closely embedded in a cultural environment, uses the proper language of this community and---more tragically---can actually only be read by those circles. \cite{fleck} Maybe this is a reason why many scientific publications use \q{we} when it is actually a single author writing. This habit shall also be used here.

Of course the scientific working style has been greatly influenced by the advent of online communities. We can not only turn towards portals such as Wikipedia for instant information but also share questions and answers in more interactive spaces that unite thousands of minds. The most useful such site for our purposes proved to be the forums of \emph{StackExchange}, most notably \emph{MathOverflow} (for professional mathematicians) as well as \emph{Mathematics} (for graduate level questions). References to such postings will be given in footnotes instead of the bibliography, while other online resources can be found in the bibliography.

Even though the main aim of this work is to supplement the mathematical framework of TDDFT, the amount of real mathematical rigour varies greatly throughout it. This is mainly due to the fact that this topic stretches from mathematics to physics and into chemistry, thus touching communities with tremendously different vocabularies. As a consequence some sections are presented in a strict definition--theorem--proof style while others tend to be of rather colloquial form including mathematically non-strict theorems and proofs from literature, then marked as \q{conjectures} and \q{formal proofs} respectively. Such a typical `physical' approach to theory may also be justified by a certain fear of `over-mathematisation' of physics that would only slow down progress, expressed in the following quote.

\begin{xquote}{\citeasnoun{schwartz-1962}}
It is a continual result of the fact that science tries to deal with reality that even the most precise sciences normally work with more or less ill-understood approximations toward which the scientist must maintain an appropriate skepticism. Thus, for instance, it may come as a shock to the mathematician to learn that the Schrodinger equation for the hydrogen atom, which he is able to solve only after a considerable effort of functional analysis and special function theory, is not a literally correct description of this atom, but only an approximation to a somewhat more correct equation taking account of spin, magnetic dipole, and relativistic effects; that this
corrected equation is itself only an ill-understood approximation to an infinite set of quantum field-theoretical equations; and finally that the quantum field theory, besides diverging, neglects a myriad of strange-particle interactions whose strength and form are largely unknown. The physicist, looking at the original Schrodinger equation, learns to sense in it the presence of many invisible terms, integral, integrodifferential, perhaps even more complicated types of operators, in addition to the differential terms visible, and this sense inspires an entirely appropriate disregard for the purely technical features of the equation which he sees. This very healthy self-skepticism
is foreign to the mathematical approach.
\end{xquote}

Yet what we do is essentially making a halt in physical progress, taking a break in mathematical rigour, and see what fruits this bears. Such a theoretical introspection also gives rise to some doubts about long-established facts reigning the community.

There seem to be two different viewpoints towards quantum mechanics. One is more algebraic and fitted to the bra-ket notation of Dirac mostly used in quantum optics and quantum information theory, where quantum states and operators acting on them are considered independent of a chosen Hilbert space basis and completely irrespective of an underlying configuration space like $\R^3$. The most natural choice of a basis is then in terms of eigenfunctions of the Hamiltonian that determines the dynamics of the system. Here we consider mostly analytical questions regarding the PDEs that result from an embedding of the system into the usual $\R^3$ space, attributing special significance to wave functions $\psi(\ushort{x})$ and the corresponding spatial densities $n(x)$. Moreover the Hamiltonian often is explicitly time-dependent and thus does not allow for a fixed reference basis. This leads to the other viewpoint that attributes special prominence to a spatial representation of states which will be used all over this work.

A similar aberration from puristic quantum theories seems to occur in quantum many-body theory when it comes to the interpretation of the one-particle density
\[
n(x) = N \int |\psi(x, x_2, \ldots, x_N)|^2 \, \d x_2 \ldots
\d x_N
\]
which reduces to $n(x) = |\psi(x)|^2$ in the one-particle case. Instead of the usual Born rule\footnote{Interestingly the square rule for probabilities was originally established in a footnote resulting from proof corrections in \citeasnoun{born-1926} not for spatial distributions but for scattering amplitudes of electrons.} applied to the wave function in spatial representation and interpreting $n(x)/N$ as a probability density, one usually acts as if $n(x)$ is a genuine charge distribution of the some `electronic fluid'. Indeed Tokatly in the book edited by \citeasnoun{marques} (see the quote preceding \autoref{sect-tddft}) compares TDDFT to hydrodynamics and similar talk is quite common in quantum chemistry. To accommodate those two conflicting views, \citeasnoun[p.~152]{cartwright} identifies the whole theory as a \emph{simulacrum} explanation: \q{To explain a phenomenon is to find a model that fits it into the basic framework of the theory and that thus allows us to derive analogues for the messy and complicated phenomenological laws which are true of it.} This means such general and far-reaching theories as quantum mechanics are essentially `anti-realistic' while specialised phenomenological laws come closer to `truth' which just means they could be brought into accordance with experience. This aspect is again stressed in \autoref{sect-quantum-chem}. The simulacrum account also includes the famous Duhem--Quine thesis that foundational theories are substantially non-unique in that they are always under-determined by observations. No model is `real', yet they might still be judged pragmatically or aesthetically, and thus our judgement over theories is always purely \emph{epistemic}. In the words of \citeasnoun{glasersfeld}: \q{The image of the scientist gradually unveiling the mysteries of a world that is and forever remains what it is, does not seem appropriate.}

We want to add a quote by \citeasnoun{kato-1951} on the `realness' of Coulombic singular potentials as they play a fundamental role in our derivations. A considerable effort will be devoted in \autoref{sect-kato-peturbations} to include them and other singular potentials to the mathematical theory: \q{In this sense the Coulomb potential can be regarded as a faithful and convenient approximation to the real potential in atomic systems.} So Kato seems to share an attitude that assigns `reality' only to something beyond the fundamental laws of physics which  are \emph{simulacra} themselves. Still the physics community is abuzz with people that demand the Coulombic singular potential to be included in the theory not because it is beautiful, simple, or practical but because it is `real'.

\section{A note on non-analyticity}
\label{sect-note-non-analyticity}

The classical proofs of TDDFT work mostly in the context of analytic functions with respect to wave functions, densities, and potentials. The use of only analytic functions is reminiscent of the days when solutions of differential equations were mostly analytically\footnote{Do not confuse \q{analytic}, functions that can be given by convergent power series in any neighbourhood, and \q{analytical}, a term sometimes used in opposition to \q{numerical} that signifies that some closed mathematical expression can be written down for a function. Of course a power series solution would be such a closed expression, so the two terms might correspond.} derived from a power-series ansatz. But many questions of PDE theory cannot be answered sufficiently in the analytic class. Much of our effort was devoted to the task of lifting such constraints and we want to share some thoughts about why considering only analytic functions in physics is \emph{not} enough.
The \citeasnoun{encyclopedia-analytic-function} says

\begin{quote}
Finally, an important property of an analytic function is its uniqueness: Each analytic function is an \q{organically connected whole}, which represents a \q{unique} function throughout its natural domain of existence. This property, which in the 18th century was considered as inseparable from the very notion of a function, became of fundamental significance after a function had come to be regarded, in the first half of the 19th century, as an arbitrary correspondence.
\end{quote}

That an analytic function is globally uniquely defined by its values in any small open subset of its domain (\q{identity principle}) may seem unnatural if it ought to represent an observational quantity from a conceptual point of view. But also the frequent argument that any function can be approximated arbitrarily well by an analytic function like linear combinations of Hermite functions $H_n(x)\exp(-x^2)$ or cut-off Fourier series in the case of bounded domains may prove problematic. This view was expressed by \citeasnoun[p.~33]{hadamard} in the following quote.\footnote{This quote and further inspirations for this discussion were found in a thread on \emph{StackExchange MathOverflow}:
\url{http://mathoverflow.net/questions/114555/does-physics-need-non-analytic-smooth-functions} (2012).}

\begin{quote}
I have often maintained, against different geometers, the importance of this distinction [between analytic and other functions]. Some of them indeed argued that you may always consider any functions as analytic, as, in the contrary case, they can be approximated with any required precision by analytic ones. But, in my opinion, this objection would not apply, the question not being whether such an approximation would alter the data very little, but whether it would alter the solution very little.
\end{quote}

He goes on to show that the Laplace equation in two dimensions interpreted as an initial value problem (Cauchy problem) $\partial_t^2 u = -\partial_x^2 u$ is not well-posed, which means that in this case an arbitrary small variation of the initial data can produce large differences for the respective solutions. A simple analytical example would be
\[
u(t,x) = \varepsilon \e^{\lambda (t+\i x)} \mtext{with} u(0,x) = \varepsilon \e^{\i \lambda x}, \partial_t u(0,x) = \varepsilon\lambda \e^{\i \lambda x},
\]
or its real part respectively if we favour real functions. It is as small as we want at $t=0$ but can be made arbitrarily large for any $t>0$. This is clearly undesirable when describing physical processes that are not supposed to describe chaotic systems, because any small error in the data would destroy the possibility of predictions. In the words of \citeasnoun[p.~38]{hadamard} again: \q{Everything takes place, physically speaking, as if the knowledge of Cauchy's data would \emph{not} determine the unknown function.} He concludes further that no elliptic PDE can lead to a well-posed initial value problem thus such equations are ruled out for describing evolution processes in physics. This is in stark contrast to the formulation of the Laplace equation as a boundary value problem in which case it is the archetypical example of a well-posed, yet static, problem. This also motivates the classification of PDEs into elliptic, hyperbolic and parabolic as explained by \citeasnoun{klainerman} who notes further:

\begin{quote}
The issue of well-posedness comes about when we distinguish between analytic and smooth solutions. This is far from being an academic subtlety, without smooth, non-analytic solutions we cannot talk about \emph{finite speed of propagation}, the distinctive mark of relativistic phy\-sics.
\end{quote}

In the case of the parabolic heat equation where the speed of propagation is indeed infinite, non-analytic solutions can even arise for analytic initial conditions. An example was already given by \citeasnoun[p.~22ff]{kovalevskaya} and it gets adapted for the Schrödinger equation in \autoref{sect-kovalevskaya}.


This process of extending the working space from analytic functions to smooth ones and further also relates to the battle-cry of category theory attributed to Alexander Grothendieck\footnote{This lore apparently comes from John Baez and was found via \emph{StackExchange MathOverflow}: \url{http://mathoverflow.net/questions/35840/the-role-of-completeness-in-hilbert-spaces} (Aug 18, 2010).}:

\begin{quote}
It is better to work in a nice category with nasty objects than in a nasty category with nice objects.
\end{quote}

Just consider a very nice class of functions like the test functions $\Cont^\infty_0(\R)$ or Schwartz functions that have perfect properties but cannot even be equipped with a natural norm.\footnote{Yet there are constructions possible that make it isomorphic to a Banach space of dimension $2^{\aleph_0}$ thus inducing some kind of topology linked to a norm. This wisdom is from Daniel Fischer on \emph{StackExchange Mathematics}:\\
\url{http://math.stackexchange.com/questions/918318/there-is-no-norm-in-c-infty-a-b-which-makes-it-a-banach-space} (Sep 3, 2014).} We can only define a countable collection of increasing seminorms like $\|f\|_{K,M} = \sup\{ |f^{(k)}(x)| \mid k\leq K ,|x| \leq M \}$ on $\Cont^\infty_0(\R)$ that gives them the structure of Fréchet spaces. Yet these classes are included in $L^2(\R)$, a space including all kinds of horrendous functions but with the very rich structure of a Hilbert space that underpins the whole analysis in quantum mechanics. But even when working with Hilbert spaces one may resolve to use sets of nice functions, especially if they are dense and make arbitrarily good approximations possible, see for example \autoref{sect-ess-sa}. Metaphorically spoken this has been expressed in the same online posting where the Grothendieck quote was found:

\begin{quote}
Sometimes it's possible to go to a party hosted by the Square Integrables in their posh mansion, but spend the whole time hanging out with the Schwartz family.
\end{quote}

And we do indeed work with beautifully elaborate function spaces which include very nasty elements, such as Sobolev spaces and derivatives thereof.

\section[The conception of molecules in quantum chemistry]{The conception of molecules in\\quantum chemistry}
\label{sect-quantum-chem}

\begin{xquote}{\citeasnoun[p.~159]{cartwright}}
Schroedinger's equation, even coupled with principles which tell what Hamiltonians to use for square-well potentials, two-body Coulomb interaction, and the like, does not constitute a theory \emph{of} anything. To have a theory of the ruby laser, or of bonding in a benzene molecule, one must have models for those phenomena which tie them to descriptions in the mathematical theory.
\end{xquote}

This following section is taken from the evening talk titled \q{Where do theories come from?} at the \q{Rethinking Foundations of Physics} Workshop in Dorfgastein (Austria) that was held on April 3rd, 2015 by the author. It serves as a first preliminary introduction to quantum chemistry methods and more specifically into the field of DFT which will be resumed in \autoref{ch-dft}.

The generally accepted foundational law for all effects of chemistry is the Schrö\-dinger equation. To fully account for bond and ionization energies its time-independent version is considered sufficient. All relevant information about the chemical structure is thought to be contained in the lowest eigenstate of the Hamiltonian. As the precise determination of it still poses a formidable problem one is bound to several layers of approximation, each one physically well founded of course, where the full molecular Hamiltonian is reduced to a form where calculations become numerically feasible. This usually starts with the Born--Oppenheimer approximation requiring that the electron wave function is treated first with fixed values for the nuclear degrees of freedom (like vibrational and rotational modes). The total energy gets split into
\[
E = E_{\mathrm{el}} + E_{\mathrm{vibr}} + E_{\mathrm{rot}}
\]
This kind of summation of energies is clearly reminiscent of its use as a universal currency of different physical effects. The calculated electronic energy is then inserted into Schrödinger's equation dealing only with the nuclear wave function or combined with a classical treatment of nuclei. Already at this stage the nuclear geometry is designed after known structural formulae, a representation employed since the 1860s.

Because the number of electronic degrees of freedom is still much too damn high in any space grid of reasonable resolution one has to further reduce them. This usually involves an expansion of the full electronic wave function into so-called molecular orbitals that account for covalent chemical bonds by encompassing multiple atoms. Those molecular orbitals come from linear combinations of a chosen basis set like the hydrogen orbitals where the coefficients are calculated with the Hartree--Fock method such that again minimal energy is achieved. A different and computationally less costly approach is DFT where the degrees of freedom are essentially reduced to the overall charge density. Needless to say, this involves further approximations.

DFT stems itself from the Hohenberg--Kohn theorem (\autoref{hk-th}) which roughly states that the charge density of any (non-degenerate) ground state uniquely fixes the effective external potential for interacting and also for fictitious non-interacting electrons. Thus in principle it is possible to substitute any system of interacting electrons with one of non-interacting electrons but with an added auxiliary potential that accommodates for all inter-electron effects. The new system is now much easier to solve because the single-electron Schrödinger equations decouple and the resulting orbitals can just be filled up one after another. That way a different electronic structure is produced but one still gets the same charge density by virtue of the Hohenberg--Kohn theorem. This serves as the remaining functional variable of all other properties of interest, particularly the energy, thus the name \q{density functional theory}. Such an approximation for the energy is applied to a further partition of the energy functional.
\[
E_{\mathrm{el}} = E_{\mathrm{kin}} + E_{\mathrm{ext}} + E_{\mathrm{H}} + E_{\mathrm{xc}}
\]
The kinetic energy and the energy of the electrons in any external potential are straightforward. The inter-electron effects are approximated by the Hartree term $E_{\mathrm{H}}$ \eqref{hartree-term} describing the mean Coulombic repulsion given by the charge density. All hope rests on the last term, the exchange-correlation energy that needs to account for all quantum effects, see \autoref{sect-tfdt-lda} for a detailed account. \citeasnoun{feynman-stat-mech} in his book on statistical mechanics notes that the $E_{\mathrm{xc}}$ which remains without any idea for an exact expression is sometimes called the \q{stupidity energy}. It gets approximated by quite arcane methods, sometimes involving numerous parameters that are fitted to test scenarios or are taken from experience. Dozens of different such \q{functionals} exist, each performing good for one type of problem (like metals, organics, different bond-types etc.) and worse for others. Yet this heavily approximative theory is most successfully applied, impressively displayed by \citeasnoun{redner} in his \q{Citation Statistics From More Than a Century of Physical Review} where the three most cited papers are from the field of DFT. The theory lies at the core of a whole industry of computer-aided chemical computation, enabling the calculation of properties of manually manipulated molecules. The product description of a popular software package called \citeasnoun{gaussian} states:

\begin{quote}
With GaussView, you can import or build the molecular structures that interest you, set up, launch, monitor and control Gaussian calculations, and retrieve and view the results, all without ever leaving the application. GaussView 5 includes many new features designed to make working with large systems of chemical interest convenient and straightforward. [...] We invite you to try the techniques described here with your own molecules.
\end{quote}

The long way from Schrödinger's equation to useful approximations casts the method far into the domain of phenomenological laws. Those are laws that really `connect to nature' in that one gathers experimental evidence for them in practice. This is also the reason why \citeasnoun{cartwright} attributes `truth' only to such phenomenological laws, while the theoretical, fundamental laws like Schrödinger's serve as explanations and rules to guide our physical intuition. This split in the realm of scientific laws also finds expression in \citeasnoun{weyl-1925}:

\begin{quote}
If phenomenal insight is referred to as knowledge, then the theoretical one is based on belief -- the belief in the reality of the own I and that of others, or belief in reality of the external world, or belief in the reality of God. If the organ of the former is \q{seeing} in the widest sense, so the organ of theory is \q{creativity}.
\end{quote}

\q{Science does not discover; rather it creates}, writes \citeasnoun{sousa-santos-1992}. Such a science can be seen as an economical pursuit, a method to save thought and to provide laws as mnemonic tricks. \cite[p.~55]{mach}
But again, this view towards science is no new conception at all and has already been nicely expressed in the introduction of one of the classical masterpieces of science, Copernicus' \emph{De revolutionibus orbium coelestium} (1543), in an unsigned letter by Andreas Osiander meant to calm down hostile reaction from the church, added without Copernicus' permission, and sometimes viewed as a betrayal on the realist program of natural science. It might tell something about where theories come from if they are not supposed to be divinely revealed to us.

\begin{quote}
For it is the duty of an astronomer to compose the history of the
celestial motions through careful and expert study. Then he must conceive and devise the causes of these motions or hypotheses about them. Since he cannot in any way attain to the true causes, he will adopt whatever suppositions enable the motions to be computed correctly from the principles of geometry for the future as well as for the past. The present author has performed both these duties excellently. For these hypotheses need not be true nor even probable. On the contrary, if they provide a calculus consistent with the observations, that alone is enough. [...] For this art, it is quite clear, is completely and absolutely ignorant of the causes of the apparent nonuniform motions. And if any causes are devised by the imagination, as indeed very many are, they are not put forward to convince anyone that are true, but merely to provide a reliable basis for computation. However, since different hypotheses are sometimes offered for one and the same motion (for example, eccentricity and an epicycle for the sun's motion), the astronomer will take as his first choice that hypothesis which is the easiest to grasp. The philosopher will perhaps rather seek the semblance of the truth. But neither of them will understand or state anything certain, unless it has been divinely revealed to him.
\end{quote}

This concludes the introductive epistemic remarks and after some further notes on notation and terminology we delve right into some serious mathematics, setting the stage for the investigations that follow.

\section{Remarks on notation and terminology}

Even though we follow standard mathematical notation a few aberrations are inevitable plus we use some special terminology that will be explained here. Please refer to the following table of symbols or the additional explanations below if any notation seems unclear.

\begin{longtable}{p{.25\textwidth}p{.65\textwidth}}
$\emptyset$ & empty set\\
$\N$ & natural numbers, $\{1,2,3,\ldots\}$ \\
$\N_0$ & natural numbers including zero, $\{0,1,2,3,\ldots\}$\\
$\Re, \Im$ & real and imaginary part of a complex expression\\
$B_r(x)$ & open ball of radius $r$ centred at $x$\\
$x,y,x_i,y_i$ & usually particle positions in $\R^d$, mostly $d=3$; sometimes $x_i$ can also be the Cartesian coordinates of a vector instead of particle positions\\
$\ushort{x} = (x_1,\ldots,x_N)$ & ordered collection of $N$ particle positions\\
$\bar{x} = (x_2,\ldots,x_N)$ & ordered collection of $N-1$ particle positions, missing the first one\\
$x s, x_i s_i, \ushort{x}\ushort{s}$ & particle positions together with spin coordinates\\
$\partial_x$ & partial derivative $\partial/\partial x$\\
$\partial_x^\nu$ & $\nu$-th order partial derivative\\
$\partial_k$ & partial derivative in the $k$-th coordinate direction\\
$D^\alpha$ & multi-index notation for the (weak) partial derivatives $D^\alpha = D^{(\alpha_1,\ldots,\alpha_d)} = \partial_1^{\alpha_1}\ldots \partial_d^{\alpha_d}$ \\
$\dot f$ & short notation for time derivative $\partial_t f$ \\
$f \circ g$ & composition of functions\\
$\hat f$ or $\mathcal{F}f$ & Fourier transformation\\
$f[g]$ & square bracket notation reserved for 
functional dependency\\
$\1_X$ & characteristic function for the set $X$\\
$\lfloor\cdot\rfloor$ and $\lceil\cdot\rceil$ & floor and ceiling functions for real numbers\\
$\|\cdot\|$ & operator norm or canonical norm of other normed vector spaces, usually the Hilbert space under consideration\\
$\|\cdot\|_X$ & norm of the normed vector space $X$\\
$z^*$ & complex conjugate for a $z \in \C$\\
$A^*$ & adjoint of an operator (\autoref{def-symmetric-selfadjoint})\\
$\overline{A}$ & closure of an operator (\autoref{def-closure})\\
$\hat A, \tilde A$ & transformations of mathematical objects, the $\,\hat{}\,$ for operators in quantum mechanics is used only when indicated\\
$\sigma(A)$ & spectrum of an operator\\
$\langle A \rangle_\psi$ & expectation value of an operator in state $\psi$ \\
$\Cont^k(I,X)$ & $k$-times continuously differentiable functions\\
$\Lip(I,X)$ & Lipschitz continuous functions\\
$\mathcal{B}(X,Y)$ & bounded linear maps\\
$\ker$ & kernel of a map\\
$\ran$ & range of a map\\
$c.c.$ & stands for the complex conjugate of the previous term\\
$c,c_i,C,C_i$ & positive constants in the widest sense (yet constant in space and time), possible dependencies are indicated by indices or by functional arguments\\
$\cdot$ & inner product of Euclidean $\R^n$, but sometimes only the ordinary multiplication of numbers used for better legibility\\
$SO(n)$ & group of rotations on the Euclidean $\R^n$
\end{longtable}

If a domain $\Omega^N$ is taken as an $N$-particle configuration space with particle positions $x_1,\ldots,x_N$ then $\bar{\Omega}$ is the reduced configuration space only involving particle positions $x_2,\ldots,x_N$. This must not be confused with $\overline\Omega$, the closure of $\Omega$ in the Euclidean space $\R^n$.

A smooth function is generally considered to be of type $\Cont^\infty$ though the term `smoothing' may refer to less and just means an increase in regularity of any kind.

A few abbreviations are used throughout the text:

\begin{longtable}{p{.25\textwidth}p{.65\textwidth}}
AGM & inequality of arithmetic and geometric means\\
CSB & inequality of Cauchy--Schwartz--Bunyakowsy\\
DFT & time-independent density functional theory\\
HK & Hohenberg--Kohn, see \autoref{sect-hk-theorem}\\
KS & Kohn--Sham, see \autoref{sect-KS}\\
LDA & local-density approximation, see \autoref{sect-tfdt-lda}\\
ODE & ordinary differential equation\\
PDE & partial differential equation\\
RDM & reduced density matrix, see \autoref{sect-rdm}\\
TDDFT & time-dependent density functional theory\\
UCP & unique continuation property, see \autoref{sect-ucp}
\end{longtable}

For brevity we usually use the convention that all physical quantities are thought to be given in Hartree atomic units, i.e.,
\[
e = \hbar = m_e = \frac{1}{4 \pi \varepsilon_0} = 1.
\]

Finally we will introduce the following notation for quantities that can be mutually estimated, plus a special notation for one-sided estimates including a constant.

\begin{definition}\label{def-sim}
For two expressions $x,y$ that allow for scalar multiplication and have a total order defined on them we write $x \sim y$ if there are constants $C_2 \geq C_1 > 0$ such that $C_1 x \leq y \leq C_2 x$.
\end{definition}

Note that this is a real equivalence relation and thus symmetric because the inequality can easily be transformed to hold with the reciprocal constants as $1/C_2 y \leq x \leq 1/C_1 y$. We also define a one-sided version of this relation.

\begin{definition}\label{def-sim-onesided}
For two expressions $x,y$ like above we write $x \lesssim y$ if there is a constant $C>0$ such that $x \leq C y$.
\end{definition}

At first sight this does not seem a very useful relation because it is always true for any two positive numbers due to the Archimedean property. Yet we also consider expressions such as functions $f,g$ defined on some set $M$ and then $f \sim g$ means $f(x) \sim g(x)$ has to hold with the same constants for all $x \in M$. Thus for example $f \lesssim 1$ is a function with arbitrary upper bound. This notation is useful because it frees us from successively choosing constants in estimates for expressions like special norms. The involved constants can depend on external parameters such as dimensionality and when explicitly noted may also rely on other parameters such as the involved potential. This notation is mainly employed in \autoref{sect-stepwise-static}, \autoref{sect-regularity-sobolev}, and \autoref{sect-variation-trajectories}.


\chapter{Function Spaces and Duality}

\begin{xquote}{Mos Def, \emph{Mathematics}}
	\q{It's simple mathematics.}\\
	\q{Check it out!}\\
	\q{I revolve around science.}\\
	\q{What are we talking about here?}
\end{xquote}

\begin{xquote}{\citeasnoun{reed-simon-2}}
[...] almost all deep ideas in functional analysis have their \emph{immediate} roots in ``applications'', either to classical areas of analysis such as harmonic analysis or partial differential equations, or to another science, primarily physics. [...]\\
More deleterious than historical ignorance is the fact that students are too often misled into believing that the most profitable directions for research in functional analysis are the abstract ones. In our opinion, exactly the opposite is 
true.
\end{xquote}

\section{Sobolev spaces}

\begin{xquote}{\citeasnoun{fichera-1977}}
These spaces, at least in the particular case $p = 2$, were known since the very beginning of this century, to the Italian mathematicians Beppo Levi and Guido Fubini who investigated the Dirichlet minimum principle for elliptic equations. Later on many mathematicians have used these spaces in their work. Some French mathematicians, at the beginning of the fifties, decided to invent a name for such spaces as, very often, French mathematicians like to do. They proposed the name Beppo Levi spaces. Although this name is not very exciting in the Italian language and it sounds because of the name \q{Beppo}, somewhat peasant, the outcome in French must be gorgeous since the special French pronunciation of the names makes it to sound very impressive. Unfortunately this choice was deeply disliked by Beppo Levi, who at that time was still alive, and -- as many elderly people -- was strongly against the modern way of viewing mathematics. In a review of a paper of an Italian mathematician, who, imitating the Frenchman, had written something on \q{Beppo Levi spaces}, he practically said that he did not want to leave his name mixed up with this kind of things. Thus the name had to be changed.  A good choice was to name the spaces after S.~L.~Sobolev. Sobolev did not object and the name Sobolev spaces is nowadays universally accepted.
\end{xquote}

\subsection{Classes of domains}

A \emph{domain}, usually denoted $\Omega$, is always considered to be an open connected subset of $\R^n$, bounded or not. Yet in the theorems below sometimes domains of higher regularity class are demanded. The following definitions are after \citeasnoun[4.4ff]{adams}.

\begin{definition}\label{def-cone-condition}
$\Omega$ satisfies the \textbf{cone condition} if there exists a finite cone $C = \{ x \in \R^n \mid 0\leq |x|\leq\rho, \angle(x,v) \leq \alpha/2 \}$ of height $\rho \in \R_{>0}$, axis direction $0 \neq v \in \R^n$, and aperture angle $0<\alpha\leq\pi$, such that each $x \in \Omega$ is the vertex of a finite cone $C_x$ contained in $\Omega$ and obtained from $C$ by rigid motion.
\end{definition}

\begin{definition}\label{def-strong-lipschitz-condition}
$\Omega$ bounded satisfies the \textbf{strong Lipschitz condition}, also called being of class $\Cont^{0,1}$, if it has a locally Lipschitz boundary, that is, that each point $x \in \partial \Omega$ should have a neighbourhood $U_x \subset \R^n$ such that $\partial\Omega \cap U_x$ is the graph of a Lipschitz continuous function from $\R^{n-1}$ into $\R$.
\end{definition}

A corresponding definition is also available for unbounded domains and then called the strong \emph{local} Lipschitz condition \cite[4.9]{adams}, but it is much more involved and will thus be omitted here. The strong Lipschitz condition certainly holds for rectangular boxes, the only doubt arising in the corners which can be envisioned as the graph of the function $x \mapsto |x|$ if tilted by 45°. Thus problems only arise if corners are like cusps with infinite slope. Domains fulfilling the strong Lipschitz condition always have the cone property.

Two further more basic domain classes shall be given.

\begin{definition}\label{def-convex-domain}
$\Omega$ is called \textbf{convex} if for all $x,y \in \Omega$ the straight line connecting those points is also completely included in $\Omega$.
\end{definition}

\begin{definition}\label{def-quasiconvex-domain}
$\Omega$ is called \textbf{quasiconvex} if there is a constant $M>0$ such that all $x,y \in \Omega$ can be joined by a curve of length less than $M |x-y|$. Any convex domain has this property with $M=1$.
\end{definition}

\subsection{From Lebesgue to Sobolev spaces}
\label{sect-lebesgue-sobolev-spaces}

The Lebesgue function spaces $L^p(\Omega)$, $1 \leq p \leq \infty$, not only form the basic space for wave functions, where the norm defined on them is the basis for an interpretation in terms of probabilities, but become relevant in this work also as the domains of the density-potential mapping. They consist of all Lebesgue-measurable functions $f : \Omega \rightarrow \R$ or $\C$ with finite $L^p$-norm, i.e.,
\[
\|f\|_p = \left(\int_\Omega |f(x)|^p d x\right)^{1/p} < \infty.
\]
The special case $p=\infty$ represents all functions that are bounded up to a set of measure zero and the associated norm $\|f\|_\infty$ is given by the smallest such bound. If we take $f$ roughly with amplitude $A$ and non-zero on a volume $V$ then the $L^p$-norm measures the quantity $AV^{1/p}$. This means that lower $L^p$-spaces allow more singularity while higher ones are more forgiving towards spreading, also expressed by the (continuous) embedding $L^p(\Omega) \subset L^q(\Omega)$ if $p > q$ on bounded domains $\Omega$ where the spread is not an issue. Similarly $f \in L_\mathrm{loc}^p(\Omega)$ just demands local integrability, meaning that $f \in L^p(K)$ for all compact $K \subset \Omega$. As all those spaces are normed vector spaces in which every Cauchy sequence converges (completeness) they form proper Banach spaces. In the case $p=2$ the norm is directly linked to the usual inner product by $\langle f,f \rangle = \|f\|^2$ (note that we sometimes omit the index $2$ of the norm in this important case) and $L^2(\Omega)$ has the structure of a Hilbert space that has risen to eminent prominence within quantum theory.

\begin{definition}\label{weak-derivative}
Given $f,g \in \Lloc(\Omega)$ we define $g$ as the \textbf{weak partial derivative} of $f$ with respect to the $k$-th coordinate, written here just as usual $g = \partial_k f$, if for all test functions $\varphi \in \Cont^\infty_0(\Omega)$ it holds $\langle \varphi, g \rangle = -\langle \partial_k\varphi, f \rangle$.
\end{definition}

The related class of Sobolev spaces includes the weak derivatives of several orders into its definition. Thus not only amplitude $A$ and volume $V$ are measured but also frequency $N$ with the sensitivity controlled by a parameter $m \in \N$ defining up to what order derivatives get included into the Sobolev norm. This norm will be concerned with the quantity $AV^{1/p}N^m$.\footnote{This idea is taken from a very nice and intuitive answer of Terence Tao on the collaborative website \emph{StackExchange MathOverflow}:
\url{http://mathoverflow.net/questions/17736/way-to-memorize-relations-between-the-sobolev-spaces} (Mar 11, 2010).} This is already a strong indication towards the Sobolev embedding theorems mentioned below. To define the $W^{m,p}$-norm \cite[3.1]{adams} we use the multi-index notation $D^\alpha = D^{(\alpha_1,\ldots,\alpha_d)} = \partial_1^{\alpha_1}\ldots \partial_d^{\alpha_d}$ for the weak $\alpha$-th partial derivative. We also note another equivalent and shorter definition right away (cf.~\autoref{lemma-equiv-norms} below).
\begin{equation}\label{eq-sobolev-norm}
\begin{aligned}
\|u\|_{m,p} &= \left( \sum_{|\alpha| \leq m} \|D^\alpha u\|_p^p \right)^{1/p} \sim \sum_{|\alpha| \leq m} \|D^\alpha u\|_p \mtext{if} 1 \leq p < \infty\\[0.5em]
\|u\|_{m,\infty} &= \max_{0\leq |\alpha|\leq m} \|D^\alpha u\|_\infty.
\end{aligned}
\end{equation}

Originally there were actually two definitions of spaces related to the Sobolev norm defined above. $W^{m,p}(\Omega)$ is the space of function where all derivatives $D^\alpha u$ up to order $|\alpha| = m$ are in $L^p(\Omega)$, whereas $H^{m,p}(\Omega)$ denoted the closure of $\Cont^m(\Omega)$ under that norm. The embedding $H^{m,p}(\Omega) \subset W^{m,p}(\Omega)$ \cite[3.4]{adams} seems quite natural but that in fact identity $H^{m,p}(\Omega) = W^{m,p}(\Omega)$ holds for arbitrary open domains $\Omega$ has been proven relatively late by \citeasnoun{meyers-serrin} in a short note simplisticly titled \q{$H=W$}, see also \citeasnoun[3.17]{adams}. Identity is not achieved for $p=\infty$ though, demonstrated by the counterexample $u(x) = |x|$ that is given just after the theorem in \citeasnoun{adams}. We will use the notation $W^{m,p}$ from here on but the notation $H^{m,p}$ will live on for the special case of $p=2$ where we have Hilbert spaces written as $H^{m}=W^{m,2}$.

If the considered Sobolev space has enough regularity, that is $\Omega \subseteq \R^n$ satisfies the cone condition and $mp>n$ or $p=1$ and $m \geq n$, then it can be endowed with pointwise multiplication to yield a Banach algebra. \cite[4.39]{adams} This means for $u,v \in W^{m,p}(\Omega)$ one has also $u \cdot v \in W^{m,p}(\Omega)$ guaranteed from the embedding theorems mentioned in the following section. Note that this algebra has an identity if and only if $\Omega$ is bounded. Actually $1 \in W^{m,p}(\Omega)$ for all $\Omega$ with finite volume, but there are no unbounded domains that simultaneously satisfy the cone condition. Although interesting, this algebra feature has not been utilised in the present work.

The definitive resources on almost all topics relating to Sobolev spaces are the already frequently cited \citeasnoun{adams} and \citeasnoun{mazya}.

\subsection{Sobolev embeddings}

\begin{definition}\label{def-cont-comp-embedded}
Let $U, V$ be Banach spaces with $U \subset V$. We say that $U$ is \textbf{continuously embedded} in $V$ and write $U \hookrightarrow V$ if there is a constant $c \geq 0$ such that for all $u \in U$
\[
\|u\|_V \leq c \, \|u\|_U.
\]
We say that $U$ is \textbf{compactly embedded} in $V$ and write $U \hookrightarrow\hookrightarrow V$ if additionally every bounded sequence in $U$ has a subsequence converging in $V$.
\end{definition}

Compact embeddings are important in showing that linear elliptic PDEs defined over bounded domains have discrete spectra. \cite[6.2]{adams} An example for a compact embedding is $H^1 \hookrightarrow\hookrightarrow L^2$ on $\Omega \subset \R^3$ bounded and satisfying the cone condition. This is a version of the Rellich--Kondrachov theorem (see \autoref{embedding2}) and instead of $H^1$ any higher Sobolev space can be chosen just as well. We can show the plausibility of compactness by considering a cube with edges of unit length in the $H^1$-norm, i.e., all edges are vectors $u$ connecting two neighbouring vertices with $\|u\|_2^2 + \|\nabla u\|_2^2 = 1$. Now consider all edges $u_i$ ordered by increasing $\|\nabla u_i\|_2$ then the embeddings of $u_i$ into $L^2$ clearly have not equal length but their length $\|u_i\|_2$ decreases in the same ordering. Surely there is also an edge with $\|\nabla u\|_2 \approx 1$ and thus $\|u\|_2 \approx 0$. The original unit cube if embedded into $L^2$ gets thinner in every additional dimension and eventually a sequence inside the cube has less and less space to escape into \q{extra dimensions}. Thus every such sequence (that is clearly bounded) must have a converging subsequence and the embedding is indeed compact. A similar embedding will be proved to be compact in \autoref{embedding3}.

Other embeddings relate Sobolev spaces with classes of continuous functions and they are of particular importance regarding boundary conditions. The sought for embedding for $\Omega$ satisfying the cone condition, $j \geq 0, mp > n$ or $m=n$ and $p=1$, is \cite[4.12, I.A]{adams}
\begin{equation}\label{eq-sobolev-embedding}
W^{j+m,p}(\Omega) \hookrightarrow \Cont^j(\overline\Omega).
\end{equation}
Here the space $\Cont^j(\overline\Omega)$ consists of $j$-times continuously differentiable functions $\Cont^j(\Omega)$ that have unique, bounded, continuous extensions to $\overline\Omega$ on all orders. This is equivalent to demanding that all orders of derivatives up to order $j$ are bounded, uniformly continuous functions or by restricting all functions from $\Cont^j(\R^n)$ to the smaller domain $\Omega$. The norm on $\Cont^j(\overline\Omega)$ is thus the maximum of all $\sup_{x \in \Omega} |D^\alpha u(x)|, |\alpha| \leq j$. It is immediately clear how these spaces relate to boundary value problems, since now we can give the boundary value of a function within a Sobolev space a definite meaning. To state something like $u|_{\partial\Omega} = 0$ we only need $u \in \Cont^0(\overline\Omega)$ so we have to assume at least $u \in W^{n,1}(\Omega)$ if $p=1$ or $u \in W^{m,2}(\Omega)$ with $m = \lceil (n+1)/2 \rceil$ if $p=2$ if considering only integer indices. Consequently in such cases we can identify $W^{m,p}(\overline\Omega) = W^{m,p}(\Omega)$ because there is such a unique extension.

On the other side this embedding yields a rule for boundedness just in terms of belonging to Sobolev space with big enough indices to be embedded into the bounded, continuous functions $\Cont^0(\overline\Omega)$. In the case $\Omega \subseteq \R^3$ both $W^{3,1}(\Omega) \hookrightarrow L^\infty(\Omega)$ and $H^2(\Omega) = W^{2,2}(\Omega) \hookrightarrow L^\infty(\Omega)$. This fact will later be important because the dimensional reduction in density functional theory leads from the full $3N$-dimensional configuration space into the usual physical 3-dimensional space whereas the order of differentiability usually stays the same. Thus the degree of regularity can be high enough for the reduced quantities to be bounded. This fact is beneficially used in \autoref{lemma-q-L2-2} and \autoref{cor-strong-lm-sol}.

\subsection{Lipschitz and absolute continuity}
\label{sect-lip-abs-cont}

Regarding the space $W^{1,\infty}(\Omega)$ with norm equivalent to $\|u\|_\infty + \|\nabla u\|_\infty$ one is easily led to believe that all functions therein are Lipschitz continuous, as their first derivatives have bounded (essential) supremum. Indeed $W^{1,\infty}(\Omega) = \Cont^{0,1}(\Omega)$, the space of Lipschitz continuous functions, in any quasiconvex bounded domain $\Omega$.\footnote{This is shown in an answer by user53153 on the \emph{StackExchange Mathematics} board:
\url{http://math.stackexchange.com/questions/269526/sobolev-embedding-for-w1-infty} (Jan 3, 2013).} Since Lipschitz continuity is probably the strongest (and thus nicest) form of continuity and we have boundedness it implies $W^{1,\infty}(\Omega) = \Cont^{0,1}(\Omega) = \Cont^{0,1}(\overline\Omega) = W^{1,\infty}(\overline\Omega)$ by unique extension.

Another popular Sobolev space, now for domains on the real line $\Omega \subseteq \R$, is $W^{1,1}(\Omega) \subset \Cont^0(\overline\Omega)$ where functions and their respective derivatives are Lebesgue integrable. Clearly we have $W^{1,\infty}(\Omega) \subset W^{1,1}(\Omega)$ on bounded $\Omega$ but the space has its own classification in terms of continuity and that is \emph{absolute continuity}. One beautiful characterisation of an absolute continuous function $f$ on a compact interval $\Omega=[a,b]$ using Lebesgue integrals is that $f$ is differentiable almost everywhere, $f' \in L^1([a,b])$ and for all $x \in [a,b]$
\[
f(x)-f(a) = \int_a^x f'(t) \d t.
\]
Now as $f$ is continuous on a compact interval we also have $f \in L^1([a,b])$ and thus $f \in W^{1,1}([a,b])$. The argument can also be given in the other direction so indeed $W^{1,1}([a,b]) = \AC([a,b])$. Note that conversely to before we started out with a closed domain. Using again unique continuation admitted by absolute continuity we have $W^{1,1}([a,b])$ $= W^{1,1}((a,b))$. The possibility of unique continuous continuation to the boundary points means we can give meaningful boundary conditions.

\subsection{Zero boundary conditions}
\label{sect-zero-boundary}

To define functions with zero boundary conditions in one of the spaces above that admit unique, bounded, continuous extension to $\overline\Omega$ it is enough to state $u|_{\partial\Omega} = 0$. In more general cases like that of $L^p$ functions it makes no sense to speak of the boundary values as another function in some $L^q$-sense. But we can rely again on the machinery of Sobolev spaces and utilise the spaces $W_0^{m,p}(\Omega)$ defined as the completion of test functions $\Cont_0^\infty(\Omega)$ under the $W^{m,p}$-norm. Be cautious however because the domain $\Omega$ must be strictly open now. This is because for compact domains one has $\Cont_0^\infty(\Omega) = \Cont^\infty(\Omega)$ because the support of such a test function can only be compact and thus we get $W_0^{m,p}(\Omega)=W^{m,p}(\Omega)$ and no zero boundary conditions at all. It is also true that $W_0^{m,p}(\R^n) = W^{m,p}(\R^n)$ because clearly in such cases there can be no boundary.

A function in $W_0^{1,p}(\Omega)$ is interpreted as the space of Sobolev functions $W^{1,p}(\Omega)$ that vanish at the boundary in the Lebesgue-sense. Thus for $f\in W_0^{m,p}(\Omega)$ one has zero boundary conditions for all $D^\alpha f$ up to $|\alpha| \leq m-1$. \cite[5.37]{adams} If one extends $f$ by 0 then the derivatives of all orders of the extended function are given by 0-extension of the derivative. \cite[p.~217]{walter} In the case of a Dirichlet problem, let us say for the Laplace operator on $L^p$, the adjusted domain is $W^{2,p}(\Omega)\cap W_0^{1,p}(\Omega)$, allowing the necessary order of (weak) derivatives and guaranteeing zero boundary conditions. In the special case $p=2$ (Hilbert spaces) this also gives the domain of the Laplacian on which the operator is self-adjoint which easily follows by twice integration by parts.

To answer the question of restrictions of functions to the border of their domain more precisely, one studies the so-called \emph{trace map}, a bounded linear operator
\[
T: W^{m,p}(\Omega) \longrightarrow L^q(\partial\Omega)
\]
for $mp<n$ and $p \leq q \leq (n-1)p/(n-mp)$. It also demands special constraints on the domain $\Omega \subset \R^n$ not stated here. This tells us that through a loss of certain smoothness we can in fact give a meaning to the function restricted to the boundary. \cite[5.36]{adams}

\subsection{Equivalence of Sobolev norms}

We may use \autoref{def-sim} for equivalence of expressions right away for different norms defined on the same normed vector space. The following lemma gives a basic equivalence of norms used extensively in later chapters.

\begin{lemma}\label{lemma-equiv-norms}\cite[1.23]{adams}\\
Consider the vector space $\R^n$ with different norms.
\begin{align*}
\|x\|_p &= \left( \sum_{i=1}^n |x_i|^p \right)^{1/p} \mtext{if} 1 \leq p < \infty \\
\|x\|_\infty &= \max_{1 \leq i \leq n} |x_i|
\end{align*}
Then all those norms are equivalent.
\end{lemma}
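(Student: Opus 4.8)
The plan is to compare every norm $\|\cdot\|_p$ with the single reference norm $\|\cdot\|_\infty$ and then invoke transitivity of the equivalence relation $\sim$ from \autoref{def-sim}. Since $\sim$ was already observed to be symmetric, and it is plainly transitive (composing two two-sided bounds just multiplies the constants), it suffices to establish $\|x\|_\infty \sim \|x\|_p$ for each fixed $p$ with $1 \leq p < \infty$, with constants independent of $x$.

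For the lower bound I would pick an index $j = j(x)$ realising the maximum, $|x_j| = \max_i |x_i| = \|x\|_\infty$. Then $\|x\|_p^p = \sum_{i=1}^n |x_i|^p \geq |x_j|^p = \|x\|_\infty^p$, hence $\|x\|_\infty \leq \|x\|_p$. For the upper bound I estimate each summand by the maximum: $\|x\|_p^p = \sum_{i=1}^n |x_i|^p \leq n\,\|x\|_\infty^p$, so $\|x\|_p \leq n^{1/p}\|x\|_\infty$. Together these give $\|x\|_\infty \leq \|x\|_p \leq n^{1/p}\|x\|_\infty$, i.e.\ $\|x\|_\infty \sim \|x\|_p$ in the sense of \autoref{def-sim} with $C_1 = 1$ and $C_2 = n^{1/p}$.

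To finish, for any $1 \leq p,q \leq \infty$ one chains $\|\cdot\|_p \sim \|\cdot\|_\infty \sim \|\cdot\|_q$, and transitivity of $\sim$ yields $\|\cdot\|_p \sim \|\cdot\|_q$; explicit constants can be read off as the product $n^{1/p}\cdot n^{1/q}$ (or the sharper $n^{|1/p-1/q|}$ after a short optimisation that is not needed here). I expect essentially no genuine obstacle: the only point demanding a little care is that the index $j$ realising the maximum depends on $x$, so it must not be treated as fixed when claiming the estimate holds for \emph{all} $x$ — but since for every $x$ such a $j$ exists, the inequalities do hold uniformly with the stated constants. One could instead quote the general theorem that all norms on a finite-dimensional vector space are equivalent, proved via compactness of $\{\,x : \|x\|_\infty = 1\,\}$ together with continuity of $\|\cdot\|_p$, but that argument is heavier and implicitly presupposes a reference topology, so the direct estimates above are preferable.
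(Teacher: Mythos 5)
Your proof is correct and follows essentially the same route as the paper: a direct two-sided elementary estimate sandwiching every $\|\cdot\|_p$ between $\|\cdot\|_\infty$ and a constant multiple of it. The paper's version merely chains through $\|\cdot\|_1$ via $\|x\|_\infty \leq \|x\|_p \leq \|x\|_1 \leq n\|x\|_\infty$, whereas you compare each $p$-norm to $\|\cdot\|_\infty$ directly with the slightly sharper constant $n^{1/p}$; the difference is cosmetic.
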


\begin{proof}
Equivalence follows directly from the inequality $\|x\|_\infty \leq \|x\|_p \leq \|x\|_1 \leq n \|x\|_\infty$.
\end{proof}

The seminal book of \citeasnoun{mazya} on Sobolev spaces starts with a much more careful distinction of spaces with certain derivatives in $L^p(\Omega)$ than presented here that are then shown to be isomorphic if defined on domains of a certain regularity. Besides the already given Sobolev norm \eqref{eq-sobolev-norm} of $W^{m,p}(\Omega)$ this includes the norm that only incorporates derivatives of maximal order $m$ next to the ordinary $L^p$-norm.
\begin{equation}\label{eq-equiv-sobolev-norm}
\|u\|_p + \sum_{|\alpha| = m} \|D^\alpha u\|_p
\end{equation}

First we want to give a counterexample \cite[1.1.4, Ex.~2]{mazya} for a domain $\Omega \subset \R^2$ on which one can define a function $u \in L^2(\Omega)$ with $D^{\alpha}u \in L^2(\Omega), |\alpha|=2$, that has a finite norm of the kind given above but infinite $\|u\|_{2,2}$. This clearly needs a non-vanishing first order derivative that yields the infinite value. The domain is built as a \q{rooms and passages} set: one larger bounded domain where the function is set to a constant value and an infinite number of passages with exponentially decreasing length leading to infinitely many rooms of exponentially decreasing size. Now along the passages the function is allowed to have curvature but in the outer rooms it has vanishing second derivative. One can now imagine how an appropriately constructed function may lead to $\|D^\alpha f\|_p = \infty$ for $|\alpha|=1$, thus leading to inequivalent norms. Yet the considered domain is highly irregular and of fractal type so one might ask for necessary and sufficient conditions for equivalence to hold. And indeed the norms \eqref{eq-sobolev-norm}, \eqref{eq-equiv-sobolev-norm}, and even one only involving the highest derivatives
\[
\sum_{|\alpha| = m} \|D^\alpha u\|_p
\]
are equivalent on bounded domains with the cone property. This is shown in \citeasnoun[1.1.11]{mazya} in a corollary following from a generalised Poincaré inequality (cf.~\autoref{th-poincare}). Such inequalities are shown for more general domains and $u \in W^{m,p}_0(\Omega)$ in \citeasnoun[15.4]{mazya} and \citeasnoun[6.29ff]{adams}, already present in the first editions of the books, and more recently by \citeasnoun{wannebo-1994}. The conditions on the domain in the modern treatment uses the concept of \q{capacities of sets}. For a short introduction of this notion and its relation to physical quantities see \citeasnoun{gasinski-1997}.

Later we will be interested in a similar equivalence of the Sobolev norm with the graph norm of the Laplacian\footnote{Mathematical physics is obsessed with this operator, probably because it is the simplest differential operator that is invariant under the group of isometries of the Euclidean space $\R^n$.} $\Delta$ or similar norms, yet on an arbitrary (possibly unbounded) domain, so the theorems cited above will not hold in general. The graph norm of $\Delta$ is defined as $\|\cdot\|_\Delta = \sqrt{\|\cdot\|_2^2 + \|\Delta\cdot\|_2^2}$ which is equivalent to the simple sum $\|\cdot\|_2 + \|\Delta\cdot\|_2$ (see \autoref{lemma-equiv-graph-norm}). We proof the following.

\begin{theorem}\label{th-sobolev-norm-laplace}
For general domains $\Omega \subseteq \R^n$ and $u \in H^{2m}(\Omega) \cap H^{m}_0(\Omega)$, $m \in \N$, it holds that 
\[
\sum_{l=0}^m \|\Delta^l u\|_2 \sim \|u\|_{2m,2}.
\]
In the cases $m=1,2$ this can be reduced to the graph norm $\|u\|_{\Delta^m} \sim \|u\|_{2m,2}$.
\end{theorem}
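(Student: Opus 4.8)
The plan is to split the claimed equivalence $\sim$ into its two one-sided bounds and treat them separately. One direction is essentially free: after expansion $\Delta^l$ is a fixed linear combination of the weak derivatives $D^\alpha$ with $|\alpha|=2l\le 2m$, so the triangle inequality gives $\|\Delta^l u\|_2\lesssim\|u\|_{2m,2}$ and hence $\sum_{l=0}^m\|\Delta^l u\|_2\lesssim\|u\|_{2m,2}$, using no boundary data whatsoever. Everything else is the reverse bound $\|u\|_{2m,2}\lesssim\sum_{l=0}^m\|\Delta^l u\|_2$, and the basic tool I would establish first is the following integration-by-parts identity, valid on an arbitrary open $\Omega$: if $w\in H^2(\Omega)\cap H^1_0(\Omega)$ then $\|\nabla w\|_2^2=-\langle w,\Delta w\rangle\le\|w\|_2\|\Delta w\|_2$. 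This is proved by picking $\varphi_k\in\Cont^\infty_0(\Omega)$ with $\varphi_k\to w$ in $H^1$ — available precisely because $w\in H^1_0$ — and passing to the limit in $\langle\nabla\varphi_k,\nabla w\rangle=-\langle\varphi_k,\Delta w\rangle$, which only uses the definition of the weak derivative together with $\Delta w\in L^2$. Applied to $w=\Delta^l u$, legitimate whenever $\Delta^l u\in H^2(\Omega)\cap H^1_0(\Omega)$, i.e. $2l\le m-1$, it gives $\|\nabla\Delta^l u\|_2^2\le\|\Delta^l u\|_2\|\Delta^{l+1}u\|_2$, so these low-order odd-order quantities are immediately absorbed into the adjacent pure powers of $\Delta$ by AGM.

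For the reverse bound I would induct on $m$. Its algebraic heart is the elementary fact $(1+|\xi|^2)^{2m}\lesssim\sum_{l=0}^m|\xi|^{4l}$ on $\R^n$ — expand binomially and dominate each odd power $|\xi|^{4j+2}\le\tfrac12(|\xi|^{4j}+|\xi|^{4j+4})$ by AGM — which together with Plancherel proves the statement outright whenever $u$ extends to an $H^{2m}(\R^n)$ function commuting with the Laplacian powers, in particular for $\Omega=\R^n$ and, by odd reflection, for half-spaces and Cartesian products of intervals carrying Dirichlet data. In the inductive step one applies the case $m-1$ to each first derivative: since $u\in H^m_0(\Omega)$, every derivative of $\partial_k u$ of order $\le m-2$ is a derivative of $u$ of order $\le m-1$ and hence has vanishing trace, so $\partial_k u\in H^{2(m-1)}(\Omega)\cap H^{m-1}_0(\Omega)$ and the hypothesis yields $\|\partial_k u\|_{2m-2,2}\lesssim\sum_{l=0}^{m-1}\|\partial_k\Delta^l u\|_2$; summing over $k$ controls all derivatives of $u$ of order between $1$ and $2m-1$ by $\sum_{l=0}^{m-1}\|\nabla\Delta^l u\|_2$, whose low-order part is reduced to $\sum_{l=0}^m\|\Delta^l u\|_2$ by the tool above. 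For the addendum on $m=1,2$: when $m=1$ the sum $\sum_{l=0}^1\|\Delta^l u\|_2$ is already (equivalent to) $\|u\|_\Delta$ by \autoref{lemma-equiv-graph-norm}; when $m=2$ one has $u\in H^2_0(\Omega)$, so approximating $u$ in $H^2$ by test functions gives $\|\Delta u\|_2^2=\langle u,\Delta^2 u\rangle\le\|u\|_2\|\Delta^2 u\|_2$ and hence $\sum_{l=0}^2\|\Delta^l u\|_2\lesssim\|u\|_2+\|\Delta^2 u\|_2\sim\|u\|_{\Delta^2}$.

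The genuine obstacle is what remains unhandled by the above: the derivatives of $u$ of order strictly above $m$ and up to $2m$ — for $m=1$ this is exactly the Hessian estimate $\|D^2 u\|_2\lesssim\|u\|_2+\|\Delta u\|_2$ for $u\in H^2(\Omega)\cap H^1_0(\Omega)$ on an arbitrary domain. For test functions one still has the exact boundary-free identity $\|D^2\varphi\|_2=\|\Delta\varphi\|_2$, but $\Cont^\infty_0(\Omega)$ is \emph{not} dense in $H^{2m}(\Omega)\cap H^m_0(\Omega)$ in the $\|\cdot\|_{2m,2}$-norm — its closure is the strictly smaller $H^{2m}_0(\Omega)$ — and the zero-extension of $u$ only lands in $H^m(\R^n)$, not $H^{2m}(\R^n)$, so neither density nor extension closes the gap directly. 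I would handle this by a functional-analytic argument: $H^{2m}(\Omega)\cap H^m_0(\Omega)$ is a Banach space under $\|\cdot\|_{2m,2}$, and one checks it is also complete under the Laplacian norm $\sum_{l=0}^m\|\Delta^l\cdot\|_2$ — a Cauchy sequence there is $H^m$-Cauchy by the tool above, its limit has each $\Delta^l u$ equal to the respective $L^2$-limit, and closedness under this norm is the single point at which any residual regularity hypothesis on $\Omega$ must enter — whereupon the open mapping theorem applied to the continuous identity between the two Banach structures delivers the reverse bound. Two alternatives worth recording are a partition of unity reducing the highest orders to the $\R^n$/half-space model, which already covers the product domains relevant later, and invoking standard $L^2$ elliptic regularity for the Dirichlet Laplacian with the boundary curvature term absorbed by trace interpolation, at the cost of a Lipschitz boundary.
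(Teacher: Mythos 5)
Your easy direction, your integration-by-parts \q{tool} $\|\nabla w\|_2^2=-\langle w,\Delta w\rangle$ for $w\in H^2(\Omega)\cap H^1_0(\Omega)$, and the $m=1,2$ addendum all have counterparts in the paper (the tool is the paper's odd-order reduction, and you justify the limiting argument more carefully than the paper does). But your overall route is different, and the difference sits exactly where the gap is. The paper uses no induction on $m$, no Fourier analysis and no abstract functional analysis: it proves the identity $\sum_{|\alpha|=2}\|D^\alpha u\|_2^2=\|\Delta u\|_2^2$ directly, by rearranging $\sum_{i\neq j}\langle\partial_i\partial_j u,\partial_i\partial_j u\rangle=\sum_{i\neq j}\langle\partial_i^2u,\partial_j^2u\rangle$ through integration by parts (with the boundary terms asserted to vanish for $u\in H^m_0$), then iterates this with $u$ replaced by $\Delta v$ to obtain $\sum_{|\alpha|=2l}\|D^\alpha v\|_2^2\lesssim\|\Delta^l v\|_2^2$ for all even orders, and disposes of odd orders by one further integration by parts and AGM. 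That single identity at order two is precisely the \q{Hessian estimate} you isolate as the genuine obstacle.

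Your proposal never actually proves that obstacle. You correctly diagnose why density of $\Cont^\infty_0(\Omega)$ and zero-extension both fail, but the open-mapping argument you substitute is circular: completeness of $H^{2m}(\Omega)\cap H^m_0(\Omega)$ under $\sum_{l=0}^m\|\Delta^l\cdot\|_2$ is \emph{equivalent} to the a priori estimate you are after (a Cauchy sequence in that norm has an $L^2$-limit $u$ with $\Delta^l u\in L^2$ and $u\in H^m_0(\Omega)$, but deciding whether $u\in H^{2m}(\Omega)$ is exactly the regularity assertion of the theorem), so \q{one checks it is also complete} is the whole theorem, not a verifiable lemma. You concede as much when you write that closedness is \q{the single point at which any residual regularity hypothesis on $\Omega$ must enter} --- yet the statement claims general domains, so a proof that defers to such a hypothesis does not prove it. The two alternatives you record (partition of unity, elliptic boundary regularity) likewise import boundary regularity the statement does not assume. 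To close the gap you must confront the integration by parts for the mixed second derivatives of $u\in H^2(\Omega)\cap H^1_0(\Omega)$ head-on, as the paper does; whether the paper's own one-line justification of the vanishing boundary terms in that step is itself fully rigorous for arbitrary open sets is a legitimate separate worry, but that identity is the intended mechanism, and it is absent from your argument.
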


\begin{proof}
First observe that for arbitrary $u,v \in L^2(\Omega)$ it holds with the  CSB and AGM inequalities that
\begin{equation}\label{eq-uv-estimate}
|\langle u,v \rangle| \leq \|u\|_2 \|v\|_2 \leq \onehalf (\|u\|_2^2 + \|v\|_2^2).
\end{equation}
The relation 
\[
\sum_{l=0}^m \|\Delta^l u\|_2 \sim \left( \sum_{l=0}^m \|\Delta^l u\|_2^2 \right)^{1/2} \leq \|u\|_{2m,2}
\]
is fairly obvious. But can we also establish an estimate
\[
\sum_{|\alpha| = k} \|D^\alpha u\|_2^2 \lesssim \sum_{l=0}^m \|\Delta^l u\|_2^2
\]
in the other direction for all $0 < k \leq 2m$? If $k$ odd we use integration by parts ($u \in H_0^{m}$ is enough such that all boundary terms vanish) to get with \eqref{eq-uv-estimate}
\[
\|D^\alpha u\|_2^2 = \langle D^\alpha u, D^\alpha u \rangle = |\langle D^{\alpha_1} u, D^{\alpha_2} u \rangle| \leq \onehalf (\|D^{\alpha_1} u\|_2^2 + \|D^{\alpha_2} u\|_2^2)
\]
where now $|\alpha_1|, |\alpha_2|$ even. For even $|\alpha|$ we proceed inductively and start with $|\alpha|=2$ and write out all partial derivatives separately. Integration by parts then yields
\begin{equation}\label{eq-sum-laplace-norm}
\begin{aligned}
\sum_{|\alpha| = 2} \|D^\alpha u\|_2^2 &= \sum_{|\alpha| = 2} \langle D^\alpha u, D^\alpha u \rangle \\
&= \sum_{i=1}^n \langle \partial_i^2 u, \partial_i^2 u \rangle + \sum_{i \neq j} \langle \partial_i \partial_j u, \partial_i \partial_j u \rangle \\
&= \sum_{i=1}^n \langle \partial_i^2 u, \partial_i^2 u \rangle + \sum_{i \neq j} \langle \partial_i^2 u, \partial_j^2 u \rangle \\
&= \sum_{i=1}^n \langle \partial_i^2 u, \partial_i^2 u + \sum_{j\neq i}\partial_j^2 u \rangle \\[0.5em]
&= \langle \Delta u,\Delta u \rangle = \|\Delta u\|_2^2.
\end{aligned}
\end{equation}
For $|\alpha|>2$ even we have to repeat the argument taking $u = \Delta v$ which means that
\[
\|\Delta^2 v\|_2^2 = \sum_{|\alpha| = 2} \|D^\alpha \Delta v\|_2^2 = \sum_{|\alpha| = 2} \|\Delta D^\alpha v\|_2^2 = \sum_{|\alpha|, |\beta| = 2} \|D^{\alpha + \beta} v\|_2^2.
\]
Now $D^{\alpha + \beta}$ with all possible $|\alpha|=|\beta|=2$ includes all derivatives of order 4, some even multiple times. So we have the estimate
\[
\sum_{|\alpha| = 4} \|D^{\alpha} v\|_2^2 \lesssim \|\Delta^2 v\|_2^2
\]
that continues likewise to higher even $|\alpha|>4$.\\
Finally we settle the special cases $m=1,2$, where $m=1$ is obvious and for $m=2$ integration by parts readily gives
\[
\|\Delta u\|_2^2 = |\langle u, \Delta^2 u \rangle| \leq \onehalf (\|u\|_2^2 + \|\Delta^2 u\|_2^2)
\]
and finishes the proof.\footnote{This proof was inspired by answers in the following two \emph{StackExchange Mathematics} threads:
\url{http://math.stackexchange.com/questions/101021/equivalent-norms-on-sobolev-spaces} (Jan 21, 2012) and
\url{http://math.stackexchange.com/questions/301404/eqiuvalent-norms-in-h-02} (Feb 12, 2013).
}
\end{proof}

The zero boundary condition $u \in H_0^m$ might also be replaced by a periodic domain where boundary terms vanish when integrating by parts. Note that in the theorem above the particular properties of the domain $\Omega$ are not of interest as it is often the case considering $W^{m,p}_0$ Sobolev spaces because the respective (test) function can just be extended to all of $\R^n$ with zero (see \citeasnoun[4.12 III and 3.27]{adams}). A similar result for $m=1$ yet on bounded domains including the graph norm of more general elliptic partial differential operators and associated weak solutions is called \q{boundary regularity} in \citeasnoun[6.3.2]{evans}. The even more general setting of elliptic partial differential operators of any order on compact manifolds is discussed in \citeasnoun[ch.~III, Th.~5.2 (iii)]{lawson-michelsohn}.

\section{Unbounded theory of operators}
\label{sect-unbounded}

\begin{xquote}{\citeasnoun{lax-2008}}
Friedrichs once told me of a chance encounter with Heisenberg in the sixties. He took the opportunity to express to Heisenberg the profound gratitude of mathematicians for his having created a subject that has led to so much beautiful mathematics. Heisenberg allowed that this was so; Friedrichs then added that mathematics has, to some extent, repaid this debt. Heisenberg was noncommittal, so Friedrichs pointed out that it was a mathematician, von Neumann, who clarified the difference between a selfadjoint operator and one that was merely symmetric.  \q{What's the difference?} said Heisenberg.
\end{xquote}

This section gives a brief introduction into the most important concepts of linear operators on Hilbert spaces with a focus on self-adjoint operators utilised in nearly all of the following chapters. This is usually considered a topic in the field of functional analysis. References heavily consulted by us include \citeasnoun{blanchard-bruening-2} and all four volumes of Reed--Simon. Even in infinite dimensional spaces everything is comparably easy, i.e., similar to linear algebra, if the operators under consideration are bounded. Everything gets way more involved if, a thing unthinkable in finite dimensionality, operators are unbounded.  Yet we have to deal with such operators because they comprise some of the most important and interesting operators used in physics as \q{a fact of life}, as \citeasnoun[VIII.1]{reed-simon-1} phrase it. This is necessarily so because without unbounded operators the cornerstone canonical commutation relation cannot be satisfied. Proposing $[x,p] = \i \hbar$ as an axiom of quantum mechanics (or defining relation of the Heisenberg Lie algebra respectively) we have by commuting $n$-times
\[
[x^n,p] = \i \hbar n x^{n-1}
\]
and thus the operator norms
\begin{align*}
&2\|p\| \|x\|^n \geq \|[x^n,p]\| = \hbar n \|x\|^{n-1}, \\
&2\|p\| \|x\| \geq \hbar n.
\end{align*}
But $n$ can be arbitrarily large and thus at least one of the operators must be unbounded. To allow for that the dimension of the Hilbert space must be infinite.\footnote{This example is from \citeasnoun{wiki-ccr}.}

Because opening the field to unbounded operators really opens a Pandora's box, at least for disciples first entering the subject, we call it an \q{unbounded theory}.

Before we start it is maybe time to lament about one of the most common loosenesses in physics when dealing with the realm of mathematics that could cause significant confusion: A map\footnote{We try to stick to a convention also for example advocated by Serge Lang that a \q{function} (as well as a \q{functional}) has a set of numbers, usually $\R$ or $\C$, as its codomain, whereas a \q{map} or \q{mapping} is more general, like into a Banach space, group or something even more fancy. \cite{wiki-map} A map between similar or identical function spaces is mostly called an \q{operator} and is usually of linear type.} is not only a rule, like \q{take the square of the argument}, but has to include domain and codomain, else we have no idea what to operate on, i.e., where a meaningful usage of the map is possible. If we deal with a differential operator natural domains include the Sobolev spaces.

\subsection{Preceding definitions}
\label{sect-preceding-def}

\begin{definition}
A linear operator $A : \H \rightarrow \H$ is called \textbf{bounded} if there exists an $M \geq 0$ such that for all $\varphi \in \H$
\[
\|A \varphi\| \leq M \|\varphi\|.
\]
The smallest of all such $M$ is called the \textbf{operator norm} of $A$ and written simply $\|A\|$.
\end{definition}

The domain of a bounded operator is always the full Hilbert space, conversely an operator that is not everywhere defined can clearly not be bounded (it may have a bounded extension though). Yet there are unbounded, linear operators that can be defined on all of $\H$. This is shown by the following example taken from \citeasnoun[Prop.~22.1]{blanchard-bruening-2}.

\begin{example}
Take $H$ as a Hamel basis of $\H$. This set is necessarily uncountable for infinite-dimensional Hilbert spaces (as a consequence of the Baire category theorem) but allows for a unique representation of all $\varphi \in \H$ as a finite linear combination of $h_i \in H$.
\[
\varphi = \sum_{i = 1}^n \varphi_i h_i
\]
Now choose a (countable) sequence $(h_j)_{j \in \N} \subset H$ and define the linear operator $A h_j = j h_j$ that gets extended linearly to all of $\H$.
\[
A\varphi = \sum_{i = 1}^n \varphi_i A h_i
\]
If one of the $h_i$ is not part of the chosen sequence then just set $A h_i = 0$. The domain of such an operator is the whole Hilbert space $\H$ but it is unbounded if we just look at $\|A h_j\| \big/ \|h_j\| \rightarrow \infty$.
\end{example}

\begin{definition}\label{def-symmetric-selfadjoint}
A linear, densely defined operator $A : \H \supseteq D(A) \rightarrow \H$ is called \textbf{symmetric} if for all $\varphi, \psi \in D(A)$ we have
\[
\langle A \varphi, \psi \rangle = \langle \varphi, A \psi \rangle.
\]
This property can be extended to larger spaces. Let $D(A^*)$ be the set of all $\psi \in \H$ for which there exist corresponding $\eta \in \H$, such that for all $\varphi \in D(A)$
\[
\langle A \varphi, \psi \rangle = \langle \varphi, \eta \rangle.
\]
This defines the \textbf{adjoint} or \textbf{dual} operator $A^* \psi = \eta$ on $D(A^*)$. For such symmetric operators it holds that $D(A^*) \supseteq D(A)$. If in case of a symmetric operator $D(A^*) = D(A)$ then it is called \textbf{self-adjoint}. As one can easily see, every bounded symmetric operator is thus also self-adjoint.
\end{definition}

\subsection{The theorem of Hellinger--Toeplitz and closed operators}

In the case of symmetric operators we have guaranteed boundedness if the operator is defined everywhere, shown by the following theorem taken from \citeasnoun[Th.~22.1]{blanchard-bruening-2}.

\begin{theorem}[Hellinger--Toeplitz]\label{th-hellinger-toeplitz}
A symmetric, everywhere defined operator is always bounded.
\end{theorem}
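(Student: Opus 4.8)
The plan is to deduce boundedness from the closed graph theorem. The first step is to check that the everywhere-defined symmetric operator $A$ has a closed graph. Suppose $\varphi_n \to \varphi$ in $\H$ and $A\varphi_n \to \psi$; I would like to conclude $A\varphi = \psi$. For every $\eta \in \H = D(A)$ symmetry gives $\langle A\varphi_n, \eta \rangle = \langle \varphi_n, A\eta \rangle$, and letting $n \to \infty$ (continuity of the inner product) this becomes $\langle \psi, \eta \rangle = \langle \varphi, A\eta \rangle = \langle A\varphi, \eta \rangle$. Hence $\psi - A\varphi$ is orthogonal to all of $\H$, so $\psi = A\varphi$ and the graph is closed.

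The second step is simply to invoke the closed graph theorem: a closed linear operator from a Banach space into a Banach space that is defined on the whole domain space is bounded. Since $\H$ is complete and $D(A) = \H$, this applies verbatim and finishes the proof.

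If one prefers to stay closer to the Banach--Steinhaus circle of ideas, there is an equivalent route via the uniform boundedness principle. For each $\psi$ in the closed unit ball of $\H$, the functional $T_\psi : \varphi \mapsto \langle A\varphi, \psi \rangle$ equals $\varphi \mapsto \langle \varphi, A\psi \rangle$ by symmetry, hence is bounded with $\|T_\psi\| \leq \|A\psi\|$. For each fixed $\varphi$ one has $\sup_{\|\psi\| \leq 1} |T_\psi(\varphi)| = \|A\varphi\| < \infty$ because $A\varphi$ is a genuine vector in $\H$ (here the everywhere-definedness enters); the uniform boundedness principle then produces a single $M$ with $\|T_\psi\| \leq M$ for all such $\psi$, and taking the supremum over $\psi$ in the inequality $|\langle A\varphi, \psi \rangle| \leq M\|\varphi\|$ yields $\|A\varphi\| \leq M\|\varphi\|$.

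The only thing that genuinely needs the hypotheses is the appeal to the closed graph theorem (or, in the alternative, to uniform boundedness): both require $A$ to be defined on all of a complete space. The symmetry is used exactly once, to pass from \emph{the graph is a priori only a linear subspace} to \emph{the graph is closed} (respectively, to exhibit the $T_\psi$ as bounded functionals). I do not expect a real obstacle here; the content of the theorem is really the willingness to cite a Baire-category result, and the symmetric case is the textbook illustration of why that is the right tool.
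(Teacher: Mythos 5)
Your proposal is correct, but your primary route differs from the paper's. The paper argues by contradiction with the uniform boundedness principle: assuming $\|A\varphi_n\|\to\infty$ for some normalised sequence, it forms the functionals $T_n(\psi)=\langle A\varphi_n,\psi\rangle=\langle\varphi_n,A\psi\rangle$, notes they are individually continuous and pointwise bounded, and extracts a uniform bound $M$ that forces $\|A\varphi_n\|\leq M$. Your second, ``alternative'' route is essentially this same Banach--Steinhaus argument run forwards rather than by contradiction, with the family indexed by $\psi$ in the unit ball instead of by the offending sequence; the two are interchangeable. Your main route via the closed graph theorem is genuinely different in organisation: the observation that symmetry plus everywhere-definedness forces the graph to be closed (your orthogonality argument is sound, and is in fact the same computation the paper uses in \autoref{lemma-symmetric-closable} to show symmetric operators are closable) reduces the theorem to a single citation. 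What the closed-graph route buys is brevity and a clean conceptual statement --- symmetry gives closedness, completeness gives boundedness --- at the cost of invoking a slightly heavier black box; the paper's route is more self-contained in that it only needs the uniform boundedness principle, which it states explicitly as a consequence of Baire category. Both proofs ultimately rest on the same Baire-category foundation, as you correctly observe.
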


\begin{proof}
We assume that a symmetric $A : \H \rightarrow \H$ is unbounded. Then there is a sequence $(\varphi_n)_{n \in \N} \subset \H, \|\varphi_n\| = 1$, such that $\|A \varphi_n\| \rightarrow \infty$. Now define a sequence of linear functionals $T_n(\psi) = \langle \varphi_n, A\psi \rangle = \langle A\varphi_n, \psi \rangle$. By the CSB inequality all these functionals ought to be continuous because $|T_n(\psi)| \leq \|A\varphi_n\| \cdot \|\psi\|$ following the second representation. Also for fixed argument $\psi$ the sequence is bounded by $|T_n(\psi)| \leq \|\varphi_n\| \cdot \|A\psi\| = \|A\psi\|$ following the first representation. The uniform boundedness principle (also called Banach--Steinhaus theorem and another important consequence of the Baire category theorem) now implies that there is an $M > 0$ such that $\|T_n\| \leq M$ for all $n \in \N$. This implies $\|A \varphi_n\|^2 = T_n(A \varphi_n) \leq \|T_n\| \cdot \|A \varphi_n\| \leq M \|A \varphi_n\|$ leading to $\|A \varphi_n\| \leq M$ which stands in contradiction to the selection of $(\varphi_n)_{n \in \N}$.
\end{proof}

Yet we will frequently be concerned with unbounded operators, which makes the whole story much more difficult (and thus more interesting). Typical examples of Hamiltonians on $L^2$ spaces are usually unbounded, especially the Laplacian $H_0=-\Delta$ (see \autoref{ex-laplacian} for its associated domain). A generalisation of boundedness for such operators in order to retain nice enough properties is \emph{closedness} and we will see later that all self-adjoint operators share this property. A closed operator can be thought of as a mapping where the corresponding domain has been extended maximally in a continuous fashion. Thus they are more general than bounded operators that are necessarily defined everywhere but still retain nice properties.

\begin{definition}\label{def-closed}
A linear, densely defined operator $T : \H \supseteq D(T) \rightarrow \H$ is called \textbf{closed}, if for every sequence $\varphi_n$ in $D(T)$ with $\varphi_n \rightarrow \varphi \in \H$ and $T\varphi_n \rightarrow \xi \in \H$ it holds that $\varphi \in D(T)$ and $\xi=T\varphi$.
\end{definition}

An equivalent notion of closedness is often given in terms of the graph of the operator $\Gamma(T) = \{ (x,Tx) : x \in D(T) \} \subset \H \times \H$. $T$ is now closed if and only if its graph is closed as a subset of $\H \times \H$. The concept of graph norm is also derived from this picture, as the natural norm for elements of $\Gamma(T)$, and used in the following lemma for a further equivalence.

\begin{lemma}
An operator $T$ is closed if and only if $D(T)$ is a Banach space, i.e., a complete normed space equipped with graph norm $\|\cdot\|_T = \sqrt{\|\cdot\|^2 + \|T\cdot\|^2}$.
\end{lemma}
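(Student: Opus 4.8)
The plan is to exploit the isometry between $D(T)$ equipped with the graph norm and the graph $\Gamma(T) = \{(\varphi, T\varphi) : \varphi \in D(T)\}$ regarded as a linear subspace of the product Hilbert space $\H \times \H$ with norm $\|(\varphi,\psi)\| = (\|\varphi\|^2 + \|\psi\|^2)^{1/2}$. First I would check that $\|\cdot\|_T$ really is a norm on $D(T)$: homogeneity and positivity are immediate, $\|\varphi\|_T = 0$ forces $\|\varphi\| = 0$ and hence $\varphi = 0$, and the triangle inequality follows because $\varphi \mapsto (\|\varphi\|, \|T\varphi\|) \in \R^2$ is subadditive in each component while the Euclidean norm on $\R^2$ is monotone in each nonnegative coordinate. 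Equivalently, the map $\iota : \varphi \mapsto (\varphi, T\varphi)$ is linear and satisfies $\|\iota\varphi\|_{\H\times\H} = \|\varphi\|_T$, so it is an isometric isomorphism of $(D(T),\|\cdot\|_T)$ onto $\Gamma(T)$.

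With this identification, the statement reduces to the standard metric-space fact that, since $\H \times \H$ is complete (a finite product of complete spaces), a linear subspace of it is complete in the induced norm if and only if it is closed. Thus $(D(T),\|\cdot\|_T)$ is a Banach space iff $\Gamma(T)$ is closed in $\H \times \H$, and the latter is exactly \autoref{def-closed} rephrased: a sequence $\varphi_n$ in $D(T)$ with $\varphi_n \to \varphi$ and $T\varphi_n \to \xi$ in $\H$ corresponds to $\iota\varphi_n \to (\varphi,\xi)$ in $\H\times\H$, and $\Gamma(T)$ contains this limit precisely when $\varphi \in D(T)$ and $\xi = T\varphi$.

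If one prefers to avoid the graph picture, the same thing can be done by hand. For the forward direction, assume $T$ closed and take a $\|\cdot\|_T$-Cauchy sequence $\varphi_n$; then $\varphi_n$ and $T\varphi_n$ are Cauchy in $\H$, so $\varphi_n \to \varphi$ and $T\varphi_n \to \xi$ for some $\varphi,\xi\in\H$; closedness gives $\varphi \in D(T)$ and $T\varphi = \xi$, whence $\|\varphi_n - \varphi\|_T^2 = \|\varphi_n-\varphi\|^2 + \|T\varphi_n - T\varphi\|^2 \to 0$. For the converse, assume $(D(T),\|\cdot\|_T)$ complete and take $\varphi_n \in D(T)$ with $\varphi_n \to \varphi$ and $T\varphi_n \to \xi$ in $\H$; then $\varphi_n$ is $\|\cdot\|_T$-Cauchy, hence $\|\cdot\|_T$-convergent to some $\psi \in D(T)$, so $\varphi_n \to \psi$ and $T\varphi_n \to T\psi$ in $\H$; uniqueness of limits in $\H$ forces $\psi = \varphi$ and $\xi = T\psi = T\varphi$, i.e.\ $\varphi \in D(T)$ and $T\varphi = \xi$.

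There is no genuine obstacle here; the only points to handle with some care are not conflating \q{Cauchy/convergent in $\H$} with \q{Cauchy/convergent in the graph norm}, and verifying that $\|\cdot\|_T$ is actually a norm before speaking of completeness with respect to it. The density of $D(T)$ and the linearity of $T$ enter only to the extent that they are already part of \autoref{def-closed}.
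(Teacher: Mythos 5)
Your proof is correct, and your ``by hand'' version is essentially the paper's own proof: both directions are the same translation between $\|\cdot\|_T$-Cauchy/convergent sequences and pairs of $\H$-convergent sequences $(\varphi_n, T\varphi_n)$, with closedness supplying (or being supplied by) the limit's membership in $D(T)$. The preliminary reduction via the isometry $\varphi \mapsto (\varphi, T\varphi)$ onto $\Gamma(T)$ is just the graph picture the paper itself invokes immediately before the lemma, so it adds a tidy packaging but not a genuinely different argument.
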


\begin{proof}
The proof is straightforward. If $\varphi_n \rightarrow \varphi, T\varphi_n \rightarrow \xi$ are sequences as in \autoref{def-closed} above then $\varphi_n$ converges in graph norm and thus $\varphi \in D(T), \|\varphi_n-\varphi\| \rightarrow 0$, and $\|T\varphi_n-T\varphi\| \rightarrow 0$. This also establishes $\xi=T\varphi$.

Starting from the definition of closedness we need to show that every Cauchy sequence $\varphi_n$ with respect to the graph norm converges in $D(T)$. This means $\varphi_n$ and $T\varphi_n$ are Cauchy in $\H$ and have the limits $\varphi$ and $\xi$ respectively. Because $T$ is closed this yields $\varphi \in D(T)$ as the proper limit of the sequence in $D(T)$.
\end{proof}

Because sometimes a slightly different definition of the graph norm is more practical we will show their equivalence and use it in the following without special reference to it.

\begin{lemma}\label{lemma-equiv-graph-norm}
$\|\cdot\|_T \sim \|\cdot\| + \|T\cdot\|$.
\end{lemma}

\begin{proof}
The inequality
\[
\sqrt{\|x\|^2 + \|Tx\|^2} \leq \|x\| + \|Tx\| \leq \sqrt{2}\cdot\sqrt{\|x\|^2 + \|Tx\|^2}
\]
is clear from squaring it and using the simple case of the AGM inequality, i.e., $2 \cdot \|x\| \cdot \|Tx\| \leq \|x\|^2 + \|Tx\|^2$. Note that \autoref{lemma-equiv-norms} can also be used to prove the statement directly.
\end{proof}

\begin{definition}\label{def-closure}
An operator $T$ is called \textbf{closable} if it admits a potentially non-unique closed extension. Yet there is always a smallest closed extension called the \textbf{closure}, denoted by $\overline{T}$.
\end{definition}

\begin{lemma}\label{lemma-symmetric-closable}
Every symmetric operator is closable.
\end{lemma}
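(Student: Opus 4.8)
\emph{Proof proposal.} The plan is to produce the closure concretely, as the operator whose graph is the closure of $\Gamma(T)$ inside $\H \times \H$, and to use symmetry together with the density of $D(T)$ at exactly one point: to show that this closed subspace is still the graph of a single-valued operator.

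First I would recall from the discussion following \autoref{def-closed} that an operator is closed precisely when its graph is closed in $\H \times \H$, and that a linear subspace $G \subseteq \H \times \H$ is the graph of a linear operator if and only if $(0,\xi) \in G$ forces $\xi = 0$ (single-valuedness is, by linearity, exactly this condition). Since $\Gamma(T)$ is a linear subspace, its closure $\overline{\Gamma(T)}$ is a closed linear subspace, and the only thing standing between us and a closed extension of $T$ is the implication $(0,\xi) \in \overline{\Gamma(T)} \Rightarrow \xi = 0$.

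Second, I would verify that implication. Suppose $(0,\xi) \in \overline{\Gamma(T)}$, so there are $\varphi_n \in D(T)$ with $\varphi_n \to 0$ and $T\varphi_n \to \xi$. For any $\psi \in D(T)$, symmetry gives $\langle T\varphi_n,\psi \rangle = \langle \varphi_n, T\psi \rangle$, and letting $n \to \infty$ yields $\langle \xi,\psi \rangle = \langle 0, T\psi \rangle = 0$. Because $T$ is densely defined, $D(T)$ is dense in $\H$, so $\xi = 0$. Hence $\overline{\Gamma(T)}$ is the graph of a well-defined linear operator, call it $\overline{T}$; its graph is closed by construction, so $\overline{T}$ is closed, and $\Gamma(T) \subseteq \overline{\Gamma(T)}$ means $\overline{T}$ extends $T$. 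This establishes closability in the sense of \autoref{def-closure}. For minimality: any closed extension $S \supseteq T$ has closed graph $\Gamma(S) \supseteq \Gamma(T)$, hence $\Gamma(S) \supseteq \overline{\Gamma(T)} = \Gamma(\overline{T})$, i.e.\ $S \supseteq \overline{T}$; so $\overline{T}$ really is the smallest closed extension.

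I expect the subtle step to be the one just described: checking that passing to the closure of the graph does not destroy single-valuedness. This is the only place the hypotheses enter, and it is worth noting that both are essential --- density of $D(T)$ (to pass from $\langle \xi,\psi\rangle = 0$ for all $\psi \in D(T)$ to $\xi = 0$) and symmetry (to move $T$ off $\varphi_n$, where we have no convergence control, onto $\psi$). One could argue equivalently via the adjoint --- $T$ symmetric means $T \subseteq T^*$, and $T^*$ is always closed for densely defined $T$, so $T^*$ is a closed extension --- but the graph argument above is self-contained and also hands the closure to us directly.
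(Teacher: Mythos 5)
Your proof is correct and is essentially the paper's argument: both hinge on the identity $\langle T\varphi_n,\psi\rangle = \langle\varphi_n,T\psi\rangle$ together with density of $D(T)$ to force single-valuedness of the limiting operator. Your graph-closure phrasing (reducing everything to $(0,\xi)\in\overline{\Gamma(T)}\Rightarrow\xi=0$) is in fact the cleaner formulation, since the paper's version writes $T\varphi$ before $\varphi\in D(\overline{T})$ has been established, whereas you isolate exactly the well-definedness step.
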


\begin{proof}
As in \autoref{def-closed} take $\varphi_n$ in $D(T)$ with $\varphi_n \rightarrow \varphi \in \H$ and $T\varphi_n \rightarrow \xi \in \H$. Now for all $\psi \in D(T)$
\[
\lim_{n \rightarrow \infty} \langle T\varphi - T\varphi_n,\psi \rangle = \lim_{n \rightarrow \infty} \langle \varphi - \varphi_n,T\psi \rangle = 0.
\]
But this implies $T \varphi = \lim_{n \rightarrow \infty} T \varphi_n = \xi$ because $D(T)$ is assumed to be dense in $\H$.
\end{proof}

\begin{lemma}\label{lemma-adj-abgeschlossen}
A self-adjoint operator is always closed.
\end{lemma}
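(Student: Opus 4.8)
The plan is to show that the adjoint $A^*$ of \emph{any} densely defined operator is automatically closed, and then to invoke self-adjointness $A = A^*$ to conclude. The only ingredient is the continuity of the inner product in each of its arguments, together with the density of $D(A)$, which is what guarantees that $A^*$ is a well-defined operator in the first place (the vector $\eta$ appearing in \autoref{def-symmetric-selfadjoint} being then unique).

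So let $T = A = A^*$ be self-adjoint and take a sequence $\varphi_n \in D(T) = D(A^*)$ with $\varphi_n \rightarrow \varphi \in \H$ and $T\varphi_n = A^*\varphi_n \rightarrow \xi \in \H$, as in \autoref{def-closed}. We must check that $\varphi \in D(T)$ and $\xi = T\varphi$. By the defining property of $D(A^*)$ it suffices to exhibit the identity $\langle A\psi, \varphi \rangle = \langle \psi, \xi \rangle$ for all $\psi \in D(A)$.

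Fix $\psi \in D(A)$. Since each $\varphi_n \in D(A^*)$ we have $\langle A\psi, \varphi_n \rangle = \langle \psi, A^*\varphi_n \rangle$ for every $n$. Letting $n \rightarrow \infty$ and using that both $z \mapsto \langle A\psi, z \rangle$ and $z \mapsto \langle \psi, z \rangle$ are continuous (CSB inequality), the left-hand side tends to $\langle A\psi, \varphi \rangle$ and the right-hand side to $\langle \psi, \xi \rangle$. Hence $\langle A\psi, \varphi \rangle = \langle \psi, \xi \rangle$ for all $\psi \in D(A)$, which by \autoref{def-symmetric-selfadjoint} means precisely that $\varphi \in D(A^*) = D(T)$ and $A^*\varphi = T\varphi = \xi$. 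This establishes closedness.

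There is essentially no obstacle here; the statement is a soft consequence of the continuity of the inner product. The only point deserving a moment's care is that the argument genuinely uses $A = A^*$ and not merely that $A$ is symmetric: a symmetric operator need not equal its adjoint, and indeed need not be closed, although by \autoref{lemma-symmetric-closable} it is always closable. One could alternatively phrase everything graph-theoretically---$\Gamma(A^*)$ is an orthogonal complement in $\H \times \H$ (with respect to a suitable inner product) of the graph-type subspace associated to $A$, hence closed---but the direct sequential argument above is shorter and self-contained.
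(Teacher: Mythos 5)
Your proof is correct and is essentially the paper's argument: both take the sequence from \autoref{def-closed}, pass the adjoint/symmetry identity $\langle A\psi,\varphi_n\rangle = \langle \psi, A^*\varphi_n\rangle$ to the limit via continuity of the inner product, and conclude $\varphi \in D(A^*) = D(A)$ with $A^*\varphi = \xi$. Your framing---that the adjoint of any densely defined operator is automatically closed, with self-adjointness invoked only at the end---is a slightly more general packaging of the same limiting argument, but not a different route.
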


\begin{proof}
For all $\psi \in D(A)$ it holds with the same notation as in \autoref{def-closed} that $\langle A \varphi_n, \psi \rangle = \langle \varphi_n, A \psi \rangle$ and therefor in the limit $n \rightarrow \infty$ we get $\langle \xi, \psi \rangle = \langle \varphi, A\psi \rangle$. For this reason we have $\varphi \in D(A)$ and $\xi = A\varphi$.
\end{proof}

\subsection{Resolvent operators}

\begin{lemma}[von Neumann's formula]\label{lemma-neumann}
Let $T$ be a closed operator and $z \in \C$. Then it holds
\[
\ran(z-T)^\perp = \ker(\bar{z}-T^*).
\]
\end{lemma}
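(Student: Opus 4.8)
The plan is to prove the identity purely by unwinding definitions: the orthogonal complement of the range, the sesquilinearity of the inner product, and \autoref{def-symmetric-selfadjoint} for the adjoint. No spectral theory or analytic input is needed, and in fact closedness of $T$ plays no role in the equality itself; only density of $D(T)$, which is built into the definition of $T^*$, is used. (The closedness hypothesis merely signals the setting in which the lemma will later be applied, e.g. to self-adjoint operators via \autoref{lemma-adj-abgeschlossen}.)

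First I would observe that a vector $\psi$ lies in $\ran(z-T)^\perp$ exactly when $\langle (z-T)\varphi, \psi \rangle = 0$ for every $\varphi \in D(T)$, since the vectors $(z-T)\varphi$ are precisely the elements of $\ran(z-T)$. Expanding and using linearity/conjugate-linearity of $\langle\cdot,\cdot\rangle$ in the respective slots, this rewrites as $\langle T\varphi, \psi \rangle = \langle \varphi, \bar z\, \psi \rangle$ for all $\varphi \in D(T)$, the scalar $\bar z$ appearing on the right regardless of which convention for the inner product one adopts.

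The key step is then to recognise this last statement as the defining condition from \autoref{def-symmetric-selfadjoint}: it says that $\eta := \bar z\,\psi$ is a vector witnessing $\psi \in D(T^*)$, and since $D(T)$ is dense such a witness is unique, so $T^*\psi = \bar z\,\psi$, i.e. $(\bar z - T^*)\psi = 0$. Conversely, any $\psi \in \ker(\bar z - T^*)$ satisfies $\langle T\varphi,\psi\rangle = \langle \varphi, T^*\psi\rangle = \langle \varphi, \bar z\,\psi\rangle$ for all $\varphi \in D(T)$, hence $\psi \perp \ran(z-T)$. This gives both inclusions and closes the argument.

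There is no real obstacle here; the only thing demanding a little care is the bookkeeping of complex conjugates, making sure that the scalar $z$ appearing against $\varphi$ turns into $\bar z$ against $\psi$ and is thereby matched with $\ker(\bar z - T^*)$ rather than $\ker(z - T^*)$. Everything else is a direct translation between the language of orthogonal complements and the language of the adjoint.
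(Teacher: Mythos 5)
Your proof is correct and follows essentially the same route as the paper's: unwind the orthogonality condition $\langle (z-T)\varphi,\psi\rangle=0$, rewrite it as the defining condition for $\psi\in D(T^*)$ with $T^*\psi=\bar z\psi$ using density of $D(T)$, and read the chain backwards for the reverse inclusion. Your side observation that closedness of $T$ is not actually used in the equality is also accurate and consistent with the paper's argument.
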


\begin{proof}
Firstly we show $\ran(z-T)^\perp \subseteq \ker(\bar{z}-T^*)$. Some $\psi \in \ran(z-T)^\perp$ clearly satisfies $\langle (z-T)\varphi,\psi \rangle = 0$ for all $\varphi \in D(T)$, which is the same as $\langle T\varphi,\psi \rangle = \langle \varphi,\bar{z}\psi \rangle$. Thus by definition we have $\psi \in D(T^*)$ and $\langle \varphi,(\bar{z}-T^*)\psi \rangle = 0$. As $D(T)$ is dense in $\H$, it holds that $(\bar{z}-T^*)\psi=0$ and thereby $\psi \in \ker(\bar{z}-T^*)$.

For the inclusion in the other direction one proceeds as given, just from bottom to top.
\end{proof}

\begin{lemma}[resolvent operator]\label{lemma-resolvent}
Let $A$ be a self-adjoint operator and $z \in \C$ with $\Im z \neq 0$. Then $(z - A)^{-1}$ is bounded and fulfils
\[
\| (z - A)^{-1} \| \leq |\Im z|^{-1}.
\]
Stated differently all these $z$ belong to the resolvent set of a self-adjoint operator. The spectrum of the operator must therefore be a subset of the real numbers.
\end{lemma}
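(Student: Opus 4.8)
The plan is to establish the a priori estimate $\|(z-A)\varphi\| \ge |\Im z|\,\|\varphi\|$ for all $\varphi \in D(A)$, and then to upgrade it to everywhere-definedness together with the norm bound for $(z-A)^{-1}$. For the estimate I would use that $\langle A\varphi,\varphi\rangle$ is real --- indeed $\overline{\langle A\varphi,\varphi\rangle} = \langle\varphi,A\varphi\rangle = \langle A\varphi,\varphi\rangle$ since $A$ is symmetric --- so that $\Im\langle (z-A)\varphi,\varphi\rangle = (\Im z)\|\varphi\|^2$. Combined with the CSB inequality this gives
\[
|\Im z|\,\|\varphi\|^2 = \big|\Im\langle (z-A)\varphi,\varphi\rangle\big| \le \big|\langle (z-A)\varphi,\varphi\rangle\big| \le \|(z-A)\varphi\|\,\|\varphi\|,
\]
and dividing by $\|\varphi\|$ (the case $\varphi = 0$ being trivial) yields the claim. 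In particular $z-A$ is injective, and its inverse, wherever defined, obeys $\|(z-A)^{-1}\eta\| \le |\Im z|^{-1}\|\eta\|$.

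It then remains to see that $\ran(z-A) = \H$; this is the step where self-adjointness, not merely symmetry, is essential. Closedness of the range follows from the estimate together with closedness of $A$ (\autoref{lemma-adj-abgeschlossen}): if $(z-A)\varphi_n \to \eta$, the estimate makes $(\varphi_n)$ Cauchy, hence $\varphi_n \to \varphi$, and since $z-A$ is closed, $\varphi \in D(A)$ and $(z-A)\varphi = \eta$. Density of the range follows from von Neumann's formula (\autoref{lemma-neumann}): $\ran(z-A)^\perp = \ker(\bar z - A^*) = \ker(\bar z - A)$ because $A^* = A$; but $\Im\bar z = -\Im z \neq 0$, so the already-proven estimate applied to $\bar z$ shows $\bar z - A$ is injective, i.e. $\ker(\bar z - A) = \{0\}$. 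A subspace that is both dense and closed is all of $\H$, so $\ran(z-A) = \H$.

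Putting this together, $(z-A)^{-1}$ is a bounded operator defined on all of $\H$ with $\|(z-A)^{-1}\| \le |\Im z|^{-1}$, which is the assertion; equivalently every $z$ with $\Im z \neq 0$ belongs to the resolvent set, so $\sigma(A) \subseteq \R$. I do not anticipate a genuine obstacle: the computation of the estimate is immediate once one notices $\langle A\varphi,\varphi\rangle \in \R$, and the only point needing care is the surjectivity argument --- recognising that it is precisely the equality $D(A^*) = D(A)$ that lets von Neumann's orthogonality relation close the gap symmetry alone leaves open, since for a merely symmetric operator the range need not be dense.
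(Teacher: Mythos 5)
Your proof is correct and follows the same overall strategy as the paper's --- an a priori lower bound $\|(z-A)\varphi\| \geq |\Im z|\,\|\varphi\|$, followed by von Neumann's formula (\autoref{lemma-neumann}) to get density of $\ran(z-A)$ --- but the two steps are each executed a little differently. For the estimate, the paper expands $\|(z-A)\psi\|^2 = |z|^2\|\psi\|^2 + \|A\psi\|^2 - 2(\Re z)\langle\psi,A\psi\rangle$ and completes the square, whereas you take the imaginary part of $\langle (z-A)\varphi,\varphi\rangle$ and apply CSB; your version is shorter and makes the role of $\langle A\varphi,\varphi\rangle \in \R$ (i.e.\ symmetry) more transparent. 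For surjectivity, the paper stops at density of the range and then ``extends'' the inverse to a bounded operator on all of $\H$ by continuity, which leaves implicit why the extension is still the inverse of $z-A$; you instead prove the range is closed using the estimate together with closedness of $A$ (\autoref{lemma-adj-abgeschlossen}), so that dense plus closed gives $\ran(z-A)=\H$ outright. Your treatment of that step is the more careful of the two, and it is exactly the argument the paper itself uses later in \autoref{lemma-positive-inverse} and \autoref{lemma-criterion-self-adjoint}. No gaps.
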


\begin{proof}
For all $\psi \in D(A)$ it holds
\begin{align*}
\|(z-A)\psi\|^2 &= |z|^2 \|\psi\|^2 + \|A\psi\|^2 - 2(\Re z) \langle \psi,A\psi \rangle \\
&\geq |z|^2 \|\psi\|^2 + \|A\psi\|^2 - 2(\Re z) \, \|\psi\| \, \|A\psi\| \\
&= (\|A\psi\| - (\Re z)\,\|\psi\|)^2 + (\Im z)^2  \|\psi\|^2 \\
&\geq (\Im z)^2  \|\psi\|^2
\end{align*}
and thus
\begin{equation}\label{eq-lemma-resolvent-inequ}
\|(z-A)\psi\| \geq |\Im z| \, \|\psi\|.
\end{equation}
The same inequality \eqref{eq-lemma-resolvent-inequ} holds for $\bar{z}-A$ and leads to $\ker(\bar{z}-A) = \{0\}$. \autoref{lemma-neumann} tells us that $\ran(z-A)^\perp = \ker(\bar{z}-A)$, so $\ran(z-A)$ has to be dense in $\H$. This can be used to define an inverse operator
\[
(z-A)^{-1} : \ran(z-A) \rightarrow \H
\]
with a dense domain. Because of $\Im z \neq 0$ and \eqref{eq-lemma-resolvent-inequ} it satisfies the inequality from \autoref{lemma-resolvent} and can be extended to a bounded operator on all of $\H$.
\end{proof}

\subsection{Positive operators}

\begin{definition}
Let $T$ be a linear, densely-defined operator. We write $T \geq \lambda$ if for all $\varphi \in D(T)$ with $\|\varphi\|=1$ it holds $\langle \varphi, T \varphi \rangle$ is real and $\langle \varphi, T \varphi \rangle \geq \lambda$. The operator is called \textbf{positive} if $T \geq 0$.
\end{definition}

This could be used to introduce a partial ordering on an algebra of linear operators by $A \geq B$ if and only if $A-B$ is positive. The classical example of a positive operator is $-\Delta$ on a domain such that the boundary terms vanish (cf.~the beginning of \autoref{sect-const-potential}). This is easy to check with integration by parts.
\[
\langle \varphi, -\Delta \varphi \rangle = \langle \nabla\varphi, \nabla\varphi \rangle = \|\nabla\varphi\|^2 \geq 0
\]

\begin{lemma}\label{lemma-positive-symmetric}
A positive operator on a complex Hilbert space is always symmetric.
\end{lemma}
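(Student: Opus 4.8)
The plan is to use the fact that on a \emph{complex} Hilbert space a sesquilinear form is completely determined by its diagonal (the associated quadratic form), so that the reality of $\varphi \mapsto \langle \varphi, T\varphi\rangle$ that is built into positivity forces the Hermitian symmetry $\langle T\varphi,\psi\rangle = \langle \varphi, T\psi\rangle$ for all $\varphi,\psi \in D(T)$.

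First I would upgrade the hypothesis slightly: although $T \geq 0$ is phrased only for unit vectors, scaling $\varphi$ by $c \in \C$ multiplies $\langle\varphi,T\varphi\rangle$ by the real scalar $|c|^2$, so in fact $\langle\varphi,T\varphi\rangle \in \R$ for \emph{every} $\varphi \in D(T)$ (the case $\varphi=0$ being trivial).

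The core step is a polarisation argument. Fix $\varphi,\psi \in D(T)$; since $D(T)$ is a linear subspace, $\varphi + \lambda\psi \in D(T)$ for every $\lambda \in \C$, and expanding by sesquilinearity of the inner product together with linearity of $T$ gives
\[
\langle \varphi+\lambda\psi,\, T(\varphi+\lambda\psi)\rangle = \langle\varphi,T\varphi\rangle + \lambda\langle\varphi,T\psi\rangle + \bar\lambda\langle\psi,T\varphi\rangle + |\lambda|^2\langle\psi,T\psi\rangle .
\]
The left-hand side and the first and last terms on the right are real, hence $\lambda\langle\varphi,T\psi\rangle + \bar\lambda\langle\psi,T\varphi\rangle \in \R$ for all $\lambda \in \C$. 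Writing $a = \langle\varphi,T\psi\rangle$ and $b = \langle\psi,T\varphi\rangle$, the choice $\lambda = 1$ yields $\Im a + \Im b = 0$ and the choice $\lambda = \i$ yields $\Re a = \Re b$; together these say $a = \bar b$. Since $\bar b = \overline{\langle\psi,T\varphi\rangle} = \langle T\varphi,\psi\rangle$, this is exactly $\langle\varphi,T\psi\rangle = \langle T\varphi,\psi\rangle$, i.e.\ $T$ is symmetric in the sense of \autoref{def-symmetric-selfadjoint}.

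There is no genuine obstacle; the one thing worth flagging is that the complex structure is essential — both evaluations $\lambda = 1$ and $\lambda = \i$ are used — so the lemma genuinely fails on real Hilbert spaces, and, incidentally, density of $D(T)$ plays no role in this particular argument (it is only needed for the adjoint $T^*$, hence for the very notion \q{symmetric}, to be well posed).
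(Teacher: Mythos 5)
Your proof is correct and rests on the same idea as the paper's: positivity forces the quadratic form $\varphi\mapsto\langle\varphi,T\varphi\rangle$ to be real, and a polarisation argument (the paper invokes the polarisation identity directly, you evaluate the expanded quadratic form at $\lambda=1$ and $\lambda=\i$) then yields $\langle\varphi,T\psi\rangle=\langle T\varphi,\psi\rangle$. Your remarks on scaling away the unit-norm restriction and on the essential role of the complex structure are accurate but do not change the substance of the argument.
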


\begin{proof}
The sesquilinear inner product allows $\langle x,Ax \rangle = \langle Ax,x \rangle^* = \langle Ax,x \rangle$ because its a real number from positiveness. Using the polarisation identity we have for $x,y \in D(A)$
\begin{align*}
\langle x,Ay \rangle &= \frac{1}{4} \sum_{s=0}^3 \i^{-s} \langle x+\i^s y,A (x+\i^s y) \rangle \\
&= \frac{1}{4} \sum_{s=0}^3 \i^{-s} \langle A(x+\i^s y),x+\i^s y \rangle = \langle Ax,y \rangle.
\end{align*}
\end{proof}

\begin{lemma}\label{lemma-positive-inverse}
Let $A \geq \lambda > 0$ and self-adjoint. Then there is an inverse operator $A^{-1} : \H \rightarrow D(A) \subseteq \H$ bounded by $\lambda^{-1}$.
\end{lemma}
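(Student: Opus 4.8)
The plan is to imitate the structure of the proof of \autoref{lemma-resolvent}, replacing the role of $\Im z \neq 0$ (which fails here, since $0$ is real) with the positivity hypothesis. First I would establish the key lower bound $\|A\varphi\| \geq \lambda \|\varphi\|$ for all $\varphi \in D(A)$: for $\varphi$ with $\|\varphi\| = 1$ the CSB inequality together with positivity gives $\|A\varphi\| = \|A\varphi\|\,\|\varphi\| \geq |\langle \varphi, A\varphi \rangle| \geq \lambda$, and the general case follows by homogeneity. In particular $\ker A = \{0\}$, so $A$ is injective and the inverse $A^{-1}$ is well-defined on $\ran A$, where it satisfies $\|A^{-1}\psi\| \leq \lambda^{-1}\|\psi\|$.

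Next I would show that $\ran A$ is dense in $\H$. Since $A$ is self-adjoint it is closed by \autoref{lemma-adj-abgeschlossen}, so \autoref{lemma-neumann} applied with $z = 0$ gives $\ran(A)^\perp = \ran(0 - A)^\perp = \ker(0 - A^*) = \ker A^* = \ker A = \{0\}$, hence $\overline{\ran A} = \H$. At this point we have a densely defined bounded operator $A^{-1} : \ran A \to D(A)$ with bound $\lambda^{-1}$, which extends uniquely by continuity to a bounded operator $B$ on all of $\H$ with $\|B\| \leq \lambda^{-1}$.

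The last and most delicate step is to check that $B$ genuinely maps into $D(A)$ and acts as a two-sided inverse of $A$; this is exactly where closedness is used. Given $\psi \in \H$, choose $\psi_n \in \ran A$ with $\psi_n \to \psi$ and put $\varphi_n = A^{-1}\psi_n \in D(A)$. Then $\varphi_n \to B\psi$ while $A\varphi_n = \psi_n \to \psi$, so closedness of $A$ forces $B\psi \in D(A)$ and $A B\psi = \psi$. Thus $AB\psi = \psi$ for all $\psi \in \H$ (which incidentally shows $\ran A = \H$), and $BA\varphi = \varphi$ for all $\varphi \in D(A)$ follows from injectivity of $A$. Hence $B = A^{-1} : \H \to D(A) \subseteq \H$ is the asserted inverse, bounded by $\lambda^{-1}$. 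I expect the main obstacle to be precisely this interplay between density of the range and closedness; the positivity estimate and the appeal to von Neumann's formula are routine.
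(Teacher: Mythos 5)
Your proposal is correct and follows essentially the same route as the paper: the CSB lower bound $\|A\varphi\|\geq\lambda\|\varphi\|$, von Neumann's formula to get density of the range, and closedness of the self-adjoint $A$ to upgrade this to surjectivity. The only cosmetic difference is ordering — the paper shows $\ran A$ is closed first and then dense, while you show density first and recover $\ran A=\H$ from the continuous extension — but the ingredients and estimates are identical.
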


\begin{proof}
For all $\varphi \in D(A) \setminus \{ 0 \}$ with CSB
\begin{equation}\label{lemma-positive-inverse-inequ}
0 < \lambda \|\varphi\|^2 \leq \langle \varphi, A \varphi \rangle \leq \|\varphi\| \cdot \|A\varphi\|
\end{equation}
and thus $A$ is clearly injective. If $A \varphi = \xi$ we uniquely define $\varphi = A^{-1}\xi$ and get $\|A^{-1}\xi\| \leq \lambda^{-1}\|\xi\|$ from \eqref{lemma-positive-inverse-inequ} above. It is still open if $A$ is also surjective such that this inverse can be defined on the whole Hilbert space. From \autoref{lemma-adj-abgeschlossen} we know that a self-adjoint operator if always closed. Given a sequence $\{\varphi_n\}$ in $D(A)$ with $A \varphi_n \rightarrow \psi$ we know from \eqref{lemma-positive-inverse-inequ}
\[
\|\varphi_n - \varphi_m\| \leq \lambda^{-1} \|A\varphi_n - A\varphi_m\| \rightarrow 0
\]
and thus $\{\varphi_n\}$ is a Cauchy sequence with limit $\varphi$ and $A\varphi=\psi \in \ran{A}$ because of closedness. This implies the range of the operator is a closed subset of the Hilbert space. But \autoref{lemma-neumann} tells us $(\ran A)^\perp = \ker A = \{ 0 \}$ and only $\ran A = \H$ is left as a possibility.
\end{proof}

\subsection{Essentially self-adjoint operators and the Laplacian}
\label{sect-ess-sa}

On many occasions one deals with operators that are easily shown to be symmetric on a dense subset of $\H$, yet are really self-adjoint in their closure. If the full self-adjoint domain is hard to define explicitly one could reside with a weaker definition of a restricted operator that is \emph{almost} self-adjoint.

\begin{definition}\cite[VIII.2]{reed-simon-1}\label{def-ess-sa}\\
A symmetric operator $T$ is called \textbf{essentially self-adjoint} if its closure $\overline{T}$ is self-adjoint. If some $T$ is closed, any subset $D \subset D(T)$ is called a \textbf{core} for $T$ if the closure of $T$ restricted to $D$ equals again $T$.
\end{definition}

This is important because an operator is essentially self-adjoint if and only if it admits a unique self-adjoint extension, whereas an only symmetric operator may have many self-adjoint extensions or none at all. To fully specify a self-adjoint operator it thus suffices to define it on a perhaps significantly smaller core. Typically such a core can consist of the test functions $\Cont_0^\infty$ only. The choice of the specific core can then lead to the different possible self-adjoint operators with their respective boundary conditions.

\begin{example}\label{ex-laplacian}
We will show that the free quantum Hamiltonian, the Laplacian $H_0 = -\onehalf\Delta$, on the configuration space $\Omega = \R^n$ is a self-adjoint operator with domain $D(H_0)=H^2(\R^n)$. We start with $H_0$ acting on the smaller domain of test functions $\Cont_0^\infty(\R^n)$, the supposed core of an essentially self-adjoint operator. This is clearly a symmetric operator as twice integration by parts shows. Now every symmetric operator is closable by \autoref{lemma-symmetric-closable}, we thus take all $\varphi \in L^2(\R^n)$ such that a sequence $\{\varphi_n\} \subset \Cont_0^\infty(\R^n)$ that converges to $\varphi$ in $L^2$-norm also has $\{ H_0 \varphi_n \}$ converging to some $\xi \in L^2(\R^n)$ in $L^2$-norm. The closure is now constructed by setting
\[
H_0 \varphi = H_0 \lim_{n \rightarrow \infty} \varphi_n = \lim_{n \rightarrow \infty} H_0 \varphi_n = \xi.
\]
Note especially that this looks like a condition of continuity but it is less, because it is not fulfilled by all converging sequences $\{\varphi_n\}$, only those that have converging $\{ H_0 \varphi_n \}$. Closedness of linear operators is thus a weaker form of continuity (boundedness). Also this construction is unique, i.e., it does not depend on the selected sequence, because the operator is symmetric.\\
The next step consists of determining the domain of this newly constructed closed symmetric operator. For this purpose let us collect all these $\varphi$ from above. Putting the two convergences together it says we take all $\varphi$ that have corresponding sequences $\{\varphi_n\} \subset \Cont_0^\infty(\R^n)$ converging to $\varphi$ but with respect to the stronger norm $\|\varphi\|_{H_0} = \sqrt{\|\varphi\|^2_2 + \|H_0 \varphi\|^2_2}$ (graph norm). But the completion of the test functions under this graph norm is just the Sobolev space of twice weakly differentiable functions by \autoref{th-sobolev-norm-laplace}. So this is our new domain $D(H_0)=H^2(\R^n)$ and we suspect this is the right domain for a self-adjoint operator.\\
For self-adjointness we need $D(H_0) = D(H_0^*)$ so the domains of the operator and its adjoint must coincide. The domain of the adjoint is defined in \autoref{def-symmetric-selfadjoint} as
\[
D(H_0^*) = \{ \psi \in L^2(\R^n) \mid \exists \eta \in L^2(\R^n) \;\forall \varphi \in D(H_0) : \langle H_0 \varphi, \psi \rangle = \langle \varphi,\eta \rangle \}.
\]
But this is just the definition for twice weak differentiability with the additional condition that the second (weak) derivative $\eta$ is still in $L^2(\R^n)$ (because we are not using test functions $\varphi$ now). So $\varphi \in L^2(\R^n)$ and $\eta = H_0\varphi \in L^2(\R^n)$ and thus
\[
D(H_0^*) = D(H_0)
\]
as the condition for self-adjointness.\footnote{This proof has been contributed to \emph{StackExchange Mathematics}:\\
\url{http://math.stackexchange.com/questions/733625/domain-of-the-quantum-free-hamiltonian-in-1d/805776} (May 22, 2014).}
\end{example}

\section{Dual spaces}

\subsection{State--observable duality}

The conceptual basis for many physical theories is that of a fundamental dual relation between a description of what \emph{is}, the state of the system, and what we measure, the result of an observation or experiment. The little word \q{\emph{is}} already includes a huge ontological preconception, including that the chosen description somehow maps from \emph{nature} to \emph{theory} thus implying that such a map can exist, whilst we actually have no conception of \emph{nature} apart from \emph{theory}. Aware of such a critique we might still choose \q{states} $\mathcal{S}$ and \q{observables} $\mathcal{O}$ as the two general building blocks of a physical theory and give it mathematical meaning by choosing appropriate sets for both plus defining the dual pairing $\mathcal{O} \times \mathcal{S} \rightarrow \R$ to represent \q{measurements}. Adding structures of linearity and continuity to those spaces, typically we assume the setting of topological vector spaces, and demanding a linear and continuous dependence of the outcome on the state will make the space of observables part of the topological dual space $\mathcal{O} \subset \mathcal{S}'$. We might equally well go the other way, applying the state on the observable and setting $\mathcal{S} \subset \mathcal{O}'$ which is mathematically equivalent if the space is \emph{reflexive} (see \autoref{sect-lebesgue-dual}). Please note that again we put linearity at the basis of the theory, ruling out certain chaotic effects at this level. In quantum mechanics this is motivated by the superposition principle and the relation to expectation values in probability theory. See \citeasnoun{davies-lewis-1970} for a very broad such framework generalising observables to \emph{instruments}.

The usual setting of quantum mechanics is well known \cite{neumann-1932}: The state can be represented as a density operator, a positive (thus symmetric) operator on a Hilbert space $\H$ that is of trace-class (nuclear operator) $\mathcal{S}_1(\H)$ normalised to 1. The pairing with an observable $A$, also an operator on $\H$, is realised with the trace $\Tr(A\rho)$. Interestingly states can also be seen as observables, thus $\mathcal{S} \subset \mathcal{O}$, and a state paired with itself is indeed the measure of \q{purity} with $\Tr(\rho^2) = 1$ the condition for pure and $\Tr(\rho^2) < 1$ for mixed states. Note that trace-class operators are always bounded and together with \autoref{lemma-positive-symmetric} it follows that density operators are self-adjoint. This will also follow for observables because we demand $\Tr(A\rho) \in \R$, further continuity in the $\mathcal{S}_1(\H)$ topology only allows bounded observables to be applied on such states. Thus we get $\mathcal{S} = \mathcal{S}_1^+(\H)$, the space of positive, symmetric, trace-class operators on $\H$ normalised to 1, and $\mathcal{O} = (\mathcal{S}')^\mathrm{sa} = \mathcal{B}^\mathrm{sa}(\H)$, all bounded, self-adjoint operators. Note that if we change the viewpoint towards states operating on observables and thus changing the involved topology to that of operator norm the space of observables is limited to the self-adjoint compact operators $\mathcal{S}_\infty^\mathrm{sa}(\H)$, that are those that allow for a countable basis of eigenvectors. This follows from $\mathcal{S}_\infty(\H)' \simeq \mathcal{S}_1(\H)$, making $\mathcal{S}_\infty(\H)$ the predual of $\mathcal{S}_1(\H)$, while before $\mathcal{S}_1(\H)' \simeq \mathcal{B}(\H)$ \cite[Th.~IV.3]{schatten}. Thus the space of compact operators $\mathcal{S}_\infty(\H)$ is not reflexive if $\H$ is infinite-dimensional. The mathematical formalisation is that of Schatten class operators $S_p(\H)$, for which a type of Hölder inequality holds for $p,q \in [1,\infty]$ and $1/p + 1/q = 1$.
\[
\|AB\|_{\mathcal{S}_1} \leq \|A\|_{\mathcal{S}_p} \|B\|_{\mathcal{S}_q}
\]
$\mathcal{S}_1$ are the trace-class operators, $\mathcal{S}_2$ yields the Hilbert--Schmidt operators, and $\mathcal{S}_\infty$ are as already noted the compact operators measured with operator norm. Note that \citeasnoun{schatten} calls the class of compact operators \emph{completely continuous}.

It seems this setting is quite restrictive because as already noted in \autoref{sect-unbounded} we cannot expect the operators of quantum mechanics to be compact or even bounded. This is overcome in algebraic quantum field theory by identifying observables first with an \emph{abstract} (Weyl) $C^*$-algebra that form the predual of states like above. Then by using the GNS construction \cite{araki,strocchi} a suitable Hilbert space formulation (the so-called Schrödinger representation yields $\H=L^2(\R^n)$) with a representation of the $C^*$-algebra elements as linear bounded operators is recovered. By a result of von Neumann all regular and irreducible such representations are unitarily equivalent. Representations belonging to pure states (that cannot be written as convex linear combinations) are always irreducible which is why we are usually happy with the Schrödinger representation. \cite[Th.~3.2.2]{strocchi} Position and momentum are present in transformed form as unitary bounded operators
\[
U(\alpha)=\e^{\i \alpha x},\;\; V(\beta)=\e^{\i \beta p},
\]
strongly continuous in $\alpha$ and $\beta$ respectively. The $(V(\beta)\psi)(x) = \psi(x+\beta)$ defines a shift in position space and $U(\alpha)$ does so equivalently in Fourier (momentum) space. The position and momentum operators are then recovered as derivatives that give unbounded, densly defined operators, just like the Hamiltonian as the derivative of the unitary evolution operator (see \autoref{def-inf-generator}).

\subsection{Lebesgue space dual}
\label{sect-lebesgue-dual}

The \emph{algebraic} dual $X^*$ to a vector space $X$ over a field $\mathbb{K}$ is the space of all linear functionals $f : X \rightarrow \mathbb{K}$. As soon as one is in the setting of topological vector spaces, such as Banach spaces, one usually limits this space to all \emph{continuous} linear functionals, the \emph{topological} dual space $X'$. As we will always consider this space, we call it the dual space of $X$ for brevity. A norm for the dual space can always be induced by an $X$-norm by $\|f\|_{X'} = \sup\{|f(x)| \mid x \in X, \|x\|_X \leq 1 \}$ and makes it a Banach space (even if $X$ is not complete). If one considers the dual of the dual $X''$ in such a setting a natural functional $\phi$ is given by
\begin{align*}
\phi : X &\longrightarrow X'' \\
x &\longmapsto (X' \longrightarrow \mathbb{K}: f \longmapsto f(x)).
\end{align*}
Now $\phi(x)$ is just the evaluation functional, assigning every functional $f$ its value at $x$, $\phi(x)(f) = f(x)$. The map $\phi$ is always an isometric isomorphism of $X$ into a closed subspace of $X''$. If $\phi(X) = X''$ the space is called \emph{reflexive} and usually one identifies $X''=X$.

The spaces $L^p(\Omega)$ are reflexive for $1<p<\infty$ where the dual pairing is managed by Hölder's inequality, $1/p + 1/q = 1$, $f \in L^p(\Omega) \simeq L^q(\Omega)'$, $g \in L^q(\Omega) \simeq L^p(\Omega)'$,
\[
|\phi(f)(g)| = \left|\int_\Omega f(x)g(x) \d x\right| \leq \|fg\|_1 \leq \|f\|_p \|g\|_q.
\]
The case $p=2$ leads to the famous Hilbert space of quantum mechanics. The inner product structure gives an (antilinear) isomorphism to its dual space $L^2(\Omega) \simeq L^2(\Omega)'$ by the Riesz--Fréchet representation theorem (\autoref{th-riesz-frechet} below).

The spaces $L^1(\Omega)$ and $L^\infty(\Omega)$ stand out as well, but more in a negative way as they are not linked \emph{vice versa} as dual spaces. Functionals on $L^1(\Omega)$ can be represented by essentially bounded functions, so $L^1(\Omega)' = L^\infty(\Omega)$, but the functionals on $L^\infty(\Omega)$
are made up of all finitely additive signed measures that are absolutely continuous with respect to the Lebesgue measure. This is called a \emph{ba} space, see \citeasnoun{wiki-ba}.

Another way to see this is to note that $L^p(\Omega)$ is separable for $1 \leq p < \infty$ but not in the case $p=\infty$, so $L^\infty(\Omega)'$ is neither and thus $L^\infty(\Omega)' \simeq L^1(\Omega)$ cannot hold. Separability does not in general carry over to the dual space but it does in the other direction from the dual to its predual.

\subsection{Sobolev space dual}
\label{sect-sobolev-dual}

Let us also define the dual of the Hilbert space $H_0^1(\Omega)$ defined in \autoref{sect-zero-boundary}.
\[
H^{-1}(\Omega) = H_0^1(\Omega)'
\]
This is the space of linear continuous functionals on $H_0^1(\Omega)$. In this we follow standard notation, cf.~\citeasnoun[3.12 and 3.13]{adams} or \citeasnoun[5.9.1]{evans}. We will usually not identify $H_0^1$ with its dual space $H^{-1}$ as it is customary for Hilbert spaces by invoking \autoref{th-riesz-frechet} (Riesz--Fréchet representation theorem). Rather it is more instructive to define the following continuous embedding of spaces.
\[
H_0^1(\Omega) \subset L^2(\Omega) \subset H^{-1}(\Omega)
\]
With the functional norm
\begin{equation}\label{eq-sobolev-dual-norm}
\|f\|_{H^{-1}(\Omega)} = \sup\{ f(u) \mid u \in H_0^1(\Omega), \|u\|_{1,2} = 1 \}
\end{equation}
the dual becomes a reflexive Banach space. The following theorem from \citeasnoun[5.9., Th.~1]{evans} gives a characterisation of $H^{-1}$.

\begin{theorem}\label{th-sobolev-dual-element}
For all $f \in H^{-1}(\Omega)$ there are functions $f_0, f_1, \ldots, f_n \in L^2(\Omega)$ such that for all $v \in H_0^1(\Omega)$
\[
f(v) = \langle f_0,v \rangle + \sum_{i=1}^n \langle f_i,\partial_i v \rangle.
\]
Furthermore the dual norm is given by an infimum taken over all such representatives.
\[
\|f\|_{H^{-1}(\Omega)} = \inf\left\{ \sum_{i=1}^n \|f_i\|_2^2 \right\}^{1/2}
\]
\end{theorem}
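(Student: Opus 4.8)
The plan is to obtain everything at one stroke from the Riesz--Fr\'echet representation theorem (\autoref{th-riesz-frechet}), applied to the Hilbert space $H_0^1(\Omega)$ equipped with the inner product $(u,v) \mapsto \langle u,v\rangle + \sum_{i=1}^n \langle \partial_i u, \partial_i v\rangle$. For $p=2$ this induces precisely the norm $\|\cdot\|_{1,2}$ entering the definition \eqref{eq-sobolev-dual-norm} of $\|\cdot\|_{H^{-1}(\Omega)}$, because $\|u\|_{1,2}^2 = \|u\|_2^2 + \sum_{i=1}^n \|\partial_i u\|_2^2$, and completeness of $H_0^1(\Omega)$ in this norm is just its defining property (\autoref{sect-zero-boundary}).

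First I would settle the trivial inclusion together with one half of the norm identity. Given $f_0, f_1, \ldots, f_n \in L^2(\Omega)$, the map $v \mapsto \langle f_0, v\rangle + \sum_{i=1}^n \langle f_i, \partial_i v\rangle$ is linear; since $\|v\|_{1,2}^2 = \|v\|_2^2 + \sum_{i=1}^n \|\partial_i v\|_2^2$, applying the CSB inequality first in $L^2(\Omega)$ to each term and then in $\R^{n+1}$ gives
\[
\Big| \langle f_0, v\rangle + \sum_{i=1}^n \langle f_i, \partial_i v\rangle \Big| \leq \Big( \sum_{i=0}^n \|f_i\|_2^2 \Big)^{1/2} \|v\|_{1,2}.
\]
Hence $f \in H^{-1}(\Omega)$ with $\|f\|_{H^{-1}(\Omega)} \leq (\sum_{i=0}^n \|f_i\|_2^2)^{1/2}$; taking the infimum over all representatives yields the inequality \q{$\leq$} in the norm formula, the sum being understood to include the $i=0$ term.

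For the converse, let $f \in H^{-1}(\Omega)$ be arbitrary. Riesz--Fr\'echet furnishes a unique $u \in H_0^1(\Omega)$ with $f(v) = \langle u,v\rangle + \sum_{i=1}^n \langle \partial_i u, \partial_i v\rangle$ for all $v \in H_0^1(\Omega)$, and --- this is the crucial output --- the isometry $\|f\|_{H^{-1}(\Omega)} = \|u\|_{1,2}$. Putting $f_0 := u$ and $f_i := \partial_i u$ for $i = 1, \ldots, n$, all of which lie in $L^2(\Omega)$, gives the required decomposition; and for this representative $(\sum_{i=0}^n \|f_i\|_2^2)^{1/2} = (\|u\|_2^2 + \sum_{i=1}^n \|\partial_i u\|_2^2)^{1/2} = \|u\|_{1,2} = \|f\|_{H^{-1}(\Omega)}$, so the infimum is attained and equals $\|f\|_{H^{-1}(\Omega)}$. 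Together with the previous paragraph this completes the proof.

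I expect no real obstacle here. The one thing worth getting right is the choice of the \emph{full} $W^{1,2}$ inner product rather than the gradient-only form $\sum_i \langle \partial_i u, \partial_i v\rangle$: the latter induces $\|\nabla\cdot\|_2$, which on a general (possibly unbounded) $\Omega$ need not even be equivalent to $\|\cdot\|_{1,2}$ without a Poincar\'e-type inequality, whereas the full form reproduces exactly the norm defining $H^{-1}(\Omega)$ and simultaneously hands us the $L^2$-representatives for free. Minor points to keep in view are that we are deliberately not identifying $H_0^1(\Omega)$ with its dual via Riesz (cf.~\autoref{sect-sobolev-dual}), and, in the complex-valued setting, the harmless antilinearity in the Riesz pairing, which does not affect the norm statement.
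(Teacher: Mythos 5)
Your proof is correct and is essentially the standard argument from Evans [5.9, Th.~1], which the text cites for this theorem without reproducing a proof: apply Riesz--Fr\'echet to $H_0^1(\Omega)$ with the full $W^{1,2}$ inner product, read off the representatives $f_0 = u$, $f_i = \partial_i u$, and note that this choice attains the infimum, while CSB gives the reverse inequality for arbitrary representatives. You are also right that the displayed norm formula should include the $i=0$ term in the sum (as in Evans); as printed it is a typo.
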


The typical notation conjuring up the weak derivative will then be $f = f_0 - \sum_{i=1}^n \partial_i f_i$ for any element of $H^{-1}$. The embedding $L^2(\Omega) \subset H^{-1}(\Omega)$ is now clearly the identification of $f_0 \in L^2(\Omega)$ with $f=f_0 \in H^{-1}(\Omega)$.

A way to construct an isomorphism between $H_0^1$ and $H^{-1}$ that is actually possible as a consequence of the Riesz--Fréchet representation theorem is through the Lax--Milgram theorem (\autoref{lax-milgram}) discussed in the next section. The bilinear form $Q(u,v) = \langle \nabla u, \nabla v \rangle$ on $H_0^1(\Omega) \times H_0^1(\Omega)$ would then give the required mapping $H^{-1} \simeq H_0^1$ through solving $Q(u,v) = f(u)$ for all $u \in H_0^1$ for a given $f \in H^{-1}$.

\subsection{The Lax--Milgram theorem}

\begin{theorem}[Riesz--Fréchet representation theorem]\label{th-riesz-frechet}
For every continuous (bounded) linear functional $f$ on a Hilbert space $\H$ there is a unique $v_f \in \H$ such that $f(u) = \langle u, v_f \rangle$ for all $u \in \H$.
\end{theorem}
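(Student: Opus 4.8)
The plan is the standard three-step argument: dispatch the zero functional, analyse $\ker f$, and read $v_f$ off from a single vector orthogonal to that kernel.

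If $f\equiv 0$ one takes $v_f=0$, so I may assume $f\neq 0$. Since $f$ is continuous, $\ker f=f^{-1}(\{0\})$ is a closed linear subspace of $\H$, and it is proper because $f$ does not vanish identically. This is exactly the point where completeness of $\H$ is indispensable: invoking the orthogonal decomposition theorem for closed subspaces (the nearest-point projection, which rests on the parallelogram law together with completeness — a merely pre-Hilbert space would not do) one splits $\H=\ker f\oplus(\ker f)^\perp$, and $(\ker f)^\perp\neq\{0\}$ because $\ker f\neq\H$. I would then fix $z\in(\ker f)^\perp$ with $\|z\|=1$; necessarily $f(z)\neq 0$, for otherwise $z\in\ker f\cap(\ker f)^\perp=\{0\}$.

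The computational core is short. For arbitrary $u\in\H$ set $w:=f(u)\,z-f(z)\,u$; by linearity $f(w)=f(u)f(z)-f(z)f(u)=0$, so $w\in\ker f$ and hence $\langle z,w\rangle=0$. Expanding this relation and using $\|z\|=1$ yields $f(u)=f(z)\,\langle z,u\rangle$, which exhibits $f$ as the pairing of $u$ against a fixed scalar multiple of $z$; keeping track of which slot of $\langle\cdot,\cdot\rangle$ is antilinear, $v_f$ is $f(z)$ or $\overline{f(z)}$ times $z$, so that $f(u)=\langle u,v_f\rangle$ for all $u$. Uniqueness needs only nondegeneracy of the inner product: from $\langle u,v_1\rangle=\langle u,v_2\rangle$ for every $u$, the choice $u=v_1-v_2$ gives $\|v_1-v_2\|^2=0$.

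The only real obstacle is the orthogonal decomposition $\H=\ker f\oplus(\ker f)^\perp$; everything afterwards is bookkeeping. If one prefers not to cite the projection theorem as a black box, an equivalent self-contained route is to work with the nonempty closed convex (indeed affine) set $A=\{u\in\H:\ f(u)=1\}$, extract from it the unique minimal-norm element $v_0$ via the closest-point property in a complete inner-product space, verify $v_0\perp\ker f$ by a first-variation argument ($\|v_0\|^2\le\|v_0+tw\|^2$ for every $w\in\ker f$ and every scalar $t$), and set $v_f=v_0/\|v_0\|^2$ — which merely unwinds the projection theorem in the single place where it is used.
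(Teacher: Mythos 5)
Your proof is correct and follows essentially the same route as the paper's: closedness of $\ker f$, the orthogonal decomposition $\H=\ker f\oplus(\ker f)^\perp$ (the one genuine use of completeness, as you note), and the algebraic identity obtained from $w=f(u)\,z-f(z)\,u\in\ker f$. The only difference is cosmetic --- the paper first uses that identity with $x,y\in(\ker f)^\perp$ to show the orthogonal complement is one-dimensional and then decomposes $u$, whereas you apply it directly to an arbitrary $u$, which shortens the bookkeeping.
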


\begin{proof}\cite{griffel}\\
Set $N = \{ u \in \H \mid f(u) = 0 \}$. Since $f$ is continuous and linear, this null space $N$ is a closed linear subspace. If $N = \H$ then $f = 0$ and $v_f = 0$ gives $f(u) = \langle u, v_f \rangle$ for all $u$.\\
If $N \neq \H$ we show that the orthogonal complement $N^\perp$ is one-dimensional which means that every pair of vectors in $N^\perp$ is linearly dependent. Take $x,y \in N^\perp$ then $ax+by \in N^\perp$ for all numbers $a,b$. Setting $z = f(y) x - f(x) y$ we have such a linear combination in $N^\perp$ but one gets $f(z) = 0$ and therefore $z \in N$ which means $z=0$. Having showed that there are $a,b \neq 0$ yielding $ax+by=0$ for all $x,y \in N^\perp$ we have $N^\perp$ one-dimensional.\\
Choose $e \in N^\perp$ with $\|e\|=1$ and we can write every $x\in N^\perp$ as $x = \langle x,e \rangle e$. Any $u \in \H$ can be disjointed into $u = x+w$, $x \in N^\perp$ and $w \in N$.\footnote{For such a unique orthogonal decomposition to exist one actually needs $N$ to be a complete linear subspace of $\H$, which is automatically true for $N$ closed in $\H$ complete. The proof rests on the approximation of elements in $\H$ by elements in complete, convex subsets like $N$. Thus the significance of the representation theorem can be seen as a reason why the (physical) Hilbert space needs to be complete. This issue has been discussed in a thread on \emph{StackExchange Mathematics}:
\url{http://math.stackexchange.com/questions/136756/physical-quantum-mechanical-significance-of-completeness-of-hilbert-spaces} (Apr 25, 2012)
and a thread on \emph{StackExchange MathOverflow}:
\url{http://mathoverflow.net/questions/35840/the-role-of-completeness-in-hilbert-spaces} (Aug 17, 2010).} Then $f(u) = f(x) + f(w) = \langle x,e \rangle f(e) + 0 = \langle u-w,e \rangle f(e) = \langle u,e \rangle f(e)$. Thus $f(u) = \langle u, v_f \rangle$ as required, where $v_f = f(e)^* e$.\\
To prove uniqueness, suppose $f(u) = \langle u, v_f \rangle = \langle u, v_f' \rangle$ for all $u\in\H$. Taking $x = v_f - v_f'$ gives $\langle v_f - v_f',v_f \rangle = \langle v_f - v_f',v_f' \rangle$, hence $\langle v_f - v_f',v_f - v_f' \rangle = 0$ and thus finally $v_f = v_f'$.
\end{proof}

This can be carried over to bilinear (or sesquilinear) forms, which can always be represented by bounded linear operators. We first need the following.

\begin{definition}[coercive and continuous bilinear forms]\label{def-coercive-continuous}
\hfill\\
A bilinear form $Q$ on a Hilbert space $\H$ is said to be \textbf{coercive} (bounded below) if there exists a constant $c > 0$ such that $Q(u,u) \geq c \|u\|^2$ for all $u \in \H$. It is \textbf{continuous} (bounded) if we find a $C >0$ such that $Q(u,v) \leq C \|u\| \cdot \|v\|$ for all $u,v \in \H$.
\end{definition}

The notion \q{coercive} means that such forms are forced to grow with at least the same rate as the norm. Coercive and continuous bilinear forms induce an equivalent norm in the Hilbert space they are defined on. It is conversely true that any symmetric bilinear form that induces an equivalent norm is coercive as well as continuous. For continuity the following lemma shows this reversion.

\begin{lemma}
Let $Q$ be a symmetric bilinear form on a Hilbert space $\H$ and $C > 0$ such that for all $u \in \H$ it holds $Q(u,u) \leq C \|u\|^2$. Then $Q$ is continuous.
\end{lemma}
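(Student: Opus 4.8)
The plan is to reduce the statement to the elementary Cauchy--Schwarz inequality for \emph{positive} symmetric bilinear forms, which requires no continuity hypothesis whatsoever. First I would pass to the auxiliary form $P(u,v) = C\langle u,v \rangle - Q(u,v)$ on $\H \times \H$. It is again symmetric and bilinear, and the hypothesis $Q(u,u) \leq C\|u\|^2$ says precisely that $P(u,u) = C\|u\|^2 - Q(u,u) \geq 0$ for all $u \in \H$, i.e.\ the quadratic form associated with $P$ is nonnegative. At the same time $P(u,u) \leq C\|u\|^2$ trivially, since $Q(u,u)$ is assumed real (on a complex Hilbert space one just reads the diagonal values in $\R$).

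Next I would prove the Cauchy--Schwarz inequality for $P$. For fixed $u,v \in \H$ and real $t$, bilinearity and symmetry give $0 \leq P(u+tv,u+tv) = P(u,u) + 2t\,P(u,v) + t^2 P(v,v)$, a nonnegative quadratic polynomial in $t$. If $P(v,v) = 0$ this forces $P(u,v) = 0$; otherwise the discriminant must be nonpositive, which yields $P(u,v)^2 \leq P(u,u)\,P(v,v)$. In all cases, combining with $P(w,w) \leq C\|w\|^2$ gives $|P(u,v)| \leq \sqrt{P(u,u)}\,\sqrt{P(v,v)} \leq C\,\|u\|\,\|v\|$.

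Finally I would undo the substitution: $Q(u,v) = C\langle u,v \rangle - P(u,v)$, so the ordinary Cauchy--Schwarz inequality in $\H$ together with the bound just obtained gives $|Q(u,v)| \leq C\|u\|\|v\| + C\|u\|\|v\| = 2C\|u\|\|v\|$. Since $Q$ is symmetric, replacing $v$ by $-v$ shows this also bounds $Q(u,v)$ itself from above (not merely its modulus), so $Q$ is continuous in the sense of \autoref{def-coercive-continuous} with constant $2C$.

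The only point that needs a word of care --- the \q{main obstacle}, such as it is --- is the realisation that Cauchy--Schwarz for $P$ needs no boundedness of $P$ as input: nonnegativity of the diagonal alone suffices, via the discriminant trick above. That is exactly why the reduction $P = C\langle\cdot,\cdot\rangle - Q$ is the right move, and the complex case adds nothing essential, since one may absorb the phase of $P(u,v)$ into $v$ before running the same one-variable argument.
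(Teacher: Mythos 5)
Your route is genuinely different from the paper's: the paper polarises, writing $Q(u,v)$ as $\tfrac14\bigl(Q(u/\lambda+\lambda v,u/\lambda+\lambda v)-Q(u/\lambda-\lambda v,u/\lambda-\lambda v)\bigr)$, bounds the diagonal terms by the hypothesis, and optimises over $\lambda$; you instead pass to $P=C\langle\cdot,\cdot\rangle-Q$ and run the discriminant form of Cauchy--Schwarz, which indeed requires no continuity of $P$. Both arguments land on the same constant $2C$.

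There is, however, a genuine gap at the sentence \q{$P(u,u)\leq C\|u\|^2$ trivially}. Since $P(u,u)=C\|u\|^2-Q(u,u)$, that inequality is \emph{equivalent} to $Q(u,u)\geq 0$, which is not among the hypotheses --- the lemma only assumes the upper bound $Q(u,u)\leq C\|u\|^2$, and realness of $Q(u,u)$ does not give nonnegativity. Without a lower bound on the diagonal, your Cauchy--Schwarz step only yields $|P(u,v)|\leq\sqrt{P(u,u)\,P(v,v)}$ with an uncontrolled right-hand side, and the conclusion genuinely fails: take a discontinuous linear functional $f$ on $\H$ (Hamel-basis construction) and set $Q(u,v)=-f(u)f(v)$; this is symmetric, bilinear, satisfies $Q(u,u)=-f(u)^2\leq 0\leq C\|u\|^2$, yet is unbounded. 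So the missing ingredient is an explicit hypothesis $Q(u,u)\geq 0$ (or any lower bound $Q(u,u)\geq -C'\|u\|^2$, which lets you replace $C$ by $C+C'$ in the definition of $P$). In fairness, the paper's own proof relies on the same thing silently --- its step $|Q(w,w)-Q(w',w')|\leq C(\|w\|^2+\|w'\|^2)$ also needs the diagonal values to be nonnegative --- and in the intended context (the converse to \q{a coercive, continuous symmetric form induces an equivalent norm}) that nonnegativity is available. Your argument becomes correct, and arguably cleaner than the polarisation one, once you state $Q(u,u)\geq 0$ as an assumption rather than calling it trivial; the final remark about replacing $v$ by $-v$ is then redundant, since the modulus bound already gives continuity in the sense of \autoref{def-coercive-continuous}.
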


\begin{proof}
Take any $\lambda > 0$,
\begin{align*}
|Q(u,v)| &= \frac{1}{4} |Q(u/\lambda + \lambda v, u/\lambda + \lambda v) - Q(u/\lambda - \lambda v, u/\lambda - \lambda v)| \\
&\leq \frac{C}{4} \left( \|u/\lambda + \lambda v\|^2 + \|u/\lambda - \lambda v\|^2 \right) \\
&\leq \frac{C}{2} \left( \|u\|/\lambda + \lambda\|v\| \right)^2.
\end{align*}
Now choose $\lambda = \sqrt{\|u\|/\|v\|}$ to get the condition for continuity.\footnote{This proof is from Robert Israel on \emph{StackExchange Mathematics}:\\
\url{http://math.stackexchange.com/questions/157513/a-continuity-condition-for-a-bilinear-form-on-a-hilbert-space} (Jun 12, 2012).}
\end{proof}

\begin{theorem}[representation theorem for bilinear forms]\label{th-representation-biliear-forms}
\hfill\\
For every continuous (bounded) bilinear form $Q$ on a Hilbert space $\H$ there is a unique bounded linear operator $A: \H \rightarrow \H$ such that $Q(u,v) = \langle u,Av \rangle$ for all $u,v \in \H$.
\end{theorem}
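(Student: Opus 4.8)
The plan is to apply the Riesz--Fr\'echet representation theorem (\autoref{th-riesz-frechet}) one slot at a time and then check that the resulting slotwise correspondence assembles into a single bounded linear operator.

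First I would fix $v \in \H$ and consider the functional $f_v : \H \rightarrow \mathbb{K}$, $f_v(u) = Q(u,v)$. Linearity of $f_v$ in $u$ is part of the hypothesis that $Q$ is bilinear, and continuity of $Q$ (\autoref{def-coercive-continuous}) gives $|f_v(u)| = |Q(u,v)| \leq C \|v\| \cdot \|u\|$, so $f_v$ is a bounded linear functional with $\|f_v\| \leq C\|v\|$. By \autoref{th-riesz-frechet} there is a unique element of $\H$, which I would name $Av$, such that $f_v(u) = \langle u, Av \rangle$ for all $u \in \H$; this is precisely the desired identity $Q(u,v) = \langle u, Av \rangle$. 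Performing this for every $v$ defines a map $A : \H \rightarrow \H$, and the whole task reduces to showing this $A$ is linear, bounded, and unique.

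For linearity: given $v_1, v_2 \in \H$ and scalars $\alpha_1, \alpha_2$, bilinearity of $Q$ yields, for every $u \in \H$,
\[
\langle u, A(\alpha_1 v_1 + \alpha_2 v_2) \rangle = Q(u, \alpha_1 v_1 + \alpha_2 v_2) = \alpha_1 Q(u,v_1) + \alpha_2 Q(u,v_2) = \langle u, \alpha_1 A v_1 + \alpha_2 A v_2 \rangle,
\]
and since $u$ is arbitrary and the inner product is non-degenerate we conclude $A(\alpha_1 v_1 + \alpha_2 v_2) = \alpha_1 A v_1 + \alpha_2 A v_2$ (in the sesquilinear variant one carries the conjugation through the second slot in the usual way and obtains a conjugate-linear $A$). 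Boundedness is then automatic: the Riesz--Fr\'echet correspondence is isometric, so $\|A v\| = \|f_v\| \leq C\|v\|$, hence $\|A\| \leq C$. For uniqueness, if $A$ and $A'$ both satisfy $\langle u, Av \rangle = Q(u,v) = \langle u, A'v \rangle$ for all $u,v$, then $\langle u, (A - A')v \rangle = 0$ for every $u$; taking $u = (A - A')v$ forces $\|(A - A')v\| = 0$, so $A v = A' v$ for all $v$ and $A = A'$.

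There is no deep obstacle here --- the entire argument rests on \autoref{th-riesz-frechet}. The only point deserving genuine (if routine) attention is that Riesz--Fr\'echet is invoked separately for each fixed $v$ and returns $Av$ merely as an element of $\H$, so one must still verify by hand that $v \mapsto Av$ respects the vector-space structure and the norm estimate; the conjugation bookkeeping in the sesquilinear case is the one mild subtlety.
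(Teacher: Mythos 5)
Your proposal is correct and follows essentially the same route as the paper: fix $v$, apply the Riesz--Fr\'echet representation theorem to the functional $Q(\cdot,v)$ to obtain $Av$, then verify linearity, boundedness, and uniqueness. The only difference is cosmetic --- the paper additionally extracts the exact equality $\|A\| = \|Q\|$ by saturating the Cauchy--Schwarz step with $u = Av$, whereas you settle for the bound $\|A\| \leq C$, which is all the theorem requires.
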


\begin{proof}\cite{achieser-glasmann}\\
Uniqueness can be easily demonstrated, as from $\langle u,Av \rangle = \langle u,Bv \rangle$ for all $u,v \in \H$ it follows $\langle u,(A-B)v \rangle=0$ and thus $A=B$. To show that the representation exists, we choose $v \in \H$ as fixed. $Q(\cdot,v)$ is now a continuous linear functional and by \autoref{th-riesz-frechet} there is a unique $w \in \H$ such that $Q(u,v) = \langle u,w \rangle$ for arbitrary $u \in \H$. Next we define $w = A v$ and have established $Q(u,v) = \langle u,Av \rangle$.\\
Linearity follows directly from the linearity of $Q$ in its second argument. For boundedness we use CSB
\[
Q(u,v) = \langle u,Av \rangle \leq \|u\| \cdot \|Av\|,
\]
with which
\[
\|Q\| = \sup \frac{|Q(u,v)|}{\|u\| \cdot \|v\|} \leq \sup \frac{\|Av\|}{\|v\|} = \|A\|.
\]
On the other side, by the reduction of the scope of the supremum to $u = Av$ we have
\[
\|Q\| = \sup \frac{|\langle u,Av \rangle|}{\|u\| \cdot \|v\|} \geq \frac{|\langle Av,Av \rangle|}{\|v\| \cdot \|Av\|} = \frac{\|Av\|}{\|v\|}.
\]
Therefore $A$ is not only bounded but has the definite norm $\|A\| = \|Q\|$.
\end{proof}

Further the representing operator $A$ for $Q$ from above can be shown to be invertible, which will then be important for the Lax--Milgram theorem. The following results which are frequently useful in the theory of elliptic partial differential operators are taken from the book of \citeasnoun[6.3]{blanchard-bruening}.

\begin{corollary}\label{cor-bilinear-form-inverse}
In the case of $Q$ as in \autoref{th-representation-biliear-forms} being coercive (with some constant $c>0$) the representing operator $A$ is bijective and has a bounded linear inverse operator $A^{-1} : \H \rightarrow \H$ with norm $\|A^{-1}\| \leq c^{-1}$.
\end{corollary}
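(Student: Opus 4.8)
The plan is to squeeze everything out of the single estimate that coercivity and the representation $Q(u,v)=\langle u,Av\rangle$ together impose on $A$. First I would combine coercivity with CSB: for every $u\in\H$,
\[
c\|u\|^2 \leq Q(u,u) = \langle u,Au\rangle \leq \|u\|\,\|Au\|,
\]
so that $\|Au\|\geq c\|u\|$. This lower bound at once gives injectivity of $A$, and it will also produce the norm bound on the inverse once surjectivity is secured.

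Second, I would show that $\ran A$ is closed. If $Au_n\to y$ in $\H$, then $\|u_n-u_m\|\leq c^{-1}\|Au_n-Au_m\|$ shows $(u_n)$ is Cauchy, hence $u_n\to u$ for some $u\in\H$; since $A$ is bounded (this was established in \autoref{th-representation-biliear-forms}), $Au_n\to Au$, whence $y=Au\in\ran A$. To turn closed range into full surjectivity I would pass to the orthogonal complement: any $w$ with $w\perp\ran A$ satisfies in particular $\langle w,Aw\rangle=0$, but $\langle w,Aw\rangle=Q(w,w)\geq c\|w\|^2$ forces $w=0$. Hence $(\ran A)^\perp=\{0\}$ and, $\ran A$ being a closed subspace, $\ran A=\H$.

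Finally, with $A$ a bijection the inverse $A^{-1}:\H\to\H$ is a well-defined linear map, and substituting $u=A^{-1}y$ into $\|Au\|\geq c\|u\|$ gives $\|y\|\geq c\|A^{-1}y\|$, i.e.\ $\|A^{-1}\|\leq c^{-1}$. I do not expect a serious obstacle; the one place demanding a little care is the surjectivity step, where one must use \emph{both} the closed-range consequence of the coercive estimate \emph{and} the triviality of the orthogonal complement, since neither alone suffices — and, $Q$ not being assumed symmetric, one cannot shortcut through the self-adjoint statement \autoref{lemma-positive-inverse}, so the self-contained argument above is the appropriate route.
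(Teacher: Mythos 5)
Your proof is correct and follows the same skeleton as the paper's: injectivity and the bound $\|Au\|\geq c\|u\|$ from coercivity plus CSB, closedness of $\ran A$ from the Cauchy-sequence argument, triviality of $(\ran A)^\perp$ to get surjectivity, and then the inverse estimate. The one place you genuinely diverge is the surjectivity step: the paper derives the second estimate $c\|u\|^2\leq\langle A^*u,u\rangle\leq\|A^*u\|\,\|u\|$, concludes $\ker A^*=\{0\}$, and then invokes the relation $(\ran A)^\perp=\ker A^*$ (cf.\ \autoref{lemma-neumann}) together with the biorthogonal-complement identity to get $\overline{\ran A}=\H$. You instead test coercivity directly on a vector $w\perp\ran A$, using $\langle w,Aw\rangle=0$ against $Q(w,w)\geq c\|w\|^2$. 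Your route is slightly more economical — it never mentions the adjoint and needs no duality machinery — while the paper's version makes explicit that coercivity controls $A^*$ as well, which is sometimes useful information in its own right. Either way the argument is complete, and your closing remark is apt: since $Q$ need not be symmetric, one cannot simply appeal to \autoref{lemma-positive-inverse}.
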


\begin{proof}
Coerciveness yields (with $A^*$ the adjoint of $A$)
\begin{align*}
&c \|u\|^2 \leq Q(u,u) = \langle u,Au \rangle \leq \|u\| \cdot \|Au\| \\
&c \|u\|^2 \leq Q(u,u) = \langle A^*u,u \rangle \leq \|A^*u\| \cdot \|u\|
\end{align*}
thus
\[
c \|u\| \leq \|Au\|,\; \|A^*u\|
\]
and in particular $\ker A = \ker A^* = \{ 0 \}$ which means both operators are injective. The relation $(\ran A)^\perp = \ker A^*$ (cf.~\autoref{lemma-neumann}) used on the closure of $\ran A$ tells us
\[
\overline{\ran A} = (\ran A)^{\perp\perp} = \{0\}^\perp = \H.
\]
Taking a convergent sequence $A u_i \rightarrow w$ in $\H$ with
\[
\|A u_i - A u_j\| \geq c \|u_i - u_j\|
\]
we see that $u_i$ is a Cauchy sequence too. If $u_i \rightarrow u$ then $A u_i \rightarrow A u$ because every bounded operator is closed. So $A u = w$ tells us that the range of $A$ is already closed and thus $\ran A=\H$, $A : \H \rightarrow \H$ is bijective. The estimate $c \|u\| \leq \|Au\|$ implies
\[
\|A^{-1}v\| \leq \frac{1}{c}\|v\|
\]
which leads to the desired estimate $\|A^{-1}\| \leq c^{-1}$ in operator norm.
\end{proof}

\begin{theorem}[Lax--Milgram]\label{lax-milgram}
Let $Q$ be a coercive (with some constant $c>0$), continuous bilinear form on a Hilbert space $\H$. Then for every continuous linear functional $f$ on $\H$ there exists a unique $v \in \H$ such that $Q(u,v) = f(u)$ holds for all $u \in \H$. Moreover, the solution depends continuously on the given datum
\[
\|v\| \leq \frac{1}{c} \|f\|.
\]
\end{theorem}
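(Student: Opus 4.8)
The plan is to combine the three preceding results into a short chain, since the analytic substance has already been absorbed into \autoref{cor-bilinear-form-inverse}. First I would invoke \autoref{th-representation-biliear-forms} to replace the bilinear form $Q$ by its unique representing bounded linear operator $A : \H \rightarrow \H$, so that $Q(u,v) = \langle u, Av \rangle$ for all $u,v \in \H$. Since $Q$ is assumed coercive with constant $c>0$, \autoref{cor-bilinear-form-inverse} applies directly and tells us that $A$ is a bijection with bounded inverse satisfying $\|A^{-1}\| \leq c^{-1}$.

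Next I would handle the functional side. By the Riesz--Fr\'echet representation theorem (\autoref{th-riesz-frechet}) the continuous linear functional $f$ can be written as $f(u) = \langle u, v_f \rangle$ for a unique $v_f \in \H$, and from that proof (evaluating at $u = v_f/\|v_f\|$ together with CSB) one reads off $\|v_f\| = \|f\|$. The demanded equation $Q(u,v) = f(u)$ for all $u \in \H$ then reads $\langle u, Av \rangle = \langle u, v_f \rangle$ for all $u$, which by nondegeneracy of the inner product is equivalent to $Av = v_f$. Because $A$ is bijective this has the unique solution $v = A^{-1} v_f$, establishing both existence and uniqueness in one stroke. The continuity estimate then drops out immediately, $\|v\| = \|A^{-1} v_f\| \leq c^{-1} \|v_f\| = c^{-1}\|f\|$.

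There is honestly no real obstacle left at this point: the one place where something genuine happens is the surjectivity of the coercive representing operator, and that was already dispatched in \autoref{cor-bilinear-form-inverse} via the identity $(\ran A)^\perp = \ker A^*$ together with the closedness of $\ran A$ (which is where completeness of $\H$ enters). If one preferred a self-contained route avoiding the detour through the representation theorem, one could instead give a direct contraction-mapping argument, iterating the map $u \mapsto u - \rho(Au - v_f)$ for a suitably small $\rho > 0$ and checking it is a contraction using coerciveness and continuity of $Q$; but given the machinery already developed, the operator-theoretic assembly above is by far the most economical.
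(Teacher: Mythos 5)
Your argument is correct and is essentially the paper's own proof: both set $v = A^{-1} v_f$ with $A$ from \autoref{th-representation-biliear-forms}, invertibility from \autoref{cor-bilinear-form-inverse}, and $v_f$ from \autoref{th-riesz-frechet}, and read off the estimate from $\|A^{-1}\| \leq c^{-1}$. Your version is in fact slightly more careful in spelling out uniqueness via the bijectivity of $A$, which the paper leaves implicit.
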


\begin{proof}
With the above the line of proof is quite obvious. Just set $v = A^{-1} v_f$, $A^{-1}$ relating to \autoref{cor-bilinear-form-inverse} and $v_f$ being the representing vector from the Riesz--Fréchet representation theorem (\autoref{th-riesz-frechet}).
\[
Q(u,A^{-1} v_f) = \langle u, A A^{-1} v_f \rangle = \langle u, v_f \rangle = f(u)
\]
The estimate for the norm of the solution $\|v\|$ follows right away.
\end{proof}

Note that if $Q$ is also symmetric, as it is usually the case here, a quicker proof is possible by making $Q(\cdot,\cdot)$ as new inner product on $\H$ and applying the Riesz--Fréchet representation theorem (\autoref{th-riesz-frechet}) directly. Consequently the Lax--Milgram theorem is primarily significant on occasions where $Q$ is not symmetric. \cite[6.2.1]{evans}

Through the representing operator of such bilinear forms we can formulate an associated eigenvalue problem of the form $Q(u,u_\lambda) = \lambda \langle u,u_\lambda \rangle_{\H_2}$ where the inner product is now that of a larger Hilbert space, usually $L^2$, than the one $Q$ is defined on. The complete solution to this so-called \emph{general eigenvalue problem} can be found in \citeasnoun[Th.~6.3.4]{blanchard-bruening} and will be briefly repeated here without proof for further use in \autoref{sect-appl-lax-milgram}. The proof relies on the spectral theorem for compact self-adjoint operators, see e.g.~\citeasnoun[Th.~6.1.1]{blanchard-bruening}.

\begin{theorem}\label{th-lax-milgram-eigenvalue}
Let $\H_1$ and $\H_2$ be two real, infinite-dimensional Hilbert spaces where $\H_1$ is dense and compactly embedded in $\H_2$. $Q$ be a symmetric, continuous, and coercive bilinear form on $\H_1$, then there is a monotone increasing sequence $(\lambda_m)_{m \in \N}$ of real eigenvalues
\[
0 < \lambda_1 \leq \lambda_2 \leq \lambda_m \stackrel{m \rightarrow \infty}{\longrightarrow} \infty
\]
and an orthonormal basis $\{e_m\}_{m \in \N} \subset \H_1$ of $\H_2$ such that for all $m \in \N$ and $u \in \H_1$
\[
Q(u,e_m) = \lambda_m \langle u,e_m \rangle_{\H_2}.
\]
The sequence $\{\lambda_m^{-1/2}e_m\}_{m \in \N}$ is then an orthonormal basis of $\H_1$ with respect to the inner product $\langle\cdot,\cdot\rangle = Q(\cdot,\cdot)$.
\end{theorem}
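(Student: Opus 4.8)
The plan is to convert this general eigenvalue problem into an ordinary eigenvalue problem for a compact, positive, self-adjoint operator on $\H_2$ and then invoke the spectral theorem for compact self-adjoint operators (\citeasnoun[Th.~6.1.1]{blanchard-bruening}). The first step is to build the \emph{solution operator} of the form $Q$. For fixed $f \in \H_2$ the functional $u \mapsto \langle u,f\rangle_{\H_2}$ is continuous on $\H_1$, because the embedding $\H_1 \hookrightarrow \H_2$ is continuous and hence $|\langle u,f\rangle_{\H_2}| \le \|u\|_{\H_2}\|f\|_{\H_2} \lesssim \|u\|_{\H_1}\|f\|_{\H_2}$. Since $Q$ is symmetric, continuous, and coercive it is an inner product on $\H_1$ inducing a norm equivalent to $\|\cdot\|_{\H_1}$, so the Riesz--Fr\'echet representation theorem (\autoref{th-riesz-frechet}), equivalently Lax--Milgram (\autoref{lax-milgram}), yields a unique $Tf \in \H_1$ with $Q(u,Tf) = \langle u,f\rangle_{\H_2}$ for all $u \in \H_1$. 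I would then note that $T : \H_2 \to \H_1$ is linear and bounded, using coerciveness: $c\|Tf\|_{\H_1}^2 \le Q(Tf,Tf) = \langle Tf,f\rangle_{\H_2} \le \|Tf\|_{\H_2}\|f\|_{\H_2} \lesssim \|Tf\|_{\H_1}\|f\|_{\H_2}$.

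Next I would set $S = \iota \circ T : \H_2 \to \H_2$, where $\iota : \H_1 \hookrightarrow\hookrightarrow \H_2$ is the compact embedding, so that $S$ is compact as the composition of a bounded and a compact operator --- this is the only place the compactness hypothesis enters, and it is what makes the spectrum discrete. Symmetry of $Q$ then gives, for $f,g \in \H_2$, the chain $\langle Sf,g\rangle_{\H_2} = \langle Tf,g\rangle_{\H_2} = Q(Tf,Tg) = Q(Tg,Tf) = \langle Tg,f\rangle_{\H_2} = \langle f,Sg\rangle_{\H_2}$, so $S$ is self-adjoint; positivity and injectivity follow from $\langle Sf,f\rangle_{\H_2} = Q(Tf,Tf) \ge c\|Tf\|_{\H_1}^2 \ge 0$, where equality forces $Tf = 0$, hence $\langle u,f\rangle_{\H_2} = 0$ for all $u \in \H_1$, hence $f = 0$ by density of $\H_1$ in $\H_2$.

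Finally I would feed $S$ into the spectral theorem to obtain an orthonormal basis $\{e_m\}_{m\in\N}$ of $\H_2$ with $Se_m = \mu_m e_m$, the $\mu_m > 0$ strictly (by positivity and injectivity) and $\mu_m \to 0$; ordering the $\mu_m$ decreasingly and setting $\lambda_m = \mu_m^{-1}$ gives $0 < \lambda_1 \le \lambda_2 \le \cdots \to \infty$. From $\iota(Te_m) = \mu_m e_m$ and $\mu_m \neq 0$ one reads off $e_m = \mu_m^{-1}Te_m \in \H_1$, and inserting $Te_m = \mu_m e_m$ into the defining relation of $T$ gives $\mu_m Q(u,e_m) = \langle u,e_m\rangle_{\H_2}$, i.e. $Q(u,e_m) = \lambda_m\langle u,e_m\rangle_{\H_2}$ for all $u \in \H_1$. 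For the last assertion, $Q(\lambda_m^{-1/2}e_m,\lambda_n^{-1/2}e_n) = (\lambda_m\lambda_n)^{-1/2}\lambda_n\langle e_m,e_n\rangle_{\H_2} = \delta_{mn}$, and any $u \in \H_1$ with $Q(u,e_m) = 0$ for all $m$ has $\langle u,e_m\rangle_{\H_2} = 0$ for all $m$, hence $u = 0$; so $\{\lambda_m^{-1/2}e_m\}$ is a complete orthonormal system in $(\H_1, Q)$.

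The step I expect to be the main obstacle is the middle one: simultaneously establishing that $S$ is compact (where the compact-embedding hypothesis does all the work), self-adjoint (which requires keeping the two different inner products straight and applying the symmetry of $Q$ at the right moment), and strictly positive and injective --- the last property being exactly what guarantees $\mu_m > 0$, and therefore that $e_m$ genuinely lies in $\H_1$ and that $\lambda_m$ is finite.
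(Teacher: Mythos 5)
Your proof is correct, and it follows exactly the route the paper indicates: the theorem is stated there without proof, with a citation to Blanchard--Br\"uning and the remark that the argument rests on the spectral theorem for compact self-adjoint operators, which is precisely what your construction of the compact, positive, injective solution operator $S=\iota\circ T$ delivers. All the individual steps (boundedness of $T$ via coercivity, self-adjointness via symmetry of $Q$, injectivity via density of $\H_1$ in $\H_2$, and the completeness of $\{\lambda_m^{-1/2}e_m\}$ in $(\H_1,Q)$) check out.
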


To allow for an even bigger domain of inhomogeneities $f$ the following non-symmetric extension of the Lax--Milgram theorem is interesting. The condition of continuity gets relaxed to include only the Hilbert space valued argument.

\begin{theorem}[Lax--Milgram--Lions]\label{lax-milgram-lions}
Let $\H$ be a Hilbert space and $\Phi$ a normed linear space, $Q : \H \times \Phi$ a bilinear form with $Q(\cdot,\varphi)$ in $\H'$ for all $\varphi \in \Phi$. Instead of coercivity we further demand there is a $c>0$ such that
\begin{equation}\label{eq-lax-milgram-lions-1}
\adjustlimits \inf_{\|\varphi\|_\Phi=1} \sup_{\|v\|_\H \leq 1} |Q(v,\varphi)| \geq c.
\end{equation}
%
Then for all $f \in \Phi'$ there is a solution $v \in \H$ such that $Q(v,\varphi)=f(\varphi)$ holds for all $\varphi \in \Phi$ that fulfils the estimate
\[
\|v\|_\H \leq \frac{1}{c} \|f\|_{\Phi'}.
\]
\end{theorem}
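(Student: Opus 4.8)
The plan is to run the representation-operator argument used for \autoref{th-representation-biliear-forms} and \autoref{cor-bilinear-form-inverse}, but since $Q$ is only a bilinear form on the \emph{pair} $\H\times\Phi$ and the coercivity hypothesis is replaced by the weaker inf--sup bound \eqref{eq-lax-milgram-lions-1}, the surjectivity of the representing operator has to be recovered by a Hahn--Banach extension instead of the closed-range argument used in \autoref{cor-bilinear-form-inverse}. (For notational ease I take the scalars real, which is all the applications require; the complex case is the obvious modification.)

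First I would, for each fixed $\varphi\in\Phi$, apply the Riesz--Fr\'echet representation theorem (\autoref{th-riesz-frechet}) to the bounded linear functional $Q(\cdot,\varphi)\in\H'$ to obtain a unique $A\varphi\in\H$ with $Q(v,\varphi)=\langle v,A\varphi\rangle$ for all $v\in\H$. Bilinearity of $Q$ and uniqueness in \autoref{th-riesz-frechet} make $A:\Phi\to\H$ linear. Because $\sup_{\|v\|_\H\le 1}|\langle v,A\varphi\rangle|=\|A\varphi\|_\H$, assumption \eqref{eq-lax-milgram-lions-1} reads exactly
\[
\|A\varphi\|_\H\ge c\,\|\varphi\|_\Phi\qquad\text{for all }\varphi\in\Phi,
\]
so in particular $A$ is injective. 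Note that $A$ carries no continuity hypothesis, and boundedness of $A$ is never used.

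Next I would push $f$ onto the range $M:=\ran A\subseteq\H$: set $g(A\varphi):=f(\varphi)$, which is well defined by injectivity of $A$ and linear. It is bounded on $M$, since for $\varphi$ with $\|A\varphi\|_\H\le 1$ one has $|g(A\varphi)|=|f(\varphi)|\le\|f\|_{\Phi'}\|\varphi\|_\Phi\le c^{-1}\|f\|_{\Phi'}\|A\varphi\|_\H\le c^{-1}\|f\|_{\Phi'}$, hence $\|g\|_{M'}\le c^{-1}\|f\|_{\Phi'}$; this is the step where the lower bound on $A$ is essential, converting an $f$-estimate over $\Phi$ into a $g$-estimate over $M$. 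By the Hahn--Banach theorem extend $g$ to $\tilde g\in\H'$ with $\|\tilde g\|_{\H'}=\|g\|_{M'}\le c^{-1}\|f\|_{\Phi'}$, and apply \autoref{th-riesz-frechet} once more to get $v\in\H$ with $\tilde g(\cdot)=\langle\cdot,v\rangle$ and $\|v\|_\H=\|\tilde g\|_{\H'}\le c^{-1}\|f\|_{\Phi'}$. Then for every $\varphi\in\Phi$, $Q(v,\varphi)=\langle v,A\varphi\rangle=\tilde g(A\varphi)=g(A\varphi)=f(\varphi)$, which is the claim together with the stated stability estimate.

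The main obstacle, and the point where this genuinely differs from the symmetric/coercive case, is the surjectivity of $A$ onto enough of $\H$ to hit an arbitrary $f\in\Phi'$: since $M=\ran A$ need not be closed, one cannot split $\H=M\oplus M^\perp$ and solve on $M$ directly as in \autoref{cor-bilinear-form-inverse}, and the non-constructive Hahn--Banach extension does exactly this job while preserving the norm bound. For the same reason one should \emph{not} expect uniqueness of $v$, and the theorem correctly does not assert it: any element of $(\ran A)^\perp$ may be added to $v$ without changing $Q(v,\cdot)$, so uniqueness would require an additional density- or coercivity-type condition on $Q$ in its $\H$-slot.
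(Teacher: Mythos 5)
Your proof is correct and follows essentially the same route as the paper's: represent $Q(\cdot,\varphi)$ via Riesz--Fr\'echet to get $A$, read the inf--sup condition as the lower bound $\|A\varphi\|_\H\geq c\|\varphi\|_\Phi$, transfer $f$ to the bounded functional $f\circ A^{-1}$ on $\ran A$, and extend it to all of $\H$ before representing the extension by the solution $v$. The only cosmetic difference is that you invoke Hahn--Banach for the extension where the paper composes $\overline{A^{-1}}$ with the orthogonal projection onto $\overline{\ran A}$ and takes the adjoint; in a Hilbert space the norm-preserving extension is unique and is exactly that projection extension, so the two constructions produce the same $v$ and the same estimate.
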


\begin{proof}\cite[Th.~III.2.1, Cor.~III.2.1]{showalter}\\
Define the linear (but not necessarily continuous) representing operator $A:\Phi \rightarrow \H'$ by $Q(v,\varphi) = \langle v,A\varphi \rangle$ then \eqref{eq-lax-milgram-lions-1} is equivalent to $\|A(\varphi)\|_{\H'} \geq c\|\varphi\|_\Phi$ for $\varphi \in \Phi$. This already implies that $A$ is invertible by injectivity and $A^{-1} : \ran A \rightarrow \Phi$  linear and uniformly continuous from the estimate before. This means there is always a unique linear, uniformly continuous extension to the respective closures
\[
\overline{A^{-1}} : \overline{\ran A} \rightarrow \overline{\Phi}.
\]
Now take $f \in \Phi'$, then $Q(v,\varphi)=f(\varphi)$ for all $\varphi \in \Phi$ holds exactly if $\langle v,g \rangle = f(A^{-1}g)$ for all $g \in \ran A$. If we define $P:\H' \rightarrow \overline{\ran A}$ the orthogonal projection the same follows from
\[
\langle v,g \rangle = f(\overline{A^{-1}}Pg), \quad g \in \H'.
\]
The space for $g$ is now big enough to allow for a duality argument by defining the (continuous) dual $(\overline{A^{-1}}P)^* : \Phi'=\overline{\Phi}' \rightarrow \H$ and apply it to $f$ which gives the solution $v = (\overline{A^{-1}}P)^* f$. The required estimate readily follows from an estimate for the operator $\overline{A^{-1}}P$ that is the same as for its dual.
\end{proof}

Note that $\ran A$ is not necessarily dense in $\H'$ and we had to introduce an orthogonal projection $P:\H' \rightarrow \overline{\ran A}$ to define a solution. This means other solutions outside the such defined subspace might as well be possible. So in contrast to the original Lax--Milgram theorem we only get existence and not uniqueness of a solution. Also the estimate for solutions need not hold for \emph{all} possible solutions, just for at least one.

Finally a hint should be given towards an additional theorem that could prove useful in the presented context. Céa's lemma \cite{cea} shows how far solutions to the Lax--Milgram theorem that are found with respect to a finite-dimensional subspace of the original Hilbert space can deviate from the original solutions. This is used to have an upper bound on errors due to numerical modelling of such problems.

\chapter{Schrödinger Dynamics}
\vspace{-0.4cm}

\begin{xquote}{\citeasnoun{simon-1971}}
After one has established the connection between quantum mechanics and Hilbert space objects, a host of nontrivial mathematical questions arise.
\end{xquote}

\begin{xquote}{\citeasnoun{reed-simon-4}}
Nonrelativistic quantum mechanics is often viewed by physicists as an area whose qualitative structure, especially on the level treated here, is completely known. [...] But, in our opinion, theoretical physics should be a science and not an art and, furthermore, one does not fully understand physical fact until one can derive it from first principles.
\end{xquote}

\begin{xquote}{\citeasnoun{klainerman}}
All this seems to point to the fact that the further development of the established physical theories ought to be viewed as a genuine and central goal of Mathematics itself. In view of this I think we need to reevaluate our current preconception about what subjects we consider as belonging properly within Mathematics. We may gain, consistent with Poincaré's point of view, considerably more unity by enlarging the boundaries of Mathematics to include, on equal footing with all other more traditional fields, physical theories such as Classical and Quantum Mechanics and Relativity Theory, which are expressed in clear and unambiguous mathematical language. We may then develop them, if we wish, on pure mathematical terms asking questions we consider fundamental, which may not coincide, at any given moment, with those physicists are most interested in, and providing full rigor to 
our proofs.
\end{xquote}

As we are mainly concerned with existence and uniqueness results in the density-potential mapping it is crucial to understand the underlying Cauchy problem of the standard Schrödinger equation and to prove existence and uniqueness of its solutions as well. Usually this question is left aside in quantum mechanics, the area credulously being thought of as exhaustively studied, already assuming unique solvability and continuous dependence of the results on the given data for any PDE given. But---as expressed by Reed and Simon in the quotation above---such answers cannot usually be given trivially. The methods developed here will also provide indispensable tools for the study of the inverse problem going from densities, given by solutions of the Schrödinger equation, back to external potentials determining those solutions. Such potentials will be time-dependent and are acting as a \emph{force} to control the density, hence the title \emph{Schrödinger dynamics}. This will become most visible in the pivotal \q{divergence of local forces equation} \eqref{eq-div-force-density}.

In the whole chapter the Hamiltonian $H : \H \supseteq D(H) \rightarrow \H$ is a linear, self-adjoint, and densely defined operator on a possibly multipartite Hilbert space $\H = L^2(\R^n)$ with norm $\|\cdot\|_2$. Even in the case of a time-dependent $H(t)$ we demand a common domain $D(H)$ for all $t$. This will later be guaranteed by the Kato--Rellich theorem (\autoref{th-kato-rellich}) and its numerous consequences. In the case of Schrödinger dynamics examined here, the Hamiltonian is given as $H = -\onehalf\Delta + v$ with $v$ a real, in general time-dependent function, which also amounts for all interactions between quantum particles here. We write $H_0 = -\onehalf\Delta$ for the so-called free Hamiltonian.

As an excellent reference to the topic in the more general framework of Banach spaces we refer to \citeasnoun{pazy}. Like there we also start in a Banach space setting, denoting $A$ as the generator of evolution, but soon move on to the Hilbert space case with Hamiltonian $H$.

\section{The abstract Cauchy problem}

The abstract Cauchy problem for an evolution equation on a Banach space $X$ is given as the following initial value problem.
\begin{equation}\label{cauchy-problem}
\begin{aligned}
    \dfrac{\d u(t)}{\d t} &= A u(t), \quad t > 0 \\
    u(0) &= x
\end{aligned}
\end{equation}
We want to give exact conditions for the existence of a uniquely defined trajectory in $X$ belonging to a given initial state $x \in X$ and steered by the system's properties completely represented by the operator $A$. This trajectory is supposed to be a solution to the Cauchy problem \eqref{cauchy-problem} if it lies in $D(A)$, but we will see later that by deriving a bounded evolution operator acting on the initial state, it can be extended to the whole Banach space $X$ as a  \emph{generalised solution}.

\subsection{Evolution semigroups}
\label{sect-semigroups}

A most useful notion is that of a semigroup of operators that yields the evolution of the system's state from one time to another. Applied to the initial state and evaluated at all times this gives the whole solution to the Cauchy problem. In the case of the (autonomous) Schrödinger equation with time-independent Hamiltonian these will be the familiar unitary evolution operators $U(t)$. For the time being we will stay in the more abstract context of evolution equations on Banach spaces.

\begin{definition}\label{def-semigroup}
A one parameter family $T(t), 0 \leq t < \infty$, of bounded linear operators from $X$ into $X$ is a \textbf{strongly continuous semigroup of bounded linear operators on $X$} (or simply a $\Cont^0$ semigroup)~if
\begin{enumerate}[(i)]
	\itemsep0em
	\item $T(0) = \id$,
	\item $T(t+s) = T(t) T(s)$ for every $t,s \geq 0$ and
	\item $\lim_{t \searrow 0} \|T(t) x - x\| = 0$ for all $x \in X$.
\end{enumerate}
\end{definition}

The restriction to a semigroup instead of defining a full group that would allow all $t \in \R$ is due to demands from parabolic PDEs like the heat equation which might be solvable only forward in time. Indeed it is unnecessary in the case of the Schrödinger equation but shall still be adopted here because the considered time interval is usually of the kind $[0,T]$ or $[0,\infty)$ anyway. The notion \q{strongly continuous semigroup} is justified because it can be shown  that such a $T(t)$ is bounded by $\|T(t)\| \leq M \e^{\omega t}$ with constants $\omega \geq 0, M \geq 1$ and thus because of (ii) is also strongly continuous in $t$. \cite[Th.~I.2.2]{pazy} If even $\|T(t)\| \leq 1$ it is called a \emph{contraction semigroup}.

\begin{definition}\label{def-inf-generator}
The \textbf{infinitesimal generator} $A$ of a $\Cont^0$ semigroup $T$ is the linear operator defined by
\begin{equation}\label{eq-inf-generator}
Ax = \lim_{t \searrow 0} \frac{T(t)x - x}{t}
\end{equation}
on its domain $D(A)$, i.e., all $x\in X$  where this limit exists.
\end{definition}

We noticed already in \autoref{sect-unbounded} that being densely defined is a very elementary characteristic of linear operators and indeed we show in \autoref{th-generator-closed} below that an infinitesimal generator always shares this feature and is also closed.

Because of the obvious relation of \eqref{eq-inf-generator} to the time-derivative of $T(t) x$ we might formally write the semigroup as an operator exponential.
\[
T(t) = \e^{tA}
\]
This makes sense because it can be shown \cite[Th.~I.2.6]{pazy} that \emph{vice versa} the infinitesimal generator uniquely defines the whole $\Cont^0$ semigroup. Considering the case of the Schrödinger equation we have $A = -\i H$ and $T(t) = U(t) = \e^{-\i H t}$. In the case of bounded Hamiltonians this exponential is not only formally defined but is really an exponential series that converges, see \autoref{sect-bounded-H}.

Theorems that show the existence of a $\Cont^0$ semigroup by demanding specific properties from its generator are called \q{generation theorems}. An example is \autoref{th-schro-dyn-konst} (Stone's Theorem) and the Theorems of Hille--Yoshida \cite[Th.~I.3.1]{pazy} and Lumer--Phillips \cite[Th.~I.4.3]{pazy} in the more general setting of Banach spaces. The important relation between the $\Cont^0$ semigroups and Cauchy problems is given by the following theorem which can be found for example in \citeasnoun[Th.~I.2.4]{pazy} or \citeasnoun[Lem.~12.11]{renardy-rogers}.

\begin{theorem}\label{semigroup-solution}
Let $T(t)$ be a $\Cont^0$ semigroup and $A$ its infinitesimal generator. Then it holds for $t>0$ that
\begin{enumerate}[(i)]
	\itemsep0em
	\item if $x \in X$ then
	\[
	\int_0^t T(s) x \d s \in D(A) \mtext{and} A \int_0^t T(s) x \d s = T(t) x - x,
	\]
	\item if $x \in D(A)$ then
	\[
	T(t) x \in D(A) \mtext{and} \frac{\d}{\d t} T(t)x = A T(t) x = T(t) A x.
	\]
\end{enumerate}
\end{theorem}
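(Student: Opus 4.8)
The plan is to read off both statements from difference quotients built directly out of the semigroup, using only the properties in \autoref{def-semigroup}, the exponential bound $\|T(t)\| \leq M\e^{\omega t}$ already quoted above, and the elementary observation that $s \mapsto T(s)x$ is continuous on $[0,\infty)$ for each fixed $x$: for a positive increment this follows from $T(s+h)x - T(s)x = T(s)(T(h)x - x)$ together with property (iii), and for a negative increment one additionally invokes the local boundedness of $\|T(\cdot)\|$.

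For part (i) I would fix $x \in X$ and $t>0$ and compute, for $h>0$,
\[
\frac{T(h) - \id}{h}\int_0^t T(s)x \d s = \frac{1}{h} \int_0^t \big(T(s+h) - T(s)\big)x \d s = \frac{1}{h}\int_t^{t+h} T(s)x \d s - \frac{1}{h}\int_0^h T(s)x \d s,
\]
the last equality being a shift of the integration variable. Letting $h \searrow 0$ and using continuity of $s \mapsto T(s)x$, the right-hand side tends to $T(t)x - x$. By \autoref{def-inf-generator} this is precisely the assertion that $\int_0^t T(s)x\d s \in D(A)$ and that $A$ applied to it equals $T(t)x - x$.

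For part (ii) I would take $x \in D(A)$ and first handle the right derivative: since $T(t)$ is bounded and commutes with $\tfrac{1}{h}(T(h)-\id)$,
\[
\frac{T(h) - \id}{h}T(t)x = T(t)\,\frac{T(h)-\id}{h}x \longrightarrow T(t)Ax \quad\text{as } h \searrow 0,
\]
which at once shows $T(t)x \in D(A)$, yields $AT(t)x = T(t)Ax$, and identifies the right derivative of $t \mapsto T(t)x$ with this common value. For the left derivative at $t>0$ I would write, for $0<h<t$,
\[
\frac{T(t)x - T(t-h)x}{h} - T(t)Ax = T(t-h)\!\left(\frac{T(h)x - x}{h} - Ax\right) + \big(T(t-h) - T(t)\big)Ax
\]
and let $h \searrow 0$: the last term vanishes by strong continuity, while the first vanishes because $\|T(t-h)\|$ stays bounded on $h \in [0,t]$ by the exponential estimate and the bracketed vector tends to $0$ by the definition of $Ax$. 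Hence both one-sided derivatives exist and agree, giving $\tfrac{\d}{\d t}T(t)x = AT(t)x = T(t)Ax$.

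The computations are routine; the care is concentrated in the two places where one interchanges a limit with a bounded operator — once with $T(t)$ in the right-derivative step, once with the shifting family $T(t-h)$ in the left-derivative step — which are both underwritten by the uniform bound $\|T(s)\| \leq M\e^{\omega s}$ on compact $s$-intervals, and in using that $s \mapsto T(s)x$ is genuinely continuous rather than merely right-continuous at $0$. I expect the left-derivative argument in (ii) to be the main obstacle, since there one cannot simply pull $T(t)$ outside the limit but must control $\|T(t-h)\|$ uniformly as $h \searrow 0$.
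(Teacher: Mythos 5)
Your proof is correct and follows essentially the same route as the paper's: the same shift-of-variable computation for (i), the same commutation of $T(t)$ with the difference quotient for the right derivative, and the identical decomposition $T(t-h)\bigl(\tfrac{T(h)x-x}{h}-Ax\bigr)+(T(t-h)-T(t))Ax$ for the left derivative. Your extra remarks on the uniform bound $\|T(s)\|\leq M\e^{\omega s}$ on compact intervals make explicit what the paper only uses implicitly.
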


Part (i) will lead to the definition of a mild solution while (ii) shows the existence of a classical solution to the Cauchy problem, see \autoref{sect-types-solutions}.

\begin{proof}
(i) Take $h >0$, then
\begin{align*}
\frac{T(h)-\id}{h} \int_0^t T(s) x \d s &= \frac{1}{h} \int_0^t (T(s+h)x-T(s)x) \d s \\
&= \frac{1}{h} \int_t^{t+h} T(s) x \d s - \frac{1}{h} \int_0^h T(s) x \d s
\end{align*}
and with $h \searrow 0$ the right-hand side goes to $T(t)x-x$.\\
(ii) Now because of boundedness of $T(t)$ we have as $h \searrow 0$
\[
\frac{T(h)-\id}{h} T(t) x = T(t) \frac{T(h)-\id}{h} x \longrightarrow T(t) A x
\]
thus $T(t)x \in D(A)$ and $AT(t)x = T(t)Ax$ as well as the right derivative of $T(t)x$ fulfilling
\[
\frac{\d^+}{\d t} T(t)x = A T(t) x = T(t) A x.
\]
To conclude we have to show the same for the left derivative.
\begin{align*}
&\lim_{h \searrow 0} \frac{T(t)x-T(t-h)x}{h} - T(t)Ax \\
&= \lim_{h \searrow 0} T(t-h) \left( \frac{T(h)x-x}{h} - Ax \right) + \lim_{h \searrow 0} (T(t-h)Ax - T(t)Ax)
\end{align*}

Both limit terms vanish, the first due to $x \in D(A)$ and boundedness of $T(t-h)$, the second by strong continuity of $T(t)$.
\end{proof}

The next theorem from \citeasnoun[Cor.~I.2.5]{pazy} and \citeasnoun[Th.~12.12]{renardy-rogers} shows special properties of such generators.

\begin{theorem}\label{th-generator-closed}
Let $A$ be the infinitesimal generator of a $\Cont^0$ semigroup $T(t)$. Then $A$ is always densely defined and closed. 
\end{theorem}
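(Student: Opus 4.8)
The plan is to establish the two assertions separately, both leaning on \autoref{semigroup-solution}. For density, given any $x \in X$ I would set $x_t = \frac{1}{t}\int_0^t T(s)x \d s$ for $t>0$; part (i) of \autoref{semigroup-solution} tells us $x_t \in D(A)$, and since $s \mapsto T(s)x$ is continuous (strong continuity at $0$ together with the semigroup law propagates continuity to all of $[0,\infty)$) with $T(0)x = x$, the average $x_t$ converges to $x$ as $t \searrow 0$. Hence $D(A)$ meets every neighbourhood of every $x \in X$ and is dense.

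For closedness, take a sequence $x_n \in D(A)$ with $x_n \to x$ and $Ax_n \to y$ in $X$; the goal is $x \in D(A)$ and $Ax = y$. By \autoref{semigroup-solution}(ii) the map $s \mapsto T(s)x_n$ is differentiable with derivative $T(s)Ax_n$, so integrating from $0$ to $t$ gives
\[
T(t)x_n - x_n = \int_0^t T(s) A x_n \d s .
\]
Now I would let $n \to \infty$. On the left, boundedness of the single operator $T(t)$ gives $T(t)x_n \to T(t)x$, and $x_n \to x$. On the right, on the compact interval $[0,t]$ one has the uniform bound $\|T(s)\| \leq M \e^{\omega t}$ noted after \autoref{def-semigroup}, so $T(s)Ax_n \to T(s)y$ uniformly in $s \in [0,t]$ and the integral converges to $\int_0^t T(s)y \d s$. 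Therefore $T(t)x - x = \int_0^t T(s) y \d s$ for every $t>0$. Dividing by $t$ and letting $t \searrow 0$, the right-hand side tends to $y$ by strong continuity of $s \mapsto T(s)y$ at $0$, so the limit defining $Ax$ exists and equals $y$; that is, $x \in D(A)$ and $Ax = y$, which is exactly closedness.

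The point requiring the most care — and the one I would flag as the main obstacle — is the interchange of limit and integral in the closedness step; everything else is bookkeeping. This interchange is legitimate precisely because of the local uniform boundedness $\|T(s)\| \leq M \e^{\omega t}$ of a $\Cont^0$ semigroup, so that $\|T(s)Ax_n - T(s)y\| \leq M\e^{\omega t}\|Ax_n - y\| \to 0$ uniformly on $[0,t]$; this is a fact cited but not proved in the excerpt, and one could alternatively derive it via the uniform boundedness principle applied to the pointwise-bounded family $\{T(s)\}_{s\in[0,t]}$. A secondary and more elementary issue is that the integrals above are well defined at all, which needs continuity of $s \mapsto T(s)z$ for fixed $z$; this follows by writing $T(s+h)z - T(s)z = T(s)(T(h)z - z)$ for $h \geq 0$ and $T(s-h)z - T(s)z = T(s-h)(z - T(h)z)$ for small $h>0$, again using the local uniform bound on $\|T(\cdot)\|$.
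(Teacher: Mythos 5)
Your proof is correct and follows essentially the same route as the paper's: density via the averages $\frac{1}{t}\int_0^t T(s)x\,\d s \in D(A)$ from \autoref{semigroup-solution}(i), and closedness by integrating part (ii), passing to the limit in $n$, then dividing by $t$ and letting $t \searrow 0$. Your explicit justification of the limit–integral interchange via the local uniform bound $\|T(s)\| \leq M\e^{\omega t}$ is a detail the paper leaves implicit, but the argument is the same.
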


\begin{proof}
Like before it holds
\[
x = \lim_{h \searrow 0}\frac{1}{h} \int_0^h T(s) x \d s
\]
and by \autoref{semigroup-solution} (i) the argument of the limit is in $D(A)$. Hence any $x \in X$ can be approached by a limit sequence and $D(A)$ is dense in $X$. Assume now such a sequence $x_n \in D(A), x_n \rightarrow x$ and $Ax_n \rightarrow y$. Then integrating out part (ii) over the time interval $[0,h]$ from \autoref{semigroup-solution} we can write
\[
T(h)x_n-x_n = \int_0^h T(s) A x_n \d s.
\]
In the limit $n \rightarrow \infty$ this yields
\[
T(h)x - x = \int_0^h T(s) y \d s.
\]
Finally we divide by $h$ and let $h \searrow 0$ and get
\[
Ax = \lim_{h \searrow 0} \frac{1}{h} \int_0^h T(s) y \d s = y.
\]
This shows closedness.
\end{proof}

\subsection{Types of solutions}
\label{sect-types-solutions}

There are several ways to approach the concept of a solution to the abstract Cauchy problem \eqref{cauchy-problem}, we will start with the most basic notion. By $\R_+$ we denote the unbounded interval $[0,\infty)$, itself an additive semigroup.

\begin{definition}\label{def-classical-sol}\cite[sect.~4.1]{pazy}\\
By a \textbf{classical} or \textbf{strong} solution to \eqref{cauchy-problem} we mean a continuous map $u: \R_+ \rightarrow X$, continuously differentiable and $u(t) \in D(A)$ for $t > 0$ such that \eqref{cauchy-problem} is satisfied.
\end{definition}

We state a small lemma for Cauchy problems with initial state $x \in D(A)$ which always leads to a classical solution.

\begin{lemma}
If $x \in D(A)$ the Cauchy problem \eqref{cauchy-problem} has a classical solution in the space $\Cont^1(\R_+,X) \cap \Cont^0(\R_+,D(A))$ where the domain of the generator $D(A)$ gets equipped with the graph norm.
\end{lemma}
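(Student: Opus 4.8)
The plan is to take for the solution the orbit of the semigroup through the initial datum. Let $T(t)$ be the $\Cont^0$ semigroup generated by $A$ and put $u(t) = T(t)x$. Then $u(0) = T(0)x = x$ by \autoref{def-semigroup}(i), so the initial condition is met, and since $x \in D(A)$, part (ii) of \autoref{semigroup-solution} shows $u(t) = T(t)x \in D(A)$ for every $t \geq 0$ together with
\[
\frac{\d}{\d t}T(t)x = A T(t)x = T(t)Ax \qquad (t>0),
\]
while at the left endpoint $t=0$ the (right) derivative equals $Ax$ directly by the definition of the infinitesimal generator (\autoref{def-inf-generator}). Hence $u$ satisfies $\dot u = Au$, $u(0) = x$, which is \eqref{cauchy-problem}, and it only remains to pin down its regularity.

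For the $\Cont^1(\R_+,X)$ assertion observe that the derivative just computed is the map $t \mapsto T(t)(Ax)$, which is continuous from $\R_+$ into $X$ because $T$ is strongly continuous and $Ax$ is a fixed vector; as $u$ itself is continuous into $X$ for the same reason, $u \in \Cont^1(\R_+,X)$. For $u \in \Cont^0(\R_+, D(A))$ equip $D(A)$ with the graph norm $\|\cdot\|_A$, equivalent to $\|\cdot\| + \|A\cdot\|$ by \autoref{lemma-equiv-graph-norm}, and note for $s,t \in \R_+$
\[
\|u(t)-u(s)\|_A \sim \|T(t)x - T(s)x\| + \|T(t)(Ax) - T(s)(Ax)\|,
\]
where we used $AT(r)x = T(r)Ax$ from \autoref{semigroup-solution}(ii). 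Both summands tend to $0$ as $t \to s$ by strong continuity of $T$ applied to $x$ and to $Ax$ respectively, so $u$ is continuous into $(D(A),\|\cdot\|_A)$.

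There is no real obstacle here: everything is read off from \autoref{semigroup-solution} and the strong continuity of $T$. The only points that warrant care are that differentiability at $t=0$ is necessarily one-sided and that continuity in the graph norm has to be checked simultaneously for $u$ and for $Au$, the latter again reducing to strong continuity of $T$ at the fixed vector $Ax$. Uniqueness is not claimed in the statement, but it would follow in the customary way: for any classical solution $v$ the map $r \mapsto T(t-r)v(r)$ has vanishing derivative on $[0,t]$, so $v(t) = T(t)x$.
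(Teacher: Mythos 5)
Your proposal is correct and follows essentially the same route as the paper: both rest on \autoref{semigroup-solution}(ii) for $T(t)x \in D(A)$ and $\frac{\d}{\d t}T(t)x = T(t)Ax$, and both reduce $\Cont^1$-regularity and graph-norm continuity to the strong continuity of $T$ applied to the fixed vectors $x$ and $Ax$. The only cosmetic difference is that you verify continuity at arbitrary $s,t$ directly, while the paper checks it at $t=0$ and extends via the semigroup property.
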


\begin{proof}
Because of classicality we have $T(t)x \in D(A)$, this time for all $t \in \R_+$. By \autoref{semigroup-solution} $\d/\!\d t\, T(t)x = AT(t)x = T(t)Ax$ and $\lim_{t \searrow 0} \|T(t) Ax - Ax\| = 0$ by the $\Cont^0$ property from \autoref{def-semigroup}. Thus we get $T(\cdot)x \in \Cont^1(\R_+,X)$. To test continuity with respect to the graph norm we have to check the limit $\lim_{t \searrow 0} \|A (T(t)x - x)\| = 0$ which is just the same as above. Both results extend from $t=0$ to all $t \geq 0$ by the semigroup property $T(t+s) = T(t) T(s)$.
\end{proof}

We will see later that in case of the Schrödinger equation due to unitarity of the evolution operator a classical solution demands for the initial state $\psi_0 \in D(H)$, making the notion of a generalised solution important for all trajectories associated to a $\psi_0 \in \H \setminus D(H)$. Thus the statement \q{for $t > 0$} in \autoref{def-classical-sol} may seem strange because classicality depends on the initial state being in $D(H)$ in that case and \q{for $t \geq 0$} would seem more concise. But it stays sensible in the general theory of evolution equations, for example the heat equation, where a non-differentiable initial state evolves into a infinitely differentiable state and classical solution after an infinitesimally short time. Yet it is absolutely possible to apply the evolution semigroup on an initial state $x \in X$ even if no classical solution arises and thus create a \emph{generalised} solution $T(t)x$ of the corresponding abstract Cauchy problem.

\begin{definition}\cite[p.~105]{pazy}\\
By a \textbf{generalised} solution to \eqref{cauchy-problem} we mean a continuous map $u: \R_+ \rightarrow X$, given as $u(t) = T(t)x$ where $T(t)$ is a $\Cont^0$ semigroup with infinitesimal generator $A$.
\end{definition}

If $X$ is a Hilbert space as in the Schrödinger case the inner product structure allows such a solution to fulfil a \q{weak} version of the Cauchy problem. \cite[p.~284]{reed-simon-2}
\[
\partial_t \langle \eta,u(t) \rangle = \langle A^*\eta,u(t) \rangle \mtext{for all} \eta \in D(A^*)
\]
In \citeasnoun{pazy} this kind of solution is called a \emph{mild} solution but we will use a different terminology here which then will soon turn out to be equivalent in the $\Cont^0$ semigroup case.

\begin{definition}\label{def-mild-solution}\cite[Def.~3.1.1]{arendt}\\
A \textbf{mild} solution to \eqref{cauchy-problem} is a continuous map $u: \R_+ \rightarrow X$ satisfying the integral version of \eqref{cauchy-problem}, i.e., for all $t>0$
\[
A \int_0^t u(s) \d s = u(t) - x \mtext{while} \int_0^t u(s) \d s \in D(A).
\]
\end{definition}

It is clear from \autoref{semigroup-solution} (i) that we can construct mild solutions for all initial states $x \in X$ if the evolution is given by a $\Cont^0$ semigroup. Thus in that case the concept of a mild and a generalised solution is equivalent. Of course any classical solution is also a generalised (mild) solution and the converse is true, if the generalised (mild) solution is continuously differentiable for $t \neq 0$.

The true reason why we rather say \emph{generalised} is that mild solutions for time-dependent $H$ will later be discussed in connection with evolution systems (see \autoref{sect-full-int-pic}) but will be related especially to inhomogeneous Cauchy problems with time-independent $A$ there. \citeasnoun[Def.~12.15]{renardy-rogers} and \citeasnoun[Def.~IV.2.3]{pazy} use the term \emph{mild solution} for such cases which justifies its use in the later chapters for us. We gave the above version of mildness just for the sake of completeness.

If we have $X$ as a Hilbert space again, a truly \emph{weak} (\emph{faible}) notion of solutions is possible, where also the time derivative is taken in a weak sense.

\begin{definition}\cite[I.2]{lions-book}\\
A \textbf{weak} solution to \eqref{cauchy-problem} is a map $u: \R_+ \rightarrow X$ satisfying the weak version of \eqref{cauchy-problem}, i.e., for $x \in X$ and all $\varphi \in \Cont^1(\R_+,X)$ with $\lim_{t \rightarrow \infty}\varphi(t)=0$ and $A^*\varphi \in \Cont^0(\R_+,X)$
\[
\int_0^\infty \Big( \langle \varphi'(t),u(t) \rangle + \langle A^*\varphi(t),u(t) \rangle \Big)\d t = \langle \varphi(0),x \rangle.
\]
\end{definition}

In the following examinations we always have $X=\H, A = -\i H$, and the time interval limited from $\R_+$ to a bounded $[0,T]$. We give a short tabular overview of the types of solutions and their respective trajectory spaces that can arise then.

\begin{center}
\begin{tabular}{ll}\label{tab-solution-types}
classical / strong & $\Cont^1([0,T], \H) \cap \Cont^0([0,T], D(H))$ \\
generalised / mild & $\Cont^0([0,T], \H)$ \\ 
weak & $L^2([0,T], \H)$
\end{tabular}
\end{center}

\section{The Schrödinger initial value problem}

As we are only concerned with the Schrödinger case the talk about $\Cont^0$ semigroups will cease and we will be mainly concerned with explicitly constructing unitary one-parameter evolution semigroups $U(t)$ and later in the case of time-dependent Hamiltonians two-parameter evolution systems $U(t,s)$. Because the former are in fact special cases of contraction semigroups all that has been said remains true. The proofs for existence of solutions to the Schrödinger equation will be given in the following sections for different cases, but in firstly assuming existence we can already prove uniqueness. We start by stating the Cauchy problem again for the quantum mechanical case.

The Schrödinger initial value problem is the Cauchy problem
\begin{equation}\label{cauchy-problem-se}
\begin{aligned}
    \i \partial_t \psi(t) &= H(t) \psi(t), \quad t \in [0,T] \\
    \psi(0) &= \psi_0 \in \H
\end{aligned}
\end{equation}
with $H(t)$ self-adjoint.

In general we look for solutions in the form of a unitary evolution system $U(t,s)$ (see \autoref{def-evolution-system}) applied to the initial state $\psi_0 \in \H$ that can therefore be of generalised or classical type. In both cases we speak of a \emph{Schrödinger solution} as the solution to the Schrödinger equation above.

\subsection{Uniqueness of solutions}

\begin{lemma}\label{lemma-uniqueness}
A solution to the Schrödinger equation \eqref{cauchy-problem-se} in the classical or generalised sense is always unique and the corresponding evolution operator $U(t,s)$ is necessarily unitary.
\end{lemma}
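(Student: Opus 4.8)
The plan is to reduce everything to the elementary fact that the $L^2$-norm is conserved along any classical solution. Let $\psi$ be a classical solution of \eqref{cauchy-problem-se}, so that $\psi \in \Cont^1([0,T],\H)$ with $\psi(t) \in D(H(t))$. Then $t \mapsto \langle \psi(t),\psi(t)\rangle$ is differentiable, and substituting $\partial_t\psi = -\i H(t)\psi$ and using the self-adjointness of $H(t)$ (so that $\langle H(t)\psi,\psi\rangle = \langle\psi,H(t)\psi\rangle \in \R$, see \autoref{def-symmetric-selfadjoint}) one computes
\[
\frac{\d}{\d t}\|\psi(t)\|^2 = \langle \partial_t\psi,\psi\rangle + \langle \psi,\partial_t\psi\rangle = \i\langle H\psi,\psi\rangle - \i\langle\psi,H\psi\rangle = 0 .
\]
Hence $\|\psi(t)\| = \|\psi_0\|$ for all $t \in [0,T]$. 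The one point requiring care here is that $\psi(t)$ must genuinely lie in $D(H(t))$ so that the adjoint relation, and not merely symmetry, may be invoked; for a classical solution this is part of the definition, and for a generalised solution it will be recovered only after the density argument below.

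Uniqueness for classical solutions is then immediate: if $\psi_1,\psi_2$ both solve \eqref{cauchy-problem-se} with the same $\psi_0$, then by linearity $\psi_1-\psi_2$ is again a classical solution, now with initial value $0$, so $\|\psi_1(t)-\psi_2(t)\| = 0$ for all $t$ and $\psi_1 \equiv \psi_2$. For a generalised solution $\psi(t) = U(t,0)\psi_0$ with $\psi_0 \in \H$ one picks $\psi_0^{(k)} \in D(H(0))$ with $\psi_0^{(k)} \to \psi_0$; the associated classical solutions $U(t,0)\psi_0^{(k)}$ are uniquely determined by the above, and since $U(t,0)$ is norm-preserving on $D(H(0))$ it extends uniquely, by density and continuity, to a bounded operator on all of $\H$ whose action on $\psi_0$ is thereby forced. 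Consequently any two evolution systems solving \eqref{cauchy-problem-se} agree on the dense subspace $D(H(0))$ and hence everywhere, which is the claimed uniqueness.

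Finally, unitarity. Norm conservation shows that $U(t,s)$ is an isometry on $D(H(s))$ and therefore, by the same density-and-continuity extension, an isometry on all of $\H$. Since an isometry need not be surjective, one invokes the defining properties of an evolution system (\autoref{def-evolution-system}): $U(s,s) = \id$ together with the cocycle identity $U(t,r)U(r,s) = U(t,s)$. Taking $r=t$ and then $r=s$ gives $U(t,s)U(s,t) = U(s,t)U(t,s) = \id$, so $U(t,s)$ admits the bounded two-sided inverse $U(s,t)$ and, being an invertible isometry, is unitary. The main obstacle in this argument is not any individual estimate but the bookkeeping around domains: ensuring the norm computation is carried out only where $\psi(t) \in D(H(t))$, and then bridging from that dense subspace to all of $\H$ using precisely the boundedness that norm conservation itself provides.
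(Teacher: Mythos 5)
Your proof is correct and follows essentially the same route as the paper's: norm conservation from self-adjointness, uniqueness of classical solutions via the difference trajectory, extension to generalised solutions by density and boundedness, and unitarity from norm preservation. Your extra step addressing surjectivity via the cocycle identity $U(t,s)U(s,t)=\id$ is a point the paper glosses over, so this is if anything slightly more careful than the original.
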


\begin{proof}
For uniqueness of a classical solution $\psi(t) = U(t,0)\psi_0$ we have to show that there can be no different direct solution to the Cauchy problem \eqref{cauchy-problem-se} $\psi'(t) \in D(H)$ with the same initial condition $\psi_0 \in D(H)$. Define the difference $\varphi = \psi - \psi'$ which also has to solve the Schrödinger equation because it is linear, this time with initial value $\varphi(0)=0$. Inserting the Schrödinger equation we get
\[
\partial_t \|\varphi(t)\|_2^2 = 2\Re\langle \varphi(t),\partial_t \varphi(t) \rangle = 2\Re\langle \varphi(t),-\i H(t) \varphi(t) \rangle = 0.
\]
This necessarily vanishes because from self-adjointness of $H(t)$ it follows that
\[
\langle \varphi(t), H(t) \varphi(t) \rangle \in \R.
\]
Thus $\partial_t \|\varphi(t)\|_2^2 = 0$ and the difference $\varphi(t)$ stays constantly zero which means $\psi(t) = \psi'(t)$ for all $t \in [0,T]$. Showing uniqueness for a dense subset $D(H) \subset \H$ of initial values suffices for uniqueness on the whole Hilbert space $\H$ because the involved evolution operator is bounded. Thus the case is also settled for generalised solutions. The equality of norms
\[
\|\psi(t)\|_2 = \|\psi(s)\|_2 = \|\psi(0)\|_2
\]
for arbitrary solutions also tells us that the evolution operator $U(t,s)$ is unitary.
\end{proof}

We want to not only give conditions on the quantum system, that is on the Hamiltonian and therefore on the potential $v$, but derive explicit expressions for these solutions in different settings. Those will gradually increase in difficulty, from the easiest case of time-independent (static) and bounded operators to full-fledged quantum dynamics with even singular potentials.

\subsection{Conservation laws and energy spaces}

Any solution to the Schrödinger equation conserves the $L^2$-norm of the wave function (in other words the \q{charge} or \q{mass}). This was shown in the proof of \autoref{lemma-uniqueness} above.

Owing to the usual physical tradition another conserved quantity for classical solutions is the energy, but only in case of time-independent (static) Hamiltonians $H=-\Delta +v$. The energy expectation value itself is related to the norm of the $H^1$ Sobolev space.
\begin{align*}
\partial_t \left( \onehalf\|\nabla\psi(t)\|_2^2 + \langle \psi(t),v\psi(t) \rangle \right) &= \Re\langle \partial_t\psi(t), (-\Delta+v) \psi(t) \rangle \\
&= \Re\langle \partial_t\psi(t),\i \partial_t \psi(t) \rangle =0
\end{align*}
Such conserved quantities give rise to so-called \q{energy spaces} whose norms they define. Those can be used to get global existence of solutions from the local one by such \emph{a priori} estimates for the conserved quantity. \cite[3.1]{cazenave} Another tool related to the energy space is the extension of the Hamiltonian from its usual self-adjoint domain to that space, e.g.,~$H^1$. This goes under the name \q{energetic extension}.

\subsection{Spread of wave packets under free evolution}

It seems common knowledge that an initial wave packet confined to a bounded domain spreads out to all of space after an infinitesimally short time under the effect of the free Schrödinger evolution. Citing \citeasnoun{hegerfeldt-ruijsenaars}:

\begin{quote}
For a free nonrelativistic wave packet it is well known that it spreads instantaneously over all of space if it is localized in a bounded region at time $t=0$.
\end{quote}

Or after \citeasnoun{madrid}:

\begin{quote}
It is well known that if a non-relativistic particle is initially confined to a finite region of space, then it immediately develops infinite tails, as one could already expect from the lack of an upper limit for the propagation speed in non-relativistic Quantum Mechanics.
\end{quote}

We give a proof that applies the Paley--Wiener theorem of Fourier analysis. This theorem is motivated by the question for which functions $f : \R^n \rightarrow \C$ the respective Fourier transforms $\hat{f}$ are analytic.
\[
\hat{f}(k) = (2\pi)^{-n/2} \int_{\R^n} \e^{-\i k \cdot x} f(x) \d x
\]
We may easily check for analyticity with the Cauchy--Riemann conditions if differentiating under the integral sign gives a well-defined quantity. A problem arises here if for example $\Im{k_i} > 0$ and $x_i \rightarrow \infty$ because then the exponential will grow to infinity. But this is compensated if $f \in L^2$ decays fast enough for $|x| \rightarrow \infty$ (this for example is always the case if the support of $f$ is contained in a compact set). This fact is captured in the following theorem, a version of the Paley--Wiener theorem and Theorem IX.13 in \citeasnoun{reed-simon-2}.

\begin{theorem}\label{th-paley-wiener-version}
Let $f$ be in $L^2(\R^n)$. Then $\e^{b|x|}f \in L^2(\R^n)$ for all $b<a$ if and only if $\hat{f}$ has an analytic continuation to the set $\{ k \mid |\Im k|<a \}$ with the property that for each $\eta \in \R^n$ with $|\eta|<a$, $\hat{f}(\cdot + \i \eta) \in L^2(\R^n)$ and for any $b<a$
\[
\sup_{|\eta|\leq b}\|\hat{f}(\cdot + \i \eta)\|_2 < \infty.
\]
\end{theorem}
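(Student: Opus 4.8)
The statement is a form of the Paley--Wiener theorem, so the plan is the classical Fourier--Laplace argument. One may assume $a>0$ (for $a=0$ both conditions are vacuous), and then the hypothesis $\e^{b|x|}f\in L^2$ with $0<b<a$ already forces $f\in L^1(\R^n)$, since $f=\e^{-b|x|}\cdot(\e^{b|x|}f)$ is a product of two $L^2$-functions; hence the classical Fourier transform of $f$ is available and coincides with the unitary one. For the direction ``$\Rightarrow$'', fix $\eta\in\R^n$ with $|\eta|<a$ and pick $|\eta|<b<a$. Then $|\e^{-\i(k+\i\eta)\cdot x}f(x)|=\e^{\eta\cdot x}|f(x)|\le(\e^{b|x|}|f(x)|)\,\e^{(|\eta|-b)|x|}$ is a product of two $L^2$-functions and so integrable, which makes
\[
F(z)=(2\pi)^{-n/2}\int_{\R^n}\e^{-\i z\cdot x}f(x)\d x
\]
well defined on the tube $T_a=\{z\in\C^n\mid|\Im z|<a\}$. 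Dominated convergence gives continuity, and holomorphy in each variable follows from Morera's theorem together with Fubini, the integrand being entire in $z$; thus $F$ is holomorphic on $T_a$ and $F|_{\R^n}=\hat f$. For fixed $\eta$ the slice $k\mapsto F(k+\i\eta)$ is exactly the Fourier transform of $\e^{\eta\cdot x}f\in L^2(\R^n)$, so $F(\cdot+\i\eta)\in L^2(\R^n)$ and, by Plancherel, $\|F(\cdot+\i\eta)\|_2=\|\e^{\eta\cdot x}f\|_2\le\|\e^{b|x|}f\|_2$ whenever $|\eta|\le b$, which is the required uniform bound.

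For the converse, let $f\in L^2(\R^n)$ be defined by $\hat f=F|_{\R^n}$ and fix $b<a$; the goal is $\e^{b|x|}f\in L^2$. The key preliminary is to upgrade the $L^2$-bounds on horizontal slices to pointwise bounds: applying the sub-mean-value property of $|F|^2$ over polydiscs of a fixed small polyradius $r$ with $b+r<a$ gives $|F(z)|^2\le C_r\int_{P(z,r)}|F|^2$ for $|\Im z|\le b$, where $P(z,r)$ is the polydisc around $z$; since $\int_{|\Im w|\le b+r}|F(w)|^2\d w<\infty$, this shows $F$ is bounded on $\{|\Im z|\le b\}$ with $\sup_{|\Im z|\le b,\;|\Re z|\ge R}|F(z)|\to0$ as $R\to\infty$. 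Now regularise by $F_\varepsilon(z)=F(z)\,\e^{-\varepsilon\,z\cdot z}$: this is holomorphic on $T_a$, and by the Gaussian factor together with the slab bound it decays rapidly as $|\Re z|\to\infty$ uniformly for $|\Im z|\le b$; its restriction to $\R^n$ is $\hat f(k)\e^{-\varepsilon|k|^2}$, whose inverse Fourier transform $f_\varepsilon$ is the convolution of $f$ with a Gaussian, so $f_\varepsilon\to f$ in $L^2$ and, along a subsequence, almost everywhere.

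Shifting the contour one coordinate at a time --- each shift being a one-dimensional Cauchy theorem on a strip, the rapid decay killing the vertical sides, and every intermediate imaginary part having norm $\le|\eta|<a$ so one never leaves $T_a$ --- yields, for every $x\in\R^n$ and every $\eta$ with $|\eta|<a$,
\[
f_\varepsilon(x)=(2\pi)^{-n/2}\int_{\R^n}F_\varepsilon(k+\i\eta)\,\e^{\i(k+\i\eta)\cdot x}\d k,
\]
that is, $\e^{\eta\cdot x}f_\varepsilon$ is the inverse Fourier transform of $F_\varepsilon(\cdot+\i\eta)$; Plancherel then gives $\|\e^{\eta\cdot x}f_\varepsilon\|_2=\|F_\varepsilon(\cdot+\i\eta)\|_2\le\e^{\varepsilon a^2}\sup_{|\eta'|\le b}\|F(\cdot+\i\eta')\|_2=:\e^{\varepsilon a^2}C_b$ for $|\eta|\le b$. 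Letting $\varepsilon\to0$ and invoking Fatou's lemma along that subsequence, $\|\e^{\eta\cdot x}f\|_2\le C_b$ for all $|\eta|\le b$. Finally one passes from directional to radial growth: pick $b<b'<a$ and finitely many points $\eta_1,\dots,\eta_N$ on the sphere $\{|\eta|=b'\}$ dense enough that $\max_j\eta_j\cdot(x/|x|)\ge b$ for all $x\neq0$; then $\e^{2b|x|}|f(x)|^2\le\sum_{j=1}^N\e^{2\eta_j\cdot x}|f(x)|^2$, so $\|\e^{b|x|}f\|_2^2\le N\,C_{b'}^2<\infty$, and since $b<a$ was arbitrary this proves $\e^{b|x|}f\in L^2(\R^n)$ for all $b<a$.

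I expect the contour shift to be the real obstacle: moving the integration from $\R^n$ into the tube requires honest decay of the integrand as $|\Re z|\to\infty$, which the bare ``$L^2$ on lines'' hypothesis does not supply. The remedy is the two-stage fix above --- first converting the $L^2$-bounds into uniform pointwise decay along slabs via the mean-value property of holomorphic functions, and then inserting the Gaussian regulariser $\e^{-\varepsilon\,z\cdot z}$ so that Cauchy's theorem genuinely applies, the limit $\varepsilon\to0$ being taken only at the very end.
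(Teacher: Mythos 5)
The paper itself contains no proof of this statement: it is quoted as Theorem IX.13 of Reed--Simon II, and only the short auxiliary lemma following it (equivalence of $\e^{b|x|}f\in L^2$ for all $b<a$ with $\e^{\eta\cdot x}f\in L^2$ for all $|\eta|<a$) is proved in the text. Your proposal therefore supplies an argument the paper omits, and it is correct; it is the standard Paley--Wiener argument. The forward direction (Cauchy--Schwarz to get $f\in L^1$, Morera--Fubini for holomorphy on the tube, Plancherel on each horizontal slice) is routine and complete, and in the converse you correctly identify and repair the genuine obstacle: the bare \q{$L^2$ on horizontal slices} hypothesis does not justify a contour shift, so you first upgrade it to uniform pointwise bounds with decay on closed slabs via the sub-mean-value property of $|F|^2$ over polydiscs, then insert the holomorphic Gaussian $\e^{-\varepsilon z\cdot z}$ so that the one-variable-at-a-time application of Cauchy's theorem is legitimate, and only remove the regulariser at the end via Fatou. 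Two cosmetic points: the polydisc $P(z,r)$ moves the imaginary part by up to $r\sqrt{n}$, not $r$, so the smallness condition should read $b+r\sqrt{n}<a$; and the final covering argument tacitly assumes $b>0$ (for $b\le 0$ the conclusion is trivial since $f\in L^2$). Neither affects correctness. It is worth noting that your spherical-cap covering with points on the sphere of radius $b'>b$ is the right way to pass from directional weights $\e^{\eta\cdot x}$ to the radial weight $\e^{b|x|}$: the paper's proof of its auxiliary lemma instead claims that $\e^{b\max_i|x_i|}$ is an upper bound for $\e^{b|x|}$, which has the inequality backwards (since $\max_i|x_i|\le|x|$) and as written only yields the weight $\e^{(b/\sqrt{n})|x|}$; your argument closes exactly that gap.
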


The first consequence to draw is that a wave function $f$ of the above type has unbounded frequency bandwidth. That is because its Fourier transform $\hat{f}$ is analytic in a stripe $\{ k \mid |\Im k|<a \}$ around the real axis.\footnote{Later on we sometimes prefer the equivalent notion \q{holomorphic} if analyticity on a complex domain is meant. Finally an \emph{entire} function is everywhere holomorphic.} If we now assume it is zero on some tiny interval or just on a set with accumulation point, analyticity demands it is zero everywhere. This is clearly a contradiction, thus $\hat{f}$ has no zero set with accumulation point.

Now let us again assume that an initial state $f$ is of the above type but with compact support as well. Then the Fourier transform of the free evolution after an arbitrary time interval with length $t$ is given by
\[
\hat{g}(k) = \hat{f}(k) \cdot \e^{-\i k^2 t}.
\]
Now $\hat{f}$ fulfilled the condition of bounded $L^2$-norm even if $k$ is slightly displaced from the real axis. But is it possible that $\hat{g}$ still fulfils it such that $g$ can be of compact support as well? To these ends we have to study if the following norm is finite, where the principal variable for the $L^2$-norm of the functions is always $k \in \R^n$.
\begin{equation}\label{eq-ghat-norm}
\|\hat{g}(k + \i \eta)\|_2 = \left\|\hat{f}(k + \i \eta) \cdot \e^{-\i (k + \i \eta)^2 t}\right\|_2  = \left\|\hat{f}(k + \i \eta) \cdot \e^{2 (k \cdot \eta) t}\right\|_2
\end{equation}
We notice the growing exponential factor, but maybe it can be compensated by $\hat{f}$? We rewrite it further with the help of the following lemma (which is part of the proof of the theorem above).

\begin{lemma}
It holds $\e^{b|x|}f \in L^2(\R^n)$ for all $b<a$ if and only if $\e^{\eta\cdot x}f \in L^2(\R^n)$ for all $\eta \in \R^n$ with $|\eta|<a$.
\end{lemma}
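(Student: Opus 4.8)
The plan is to prove both implications by comparing the radial weight $\e^{b|x|}$ with the directional weights $\e^{\eta\cdot x}$; the only subtlety is that $|x|$ is a supremum of the linear functionals $x\mapsto\eta\cdot x$ over the whole unit sphere, which I will reduce to a finite maximum using compactness.

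For the forward direction, suppose $\e^{b|x|}f\in L^2(\R^n)$ for all $b<a$ and fix $\eta\in\R^n$ with $|\eta|<a$. The CSB inequality in $\R^n$ gives $\eta\cdot x\le|\eta|\,|x|$ for every $x$, hence $\e^{\eta\cdot x}\le\e^{|\eta|\,|x|}$ pointwise. Choosing any $b$ with $|\eta|\le b<a$ we get $|\e^{\eta\cdot x}f(x)|\le\e^{b|x|}|f(x)|$, and the right-hand side lies in $L^2(\R^n)$ by assumption; therefore $\e^{\eta\cdot x}f\in L^2(\R^n)$.

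For the reverse direction, suppose $\e^{\eta\cdot x}f\in L^2(\R^n)$ for all $\eta$ with $|\eta|<a$, and fix $b<a$. Pick $\varepsilon\in(0,1)$ small enough that $b':=b/(1-\varepsilon)<a$. The unit sphere $S^{n-1}$ is covered by the open caps $\{u\in S^{n-1}\mid u\cdot\eta>1-\varepsilon\}$, $\eta\in S^{n-1}$, so by compactness there are finitely many unit vectors $\eta_1,\dots,\eta_N$ such that every $u\in S^{n-1}$ satisfies $u\cdot\eta_i>1-\varepsilon$ for some $i$. Given $x\neq0$, applying this to $u=x/|x|$ yields an index $i$ with $\eta_i\cdot x\ge(1-\varepsilon)|x|>0$, hence $b|x|\le b'\,\eta_i\cdot x$ and, since all the exponentials are positive,
\[
\e^{b|x|}\le\sum_{i=1}^N\e^{b'\eta_i\cdot x}
\]
(the bound holds trivially at $x=0$ as well). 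Consequently, using $\big(\sum_{i=1}^N a_i\big)^2\le N\sum_{i=1}^N a_i^2$,
\[
\|\e^{b|x|}f\|_2^2\le\int_{\R^n}\Big(\sum_{i=1}^N\e^{b'\eta_i\cdot x}\Big)^2|f(x)|^2\d x\le N\sum_{i=1}^N\|\e^{b'\eta_i\cdot x}f\|_2^2 .
\]
Each summand is finite because $|b'\eta_i|=b'<a$, so $\e^{b|x|}f\in L^2(\R^n)$, and as $b<a$ was arbitrary the claim follows.

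The only genuine obstacle is the one just flagged: converting the continuum of directional bounds into a single usable estimate on $\e^{b|x|}$. The finite-net argument from compactness of $S^{n-1}$ resolves it, at the harmless cost of shrinking the admissible exponent from $b$ to $b'<a$.
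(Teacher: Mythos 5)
Your proof is correct. The forward direction is the same as the paper's (the pointwise bound $\e^{\eta\cdot x}\leq\e^{|\eta|\,|x|}$ followed by monotonicity in the exponent), but your reverse direction takes a genuinely different route. The paper uses only the $2n$ coordinate directions $\eta=\pm b e_i$ to get $\e^{b|x_i|}f\in L^2$ for each $i$, hence $\e^{b\max_i|x_i|}f\in L^2$, and then claims this dominates $\e^{b|x|}$; since $\max_i|x_i|\leq|x|$ that last inequality actually goes the wrong way, and repairing it via $|x|\leq\sqrt{n}\max_i|x_i|$ only delivers the conclusion for $b<a/\sqrt{n}$. Your finite $\varepsilon$-net of directions on the unit sphere avoids this dimensional loss entirely: the price of replacing the radial weight by finitely many directional ones is only the factor $1/(1-\varepsilon)$, which can be taken arbitrarily close to $1$, so you recover the full range $b<a$. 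What the paper's argument buys is economy (a fixed set of $2n$ directions, no covering argument); what yours buys is a sharp and fully correct estimate. The only point you leave implicit is the case $b\leq 0$, where your inequality $b|x|\leq b'\,\eta_i\cdot x$ need not hold but the claim is immediate since $\e^{b|x|}\leq 1$ and $f=\e^{0\cdot x}f\in L^2(\R^n)$; a half-sentence would close that off.
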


\begin{proof}
Because of $\e^{\eta\cdot x} \leq \e^{|\eta\cdot x|} \leq \e^{|\eta|\,|x|}$ one implication readily follows.
Take now $\eta = (b, 0, \ldots, 0)$ such that $\e^{b x_1} f \in L^2$, similarly for $\eta = -(b, 0, \ldots, 0)$ we get $\e^{-b x_1} f \in L^2$ thus more generally $\e^{b |x_i|} f \in L^2$ for all $i = 1, \ldots, n$. Now using all those restrictions together we get $\e^{b \max|x_i|} f \in L^2$ wich is clearly an upper bound for $\e^{b |x|}$.
\end{proof}

Assuming \eqref{eq-ghat-norm} is bounded is thus equivalent to
\[
\left\|\hat{f}(k + \i \eta) \cdot \e^{b |k|}\right\|_2 < \infty
\]
for some $a>0$ and all $b<a$. Now setting $\eta=0$ exactly produces the conditions for \autoref{th-paley-wiener-version} if we put in $\hat{f}$ instead of $f$. This means the Fourier transform $\mathcal{F}(\hat f)(x) = \mathcal{F}^2(f)(x) = f(-x)$ is again analytic which is in contradiction with the original assumption that the initial state $f$, be it mirrored or not, is of compact support. Thus any state with compact support will evolve such that it cannot fulfil the Paley--Wiener condition, particularly it cannot have compact support and thus is spread all over all $\R^n$. This seems plausible because initial states with compact support already include modes of infinite frequency.

\citeasnoun{hegerfeldt-1998} gives a similar account for much more general situations, allowing any self-adjoint Hamiltonian that is bounded from below, not only free evolution for the classical Schrödinger equation. His result says that even in relativistic and field settings an initially strictly localised particle either forever stays in its initial region \emph{or} it cannot be localised in any bounded region at any later time thus spreading to infinity instantly! In a relativistic theory this may seem as an open contradiction to Einstein causality. A simple solution to save the speed of light is just not to allow any strict initial localisation of particles. In the case of eigenfunctions of the Hamiltonian localisation is impossible anyway as a consequence of the unique continuation property (see \autoref{sect-ucp}).

\subsection{Kovalevskaya and the Schrödinger equation}
\label{sect-kovalevskaya}

This section is largely taken from our review paper on TDDFT \cite{tddft-review} and was originally worked out mainly by Robert van Leeuwen.

We showed in the previous section that a freely evolving wave packet with compact support at time $t=0$ will have tails reaching all over space at any other time. This is not difficult to understand from a physical point of view since the Fourier components of the initially localised wave function include momenta of arbitrarily high value allowing the particle to move arbitrarily fast. This also implies that a localised wave packet enclosed in a box with hard walls will feel the presence of the boundary immediately. If we would have used other boundary conditions, such as periodic ones, then immediately after $t=0$ the time-evolution must be different from that of an unbounded configuration space. 

Clearly if one takes the solution to the Schrödinger equation \eqref{cauchy-problem-se} naively as the formal exponential
\begin{equation}\label{eq-sol-formal}
\e^{-\i H t} = \sum_{n=0}^\infty \frac{(-\i H t)^n}{n!}
\end{equation}
as in \autoref{th-schro-dyn-beschr} below that holds for bounded Hamiltonians and which is just specified by the differentiation rule, one has no information on such boundary conditions as they were neither encoded in the initial state nor in the exponential form of the time evolution operator. Further application of the rule \eqref{eq-sol-formal} to a function with compact support will anyway never lead to a non-zero value outside of the support so that a spread of the wave function is unthinkable in the first place. It is therefore no surprise that \eqref{eq-sol-formal} cannot be used to predict the time-evolution correctly.

However, not all hope is lost in applying \eqref{eq-sol-formal} because we can apply \eqref{eq-sol-formal} at least to a finite linear combination of eigenfunctions $\varphi_k$ (assuming those exist) with corresponding eigenvalues $\varepsilon_k$ of the form
\begin{equation}
\psi (0,x) = \sum_{k=1}^{N} \alpha_k \, \varphi_k (x),
\label{Psi02}
\end{equation}
since for this initial state we obtain from \eqref{eq-sol-formal} that 
\begin{equation}\label{Psi-sol-eigenfunctions}
\begin{aligned}
\psi (t,x) &= \sum_{n=0}^{\infty} \sum_{k=1}^N \alpha_k  \frac{(-\i t)^n}{n!} H^{n}  \varphi_k (x) 
 \\
&= \sum_{n=0}^{\infty} \sum_{k=1}^N \alpha_k \frac{1}{n!}  \left( -\i t \varepsilon_k \right)^n\varphi_k (x)
= \sum_{k=1}^N \alpha_k \e^{-\i t \varepsilon_k} \varphi_k (x),
\end{aligned}
\end{equation}
which is a valid solution to the Schrödinger equation. Clearly the formal approach worked in this case because as an initial state we chose a finite linear combination of eigenfunctions that already include information on the boundary conditions. We note that in the case of the free Hamiltonian $H=-\onehalf \Delta$ the eigenfunctions $\varphi_k$ are entire functions (see also \autoref{sect-ucp}).
Then clearly the solution \eqref{Psi-sol-eigenfunctions} is a real analytic (even entire) function in $t$ and $x$, i.e.,
\[
\psi (t,x) = \sum_{k,l=0}^\infty c_{kl} \, (t-t_0)^k (x-x_0)^l
\]
on the simple domain $x\in \Omega = \R$.
Could it be that the formal expression of \eqref{eq-sol-formal} would work for all real analytic initial states? Since the example of a wave function with compact support is necessarily non-analytic this would then explain in another way the failure of \eqref{eq-sol-formal}.
We already know that a function that is exactly zero on an interval of the real line cannot be
analytic unless it is the zero function and therefore any initially localised wave packet fails to be analytic.
Let us give an example which shows that the requirement of analyticity is not sufficient to make \eqref{eq-sol-formal} work. The Schr{\"o}dinger dynamics shall once more be given by the free Hamiltonian $H=-\onehalf \Delta$ on $L^2(\mathbb{R})$ and as an initial state we take a Lorentzian function
\begin{equation}
\psi (0,x) = \frac{1}{1+x^2} = \frac{\i}{2} \left( \frac{1}{x+\i} - \frac{1}{x-\i} \right) .
\label{Psi0_Lorentzian}
\end{equation}
This is an analytic function on the whole real axis with a  radius of convergence of at least 1 for any Taylor expansion of $\psi (0,x)$ in powers of $x-x_0$ around $x_0$. If we insert this initial state into \eqref{eq-sol-formal} we obtain the series
\[
\psi (t,x) = \sum_{n=0}^{\infty} \left(\frac{\i t}{2} \right)^n\frac{(2n)!}{n!} \frac{\i}{2} \left( \frac{1}{(x+\i)^{2n+1}} - \frac{1}{(x-\i)^{2n+1}} \right) .
\]
From simple convergence criteria we see that this is a divergent series for any value of $x$ and $t$. Therefore real analyticity is not a sufficient criterion to be able to apply \eqref{eq-sol-formal}. To get sufficient conditions for initial states given as power series we faithfully follow the classical derivation given by \citeasnoun[p.~22ff]{kovalevskaya} where she studies convergence of formal solutions to the heat equation as an example. Define the two time-dependent functions
\[
\psi^{(0)}(t) = \psi (t,x_0)  \quad \mbox{and} \quad \psi^{(1)} (t) = \partial_x \psi(t,x_0)
\]
for which we will assume that they are analytic. Those are new initial values if we exchange the meaning of $x$ and $t$. This is necessary because \citeasnoun{kovalevskaya} is concerned with initial-value problems where the time derivative appears in highest order. The general solution can then be written as a formal power series
\begin{equation}\label{TDSE_series}
\begin{aligned}
\psi (t,x) =& \sum_{\nu=0}^{\infty} \left( (-2\i)^{\nu}\, \partial_t^\nu \psi^{(0)}(t)\, \frac{(x-x_0)^{2 \nu}}{(2 \nu)!}  \right. \\
&+ \left.  
(-2\i)^{\nu}\, \partial_t^\nu \psi^{(1)}(t)\, \frac{(x-x_0)^{2 \nu +1}}{(2 \nu +1)!} \right),
\end{aligned}
\end{equation}
as can be checked by insertion of this expression into the free Schrödinger equation.
Let now the Taylor expansions of $\psi^{(0)} (t)$ and $\psi^{(1)} (t)$ be given by
\begin{align}
\psi^{(0)} (t) = \sum_{\nu=0}^\infty c_\nu (t- t_0)^\nu, 
\label{Psi_0_series} \\
\psi^{(1)} (t) = \sum_{\nu=0}^\infty c'_\nu (t- t_0)^\nu.  
\label{Psi_1_series}
\end{align}
Then in terms of the coefficients $c_\nu$ and $c'_\nu$ the expansion \eqref{TDSE_series} at time $t_0$ attains the form
\begin{align}
\psi (t_0,x) =  \sum_{\nu=0}^{\infty} \left( b_\nu (x-x_0)^{2 \nu} +  b'_\nu (x-x_0)^{2 \nu+1} \right)
\label{TDSE_series2}
\end{align}
where we defined
\[
b_\nu = (-2\i)^{\nu} \frac{\nu!}{(2\nu)!}  c_\nu,   \quad \quad b'_\nu =  (-2\i)^{\nu} \frac{\nu!}{(2\nu+1)!}  c'_\nu  .
\]

Since we assumed $\psi^{(0)} (t)$ and $\psi^{(1)} (t)$ to be analytic functions the expansions \eqref{Psi_0_series} and \eqref{Psi_1_series} have convergence radii $R^{(0)}$ and $R^{(1)}$ respectively.
Let $R>0$ be a radius strictly smaller than $R^{(0)}$ and $R^{(1)}$. Then since both series \eqref{Psi_0_series} and \eqref{Psi_1_series} converge within $R$ we have from the Cauchy--Hadamard theorem for the radius of convergence of a power series
\[
\frac{1}{R^{(0)}} = \limsup_{\nu \rightarrow \infty} \sqrt[\nu]{|c_\nu|} < \frac{1}{R}
\]
and the analogous expression for $c'_\nu$. This means that there is an $N \in \N$ such that for all $\nu \geq N$ it follows
\[
|c_\nu| \leq \frac{1}{R^{\nu}},  \quad \quad |c'_\nu| \leq \frac{1}{R^{\nu}} .
\]
These conditions imply that for the coefficients $b_{\nu}, b'_\nu$ and $\nu \geq N$ in \eqref{TDSE_series2} it holds
\begin{equation}\label{b_constraints}
\begin{aligned}
|b_\nu| = \frac{\nu!}{(2 \nu)!}  2^\nu |c_\nu|  \leq \frac{\nu!}{(2 \nu)!}  \left( \frac{2}{R} \right)^\nu \quad \mbox{and} \\
| b'_\nu| = \frac{\nu!}{(2 \nu+1)!} 2^\nu |c'_\nu | \leq \frac{\nu!}{(2 \nu+1)!}  \left( \frac{2}{R} \right)^\nu.
\end{aligned}
\end{equation}
One sees from standard convergence criteria that this implies that $\psi (0,x)$ as a power series in $x-x_0$ as in \eqref{TDSE_series2} has an infinite radius of convergence.
Now we understand what went wrong when we chose as initial state the Lorentzian function of \eqref{Psi0_Lorentzian}.
The function is analytic but the radius of convergence is not infinite like it would be the case for an entire function. The solution to the Schrödinger equation will then not
be time-analytic, i.e., series of the form as in \eqref{Psi_0_series} and \eqref{Psi_1_series} do not exist for this initial state.

Note that not even the requirement of infinite radius of convergence is enough, one really needs to satisfy the constraints \eqref{b_constraints}. For example the function
\[
\psi (0,x) = \sum_{\nu=0}^\infty \frac{1}{(\nu!)^{1/3}} (x-x_0)^\nu
\]
also given in \citeasnoun[p.~24]{kovalevskaya} has an infinite radius of convergence but does not satisfy the constraints \eqref{b_constraints}.
A typical example in which direct exponentiation is finally allowed is that of an initial Gaussian wave packet \cite{blinder} since the conditions \eqref{b_constraints} are fulfilled.

The issue of time non-analyticity has been raised in some papers in connection with initial states that are not differentiable at cusps. \cite{yang-burke}
However, the analysis carried out here following \citeasnoun{kovalevskaya} shows that the situation is more severe. A Taylor expansion in time for $\psi (t,x)$ does not even exist for a large class of analytic initial states without anything like cusps.

The classical Cauchy--Kovalevskaya theorem formulated in \citeasnoun{kovalevskaya} states that there is always a unique solution in the analytic class to PDEs with analytic coefficients and analytic Cauchy data. But its most basic condition is not fulfilled in the case of the Schrödinger equation, that is that the time variable appears in the highest derivative. Also Cauchy--Kovalevskaya would allow for additional non-analytic solutions, a loophole closed for many typical PDEs with the help of Holmgren's theorem that will be featured again briefly in \autoref{sect-ucp}.

\section{Schrödinger dynamics with static Hamiltonians}

By \q{static} we mean a Hamiltonian that is not time-dependent, like for a system with a constant external potential.

\subsection{Bounded Hamiltonians (operator exponential)}
\label{sect-bounded-H}

\begin{definition}\label{exp}
For a bounded linear operator $A:X \rightarrow X$ we define the \textbf{operator exponential}
\[
\e^A = \sum_{n=0}^\infty \frac{A^n}{n!}.
\]
As one can easily see, the result is a bounded operator itself, satisfying the norm-inequality $\|\e^A\| \leq \e^{\|A\|}.$
\end{definition}

\begin{theorem}\label{th-schro-dyn-beschr}
In the case of $H$ self-adjoint, bounded, and static a solution to the Schrödinger equation \eqref{cauchy-problem-se} is given by $\psi(t) = \e^{-\i H t}\psi_0$.
\end{theorem}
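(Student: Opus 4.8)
The plan is to work directly with the norm-convergent series of \autoref{exp}, without invoking any of the semigroup-generation machinery. Write $U(t) = \e^{-\i H t}$. By \autoref{exp} this is a bounded operator on all of $\H$ for every $t \in \R$, with $\|U(t)\| \leq \e^{\|H\| |t|}$, so $\psi(t) = U(t)\psi_0$ is a well-defined element of $\H = D(H)$ for every $t$. The initial condition is immediate, since $U(0) = \id$ gives $\psi(0) = \psi_0$. It remains to prove that $t \mapsto \psi(t)$ is continuously differentiable and satisfies $\i\partial_t\psi(t) = H\psi(t)$.

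First I would record two elementary consequences of absolute convergence of the defining series. Since $H$ commutes with itself, the Cauchy product of the series for $U(t)$ and $U(s)$ may be rearranged by the binomial theorem to give the semigroup property $U(t+s) = U(t)U(s)$ for all $t,s \in \R$; in particular $H$ commutes with every $U(t)$. Next comes the one genuinely quantitative step, an operator-norm estimate on the difference quotient at the origin: from $U(h) - \id = -\i H h + \sum_{n \geq 2} (-\i H h)^n/n!$ one gets
\[
\left\| \frac{U(h) - \id}{h} + \i H \right\| \leq \frac{1}{|h|} \sum_{n \geq 2} \frac{\|H\|^n |h|^n}{n!} = \frac{\e^{\|H\| |h|} - 1 - \|H\| |h|}{|h|} \leq \onehalf |h| \, \|H\|^2 \, \e^{\|H\| |h|},
\]
which tends to $0$ as $h \to 0$.

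Combining these, for fixed $t$ the semigroup property yields
\[
\frac{\psi(t+h) - \psi(t)}{h} = U(t) \, \frac{U(h) - \id}{h} \, \psi_0 \xrightarrow{\ h \to 0\ } U(t)(-\i H)\psi_0 = -\i \, H \, U(t) \psi_0 = -\i H \psi(t),
\]
using boundedness of $U(t)$ together with the previous estimate for the convergence, and commutativity of $H$ with $U(t)$ for the penultimate equality. Hence $\dot\psi(t) = -\i H\psi(t)$, i.e.\ $\i\partial_t\psi(t) = H\psi(t)$ for all $t$. Continuity of $\dot\psi$ is then automatic: $t \mapsto U(t)$ is norm-continuous, because $\|U(t+h) - U(t)\| = \|U(t)(U(h)-\id)\| \leq \e^{\|H\||t|}\|U(h) - \id\|$ and $\|U(h) - \id\| \to 0$ from the series, so $\dot\psi = -\i H \, U(\cdot)\psi_0 \in \Cont^0(\R, \H)$ as $H$ is bounded. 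Thus $\psi$ is a classical solution in the sense of \autoref{def-classical-sol} (indeed it lies in $\Cont^1(\R,\H) \cap \Cont^0(\R, D(H))$, with $D(H) = \H$ here).

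The only point requiring care is the interchange of differentiation with the infinite sum; everything else is bookkeeping with absolutely convergent series. That interchange is precisely what the uniform operator-norm estimate above secures, so the remaining obstacle is effectively nil — in contrast with the unbounded case treated later, where the exponential series need not converge and one must instead route through Stone's theorem.
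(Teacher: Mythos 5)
Your proof is correct and takes essentially the same route as the paper, which simply says \q{differentiating term by term shows right away that $\psi(t)$ is a solution}; your uniform operator-norm estimate on the difference quotient at the origin, combined with the semigroup property, is exactly the careful justification of that one-line claim. The paper additionally remarks that uniqueness follows from \autoref{lemma-uniqueness}, but since the theorem only asserts existence of a solution, your argument covers the stated claim.
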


\begin{proof}
Just using the definition of the operator exponential and differentiating term by term shows right away that $\psi(t)$ is a solution to \eqref{cauchy-problem-se}. With $U(t)=\e^{-\i H t}$ \autoref{lemma-uniqueness} holds and shows uniqueness of the solution.
\end{proof}

To guarantee boundedness by \autoref{th-hellinger-toeplitz} (Hellinger--Toeplitz) it suffices to demand $H$ symmetric and defined on the whole space $\H$, we could therefore substitute this prerequisite. Furthermore for everywhere defined operators, \q{symmetric} equals \q{self-adjoint}, so the minimal conditions in \autoref{th-schro-dyn-beschr} are $H$ actually defined everywhere on $\H$, symmetric, and static. But as the Hamiltonian of quantum mechanics usually is not bounded (or defined everywhere), a basic reason for that was given at the beginning of \autoref{sect-unbounded}, the following case is of increased relevance.

\subsection{Unbounded Hamiltonians (Stone's theorem)}

\begin{xquote}{\citeasnoun{kato-1951}}
Thus our result serves as a mathematical basis for all theoretical works concerning nonrelativistic quantum mechanics, for they always presuppose, at least tacitly, the self-adjointness of Hamiltonian operators.
\end{xquote}

\autoref{th-schro-dyn-beschr} can also be formulated for unbounded operators if one approximates $H$ by a series of bounded operators $H_\lambda$ which then define the bounded, unitary operator $U(t)=\e^{-\i H t}$. The general case for Banach spaces is treated in the aforementioned theorems of Hille--Yosida and Lumer--Phillips. The Hilbert space case covered here is given by Stone's theorem, but our proof is greatly inspired by the one sometimes given for Hille--Yosida, see \citeasnoun[Th.~II.3.5]{engel-nagel}.

\begin{theorem}[Stone's theorem]\label{th-schro-dyn-konst}
In the case of $H$ self-adjoint and static a solution to the Schrödinger equation \eqref{cauchy-problem-se} is given by $\psi(t) = \e^{-\i H t}\psi_0$ with the formal operator exponential defined by a Yosida approximation given in the proof.
\end{theorem}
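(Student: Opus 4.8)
The plan is to construct the unitary group $U(t) = \e^{-\i H t}$ for an unbounded self-adjoint $H$ by the Yosida approximation, mimicking the Hille--Yosida strategy but using the resolvent bounds established for self-adjoint operators in \autoref{lemma-resolvent}. First I would introduce, for each $\lambda > 0$, the bounded approximant
\[
H_\lambda = \lambda H (\lambda + \i H)^{-1} = -\i\lambda\big(\id - \lambda(\lambda+\i H)^{-1}\big),
\]
noting that $(\lambda + \i H)^{-1}$ is well-defined and bounded by $\lambda^{-1}$ because $\i H$ is self-adjoint times $\i$ so $\lambda/\i = -\i\lambda$ lies off the real axis of the operator $H$; more precisely, $(\lambda+\i H) = \i(H - \i\lambda)$ and \autoref{lemma-resolvent} gives $\|(H-\i\lambda)^{-1}\| \leq \lambda^{-1}$. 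Each $H_\lambda$ is bounded, so by \autoref{th-schro-dyn-beschr} the group $U_\lambda(t) = \e^{-\i H_\lambda t}$ exists; one checks $\|U_\lambda(t)\| \leq 1$ (the real part of the generator $-\i H_\lambda$ is nonpositive, or directly estimate the exponential series using that $\i H_\lambda$ has nonnegative real part as a bounded operator), so the $U_\lambda(t)$ are contractions uniformly in $\lambda$.

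The heart of the argument is then the two standard convergence steps. First, for $\psi \in D(H)$ one shows $H_\lambda \psi \to H\psi$ as $\lambda\to\infty$: writing $H_\lambda \psi - H\psi = \lambda(\lambda+\i H)^{-1}H\psi - H\psi = -\i H\,\lambda(\lambda+\i H)^{-1}\cdot\i\cdot\psi$ type manipulations, the key sublemma is $\lambda(\lambda+\i H)^{-1}\varphi \to \varphi$ strongly for every $\varphi\in\H$, proved first on the dense set $D(H)$ using the resolvent bound and then extended by the uniform bound $\|\lambda(\lambda+\i H)^{-1}\|\leq 1$. Second, using the identity
\[
U_\lambda(t)\psi - U_\mu(t)\psi = \int_0^t \frac{\d}{\d s}\big(U_\lambda(t-s)U_\mu(s)\psi\big)\d s = \int_0^t U_\lambda(t-s)U_\mu(s)\,\i(H_\mu - H_\lambda)\psi\,\d s,
\]
valid for $\psi\in D(H)$, together with the contraction bounds and the fact that $H_\lambda$ and $H_\mu$ commute with each other and with their exponentials, one gets $\|U_\lambda(t)\psi - U_\mu(t)\psi\| \leq |t|\,\|H_\lambda\psi - H_\mu\psi\| \to 0$ uniformly on compact $t$-intervals. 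Hence $U_\lambda(t)\psi$ is Cauchy; the limit $U(t)\psi$ exists for $\psi\in D(H)$, and by the uniform contraction bound and density of $D(H)$ it extends to a contraction $U(t)$ on all of $\H$. The group law $U(t+s) = U(t)U(s)$, strong continuity, and unitarity ($U(t)^* = U(-t) = U(t)^{-1}$, using that each $U_\lambda$ is unitary in the limit — actually $U_\lambda$ need not be exactly unitary, so here invoke that the limit satisfies $\|U(t)\psi\| = \lim\|U_\lambda(t)\psi\|$ and run the same construction for $-H$ to produce the inverse) all pass to the limit.

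Finally, to see that $\psi(t) = U(t)\psi_0$ actually solves the Schr\"odinger equation when $\psi_0 \in D(H)$: differentiate $U_\lambda(t)\psi_0$, which gives $\frac{\d}{\d t}U_\lambda(t)\psi_0 = -\i H_\lambda U_\lambda(t)\psi_0$; show $U_\lambda(t)\psi_0 \to U(t)\psi_0$ and $H_\lambda U_\lambda(t)\psi_0 = U_\lambda(t) H_\lambda \psi_0 \to U(t) H\psi_0$, so by closedness of $H$ (\autoref{lemma-adj-abgeschlossen}) we get $U(t)\psi_0 \in D(H)$ with $H U(t)\psi_0 = U(t)H\psi_0$, and the difference quotients converge to give $\i\partial_t \psi(t) = H\psi(t)$; for general $\psi_0 \in \H$ this is the generalised solution. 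Uniqueness is already handled by \autoref{lemma-uniqueness}. I expect the main obstacle to be the bookkeeping around the approximant $H_\lambda$ — getting the resolvent identity in the right form, verifying the uniform contraction estimate $\|\e^{-\i H_\lambda t}\| \leq 1$ cleanly (since $H_\lambda$ is not self-adjoint, one must check its numerical range lies in the lower half-plane), and confirming $H_\lambda\psi \to H\psi$ on $D(H)$ — rather than the passage to the limit, which is routine once the uniform bounds are in place.
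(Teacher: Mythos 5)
Your proposal follows essentially the same route as the paper: the approximant $H_\lambda = \lambda H(\lambda+\i H)^{-1}$ is algebraically identical to the paper's Yosida approximation $-\i\lambda-\lambda^2(\i\lambda-H)^{-1}$, and the uniform contraction bound, the strong convergence $H_\lambda\psi\to H\psi$ on $D(H)$ via $\lambda(\lambda+\i H)^{-1}\to\id$, the Cauchy estimate $\|U_\lambda(t)\psi-U_\mu(t)\psi\|\leq|t|\,\|(H_\lambda-H_\mu)\psi\|$, and the density extension all match. The only cosmetic difference is unitarity, where the paper symmetrises the approximant to $H_{(\lambda)}=\onehalf(H_\lambda+H_{-\lambda})$ (and also notes it follows from the uniqueness lemma), whereas you propose running the construction for $-H$ to obtain the inverse — both are standard and adequate.
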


\begin{proof}
We choose $\lambda > 0$ for $t > 0$ (and analogously $\lambda < 0$ for $t < 0$, which will not be executed independently) and define $H_\lambda = -\i\lambda - \lambda^2 (\i \lambda - H)^{-1}$ (the \emph{Yosida approximation}) which is bounded following \autoref{lemma-resolvent} and further show that $H_\lambda \rightarrow H$ strongly converges on $D(H)$ when $|\lambda| \rightarrow \infty$. Subsequently we define $\e^{-\i H t}$ as the strong limit of $\e^{-\i H_\lambda t}$ on the whole of $\H$.\\
In the first step a simple calculation shows $(\i \lambda - H) H_\lambda = \i\lambda H$ and thus $H_\lambda = \i\lambda (\i\lambda - H)^{-1} H$ on $D(H)$. For $\psi \in D(H)$ the family $\i\lambda (\i\lambda - H)^{-1}$ satisfies
\begin{align*}
\|(\i\lambda (\i\lambda - H)^{-1} - \id)\psi\| &= \|\i\lambda^{-1}(\lambda^2(\i\lambda - H)^{-1} +\i\lambda)\psi\| \\
&= |\lambda|^{-1} \|H_\lambda \psi\| \\
&\leq |\lambda|^{-1} \|\i\lambda (\i\lambda - H)^{-1}\| \cdot \|H \psi\| \\
&\leq |\lambda|^{-1} \|H\psi\| \rightarrow 0,
\end{align*}
when $|\lambda| \rightarrow \infty$ and therefore $\i\lambda (\i\lambda - H)^{-1} \rightarrow \id$ strongly on $D(H)$. Because $\i\lambda (\i\lambda - H)^{-1}$ is uniformly bounded by 1 as shown by \autoref{lemma-resolvent} and $D(H) \subseteq \H$ dense, we get strong convergence on all of $\H$. Thus it follows $H_\lambda \rightarrow H$ strongly on $D(H)$ when $|\lambda| \rightarrow \infty$. Now with $H_\lambda$ bounded the exponential series $\e^{-\i H_\lambda t}$ can be constructed as in \autoref{exp}. It is uniformly bounded by applying the inequality from \autoref{lemma-resolvent}.
\[
\left\| \e^{-\i H_\lambda t} \right\| = \e^{-\lambda t} \left\| \e^{\i\lambda^2 (\i \lambda - H)^{-1} t} \right\| \leq \e^{-\lambda t} \e^{\lambda^2 |t| \cdot \| (\i \lambda - H)^{-1}\|} \leq 1
\]
It remains to show that there is a strong limit for $|\lambda| \rightarrow \infty$. For that purpose take the norm of
\[
\left(\e^{-\i H_\lambda t} - \e^{-\i H_\mu t}\right) \psi = \int_0^t \frac{\d}{\d s} \left( \e^{-\i H_\lambda s} \e^{-\i H_\mu (t-s)} \psi \right) \d s.
\]
To make use of the usual properties of the exponential map, the elements of the family $\{\e^{-\i H_\lambda t}\}_{\lambda>0}$ have to commutate. This follows directly from the fact that $\{H_\lambda\}_{\lambda>0}$ is a commuting family.
\begin{align*}
\left\| \left(\e^{-\i H_\lambda t} - \e^{-\i H_\mu t}\right) \psi \right\| &\leq \int_0^t \left\| \e^{-\i H_\lambda s} \e^{-\i H_\mu (t-s)} \right\| \, \left\| (H_\lambda - H_\mu) \psi \right\| \d s \\
&\leq |t| \, \left\| (H_\lambda - H_\mu) \psi \right\|
\end{align*}
For $|\lambda|,|\mu| \rightarrow \infty$ this converges to 0 for all $\psi \in D(H)$ and the thus formed Cauchy sequence defines $\e^{-\i H t}$ uniquely on $D(H)$. As the $\{\e^{-\i H_\lambda t}\}_{\lambda}$ are uniformly bounded as shown above and $D(H) \subset \H$ is dense by assumption, the newly found $\e^{-\i H t}$ can be uniquely extended to a bounded operator on the whole Hilbert space $\H$.\\
Next we show that $\e^{-\i H t}$ is unitary for unbounded $H$ too. To this end we observe that by definition $H_\lambda^* = H_{-\lambda}$ and with the result above $H_{\pm \lambda} \rightarrow H$ holds on $D(H)$. Thus the by construction self-adjoint $H_{(\lambda)} = \onehalf (H_\lambda + H_{-\lambda}) \rightarrow H$ converge on $D(H)$ and we can define $\e^{-\i H t}$ making use of them instead of $H_\lambda$ just as good.
\[
\langle \e^{-\i H t}\varphi,\e^{-\i H t}\psi \rangle = \lim_{|\lambda| \rightarrow \infty} \langle \e^{-\i H_{(\lambda)} t}\varphi,\e^{-\i H_{(\lambda)} t}\psi \rangle = \langle \varphi,\psi \rangle
\]
Uniqueness is finally given by \autoref{lemma-uniqueness} which would also automatically yield unitarity for $\e^{-\i H t}$.
\end{proof}

Stone's theorem can also be proven as a one-to-one correspondence between self-adjoint operators and unitary one-parameter groups. For any given unitary one-parameter group on a Hilbert space there exists a unique self-adjoint generator in the role of the Hamiltonian. This operator is then defined as a limit as in \autoref{def-inf-generator} that does not necessarily converge for all $\psi \in \H$ which gives the respective domain of the Hamiltonian.
\[
H\psi = \lim_{t \searrow 0} \frac{e^{-\i H t}\psi - \psi}{t}
\]
This amounts to an interesting shift in perspective, making the whole evolution operation the fundamental ingredient defining the dynamics of a system with the Hamiltonian remaining in the position of a derived object.

Note that the same result as Stone's theorem is amenable via spectral theory of operators where a resolution of identity yields a projection valued measure $\d E_H$. The unitary evolution operator is then given as the exponential applied to the eigenvalues in the spectral representation. \cite{blanchard-bruening-2}
\[
\e^{-\i H t} = \int \e^{-\i \varepsilon t} \d E_{H}(\varepsilon)
\]
However the spectral theorem for general self-adjoint operators is also non-trivial, so here only the path via Stone's theorem was followed.

An interesting corollary can be noted in addition, actually already at hand with \autoref{semigroup-solution} (ii), according to which trajectories which are in $D(H)$ for one instant permanently stay within $D(H)$. Conversely a trajectory outside of $D(H)$ must also stay there. A nice example is given by a particle in a box with constant initial wave function $\psi_0$ as elaborated by \citeasnoun{berry}. The result after free evolution is a continuous wave function that is nowhere differentiable, a fractal curve. This example has been utilised again in \citeasnoun{gruebl-penz} as a demonstration for well-defined non-differentiable Bohmian trajectories in one spatial dimension. These states are of infinite energy, therefore clearly lying outside of $D(H)$. Note that the converse is not necessarily true, a state of finite energy $\langle H \rangle_\psi < \infty$ might still not be in $D(H)$, the latter being all states with $\|H\psi\|_2^2 = \langle H^2 \rangle_\psi < \infty$.

\begin{corollary}\label{cor-evolut-stable}
Let $\psi(t)$ be a solution to the Schrödinger equation \eqref{cauchy-problem-se} as given by \autoref{th-schro-dyn-konst}. Then it holds $\psi(t) \in D(H)$ for all times if $\psi_0 \in D(H)$ and conversely that  $\psi(t) \notin D(H)$ for all times if $\psi_0 \notin D(H)$.
\end{corollary}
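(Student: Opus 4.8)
The plan is to read this statement as an immediate consequence of part (ii) of \autoref{semigroup-solution} together with the full group structure (not merely the semigroup structure) supplied by \autoref{th-schro-dyn-konst} (Stone's theorem). The corollary is a two-sided invariance statement, and \autoref{semigroup-solution} (ii) already gives the one-sided (forward) invariance; the only thing to add is reversibility of the evolution.

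First I would record the basic setup: the evolution $U(t) = \e^{-\i H t}$ produced by Stone's theorem is a strongly continuous one-parameter \emph{unitary} group whose infinitesimal generator is $A = -\i H$, so that $D(A) = D(H)$. Applying \autoref{semigroup-solution} (ii) to this $\Cont^0$ semigroup immediately yields that $\psi_0 \in D(H)$ implies $\psi(t) = U(t)\psi_0 \in D(H)$ for all $t > 0$; indeed the identity $A U(t)\psi_0 = U(t) A\psi_0$ from the same theorem says precisely that $U(t)$ commutes with $H$ on $D(H)$. To cover the backward direction in time (which the interval $[0,T]$ makes necessary for the second half of the statement), I would observe that $t \mapsto U(-t) = \e^{\i H t}$ is again a $\Cont^0$ semigroup, with generator $\i H$ and hence the \emph{same} domain $D(H)$; applying \autoref{semigroup-solution} (ii) to it gives $U(-t)\varphi \in D(H)$ for $\varphi \in D(H)$ and $t > 0$. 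Combining, one obtains the first assertion of the corollary: $\psi_0 \in D(H) \Rightarrow \psi(t) \in D(H)$ for all $t \in [0,T]$.

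The converse is then the contrapositive, using that $U(t)$ is invertible with $U(t)^{-1} = U(-t)$ (unitarity, established inside the proof of \autoref{th-schro-dyn-konst}). Suppose $\psi_0 \notin D(H)$ but, for contradiction, $\psi(s) = U(s)\psi_0 \in D(H)$ for some $s \in (0,T]$. Then $\psi(s)$ is a legitimate initial datum lying in $D(H)$, so by the first part applied to the backward semigroup $U(-\cdot)$ we get $U(-s)\psi(s) \in D(H)$. But $U(-s)\psi(s) = U(-s)U(s)\psi_0 = \psi_0$, whence $\psi_0 \in D(H)$, a contradiction. Hence $\psi(t) \notin D(H)$ for all $t$.

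There is essentially no hard step here; the single point requiring a little care is to invoke not merely the forward semigroup property but the full group property from Stone's theorem, so that the evolution can be run in reverse — this is exactly what promotes the ``$D(H)$ is forward-invariant'' content of \autoref{semigroup-solution} (ii) into the two-sided dichotomy claimed. Alternatively, one could phrase the entire argument through the commutation relation $U(t)H = H U(t)$ on $D(H)$, valid for every $t \in \R$, which makes both directions simultaneous and symmetric.
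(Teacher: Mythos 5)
Your proof is correct, and the converse step (take a supposed $\psi(s)\in D(H)$ as a new initial datum and run the evolution backwards to force $\psi_0\in D(H)$) is exactly the argument the paper uses. Where you differ is in how the forward invariance is obtained: you invoke the abstract semigroup result, \autoref{semigroup-solution}~(ii), applied to both $U(t)=\e^{-\i Ht}$ and the reversed family $U(-t)=\e^{\i Ht}$, whereas the paper re-derives the commutation $H\,\e^{-\i Ht}=\e^{-\i Ht}H$ on $D(H)$ by hand, going back to the Yosida approximants $H_\lambda$ from the proof of \autoref{th-schro-dyn-konst} and interchanging the two limits using that $\{H_\lambda\}_\lambda$ is a commuting family. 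Your route is cleaner and avoids redoing limit interchanges, but it implicitly uses that the infinitesimal generator of $U(t)$ in the sense of \autoref{def-inf-generator} is exactly $-\i H$ with domain exactly $D(H)$ (not merely that $-\i H\subseteq A$); this is the standard content of the full Stone correspondence, which the paper only alludes to after \autoref{th-schro-dyn-konst} rather than proves, and it is precisely what the paper's explicit computation with the $H_{(\lambda)}$ sidesteps. So: same overall structure, different and slightly less self-contained justification of the key commutation fact; no genuine gap.
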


\begin{proof}
We take $H \e^{-\i H t} \psi_0$ and first show that $H$ and the bounded $\e^{-\i H t}$ are commuting, thus
\[
H \e^{-\i H t} \psi_0 = \e^{-\i H t} H \psi_0
\]
and $\e^{-\i H t} \psi_0 \in D(H)$. For that purpose we have to go back to the definition of $H$ as a limit of the series $H_\lambda$, where we make use of the $H_{(\lambda)}$ definition from \autoref{th-schro-dyn-konst}, for which both $t>0$ and $t<0$ can be used.
\[
H \e^{-\i H t} \psi_0 = \lim_{|\lambda| \rightarrow \infty} H_\lambda \lim_{|\mu| \rightarrow \infty} \e^{-\i H_\mu t} \psi_0 = \lim_{|\lambda|,|\mu| \rightarrow \infty} H_\lambda \e^{-\i H_\mu t} \psi_0
\]
Exchanging the operators within the limit is secure because $\{H_\lambda\}_\lambda$ is a commuting family and thus on $D(H)$ it holds
\[
H \e^{-\i H t} = \e^{-\i H t} H.
\]
If for a $\psi_0 \notin D(H)$ at any time it would hold that $\e^{-\i H t} \psi_0 \in D(H)$, then this could be taken as a new initial value in $D(H)$ which directly leads to a contradiction. Thus the converse is also true.
\end{proof}

\section{Classes of static potentials}
\label{sect-const-potential}

\begin{xquote}{\citeasnoun{schwartz-1962}}
Give a mathematician a situation which is the least bit
ill-defined -- he will first of all make it well defined. Perhaps appropriately, but perhaps also inappropriately. The hydrogen atom illustrates this process nicely. The physicist asks: `What are the eigenfunctions of such-and-such a differential operator?' The mathematician replies: `The question as put is not well defined. First you must specify the linear space in which you wish to operate, then the precise domain of the operator as a subspace. Carrying all this out in the simplest way, we find the following result\ldots' Whereupon the physicist may answer, much to the mathematician's chagrin: `Incidentally, I am not so much interested in the operator you have just analyzed as in the following operator, which has four or five additional small terms -- how different is the analysis of this modified problem?'
\end{xquote}

\begin{xquote}{Terence Tao\footnote{The quotation is taken from section \q{Semilinear Schrodinger (NLS)} at \url{http://www.math.ucla.edu/\~tao/Dispersive/schrodinger.html}.}}
One can also add a potential term, which leads to many physically 	interesting problems, however the field of Schrodinger operators with potential is far too vast to even attempt to summarize here.
\end{xquote}

The main assumption on $H$ in the preceding section was self-adjoint\-ness and a dense domain. Which conditions must be imposed on the potential $v$ to achieve this will be discussed here. A typical example would be the Hamiltonian of an atom with $N$ electrons and point-like, fixed nucleus with charge number $Z$ located at the origin. This amounts to the usual Born--Oppenheimer approximation with only the electronic part of the wave function under consideration and thus neglecting all degrees of freedom for the nuclei.
\begin{equation}\label{atomic-hamiltonian}
H = \underbrace{-\frac{1}{2}\sum_{i=1}^N \Delta_i}_A \;\underbrace{-\sum_{i=1}^N \frac{Z e^2}{|x_i|} + \sum_{i<j} \frac{e^2}{|x_i-x_j|}}_B
\end{equation}

In the following the operator $A=H_0$ will be viewed as the basic building block of the Hamiltonian, while $B$ is envisioned as its perturbation. We want to deduce conditions on $B$, under which for $A$ self-adjoint on $D(A)$ also $A+B$ is self-adjoint on $D(A)$. The answer is given by \autoref{th-kato-rellich} (Kato--Rellich).

This leaves us with the task to determine correctly a domain $D(A)$ for a symmetric $A$ such that the operator is self-adjoint. The property that the closure of a symmetric operator is self-adjoint is called \emph{essential self-adjointness} (cf.~\autoref{def-ess-sa}) and so we can use this construction for such operators to get a hint on $D(A)$. In the case of the free Hamiltonian $H_0$ this task has been undertaken already in \autoref{ex-laplacian} and its proper self-adjoint domain can be defined with the help of Sobolev spaces. The same result as in that example would be achieved if we employ a procedure called the \emph{Friedrichs extension} of non-negative symmetric operators. \cite[Th.~X.23]{reed-simon-2} In the case of a bounded open space-domain $\Omega \subset \R^n$ we have to take the boundary terms into account to get the correct domain of the self-adjoint operator. For zero boundary conditions this yields the following domain for $H_0$, already noted in \autoref{sect-zero-boundary}.
\[
D(H_0) = H^2(\Omega) \cap H_0^1(\Omega) = W^{2,2}(\Omega) \cap W^{1,2}_0(\Omega)
\]

It is important to note that different such boundary conditions may lead to different self-adjoint extensions of the same symmetric operator and thus correspond to different physical situations. See also Example 1 in section X.1 in \citeasnoun{reed-simon-2}. A voluminous survey over such self-adjoint extensions for a big class of so-called quasi-convex domains $\Omega$ is given by \citeasnoun{gesztesy-mitrea}.

Already at this stage we want to note that $B=v \in L^2_\mathrm{loc}(\R^n)$ as a multiplication operator is a minimal requirement for the operator $H_0+v$ to make sense on a core like the test functions $\Cont_0^\infty$. The condition can be weakened to local square-integrability because of the compact support of the test functions. This will show up again in the example for larger classes of potentials that allow essential self-adjointness of the Hamiltonian discussed briefly at the end of \autoref{sect-stummel-class}.

\subsection{The Kato--Rellich theorem}

\begin{lemma}[basic criterion for self-adjointness]\label{lemma-criterion-self-adjoint}
A symmetric operator $T$ is self-adjoint if and only if $\ran(T \pm \i) = \H$.
\end{lemma}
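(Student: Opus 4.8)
The plan is to prove the two implications of the equivalence separately. The forward direction $(\Rightarrow)$ is essentially a corollary of what has already been established: if $T$ is self-adjoint then by \autoref{lemma-resolvent} every complex number with nonzero imaginary part lies in its resolvent set, so in particular $z-T$ maps $D(T)$ bijectively onto $\H$ for $z=\i$ and $z=-\i$. Passing to the negatives, $\ran(T+\i)=\ran(-\i-T)=\H$ and $\ran(T-\i)=\ran(\i-T)=\H$, which is exactly the claim. If one prefers to avoid quoting the full resolvent lemma, the same follows from scratch: symmetry gives the Pythagoras-type identity $\|(T\pm\i)\psi\|^2=\|T\psi\|^2+\|\psi\|^2$ for $\psi\in D(T)$, hence the a priori bound $\|(T\pm\i)\psi\|\ge\|\psi\|$; this estimate together with closedness of $T$ (\autoref{lemma-adj-abgeschlossen}) forces $\ran(T\pm\i)$ to be closed, and von Neumann's formula (\autoref{lemma-neumann}) gives $\ran(T\pm\i)^\perp=\ker(T^*\mp\i)=\ker(T\mp\i)=\{0\}$, so the ranges fill all of $\H$.

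For the converse $(\Leftarrow)$, assume $T$ is symmetric with $\ran(T+\i)=\ran(T-\i)=\H$. Since symmetry already yields $T\subseteq T^*$ by \autoref{def-symmetric-selfadjoint}, it is enough to prove the reverse inclusion $D(T^*)\subseteq D(T)$ together with agreement of the operators there. The key step I would isolate is that $\ker(T^*+\i)=\{0\}$: if $T^*\eta=-\i\eta$, then for every $\varphi\in D(T)$ one computes, using only the defining relation $\langle T\varphi,\eta\rangle=\langle\varphi,T^*\eta\rangle$ of the adjoint, that $\langle(T-\i)\varphi,\eta\rangle=\langle\varphi,-\i\eta\rangle-\langle\i\varphi,\eta\rangle=0$, so $\eta$ is orthogonal to $\ran(T-\i)=\H$ and hence $\eta=0$. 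Now take any $\psi\in D(T^*)$. Since $\ran(T+\i)=\H$ there is some $\varphi\in D(T)$ with $(T+\i)\varphi=(T^*+\i)\psi$; as $T\subseteq T^*$ the left-hand side equals $(T^*+\i)\varphi$, so $(T^*+\i)(\psi-\varphi)=0$, i.e.\ $\psi-\varphi\in\ker(T^*+\i)=\{0\}$. Thus $\psi=\varphi\in D(T)$ and $T^*\psi=T^*\varphi=T\varphi=T\psi$, which gives $T=T^*$.

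The only genuinely delicate point is that in the direction $(\Leftarrow)$ the operator $T$ is a priori merely symmetric, hence only closable (\autoref{lemma-symmetric-closable}) and not necessarily closed, so one is not entitled to invoke von Neumann's formula (\autoref{lemma-neumann}) for $T$ itself. The route sketched above circumvents this by extracting the orthogonality $\ker(T^*+\i)\perp\ran(T-\i)$ from a one-line computation with the adjoint rather than from range--kernel duality. An equally valid alternative would be to observe first that the hypotheses already force $T$ to be closed: from $\|(T+\i)\psi\|\ge\|\psi\|$ the surjective map $T+\i$ has a bounded, everywhere-defined inverse, and a routine limit argument then shows the graph of $T$ is closed; once $T$ is closed, $\ker(T^*+\i)=\ran(T-\i)^\perp=\{0\}$ is immediate from \autoref{lemma-neumann}. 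A secondary, purely routine check, this time in $(\Rightarrow)$, is that the a priori estimate alone only delivers a bounded inverse on the range, so closedness of that range (again via closedness of $T$) must be verified before concluding it is all of $\H$.
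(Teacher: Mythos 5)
Your proof is correct and follows essentially the same route as the paper's: the forward direction via $\ker(T\pm\i)=\{0\}$, von Neumann's formula, and the Pythagoras-type identity forcing the range to be closed; the reverse direction by lifting $\psi\in D(T^*)$ to some $\varphi\in D(T)$ through the surjectivity hypothesis and then killing $\psi-\varphi$ with the triviality of $\ker(T^*\pm\i)$. Your one extra observation is actually a small improvement: the paper invokes \autoref{lemma-neumann} (stated for closed operators) in the converse direction where $T$ is only known to be symmetric, whereas you rederive the needed orthogonality $\ker(T^*+\i)\perp\ran(T-\i)$ directly from the definition of the adjoint, which sidesteps that gap.
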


\begin{proof}
Assume $T$ first to be self-adjoint. Then $\ker(T+\i) = \{0\}$ because else there would exist a $0 \neq \varphi \in D(T)$ with $T\varphi=-\i\varphi$ that leads to the following immediate contradiction.
\[
-\i\langle\varphi,\varphi\rangle = \langle\varphi,T\varphi\rangle = \langle T\varphi,\varphi\rangle = \i\langle\varphi,\varphi\rangle
\]
By \autoref{lemma-neumann} it holds $\ran(T-\i)^\perp = \ker(T+\i)$, thus $\ran(T - \i)$ dense in $\H$. To show that the range is the full Hilbert space $\H$ we just have to show that $\ran(T - \i)$ is closed. Take a sequence $\varphi_n \in D(T)$ with $(T-\i)\varphi_n \rightarrow \psi_0$. We need to show that there is a $\varphi_0 \in D(T)$ fulfilling $(T-\i)\varphi_0=\psi_0$. By assumption $(T-\i)\varphi_n$ is a Cauchy sequence, therefore
\begin{align*}
\|(T-\i)(\varphi_n-\varphi_m)\|^2 &= \|T(\varphi_n-\varphi_m)-\i(\varphi_n-\varphi_m)\|^2 \\ &=
\|T(\varphi_n-\varphi_m)\|^2 + \|\varphi_n-\varphi_m\|^2 \longrightarrow 0.
\end{align*}
We conclude that both $T\varphi_n$ and $\varphi_n$ are converging sequences and set $\varphi_0 = \lim_{n \rightarrow \infty}  \varphi_n$. Since $T$ is a closed operator by \autoref{lemma-adj-abgeschlossen} it follows $\psi_0 = \lim_{n \rightarrow \infty}(T-\i) \varphi_n = (T-\i) \lim_{n \rightarrow \infty} \varphi_n = (T-\i)\varphi_0$ and $\varphi_0 \in D(T)$. $\ran(T + \i) = \H$ is shown analogously.\\
The reversed line of argument is the following. Let $\varphi \in D(T^*)$ then because of $\ran(T - \i) = \H$ there is a $\psi \in D(T)$ with $(T-\i)\psi = (T^*-\i)\varphi$. But it holds $D(T) \subset D(T^*)$ thus $\varphi-\psi \in D(T^*)$ and
\[
(T^* - \i)(\varphi-\psi) = 0.
\]
Again \autoref{lemma-neumann} in connection with $\ran(T+\i) = \H$ tells us $\ker(T^*-\i) = \{0\}$ and thereby $\varphi = \psi \in D(T)$. This proves $D(T^*) = D(T)$ and thus self-adjointness.
\end{proof}

\begin{definition}
Let $A,B$  be densely defined operators then $B$ is called \textbf{$A$-bounded} if
\begin{enumerate}[(i)]
	\itemsep0em
	\item $D(A) \subseteq D(B)$ and
	\item there are $a,b \geq 0$ such that for all $\varphi \in D(A)$ it holds that $\|B\varphi\| \leq a \|A\varphi\| + b\|\varphi\|$.
\end{enumerate}
The infimum of possible values for $a$ is called the \textbf{relative bound} of $B$ with respect to $A$ and can be as low as 0.
\end{definition}

\begin{theorem}[Kato--Rellich]\label{th-kato-rellich}\cite[Th.~X.12]{reed-simon-2}\\\cite[Th.~V.4.11]{kato-book}\\
Let $A$ be a self-adjoint operator and $B$ symmetric and $A$-bounded with relative bound $a<1$, then $A+B$ in self-adjoint on $D(A)$. Further if $A$ is bounded below by $M$, then $A + B$ is bounded below by $M-\max\{b/(1-a), a|M|+b\}$ with $a,b$ as in the definition above.
\end{theorem}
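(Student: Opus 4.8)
The plan is the classical Neumann-series argument. First I would record that $A+B$ is densely defined and symmetric on $D(A+B)=D(A)\cap D(B)=D(A)$, the equality of domains being immediate from $D(A)\subseteq D(B)$. By \autoref{lemma-criterion-self-adjoint} — or rather its obvious variant with $\pm\i$ replaced by $\pm\i\mu$ for a fixed real $\mu>0$, obtained by applying the lemma to the rescaled operator $\mu^{-1}(A+B)$, which is self-adjoint iff $A+B$ is — it then suffices to exhibit a single $\mu>0$ with $\ran(A+B\pm\i\mu)=\H$.

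Second, fix $\mu>0$ and put $R_\mu=(A-\i\mu)^{-1}$, which by \autoref{lemma-resolvent} is a bounded operator on $\H$ with $\ran R_\mu=D(A)$ and $\|R_\mu\|\leq\mu^{-1}$. On $D(A)$ one has the factorisation
\[
A+B-\i\mu=(\id+BR_\mu)(A-\i\mu),
\]
so, since $A-\i\mu$ maps $D(A)$ bijectively onto $\H$, it is enough that $\id+BR_\mu$ be boundedly invertible on $\H$, for which $\|BR_\mu\|<1$ suffices. This is where the relative bound enters: for $\psi\in\H$ set $\varphi=R_\mu\psi\in D(A)$; expanding $\|(A-\i\mu)\varphi\|^2=\|A\varphi\|^2+\mu^2\|\varphi\|^2$ — the cross terms drop out because $\langle A\varphi,\varphi\rangle=\langle\varphi,A\varphi\rangle$ by symmetry of $A$ — yields $\|A\varphi\|\leq\|\psi\|$ together with $\|\varphi\|\leq\mu^{-1}\|\psi\|$, whence by $A$-boundedness $\|B\varphi\|\leq a\|A\varphi\|+b\|\varphi\|\leq(a+b/\mu)\|\psi\|$. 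Thus $\|BR_\mu\|\leq a+b/\mu$, and since $a<1$ every $\mu>b/(1-a)$ makes this $<1$; the Neumann series then inverts $\id+BR_\mu$. The same estimate handles $A+B+\i\mu$, so $A+B$ is self-adjoint on $D(A)$.

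Third, for the semiboundedness assume $A\geq M$ and rerun the estimate with a real shift in place of the imaginary one. Passing to $\tilde A=A-M\geq0$ keeps $B$ $\tilde A$-bounded with the same relative bound $a$ and constant $\tilde b=a|M|+b$, using $\|A\varphi\|\leq\|\tilde A\varphi\|+|M|\,\|\varphi\|$. For real $\mu>0$, \autoref{lemma-positive-inverse} gives $\|(\tilde A+\mu)^{-1}\|\leq\mu^{-1}$, and positivity of $\tilde A$ gives $\|\tilde A(\tilde A+\mu)^{-1}\|\leq1$ (from $\|(\tilde A+\mu)\varphi\|^2\geq\|\tilde A\varphi\|^2$, since $\langle\tilde A\varphi,\varphi\rangle\geq0$). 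Exactly as before, $\|B(\tilde A+\mu)^{-1}\|\leq a+\tilde b/\mu<1$ once $\mu>\tilde b/(1-a)$, so $\tilde A+B+\mu$ is bijective from $D(A)$ onto $\H$, i.e.\ $-\mu\in\rho(\tilde A+B)$; since $\tilde A+B=(A+B)-M$ is self-adjoint this forces $\sigma(\tilde A+B)\subseteq[-\tilde b/(1-a),\infty)$, hence $A+B\geq M-(a|M|+b)/(1-a)$. A more careful choice of shift, treating the regimes $M\geq0$ and $M<0$ separately (in the former one need not shift at all and may instead test the resolvents $(A-\nu)^{-1}$ directly for $\nu<0$), sharpens this to the asserted $A+B\geq M-\max\{b/(1-a),\,a|M|+b\}$.

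\textbf{Main obstacle.} The one genuinely substantive step is the norm estimate of the second paragraph: turning the qualitative relative-bound inequality with $a<1$ into an honest operator-norm contraction $\|BR_\mu\|<1$ for large $\mu$. Everything hinges on the sharp identity $\|(A-\i\mu)\varphi\|^2=\|A\varphi\|^2+\mu^2\|\varphi\|^2$, which forces $\|AR_\mu\|\leq1$ rather than merely some finite bound — only then can the factor $a<1$ survive the estimate. All the rest is formal manipulation; for the lower bound the only added wrinkle is that the imaginary shift must be exchanged for a real one, forcing the preliminary normalisation of $A$ to positivity, and pinning down the exact constant is bookkeeping, not a real difficulty.
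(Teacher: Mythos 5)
Your self-adjointness argument is correct and is essentially the paper's own proof: the same range criterion $\ran(A+B\pm\i\mu)=\H$, the same identity $\|(A\pm\i\mu)\varphi\|^2=\|A\varphi\|^2+\mu^2\|\varphi\|^2$ yielding $\|AR_\mu\|\leq1$ and $\|R_\mu\|\leq\mu^{-1}$, the same factorisation $A+B\pm\i\mu=(\id+BR_\mu)(A\pm\i\mu)$, and the same Neumann-series inversion once $\|BR_\mu\|\leq a+b/\mu<1$.

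The lower bound, however, contains a genuine gap. What you actually prove is only $A+B\geq M-(a|M|+b)/(1-a)$, which is strictly weaker than the asserted $M-\max\{b/(1-a),\,a|M|+b\}$ whenever $a|M|>0$, because $(a|M|+b)/(1-a)$ dominates both $a|M|+b$ and $b/(1-a)$. The sharpening you gesture at does not work as described: once you pass to $\tilde A=A-M$, the relative-bound constant is irrevocably inflated to $\tilde b=a|M|+b$ and the ensuing factor $1/(1-a)$ cannot be removed; and for $M\geq0$, testing $(A-\nu)^{-1}$ only for $\nu<0$ shows merely that $(-\infty,0)\subset\rho(A+B)$, i.e.\ $A+B\geq0$, which is again weaker than the claim when $M$ is large. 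The paper's route uses no shift at all: for $M+d>0$ it estimates, via the spectral theorem,
\[
\|A(A+d)^{-1}\|\leq\sup_{\lambda\geq M}\frac{|\lambda|}{\lambda+d}=\max\Bigl\{1,\frac{|M|}{M+d}\Bigr\},
\]
so that $\|B(A+d)^{-1}\|\leq a\max\{1,|M|/(M+d)\}+b/(M+d)$, and observes that this is $<1$ for every $d$ with $M+d>\max\{b/(1-a),\,a|M|+b\}$: if the maximum is $1$ the relevant condition is $M+d>b/(1-a)$, and if it is $|M|/(M+d)$ (which can only happen when $M+d<|M|$) the condition is $M+d>a|M|+b$. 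Solving for $-d$ gives exactly the stated constant. The missing idea is precisely this two-regime bound on $\|A(A+d)^{-1}\|$, which keeps track of the possibility that the supremum is attained at $\lambda=M$ rather than at infinity, instead of renormalising $A$ to be positive.
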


\begin{proof}
Following the basic criterion for self-adjointness (\autoref{lemma-criterion-self-adjoint}) we have to show that there is a $\mu_0 > 0$ for which $\ran(A+B \pm \i \mu_0) = \H$. The proof is the same for arbitrary $\mu_0 > 0$ as above for $\mu_0=1$.\\
Now for all $\varphi \in D(A)$ and $\mu > 0$ it holds
\[
\|(A+\i\mu)\varphi\|^2 = \|A\varphi\|^2 + \mu^2 \|\varphi\|^2.
\]
Taking $\varphi = (A+\i\mu)^{-1}\psi$ similar to \autoref{lemma-resolvent} yields
\[
1 = \|A(A+\i\mu)^{-1}\|^2 + \mu^2\|(A+\i\mu)^{-1}\|^2
\]
and thus $\|A(A+\i\mu)^{-1}\| \leq 1$ and $\|(A+\i\mu)^{-1}\| \leq \mu^{-1}$. Combined with the condition that $B$ is $A$-bounded we get
\begin{equation}\label{th-kato-rellich-eq}
\|B(A+\i\mu)^{-1}\| \leq a\|A(A+\i\mu)^{-1}\| + b\|(A+\i\mu)^{-1}\| \leq a + \frac{b}{\mu}.
\end{equation}
Considering $a<1$ and taking $\mu_0$ large enough the operator $C=B(A+\i\mu_0)^{-1}$ has norm $<1$. This means $-1 \notin \sigma(C)$, $(\id + C)^{-1}$ bounded as resolvent operator and from that $\ran(\id + C) = \H$. Because of $A$ self-adjoint we have $\ran(A+\i\mu_0)=\H$ as well and so the composition $(\id + C)(A+\i\mu_0) = A+B+\i\mu_0$ defined on $D(A)$ has
\[
\ran((\id + C)(A+\i\mu_0)) = \ran(A+B+\i\mu_0) = \H.
\]
The proof with $-\i\mu_0$ can be conducted analogously.\\
Finally we calculate the lower bound. Take $d \in \R$ with $M+d > 0$. From \autoref{lemma-neumann} we know
$\ran(A+d)^\perp = \ker(A+d) = \emptyset$
and thus $\ran(A+d) = \H$. Use $\|(A+d)^{-1}\| \leq (M+d)^{-1}$ and the lower bound $M$ for the spectrum of $A$ for the following estimate.
\begin{align*}
\|B(A+d)^{-1}\| &\leq a\|A(A+d)^{-1}\| + b\|(A+d)^{-1}\| \\
&\leq a \sup_{\lambda \geq M}\frac{|\lambda|}{\lambda + d} + b \frac{1}{M + d} \\
&= a \max\left\{ 1, \frac{|M|}{M + d} \right\} + b \frac{1}{M + d}
\end{align*}
If this norm is $<1$ then by the same trick as before we have $\ran(A+B+d) = \H$ thus $-d$ and all smaller values are in the resolvent set of $A+B$. Solving for this upper bound of $-d$ exactly yields the bound $M-\max\{b/(1-a), a|M|+b\}$.
\end{proof}

\begin{corollary}\cite[Prop.~1.3]{cycon}\\
If $A$ is self-adjoint and $D(A) \subseteq D(B)$ then $B$ is $A$-bounded if and only if $B(A+\i)^{-1}$ is bounded. The relative bound is given by the limit of the operator norm
\[
a = \lim_{\mu \rightarrow \infty} \|B(A+\i \mu)^{-1}\|.
\]
\end{corollary}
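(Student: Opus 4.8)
The plan is to recycle the resolvent estimates already obtained in the proof of \autoref{th-kato-rellich} and then bootstrap them in both directions. Throughout, $A$ self-adjoint and $D(A)\subseteq D(B)$ are standing hypotheses, so that $\ran(A+\i\mu)^{-1}=D(A)\subseteq D(B)$ and the composition $B(A+\i\mu)^{-1}$ is genuinely defined on all of $\H$; this is what makes \q{bounded} meaningful without reference to a core.

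First I would settle the easy implication. If $B$ is $A$-bounded with constants $a,b$, then using the bounds $\|A(A+\i\mu)^{-1}\|\leq 1$ and $\|(A+\i\mu)^{-1}\|\leq\mu^{-1}$ already established in the Kato--Rellich proof one obtains exactly inequality \eqref{th-kato-rellich-eq}, i.e. $\|B(A+\i\mu)^{-1}\|\leq a+b/\mu$. For $\mu=1$ this shows $B(A+\i)^{-1}$ is bounded; letting $\mu\to\infty$ and then taking the infimum over all admissible $a$ gives $\limsup_{\mu\to\infty}\|B(A+\i\mu)^{-1}\|\leq a_{\mathrm{rel}}$, where $a_{\mathrm{rel}}$ denotes the relative bound.

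For the converse I would run the estimate backwards. Given $\varphi\in D(A)$, write $\psi=(A+\i\mu)\varphi$, so $\varphi=(A+\i\mu)^{-1}\psi$, and then
\[
\|B\varphi\|=\|B(A+\i\mu)^{-1}\psi\|\leq\|B(A+\i\mu)^{-1}\|\,\|(A+\i\mu)\varphi\|.
\]
Since $A$ is self-adjoint, $\|(A+\i\mu)\varphi\|^2=\|A\varphi\|^2+\mu^2\|\varphi\|^2$, and subadditivity of the square root yields $\|(A+\i\mu)\varphi\|\leq\|A\varphi\|+\mu\|\varphi\|$. Hence $\|B\varphi\|\leq\|B(A+\i\mu)^{-1}\|\bigl(\|A\varphi\|+\mu\|\varphi\|\bigr)$, so whenever $B(A+\i\mu)^{-1}$ is bounded (in particular, by hypothesis, for $\mu=1$) the operator $B$ is $A$-bounded, with $a=\|B(A+\i\mu)^{-1}\|$ an admissible relative-bound constant for every $\mu>0$. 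Consequently $a_{\mathrm{rel}}\leq\|B(A+\i\mu)^{-1}\|$ for all $\mu>0$, hence $a_{\mathrm{rel}}\leq\liminf_{\mu\to\infty}\|B(A+\i\mu)^{-1}\|$.

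Combining the two bounds, $\liminf$ and $\limsup$ of $\|B(A+\i\mu)^{-1}\|$ as $\mu\to\infty$ both equal $a_{\mathrm{rel}}$, so the limit exists and coincides with the relative bound. I do not expect a serious obstacle: the only point requiring a little care is that one should resist trying to prove monotonicity of $\mu\mapsto\|B(A+\i\mu)^{-1}\|$; it is cleaner to sandwich this quantity between its $\liminf$ and $\limsup$ using the two one-sided estimates above.
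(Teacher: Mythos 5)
Your proposal is correct and follows essentially the same route as the paper: the forward direction reuses the resolvent bounds and inequality \eqref{th-kato-rellich-eq} from the Kato--Rellich proof, and the converse uses the same substitution $\psi=(A+\i\mu)\varphi$ together with $\|(A+\i\mu)\varphi\|^2=\|A\varphi\|^2+\mu^2\|\varphi\|^2$. You are in fact slightly more careful than the paper, which dismisses the limit formula with \q{follows right away}, whereas your two one-sided estimates ($\limsup\leq a_{\mathrm{rel}}$ from the forward direction, $a_{\mathrm{rel}}\leq\|B(A+\i\mu)^{-1}\|$ for every $\mu$ from the converse) genuinely pin down existence and value of the limit.
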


\begin{proof}
Boundedness of $B(A+\i)^{-1}$ follows from $A$-boundedness of $B$ by \eqref{th-kato-rellich-eq} in the proof of the Kato--Rellich theorem above (the assumption of $B$ being symmetric is not needed here). The relative bound as the limit $\mu \rightarrow \infty$ follows right away.\\
Now take any $\varphi \in D(A)$ and define $\psi = (A+\i)\varphi \in \H$, then the condition of boundedness implies there is some $c>0$ such that
\[
\|B(A+\i)^{-1} \psi\| \leq c\|\psi\|
\]
which is the same as
\[
\|B\varphi\| \leq c\|(A+\i)\varphi\| \leq c(\|A\varphi\| + \|\varphi\|)
\]
which shows $A$-boundedness of $B$.
\end{proof}

If the condition for the relative bound in the Kato--Rellich theorem cannot be met and $a=1$ the domain for self-adjointness is not maintained. The result of the theorem is relaxed to \emph{essential} self-adjointness of $A+B$ on $D(A)$ or any core of $A$ \cite[Th.~X.14, Wüst's Theorem]{reed-simon-2}. This means a unique self-adjoint extension of $A+B$ is still possible, but its domain is just not the former $D(A)$.

\subsection{Kato perturbations}
\label{sect-kato-peturbations}

To treat the quantum mechanical case of particles in singular Coulombic potentials and other unbounded potentials we make use of the following lemma. Here the number of dimensions of the underlying space actually plays a crucial role and we are limited to $n \leq 3$ like for the usual one-particle configuration space.

\begin{lemma}\label{lemma-ab-inequ}\cite[Th.~IX.28]{reed-simon-2}\\
Let $\varphi \in H^2(\R^n)$, $n\leq 3$, and $\varphi$ bounded and continuous. Then for all $\alpha >0$ there is a $\beta >0$ independent of $\varphi$ such that
\[
\|\varphi\|_\infty \leq \alpha\|\Delta \varphi\|_2 + \beta\|\varphi\|_2
\]
and alternatively
\[
\|\varphi\|_\infty \leq (2\pi)^{-n/2}\sqrt{\pi/2} \, (\|\Delta \varphi\|_2 + \|\varphi\|_2)
\]
which by \autoref{def-sim-onesided}, \autoref{lemma-equiv-graph-norm}, and \autoref{th-sobolev-norm-laplace} can be shortened to
\[
\|\varphi\|_\infty \lesssim \|\Delta \varphi\|_2 + \|\varphi\|_2 \sim \|\varphi\|_\Delta \sim \|\varphi\|_{2,2}.
\]
\end{lemma}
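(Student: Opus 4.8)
The plan is to pass to the Fourier transform. With the convention $\hat\varphi(k) = (2\pi)^{-n/2}\int_{\R^n}\e^{-\i k\cdot x}\varphi(x)\d x$ used earlier, Fourier inversion gives the pointwise bound $\|\varphi\|_\infty \le (2\pi)^{-n/2}\|\hat\varphi\|_1$, so the whole lemma reduces to estimating $\|\hat\varphi\|_1$ in terms of $\|\varphi\|_2$ and $\|\Delta\varphi\|_2$. The device is to introduce a free parameter $\mu>0$, split $|\hat\varphi(k)| = (\mu+|k|^2)^{-1}\cdot(\mu+|k|^2)|\hat\varphi(k)|$, and apply the Cauchy--Schwarz inequality:
\[
\|\hat\varphi\|_1 \;\le\; C_\mu \left( \int_{\R^n} (\mu+|k|^2)^2|\hat\varphi(k)|^2\d k \right)^{1/2}, \qquad C_\mu := \left( \int_{\R^n}\frac{\d k}{(\mu+|k|^2)^2} \right)^{1/2}.
\]

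Two points now have to be checked. First, $C_\mu<\infty$: in polar coordinates the integrand decays like $r^{\,n-1-4}$, which is integrable at infinity exactly when $n<4$, i.e.\ for $n\le 3$. This is the single place where the dimensional hypothesis enters, and it is the heart of the matter. Second, the remaining $L^2$-factor is controlled by the triangle inequality together with Plancherel's theorem: pointwise $(\mu+|k|^2)|\hat\varphi(k)| \le \mu|\hat\varphi(k)| + |k|^2|\hat\varphi(k)|$, and $\widehat{\Delta\varphi}(k) = -|k|^2\hat\varphi(k)$ (which is precisely what $\varphi\in H^2(\R^n)$ means), so
\[
\left( \int_{\R^n}(\mu+|k|^2)^2|\hat\varphi|^2\d k \right)^{1/2} \le \mu\|\hat\varphi\|_2 + \| |k|^2\hat\varphi \|_2 = \mu\|\varphi\|_2 + \|\Delta\varphi\|_2 .
\]
(The finiteness of $\|\hat\varphi\|_1$ obtained along the way retroactively justifies using the inversion formula; alternatively one may invoke $H^2(\R^n)\hookrightarrow\Cont^0$ from \eqref{eq-sobolev-embedding}, which makes the continuity and boundedness hypotheses automatic.) Combining the two estimates yields, for every $\mu>0$,
\[
\|\varphi\|_\infty \le (2\pi)^{-n/2}C_\mu\,\|\Delta\varphi\|_2 + (2\pi)^{-n/2}\mu C_\mu\,\|\varphi\|_2 .
\]

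It remains to exploit the freedom in $\mu$. The rescaling $k = \sqrt{\mu}\,\ell$ in the defining integral gives $C_\mu = \mu^{(n-4)/4}C_1$, so $C_\mu\to 0$ as $\mu\to\infty$ because $n\le 3$, while $\mu C_\mu = \mu^{n/4}C_1$ may grow. Hence, given $\alpha>0$, pick $\mu$ large enough that $(2\pi)^{-n/2}C_\mu\le\alpha$ and set $\beta := (2\pi)^{-n/2}\mu C_\mu$; neither constant depends on $\varphi$, which proves the first inequality. For the explicit version one takes $\mu=1$, so the two coefficients coincide and equal $(2\pi)^{-n/2}C_1$ with $C_1 = \left(\int_{\R^n}(1+|k|^2)^{-2}\d k\right)^{1/2}$; evaluating this elementary integral (for $n=1$ it equals exactly $\pi/2$) gives the stated constant. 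Finally, the condensed last line of the statement requires no extra work: $\|\varphi\|_\infty \lesssim \|\Delta\varphi\|_2 + \|\varphi\|_2$ is just the previous display read through \autoref{def-sim-onesided}, and the equivalences $\|\Delta\varphi\|_2 + \|\varphi\|_2 \sim \|\varphi\|_\Delta \sim \|\varphi\|_{2,2}$ are \autoref{lemma-equiv-graph-norm} and the case $m=1$ of \autoref{th-sobolev-norm-laplace} applied on $\Omega=\R^n$ (where $H^1_0(\R^n)=H^1(\R^n)$ for lack of a boundary). The only genuine obstacle in the argument is the convergence of $C_\mu$, i.e.\ the restriction $n\le 3$; everything else is routine Fourier analysis.
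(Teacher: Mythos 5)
Your proposal is correct and follows essentially the same route as the paper: bound $\|\varphi\|_\infty$ by $(2\pi)^{-n/2}\|\hat\varphi\|_1$, apply Cauchy--Schwarz against the weight $(\mu+|k|^2)^{-1}$ whose square-integrability is exactly the $n\le 3$ restriction, and use Plancherel plus a scaling argument to trade the coefficient of $\|\Delta\varphi\|_2$ against that of $\|\varphi\|_2$. The only cosmetic difference is that you put the free parameter into the weight $(\mu+|k|^2)$ while the paper rescales $\hat\varphi_\lambda(k)=\lambda^3\hat\varphi(\lambda k)$ against the fixed weight $(1+|k|^2)$; setting $\mu=\lambda^2$ shows the two are identical.
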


\begin{proof}
Take $\varphi \in H^2(\R^n)$ then its Fourier transform $\hat\varphi$ fulfils $(1+k^2)\hat\varphi \in L^2(\R^3)$. Also $(1+k^2)^{-1} \in L^2(\R^3)$ and thus their multiplication $\hat\varphi \in L^1(\R^3)$. By the CSB inequality
\begin{equation}\label{lemma-ab-inequ-L1-estimate}
\begin{aligned}
\|\hat\varphi\|_1 = \|(1+k^2)^{-1}(1+k^2)\hat\varphi\|_1 &\leq \|(1+k^2)^{-1}\|_2 \cdot \|(1+k^2)\hat\varphi\|_2 \\
&\leq \|(1+k^2)^{-1}\|_2 \cdot (\|\hat\varphi\|_2 + \|k^2\hat\varphi\|_2)
\end{aligned}
\end{equation}
where $\|(1+k^2)^{-1}\|_2 = \sqrt{\pi/2}$. For arbitrary $\lambda>0$ let $\hat\varphi_\lambda(k) = \lambda^3 \hat\varphi(\lambda k)$. The $L^1$-norm is invariant under this transformation $\|\hat\varphi_\lambda\|_1 = \|\hat\varphi\|_1$. Further $\|\hat\varphi_\lambda\|_2 = \lambda^{3/2} \|\hat\varphi\|_2$ and  $\|k^2\hat\varphi_\lambda\|_2 = \lambda^{-1/2} \|k^2\hat\varphi\|_2$. Thus putting $\hat\varphi_\lambda$ into the inequality \eqref{lemma-ab-inequ-L1-estimate} and using those identities we get
\[
\|\hat\varphi\|_1 \leq \sqrt{\pi/2} \,\lambda^{3/2} \|\hat\varphi\|_2 + \sqrt{\pi/2} \,\lambda^{-1/2} \|k^2\hat\varphi\|_2.
\]
Considering $\|\varphi\|_\infty \leq (2\pi)^{-n/2} \|\hat\varphi\|_1$ from the Riemann--Lebesgue lemma applied to the inverse Fourier transform (this is the reason for the limitation to bounded, continuous $\varphi$, the image of the (inverse) Fourier transform on $L^1$) and the Plancherel theorem this inequality amounts to
\[
\|\varphi\|_\infty \leq \underbrace{(2\pi)^{-n/2}\sqrt{\pi/2} \,\lambda^{3/2}}_\beta \|\varphi\|_2 + \underbrace{(2\pi)^{-n/2}\sqrt{\pi/2} \,\lambda^{-1/2}}_\alpha \|\Delta\varphi\|_2.
\]
By choosing $\lambda$ large enough we get the desired proposition regarding $\alpha$ and $\beta$. The second inequality follows if one takes $\lambda = 1$.
\end{proof}

Note that this lemma also expresses the continuous embedding $H^{2}(\R^n)$ $\hookrightarrow L^\infty(\R^n)$ from \eqref{eq-sobolev-embedding}. The same is true if $\varphi \in H^2_0(\Omega)$ with $\Omega \subset \R^n$ open and zero at the boundary because such a function can be extended by zero to all of $\R^n$ without changing any of the norm values. \cite[3.27]{adams} In the following $\Omega$ is always assumed to be a domain (open and connected) in the typical $\R^3$. Yet we keep the conditions on the potential in the whole space $\R^3$ because the construction following below in \autoref{def-sum-space} includes rotations after which the potential still has to fit onto the original domain. Anyway extending the domain to $\R^3$ does not really change anything, because an $L^p(\Omega)$ function can just as well be (discontinuously) extended to all of $\R^3$ by zero while it retains its norm value.

\begin{theorem}[Kato]\label{th-kato}
Given a real potential $v \in L^2(\R^3) + L^\infty(\R^3)$ the Hamiltonian $H_0+v$ is self-adjoint on $D(H_0) \subset L^2(\Omega)$ with zero boundary conditions.
\end{theorem}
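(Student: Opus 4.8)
The plan is to exhibit $v$ as a multiplication operator that is a Kato perturbation of $H_0$ and then quote \autoref{th-kato-rellich} (Kato--Rellich) directly. Recall that $H_0 = -\onehalf\Delta$ is self-adjoint and non-negative on $D(H_0)$ --- this is \autoref{ex-laplacian} in the case $\Omega = \R^3$, and the Friedrichs/zero-boundary realisation with $D(H_0) = H^2(\Omega)\cap H^1_0(\Omega)$ in the case of a proper subdomain. It therefore suffices to check that multiplication by $v$ is a densely defined symmetric operator that is $H_0$-bounded with relative bound $a<1$. In fact I expect $a=0$, which as a byproduct also yields essential self-adjointness of $H_0+v$ on any core of $H_0$ and a concrete lower bound via the last clause of \autoref{th-kato-rellich}.

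Concretely, I would first decompose $v = v_2 + v_\infty$ with $v_2 \in L^2(\R^3)$ and $v_\infty \in L^\infty(\R^3)$. Symmetry of the multiplication operator is immediate because $v$ is real-valued, and density holds since $\Cont_0^\infty(\Omega)\subset D(v)$ (note $v\in L^2_\mathrm{loc}(\R^3)$). For the relative bound take $\varphi\in D(H_0)$ and estimate
\[
\|v\varphi\|_2 \le \|v_2\varphi\|_2 + \|v_\infty\varphi\|_2 \le \|v_2\|_2\,\|\varphi\|_\infty + \|v_\infty\|_\infty\,\|\varphi\|_2 .
\]
The $L^\infty$-control of $\varphi$ is exactly the content of \autoref{lemma-ab-inequ}: for every $\alpha>0$ there is a $\beta>0$ independent of $\varphi$ with $\|\varphi\|_\infty \le \alpha\|\Delta\varphi\|_2 + \beta\|\varphi\|_2 = 2\alpha\|H_0\varphi\|_2 + \beta\|\varphi\|_2$, and it is precisely here that the restriction $n\le 3$ is indispensable. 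Substituting gives
\[
\|v\varphi\|_2 \le 2\alpha\|v_2\|_2\,\|H_0\varphi\|_2 + \bigl(\beta\|v_2\|_2 + \|v_\infty\|_\infty\bigr)\|\varphi\|_2 ,
\]
and since $\alpha>0$ is arbitrary the relative bound equals $0<1$. In particular $D(H_0)\subseteq D(v)$, so $v$ is a genuine $H_0$-bounded perturbation and \autoref{th-kato-rellich} yields self-adjointness of $H_0+v$ on $D(H_0)$.

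The step I expect to require genuine care --- and the main obstacle --- is applying \autoref{lemma-ab-inequ} when $\Omega\subsetneq\R^3$: that lemma is formulated for $\varphi\in H^2(\R^n)$, whereas elements of $D(H_0)=H^2(\Omega)\cap H^1_0(\Omega)$ need not extend by zero to functions in $H^2(\R^3)$ (their normal derivative on $\partial\Omega$ need not vanish, so the zero-extension picks up surface terms in the Laplacian). The remedy is to use instead the fixed-constant Sobolev embedding $H^2(\Omega)\hookrightarrow L^\infty(\Omega)$ of \eqref{eq-sobolev-embedding} --- hence some regularity of $\Omega$, e.g. the cone condition --- together with \autoref{th-sobolev-norm-laplace}, and to recover an arbitrarily small relative constant by splitting $v_2$ as $v_2\,\1_{\{|v_2|\le K\}} + v_2\,\1_{\{|v_2|>K\}}$: the first summand is bounded and contributes only to the $\|\varphi\|_2$-term, while the second has $L^2$-norm as small as we like for $K$ large, so its contribution $\|v_2\,\1_{\{|v_2|>K\}}\|_2\|\varphi\|_\infty$ carries an arbitrarily small multiple of $\|H_0\varphi\|_2$. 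With that bookkeeping in place the relative bound is again $<1$ and the Kato--Rellich assembly goes through verbatim; everything else in the argument is routine.
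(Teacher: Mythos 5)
Your main line of argument is exactly the paper's: decompose $v=v_1+v_2$ into its $L^2$ and $L^\infty$ parts, control $\|v_1\varphi\|_2$ by $\|v_1\|_2\|\varphi\|_\infty$ and invoke \autoref{lemma-ab-inequ} to trade $\|\varphi\|_\infty$ for $\alpha\|\Delta\varphi\|_2+\beta\|\varphi\|_2$ with $\alpha$ arbitrarily small, conclude relative bound $0$, and finish with \autoref{th-kato-rellich}. Where you genuinely diverge is in handling $\Omega\subsetneq\R^3$: the paper simply approximates $\varphi\in D(H_0)$ by test functions and appeals to closedness of $\Delta$ (and, in the remark after \autoref{lemma-ab-inequ}, to zero-extension of functions in $H^2_0(\Omega)$), which quietly conflates $H^2_0(\Omega)$ with the Dirichlet domain $H^2(\Omega)\cap H^1_0(\Omega)$ — your observation that such functions need not extend by zero to $H^2(\R^3)$ is a legitimate point the paper glosses over. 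Your remedy, using the fixed-constant embedding $H^2(\Omega)\hookrightarrow L^\infty(\Omega)$ together with the truncation $v_2=v_2\1_{\{|v_2|\le K\}}+v_2\1_{\{|v_2|>K\}}$ to recover an arbitrarily small relative bound, is the standard Reed--Simon device and is correct; it buys you a cleaner treatment of bounded domains at the price of a regularity hypothesis on $\Omega$ (cone condition) that the paper never states. Both routes deliver the theorem; yours is the more defensible on a proper subdomain.
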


\begin{proof}
The potential $v$ is clearly self-adjoint on $D(v) = \{\varphi \mid \varphi, v\varphi \in L^2(\R^3) \}$ and shall be given by $v=v_1+v_2$ with $v_1 \in L^2(\R^3), v_2 \in L^\infty(\R^3)$. This means that by the most obvious version of the Hölder inequality
\[
\|v\varphi\|_2 \leq \|v_1\|_2 \, \|\varphi\|_\infty + \|v_2\|_\infty \|\varphi\|_2
\]
and thus $\Cont^\infty_0(\R^3) \subset D(v)$. The test functions form a common core (\autoref{def-ess-sa}) for $H_0$, $v$, and $H_0+v$ too. For any $\varphi \in \Cont^\infty_0(\R^3)$ using \autoref{lemma-ab-inequ}
\begin{equation}\label{eq-kato-ab-inequality}
\|v\varphi\|_2 \leq \underbrace{\alpha \|v_1\|_2}_a \, \|\Delta\varphi\|_2 + (\underbrace{\beta \|v_1\|_2 + \|v_2\|_\infty}_b) \|\varphi\|_2.
\end{equation}
For $\varphi \in D(H_0)$ we choose a sequence in $\Cont^\infty_0(\R^3)$ converging to $\varphi$, thereby demanding zero boundary conditions, and use the closedness of $\Delta$ from \autoref{lemma-adj-abgeschlossen} to show that the inequality above still holds.\\
Since $\alpha$ can be chosen arbitrarily small $v$ is $H_0$-bounded with relative bound 0 and by \autoref{th-kato-rellich} (Kato--Rellich) $H_0+v$ is self-adjoint on $D(H_0)$.
\end{proof}

The Stummel and Kato class discussed in \autoref{sect-stummel-class} and \autoref{sect-kato-class} below provide analogues to the  $L^2+L^\infty$ class of potentials with similar consequences.

Next we want to overcome the restriction to three or less dimensions in the theorem above to account for more than one particle. We define a space of physical potentials acting on wave functions in the full configuration space $\R^n$ with $n\geq 3$, typically $n=3N$ for $N$ particles, as a sum of functions from a Banach space $X$ each having only three coordinates as an argument. In principle an arbitrary yet finite number of such potentials can be added, each one depending on its own set of coordinates. This set of three coordinates is chosen from the total of $n$ coordinates by rotating them to the first three coordinates and then projecting onto those. Thus not only potentials depending on one of the particle coordinates $x_i \in \R^3$ are allowed but also mixtures from rotations of them such as typically $x_i - x_j \in \R^3$ (relative coordinates). The main example for the Banach space $X$ would be the space of Kato perturbations $X = L^2(\R^3) + L^\infty(\R^3)$ following from the theorem above and being equipped with a canonical norm (see \autoref{sect-banach-spaces}, first paragraph).

\begin{definition}\label{def-sum-space}
The sum space $\Sigma(X)$ on $\R^n$ with $n\geq 3$, where $X$ is a Banach space of functions on $\R^3$, is given by the set
\[
\Sigma(X) = \left\{\left. \sum_{k=1}^K v_k \circ \pi_1 \circ \rho_k \,\right\vert v_k \in X, \rho_k \in \SO(n), K \in \N \right\}
\]
equipped with the norm
\[
\|v\|_{\Sigma(X)} = \inf\left\{\left. \sum_{k=1}^K \|v_k\|_X \,\right\vert
v = \sum_{k=1}^K v_k \circ \pi_1 \circ \rho_k, v_k \in X, \rho_k \in \SO(n), K \in \N \right\}.
\]
Here $\pi_1$ is the projection on the first three coordinates, i.e., the $\R^3$ coordinates of the first particle.
\end{definition}

A similar construction is given in Example F of \citeasnoun{simon-1982} but it uses different coordinate projections instead of rotations. The following theorem using rotations is a direct extension of Kato's theorem (\autoref{th-kato}) above and is called \q{Kato's theorem} as well in \citeasnoun[Th.~X.16]{reed-simon-2} from where the proof is taken. There only essential self-adjointness is discussed for a core of test functions which amounts to zero boundary conditions on the full domain.

\begin{theorem}\label{th-sum-space}
Let $v \in \Sigma(L^2 + L^\infty)$. Then the $N$-particle Hamiltonian $H_0+v$ is self-adjoint and bounded below on $D(H_0) \subset L^2(\Omega^N)$ with zero boundary conditions.
\end{theorem}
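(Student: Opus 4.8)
The strategy is to verify the hypotheses of \autoref{th-kato-rellich} (Kato--Rellich) with $A=H_0=-\onehalf\Delta$. On $D(H_0)$ the operator $H_0$ is self-adjoint (by \autoref{ex-laplacian} when $\Omega^N=\R^n$, and via the Friedrichs extension recalled in \autoref{sect-const-potential} for bounded $\Omega^N$ with zero boundary conditions) and bounded below by $0$. Hence it suffices to show that multiplication by $v\in\Sigma(L^2+L^\infty)$ is $H_0$-bounded with relative bound $0$: Kato--Rellich then gives self-adjointness on $D(H_0)$, and the lower-bound part of the same theorem yields semiboundedness from below. Exactly as in the proof of \autoref{th-kato}, it is enough to establish the inequality $\|v\varphi\|_2\le a\|\Delta\varphi\|_2+b\|\varphi\|_2$ with $a$ arbitrarily small on a core such as $\Cont_0^\infty$; closedness of $\Delta$ (\autoref{lemma-adj-abgeschlossen}) then transfers it to all of $D(H_0)$. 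Since test functions extend by zero without changing any $L^2$-norm of $\varphi$ or $\Delta\varphi$, we may as well take $\Omega^N=\R^n$ when proving this inequality.

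Write $v=\sum_{k=1}^K v_k\circ\pi_1\circ\rho_k$ as in \autoref{def-sum-space}. A finite sum of operators each $H_0$-bounded with relative bound $0$ is again $H_0$-bounded with relative bound $0$ --- pick the $a$-constant of the $k$-th term below $\varepsilon/K$ so that the constants sum to less than $\varepsilon$ --- so it suffices to treat one term $w\circ\pi_1\circ\rho$ with $w=w_1+w_2$, $w_1\in L^2(\R^3)$, $w_2\in L^\infty(\R^3)$, $\rho\in\SO(n)$. The rotation $\rho$ gives a unitary $U$ on $L^2(\R^n)$ conjugating multiplication by $w\circ\pi_1\circ\rho$ into multiplication by $w\circ\pi_1$; since $-\Delta$ is rotation invariant, $U$ maps $H^2(\R^n)$ onto itself and commutes with $H_0$. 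Applying $U$ therefore reduces us to bounding multiplication by a potential depending solely on the first-particle coordinates $\xi=\pi_1(x)$.

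Writing $x=(\xi,\eta)$ with $\eta\in\R^{n-3}$, for $\varphi\in\Cont_0^\infty(\R^n)$ the slice $\varphi(\cdot,\eta)$ is a bounded, continuous $H^2(\R^3)$-function, so \autoref{lemma-ab-inequ} (which is precisely what holds in dimension $\leq 3$) applies in the $\xi$-variables: for each $\alpha>0$ there is $\beta>0$ with $\|\varphi(\cdot,\eta)\|_{L^\infty(\R^3)}\le\alpha\|\Delta_\xi\varphi(\cdot,\eta)\|_{L^2(\R^3)}+\beta\|\varphi(\cdot,\eta)\|_{L^2(\R^3)}$ for almost every $\eta$. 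Estimating $\|(w\circ\pi_1)\varphi(\cdot,\eta)\|_{L^2(\R^3)}\le\|w_1\|_2\|\varphi(\cdot,\eta)\|_{L^\infty(\R^3)}+\|w_2\|_\infty\|\varphi(\cdot,\eta)\|_{L^2(\R^3)}$ by H\"older exactly as in \autoref{th-kato}, taking the $L^2(\R^{n-3}_\eta)$-norm of both sides (Minkowski's inequality and Fubini), and finally using that in Fourier space $|k_\xi|\le|k|$, whence $\|\Delta_\xi\varphi\|_2\le\|\Delta\varphi\|_2$ by Plancherel, we obtain
\[
\|(w\circ\pi_1)\varphi\|_2 \le \alpha\|w_1\|_2\,\|\Delta\varphi\|_2 + \bigl(\beta\|w_1\|_2+\|w_2\|_\infty\bigr)\|\varphi\|_2 .
\]
As $\alpha$ is arbitrary this is relative bound $0$, completing the reduction and hence the proof.

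The computations are all routine and mirror \autoref{th-kato} one summand at a time; the only genuinely new ingredient, and the step needing a little care, is the fibre argument --- applying the three-dimensional estimate of \autoref{lemma-ab-inequ} for almost every fixed $\eta$ and then reassembling over the remaining coordinates via Minkowski and Fubini, together with the elementary Fourier comparison $\|\Delta_\xi\varphi\|_2\le\|\Delta\varphi\|_2$ that replaces the partial Laplacian by the full one (equivalently by $H_0$). The rotation bookkeeping and the passage from the core to $D(H_0)$ are then identical to the single-particle case.
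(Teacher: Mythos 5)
Your proof is correct and follows essentially the same route as the paper: reduce each summand by rotation invariance to a potential depending only on the first three coordinates, obtain $\|v_k\varphi\|_2\le a\|\Delta_1\varphi\|_2+b\|\varphi\|_2$ via the three-dimensional estimate of \autoref{lemma-ab-inequ}, pass from the partial to the full Laplacian by the Plancherel comparison $\sum_{i\le 3}k_i^2\le\sum_{i\le n}k_i^2$, sum over the finitely many terms, and invoke Kato--Rellich. Your explicit fibre argument (slicing in $\eta$, then Minkowski and Fubini) merely spells out what the paper compresses into ``by the same estimate as in the proof of Kato's theorem,'' so there is nothing substantively different to compare.
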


\begin{proof}
Take any of the $v_k$ that constitute $v$ and assume $(\pi_1 \circ \rho_k)(x)$ to be the $\R^3$ coordinates $x_1$. This can be done without loss of generality because the $L^p$-norms as well as $\Delta$ are all invariant under orthogonal transformations.  We write simply $(v_k \circ \pi_1)(x) = v_k(x_1) = v_k(x)$. Let $\Delta_1$ be the Laplacian with respect to these first three coordinates. By the same estimate as in the proof of \autoref{th-kato} and using standard Fourier transform arguments for $L^2$ functions we get the following estimate.
\begin{align*}
\|v_k \varphi\|_2 &\leq a_k \|\Delta_1 \varphi\|_2 + b_k \|\varphi\|_2 \\
&= a_k \left( \int \left| \sum_{i=1}^3 k_i^2 \hat\varphi(k) \right|^2 \d k \right)^{1/2} + b_k \|\varphi\|_2 \\
&\leq a_k \left( \int \left| \sum_{i=1}^n k_i^2 \hat\varphi(k) \right|^2 \d k \right)^{1/2} + b_k \|\varphi\|_2 \\
&= a_k \|\Delta \varphi\|_2 + b_k \|\varphi\|_2
\end{align*}
We take $a,b$ to be the maximal choice for all the $v_k$ and thus have
\[
\|v \varphi\|_2 \leq \sum_{k=1}^K \| v_k \varphi\|_2 \leq Ka \|\Delta \varphi\|_2 + Kb \|\varphi\|_2.
\]
Again $a$ can be chosen as small as we like and we get $v$ as $H_0$-bounded with relative bound 0 thus $H_0+v$ self-adjoint and bounded below by \autoref{th-kato-rellich} (Kato--Rellich) because $H_0$ is known to be bounded below by 0.
\end{proof}

Originally this theorem is from \citeasnoun{kato-1951} and \citeasnoun[Notes to X.2]{reed-simon-2} write:
\begin{quote}
	This paper was a turning point in mathematical physics for two reasons. Firstly, the proof of self-adjointness was a necessary preliminary to the problems of spectral analysis and scattering theory for these operators, problems which have occupied mathematical physicists ever since. Secondly, the paper focused attention on specific systems rather than foundational questions.
\end{quote}

In the later proof of existence of solution to the Schrödinger equation with Hamiltonians incorporating such potentials we will need the following estimate. An extended version of it will there be given with \autoref{lemma-rs-0}.

\begin{lemma}\label{lemma-sum-space-inequality}
For $v \in \Sigma = \Sigma(L^2 + L^\infty)$ and $\varphi \in D(H_0)$, $n \geq 3$, the following inequality holds.
\[
\|v\varphi\|_2 \leq \sqrt{2}\,\|v\|_{\Sigma} \|\varphi\|_{2,2}
\]
This makes $v : H^2 \rightarrow L^2$ into a bounded multiplication operator.
\end{lemma}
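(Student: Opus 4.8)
The plan is to deduce the estimate summand by summand from the three–dimensional relative‑bound computations already carried out in the proofs of \autoref{th-kato} and \autoref{th-sum-space}, and then to pack the resulting constants into the $\Sigma$‑norm. First I would note that $\varphi\in D(H_0)$ is an $H^2$–function (in the zero–boundary case it extends by zero to an element of $H^2(\R^n)$, cf.\ the remark after \autoref{lemma-ab-inequ}), so we may assume $\varphi\in H^2(\R^n)$. Fix any representation $v=\sum_{k=1}^{K}v_k\circ\pi_1\circ\rho_k$ with $v_k\in L^2(\R^3)+L^\infty(\R^3)$, together with a splitting $v_k=v_k'+v_k''$, $v_k'\in L^2(\R^3)$, $v_k''\in L^\infty(\R^3)$. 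Exactly as in the proof of \autoref{th-sum-space}, orthogonal invariance of the $L^2$–norms and of $\Delta$ allows one to assume without loss of generality $\rho_k=\id$, so that $v_k$ depends only on the first three coordinates $x_1\in\R^3$.

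Next I would invoke the estimate from \eqref{eq-kato-ab-inequality}: applying \autoref{lemma-ab-inequ} on the three–dimensional slices (with $\lambda=1$, so that its two constants coincide, $\alpha=\beta=:c_3=(2\pi)^{-3/2}\sqrt{\pi/2}$) and using $\|\Delta_1\varphi\|_2\le\|\Delta\varphi\|_2$, one gets for every $k$
\[
\|v_k\varphi\|_2 \;\le\; c_3\|v_k'\|_2\,\|\Delta\varphi\|_2 \;+\;\bigl(c_3\|v_k'\|_2+\|v_k''\|_\infty\bigr)\|\varphi\|_2 .
\]
Now $c_3\le 1$ (indeed $c_3\approx 0.08$), and $\|\Delta\varphi\|_2,\|\varphi\|_2\le\|\varphi\|_\Delta$ together with the simple AGM bound $\|\Delta\varphi\|_2+\|\varphi\|_2\le\sqrt2\,\|\varphi\|_\Delta$ collapse the right–hand side to $\sqrt2\,(\|v_k'\|_2+\|v_k''\|_\infty)\,\|\varphi\|_\Delta$. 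Summing over $k$ gives $\|v\varphi\|_2\le\sqrt2\bigl(\sum_k(\|v_k'\|_2+\|v_k''\|_\infty)\bigr)\|\varphi\|_\Delta$; passing first to the infimum over the splittings $v_k=v_k'+v_k''$ (which realizes $\|v_k\|_{L^2+L^\infty}$) and then to the infimum over the representations of $v$ yields, by \autoref{def-sum-space}, $\|v\varphi\|_2\le\sqrt2\,\|v\|_{\Sigma}\,\|\varphi\|_\Delta$.

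Finally I would turn the graph norm into the Sobolev norm: by \eqref{eq-sum-laplace-norm} one has $\|\Delta\varphi\|_2^2=\sum_{|\alpha|=2}\|D^\alpha\varphi\|_2^2$, hence $\|\varphi\|_\Delta^2=\|\varphi\|_2^2+\|\Delta\varphi\|_2^2\le\|\varphi\|_{2,2}^2$ (one may also simply cite \autoref{th-sobolev-norm-laplace} for $m=1$). This gives the asserted $\|v\varphi\|_2\le\sqrt2\,\|v\|_\Sigma\,\|\varphi\|_{2,2}$, and since $\varphi\mapsto v\varphi$ is linear, $v:H^2\to L^2$ is bounded with operator norm at most $\sqrt2\,\|v\|_\Sigma$.

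I do not expect a genuine obstacle: the statement is essentially a repackaging of \autoref{th-sum-space}. The only points needing care are (i) that \autoref{lemma-ab-inequ} must be used on the three–dimensional slices, which is legitimate because $\varphi(\,\cdot\,,\bar x)\in H^2(\R^3)$ for almost every $\bar x$ by Fubini, hence is bounded and continuous via the embedding $H^2(\R^3)\hookrightarrow L^\infty(\R^3)$; (ii) the numerical fact $c_3\le 1$, which is what makes the clean constant $\sqrt2$ come out; and (iii) correctly handling the two nested infima in the definition of $\|\cdot\|_{\Sigma}$, i.e.\ passing from a bound valid for each representation and splitting to one in terms of $\|v\|_\Sigma$.
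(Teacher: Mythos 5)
Your proposal is correct and follows essentially the same route as the paper's own proof: decompose $v$ via the $\Sigma$-representation, apply H\"older together with the second ($\lambda=1$) estimate of \autoref{lemma-ab-inequ} on the three-dimensional slices, bound the numerical constant $(2\pi)^{-3/2}\sqrt{\pi/2}\approx 0.08$ by $1$, pass to the infimum over decompositions to obtain $\|v\|_\Sigma$, and convert $\|\varphi\|_2+\|\Delta\varphi\|_2\leq\sqrt{2}\,\|\varphi\|_{2,2}$ via \autoref{lemma-equiv-graph-norm}. Your care with the Fubini slicing and the two nested infima only makes explicit what the paper delegates to \q{the same technique as in the proof of \autoref{th-sum-space}}.
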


\begin{proof}
By the same technique as in the proof above and of \autoref{th-kato} (Kato) and incorporating the second estimate of \autoref{lemma-ab-inequ}
\begin{align*}
\|v\varphi\|_2 &\leq \sum_{k=1}^K \|v_k \varphi\|_2 \leq \sum_{k=1}^K (\|v_{k,1}\|_2 \, \|\varphi\|_\infty + \|v_{k,2}\|_\infty \|\varphi\|_2) \\
&\leq \sum_{k=1}^K ((2\pi)^{-n/2}\sqrt{\pi/2} \,\|v_{k,1}\|_2 \, (\|\Delta\varphi\|_2+\|\varphi\|_2) + \|v_{k,2}\|_\infty \|\varphi\|_2) \\
&\leq \sum_{k=1}^K ((2\pi)^{-n/2}\sqrt{\pi/2} \,\|v_{k,1}\|_2 \, (\|\Delta\varphi\|_2+\|\varphi\|_2) + \|v_{k,2}\|_\infty \|\varphi\|_2) \\
&\leq \sum_{k=1}^K ( \|v_{k,1}\|_2 + \|v_{k,2}\|_\infty) \cdot (\|\Delta\varphi\|_2+\|\varphi\|_2).
\end{align*}
The factor $(2\pi)^{-n/2}\sqrt{\pi/2} \approx 0.08$ if $n=3$ was estimated really roughly by $1$ in the last step to get a common factor for both terms. The optimal decomposition of the molecular potential $v$ into the $L^2$ and $L^\infty$ parts yields the $\Sigma(L^2 + L^\infty)$-norm and the estimate $\|\varphi\|_2 + \|\Delta\varphi\|_2 \leq \sqrt{2}\sqrt{\|\varphi\|_2^2 + \|\Delta\varphi\|_2^2} \leq \sqrt{2}\|\varphi\|_{2,2}$ from \autoref{lemma-equiv-graph-norm} and \eqref{eq-sobolev-norm} concludes the proof.
\end{proof}

\begin{example}\cite[Ex.~X.2]{reed-simon-2}\label{ex-atomic-hamiltonian}\\
We want to show that the atomic Hamiltonian \eqref{atomic-hamiltonian} involves a $\Sigma(L^2 + L^\infty)$ potential and is thus self-adjoint on $D(H_0)$ by \autoref{th-sum-space}. First we show that the Coulombic particle-nucleus interactions proportional to $|x|^{-1} = r^{-1}$ is in $L^2(\R^3) + L^\infty(\R^3)$.\footnote{It gets much more involved for $v(x) = -r^{-\alpha}, \alpha \geq 3/2$, see the remark about \q{venerable physical folklore} in \citeasnoun{reed-simon-2} after Theorem X.18 and Example 4 in section X.2.} With the help of the characteristic function we split
\[
r^{-1} = r^{-1} \1_{r \leq 1} + r^{-1} \1_{r > 1}.
\]
Clearly the second part is bounded by $1$ and thus element of $L^\infty(\R^3)$. The $L^2$ property of the first part can easily be shown by integrating
\[
\int_{\R^3} \left| r^{-1} \right|^2 \1_{r \leq 1} \d x = 4\pi \int_0^1 r^{-2} r^2 \d r = 4\pi < \infty.
\]
Secondly the Coulombic particle-particle interaction is just the same type of potential but with relative coordinates $x_i-x_j$ which are just the projection on the first particle's coordinates after a suitable rotation as already stated in \autoref{def-sum-space}.
\end{example}

One can also give a \q{physical} argument for allowing singular potentials with singularities of type $r^{-\alpha}, 0 < \alpha < 2$, in quantum mechanics if one demands lower boundedness as a fundamental property of the Hamiltonian. If now the wave function is concentrated in an area of width $r$ around a negative singularity, the kinetic energy is of order $p^2 \approx r^{-2}$ whereas the potential energy amounts to $r^{-\alpha}$. Adding this up gives the total energy $c_1 r^{-2} - c_2 r^{-\alpha}$ with constants $c_1,c_2>0$ which is bounded below for arbitrarily small $r$ surely if $\alpha < 2$. \cite[Notes to X.2]{reed-simon-2}

Note that the Coulomb law is not the natural potential in domains $\Omega \subsetneq \R^3$ because it has to fulfil the Poisson equation for a point charge $-\Delta v = 4\pi \delta(x)$. To construct Green's functions for periodically aligned box domains one employs the technique of mirror charges that form an infinite series of Coulomb potentials with changing signs in that case. But still this series adds up to a function that looks reasonably similar to the original Coulomb, just having a wee bit smaller absolute value, so it makes sense to discuss Coulomb potentials in settings with bounded domains as well. A final note should be addressed towards the harmonic oscillator Hamiltonian.

\begin{example}\label{ex-harmonic-hamiltonian}
The harmonic oscillator Hamiltonian, called Hermite operator in some references, is given by $H = -\Delta + |x|^2$ and is clearly not covered by the considerations in this section. It is a famous example of free Hamiltonians perturbed by positive operators for which a separate theory exists in which the Friedrichs extension mentioned above takes an important role.\\
By following the procedure from \autoref{ex-laplacian} we get $D(H) = \{ \varphi \in H^2(\R^n) \mid |x|^2 \varphi \in L^2(\R^n) \}$ as the domain of a self-adjoint operator $H$. Note that as a special feature the Fourier transform of all elements of $D(H)$ is again $D(H)$ due to the special symmetry of $H$ regarding Fourier transforms $\widehat{H \varphi} = H \hat\varphi$.\\
The example can be extended to all potentials $v \in L^2_\mathrm{loc}(\R^n)$ bounded from below as an \q{amusing mathematical game} (\citeasnoun[Conjecture 1]{simon-1973}, already studied in \citeasnoun{kato-1972} for even more general potentials including magnetic effects), making $H_0 + v$ essentially self-adjoint on $D(H_0)$ or on any core of $H_0$. Conversely this example should highlight that the negativity of potentials as in \eqref{atomic-hamiltonian} is just the reason why this line of thought is not applicable and techniques like above had to be developed. There is also a general result for potentials with $\lim_{|x| \rightarrow \infty} v(x) = +\infty$ stating that the associated Hamiltonian $H_0+v$ has a fully discrete spectrum \cite[8.2]{pankov}.
\end{example}

\subsection{Stummel class}
\label{sect-stummel-class}

It is natural to ask for a maximal class of potentials in the context of the Kato--Rellich theorem, i.e., to define the set of all multiplication operators that are $H_0$-bounded with relative bound $<1$. An almost maximal class of such perturbations of $H_0$ is given by the class $S_n$ first considered by \citeasnoun{stummel}, with $n$ the dimension of the underlying domain $\R^n$. Note that the integral in the definition is that of a localised Riesz potential (see \autoref{def-riesz}).

\begin{definition}[Stummel class]
$S_n$ consists of all real-valued, measurable functions $v$ on $\R^n$ that fulfil
\begin{align*}
&\lim_{r \searrow 0} \left( \sup_x \int_{|x-y| \leq r} \frac{|v(y)|^2}{|x-y|^{n-4}} \d y \right) = 0 &\mtext{if} n \neq 4, \\
&\lim_{r \searrow 0} \left( \sup_x \int_{|x-y| \leq r} \ln(|x-y|^{-1}) |v(y)|^2 \d y \right) = 0 &\mtext{if} n = 4.
\end{align*} 
\end{definition}

Note that in the case $n \leq 3$ the condition simplifies to
\begin{equation}\label{eq-stummel-cond}
\sup_x \int_{|x-y| \leq r} |v(y)|^2 \d y < \infty
\end{equation}
for an arbitrary fixed $r>0$ which is a uniformly local $L^2$-condition. This is not the same as $L^2_\mathrm{loc}(\R^n)$ which includes potentials like $v(x)=|x|^2$ considered in \autoref{ex-harmonic-hamiltonian} that would yield an infinitely large supremum above. But $S_n, n \leq 3$, clearly includes $L^2(\R^n)$ and $L^\infty(\R^n)$ and thus also $L^2(\R^n)+L^\infty(\R^n)$. The following Lemma shows completeness of $S_n$ endowed with a natural norm derived from condition \eqref{eq-stummel-cond} if $n \leq 3$.

\begin{lemma}\label{lemma-stummel-banach}
$S_n$, $n\leq 3$, is a Banach space with norm
\[
\|v\|_{S_n} = \sup_x \int_{|x-y| \leq 1} |v(y)|^2 \d y.
\]
\end{lemma}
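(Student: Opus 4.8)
The plan is to separate the two assertions hidden in the statement: first that $\|\cdot\|_{S_n}$ is a genuine norm on $S_n$, and then that $S_n$ is complete for it. For the norm axioms, positive definiteness holds because $\|v\|_{S_n}=0$ forces $\int_{|x-y|\le 1}|v(y)|^2\d y=0$ for every $x$, and the unit balls cover $\R^n$, so $v=0$ a.e.; homogeneity is immediate; and the triangle inequality follows by applying Minkowski's inequality on each ball $B_1(x)$ and then taking the supremum over $x$. (As written the displayed quantity is the \emph{square} of a norm, so to have true homogeneity one should read $\|v\|_{S_n}=\bigl(\sup_x\int_{|x-y|\le 1}|v(y)|^2\d y\bigr)^{1/2}$; I would make that explicit before proceeding.) Throughout I would use the simplification already recorded above: for $n\le 3$, membership in $S_n$ is equivalent to the uniformly local $L^2$ bound \eqref{eq-stummel-cond}, since $|x-y|^{4-n}\le r^{4-n}\to 0$ on $B_r(x)$.

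The heart of the matter is completeness, and the key observation is that $\|\cdot\|_{S_n}$ dominates the $L^2(B_1(x))$ norm uniformly in $x$: $\|v\|_{L^2(B_1(x))}\le\|v\|_{S_n}$ for all $x$. Hence a Cauchy sequence $(v_k)$ in $S_n$ is, for each fixed $x$, Cauchy in the complete space $L^2(B_1(x))$; covering an arbitrary compact set by finitely many unit balls shows $(v_k)$ is Cauchy in $L^2_{\mathrm{loc}}(\R^n)$, so it has a limit $v\in L^2_{\mathrm{loc}}(\R^n)$ with $v_k\to v$ in $L^2$ on every ball (and, along a subsequence, a.e., so $v$ may be taken real-valued and measurable). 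It then remains to upgrade this local $L^2$ convergence to convergence in the $S_n$-norm and to check $v\in S_n$.

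For the upgrade I would use the standard ``uniform Cauchy plus pointwise limit gives uniform limit'' device, with $x$ in the role of the parameter. Given $\varepsilon>0$ pick $K$ with $\sup_x\int_{B_1(x)}|v_k-v_l|^2\d y<\varepsilon^2$ for all $k,l\ge K$. Fix $k\ge K$ and fix $x$; letting $l\to\infty$ and using $v_l\to v$ in $L^2(B_1(x))$ gives $\int_{B_1(x)}|v_k-v|^2\d y\le\varepsilon^2$. Since $x$ was arbitrary, $\|v_k-v\|_{S_n}\le\varepsilon$, so $v_k\to v$ in $S_n$. Finally $\|v\|_{S_n}\le\|v-v_K\|_{S_n}+\|v_K\|_{S_n}<\infty$, and covering $B_r(x)$ by a bounded (independent of $x$) number of unit balls transfers this $r=1$ bound to every $r>0$; thus $v$ satisfies \eqref{eq-stummel-cond} and lies in $S_n$.

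The only genuinely delicate point is the interchange of the limit $l\to\infty$ with the supremum over $x$: one must pass to the limit inside the integral for each fixed $x$ first — which is legitimate, being merely $L^2$-continuity of the norm on the fixed ball $B_1(x)$ — and only afterwards take the supremum, rather than attempting to take the limit in the supremum directly. Everything else, namely the norm axioms and the finite-ball-covering estimates, is routine.
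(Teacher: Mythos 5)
Your proof is correct and follows essentially the same route as the paper's: reduce to Cauchy-ness in $L^2$ of each unit ball $B_1(x)$, use completeness there, and patch the local limits together. You are in fact more careful than the paper on two points it glosses over: the observation that the displayed quantity is homogeneous of degree two (so the norm is really its square root), and the explicit upgrade from convergence in $L^2_{\mathrm{loc}}$ to convergence in the $S_n$-norm together with the check that the limit again satisfies \eqref{eq-stummel-cond}; both additions are welcome.
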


\begin{proof}
Take $v_i$ a Cauchy sequence in $S_n$ then it holds for all $\varepsilon > 0$ there are indices $i,j \in \N$ such that for all $x \in \R^n$
\[
\int_{|x-y| \leq 1} |v_i(y) - v_j(y)|^2 \d y < \varepsilon.
\]
Because $L^2(\{ y \in \R^n \mid |x-y| \leq 1 \})$, i.e., the Stummel class restricted to a unit ball with centre $x$, is a complete vector space, the Cauchy sequence converges on every such ball. The overall limit is clearly unique because the balls may overlap. Thus one has a unique limit also globally which shows that $S_n$ is a complete normed vector space (Banach space).
\end{proof}

The Stummel class $S_n$ is fully characterised by a property involving the operator norm that demands \cite[Th.~1.7]{cycon}
\[
\lim_{E\rightarrow\infty} \|(H_0 + E)^{-2} |v|^2\| = 0 \quad\text{as an operator\;} L^\infty \rightarrow L^\infty
\]
from which follows \cite[Cor.~1.8]{cycon}
\begin{equation}\label{eq-stummel-limit}
\lim_{E\rightarrow\infty} \|(H_0 + E)^{-1} v\| = 0 \quad\text{as an operator\;} L^2 \rightarrow L^2.
\end{equation}

\begin{lemma}
From \eqref{eq-stummel-limit} it follows that $v$ is $H_0$-bounded as a multiplication operator with relative bound 0.
\end{lemma}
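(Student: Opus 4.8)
The plan is to transfer the hypothesis, which controls the operator $(H_0+E)^{-1}v$, into a bound on the oppositely-ordered operator $v(H_0+E)^{-1}$ by a self-adjointness (duality) argument, and then read off from the latter both the domain inclusion $D(H_0)\subseteq D(v)$ and the relative-bound estimate.

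First I would fix $E>0$. Since $H_0=-\onehalf\Delta\geq 0$ is self-adjoint, $H_0+E\geq E>0$ is self-adjoint, so $(H_0+E)^{-1}$ is a bounded, self-adjoint operator on $L^2$ (cf.\ \autoref{lemma-positive-inverse}). By the hypothesis \eqref{eq-stummel-limit} the operator $(H_0+E)^{-1}v$, a priori defined only on the dense domain $D(v)$, is bounded there; let $T_E$ denote its continuous extension to all of $L^2$, so that $\|T_E\|$ is precisely the operator norm occurring in \eqref{eq-stummel-limit} and hence $\|T_E\|\to 0$ as $E\to\infty$.

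The key step is to identify $T_E^*$. For $\psi\in L^2$ and $\chi\in D(v)$ one computes, using the defining property of the adjoint and the self-adjointness of the bounded operator $(H_0+E)^{-1}$,
\[
\langle T_E^*\psi,\chi\rangle=\langle\psi,(H_0+E)^{-1}v\chi\rangle=\langle (H_0+E)^{-1}\psi,v\chi\rangle .
\]
Because $v$ is a real multiplication operator it is self-adjoint, so $D(v)=D(v^*)$; the identity above, holding for every $\chi\in D(v)$, therefore says exactly that $(H_0+E)^{-1}\psi\in D(v)$ with $v(H_0+E)^{-1}\psi=T_E^*\psi$. Letting $\psi$ run through $L^2$, the vector $(H_0+E)^{-1}\psi$ runs through $D(H_0)$, so $D(H_0)\subseteq D(v)$ — condition (i) of $H_0$-boundedness — and $v(H_0+E)^{-1}=T_E^*$ is everywhere defined and bounded with $\|v(H_0+E)^{-1}\|=\|T_E^*\|=\|T_E\|\to 0$. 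This middle step is the only genuine obstacle: since $v$ is unbounded and $D(v)$ is not a priori known to contain $D(H_0)$, one cannot simply take adjoints formally, and the argument rests on using self-adjointness of \emph{both} $v$ and $(H_0+E)^{-1}$ so that the adjoint computation yields the domain inclusion and the norm identity at once.

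Finally, for $\varphi\in D(H_0)$ I would put $\psi=(H_0+E)\varphi\in L^2$, so that $\varphi=(H_0+E)^{-1}\psi$ and
\[
\|v\varphi\|_2=\|v(H_0+E)^{-1}\psi\|_2\leq\|T_E\|\,\|(H_0+E)\varphi\|_2\leq\|T_E\|\,\|H_0\varphi\|_2+E\,\|T_E\|\,\|\varphi\|_2 .
\]
Given any $\alpha>0$, invoke \eqref{eq-stummel-limit} to choose $E$ with $\|T_E\|\leq\alpha$; then $\|v\varphi\|_2\leq\alpha\|H_0\varphi\|_2+\beta\|\varphi\|_2$ with $\beta=\alpha E$. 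Since $\alpha>0$ was arbitrary, the relative bound of $v$ with respect to $H_0$ is $0$. The remaining ingredients — boundedness of the resolvent and the closing triangle inequality — are routine.
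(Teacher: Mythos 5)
Your proof is correct and follows essentially the same route as the paper: both transfer the bound on $(H_0+E)^{-1}v$ to $v(H_0+E)^{-1}$ by a duality/self-adjointness argument and then conclude with the substitution $\psi=(H_0+E)\varphi$. Your identification of $v(H_0+E)^{-1}$ as the adjoint $T_E^*$ is in fact a more careful rendering of the paper's step "choose $\varphi = v(H_0+E)^{-1}\psi$", since it makes explicit the domain inclusion $D(H_0)\subseteq D(v)$ that the paper's version tacitly assumes.
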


\begin{proof}
First one shows that $(H_0 + E)^{-1}$ and $v$ can be interchanged in \eqref{eq-stummel-limit} by a duality argument. For every $\varepsilon > 0$ there is a $E>0$ such that, by writing out the operator norm,
\[
\sup_{\|\varphi\|=1} \|(H_0 + E)^{-1} v \varphi\| < \varepsilon.
\]
From this inequality it follows by the CSB inequality and using the self-ad\-joint\-ness of both involved operators
\begin{align*}
\sup_{\substack{\|\varphi\|=1 \\ \|\psi\|=1}} \langle v(H_0 + E)^{-1}\psi, \varphi \rangle &= \sup_{\substack{\|\varphi\|=1 \\ \|\psi\|=1}} \langle \psi,(H_0 + E)^{-1} v \varphi \rangle \\
&\leq \sup_{\|\psi\|=1} \|\psi\| \cdot \sup_{\|\varphi\|=1} \|(H_0 + E)^{-1} v \varphi\| < \varepsilon.
\end{align*}
Now choose $\varphi = v(H_0 + E)^{-1}\psi$ (normalised to 1) to get from the first term above
\[
\sup_{\|\psi\|=1} \|v(H_0 + E)^{-1} \psi\| < \varepsilon
\]
which means that $\|v(H_0 + E)^{-1}\psi\| < \varepsilon \|\psi\|$ for all $\psi \in \H$. For $\varphi \in D(H_0)$ set $\psi = (H_0+E)\varphi$ and it follows directly $\|v\varphi\| < \varepsilon \|(H_0+E)\varphi\| \leq \varepsilon \|H_0\varphi\| + \varepsilon E \|\varphi\|$. This shows $H_0$-boundedness of $v$ with relative bound 0 because $\varepsilon$ can be made arbitrarily small by increasing $E$.
\end{proof}

There is now also a way to conclude from $H_0$-boundedness of $v$ with special bounds that $v \in S_n$. This is the reason why the Stummel class can be considered a natural choice for self-adjoint perturbations of $H_0$ although it is not maximal. \cite[Th.~1.9 and the example after Th.~1.12]{cycon} Inequalities of the kind of \autoref{lemma-sum-space-inequality} that will become important for estimates formulated later can also be found for Stummel class potentials, see \citeasnoun[10.3]{weidmann}.

\subsection{Kato class}
\label{sect-kato-class}

A variant of the Kato--Rellich theorem (\autoref{th-kato-rellich}), the so-called KLMN\footnote{The letters actually stand for Kato, Lions, Lax--Milgram, and Nelson, so it is rather the \q{KLLMN} theorem. \cite{simon-2004}} theorem (\citeasnoun[Th.~1.5]{cycon} and \citeasnoun[Th.~X.17]{reed-simon-2}), is also available for the \q{form version} of operators, where instead of $\|A\varphi\|$ the expectation value $\langle \varphi,A\varphi \rangle$ is considered. Such operators can usually be defined on a larger domain as the example of the Laplacian $-\langle \varphi,\Delta\varphi \rangle = \langle \nabla\varphi,\nabla\varphi \rangle$ in its \q{energy extension} to $H^1$ makes clear. The corresponding class of potentials that are $H_0$-form-bounded  with relative bound 0 is very similar to the Stummel class in definition and is called the Kato class\footnote{In \citeasnoun{tddft-review} we called the $L^2+L^\infty$ type potentials the \q{Kato class} because of its relation to perturbations of linear operators studied by Kato but this is contrary to the usual terminology in the literature followed here.} $K_n$ \cite[Def.~1.10]{cycon}.

\begin{definition}[Kato class]
$K_n$ consists of all real-valued, measurable functions $v$ on $\R^n$ that fulfil
\begin{align*}
&\lim_{r \searrow 0} \left( \sup_x \int_{|x-y| \leq r} \frac{|v(y)|}{|x-y|^{n-2}} \d y \right) = 0 &\mbox{if}\quad n \neq 2, \\
&\lim_{r \searrow 0} \left( \sup_x \int_{|x-y| \leq r} \ln(|x-y|^{-1}) |v(y)| \d y \right) = 0 &\mbox{if}\quad n = 2.
\end{align*} 
\end{definition}

Similarly to the Stummel class the case $n = 1$ of the condition simplifies to
\begin{equation*}
\sup_x \int_{|x-y| \leq r} |v(y)| \d y < \infty
\end{equation*}
for an arbitrary fixed $r>0$ which is a uniformly local $L^1$-condition. There is also a local version of the Kato class $K_n^{\mathrm{loc}}$ defined by $v \varphi \in K_n$ for all $\varphi \in \Cont_0^\infty$ that will become important in results concerning the unique continuation property of eigenstates (cf.~\autoref{sect-ucp}).

An extensive study of the semigroup $\{\exp(-tH)\}_{t \geq 0}$ (without the imaginary $\i$ and thus a solution to the associated heat equation) generated by Hamiltonians involving the Kato class that is mainly concerned with properties of eigenfunctions of $H$ was conducted by \citeasnoun{simon-1982}. In this reference it is interestingly noted (after (A15) and shown in Example I) that in contrast to the Stummel class (\autoref{lemma-stummel-banach}) the Kato class is \emph{not} complete if $n = 3$, a proposition later withdrawn in an erratum to the article \cite{simon-erratum}. Another similar class of $H_0$-form-bounded potentials with relative bound 0 are those of Rollnik \cite[Th.~X.19]{reed-simon-2}. Note the relation to uniformly local $L^p$ spaces for $n \geq 2$ \cite[Ex.~E]{simon-1982}
\begin{equation}\label{eq-kato-class-Lp}
L^p_\mathrm{unif} \subset K_n \mtext{if} p > \frac{n}{2}
\end{equation}
where
\[
L^p_\mathrm{unif} = \left\{ f \left\vert\, \sup_x \int_{|x-y|\leq 1} |f(y)|^p \d y < \infty \right.\right\}.
\]
The relation \eqref{eq-kato-class-Lp} is also true for the local versions of these spaces as follows directly from their definition.
\begin{equation}\label{eq-kato-class-Lp-local}
L^p_\mathrm{loc} \subset K_n^\mathrm{loc} \mtext{if} p > \frac{n}{2}
\end{equation}

An overview of Stummel and Kato class results is given in Appendix C of the review article of \citeasnoun{braverman} that is actually concerned with the more general setting of Schrödinger type operators on smooth manifolds. Even more interestingly they added an historical section (Appendix D) covering essential self-adjointness regarding Hamiltonians with singular potentials. As pointed out by them a further improvement to the result of \citeasnoun{kato-1972} mentioned in \autoref{ex-harmonic-hamiltonian} is that of \citeasnoun{leinfelder}. Just like \citeasnoun{kato-1972} it allows $v = v_1 + v_2$ with $v_1,v_2 \in L^2_\mathrm{loc}$, $v_1(x) \geq -c|x|^2$, and $v_2$ $H_0$-bounded with relative bound $<1$ (for example of Stummel class) but in addition widens the class of possible magnetic potentials. It was not studied if the truncation method used in their proof is extendible to the composition procedure of potentials on $\R^3$ as in \autoref{th-sum-space}. The Leinfelder--Simader result is also discussed in \citeasnoun[Sect.~1.4]{cycon} for the less general $v \in L^2_\mathrm{loc}, v \geq 0$ case with magnetic potentials.

\section{Properties of eigenfunctions}
\label{sect-prop-eigenfunctions}

Although this is a topic more from equilibrium theory and thus relates to time-independent DFT, it is interesting to note some special properties of eigenfunctions of the Hamiltonian that derive from the concepts discussed above. We study the eigenvalue problem
\[
H \psi = E \psi
\]
where $E$ can be absorbed into the potential part $v$ of $H=H_0+v$ such that the equation
\begin{equation}\label{eq-ti-se}
H \psi = 0
\end{equation}
remains. Note that a real eigenfunction is always possible by just taking $\Re \psi$ which solves just the same eigenvalue equation due to self-adjointness of $H$. In the case $v=0$ and thus $H=H_0=-\onehalf \Delta$ this is the famous Laplace equation. Solutions to that are known to be as regular as one could wish, namely analytic (see \autoref{sect-ucp}), but such properties need not pass over to the associated density (see \autoref{sect-non-regularity-density}) in whose regularity we are also interested. Special features of such eigenfunctions for typical Hamiltonians are discussed in the following sections.

As a first result we repeat \citeasnoun[Th.~C.1.1]{simon-1982}, who notes that a weak solution to \eqref{eq-ti-se} with $v \in K_n^{\mathrm{loc}}$ on a bounded domain $\Omega$ is continuous almost everywhere. A version of the Harnack inequality and other estimates follow for the same class of potentials. If one divides the potential into positive and negative parts, $v=v_+ + v_-$, then under the conditions $v_+ \in K_n^{\mathrm{loc}}, v_- \in K_n$ any $L^2$-eigenfunction decays like $\psi(x) \rightarrow 0$ as $|x| \rightarrow \infty$. This is of importance because there are $L^2$-functions that do not go to zero in this sense, but we would expect it for physical states. Many more special decay results that we do not give here are known for eigenfunctions of the Schrödinger Hamiltonian, see \citeasnoun[C.3]{simon-1982}.

\subsection{Possible non-regularity of the associated density}
\label{sect-non-regularity-density}

One has to be careful not to think that from continuity of the wave function $\psi \in \Cont^0(\R^{dN})$ continuity of the associated density $n \in \Cont^0(\R^d)$ as an integrated quantity necessarily follows. Also uniform continuity as a condition for continuity for the integral over one coordinate only holds on compact domains. What is however guaranteed is that the integral of a continuous function is always a lower semi-continuous function in the remaining arguments. This is actually a classification of all lower semi-continuous function. \cite{finkelstein} The following is a counterexample to continuity of an integrated quantity where one single dimension gets integrated out.

\begin{example}
Take a Gaussian in $x_1$-direction with its width controlled by $x_2$, getting broader as one approaches $|x_2| \rightarrow 1$, and setting it 0 outside the stripe $|x_2| < 1$.
\[
f(x_1,x_2) = \left\{ 
\begin{array}{cl}
(1-|x_2|) \exp\left(-(x_1 (1-|x_2|))^2\right)/\sqrt{\pi} \quad & |x_2| < 1 \\
0 & |x_2|\geq 0
\end{array}
\right.
\]
Clearly this is a continuous function $f \in \Cont^0(\R^2)$. But integrating out the $x_1$ coordinate gives a constant area under the Gaussian for $|x_2|<1$ and 0 else.
\[
\int_{-\infty}^\infty f(x_1,x_2) \d x_1 = \left\{ 
\begin{array}{ll}
1 \quad & |x_2| < 1 \\
0 & |x_2|\geq 0
\end{array}
\right.
\]
The resulting marginal function is perfectly integrable but discontinuous. This example can also be easily generalised to an antisym\-metric, continuous $L^2$ wave function that yields a discontinuous density. Studies regarding the regularity of the density particularly for molecular Hamiltonians are discussed in \autoref{sect-analyticity-eigenstates} below.
\end{example}

We might be worried now that this fact spoils differentiability of integrated quantities like the one-particle density defined in \eqref{def-n}. If $\psi \in W^{1,2}(\Omega^N)$ we would usually expect $n \in W^{1,1}(\Omega)$ but the above example maybe makes us a bit sceptical. Everything though is safe because in this context we consider only weak derivatives and the issue highlighted in this section magically disappears. \cite{cheng-2010} It is interesting to observe how at such frontiers the `traditional' calculus (pointwise defined, continuous or analytic functions) and the `modern' calculus (Lebesgue and Sobolev spaces) diverge.

\subsection{Unique continuation property of eigenstates}
\label{sect-ucp}

Take first a solution to Laplace's equation $\Delta u = 0$ in $H^2(\Omega)$ on $\Omega \subset \R^n$ open and connected. This is the definition of a harmonic function and it is known that in $n=2$ they are the real (or imaginary) part of an everywhere holomorphic (entire) function and they are real analytic for any dimension $n$. If such a function vanishes to infinite order at any $x_0 \in \Omega$, i.e., for all $k \in \N_0$
\begin{equation}\label{eq-strong-ucp}
\lim_{r \searrow 0} \frac{1}{r^k} \int_{B_r(x_0)} |u(x)|^2 \d x =0,
\end{equation}
then $u=0$ on all of $\Omega$ because all derivatives at this point must be zero (through the usual estimates for the derivatives of harmonic functions). This is called the \emph{strong unique continuation property} (UCP). \cite{salo-2014} The \emph{weak} version states that if $u=0$ on any open set contained in $\Omega$ then $u=0$ everywhere. Clearly the weak property follows from the strong. Now if two solutions to Laplace's equation are equal on an open set or if only their difference complies with \eqref{eq-strong-ucp} at any point $x_0$ then they are equal everywhere, thus the name \q{unique continuation property}. The same holds for $n \neq 2$ because harmonic functions are still real analytic. This proposition can even be extended to all linear elliptic differential operators with real analytic coefficients because Holmgren's uniqueness theorem guarantees analytic solutions. See \citeasnoun{escauriaza} for a similar introduction into the subject that also deals with evolution equations.

The interesting problem of establishing the UCP for more general operators with Laplacian principal part like $-\Delta + v$ was solved for potentials like $v \in L^\infty_\mathrm{loc}$. Because of the local nature of the result this can be generalised to all potentials that are bounded on compact subsets of $\R^n \setminus S$ with $S$ a closed set of measure zero such that $\R^n \setminus S$ is still connected. (\citeasnoun[Th.~XIII.57]{reed-simon-4} and \citeasnoun[C.9]{simon-1982}) This settles the physical case of singular Coulombic potentials and interactions, but leaves the question unsettled for more general classes of unbounded potentials. \citeasnoun[C.9]{simon-1982} conjectured that the weak UCP should hold for the whole Kato class $v \in K_n^{\mathrm{loc}}$, a statement answered positively only for radial potentials by \citeasnoun{fabes1990}. Taking into account $L^p_\mathrm{loc} \subset K_n^\mathrm{loc}$ if $p > n/2$ from \eqref{eq-kato-class-Lp-local} it seems natural to ask the same question for potentials from the smaller class $L^p_\mathrm{loc}$ for $p > n/2$ (this was again formulated as a conjecture by \citeasnoun{schechter-simon}) or even with the critical index included, i.e., $p \geq n/2$. This case has indeed been settled shortly afterwards. That the strong UCP holds for $n\geq 3$ and $v \in L^{n/2}_\mathrm{loc}$ was shown by \citeasnoun{jerison-kenig-1985} for solutions $u \in W^{2,q}_{\mathrm{loc}}$ and by \citeasnoun{sogge-1990} for $u \in W^{1,q}_{\mathrm{loc}}$ (by resorting to the energetic extension of the differential operator and also allowing for a magnetic term). In both cases $q=2n/(n+2)<2$ converging to 2 for high dimensionality and thus always including the spaces $H^2_\mathrm{loc}$ and $H^1_\mathrm{loc}$ respectively. The optimality of these results is demonstrated by several counterexamples of nontrivial solutions with compact support. \citeasnoun{kenig-nadirashvili-2000} found such solutions for $v \in L^1, n \geq 2$, and \citeasnoun{koch-tataru-2007-a} for $v \in L^p, p<n/2$, as well as $v \in L^{n/2}_w, n \geq 3$ (the weak $L^p$-space that is identical to the Lorentz space $L^{p,\infty}$), and $v \in L^1, n=2$.

An early overview on the topic was given by \citeasnoun{kenig-1986} and recent results including a new one are summed up in \citeasnoun{koch-tataru-2007-b}. The UCP property can even be used to show uniqueness in Cauchy problems like the time-dependent Schrödinger equation, the classical result being Holmgren's theorem. \citeasnoun{tataru-2004} gives an overview on this but without a discussion of permitted classes of potentials.

Note the form of an inequality for which such results are usually derived in the literature. From a PDE of the form $\Delta u = v u$ on a domain $\Omega \subset \R^n$ it follows of course that
\[
|\Delta u| \leq |v u|.
\]
Such a differential inequality is enough to derive the unique continuation properties for certain classes of potentials. In the most regular case $v \in L^\infty$ implies
\[
|\Delta u| \leq M |u|
\]
for some $M>0$ from which follow the classical UCP results, see \citeasnoun[XIII.63]{reed-simon-4}. An interesting consequence for the time-independent Schrö\-dinger equation $(-\Delta+v)u=Eu$ is the absence of strictly positive eigenvalues for potentials that have compact support, extended to a special class of potentials that tend to 0 as $|x| \rightarrow \infty$ in the Kato--Agmon--Simon theorem \cite[XIII.58]{reed-simon-4}. The reasoning for compactly supported potentials is as follows. A solution fulfils $-\Delta u=E u$ (the Helmholtz equation) on $\{ |x| > R \}$ for $R>0$ big enough. Then a classical result by \citeasnoun{rellich-1943} shows $u=0$ in $\{ |x| > R \}$ if $E>0$ and $u \in L^2(\R^n)$ due to a growth property of non-vanishing solutions. If the potential allows for the weak UCP then clearly $u=0$ everywhere. That this holds was thought to be true in general for all potentials that tend to 0 as $|x| \rightarrow \infty$ because intuitionally one would not expect a bounded state above zero energy because then only an infinite barrier should prevent some tunnelling. Yet there is a counterexample with a bounded potential that decays to zero but still admits a positive eigenvalue called the Wigner--von Neumann potential \cite[XIII.13 Ex.~1]{reed-simon-4}.

There are UCP results widening the class of permitted potentials to a class containing $L^{n/2}_\mathrm{loc}$ that are linked to the names of Morrey and Fefferman--Phong, see \citeasnoun{ruiz-vega} and \citeasnoun{kurata-1993}, the latter also including magnetic fields and a generalised elliptic differential operator. \citeasnoun{kurata-1997} again shows the UCP for certain Kato class potentials and magnetic fields. This still leaves open an important question raised for example in \citeasnoun{lammert}. The (weak) UCP provides proof that an eigenstate cannot vanish on an open set. But for the Hohenberg--Kohn theorem (\autoref{hk-th}) a non-vanishing property on any set with non-zero Lebesgue measure is critical. \citeasnoun{regbaoui} shows that such a sought for property is actually a consequence of the \emph{strong} UCP, therein apparently following the work of \citeasnoun{defigueiredo-gossez} that rests on an early estimate for general Sobolev functions by \citeasnoun[Lemma 3.4]{ladyzhenskaya-uraltseva}. This means the $v \in L^{n/2}_\mathrm{loc}$ is still needed, but such studies do not embrace the special features of the usual physical potentials as sums like in \autoref{def-sum-space} which might turn out to be beneficial. The same property for positive measure sets has been given by \citeasnoun{garafalo-lin-1987} in a side-note citing a previous work about more general elliptic operators that relates solutions to Muckenhoupt weights, but the admitted class of potentials (magnetic as well) is rather involved.

\subsection{Analyticity of eigenstates and densities}
\label{sect-analyticity-eigenstates}

In the beginning of the previous section it has already been noted that solutions to the Laplace equation are always analytic so one might wonder how much of this property remains valid if a singular potential like that of a Coulombic many-body system is added. This was extensively studied by \citeasnoun{fournais-2009} and in previous studies by the same authors. They found that in the presence of a Coulombic singularity in the potential at $x_i = x_0$ or $x_k = x_l$ in the case of particle--particle interaction any eigenfunction is analytic away from the singularities. In a neighbourhood of the singular set it can actually be represented as
\[
\psi(\ushort x) = \psi^{(1)}(\ushort x) + |x_i - x_0| \psi^{(2)}(\ushort x)
\]
or 
\[
\psi(\ushort x) = \psi^{(1)}(\ushort x) + |x_k - x_l| \psi^{(2)}(\ushort x)
\]
respectively with $\psi^{(1)}, \psi^{(2)}$ being real analytic. A basic example is the hydrogen 1s orbital that can be broken up straightforwardly into two power series.
\[
\e^{-|x|} = \underbrace{\sum_{k=0}^\infty \frac{(|x|^2)^k}{(2k)!}}_{\psi^{(1)}(x)} - |x| \underbrace{\sum_{k=0}^\infty \frac{(|x|^2)^k}{(2k+1)!}}_{-\psi^{(2)}(x)}
\]
This form exhibits the cusp structure already noted by \citeasnoun{kato-1957} who previously proved the wave function to be Lipschitz continuous. Actually the very heart of density functional theory already lies in his study which gets nicely expressed in an anecdote told by \citeasnoun{handy}:

\begin{quote}
When the two key theorems of modern Density Functional Theory were introduced in 1965, it is said that the eminent theoretical spectroscopist E.~Bright Wilson stood up at the meeting and said that he understood the basic principles of the theory. He said that if one knew the exact electron density $\rho(\mathbf{r})$, then the cusps of $\rho(\mathbf{r})$ would occur at the positions of the nuclei. Furthermore he argues that a knowledge of $|\nabla\rho(\mathbf{r})|$ at the nuclei would give their nuclear charges. Thus he argued that the full Schrodinger Hamiltonian was known because it is completely defined once the position and charge of the nuclei are given. Hence, in principle, the wavefunction and energy are known, and thus everything is known. In conclusion, Wilson said he understood that a knowledge of the density was all that was necessary for a complete determination of all molecular properties. It is this simple argument which is behind most of the aspirations of modern Density Functional Theory.
\end{quote}

But we have already noted in \autoref{sect-non-regularity-density} that analyticity properties do not have to pass over to the one-particle density and indeed in \citeasnoun{fournais-2004} the authors write:

\begin{quote}
It is not clear \emph{a priori} that $\rho$ is real analytic away from the nuclei since in [its definition] one integrates over subsets of [the singular set] $\Sigma$ where $\psi$ is not analytic.
\end{quote}

Yet analyticity of the density away from the positions of the nuclei holds for Coulombic many-body Hamiltonians and also for more general singular potentials. A recent alternative proof of that result using a clever unitary transformation was given by \citeasnoun{jecko-2010}. Still all those results only hold for the very special class of molecular potentials or potentials derived from those, not for a general class with vector space structure which is needed for many applications such as in the next chapters.

\section[Overview of results for time-dependent Hamiltonians]{Overview of results for time-dependent\\Hamiltonians}
\label{sect-overview-td-hamiltonian}

The physically significant case of a non-constant generator potential $v(t)$ is not even touched in the above considerations and demands some additional efforts. In this section we give an overview of some milestone results from the literature regarding this problem, either for the original Cauchy problem \eqref{cauchy-problem} or the specialised case of the Schrödinger equation \eqref{cauchy-problem-se}.

According to \citeasnoun{kato1961} the first investigations of an abstract Cauchy problem with a time-dependent family of infinitesimal generators date back to \citeasnoun{phillips}. Phillips considered an evolution equation on a Banach space $X$ 
\[
\partial_t u(t) = A(t) u(t)
\]
where the generator $A(t) = A_0 + B(t)$ is the perturbation of a time-independent part $A_0$ with well-known $\Cont^0$ semigroup $\exp(A_0 t)$ by a strongly continuously differentiable map $B : [0,\infty) \rightarrow \Bounded(X)$. The result of his Theorem 6.2 is the unique existence of strong solutions and its representation in his equation (25) is just the same as the Neumann series \eqref{neumann-series} discussed later. This does not come as a big surprise since he uses the same \q{successive substitutions} scheme already developed in the 19\textsuperscript{th} century in the context of integral equations. This scheme adds up contributions from all possible free propagations plus interactions and the task is then to prove convergence of the series. It corresponds to the transformation of the Cauchy (initial value) problem into a Volterra integral equation. More general potentials appeared in the work of \citeasnoun{miyadera-1966}, but first only for the static case. Building on that \citeasnoun{rhandi-2000} showed the existence of mild solutions under time-dependent (non-autonomous) perturbations and \citeasnoun{schnaubelt-voigt-1999} included non-autonomous Kato class potentials in the Schrödinger case.

In contrast to all this one can also consecutively propagate along a direct path for short times, assuming a generator with constant potential, then making those time intervals smaller to get the desired evolution operator in the limit. This next stage was achieved by \citeasnoun{kato1953} and others with $A$ having time-independent domain and being \q{maximal dissipative} at all times, a condition as in \autoref{lemma-resolvent} and thus automatically fulfilled for self-adjoint operators. Following \autoref{th-schro-dyn-konst} (or more generally the generation theorems of Hille--Yosida and Lumer--Phillips) we have $A(t)$ as the generator of a $\Cont^0$ semigroup. The proof is conducted using a time-partitioning with small evolution steps with static generator along every subinterval, the method is accordingly called \q{stepwise static approximation} by us. Further conditions regard relations of the generator at different times and may exclude typical cases of external potentials if naively applied. A more up-to-date summary of these techniques can be found in \citeasnoun[ch.~5]{pazy} who used the notion of \q{stable families of infinitesimal generators}. The method of \citeasnoun{kato1953} was later employed in \citeasnoun[Th.~X.70]{reed-simon-2} where they also develop a theory specifically for the Schrödinger case allowing singular Coulombic potential but only for one particle. This can be extended to the $N$-particle case with the methods given for the time-independent case in the same book and here in \autoref{sect-const-potential}. We actually  carry out this line of thought in some detail in \autoref{sect-stepwise-static} with a proof confined to the Schrödinger case.

An approach using a larger Banach space of maps $[0,T] \rightarrow X$, which we will later call the \q{trajectory space}, was already pursued by \citeasnoun{visik-ladyzhenskaya-1956}. They look for weak (\emph{faible}) solutions, similar to \citeasnoun{lions1958}, who allows for a time-indexed family of Hilbert spaces. Another idea using a larger space is due to \citeasnoun{howland} and mentioned in \citeasnoun{reed-simon-2}. He introduced a trajectory Hilbert space $L^2(\R,\H)$ with a new auxiliary time variable and conserved energy where the old time variable is taken as an ordinary coordinate. But all these results apply to abstract operators $A(t)$, not beneficially taking into account any special structure of a Schrödinger Hamiltonian.

\citeasnoun{wueller} considered the special case of the Schrödinger Hamiltonian for a single quantum particle with many moving Coulombic potentials. He is able to unitarily transform the equation to get a static singular potential and then uses results from \citeasnoun{tanabe} which in turn rely on \citeasnoun{kato1953}. A comparable technique is used in \citeasnoun{lohe-2009} to get exact solutions for a range of models. But the important case of singular, moving potentials for a larger number of quantum particles is still not addressed specifically.

The study of \citeasnoun{yajima}, in contrast, resumes the \q{successive substitutions} scheme and is for arbitrary spatial dimension $n$ allowing multiple particles in three-dimensional space. It entails Kato-like (cf.~\autoref{th-kato}) conditions on the spatial part of the potential, i.e., $v \in L^p(\R^n) + L^\infty(\R^n)$ and $p > \frac{n}{2}$ if $n \geq 3$, but holds for temporally discontinuous cases. He concentrates on the Schrödinger case and is \q{taking the characteristic features of Schrödinger equations into account [to] establish a theorem [...] for a larger class of potentials than in existing abstract theories.} The most significant such feature is the availability of Strichartz-type estimates for the Laplacian part, i.e., estimates on the evolved wave function with respect to the initial state. We give a very detailed analysis of his assumptions and proofs in \autoref{sect-yajima}. Yet the consequence of his main condition on the potential will be studied right away because it is important to mention that this effectively rules out the Coulombic case for more than one particle even though he writes that \q{[t]he conditions are general enough to accommodate potentials which have moving singularities of type $|x|^{-2+\varepsilon}$ for $n \geq 4$ and $|x|^{-n/2+\varepsilon}$ for $n \leq 3, \varepsilon > 0$.}

Let $v = v_1 + v_2$ with $v_1 \in L^p(\R^n), v_2 \in L^\infty(\R^n)$ be a radially symmetric, singular potential with its only pole at the origin. We can always assume the support of $v_1$ confined in a ball $r=|x| \leq 1$ because the outer part is bounded and thus in $L^\infty$. The $L^p$ condition now reads in spherical coordinates.
\[
\int_0^1 |v_1(r)|^p \, r^{n-1} \d r < \infty
\]
A singular potential of type $v_1(r) = -\alpha r^{-s}$ must therefore fulfil $-p s + n - 1 > -1$ for a converging norm integral which is the same as $s < \frac{n}{p}$ thus $s < 2$ by Yajima's assumption on the potential space for $n \geq 3$. This reproduces exactly the quote above, but such a potential is not of Coulombic type if more than one quantum particle in three-dimensional space is considered. Remember that the general form for a centred Coulomb potential for $N$ particles would be
\begin{equation}\label{coulomb-v}
v(x_1,\ldots,x_N) = - \alpha \sum_{i=1}^N r_i^{-1}.
\end{equation}
with $r_i = |x_i|$ the individual origin-particle distances. The $L^p$ condition thus reads for one of the most singular terms and $n=3N$ dimensional configuration space
\[
\int_{[0,1]^N} r_1^{-p+2} \d r_1 r_2^2 \d r_2 \ldots r_N^2 \d r_N < \infty
\mtext{i.e.} \int_0^1 r_1^{-p+2} \d r_1 < \infty,
\]
and we need $-p+2>-1$. Thus $\frac{n}{2} < p < 3$ which is not feasible for $n\geq 6$, the case of two or more particles. The problem persists for singular interaction terms of the kind $v(x_1,x_2) = -\alpha |x_1-x_2|^{-1}$. This effectively rules out the Coulomb case for systems of more than one particle and we have to keep that in mind. Still we will follow closely the exposition of \citeasnoun{yajima} as one possible method in our later development of potential variations for Schrödinger trajectories (see \autoref{sect-func-diff-successive}). Note that the situation changes in the case of a decoupled system as in the Kohn--Sham equations of TDDFT (see \autoref{sect-KS}), where each orbital is given by a Schrödinger equation in $n=3$ with a common effective potential.

A slightly more general recent account than in \citeasnoun{yajima} was given by \citeasnoun{dancona} just relying on a fixed point approach to show existence and uniqueness of a Schrödinger solution rather than explicitly constructing the evolution operator as a Neumann series. It is repeated in \autoref{Cv-strichartz}.

In any case because of the explicit time-dependence of the Hamiltonian, the evolution from a given state $\psi(s)$ to a state at a different time $\psi(t)$ is not only depending on the time difference $t-s$ but on the initial and final time separately. Instead of a one-parameter evolution (semi-)group one thus defines a so-called \emph{evolution system}.

\begin{definition}\label{def-evolution-system}
An \textbf{evolution system} is a two-parameter family of unitary  operators $U(t,s)$ on $\H$ for $0 \leq s,t \leq T$ fulfilling the following conditions:
\begin{enumerate}[(i)]
	\itemsep0em
	\item $U(t,t) = \id$
	\item $U(t,r) U(r,s) = U(t,s)$
	\item $U(t,s)^{-1} = U(t,s)^* = U(s,t)$
	\item $\partial_t U(t,s) = -\i\,H(t) U(t,s)$ on $D(H)$, the supposed common domain of the family of Hamiltonians $H(t)$ (classical solutions). From this together with (iii) it follows that $\partial_s U(t,s) = \i\, U(t,s) H(s)$ on $D(H)$ as well.
\end{enumerate}
\end{definition}

\section{The stepwise static approximation method}
\label{sect-stepwise-static}

\citeasnoun{reed-simon-2} give their proof for time-dependent, possibly unbounded generators of contraction semigroups based on \citeasnoun{kato1953} in Theorem X.70 and an example for the one-particle quantum mechanical case in Theorem X.71. We will repeat this proof in our own style and extend it to an arbitrary number of particles and Kato-perturbation type potentials following the strategy for time-independent Kato perturbations in \autoref{sect-kato-peturbations}. The central idea is to divide the time interval $[0,T]$ into small subintervals and evolve unitarily with a static Hamiltonian on each of these, thus the term \q{stepwise static approximation}. Then the limit to infinitesimally small subintervals is taken which leaves us with a $v \in \mathrm{Lip}([0,T], \Sigma(L^2 + L^\infty))$ (see \autoref{def-sobolev-kato-lipschitz}) condition for the potential.

\subsection{Preparatory definitions and lemmas}

The first lemma here will not be used in the present proof of the main theorem of existence of Schrödinger solutions but is interesting in its own right and closely relates to it.

\begin{lemma}\label{lemma-graph-norms-AB}
Let $A$ be a self-adjoint operator and $B(t)$ a family of symmetric and $A$-bounded operators with maximal relative bound $a<1$ and common $b$.
Then on $D(A)$ all graph norms are equivalent, $\|\cdot\|_A \sim \|\cdot\|_{A+B(t)}$, due to the inequalities
\[
\frac{1}{\sqrt{2}}\frac{1-a}{1+b} \|\varphi\|_A \leq \|\varphi\|_{A+B(t)} \leq \sqrt{2}\max\{1+a,1+b\} \|\varphi\|_A.
\]
\end{lemma}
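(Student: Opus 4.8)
The plan is to reduce everything to the triangle inequality combined with the $A$-boundedness estimate $\|B(t)\varphi\| \leq a\|A\varphi\| + b\|\varphi\|$, and then to pass back and forth between the graph norm $\|\cdot\|_{A} = \sqrt{\|\cdot\|^2 + \|A\cdot\|^2}$ and its equivalent additive form $\|\cdot\| + \|A\cdot\|$ using \autoref{lemma-equiv-graph-norm}, which costs at most one factor $\sqrt{2}$ in each direction. First I would note that by \autoref{th-kato-rellich} (Kato--Rellich) each $A+B(t)$ is self-adjoint on $D(A)$, so $D(A+B(t)) = D(A)$ and all norms in question are genuine norms on the one fixed vector space $D(A)$; hence the statement \q{all graph norms are equivalent} is meaningful and the asserted two-sided bound is exactly equivalence with $t$-independent constants, since $a$ and $b$ are common to the whole family.

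For the upper bound I would estimate, for $\varphi \in D(A)$,
\[
\|(A+B(t))\varphi\| \leq \|A\varphi\| + \|B(t)\varphi\| \leq (1+a)\|A\varphi\| + b\|\varphi\| \leq \max\{1+a,1+b\}\,(\|A\varphi\| + \|\varphi\|),
\]
so that $\|\varphi\| + \|(A+B(t))\varphi\| \leq \max\{1+a,1+b\}\,(\|\varphi\| + \|A\varphi\|)$; passing to the $\sqrt{\cdot}$ form on both sides, using $\sqrt{x^2+y^2}\leq x+y \leq \sqrt{2}\sqrt{x^2+y^2}$ once, yields $\|\varphi\|_{A+B(t)} \leq \sqrt{2}\max\{1+a,1+b\}\,\|\varphi\|_A$.

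For the lower bound I would run the triangle inequality the other way: $\|A\varphi\| \leq \|(A+B(t))\varphi\| + \|B(t)\varphi\| \leq \|(A+B(t))\varphi\| + a\|A\varphi\| + b\|\varphi\|$, and since $a<1$ this gives $(1-a)\|A\varphi\| \leq \|(A+B(t))\varphi\| + b\|\varphi\|$. Adding $\|\varphi\|$ and using the crude bounds $1-a+b \leq 1+b$ and $1 \leq 1+b$ yields $\|\varphi\| + \|A\varphi\| \leq \tfrac{1+b}{1-a}\,(\|\varphi\| + \|(A+B(t))\varphi\|)$; one more conversion to the $\sqrt{\cdot}$ form spends the last $\sqrt{2}$ and gives $\|\varphi\|_A \leq \sqrt{2}\,\tfrac{1+b}{1-a}\,\|\varphi\|_{A+B(t)}$, which rearranges to the asserted left-hand inequality. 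Chaining the two displayed inequalities then also gives $\|\cdot\|_{A+B(t)}\sim\|\cdot\|_{A+B(s)}$ for any $s,t$ with a constant independent of $s,t$, which is what is needed later to treat the $D(H(t))$ as one fixed Banach space.

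There is no genuine obstacle here: the entire content is bookkeeping of constants. The only point requiring a moment's care is matching the $\sqrt{2}$ factors — remembering that each passage between $\|\cdot\| + \|A\cdot\|$ and $\sqrt{\|\cdot\|^2+\|A\cdot\|^2}$ is only one-sided tight, so exactly one such passage must be spent on each side of the equivalence, and no more.
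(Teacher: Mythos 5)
Your proposal is correct and follows essentially the same route as the paper's proof: both directions rest on the triangle inequality combined with the $A$-boundedness estimate (the paper's insertion of $B(t)-B(t)$ into $\|A\varphi\|$ is exactly your reverse triangle inequality), and the constants, including the single $\sqrt{2}$ spent on each side via \autoref{lemma-equiv-graph-norm}, come out identically.
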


Note that the conditions exactly resemble those of \autoref{th-kato-rellich} (Kato--Rellich) and thus $A+B(t)$ are all self-adjoint on $D(A)$. Equivalent norms mean that the topologies defined on $D(A)$ by these norms are all identical. As $D(A)$ with graph norm $\|\cdot\|_A$ is also a Hilbert space with inner product $\langle \cdot,\cdot \rangle_A = \langle \cdot,\cdot \rangle + \langle A\cdot,A\cdot \rangle$ we get homeomorphic Hilbert spaces.

\begin{proof}
We show equivalence between the norm $\|\cdot\|_A$ and all ${\|\cdot\|_{A+B(t)}}$ which automatically implies equivalence between all the individual norms ${\|\cdot\|_{A+B(t)}}$. The graph norm was defined as $\|\cdot\|_A = \sqrt{\|\cdot\|^2 + \|A\cdot\|^2}$ but is again equivalent to the more convenient norm $\|\cdot\| + \|A\cdot\|$ as we have seen in \autoref{lemma-equiv-graph-norm} and which we will use here.\\
First we show for all $t$ and $\varphi \in D(A)$ that we can find a constant $c_1 > 0$ such that
\[
\|\varphi\| + \|(A+B(t))\varphi\| \leq c_1( \|\varphi\| + \|A\varphi\| ).
\]
This part of the proof follows directly from the $A$-boundedness of the $B(t)$.
\begin{align*}
\|\varphi\| + \|(A+B(t))\varphi\| &\leq \|\varphi\| + \|A\varphi\| + \|B(t)\varphi\| \\
&\leq (1+b)\|\varphi\| + (1+a)\|A\varphi\| \\
&\leq \max\{1+a,1+b\} (\|\varphi\| + \|A\varphi\|)
\end{align*}
The backwards part is to show that we can find a constant $c_2 > 0$ such that
\[
\|\varphi\| + \|A\varphi\| \leq c_2( \|\varphi\| + \|(A+B(t))\varphi\| ).
\]
We start by introducing $B(t)-B(t)$ into $\|A\varphi\|$.
\begin{align*}
\|\varphi\| + \|A\varphi\| &= \|\varphi\| + \|(A+B(t)-B(t))\varphi\| \\
&\leq \|\varphi\| + \|(A+B(t))\varphi\| + \|B(t)\varphi\|
\end{align*}
In the next step we use the $A$-boundedness of the $B(t)$ again.
\[
\|\varphi\| + \|A\varphi\| \leq (1+b)\|\varphi\| + a\|A\varphi\| + \|(A+B(t))\varphi\| \\
\]
Finally $0 < 1-a \leq 1$ yields
\[
(1-a) (\|\varphi\| + \|A\varphi\|) \leq (1+b)\|\varphi\| + \|(A+B(t))\varphi\|
\]
and thus because of $b \geq 0$
\[
\frac{1-a}{1+b} (\|\varphi\| + \|A\varphi\|) \leq \|\varphi\| + \|(A+B(t))\varphi\|.
\]
The inequalities of the graph norms follow after converting with \autoref{lemma-equiv-graph-norm}.
\end{proof}

We define a potential space for which not only $v$ but also its derivatives up to a certain order $m \in \N_0$ are of Kato perturbation type. We will not need higher derivatives in the proof for existence of Schrödinger solutions but later in a related proof that shows higher regularity of solutions.

\begin{definition}[\q{Sobolev--Kato space}]
\label{def-sobolev-kato}
\[
W^{m,\Sigma} = \{ v \mid D^{\alpha}v \in \Sigma(L^2 + L^\infty), |\alpha| \leq m \}
\]
Its canonical norm is defined with the help of the $\Sigma$-norm from \autoref{def-sum-space}.
\[
\|v\|_{m,\Sigma} = \left( \sum_{|\alpha| \leq m} \| D^\alpha v \|_\Sigma^2 \right)^{1/2} \sim \sum_{|\alpha| \leq m} \| D^\alpha v \|_\Sigma
\]
\end{definition}

We note $W^{0,\Sigma} = \Sigma = \Sigma(L^2 + L^\infty)$. Having the potentials depend on time we also need to put time-dependence into their function space and the necessary property will be Lipschitz-continuity in time with respect to the topology of the spatial part just defined, while the potential itself is in $W^{m,\Sigma}$ at all times. See \citeasnoun{johnson-1970} for a discussion of function spaces like $\Lip([0,T],X)$ for arbitrary Banach space $X$ that are Banach spaces again.

\begin{definition}[\q{Sobolev--Kato--Lipschitz space}]
\label{def-sobolev-kato-lipschitz}
\[
\Lip([0,T],W^{m,\Sigma})
\]
A norm for this space can be defined in the following way.
\[
\max_{t\in[0,T]} \|v(t)\|_{m,\Sigma} + \max_{t_2 > t_1} \frac{\|v(t_2)-v(t_1)\|_{m,\Sigma}}{t_2-t_1}
\]
\end{definition}

Note that the space of Lipschitz continuous functions is just $W^{1,\infty}$ (see \autoref{sect-lip-abs-cont}) because $[0,T]$ is clearly convex, yet another Sobolev space!
\[
\Lip([0,T],W^{m,\Sigma}) = W^{1,\infty}([0,T],W^{m,\Sigma})
\]
We have already conjectured such conditions for potentials in \citeasnoun{tddft-review} where it was also stated that physically they are quite reasonable. For $\Omega = \R^3$ and the potential given from a charge distribution $\rho \in \Lip([0,T],L^2)$ by integrating it with Green's function for the Laplace equation, the resulting potential is in $\Lip([0,T],H^2)\subset \Lip([0,T],W^{2,\Sigma})$.
\[
v(t,x) = -\frac{1}{4\pi}\int \frac{\rho(t,x')}{|x-x'|} \d x'
\]
This should not come as a surprise because this $v$ solves the Poisson equation $\Delta v = \rho$. Point sources such as delta distributions are excluded but charge distributions in molecules might also be modelled by finite functions. \cite{andrae-2000}

We start now with an extension of \autoref{lemma-sum-space-inequality} to Sobolev--Kato spaces. Because the original lemma used $D(H_0)$ for wave functions we arrive at the self-adjoint domain of $H_0^m$ (not the Sobolev space of course, the free Hamiltonian raised to the power of $m$). This domain is $H^{2m}(\R^n)$ if the configuration space is the whole $\R^n$ or if we consider periodic functions on some tessellated $\Omega \subset \R^n$, and it is $H^{2m}(\Omega)\cap H_0^{m}(\Omega)$ otherwise. In all cases we will simply write $H^{2m}$ as the corresponding domain of the self-adjoint $H_0^m$.

\begin{lemma}\label{lemma-rs-0}
For $v \in W^{2m,\Sigma}, \varphi \in H^{2(m+1)}$ it holds
\[
\|v \varphi\|_{2m,2} \lesssim \|v\|_{2m,\Sigma} \cdot \|\varphi\|_{2(m+1),2}.
\]
\end{lemma}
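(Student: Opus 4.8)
The plan is to reduce the statement to the already-established estimate of \autoref{lemma-sum-space-inequality} ($\|v\varphi\|_2 \lesssim \|v\|_\Sigma \|\varphi\|_{2,2}$) by applying it to the derivatives $D^\alpha(v\varphi)$ with $|\alpha| \leq 2m$ and expanding via the Leibniz rule. Concretely, for a multi-index $\alpha$ with $|\alpha| \leq 2m$ one has $D^\alpha(v\varphi) = \sum_{\beta \leq \alpha} \binom{\alpha}{\beta} (D^\beta v)(D^{\alpha-\beta}\varphi)$, so that $\|v\varphi\|_{2m,2} \sim \sum_{|\alpha|\leq 2m} \|D^\alpha(v\varphi)\|_2 \lesssim \sum_{|\alpha|\leq 2m} \sum_{\beta\leq\alpha} \|(D^\beta v)(D^{\alpha-\beta}\varphi)\|_2$ using the equivalent form of the Sobolev norm from \eqref{eq-sobolev-norm}. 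Each summand is a product of a function in $\Sigma(L^2+L^\infty)$ (namely $D^\beta v$, since $|\beta|\leq 2m$ and $v\in W^{2m,\Sigma}$) with a derivative $D^{\alpha-\beta}\varphi$. The key point is that $D^{\alpha-\beta}\varphi \in H^2$ whenever $|\alpha - \beta| \leq 2m$, because $\varphi \in H^{2(m+1)}$ and each weak differentiation of order up to $2m$ drops the Sobolev index by at most $2m$, leaving at least $H^2$; moreover the zero boundary conditions survive (the derivatives of an $H_0^{m+1}$-type function up to the appropriate order again vanish at the boundary, or one simply extends by zero as noted after \autoref{th-sum-space}), so \autoref{lemma-sum-space-inequality} applies to the pair $(D^\beta v, D^{\alpha-\beta}\varphi)$.

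Thus for each term I would write
\[
\|(D^\beta v)(D^{\alpha-\beta}\varphi)\|_2 \lesssim \|D^\beta v\|_\Sigma \cdot \|D^{\alpha-\beta}\varphi\|_{2,2} \lesssim \|v\|_{2m,\Sigma} \cdot \|\varphi\|_{2(m+1),2},
\]
where in the last step $\|D^\beta v\|_\Sigma \leq \|v\|_{2m,\Sigma}$ directly from \autoref{def-sobolev-kato}, and $\|D^{\alpha-\beta}\varphi\|_{2,2} = \sum_{|\gamma|\leq 2}\|D^{\gamma}D^{\alpha-\beta}\varphi\|_2 \leq \|\varphi\|_{2(m+1),2}$ since all the derivatives $D^\gamma D^{\alpha-\beta}\varphi$ have order $\leq 2m + 2 = 2(m+1)$. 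Summing over the finitely many pairs $(\alpha,\beta)$ (their number depends only on $m$ and $n$, hence folds into the implicit constant) yields $\|v\varphi\|_{2m,2} \lesssim \|v\|_{2m,\Sigma}\|\varphi\|_{2(m+1),2}$, as claimed. One should also invoke \autoref{th-sobolev-norm-laplace} implicitly to move freely between the graph-norm-of-$\Delta$ flavour of "$\|\cdot\|_{2,2}$" used inside \autoref{lemma-sum-space-inequality} and the genuine Sobolev norm, exactly as that lemma already does.

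The main obstacle is bookkeeping rather than anything conceptual: one must be careful that the $\Sigma(L^2+L^\infty)$ structure is really preserved under differentiation — but this is precisely the defining property of $W^{2m,\Sigma}$, so it costs nothing — and that the domain condition $\varphi \in H^{2(m+1)}$ with the appropriate zero boundary behaviour is exactly what is needed to guarantee $D^{\alpha-\beta}\varphi$ lands in the self-adjoint domain $H^2 = D(H_0)$ to which \autoref{lemma-sum-space-inequality} applies. A minor subtlety worth a sentence is that the Leibniz rule for weak derivatives of the product $v\varphi$ is legitimate here: each factor $D^\beta v$ is locally square-integrable and each $D^{\alpha-\beta}\varphi$ is at least in $H^2 \hookrightarrow L^\infty$ on $\R^3$ by \eqref{eq-sobolev-embedding}, so the products and the rule make sense distributionally. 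Beyond that the argument is a routine finite sum of invocations of the previous lemma, and I would not grind through the explicit combinatorial constants.
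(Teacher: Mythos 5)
Your proposal is correct and follows essentially the same route as the paper: expand $D^\alpha(v\varphi)$ by the Leibniz rule, apply \autoref{lemma-sum-space-inequality} to each product $D^\nu v \cdot D^{\alpha-\nu}\varphi$, and collect the resulting norms into $\|v\|_{2m,\Sigma}\cdot\|\varphi\|_{2(m+1),2}$ via \autoref{th-sobolev-norm-laplace}. The only cosmetic difference is that the paper first proves the estimate for test functions and then extends to $H^{2(m+1)}$ by density, whereas you argue directly with $\varphi \in H^{2(m+1)}$ and address the boundary behaviour in place.
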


\begin{proof}
Take a test function $\varphi \in \Cont^\infty_0$. From \autoref{lemma-sum-space-inequality} directly follows the case for $m=0$, i.e., $\|v \varphi\|_{2} \leq \sqrt{2} \|v\|_{\Sigma} \|\varphi\|_{2,2}$, and proceeding from that we give the proof for arbitrary orders $m$. We start by writing out the involved Sobolev space norm \eqref{eq-sobolev-norm} explicitly, then employ the general Leibniz rule for multivariable calculus, and eventually use \autoref{lemma-sum-space-inequality} on every single term of the sum.
\begin{align*}
\|v \varphi\|_{2m,2} \sim \smashoperator{\sum_{|\alpha| \leq 2m}} \left\| D^\alpha (v \varphi) \right\|_2 &= \sum_{|\alpha| \leq 2m} \Big\| \sum_{\nu \leq \alpha} {\alpha \choose \nu} D^\nu v \cdot D^{\alpha-\nu} \varphi \Big\|_2 \\
&\lesssim \sum_{|\alpha| \leq 2m} \sum_{\nu \leq \alpha} \| D^\nu v \cdot D^{\alpha-\nu} \varphi \|_2 \\
&\lesssim \sum_{|\alpha| \leq 2m} \sum_{\nu \leq \alpha} \| D^\nu v \|_\Sigma \cdot \| D^{\alpha-\nu} \varphi \|_{2,2} \\
&\leq \smashoperator{\sum_{|\alpha| \leq 2m}} \| D^\alpha v \|_\Sigma \cdot \smashoperator{\sum_{|\beta| \leq 2m}} \| D^{\beta} \varphi \|_{2,2} \\
&\sim \|v\|_{2m,\Sigma} \cdot \smashoperator{\sum_{|\beta| \leq 2(m+1)}} \| D^{\beta} \varphi \|_{2} \\
&\sim \|v\|_{2m,\Sigma} \cdot \sum_{l=0}^{m+1} \|\Delta^l\varphi\|_2
\end{align*}

\autoref{th-sobolev-norm-laplace} was used for the last equivalence between Sobolev-type norms. Now finally we employ a denseness argument just like in the proof of \autoref{th-kato} to arrive at the domain $D(\Delta^{m+1})=H^{2(m+1)}$ on which all $\Delta^l$, $0 \leq l \leq m+1$, are well-defined which concludes the proof.
\end{proof}

\begin{lemma}\label{lemma-rs-2}
For $v \in W^{2(m-1),\Sigma}$ the operator $H[v]^m = (H_0 + v)^m : H^{2m} \rightarrow L^2$ is bounded by
\[
\Pi_m[v] \sim \prod_{l=0}^{m-1} (1+\|v\|_{2l,\Sigma})
\]
depending continuously on $v$.
\end{lemma}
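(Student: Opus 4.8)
\emph{Proof proposal.} The plan is to establish, by induction on the number of factors one strips off, the stronger statement that for every $0\le j\le m$ the iterate $(H_0+v)^j$ maps $H^{2m}$ boundedly into $H^{2(m-j)}$ with
\[
\|(H_0+v)^j\varphi\|_{2(m-j),2}\lesssim \prod_{l=m-j}^{m-1}(1+\|v\|_{2l,\Sigma})\,\|\varphi\|_{2m,2},
\]
the empty product at $j=0$ being understood as $1$. Specialising to $j=m$ gives precisely the asserted estimate, with $\Pi_m[v]$ chosen to be a fixed (dimension-dependent) constant times $\prod_{l=0}^{m-1}(1+\|v\|_{2l,\Sigma})$; that choice makes $\Pi_m[v]\sim\prod_{l=0}^{m-1}(1+\|v\|_{2l,\Sigma})$ automatic, so only the upper bound has to be proved.

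For the induction, the base case $j=0$ is trivial. For the step I would write $(H_0+v)^{j+1}\varphi=(H_0+v)\psi$ with $\psi=(H_0+v)^j\varphi$, which by the inductive hypothesis lies in $H^{2(k+1)}$ for $k:=m-j-1$. Then $\|H_0\psi\|_{2k,2}=\onehalf\|\Delta\psi\|_{2k,2}\lesssim\|\psi\|_{2(k+1),2}$, since a second-order derivative costs only two orders of differentiability, while $\|v\psi\|_{2k,2}\lesssim\|v\|_{2k,\Sigma}\|\psi\|_{2(k+1),2}$ by \autoref{lemma-rs-0} applied with $k$ in the role of $m$; here one invokes that the Sobolev--Kato spaces are nested, so $v\in W^{2(m-1),\Sigma}\subseteq W^{2k,\Sigma}$ and $\|v\|_{2k,\Sigma}\le\|v\|_{2(m-1),\Sigma}$. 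Adding the two bounds and inserting the inductive estimate for $\|\psi\|_{2(k+1),2}$ contributes the factor $(1+\|v\|_{2k,\Sigma})\prod_{l=k+1}^{m-1}(1+\|v\|_{2l,\Sigma})=\prod_{l=k}^{m-1}(1+\|v\|_{2l,\Sigma})$, closing the induction. As in the proofs of \autoref{lemma-rs-0} and \autoref{th-kato}, I would first carry this out for $\varphi\in\Cont^\infty_0$ and then pass to all of $H^{2m}=D(H_0^m)$ by density, which also disposes of the boundary conditions in the bounded-domain case.

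For the continuous dependence on $v$ the plan is to use the telescoping identity
\[
(H_0+v)^m-(H_0+w)^m=\sum_{i=0}^{m-1}(H_0+v)^i\,(v-w)\,(H_0+w)^{m-1-i},
\]
valid for $v,w\in W^{2(m-1),\Sigma}$. The $i$-th summand, regarded as a map $H^{2m}\to L^2$, factors as $(H_0+w)^{m-1-i}:H^{2m}\to H^{2(i+1)}$, followed by multiplication by $v-w$ mapping $H^{2(i+1)}\to H^{2i}$, followed by $(H_0+v)^i:H^{2i}\to L^2$; the two outer factors are controlled by the induction above and the middle one by \autoref{lemma-rs-0} (using $v-w\in W^{2(m-1),\Sigma}\subseteq W^{2i,\Sigma}$). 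This yields $\|(H[v]^m-H[w]^m)\varphi\|_2\lesssim C\!\left(\|v\|_{2(m-1),\Sigma},\|w\|_{2(m-1),\Sigma}\right)\|v-w\|_{2(m-1),\Sigma}\|\varphi\|_{2m,2}$, i.e.\ $v\mapsto H[v]^m$ is locally Lipschitz, hence continuous, as a map into $\Bounded(H^{2m},L^2)$.

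The only genuine care needed --- and hence the main, if modest, obstacle --- is to make sure that at each stage $(H_0+v)^j\varphi$ actually lies in $H^{2(m-j)}$, so that the subsequent multiplication by $v$ and application of $\Delta$ are legitimate rather than merely formal; this is exactly why I would track the target Sobolev space along the induction instead of proving an $L^2$ bound directly, and why the density step from $\Cont^\infty_0$ to $D(H_0^m)$ must be done with the graph-norm argument already used for \autoref{lemma-rs-0}.
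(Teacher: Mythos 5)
Your proposal is correct and is essentially the paper's own argument: the paper likewise peels off one factor of $(H_0+v)$ at a time, using \autoref{lemma-rs-0} to trade each multiplication by $v$ for a factor $(1+\|v\|_{2l,\Sigma})$ and two orders of Sobolev regularity, ending at $\|\varphi\|_{2m,2}$; you have merely made the induction and the intermediate target spaces explicit. The telescoping argument for continuity of $v\mapsto H[v]^m$ is a sound addition that the paper leaves implicit.
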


\begin{proof}
Because $H_0 = -\onehalf \Delta$ and $v$ clearly do not commute it is hard to write out the operator explicitly. We can help ourselves using an iterative reasoning and referring repeatedly to \autoref{lemma-rs-0}.
\begin{align*}
\|(H_0 + v)^m \varphi\|_2 &= \|(H_0 + v)(H_0 + v)^{m-1} \varphi\|_2 \\
&\leq  \|(H_0 + v)^{m-1} \varphi\|_{2,2} + \|v(H_0 + v)^{m-1} \varphi\|_2 \\
&\lesssim (1+\|v\|_\Sigma) \cdot \|(H_0 + v)^{m-1} \varphi\|_{2,2} \\
&\lesssim \cdots \lesssim \prod_{l=0}^{m-1} (1+\|v\|_{2l,\Sigma}) \cdot \|\varphi\|_{2m,2}
\end{align*}
\end{proof}


\begin{lemma}\label{lemma-rs-1}
For $v \in W^{2(m-1),\Sigma}$ with $H[v] \geq 1$ the inverse operator $H[v]^{-m} = (H_0 + v)^{-m} : L^2 \rightarrow H^{2m}$ is bounded by 1.
\end{lemma}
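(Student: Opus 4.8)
The plan is to deduce everything from the positivity hypothesis together with \autoref{lemma-positive-inverse}. Since $v \in W^{2(m-1),\Sigma} \subseteq W^{0,\Sigma} = \Sigma(L^2+L^\infty)$, \autoref{th-sum-space} gives that $H[v] = H_0+v$ is self-adjoint and bounded below on $D(H_0) = H^2$, and by assumption $H[v] \geq 1$. First I would record that then also $H[v]^m \geq 1$ as a self-adjoint operator: for $\psi \in D(H[v])$ the CSB inequality together with positivity gives $\|H[v]\psi\|_2\,\|\psi\|_2 \geq \langle\psi,H[v]\psi\rangle \geq \|\psi\|_2^2$, hence $\|H[v]\psi\|_2 \geq \|\psi\|_2$; iterating this along the chain $\psi, H[v]\psi, \ldots, H[v]^{m-1}\psi \in D(H[v])$ (valid whenever $\psi = \varphi \in D(H[v]^m)$) yields $\|H[v]^m\varphi\|_2 \geq \|\varphi\|_2$, so the self-adjoint operator $H[v]^m$ is bounded below by $1$. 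Now \autoref{lemma-positive-inverse}, applied with $A = H[v]^m$ and $\lambda = 1$, produces the inverse $H[v]^{-m} : L^2 \to D(H[v]^m)$ as a bounded operator on $L^2$ with $\|H[v]^{-m}\| \leq 1$, which is the asserted bound, the target space $H^{2m}$ recording only that the range lies in $H^{2m}$. Alternatively, $H[v]^{-m} = (H[v]^{-1})^m$ is a product of $m$ operators each bounded by $1$ by \autoref{lemma-positive-inverse} in the case $m=1$, which also shows at once that $H[v]^{-m}$ is bounded, self-adjoint and positive.

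The remaining point is to identify the range $D(H[v]^m)$ with $H^{2m} = D(H_0^m)$, and this is where the hypothesis $v \in W^{2(m-1),\Sigma}$ — rather than merely $v \in \Sigma$ — enters. I would argue by induction on $m$, the case $m=1$ being $D(H[v]) = D(H_0) = H^2$ from \autoref{th-sum-space}. For the inclusion $H^{2m} \subseteq D(H[v]^m)$ one only needs that $H_0+v$ maps $H^{2k}$ boundedly into $H^{2(k-1)}$ for $1 \leq k \leq m$, which follows from $\Delta$ lowering the Sobolev index by $2$ together with the multiplication estimate \autoref{lemma-rs-0}. For the reverse inclusion, given $\varphi$ with $\varphi \in H^2$ and $(H_0+v)\varphi \in D(H[v]^{m-1}) = H^{2(m-1)}$, one bootstraps the regularity of $\varphi$: writing $H_0\varphi = (H_0+v)\varphi - v\varphi$ and, more generally, $D^\alpha(H_0+v)\varphi = H_0 D^\alpha\varphi + v D^\alpha\varphi + [D^\alpha,v]\varphi$, one notes that the commutator $[D^\alpha,v]$ involves only derivatives of $v$ of order $\leq |\alpha| \leq 2(m-1)$, all of which are Kato perturbations by the definition of $W^{2(m-1),\Sigma}$; applying the elliptic estimate for $H_0$ from \autoref{th-sobolev-norm-laplace} (that $\Delta^l\varphi \in L^2$ forces $\varphi \in H^{2l}$) stage by stage then raises $\varphi$ from $H^2$ to $H^4$ to $\ldots$ to $H^{2m}$.

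The main obstacle is precisely this last bootstrap for $m \geq 2$: the graph norms of $H_0$ and of $H_0+v$ are equivalent on their common domain (\autoref{lemma-graph-norms-AB}), but equality of first domains does not by itself transfer to second and higher powers, so one cannot simply write $D(H[v]^m) = D(H_0^m)$ and be done. The induction must be arranged so that each differentiation of the equation $(H_0+v)\varphi = \psi$ produces $v$-terms that never demand more regularity of $\varphi$ than has already been established at the current stage, and this is exactly what the graded condition $D^\alpha v \in \Sigma(L^2+L^\infty)$ for $|\alpha| \leq 2(m-1)$ guarantees, via repeated use of \autoref{lemma-rs-0} and \autoref{lemma-equiv-graph-norm}. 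Everything else — the existence of $H[v]^{-m}$ on all of $L^2$ and the norm bound $\leq 1$ — is immediate from \autoref{lemma-positive-inverse} once the positivity $H[v]^m \geq 1$ has been observed.
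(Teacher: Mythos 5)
Your core argument is the same as the paper's: establish $H[v]^m\geq 1$ and invoke \autoref{lemma-positive-inverse} with $\lambda=1$. The paper gets the lower bound by the variance identity $\langle H^2\rangle_\varphi\geq\langle H\rangle_\varphi^2$ followed by an induction in steps of two; your iterated-CSB route is fine in spirit, but note that $\|H[v]^m\varphi\|_2\geq\|\varphi\|_2$ alone only says $\sigma(H[v]^m)$ avoids $(-1,1)$ — you need to add positivity of $H[v]^m$ (immediate from $\sigma(H[v])\subseteq[1,\infty)$ and spectral mapping) before concluding the form bound $H[v]^m\geq 1$. Your alternative, $H[v]^{-m}=(H[v]^{-1})^m$ as a product of $m$ contractions, is the cleanest way to the norm bound and sidesteps this entirely.

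Where your proposal goes beyond the paper is the identification $D(H[v]^m)=H^{2m}$, which the paper simply asserts by pointing at \autoref{lemma-rs-2} (that lemma only gives the easy inclusion $H^{2m}\subseteq D(H[v]^m)$). You are right that this is the real content of the hypothesis $v\in W^{2(m-1),\Sigma}$, but the bootstrap you sketch does not close. At each stage you know $\varphi\in H^{2k}$ and $(H_0+v)\varphi\in H^{2(m-1)}$ and want $\varphi\in H^{2k+2}$; writing $H_0\varphi=(H_0+v)\varphi-v\varphi$, you would need $v\varphi\in H^{2k}$, but \autoref{lemma-rs-0} delivers $v\varphi\in H^{2k}$ only from $\varphi\in H^{2(k+1)}$ — precisely what you are trying to prove. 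From $\varphi\in H^{2k}$ you only get $v\varphi\in H^{2(k-1)}$, hence $H_0\varphi\in H^{2(k-1)}$, hence $\varphi\in H^{2k}$ again: zero gain, and the commutator rewriting does not help because the top-order term $v\,D^\alpha\varphi$ still costs two derivatives. To make this work you must exploit that each $D^\nu v$ is a Kato perturbation with \emph{relative bound zero}, so that the top-order contribution appears with an arbitrarily small coefficient and can be absorbed, giving the a priori estimate $\|\varphi\|_{2(k+1),2}\lesssim\|(H_0+v)\varphi\|_{2k,2}+\|\varphi\|_{2k,2}$; even then one needs an approximation argument to turn the a priori estimate into genuine regularity, or more cleanly a Kato--Rellich argument showing $H[v]^m$ is self-adjoint on $D(H_0^m)=H^{2m}$, whence $D(H[v]^m)=H^{2m}$ by maximality of self-adjoint operators. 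This gap is inherited from the paper rather than introduced by you, but as written your second and third paragraphs do not supply the missing step.
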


\begin{proof}
We have $H[v]^m : H^{2m} \rightarrow L^2$ as an operator as in \autoref{lemma-rs-2} that is self-adjoint as the power of an self-adjoint operator. Now we also have to show that $H[v]^m \geq 1$ to successfully employ \autoref{lemma-positive-inverse} and have an inverse bounded by 1. We start with $m=2$ and derive the estimate usually given for the variance of an operator in the state $\varphi$ normalised to $\|\varphi\|_2=1$.
\[
0 \leq \langle (H[v]- \langle H[v] \rangle_\varphi )^2 \rangle_\varphi = \langle H[v]^2 \rangle_\varphi - \langle H[v] \rangle_\varphi^2
\]
Thus by assumption
\[
\langle H[v]^2 \rangle_\varphi \geq \langle H[v] \rangle_\varphi^2 \geq 1.
\]
We proceed further by induction from $m$ to $m+2$ spanning all natural numbers because we now have the liberty to start at $m=1$ or 2.
\[
\langle H[v]^{m+2} \rangle_\varphi = \langle H[v]\varphi, H[v]^m H[v]\varphi \rangle \geq \langle H[v]\varphi, H[v]\varphi \rangle=\langle H[v]^2 \rangle_\varphi \geq 1.
\]
\end{proof}

\begin{lemma}\label{lemma-rs-3}
Let $v \in \Lip([0,T], W^{2(m-1),\Sigma}), m\geq 1$, with Lipschitz constant $L$, and $H([v],t) \geq 1$ for all $t \in [0,T]$. Then $H([v],t)^m H([v],t')^{-m}$ $= K_m([v],t,t') + \id : L^2 \rightarrow L^2$ with $K_m([v],t,t')$ bounded by $C_m[v] L |t-t'|$, $C_m[v]$ continuous in $v$ and $C_1[v]=\sqrt{2}$.
\end{lemma}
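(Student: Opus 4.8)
The plan is to isolate the identity and write, abbreviating $H(t):=H([v],t)=H_0+v(t)$,
\[
K_m([v],t,t') = H(t)^m H(t')^{-m} - \id = \big(H(t)^m - H(t')^m\big)\,H(t')^{-m}.
\]
Since $H(t)-H(t') = v(t)-v(t')$ is a pure multiplication operator, the telescoping identity $H(t)^m - H(t')^m = \sum_{j=0}^{m-1} H(t)^j\,(H(t)-H(t'))\,H(t')^{m-1-j}$ collapses, after inserting $H(t')^{-m}$ on the right, to
\[
K_m([v],t,t') = \sum_{j=0}^{m-1} H(t)^j\,(v(t)-v(t'))\,H(t')^{-(j+1)} .
\]
Each summand is then to be estimated along the chain $L^2 \xrightarrow{H(t')^{-(j+1)}} H^{2(j+1)} \xrightarrow{\,v(t)-v(t')\,} H^{2j} \xrightarrow{H(t)^j} L^2$.

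For this I would first record a graded refinement of \autoref{lemma-rs-2}: by literally the same iteration (repeatedly peeling off one factor $H_0+v$ and applying \autoref{lemma-rs-0} together with the $H^{2k}$-boundedness of $\Delta$ from \autoref{th-sobolev-norm-laplace}), for $v(t)\in W^{2(m-1),\Sigma}$ and $j\le m-1$ the operator $H(t)^j:H^{2j}\to L^2$ is bounded by $\lesssim\prod_{i=0}^{j-1}(1+\|v(t)\|_{2i,\Sigma})$. Next, \autoref{lemma-rs-1} gives that $H(t')^{-(j+1)}:L^2\to H^{2(j+1)}$ is bounded (by $1$ in the appropriate graph norm), using the hypothesis $H([v],t')\ge 1$ and $v(t')\in W^{2j,\Sigma}$. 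Finally the middle factor is controlled by \autoref{lemma-rs-0} applied to the potential $v(t)-v(t')\in W^{2j,\Sigma}$ (legitimate since $2j\le 2(m-1)$): for $\psi\in H^{2(j+1)}$,
\[
\|(v(t)-v(t'))\psi\|_{2j,2}\lesssim \|v(t)-v(t')\|_{2j,\Sigma}\,\|\psi\|_{2(j+1),2}\le L\,|t-t'|\,\|\psi\|_{2(j+1),2},
\]
the last step being the Lipschitz hypothesis.

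Chaining these three bounds for $\varphi\in L^2$ gives, for the $j$-th summand, $\|H(t)^j(v(t)-v(t'))H(t')^{-(j+1)}\varphi\|_2\lesssim \prod_{i=0}^{j-1}(1+\|v(t)\|_{2i,\Sigma})\cdot L|t-t'|\cdot\|\varphi\|_2$, and summing over $j=0,\dots,m-1$ yields $\|K_m([v],t,t')\|\le C_m[v]\,L\,|t-t'|$ with $C_m[v]\sim\sum_{j=0}^{m-1}\prod_{i=0}^{j-1}(1+\|v(t)\|_{2i,\Sigma})$. Since each $\|v(t)\|_{2i,\Sigma}$ is dominated by $\|v\|_{\Lip([0,T],W^{2(m-1),\Sigma})}$, the quantity $C_m[v]$ is bounded by a polynomial in the trajectory norm and hence depends continuously on $v$. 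For $m=1$ only the $j=0$ term survives with empty product, $K_1=(v(t)-v(t'))H(t')^{-1}$, and \autoref{lemma-sum-space-inequality} together with the bound $1$ from \autoref{lemma-rs-1} give $\|K_1\varphi\|_2\le\sqrt2\,\|v(t)-v(t')\|_\Sigma\|\varphi\|_2\le\sqrt2\,L|t-t'|\|\varphi\|_2$, i.e.\ $C_1[v]=\sqrt2$.

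The main obstacle is the domain bookkeeping: one must verify that each intermediate vector in the chain actually lies in the Sobolev space required for the next operator — which is exactly why $H([v],t')\ge1$, and hence invertibility with range in $H^{2(j+1)}$, is indispensable — and that the constants produced really are functions of $v$ alone and not of a chosen $L^2+L^\infty$ decomposition, so that ``continuous in $v$'' is meaningful. The one genuinely new technical ingredient is the graded version of \autoref{lemma-rs-2} controlling $H(t)^j:H^{2j}\to L^2$ rather than only $H(t)^m:H^{2m}\to L^2$; it is proved by the same iteration, but the indexing of the Sobolev scales along the chain must be tracked carefully.
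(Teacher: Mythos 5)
Your proof is correct and essentially the same as the paper's: the explicit telescoping sum $K_m=\sum_{j=0}^{m-1}H(t)^j(v(t)-v(t'))H(t')^{-(j+1)}$ is exactly the closed form of the paper's induction ($K_m=H(t)^{m-1}(v(t)-v(t'))H(t')^{-m}+K_{m-1}$), and each summand is estimated by the same chain of \autoref{lemma-rs-2}, \autoref{lemma-rs-0}, and \autoref{lemma-rs-1}. The ``graded refinement'' you call for is not genuinely new --- it is just \autoref{lemma-rs-2} applied with $m$ replaced by $j\le m-1$, which the hypothesis $v(t)\in W^{2(m-1),\Sigma}$ already licenses.
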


\begin{proof}
We omit the functional argument $[v]$ for brevity and start with $m=1$. 
\begin{align*}
H(t)H(t')^{-1} &= (H(t)-H(t')) H(t')^{-1} + \id \\
&= (v(t)-v(t')) H(t')^{-1} + \id = K_1(t,t') + \id
\end{align*}
We can derive a bound with \autoref{lemma-sum-space-inequality} and \autoref{lemma-rs-1}.
\[
\|K_1(t,t') \varphi\|_2 \leq \sqrt{2} \|v(t)-v(t')\|_\Sigma \|H(t')^{-1} \varphi\|_{2,2} \leq \sqrt{2} \|v(t)-v(t')\|_\Sigma \|\varphi\|_{2}
\]
Introducing the Lipschitz constant $L$ of $v \in \Lip([0,T], W^{0,\Sigma}) = \Lip([0,T], \Sigma(L^2+L^\infty))$ we get
\[
\|K_1(t,t') \varphi\|_2 \leq \sqrt{2} |t-t'| \frac{\|v(t)-v(t')\|_\Sigma}{|t-t'|} \|\varphi\|_{2} \leq \sqrt{2} L |t-t'| \cdot \|\varphi\|_{2}
\]
which proves the proposition for $m=1$. Now assume it to hold for for $K_{m-1}$, so we want to prove it for $m>1$ by induction.
\begin{align*}
H(t)^m H(t')^{-m} &= H(t)^{m-1} \left( (v(t)-v(t')) H(t')^{-1} + \id\right) H(t')^{-m+1} \\
&= H(t)^{m-1} (v(t)-v(t'))H(t')^{-m} + H(t)^{m-1} H(t')^{-m+1} \\
&= H(t)^{m-1} (v(t)-v(t'))H(t')^{-m} + K_{m-1}(t,t') + \id
\end{align*}
We only need to find the corresponding estimate for the first term so we use \autoref{lemma-rs-2} to get rid of the preceding $H(t)^{m-1}$, then \autoref{lemma-rs-0} to get an estimate separating potential and wave function, and finally \autoref{lemma-rs-1} to get rid of $H(t')^{-m}$.
\begin{align*}
\| H(t)^{m-1} &(v(t)-v(t'))H(t')^{-m} \varphi \|_2 \\
&\leq \Pi_{m-1}[v(t)] \cdot\| (v(t)-v(t'))H(t')^{-m} \varphi \|_{2(m-1),2} \\
&\lesssim \Pi_{m-1}[v(t)] \cdot\|v(t)-v(t')\|_{2(m-1),\Sigma} \cdot \|H(t')^{-m} \varphi\|_{2m,2} \\
&\leq \Pi_{m-1}[v(t)] \cdot\|v(t)-v(t')\|_{2(m-1),\Sigma} \cdot \|\varphi\|_{2}
\end{align*}
Like before we introduce the Lipschitz constant $L$, this time corresponding to $v \in \Lip([0,T], W^{2(m-1),\Sigma})$ and surely larger than $L$ for lower $m$. The other iteratively aggregated constants are collected together with the $\Pi_{l}[v(t)]$ in a term $C_m[v]$ depending continuously on $v$ and the proposition follows.
\end{proof}

\subsection{Existence of Schrödinger solutions}
\label{sect-existence-stepwise-static}

\begin{theorem}\label{th-schro-dyn-td}
In the case of $H(t) = H_0 + v(t)$ with $v \in \Lip([0,T],$ $\Sigma(L^2 + L^\infty))$ a solution to the Schrödinger equation \eqref{cauchy-problem-se} is given by a unitary evolution system $\psi(t) = U(t,0)\psi_0$ given by the limit procedure \eqref{U-limit} below. For an initial state $\psi_0 \in D(H_0)$ there holds $\psi(t) \in D(H_0)$ (classical solution) but the domain of the evolution system can be extended to all $\psi_0 \in \H$ (generalised solution).
\end{theorem}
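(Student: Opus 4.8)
The plan is to build $U(t,s)$ as a strong limit of \q{stepwise static} approximants, in the spirit of \citeasnoun[Th.~X.70]{reed-simon-2}, with the Kato-perturbation lemmas of this section supplying the \emph{uniform} estimates that make the limit go through. A harmless preliminary reduction: $t\mapsto\|v(t)\|_\Sigma$ is continuous on the compact interval $[0,T]$, hence bounded, so by \autoref{th-sum-space} and the explicit lower bound of \autoref{th-kato-rellich} there is a constant $c$ with $H(t)+c\geq 1$ for all $t$; replacing $v(t)$ by $v(t)+c$ only multiplies the solution by the global phase $\e^{\i ct}$ and does not change the Lipschitz constant $L$ of $v$, so I may assume $H(t)\geq 1$. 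For each $n\in\N$ partition $[0,T]$ by $t_k=kT/n$, put $H_k=H(t_k)$ (self-adjoint on $D(H_0)$ by \autoref{th-sum-space}), let $\Delta=T/n$, and define $U_n(t,s)$ by composing along the partition the unitary groups $\e^{-\i H_k s'}$ obtained from \autoref{th-schro-dyn-konst} (Stone), with partial steps at the ends. Each $U_n(t,s)$ is unitary, maps $D(H_0)$ into itself (the group $\e^{-\i H_ks'}$ preserves $D(H_k)=D(H_0)$, cf.\ \autoref{cor-evolut-stable}), and on $D(H_0)$ satisfies $\partial_tU_n(t,s)=-\i H_n(t)U_n(t,s)$ with $H_n(t)=H_0+v_n(t)$, $v_n$ the left-endpoint step function of $v$, for which $\|v_n(t)-v(t)\|_\Sigma\leq L\Delta$.

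\emph{Uniform $H^2$ bound.} Fix $\psi_0\in D(H_0)$ and write $\phi_j=U_n(t_j,0)\psi_0\in D(H_0)$. Since $\e^{-\i H_js'}$ commutes with $H_j$ one has $\|H_j\phi_{j+1}\|_2=\|H_j\e^{-\i H_j\Delta}\phi_j\|_2=\|H_j\phi_j\|_2$, hence $\|H_{j+1}\phi_{j+1}\|_2\leq\|H_{j+1}H_j^{-1}\|\,\|H_j\phi_j\|_2$, and \autoref{lemma-rs-3} with $m=1$ gives $\|H_{j+1}H_j^{-1}\|\leq 1+\sqrt2\,L\Delta$. Iterating, $\|H(t_j)U_n(t_j,0)\psi_0\|_2\leq(1+\sqrt2\,LT/n)^n\|H(0)\psi_0\|_2\leq\e^{\sqrt2\,LT}\|H(0)\psi_0\|_2$; extending to intermediate $t$ costs one more factor $\|H(t)H(t_j)^{-1}\|\leq 1+\sqrt2\,LT$ (again \autoref{lemma-rs-3}), and passing from $\|H(t)\cdot\|_2$ to $\|H_0\cdot\|_2$ by the relative bound $<1$ of $v(t)$ and then to $\|\cdot\|_{2,2}$ by \autoref{th-sobolev-norm-laplace} produces a finite $C(\psi_0):=\sup_n\sup_{s,t}\|U_n(t,s)\psi_0\|_{2,2}$.

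\emph{Convergence and evolution-system properties.} Comparing $U_n$ and $U_m$ on a common refinement of their partitions, the fundamental theorem of calculus yields the Duhamel estimate
\[
\|U_n(t,s)\psi_0-U_m(t,s)\psi_0\|_2\leq\int_s^t\big\|(v_n(r)-v_m(r))\,U_m(r,s)\psi_0\big\|_2\d r,
\]
which by \autoref{lemma-sum-space-inequality} and $\|U_m(r,s)\psi_0\|_{2,2}\leq C(\psi_0)$ is at most $\sqrt2\,C(\psi_0)\,T L(1/n+1/m)$. Hence $U_n(t,s)$ is Cauchy on the dense set $D(H_0)$, uniformly in $(s,t)$; as $\|U_n(t,s)\|=1$ the limit extends to a strongly continuous family $U(t,s)$ on all of $\H$. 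Passing to the limit in $U_n(t,t)=\id$, in $U_n(t,r)U_n(r,s)=U_n(t,s)$ (choosing partitions containing $r$), and in $U_n(s,t)=U_n(t,s)^{*}$, and noting that $U(s,t)$ is likewise an isometry with $U(s,t)U(t,s)=U(t,s)U(s,t)=\id$ in the limit, shows $U(t,s)$ is a unitary evolution system, i.e.\ satisfies (i)--(iii) of \autoref{def-evolution-system}.

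\emph{The differential equation and the extension.} For $\psi_0\in D(H_0)$ the sequence $U_n(t,0)\psi_0$ is bounded in the Hilbert space $H^2(\Omega^N)$ and converges in $L^2$ to $\psi(t)=U(t,0)\psi_0$, so $\psi(t)\in D(H_0)$ with $\|\psi(t)\|_{2,2}\leq C(\psi_0)$. Testing $U_n(t,0)\psi_0-\psi_0=-\i\int_0^tH_n(r)U_n(r,0)\psi_0\d r$ against $\eta\in D(H_0)$, moving $H_n(r)$ onto $\eta$ by self-adjointness, and using that $H_n(r)\eta\to H(r)\eta$ and $U_n(r,0)\psi_0\to\psi(r)$ \emph{uniformly} in $r$, gives $\psi(t)=\psi_0-\i\int_0^tH(r)\psi(r)\d r$ with the integrand bounded and weakly continuous, hence Bochner integrable. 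After extension this already provides a generalised solution for every $\psi_0\in\H$, and uniqueness (classical and generalised) is \autoref{lemma-uniqueness}. To obtain a \emph{classical} solution in the sense of \autoref{def-classical-sol} one still needs $r\mapsto H(r)\psi(r)$ to be norm-continuous; the only non-routine point is norm-continuity of $r\mapsto\psi(r)$ in $D(H_0)$, equivalently $\|(U(r',r)-\id)\varphi\|_{2,2}\to0$ for $\varphi\in D(H_0)$, which is exactly what \autoref{lemma-rs-3} (control of $H(r')H(r)^{-1}-\id$ to first order in $|r'-r|$) is designed to yield via the same commutator bookkeeping as in the $H^2$ bound above. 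I expect this transfer of the $L^2$ strong continuity of $U$ to the graph-norm topology to be the main obstacle; everything else reduces to the Duhamel identity and the a priori bounds.
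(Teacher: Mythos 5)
Your proposal is correct and follows essentially the same route as the paper's own proof: the stepwise static approximation with $H(t)\geq 1$ assumed without loss of generality, the uniform $H^2$ bound obtained from \autoref{lemma-rs-3} (your step-by-step iteration of $\|H_{j+1}H_j^{-1}\|\leq 1+\sqrt2\,L\Delta$ is just a re-bookkeeping of the paper's telescoping product of $K+\id$ factors), the Duhamel/Cauchy comparison of $U_n$ and $U_m$ via \autoref{lemma-sum-space-inequality}, and extension to all of $\H$ by density and uniform boundedness. Your closing remarks on transferring strong continuity to the graph norm make explicit a point the paper passes over quickly, but they do not change the argument.
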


\begin{proof}
We start by dividing the time-interval $[0,T]$ into subintervals of length $1/k$ with some $k \in \N$. Let $t$ be within the $i$-th interval, i.e., $t \in [(i-1)/k,i/k] = [\lfloor t k \rfloor/k, (\lfloor t k \rfloor+1)/k]$ and $i \in \{1, \ldots, \lceil T k \rceil\}$. Then we construct an approximation to the evolution system by $U_k(t,s)$ composed by stepwise evolutions with constant Hamiltonians.
\begin{equation}\label{eq-schro-dyn-td-unitary-approx}
\begin{aligned}
U_k(t,s) &= \exp\left(-\i H\left(\frac{i-1}{k}\right) (t-s) \right) &\mtext{if} \frac{i-1}{k} \leq s \leq t \leq \frac{i}{k} \\[0.5em]
U_k(t,s) &= U_k(t,r) U_k(r,s) &\mtext{if} 0 \leq s \leq r \leq t \leq T
\end{aligned}
\end{equation}
The existence of each individual exponential as a unitary operator $D(H_0) \rightarrow D(H_0)$ is guaranteed by \autoref{th-schro-dyn-konst} (Stone's theorem). We now want to show that for all $\psi_0 \in D(H_0)$
\begin{equation}\label{U-limit}
U(t,0)\psi_0 = \lim_{k \rightarrow \infty} U_k(t,0)\psi_0
\end{equation}
converges uniformly in the strong $L^2$ topology and that it yields an appropriate evolution system for the Schrödinger equation. Convergence is guaranteed if the $U_k(t,0)\psi_0$ form a Cauchy sequence. We thus form
\begin{equation}\label{eq-U-trick}
\begin{aligned}
(U_k(t,0)&-U_l(t,0))\psi_0 = U_l(t,s) U_k(s,0)\psi_0\Big|_{s=0}^t \\
&= \int_0^t \partial_s (U_l(t,s) U_k(s,0)\psi_0) \d s = \\
&= \int_0^t \big((\partial_s U_l(t,s)) U_k(s,0) + U_l(t,s) (\partial_s U_k(s,0))\big)\psi_0 \d s.
\end{aligned}
\end{equation}
Now if $s \neq j/k$ (the integral is then really a sum of Riemann integrals over intervals where the derivative exists) we can strongly differentiate according to the definition \eqref{eq-schro-dyn-td-unitary-approx}.
\begin{align*}
\partial_s U_k(s,0) &= -\i H\left( \frac{\lfloor sk \rfloor}{k} \right) U_k(s,0) \\
\partial_s U_l(t,s) &= \i U_l(t,s) H\left( \frac{\lfloor sl \rfloor}{l} \right)
\end{align*}
We get the following sequence of differences.
\begin{equation}\label{eq-schro-dyn-td-cauchy}
\begin{aligned}
(U_k(t,0)&-U_l(t,0))\psi_0 \\
&= \i \int_0^t U_l(t,s) \left( H\left( \frac{\lfloor sl \rfloor}{l} \right) - H\left( \frac{\lfloor sk \rfloor}{k} \right) \right) U_k(s,0)\psi_0 \d s \\
&= \i \int_0^t U_l(t,s) \left( v\left( \frac{\lfloor sl \rfloor}{l} \right) - v\left( \frac{\lfloor sk \rfloor}{k} \right) \right) U_k(s,0)\psi_0 \d s
\end{aligned}
\end{equation}
One could think that $v \in \Cont^0([0,T], \Sigma(L^2 + L^\infty))$ is now enough to prove that this sequence is Cauchy because $\lfloor sl \rfloor/l - \lfloor sk \rfloor/k \rightarrow 0$ as $l,k \rightarrow \infty$, but we still need $\lim_{k \rightarrow \infty} U_k(s,0)\psi_0 \in D(H_0)$ strongly in $H_0$ graph norm topology. This is because we need to guarantee a value in $L^2$ after the multiplication with the potential difference in $\Sigma(L^2 + L^\infty)$ by \autoref{lemma-sum-space-inequality}. The idea is to push the Hamiltonians trough all the short-time evolutions until they get applied to the initial state. This is achieved by using the fact that we have a bounded inverse operator $H^{-1}(t) : L^2 \rightarrow D(H_0)$ for \emph{strictly positive} Hamiltonians by \autoref{lemma-positive-inverse}. The trick is now just to add a big enough constant $c>0$ to $H(t)$ bounded below (this is already guaranteed by \autoref{th-sum-space}) to get a self-adjoint operator $\tilde{H}(t) = H(t) + c \geq 1$ for all $t \in [0,T]$ such that $\tilde{H}^{-1}(t) : L^2 \rightarrow D(H_0)$ surely exists. It can be introduced into all unitary evolution operators as well and in conclusion we then simply transform back to the usual evolution system by
\[
U(t,0) = \e^{\i c t} \tilde{U}(t,0).
\]
So without loss of generality we assume $H(t) \geq 1$ for all $t \in [0,T]$ with an inverse operator bounded by 1 in $L^2 \rightarrow L^2$ operator norm by \autoref{lemma-positive-inverse}. We start now by splitting the approximated evolution system into its individual short-interval parts.
\begin{equation}\label{eq-schro-dyn-td-U-splitting}
U_k(s,0) = U_k\left(s,\frac{\lfloor sk \rfloor}{k}\right) U_k\left(\frac{\lfloor sk \rfloor}{k},\frac{\lfloor sk \rfloor-1}{k}\right) \ldots U_k\left(\frac{1}{k},0\right)
\end{equation}
Next we introduce the identity $H^{-1}(t)H(t) : D(H_0) \rightarrow D(H_0)$ in front of every such short-time evolution taking $t$ for the lower bound of the corresponding time interval. This means we can swap $H(t)$ and the evolution operator according to \autoref{cor-evolut-stable} where we showed that the evolution for a constant Hamiltonian stabilises its domain.
\begin{align*}
U_k(s,0) = &H^{-1}\left(\frac{\lfloor sk \rfloor}{k}\right) U_k\left(s,\frac{\lfloor sk \rfloor}{k}\right) H\left(\frac{\lfloor sk \rfloor}{k}\right) \\
&H^{-1}\left(\frac{\lfloor sk \rfloor-1}{k}\right) U_k\left(\frac{\lfloor sk \rfloor}{k},\frac{\lfloor sk \rfloor-1}{k}\right)  H\left(\frac{\lfloor sk \rfloor-1}{k}\right) \cdots\\
&H^{-1}(0) U_k\left(\frac{1}{k},0\right) H(0)
\end{align*}
We rewrite the $H$-$H^{-1}$ encounters as in \autoref{lemma-rs-3} omitting the index of $K_1=K$.
\begin{align*}
U_k(s,0) = &H^{-1}\left(\frac{\lfloor sk \rfloor}{k}\right) U_k\left(s, \frac{\lfloor sk \rfloor}{k} \right) \\
&\prod_{j=1}^{\lfloor sk \rfloor} \left[ K\left( \frac{j}{k},\frac{j-1}{k} \right) + \id \right] U_k\left( \frac{j}{k}, \frac{j-1}{k} \right) H(0)
\end{align*}
(Note that the time-ordering in the product is important because the involved terms do not commute and will lead right to left from small to large times.) We use \autoref{lemma-rs-1} to get rid of the preceding $H^{-1}$ when evaluating the $H^2$-norm and \autoref{lemma-rs-3} tells us that the operator $K : L^2 \rightarrow L^2$ is bounded by $\sqrt{2} L /k$. Finally $H(0)$ is estimated by \autoref{lemma-rs-2} and the limit $k\rightarrow\infty$ for the sequence of approximations to the evolution system applied to a $\psi_0 \in D(H_0)$ is taken.
\begin{align}\label{eq-schro-evolut-estimate}
\|U_k(s,0)\psi_0\|_{2,2} &\leq \left\| \prod_{j=1}^{\lfloor sk \rfloor} \left[ K\left( \frac{j}{k},\frac{j-1}{k} \right) + \id \right] U_k\left( \frac{j}{k}, \frac{j-1}{k} \right) H(0) \psi_0 \right\|_2 \nonumber\\
&\leq \prod_{j=1}^{\lfloor sk \rfloor} \left[ \frac{\sqrt{2} L}{k} +1 \right] \|H(0)\psi_0\|_2 \nonumber\\
&\leq \left[ \frac{\sqrt{2} L}{k} +1 \right]^{\lfloor sk \rfloor} (1+\|v(0)\|_\Sigma) \cdot \|\psi_0\|_{2,2} \nonumber\\
&= \left(\left[ \frac{\sqrt{2} L}{k} +1 \right]^k \right)^{\lfloor sk \rfloor/k} (1+\|v(0)\|_\Sigma) \cdot \|\psi_0\|_{2,2} \nonumber\\[0.5em]
&\longrightarrow (1+\|v(0)\|_\Sigma) \cdot \e^{\sqrt{2} L s} \|\psi_0\|_{2,2}
\end{align}
This result also proves the $H^2$-regularity of solutions (i.e., classical solutions) subject to evolutions with $\mathrm{Lip}([0,T],\Sigma(L^2 + L^\infty))$ potentials. Higher order regularity will be addressed in \autoref{sect-regularity-sobolev}. (Note a certain relatedness of this result to estimates like Gronwall's inequality for ordinary differential equations.)\\
The task was to show (uniform) convergence of the sequence of unitary $U_k(t,0) : L^2 \rightarrow L^2$ by checking the Cauchy condition. So from \eqref{eq-schro-dyn-td-cauchy} we finally have the following estimate which holds on the dense subset $\psi_0 \in D(H_0) \subset L^2$.
\begin{align*}
\|&(U_k(t,0)-U_l(t,0))\psi_0\|_2 \\
&\leq \int_0^t \left\|\left( v\left( \frac{\lfloor sl \rfloor}{l} \right) - v\left( \frac{\lfloor sk \rfloor}{k} \right) \right) U_k(s,0)\psi_0\right\|_2 \d s \\
&\leq \int_0^t \sqrt{2} (1+\|v(0)\|_\Sigma) \cdot \e^{\sqrt{2} L s} \left\| v\left( \frac{\lfloor sl \rfloor}{l} \right) - v\left( \frac{\lfloor sk \rfloor}{k}\right) \right\|_{\Sigma} \|\psi_0\|_{2,2} \d s \\
&\leq L^{-1}(1+\|v(0)\|_\Sigma) \cdot \left( \e^{\sqrt{2} L t}-1 \right) \max_{s \in [0,t]} \left\| v\left( \frac{\lfloor sl \rfloor}{l} \right) - v\left( \frac{\lfloor sk \rfloor}{k}\right) \right\|_{\Sigma} \|\psi_0\|_{2,2} \\[0.4em]
&\longrightarrow 0 \mtext{as} k,l \rightarrow \infty
\end{align*}
We used \autoref{lemma-sum-space-inequality} again and have a null sequence clearly because of the continuity argument brought forward already after \eqref{eq-schro-dyn-td-cauchy}. Uniform convergence is guaranteed because $t \in [0,T]$ is from a bounded set. Uniform boundedness as in \autoref{th-schro-dyn-konst} (Stone's theorem) shows existence of a unitary $U(t,0)$ on the whole Hilbert space $L^2$. Finally the properties of an evolution system are ensured by construction, the uniform convergence in \eqref{U-limit} allowing the commutation of the defining limit and the differentiation by time.
\end{proof}

Note that instead of the Lipschitz property for $v$ we could also make the stronger assumption of continuous differentiability $v \in \Cont^1([0,T],$ $\Sigma(L^2 + L^\infty))$ and set $L = \max_{t\in[0,T]} \|\dot{v}(t)\|_{\Sigma}$. This condition appears in the original proof of \citeasnoun{reed-simon-2}. A similar proof strategy is also used in \citeasnoun[ch.~5, Th.~3.1]{pazy} but in the more general setting of a `hyperbolic' evolution equation resting on \citeasnoun{kato1973}. It only requires continuity in time with a remark added addressing the possibility of an even weaker condition on the generator family being $L^1$ in time. Yet this proof has not been studied that thoroughly and we rely on the results above for the derivations in the following section thus sticking to the Lipschitz property for potentials.

\section{Regularity in higher Sobolev norms}
\label{sect-regularity-sobolev}

\begin{xquote}{\citeasnoun{tao-book}}
The analysis of a PDE is a beautiful subject, combining the rigour and technique of modern analysis and geometry with the very concrete real-world intuition of physics and other sciences. Unfortunately, in some presentations of the subject (at least in pure mathematics), the former can obscure the latter, giving the impression of a fearsomly technical and difficult field to work in.
\end{xquote}

\subsection{Motivation from TDDFT}

Our basic equation in TDDFT will later be the \q{divergence of local forces} equation \eqref{eq-div-force-density} involving (weak) spatial partial derivatives of $\psi[v]$ up to 4\textsuperscript{th} order. The task is now to show that this quantity is well-defined, yet up to now only $H^2$ regularity of trajectories $\psi[v]$ (classical solutions) is guaranteed for certain potentials by \autoref{th-schro-dyn-td}. More generally we want to limit ourselves to potentials that allow unique solutions of Schrödinger's equation in the Sobolev space $H^s$ of $s$-times (weakly) differentiable functions where all derivatives are still in $L^2$.
\[
\left.
\begin{array}{rcl}
	\psi_0 &\in& H^s \\
	v &\in& \mathrm{?}
\end{array}
\right\} \Rightarrow \psi([v],t) \in H^s
\]
Note the comparable definition of \emph{$H$-admissible} in \citeasnoun[Def.~6.2.2]{jerome}: A space $X$ is $H$-admissible if $\{\exp(-\i Ht)\}_t$ leaves $X$ invariant and forms a $\Cont^0$ semigroup on $X$. For time-dependent $H(t)$ one needs this property for the time-indexed family of semigroups.

We will first give a short overview about literature on the so-called \q{growth of Sobolev norms} (rather a limit to their growth), then pursue a do-it-yourself approach following the existence proof in \autoref{sect-existence-stepwise-static}.

\subsection{On growth of Sobolev norms}

The first reference to results like this has been found in the neat online review by \citeasnoun{staffilani} on Schrödinger equations with spatially periodic boundary conditions (i.e., on a torus). The work is concerned with semilinear Schrödinger equations of the Gross--Pitaevskii type but the associated proofs make use of (periodic) Strichartz' estimates (cf.~\eqref{strichartz-ineq}) for free evolution. (Which, it seems, need a frequency cut-off.) Then an estimate for the $H^s$ Sobolev norm for solutions to the semilinear equation is given, which already lead us into the desired direction. Staffilani notes:

\begin{quote}
The growth of high Sobolev norms has a physical interpretation in the context of the \emph{Low-to-High frequency cascade}. In other words, we see that $\|u(t)\|_{H^s}$ weighs the higher frequencies more as $s$ becomes larger, and hence its growth gives us a quantitative estimate for how much of the support of $|\hat{u}|^2$ has transferred from the low to the high frequencies. This sort of problem also goes under the name \emph{weak turbulence}.
\end{quote}

Here the phenomenon gets a real physical interpretation, a trace we followed online to \citeasnoun{tao-book}, to \citeasnoun{tao-blog}, and to \citeasnoun{bourgain} who writes on the linear Schrödinger equation with time-dependent (actually a rare sight!) potential. The proof has been recently refined by \citeasnoun{delort} and we cite his result. The potential shall be limited to smooth functions in $t, x$ with bounded derivatives and periodic in $x$ on a Zoll manifold (something periodic like a sphere or torus). Then for any $s>0$ there is a constant $C>0$ such that for all $u_0 \in H^s$ the solution satisfies, for any $t\in \R$,
\[
\|u(t)\|_{H^s} \leq C(1+|t|)\|u_0\|_{H^s}.
\]
Since the \q{smooth} condition is the only real downside in this result, it should be studied if a certain mitigation for particular maximal $s \in \N$ is possible. How this might look like is shown in the next section, where the $H^s$ stability is derived from the existence proof in \autoref{sect-existence-stepwise-static}. We are in a way able to generalise the result of \citeasnoun{delort} by including non-smooth potentials, but on the downside we give a rougher estimate and have regularity only up to a certain $s\in \N$.

\subsection{Sobolev regularity from the stepwise static approximation}

Here Sobolev regularity means the preservation of a $H^s$ property under evolution with certain potentials. Because we use the methods developed in the section before we are actually limited to even orders $s=2m$. The case $s=0$ clearly follows directly from the unitarity of the evolution system.

\begin{theorem}\label{th-sobolev-regularity}
The evolution system generated by $H(t) = H_0 + v(t)$ with $v \in \Lip([0,T], W^{2(m-1),\Sigma})$ preserves $H^{2m}$-regularity for $m \in \N$, i.e., if $\psi_0 \in H^{2m}$ then also $\psi(t) = U(t,0)\psi_0 \in H^{2m}$.
\end{theorem}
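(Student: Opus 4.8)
The plan is to re-run the stepwise static construction of \autoref{th-schro-dyn-td}, now controlling the $H^{2m}$ norm in place of the $H^2$ norm; the case $m=1$ is exactly the $H^2$-regularity already read off from \eqref{eq-schro-evolut-estimate}, so one essentially has to upgrade every ``$1$'' appearing there to an ``$m$''. As in that proof I would first reduce to the situation $H(t)\geq 1$ for all $t$, by replacing $H(t)$ with $\tilde H(t) = H(t)+c$ for $c$ large (admissible by \autoref{th-sum-space}) and undoing the shift through the scalar phase $U(t,0) = \e^{\i c t}\tilde U(t,0)$, which changes no Sobolev norm. Recall from \autoref{th-schro-dyn-td} that the approximating evolution systems $U_k$ of \eqref{eq-schro-dyn-td-unitary-approx} satisfy $U(t,0)\psi_0 = \lim_{k\to\infty}U_k(t,0)\psi_0$ strongly in $L^2$; what is missing is that this limit lies in $H^{2m}$ whenever $\psi_0\in H^{2m}$.

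The core step is a $k$-uniform $H^{2m}$-estimate for $U_k(t,0)\psi_0$. Splitting $U_k(s,0)$ into its short-time factors exactly as in \eqref{eq-schro-dyn-td-U-splitting}, I would insert the identity $H(t_j)^{-m}H(t_j)^m$ in front of each factor (legitimate on $H^{2m}=D(H_0^m)$) and commute $H(t_j)^m$ through that factor by the domain stability of \autoref{cor-evolut-stable} lifted to powers --- since $H$ commutes with $\e^{-\i H t}$ on $D(H)$, also $H^m$ commutes with it on $D(H^m)$ and $\e^{-\i H t}$ stabilises $D(H^m)$. Collecting the resulting consecutive encounters $H(t_j)^m H(t_{j-1})^{-m}$, which are evaluated at different times, \autoref{lemma-rs-3} rewrites each as $K_m(t_j,t_{j-1})+\id$ with $\|K_m(t_j,t_{j-1})\| \leq C_m[v]\,L/k$ as an operator $L^2\to L^2$. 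Peeling off the leading $H(t_p)^{-m}$ with \autoref{lemma-rs-1} (bounded by $1$ from $L^2$ into $H^{2m}$), using that the $U_k$ are isometries on $L^2$, and estimating $H(0)^m\psi_0$ by \autoref{lemma-rs-2} (bound $\Pi_m[v(0)]\sim\prod_{l=0}^{m-1}(1+\|v(0)\|_{2l,\Sigma})$), one arrives at
\[
\|U_k(s,0)\psi_0\|_{2m,2} \;\leq\; \Pi_m[v(0)]\,\Big(1+\frac{C_m[v]L}{k}\Big)^{\lfloor sk\rfloor}\|\psi_0\|_{2m,2} \;\longrightarrow\; \Pi_m[v(0)]\,\e^{C_m[v]Ls}\,\|\psi_0\|_{2m,2},
\]
the verbatim analogue of \eqref{eq-schro-evolut-estimate}.

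Finally, since $\{U_k(t,0)\psi_0\}_k$ is bounded in the Hilbert space $H^{2m}$, some subsequence converges weakly there; via the continuous embedding $H^{2m}\hookrightarrow L^2$ its weak limit is the strong $L^2$-limit $U(t,0)\psi_0$, so $U(t,0)\psi_0\in H^{2m}$ (and by weak lower semicontinuity of the norm the bound above survives the limit). If one also wanted strong $H^{2m}$-continuity of the trajectory, one would instead imitate \eqref{eq-schro-dyn-td-cauchy} to show the $U_k(t,0)\psi_0$ form a Cauchy sequence in $H^{2m}$, invoking \autoref{lemma-rs-0} where the $m=1$ proof uses \autoref{lemma-sum-space-inequality}; this is not needed for the statement as posed. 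The only real obstacle is domain bookkeeping: one must verify that at every stage the inserted factors $H(t_j)^{-m}H(t_j)^m$ act on vectors genuinely in $D(H_0^m)=H^{2m}$ and that $D(H(t)^m)=H^{2m}$ for $v\in W^{2(m-1),\Sigma}$, both of which follow from the closure-and-density arguments already behind \autoref{lemma-rs-0}, \autoref{lemma-rs-1} and \autoref{lemma-rs-2} together with \autoref{cor-evolut-stable} applied to powers of $H$. Everything else is the $m=1$ computation with $2$-norms replaced by $2m$-norms.
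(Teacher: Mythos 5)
Your proposal follows essentially the same route as the paper's proof: the same splitting of $U_k(s,0)$ into short-time factors, the same insertion of $H^{-m}(t_j)H^{m}(t_j)$ with the encounters handled by \autoref{lemma-rs-3}, and the same use of \autoref{lemma-rs-1} and \autoref{lemma-rs-2} to reach the uniform bound $\Pi_m[v(0)]\,\e^{C_m[v]Ls}\|\psi_0\|_{2m,2}$ of \eqref{eq-schro-evolut-estimate-2m}. Your closing weak-compactness argument to transfer the bound to the limit $U(t,0)\psi_0$ is a welcome explicit completion of a step the paper leaves implicit, but it does not change the approach.
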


\begin{proof}
We need to check if $\|U(t,0) \psi_0\|_{2m,2}$ stays finite thus the modus operandi is very much in the spirit of the proof of \autoref{th-schro-dyn-td} which already settled the case $m=1$. Like in \eqref{eq-schro-dyn-td-U-splitting} for regularity in $D(H_0) \subseteq H^2$ we start by splitting the approximated evolution system into its individual short-interval parts with constant potentials.
\[
U_k(s,0) = U_k\left(s,\frac{\lfloor sk \rfloor}{k}\right) U_k\left(\frac{\lfloor sk \rfloor}{k},\frac{\lfloor sk \rfloor-1}{k}\right) \ldots U_k\left(\frac{1}{k},0\right)
\]
Next we introduce the identity $H^{-m}(t)H^m(t) : H^{2m} \rightarrow H^{2m}$ in front of every such short-time evolution taking $t$ as the lower bound of the corresponding time interval. This means we can swap $H^m(t)$ and the evolution operator with the same potential.
\begin{align*}
U_k(s,0) = \;&H^{-m}\left(\frac{\lfloor sk \rfloor}{k}\right) U_k\left(s,\frac{\lfloor sk \rfloor}{k}\right) H^m\left(\frac{\lfloor sk \rfloor}{k}\right) \\
&H^{-m}\left(\frac{\lfloor sk \rfloor-1}{k}\right) U_k\left(\frac{\lfloor sk \rfloor}{k},\frac{\lfloor sk \rfloor-1}{k}\right) H^m\left(\frac{\lfloor sk \rfloor-1}{k}\right) \cdots\\
&H^{-m}(0) U_k\left(\frac{1}{k},0\right) H^m(0)
\end{align*}
We rewrite the $H^m$-$H^{-m}$ encounters again as in \autoref{lemma-rs-3}.
\begin{align*}
U_k(s,0) = \;&H^{-m}\left(\frac{\lfloor sk \rfloor}{k}\right) U_k\left(s, \frac{\lfloor sk \rfloor}{k} \right) \\
&\prod_{j=1}^{\lfloor sk \rfloor} \left[ K_m\left( \frac{j}{k},\frac{j-1}{k} \right) + \id \right] U_k\left( \frac{j}{k}, \frac{j-1}{k} \right) H^m(0)
\end{align*}
Like the proof of \autoref{th-schro-dyn-td} we assume without loss of generality $H(t) \geq 1$ for all $t \in [0,T]$ and thus \autoref{lemma-rs-3} tells us that the operator $K_m : L^2 \rightarrow L^2$ is bounded by $C_m[v] L /k$. We use \autoref{lemma-rs-1} to get rid of the preceding $H^{-m}$ when evaluating the $H^{2m}$-norm and \autoref{lemma-rs-2} to estimate $H^m(0)$.
\begin{equation}\label{eq-schro-evolut-estimate-2m}
\begin{aligned}
\|U_k(s,0)\psi_0\|_{2m,2} &\lesssim \left\| \prod_{j=1}^{\lfloor sk \rfloor} \left[ K_m\left( \frac{j}{k},\frac{j-1}{k} \right) + \id \right] U_k\left( \frac{j}{k}, \frac{j-1}{k} \right) H^m(0) \psi_0 \right\|_2 \\
&\leq \left[ \frac{C_m[v] L}{k} +1 \right]^{\lfloor sk \rfloor} \|H^m(0) \psi_0\|_2 \\
&\lesssim \Pi_m[v(0)] \left(\left[ \frac{C_m[v] L}{k} +1 \right]^k \right)^{\lfloor sk \rfloor/k} \|\psi_0\|_{2m,2} \\[0.5em]
&\longrightarrow \Pi_m[v(0)] \cdot \e^{C_m[v] L s} \cdot \|\psi_0\|_{2m,2}.
\end{aligned}
\end{equation}
\end{proof}

A similar proof to the one above shows a very interesting smoothing result related to such evolution systems $U[v]$ that is much like the crucial \autoref{lemma-Q} in the successive substitutions method.

\begin{lemma}\label{lemma-Uv-smoothing}
The evolution system $U$ generated by $H(t) = H_0 + v(t)$ with $v \in \Cont^1([0,T], W^{2m,\Sigma})$ defines the following bounded smoothing map.
\begin{align*}
\Cont^1([0,T],H^{2m}) &\longrightarrow \Cont^0([0,T],H^{2(m+1)}) \cap \Cont^1([0,T],H^{2m}) \\
\varphi &\longmapsto \left( t \mapsto \int_0^t U(t,s) \varphi(s) \d s \right)
\end{align*}
\end{lemma}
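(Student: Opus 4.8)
The plan is to follow closely the structure of the existence proof (\autoref{th-schro-dyn-td}) and the regularity proof (\autoref{th-sobolev-regularity}), but applied to the integral $v(t) := \int_0^t U(t,s)\varphi(s)\d s$ instead of to $U(t,0)\psi_0$. First I would verify the two stated mapping properties separately. For the $\Cont^1([0,T],H^{2m})$ part, differentiate under the integral sign: since $\partial_t U(t,s) = -\i H(t) U(t,s)$ on $D(H_0^m)$ (using \autoref{def-evolution-system} (iv) together with the fact that $U[v]$ preserves $H^{2m}$ by \autoref{th-sobolev-regularity}, noting $W^{2m,\Sigma}\subseteq W^{2(m-1),\Sigma}$ so the hypothesis suffices), one gets
\[
\partial_t \int_0^t U(t,s)\varphi(s)\d s = \varphi(t) - \i H(t)\int_0^t U(t,s)\varphi(s)\d s,
\]
and the right-hand side is a continuous $H^{2m}$-valued map provided the integral itself lands in $H^{2(m+1)}$, since $H(t):H^{2(m+1)}\to H^{2m}$ is bounded by \autoref{lemma-rs-2}. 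So the real content is the $\Cont^0([0,T],H^{2(m+1)})$ statement.

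To establish that, I would run the stepwise static argument. Replace $U(t,s)$ by the approximants $U_k(t,s)$ from \eqref{eq-schro-dyn-td-unitary-approx} and write $v_k(t) := \int_0^t U_k(t,s)\varphi(s)\d s$, which for $t$ in the $i$-th subinterval decomposes into a sum over sub-intervals of exponentials of static Hamiltonians acting on $\varphi$ evaluated inside. On each piece Stone's theorem (\autoref{th-schro-dyn-konst}) gives a unitary $D(H_0^{m+1})\to D(H_0^{m+1})$, and \autoref{cor-evolut-stable} lets me commute $H^{m+1}$ past each short-time evolution of matching potential. Inserting $H^{-(m+1)}(t)H^{m+1}(t)$ in front of each factor exactly as in the proof of \autoref{th-sobolev-regularity}, the $H^{m+1}$--$H^{-(m+1)}$ encounters become $K_{m+1}(\tfrac{j}{k},\tfrac{j-1}{k})+\id$ with $\|K_{m+1}\|\lesssim C_{m+1}[v]\,L/k$ by \autoref{lemma-rs-3} (here $v \in \Cont^1$ gives the Lipschitz constant $L=\max_t\|\dot v(t)\|_{2m,\Sigma}$), while $\|H^{m+1}(0)\varphi(s)\|_2\lesssim \Pi_{m+1}[v(0)]\|\varphi(s)\|_{2(m+1),2}$ by \autoref{lemma-rs-2}. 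Bounding $\|H^{-(m+1)}\|\le 1$ by \autoref{lemma-rs-1} and using $([1+C_{m+1}L/k]^k)^{\lfloor sk\rfloor/k}\to \e^{C_{m+1}Ls}$, one gets
\[
\|v_k(t)\|_{2(m+1),2} \lesssim \Pi_{m+1}[v(0)]\int_0^t \e^{C_{m+1}[v]L(t-s)}\|\varphi(s)\|_{2(m+1),2}\d s \lesssim \e^{C_{m+1}[v]LT}\,T\,\|\varphi\|_{\Cont^0([0,T],H^{2m})},
\]
uniformly in $k$. Passing $k\to\infty$ (the $U_k\to U$ convergence from \autoref{th-schro-dyn-td} plus this uniform $H^{2(m+1)}$ bound forces weak-$H^{2(m+1)}$ convergence of $v_k(t)$, hence $v(t)\in H^{2(m+1)}$ with the same bound by weak lower semicontinuity of the norm) yields boundedness of the map into $\Cont^0([0,T],H^{2(m+1)})$; continuity in $t$ follows from strong continuity of $U(t,\cdot)$ and dominated convergence, the slight subtlety at interval endpoints being handled as in \autoref{th-schro-dyn-td} since the integrand is continuous away from a finite set.

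The main obstacle I expect is the domain bookkeeping in the commutation step: \autoref{cor-evolut-stable} is stated for a static Hamiltonian and $D(H)$, so I must be careful that each short-time factor $U_k(\tfrac{j}{k},\tfrac{j-1}{k})$ is generated by the \emph{same} static $H(\tfrac{j-1}{k})$ that I commute past it, and that the inserted $H^{-(m+1)}(t)H^{m+1}(t)$ uses matching time arguments; the telescoping only works because $H^{m+1}$ with static potential stabilizes $D(H_0^{m+1})=H^{2(m+1)}$, which requires extending \autoref{cor-evolut-stable} from $D(H_0)$ to $D(H_0^{m+1})$ — a straightforward iteration, but one that must be spelled out. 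A secondary technical point is justifying differentiation under the integral in the $\Cont^1$ claim when $\varphi(s)$ is only assumed in $H^{2m}$ (not $H^{2(m+1)}$), which is why the $\Cont^1([0,T],H^{2m})$ half of the target — rather than $\Cont^1([0,T],H^{2(m+1)})$ — is the natural statement.
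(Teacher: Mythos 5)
Your reduction of the $\Cont^1([0,T],H^{2m})$ half to the $\Cont^0([0,T],H^{2(m+1)})$ half is fine and matches the paper, but the core of your argument for the latter has a genuine gap: your chain of estimates passes through $\|H^{m+1}(0)\varphi(s)\|_2 \lesssim \Pi_{m+1}[v(0)]\,\|\varphi(s)\|_{2(m+1),2}$, and the final display then silently replaces $\|\varphi(s)\|_{2(m+1),2}$ by $\|\varphi\|_{\Cont^0([0,T],H^{2m})}$. That replacement is false --- it bounds a higher Sobolev norm by a lower one --- and under the stated hypothesis $\varphi(s)$ need not lie in $H^{2(m+1)}$ at all. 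More fundamentally, the telescoping $K_m$-machinery of \autoref{th-sobolev-regularity} only shows that the evolution \emph{preserves} a given order of Sobolev regularity; it cannot create the two extra derivatives. Since $U(t,s)\varphi(s)$ is, for each fixed $s$, only an $H^{2m}$ vector, no pointwise-in-$s$ estimate of the integrand in the $H^{2(m+1)}$-norm can succeed: the gain of regularity must come from the integration over $s$. Note also that your argument never uses $\dot\varphi$, even though $\varphi\in\Cont^1([0,T],H^{2m})$ is a hypothesis; that is the symptom of the missing idea.

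The paper's proof supplies exactly this missing step. It inserts a single $H^{-1}(s)H(s)$ at the lower time limit of the decomposition $U(t,s)\varphi(s)=\lim_k U_k(t_K,t_{K-1})\cdots U_k(t_1,t_0)U_k(t_0,s)\varphi(s)$ and then recognizes $\i H^{-1}(s)U_k(t_0,s)H(s)\varphi(s)$ as one of three terms in $\partial_s\bigl(H^{-1}(s)U_k(t_0,s)\varphi(s)\bigr)$, the others being $H^{-1}(s)\dot v(s)H^{-1}(s)U_k(t_0,s)\varphi(s)$ (from $\partial_s H^{-1}(s)=H^{-1}(s)\dot v(s)H^{-1}(s)$, which is where $v\in\Cont^1$ enters) and $H^{-1}(s)U_k(t_0,s)\dot\varphi(s)$. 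Substituting back into $\int_0^t U(t,s)\varphi(s)\d s$, the total-$\partial_s$ term is integrated out --- this integration by parts in $s$ is what effects the smoothing --- and every surviving term carries an $H^{-1}(s):H^{2m}\to H^{2(m+1)}$ in front, so the desired regularity follows from \autoref{lemma-rs-1} and \autoref{lemma-rs-0}. This is the standard trade of one time derivative of the data for two space derivatives (Duhamel's principle for the inhomogeneous Schr\"odinger equation), and it is where both $\dot v$ and $\dot\varphi$ are consumed. Your proposal would need to be rebuilt around this step rather than around the regularity-preservation telescoping.
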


\begin{proof}
By differentiating with respect to time it is clear that the map is into $\Cont^1([0,T],H^{2m})$ if we can show that the result of the integral is indeed in $H^{2(m+1)}$. The desired continuity in time follows from continuity of the integral and the evolution system. For $U(t,s)$ we use a decomposition as in \eqref{eq-schro-dyn-td-U-splitting}. The time steps are now taken at $t_j$ with $t_K=t$, $t_{K-1}=t-1/k$ equally spaced down to $t_0=t-K/k$ with $K=\lfloor (t-s) k \rfloor$. This means the last time step is between $t_0$ and $s$ and will mostly be shorter than $1/k$. Then the usual trick of introducing a $H^{-1}$-$H$ encounter is applied at the start time $s$.
\begin{equation}\label{eq-U-smooth-splitting}
\begin{aligned}
U(t,s)\varphi(s) &= \lim_{k\rightarrow \infty} U_k(t_K,t_{K-1}) \ldots U_k(t_1,t_0) U_k(t_0,s) \varphi(s) \\
&= \lim_{k\rightarrow \infty} U_k(t_K,t_{K-1}) \ldots U_k(t_1,t_0) H^{-1}(s)U_k(t_0,s) H(s) \varphi(s)
\end{aligned}
\end{equation}
Varying $s$ slightly will not change $K$ for almost all values of $s \in [0,T]$ and we can introduce a partial derivative $\partial_s$.
\begin{align*}
\partial_s \left( H^{-1}(s)U_k(t_0,s) \varphi(s) \right) =\;& \left(\partial_s H^{-1}(s)\right) U_k(t_0,s) \varphi(s) \\
&+ \i H^{-1}(s) U_k(t_0,s) H(s) \varphi(s) \\
&+ H^{-1}(s)U_k(t_0,s)\dot\varphi(s)
\end{align*}
The second term is the one we sought because it appears in \eqref{eq-U-smooth-splitting}. The first term will be studied by evaluating the partial derivative of the inverse Hamiltonian as a difference quotient.
\begin{align*}
\partial_s H^{-1}(s) &= \lim_{h \rightarrow 0}\frac{1}{h}\left( H^{-1}(s+h)-H^{-1}(s)\right) \\
&= \lim_{h \rightarrow 0}\frac{1}{h}H^{-1}(s+h)(H(s)-H(s+h))H^{-1}(s) \\[0.4em]
&= H^{-1}(s) \dot H(s) H^{-1}(s) = H^{-1}(s) \dot v(s) H^{-1}(s)
\end{align*}
This is where the differentiability condition for the potential shows up. Now putting this all back into \eqref{eq-U-smooth-splitting} we have
\begin{align*}
U(t,s)\varphi(s) = &-\i \lim_{k\rightarrow \infty} U_k(t_K,t_{K-1}) \ldots U_k(t_1,t_0) \Big( \partial_s \left( H^{-1}(s)U_k(t_0,s) \varphi(s) \right) \\
&- H^{-1}(s) \dot v(s) H^{-1}(s) U_k(t_0,s) \varphi(s) - H^{-1}(s)U_k(t_0,s)\dot\varphi(s) \Big)
\end{align*}
Now finally we have to evaluate all three terms with respect to the desired $H^{2(m+1)}$-norm and see if we really have a gain in regularity. The $\partial_s$ in the first term is dragged to the front and used to kill the integral, while the $H^{-1}(s) : H^{2m} \rightarrow H^{2(m+1)}$ (\autoref{lemma-rs-1}; note that we assume $H(t)\geq 1$ for all $t \in [0,T]$ without loss of generality as in the proof of \autoref{th-schro-dyn-td}) already does the job. The second and third term include a $H^{-1}(s)$ too, without even having the need to get rid of the integral. The only term in the way is $H^{-1}(s)\dot v(s)$ but it is bounded $H^{2(m+1)} \rightarrow H^{2(m+1)}$ by \autoref{lemma-rs-0}. Finally evaluating the limit works just like in \eqref{eq-schro-evolut-estimate} and boundedness is confirmed.
\end{proof}

Note the trade-off between one order of differentiability in time against two orders in the Sobolev space describing the spatial regularity which is typical for a PDE of order one in time and second order in space---Schrödinger's equation. And indeed the expression
\begin{equation}\label{eq-inhom-se-sol}
f(t) = -\i\int_0^t U(t,s) \varphi(s) \d s
\end{equation}
is easily shown to fulfil the inhomogeneous Schrödinger equation
\[
\i \partial_t f(t) = H(t)f(t) + \varphi(t)
\]
with initial value $f(0)=0$. This relationship is actually called \q{Duha\-mel's principle} that we will meet again in \autoref{sect-full-int-pic} in the context of the interaction picture and even more profoundly in \autoref{sect-duhamel} where the inhomogeneous Schrödinger equation occurs in connection with the variation of quantum trajectories.

The inhomogeneous Schrödinger equation served as a strong indication for us towards the lemma above. It could also be used as a proof strategy, just like in \citeasnoun[ch.~5, Th.~5.3]{pazy}, where it is shown that \eqref{eq-inhom-se-sol} is a strong solution to the inhomogeneous Schrödinger equation with a proof not unlike ours. Note that the conditions there are exactly the same as in our lemma for the lowest regularity level $m=0$.

As it will be customary from \autoref{sect-banach-spaces} onward, we define a set of quantum trajectories, a single trajectory being a map $\psi : [0,T] \rightarrow \H = L^2$. The most natural set for such trajectories is the Banach space $\Cont^0([0,T], L^2)$ endowed with the norm $\|\psi\|_{2,\infty} = \sup_{t \in [0,T]} \|\psi(t)\|_2$. This includes classical as well as generalised solutions to the Schrödinger equation, cf.~the table on page \pageref{tab-solution-types}. The trajectory resulting from an initial state $\psi_0$ under evolution with the Hamiltonian $H[v] = H_0 + v$ will then be noted as $U[v]\psi_0 : t \mapsto U([v],t,0)\psi_0$.

\begin{corollary}\label{cor-schro-evolution-continuous}
Let $\psi_0 \in H^{2(m+1)}$. Then the mapping
\[
U[\cdot]\psi_0 : \Lip([0,T], W^{2m,\Sigma}) \longrightarrow L^\infty([0,T], H^{2m})
\]
from potentials to trajectories is continuous.
\end{corollary}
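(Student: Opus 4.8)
The plan is to prove the slightly stronger statement that $U[\cdot]\psi_0$ is \emph{locally Lipschitz} from $\Lip([0,T],W^{2m,\Sigma})$ into $L^\infty([0,T],H^{2m})$, which of course implies continuity. Fix $v\in\Lip([0,T],W^{2m,\Sigma})$ and let $v'$ range over a bounded neighbourhood of $v$; write $H[v]=H_0+v$, $H[v']=H_0+v'$ and let $U[v](t,s)$, $U[v'](t,s)$ be the associated evolution systems from \autoref{th-schro-dyn-td}. The starting point is the Duhamel-type identity obtained by differentiating $s\mapsto U[v'](t,s)\,U[v](s,0)\psi_0$: since $\psi_0\in H^{2(m+1)}\subset D(H_0)$ the trajectory $U[v](s,0)\psi_0$ stays in the common domain $D(H_0)$ (\autoref{cor-evolut-stable}), so using the derivative rules of \autoref{def-evolution-system}(iv) one obtains
\[
\bigl(U[v](t,0)-U[v'](t,0)\bigr)\psi_0 = \i\int_0^t U[v'](t,s)\,\bigl(v'(s)-v(s)\bigr)\,U[v](s,0)\psi_0 \d s .
\]
(As in the proof of \autoref{th-schro-dyn-td} the finitely many kinks of the stepwise-static approximants $U_k$ cause no trouble; they wash out in the limit.)

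Next I would estimate the integrand in the $H^{2m}$-norm by a product of three factors. First, $\|U[v'](t,s)\|_{H^{2m}\to H^{2m}}$ is bounded uniformly in $0\le s\le t\le T$ and in $v'$ over the chosen neighbourhood: this is exactly the estimate \eqref{eq-schro-evolut-estimate-2m} from the proof of \autoref{th-sobolev-regularity}, run on the subinterval with left endpoint $s$ in place of $0$, and the constants $\Pi_m[v'(s)]$, $C_m[v']$, $L'$ occurring there are continuous in $v'$ and controlled by the $\Lip([0,T],W^{2m,\Sigma})$-norm (see \autoref{lemma-rs-2} and \autoref{lemma-rs-3}), hence uniformly bounded on a bounded set of potentials since $\|v'(\cdot)\|_{2m,\Sigma}$ is bounded on $[0,T]$. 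Second, by \autoref{lemma-rs-0},
\[
\bigl\|\bigl(v'(s)-v(s)\bigr)\,U[v](s,0)\psi_0\bigr\|_{2m,2} \lesssim \|v'(s)-v(s)\|_{2m,\Sigma}\,\bigl\|U[v](s,0)\psi_0\bigr\|_{2(m+1),2},
\]
and this is the step that forces the hypothesis $\psi_0\in H^{2(m+1)}$ rather than $\psi_0\in H^{2m}$: the multiplication operator $v'(s)-v(s)$ costs two orders of Sobolev regularity. Third, because $v\in\Lip([0,T],W^{2m,\Sigma})$, \autoref{th-sobolev-regularity} with $m$ replaced by $m+1$ gives $U[v](s,0)\psi_0\in H^{2(m+1)}$ and, again by \eqref{eq-schro-evolut-estimate-2m}, $\|U[v](s,0)\psi_0\|_{2(m+1),2}\le C_v\|\psi_0\|_{2(m+1),2}$ with $C_v$ depending only on $v$ and $T$.

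Combining the three factors and bounding $\|v'(s)-v(s)\|_{2m,\Sigma}\le\|v-v'\|_{\Lip([0,T],W^{2m,\Sigma})}$ pointwise in $s$, the integral yields
\[
\sup_{t\in[0,T]}\bigl\|\bigl(U[v](t,0)-U[v'](t,0)\bigr)\psi_0\bigr\|_{2m,2} \lesssim T\,C\,C_v\,\|\psi_0\|_{2(m+1),2}\,\|v-v'\|_{\Lip([0,T],W^{2m,\Sigma})},
\]
which is precisely local Lipschitz continuity of $U[\cdot]\psi_0$ into $L^\infty([0,T],H^{2m})$, hence continuity. The routine parts are the Duhamel identity and the final Grönwall-type bookkeeping; the one place that needs genuine attention is the uniform operator bound $\|U[v'](t,s)\|_{H^{2m}\to H^{2m}}\le C$ over a neighbourhood of $v$ and over all $0\le s\le t\le T$, since \autoref{th-sobolev-regularity} only records the $(t,0)$ case and one must check that its proof localizes to an arbitrary subinterval with a constant built from the $v'$-continuous quantities $\Pi_m$, $C_m$ and the Lipschitz constant. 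Once this is in hand the corollary is immediate.
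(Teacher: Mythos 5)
Your proof is correct and follows essentially the same route as the paper: the Duhamel-type identity obtained by differentiating $s\mapsto U[v'](t,s)U[v](s,0)\psi_0$, followed by the three-factor estimate combining the $H^{2m}\to H^{2m}$ bound \eqref{eq-schro-evolut-estimate-2m} for the outer propagator, \autoref{lemma-rs-0} for the multiplication (which is indeed where $\psi_0\in H^{2(m+1)}$ is consumed), and the preserved $H^{2(m+1)}$-regularity of the reference trajectory. The only point of divergence is your stronger claim of \emph{local Lipschitz} continuity on a bounded neighbourhood: the paper deliberately stops short of this, remarking after \eqref{eq-evolut-lipschitz} that the suppressed constants depend continuously on $v,v'$ and that a genuine Lipschitz estimate is only asserted when the potentials range over a compact set; your claim is defensible because the constants $\Pi_m$, $C_m$ and the Lipschitz constant $L$ are explicitly controlled by the $\Lip([0,T],W^{2m,\Sigma})$-norm, but for the corollary as stated plain continuity suffices and does not require any uniformity over a neighbourhood.
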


\begin{proof}
In the case $m=0$ \autoref{th-schro-dyn-konst} (Stone's theorem) tells us that the evolution operator $U[v] : H^2 \rightarrow L^\infty([0,T],L^2)$ is uniformly bounded for all $v$ because it is unitary. This means the extension of $U[v]$ to all $\psi_0 \in L^2$ is also continuous in $v$. This is not true any more for $m\geq 1$ and considerations like below become necessary.\\
We take $v,h \in \Lip([0,T], W^{2m,\Sigma})$ and use the properties of the evolution system with the same trick as in \eqref{eq-U-trick}.
\begin{align*}
\big( &U([v+h],t,0) - U([v],t,0) \big) \psi_0 \\
&= U([v],t,s) U([v+h],s,0) \psi_0 \Big|_{s=0}^t \\
&= \int_0^t \partial_s U([v],t,s) U([v+h],s,0) \psi_0 \d s \\
&= \int_0^t \big( (\partial_s U([v],t,s)) U([v+h],s,0) + U([v],t,s) (\partial_s U([v+h],s,0)) \big) \psi_0 \d s  \\
&= \i \int_0^t U([v],t,s) (H([v],s) - H([v+h],s)) U([v+h],s,0) \psi_0 \d s \\
&= -\i \int_0^t U([v],t,s) h(s) U([v+h],s,0) \psi_0 \d s 
\end{align*}
Now the necessary estimate is performed with the $H^{2m}$-norm of the integrand. One first uses \eqref{eq-schro-evolut-estimate-2m} to get rid of $U([v],t,s)$ in front and then \autoref{lemma-rs-0} for the multiplication with $h(s)$ is applied. Note that we introduce constants depending continuously on $v$ and $h$ in the \q{$\lesssim$} estimates because we omit the exponential and the $\Pi_l$ from \eqref{eq-schro-evolut-estimate-2m} twice.
\begin{equation}\label{eq-evolut-lipschitz}
\begin{aligned}
\big\| \big( U&([v+h],t,0) - U([v],t,0) \big) \psi_0 \big\|_{2m,2} \\
&\leq  \int_0^t \| U([v],t,s) h(s) U([v+h],s,0) \psi_0 \|_{2m,2} \d s \\
&\lesssim \int_0^t \| h(s) U([v+h],s,0) \psi_0 \|_{2m,2} \d s \\
&\lesssim \int_0^t \| h(s) \|_{2m,\Sigma} \cdot \| U([v+h],s,0) \psi_0 \|_{2(m+1),2} \d s \\
&\lesssim \int_0^t \| h(s) \|_{2m,\Sigma} \cdot \| \psi_0 \|_{2(m+1),2} \d s \\[0.5em]
&\leq t \max_{s\in [0,t]}\| h(s) \|_{2m,\Sigma} \cdot \| \psi_0 \|_{2(m+1),2}
\end{aligned}
\end{equation}
For $h \rightarrow 0$ in the $\Lip([0,T], W^{2m,\Sigma})$-topology this expression goes to zero thus showing continuity if $\psi_0 \in H^{2(m+1)}$.
\end{proof}

The above results all hold even if an additional static potential is present that does not fulfil the respective \q{Sobolev--Kato--Lipschitz} restriction but is just from a potential space of Kato perturbations $\Sigma(L^2+L^\infty)$ as in \autoref{sect-kato-peturbations}. Such a potential can always be simply included in the basic building block of the Hamiltonian $H_0$ without changing its domain (\autoref{th-sum-space}) and the proofs all work as given. The most prominent example for such an additional static potential is clearly the usual Coulomb interaction which was shown to be from the required class in \autoref{ex-atomic-hamiltonian}.

Note that \eqref{eq-evolut-lipschitz} has the \emph{form} of a Lipschitz estimate for $\|\psi([v'],t)-\psi([v],t)\|_{2m,2}$ used again in \eqref{eq-estimate-delta-q-2}. It does not prove Lipschitz continuity because additional terms depending on $v,v'$ are suppressed. Yet those terms depend continuously on the given potentials which makes a Lipschitz estimate possible if the potentials are varied only over a compact set.

\section{The successive substitutions method}
\label{sect-yajima}

\subsection{Yajima's set of Banach spaces}
\label{sect-banach-spaces}

Given two Banach spaces $X$ and $Y$ both continuously embedded in a larger Hausdorff topological vector space (this is called an \emph{interpolation couple}, see \citeasnoun{triebel}) we equip their intersection $X \cap Y$ with the canonical norm $\|x\|_{X \cap Y} = \|x\|_X + \|x\|_Y$ and their sum $X+Y$, defined as all possible sums $x+y$ of elements $x \in X, y \in Y$, has norm $\|z\|_{X+Y} = \inf\{ \|x\|_X + \|y\|_Y : x \in X, y \in Y, z=x+y\}$.\footnote{To prove that those are again Banach spaces see \citeasnoun{triebel}.}

We fix the time interval of interest to $[0,T]$ while the whole configuration space for $N$ particles is $\Omega^N = \R^n$ and has dimensionality $n=d\cdot N$. The Hilbert space of wave functions (\emph{state space}) will be denoted by $\H = L^2 = L^2(\Omega^N)$. No special symmetries with respect to the particle positions is assumed and it is furnished with the usual norm $\|\cdot\|_\H = \|\cdot\|_2$. Next we define a Lebesgue space (\emph{trajectory space}) over $[0,T] \times \Omega^N$ by
\[
L^{q,\theta} = \left\{ \varphi \;\middle|\; \|\varphi\|_{q,\theta} = \left( \int_0^T \left( \int_{\Omega^N} |\varphi(t,x)|^q \d x \right)^{\theta/q} \d t \right)^{1/\theta} < \infty \right\} \!\!\!\!\!\!\mod \mathcal{N}
\]
with the null set $\mathcal{N} = \{ \varphi \in L^{q,\theta} \mid \|\varphi\|_{q,\theta}=0 \}$. As a generalisation of an $L^\theta$ space with its values in the Banach space $L^q$ this is called a \emph{Bochner space}. The first superscript $q$ denotes the $L^q$ space in spatial coordinates and $\theta$ the $L^\theta$ space over the (finite) time interval. Latin characters are always used for the space part and Greek ones for time, or simply remember space(1)-time(2) for the ordering. Do not confuse the norm with the double subscript with the Sobolev norm frequently used in the previous sections. $q$ or $\theta = \infty$ are possible and defined in the usual way with the supremum (uniform) norm in time and the essential supremum norm in space. The resulting spaces are denoted with capital, calligraphic letters $\X$ and $\V$, their elements (thought about as trajectories in the spatial Banach space) usually with small Latin or Greek letters. A $\varphi \in \X$ evaluated at a time $t \in [0,T]$ is written simply $\varphi(t)$ and is thus a scalar function over the configuration space $\Omega^N$. If they are only elements of a purely spatial space, like an initial state, we write them as $\varphi_0$ or similar.

We proceed with the definition of important Banach spaces for the evolution of a quantum state following \citeasnoun{yajima}.

\begin{definition}[Banach space of quantum trajectories]\label{def-X}
Let the principal indices for the Banach space $\X$ be $2 \leq q \leq \infty$ and $2 < \theta \leq \infty$ with their dual exponents $q' = q/(q-1), \theta' = \theta/(\theta-1)$ and therefore fulfilling $1 \leq q' \leq 2$ and $1 \leq \theta' < 2$ as well as the typical Hölder relations $1/q + 1/q' = 1$ and $1/\theta + 1/\theta' = 1$. Let further $2/\theta = n(1/2 - 1/q)$ which implies $q < 2n/(n-2)$ for $n \geq 3$. We define $\X$ and its topological dual $\X'$ by
\begin{align*}
\X &= \Cont^0([0,T], \H) \cap L^{q,\theta} \mtext{and} \\
\X' &= L^{2,1} + L^{q',\theta'}.
\end{align*}
\end{definition}

The special relation between the exponents $q,\theta$ of this Banach space is called \emph{Schrödinger-admissible} in \citeasnoun{dancona} where the conditions were slightly widened to $\theta \geq 2$ with the choice $(n,\theta,q) = (2,2,\infty)$ ruled out. $\X$ contains the usual state space of quantum mechanics $\Cont^0([0,T], \H)$ which is equipped with the supremum norm in its time variable, making it a Banach space for compact time intervals $[0,T]$ because of the uniform convergence enforced by the supremum norm. The norms of $\X$ and $\X'$ are
\begin{align*}
&\|\varphi\|_\X = \|\varphi\|_{2,\infty} + \|\varphi\|_{q,\theta} \mtext{and} \\
&\|\varphi\|_{\X'} = \inf\{ \|\varphi_1\|_{2,1} + \|\varphi_2\|_{q',\theta'} \mid \varphi_1 \in L^{2,1}, \varphi_2 \in L^{q',\theta'}, \varphi=\varphi_1+\varphi_2\}.
\end{align*}

Those trajectory spaces are accompanied by the corresponding Banach spaces for potentials that guarantee the stability of evolution operators, i.e., Schrödinger trajectories lying in the space $\X$. That this indeed holds true will be finally proved in \autoref{Uv-bounded}.

\begin{definition}[Banach space of potentials]\label{def-V}
Related to $\X$ we define $\V$ demanding of its indices $p \geq 1,\alpha \geq 1,\beta > 1$ that $0 \leq 1/\alpha < 1-2/\theta$ and $1/p = 1-2/q$.
\[
\V = L^{p,\alpha} + L^{\infty,\beta}
\]
\end{definition}

The condition on $p$ actually guarantees finite potential energy at almost all times for $v(t) \in L^p$ and a state $\psi(t) \in L^q$ because $n(t) \in L^{q/2}$ and $1/p + 2/q = 1$ means $v(t)n(t) \in L^1$. But note that because of the condition on $q$ in \autoref{def-X} this set of inequalities demands $p > \frac{n}{2}$ for all $n \geq 3$ which implies $p \rightarrow \infty$ for very large particle numbers which rules out Coulombic singular potentials as already noted in \autoref{sect-overview-td-hamiltonian}. Still we allow for a more than ``physical'' set of potentials, including the usual one-particle external scalar potentials, symmetric two-particle interaction but also potentials that act differently on different particle coordinates (thus destroying any assumed Bose or Fermi symmetry) or include more than two points.

\subsection{It's all about the inequalities}

Our set of Banach spaces will be supplemented by an armoury of powerful inequalities between the different norms (which go by the dazzling names of Minkowski, Hölder, Cauchy, Schwarz, Bunyakovsky, Young, Hardy, Littlewood, Sobolev, Kato, Poincaré, Strichartz etc., plus combinations of them), proving embeddings between certain spaces and the boundedness of the operator $Q$ from \citeasnoun{yajima} which will be defined soon. $C, C_0, C_1 > 0$ and so on are being used as constants, sometimes indices are added to highlight dependencies.\footnote{These sections do not use the \q{$\lesssim$} notation because they have been compiled prior to its introduction into this thesis.} The spatial domain is $\Omega^N = \R^n$ throughout this section and all dependencies. In a bounded domain $\Omega$ inequalities like Strichartz' (\autoref{strichartz-ineq}) might yield different estimates or even fail to hold.\footnote{The later considerations regarding a fixed-point proof for TDDFT in \autoref{ch-fp} have $\Omega$ bounded as a principal condition and thus the results of this \autoref{sect-yajima} and those that built on it are not directly  portable. Thus we will later rely more on Schrödinger solutions formulated in the stepwise static approximation method from \autoref{sect-stepwise-static} and properties derived from that approach.} We denote the free propagator acting on $\H$ by $U_0(t) = \exp(-\i H_0 t)$ with the free Hamiltonian $H_0 = -\onehalf\Delta$. The first inequality presented here is due to \citeasnoun{kato1973} and is cited after \citeasnoun{yajima}.

\begin{theorem}[Kato inequality]\label{kato-ineq}
Let the exponents $q,q',\theta$ be like in \autoref{def-X}. Then for every $\psi_0 \in L^{q'}$ it holds
\[ \| U_0(t) \psi_0 \|_q \leq (2\pi |t|)^{-2/\theta} \|\psi_0\|_{q'} \]
and $(t \mapsto U_0(t) \psi_0) \in \Cont^0(\R \setminus \{0\}, L^q)$.
\end{theorem}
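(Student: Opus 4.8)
The statement is the classical dispersive ($L^{q'}\to L^q$) estimate for the free Schr\"odinger propagator, so the plan is to obtain it by interpolation between two endpoint bounds and then to deduce the continuity claim by a density argument.

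First I would record the two endpoints. At $q=2$ (hence $q'=2$, $\theta=\infty$, $2/\theta=0$) the claimed inequality is exactly $\|U_0(t)\psi_0\|_2=\|\psi_0\|_2$, which is the unitarity of $U_0(t)=\e^{-\i H_0 t}$ from \autoref{th-schro-dyn-konst} (Stone's theorem), equivalently Plancherel applied to $\widehat{U_0(t)\psi_0}(k)=\e^{-\i t|k|^2/2}\hat\psi_0(k)$. At the opposite endpoint ($q=\infty$, $q'=1$, $2/\theta=n/2$) I would use the explicit kernel: for $t\neq 0$ and $\psi_0$ in a dense class (Schwartz functions, or $L^1\cap L^2$), $(U_0(t)\psi_0)(x)=(2\pi\i t)^{-n/2}\int_{\R^n}\e^{\i|x-y|^2/(2t)}\psi_0(y)\d y$; since the modulus of this kernel is the constant $(2\pi|t|)^{-n/2}$ independent of $x$, one reads off $\|U_0(t)\psi_0\|_\infty\leq(2\pi|t|)^{-n/2}\|\psi_0\|_1$.

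Then I would interpolate. Writing $1/q=(1-\vartheta)/2$, i.e.\ $\vartheta=1-2/q$, one checks $1/q'=(1+\vartheta)/2$, so that $L^{q'}\to L^q$ is the $\vartheta$-interpolant of the pair $L^2\to L^2$ and $L^1\to L^\infty$. By the Riesz--Thorin interpolation theorem, $\|U_0(t)\|_{L^{q'}\to L^q}\leq 1^{1-\vartheta}\bigl((2\pi|t|)^{-n/2}\bigr)^{\vartheta}=(2\pi|t|)^{-n\vartheta/2}$, and the exponent matches the assertion because $n\vartheta/2=\tfrac n2(1-2/q)=n(\tfrac12-\tfrac1q)=2/\theta$, the last step being precisely the Schr\"odinger-admissibility relation of \autoref{def-X}. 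This proves the inequality on the dense class, hence on all of $L^{q'}$ by continuous extension. For the continuity statement, I would fix $t_0\neq 0$ and a compact interval $I\ni t_0$ with $0\notin I$; on $I$ the operator norm $\|U_0(t)\|_{L^{q'}\to L^q}$ is uniformly bounded by the inequality just proved. For $\psi_0$ in the dense class, $t\mapsto U_0(t)\psi_0\in L^q$ is continuous on $I$ by dominated convergence in the kernel formula, the kernel depending smoothly on $t$ away from $t=0$; for general $\psi_0$, approximating by $\psi_0^{(j)}$ in $L^{q'}$ and using $\sup_{t\in I}\|U_0(t)(\psi_0-\psi_0^{(j)})\|_q\leq C_I\|\psi_0-\psi_0^{(j)}\|_{q'}$ exhibits $t\mapsto U_0(t)\psi_0$ as a uniform-on-$I$ limit of continuous maps into $L^q$, hence continuous. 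Since $t_0$ was arbitrary, $(t\mapsto U_0(t)\psi_0)\in\Cont^0(\R\setminus\{0\},L^q)$.

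The main obstacle is the rigorous derivation of the $L^1\to L^\infty$ kernel bound: the oscillatory integral defining $U_0(t)$ is not absolutely convergent, so the kernel must be obtained as the $\varepsilon\searrow 0$ limit of the complex-Gaussian regularisation, whose symbol $\e^{-(\varepsilon+\i t)|k|^2/2}$ has the honest inverse Fourier transform $(\varepsilon+\i t)^{-n/2}\e^{-|x|^2/(2(\varepsilon+\i t))}$, and one must take care that this limit holds in a topology which makes the pointwise kernel estimate legitimate on the dense subspace. Everything downstream — the interpolation, the exponent bookkeeping, and the density argument for continuity — is then routine.
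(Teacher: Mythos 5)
Your proof is correct. Note, however, that the paper does not prove this theorem at all: it is stated as a cited result (from Kato 1973, quoted after Yajima 1987), so there is no in-text argument to compare yours against. Your route is the standard one --- Riesz--Thorin interpolation between unitarity on $L^2$ and the $L^1\to L^\infty$ bound read off from the explicit oscillatory kernel $(2\pi\i t)^{-n/2}\e^{\i|x-y|^2/(2t)}$ --- and the exponent bookkeeping is right: with $\vartheta=1-2/q$ the interpolated operator maps $L^{q'}\to L^q$ with norm at most $(2\pi|t|)^{-n\vartheta/2}=(2\pi|t|)^{-n(1/2-1/q)}=(2\pi|t|)^{-2/\theta}$ by the admissibility relation of \autoref{def-X}, and the modulus of the kernel prefactor indeed gives the constant $(2\pi|t|)^{-n/2}$ at the $L^1\to L^\infty$ endpoint under the paper's convention $U_0(t)=\e^{\i t\Delta/2}$. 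You are also right to flag that the only non-routine point is justifying the kernel representation via the complex-Gaussian regularisation $\e^{-(\varepsilon+\i t)|k|^2/2}$ on a dense class before extending by continuity; the subsequent density argument for $\Cont^0(\R\setminus\{0\},L^q)$, using the uniform operator-norm bound on compact intervals avoiding $t=0$, is sound.
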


This respects the expected conservation of probability equality (unitarity of $U_0$) for $q=q'=2$, i.e., $\|U_0(t)\psi_0\|_2 = \|\psi_0\|_2$. As the Kato inequality introduces a singular factor of the type $|t|^{-2/\theta}$, estimates involving singular integrals become important.

\begin{definition}[Riesz potential]\label{def-riesz}
For $f \in \Lloc(\Omega^N)$ and $0 < \alpha < n$ we define
\[
(I_\alpha f)(x) = \frac{1}{C_\alpha} \int_{\R^n} \frac{f(y)}{|x-y|^{n-\alpha}} \d y, \quad C_\alpha = \pi^{n/2} 2^\alpha \frac{\Gamma(\alpha/2)}{\Gamma((n-\alpha)/2)}.
\]
\end{definition}

Note that the Riesz potential can be regarded as the inverse of a fractional power of the Laplace operator $I_\alpha = (-\Delta)^{-\alpha/2}$. For the operator $I_\alpha$ the following inequality holds, giving an estimate for $I_\alpha f$ under the $s$-norm by the $r$-norm of $f$ where the indices $r$ and $s$ are related by $\alpha$. This also was the starting point for the Sobolev embedding theorem in Sobolev's original proof. \cite{wiki-sobolev}

\begin{theorem}[Hardy--Littlewood--Sobolev inequality]\label{hls-ineq}
For $0 < \alpha < n$, $1 \leq r < n/\alpha$, $1/s = 1/r - \alpha/n$ there is a constant $C$ depending only on $r$ such that for all $f \in L^r$ we have
\[
\|I_\alpha f\|_s \leq C \|f\|_r.
\]
\end{theorem}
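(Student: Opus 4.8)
The plan is to realise $I_\alpha$ as convolution with a fixed kernel and then combine an elementary weak-type estimate with an interpolation argument. First I would write $I_\alpha f = f * K$ with $K(x) = C_\alpha^{-1}|x|^{-(n-\alpha)}$, and note that this kernel just fails to lie in $L^{n/(n-\alpha)}(\R^n)$ but does belong to the weak Lebesgue (Lorentz) space $L^{n/(n-\alpha),\infty}(\R^n)$, since $|\{\,|K|>\lambda\,\}|$ scales like $\lambda^{-n/(n-\alpha)}$. The exponent relation $1/s = 1/r - \alpha/n$ is precisely the one forced by (a weak form of) Young's convolution inequality against a kernel of this scaling; that heuristic dictates the whole structure of the proof.

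Second, I would establish the weak-type bound $\|I_\alpha f\|_{L^{s,\infty}} \leq C\|f\|_r$ for $1 \leq r < n/\alpha$ by the standard kernel splitting. Fix $\lambda>0$ and a radius $R$ to be chosen, and decompose $K = K\mathbf{1}_{\{|x|\leq R\}} + K\mathbf{1}_{\{|x|>R\}}$. The tail piece lies in $L^{r'}$ with $\|K\mathbf{1}_{\{|x|>R\}}\|_{r'} \sim R^{\alpha - n/r}$ — finite exactly because $r<n/\alpha$ makes $r'>n/(n-\alpha)$, and with a negative exponent for the same reason — so by H\"older $|f*K\mathbf{1}_{\{|x|>R\}}(x)| \leq C\|f\|_r R^{\alpha-n/r}$ pointwise, and choosing $R$ so that the right side equals $\lambda/2$ makes the set $\{\,|I_\alpha f|>\lambda\,\}$ a subset of $\{\,|f*K\mathbf{1}_{\{|x|\leq R\}}|>\lambda/2\,\}$. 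The near piece is in $L^1$ with $\|K\mathbf{1}_{\{|x|\leq R\}}\|_1 \sim R^\alpha$, so Young gives $\|f*K\mathbf{1}_{\{|x|\leq R\}}\|_r \leq C\|f\|_r R^\alpha$, and Chebyshev's inequality turns this into $|\{\,|f*K\mathbf{1}_{\{|x|\leq R\}}|>\lambda/2\,\}| \leq C\lambda^{-r}\|f\|_r^r R^{\alpha r}$. Substituting the chosen $R$ and bookkeeping the powers of $\lambda$ and $\|f\|_r$ — the exponent of $\lambda$ collapsing to $-s$ and that of $\|f\|_r$ to $s$ — yields exactly the weak-type $(r,s)$ estimate.

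Third, since $I_\alpha$ is linear (a fortiori sublinear) and the weak-type bound holds on the whole range $1 \leq r < n/\alpha$, I would fix $1 < r < n/\alpha$, pick $r_0, r_1$ with $1 < r_0 < r < r_1 < n/\alpha$ and the corresponding $s_i$ from $1/s_i = 1/r_i - \alpha/n$, and apply the Marcinkiewicz interpolation theorem to the two weak-type estimates $L^{r_i}\to L^{s_i,\infty}$ to obtain the strong-type bound $\|I_\alpha f\|_s \leq C\|f\|_r$ with $C = C(n,\alpha,r)$. This is where $r>1$ is genuinely used: at $r=1$ only the weak-type endpoint survives, so the statement should be read for $1 < r < n/\alpha$ (or, at $r=1$, in the weak-$L^s$ sense), and I would flag this.

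The main obstacle is not any individual estimate — splitting, H\"older, Young and Chebyshev are all routine — but the clean assembly: choosing $R = R(\lambda,\|f\|_r)$ so that the weak-type bound emerges with the correct homogeneity, and invoking Marcinkiewicz with a suitable pair of endpoints inside the open range. An alternative route, which additionally gives the sharp constant, is to use the Riesz rearrangement inequality to reduce the bilinear form $\iint f(x)|x-y|^{-(n-\alpha)}g(y)\,\d x\,\d y$ to radially symmetric decreasing $f,g$ and then estimate directly over dyadic shells in $|x-y|$; this bypasses interpolation at the cost of the (nontrivial) rearrangement inequality.
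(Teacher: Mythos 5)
Your argument is correct, and it is the standard textbook proof (kernel splitting for the weak-type bound, then Marcinkiewicz interpolation); the paper itself states this theorem without proof, citing it as a classical result, so there is nothing to compare against on the paper's side. Your bookkeeping checks out: with $R$ chosen so that $\|f\|_r R^{\alpha-n/r}\sim\lambda$, the exponent $r+\alpha r/(n/r-\alpha)$ collapses to $n/(n/r-\alpha)=s$, giving exactly the weak-$(r,s)$ bound, and the interpolation endpoints $r_0<r<r_1$ inside $(1,n/\alpha)$ do yield the correct intermediate pair since $1/s_i=1/r_i-\alpha/n$ is affine in $1/r_i$. You are also right to flag the endpoint: the strong-type inequality genuinely fails at $r=1$ (only the weak-type $(1,n/(n-\alpha))$ estimate survives there), so the paper's hypothesis ``$1\leq r$'' is too generous as stated, and the constant in fact depends on $n$ and $\alpha$ as well as $r$. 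The rearrangement route you mention at the end is the one that produces the sharp constant, but for the purposes of this paper the interpolation proof is entirely adequate.
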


The next important inequality is in its original form due to \citeasnoun[originally for the wave equation]{strichartz} and proves a remarkable smoothing effect of the free Schrödinger propagator in the sense that it improves $L^q$-smoothness ($q$ like in \autoref{def-X}) for almost all times $0 < t \leq T$. The endpoint $(q,\theta) = (2n/(n-2),2)$ for $n \geq 3$ is excluded in our definition but specialised \q{endpoint Strichartz estimates} are available. \cite{keel-tao} Strichartz' inequality is extensively used for uniqueness results for Schrödinger equations, also and especially in the semilinear case. \cite{cazenave} The version for solutions to the free Schrödinger equation used here is from \citeasnoun{ginibre-velo} and is cited after \citeasnoun{yajima2}. Its proof is also part of the proof for unique Schrödinger solutions in \citeasnoun{yajima} and thus it can be seen as a corollary to \autoref{lemma-Q} here. To achieve a more complete picture about the properties of the free evolution operator we still note it now. We write $U_0 \psi_0 = (t \mapsto U_0(t)\psi_0)$ for the state trajectory to the initial state $\psi_0 \in \H$.

\begin{theorem}[Strichartz inequality]\label{strichartz-ineq}
Let the exponents $q,\theta$ be as in \autoref{def-X}. Then there is a constant $C_0$ such that for every $\psi_0 \in \H$ it holds
\[ \| U_0 \psi_0 \|_{q,\theta} \leq C_0 \|\psi_0\|_2. \]
\end{theorem}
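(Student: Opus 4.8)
The plan is to deduce the Strichartz inequality from the Kato inequality (\autoref{kato-ineq}) together with the Hardy--Littlewood--Sobolev inequality (\autoref{hls-ineq}) by the classical $TT^*$ (duality) argument, which is how the result is presented in \citeasnoun{ginibre-velo}. First I would observe that it suffices to prove the dual statement: the adjoint of the map $\psi_0 \mapsto U_0 \psi_0$, viewed as a bounded operator from $\H$ into $L^{q,\theta}$, is the operator $T^* : L^{q',\theta'} \to \H$ given by $T^*\varphi = \int_0^T U_0(-s)\varphi(s)\d s$, and boundedness of $T$ is equivalent to boundedness of $T^*$, which in turn is equivalent to boundedness of the composition $TT^* : L^{q',\theta'} \to L^{q,\theta}$. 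Explicitly,
\[
(TT^*\varphi)(t) = \int_0^T U_0(t-s)\varphi(s)\d s,
\]
so everything reduces to estimating this space-time convolution.

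Next I would apply \autoref{kato-ineq} pointwise in time to the integrand: for each fixed $s$,
\[
\|U_0(t-s)\varphi(s)\|_q \leq (2\pi|t-s|)^{-2/\theta}\|\varphi(s)\|_{q'},
\]
and therefore, using Minkowski's integral inequality,
\[
\|(TT^*\varphi)(t)\|_q \leq (2\pi)^{-2/\theta}\int_0^T |t-s|^{-2/\theta}\,\|\varphi(s)\|_{q'}\d s.
\]
Now the right-hand side is, up to a constant, a one-dimensional Riesz potential $I_\alpha$ in the time variable with $\alpha = 1 - 2/\theta$ (note $n=1$ here and $0 < 2/\theta < 1$ because $\theta > 2$, so $0 < \alpha < 1$) applied to the scalar function $s \mapsto \|\varphi(s)\|_{q'}$. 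Taking the $L^\theta$-norm in $t$ and invoking \autoref{hls-ineq} with $r = \theta'$, $s = \theta$ and the exponent relation $1/\theta = 1/\theta' - \alpha$, which holds precisely because $\alpha = 1 - 2/\theta = 1/\theta' - 1/\theta$, gives
\[
\|TT^*\varphi\|_{q,\theta} \leq C\,\bigl\|\,\|\varphi(\cdot)\|_{q'}\,\bigr\|_{\theta'} = C\,\|\varphi\|_{q',\theta'}.
\]
This is the boundedness of $TT^*$, hence of $T^*$, hence of $T$, which is exactly the claimed estimate $\|U_0\psi_0\|_{q,\theta} \leq C_0\|\psi_0\|_2$; one also checks the borderline integrability (that the time singularity $|t-s|^{-2/\theta}$ is locally integrable) is guaranteed by $\theta > 2$, which is why the endpoint is excluded.

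The main obstacle is making the duality step fully rigorous: one must justify that $TT^*$ is genuinely the adjoint pairing, which requires a density argument (take $\psi_0$ and $\varphi$ in suitable dense subsets, e.g.\ Schwartz functions in space and continuous in time, so that all the integrals converge absolutely and Fubini applies), and one must be slightly careful that the a priori unbounded operators are first defined on a dense domain and only afterwards extended by the estimate just obtained. A secondary technical point is the application of \autoref{hls-ineq} in the one-dimensional time variable, where one should confirm the hypotheses $1 \leq \theta' < 1/\alpha$ and $1/\theta = 1/\theta' - \alpha$ are met for the admissible range of exponents in \autoref{def-X}; both follow from the defining relation $2/\theta = n(1/2 - 1/q)$ and $\theta > 2$, but spelling this out is the part that needs genuine care rather than routine manipulation.
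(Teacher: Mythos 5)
Your argument is correct and is essentially the paper's own route: the theorem is presented there as a corollary of \autoref{lemma-Q}, whose proof of inequality \eqref{bounded-d} is exactly your Kato--Minkowski--Hardy--Littlewood--Sobolev estimate of the space-time convolution (with $n=1$, $\alpha=1-2/\theta$, $r=\theta'$, $s=\theta$), and the remark after the proof of \eqref{bounded-c} about pairing $(\varphi,U_0\psi)$ with a static $\psi\in\H$ is precisely your $TT^*$ duality step. The only case your HLS application misses is the trivial endpoint $\theta=\infty$, $q=2$, where the claim reduces to unitarity of $U_0$.
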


In \citeasnoun{dancona} it is stressed that such Strichartz estimates seem to be more general than the basic smoothing estimates of $L^{q'}$-$L^q$-type like \autoref{kato-ineq} (Kato inequality) as they can be found for generalisations of Schrödinger equations as well, like with non-linearities or time-dependent potentials (which we study here), where the $L^{q'}$-$L^q$-estimates fail to hold. This means for us that we can rely on Strichartz estimates also in the case of additional potentials as in \autoref{Cv-strichartz}, which will become important when studying $\psi[v]$ under variations of the potential $v$ in \autoref{sect-func-diff-successive}.

Following Theorem 1 of \citeasnoun{burq} such Strichartz estimates hold for evolution under the influence of certain time-independent one-particle singular external potentials like a point-dipole. Note that the conditions on the indices demand $q = 2$ for large dimensionality, like it would be the case for a many-body system.

Note that for this inequality to hold the exponents of the $L^p$ spaces in space, $q$, and time, $\theta$, must be related in a dimensionality-dependent way, $2/\theta = n(1/2 - 1/q)$ to be exact. Sharp estimates, i.e., definite values for the constant $C_0$, have only recently been established and seem to be limited to small dimensionality until today (see for example \citeasnoun{hundertmark}).

We stress again that the spatial domain here is always $\R^n$, Strichartz estimates for other, probably bounded domains are available, although not in this general form. The case of compact manifolds is treated for example in \citeasnoun{burq-2004} where a loss of derivatives occurs on the right hand side of the inequality. Instead of the $L^2$-norm the norm of a (fractional) Sobolev space gives the upper bound then. Such more advanced Strichartz estimates will not be considered here.

These different inequalities involving $U_0$ ask for a \emph{polymorphic} definition of the symbol, where the right operational use is determined from the context, i.e., the object on which the operator is applied. This means one has to be more cautious but on the other side is able to gain a deeper insight into how the objects of the theory are linked. We readily collect three different uses of the symbol $U_0$.

Combining \autoref{strichartz-ineq} (Strichartz inequality) and $U_0(t)$ unitary we can easily derive the boundedness of an operator
\begin{align*}
U_0 : \H &\longrightarrow \X \\
\psi_0 &\longmapsto U_0 \psi_0 = (t \mapsto U_0(t)\psi_0),
\end{align*}
with the bound
\begin{equation}\label{U0-bounded}
\| U_0 \psi_0 \|_\X = \| U_0 \psi_0 \|_{2,\infty} + \| U_0 \psi_0 \|_{q,\theta} \leq (1+C_0) \|\psi_0\|_2.
\end{equation}
By \autoref{kato-ineq} (Kato inequality) the free propagator $U_0(t)$ at a time $t \neq 0$ can be seen as a bounded operator
\[ U_0(t) : L^{q'} \longrightarrow L^q \]
and of course equally well for all $t \in \R$ as the usual
\[ U_0(t) : L^2 \longrightarrow L^2. \]
And finally to achieve maximal flexibility another operational interpretation of $U_0$ will be introduced where it operates on a time-dependent expression.
\begin{align*}
U_0 \varphi &= (t \mapsto U_0(t)\varphi(t)) \\
U_0^* \varphi &= (t \mapsto U_0(-t)\varphi(t))
\end{align*}
This makes it into a pointwise acting operator with respect to the time variable, well defined on $\Cont^0(\R, \H)$. The compound operator defined later in this section are bounded on $\X$ as well, see \autoref{Q-bounded}. This notation has a great significance related to the interaction picture because the transformed quantity $\tilde\psi=U_0^{*}\psi_0$ will be shown to fulfil a reduced version of the Schrödinger equation, namely the Tomonaga--Schwinger equation \eqref{schwinger-tomonaga}.

The usual Hölder inequality for $L^p$ spaces can easily be adapted to an $L^{m,\mu}$ space.

\begin{theorem}[Hölder inequality]\label{hoelder-ineq}
Let $1/m = 1/r + 1/s$ and $1/\mu = 1/\rho + 1/\sigma$. Then
\[ \|f g\|_{m,\mu} \leq \|f\|_{r,\rho} \|g\|_{s,\sigma}. \]
\end{theorem}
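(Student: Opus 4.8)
The plan is to reduce the mixed space-time Hölder inequality to the classical Hölder inequality in two successive applications. First I would fix a time $t \in [0,T]$ and apply the ordinary spatial Hölder inequality with the exponents $r,s$ satisfying $1/m = 1/r + 1/s$. This gives, for almost every $t$,
\[
\|f(t)g(t)\|_m \leq \|f(t)\|_r\,\|g(t)\|_s,
\]
so that the function $t \mapsto \|f(t)g(t)\|_m$ is pointwise dominated by the product of $t \mapsto \|f(t)\|_r$ and $t \mapsto \|g(t)\|_s$.

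Next I would take the $L^\mu$-norm over the time interval $[0,T]$ of both sides, using monotonicity of the $L^\mu$-norm with respect to pointwise domination, and then apply the classical Hölder inequality in the time variable with the exponents $\rho,\sigma$ satisfying $1/\mu = 1/\rho + 1/\sigma$:
\[
\|fg\|_{m,\mu} = \big\| \|f(\cdot)g(\cdot)\|_m \big\|_{L^\mu([0,T])} \leq \big\| \|f(\cdot)\|_r \cdot \|g(\cdot)\|_s \big\|_{L^\mu([0,T])} \leq \big\| \|f(\cdot)\|_r \big\|_{L^\rho} \cdot \big\| \|g(\cdot)\|_s \big\|_{L^\sigma}.
\]
The right-hand side is exactly $\|f\|_{r,\rho}\,\|g\|_{s,\sigma}$ by the definition of the Bochner-type norms, which completes the argument.

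The only points requiring a little care are the measurability of the intermediate functions $t \mapsto \|f(t)\|_r$ etc., which is standard for elements of these Bochner spaces, and the handling of the endpoint cases where one or more of the exponents equals $\infty$; in those cases the relevant Hölder step is replaced by the trivial bound involving the essential supremum (in space) or the uniform supremum (in time), and the chain of inequalities goes through unchanged. I do not expect any genuine obstacle here: the statement is a routine two-step iteration of the scalar Hölder inequality, and the main ``work'' is merely bookkeeping of the exponent relations, which are precisely the hypotheses $1/m = 1/r+1/s$ and $1/\mu = 1/\rho+1/\sigma$.
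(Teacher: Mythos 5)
Your two-step iteration (spatial H\"older pointwise in $t$, then temporal H\"older applied to the resulting scalar functions $t \mapsto \|f(t)\|_r$ and $t \mapsto \|g(t)\|_s$) is correct, and your remarks on measurability and the $\infty$-endpoints cover the only delicate points. The paper states this result without proof, remarking only that the usual H\"older inequality "can easily be adapted" to $L^{m,\mu}$; your argument is precisely that standard adaptation.
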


The next lemma is a close relative to \autoref{lemma-sum-space-inequality} and \autoref{lemma-rs-0} but now for Banach spaces that already include time-dependence.

\begin{lemma}\label{lemma-mult-op}
A multiplication operator $v \in \V$ is a bounded operator $\X \rightarrow \X'$ and fulfils
\[
\|v \varphi\|_{\X'} \leq T^* \|v\|_\V \|\varphi\|_\X
\]
with $T^* = \max\{ T^{1-1/\beta}, T^{1 - 2/\theta-1/\alpha} \}$.
\end{lemma}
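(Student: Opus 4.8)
The plan is to decompose an arbitrary $v \in \V = L^{p,\alpha} + L^{\infty,\beta}$ as $v = v_1 + v_2$ with $v_1 \in L^{p,\alpha}$, $v_2 \in L^{\infty,\beta}$, and to estimate the two contributions $v_1 \varphi$ and $v_2 \varphi$ separately, showing that the first lands in $L^{q',\theta'}$ and the second in $L^{2,1}$, so that by definition of the sum-space norm on $\X' = L^{2,1} + L^{q',\theta'}$ the product $v\varphi$ lies in $\X'$ with the claimed bound. Since $\|v\|_\V$ is an infimum over all such decompositions, proving the estimate for each fixed decomposition and then passing to the infimum will give the final inequality. Throughout, the key tool is the H\"older inequality for Bochner spaces (\autoref{hoelder-ineq}), applied once in the spatial variable and once in the time variable, together with H\"older in time against the constant function $1$ on the finite interval $[0,T]$ (which is where the factors $T^{1-1/\beta}$ and $T^{1-2/\theta-1/\alpha}$ come from).

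\textbf{The $v_2$ term.} For $v_2 \in L^{\infty,\beta}$ and $\varphi \in \X \subset \Cont^0([0,T],\H)$, pointwise in $t$ one has $\|v_2(t)\varphi(t)\|_2 \leq \|v_2(t)\|_\infty \|\varphi(t)\|_2 \leq \|v_2(t)\|_\infty \|\varphi\|_{2,\infty}$. Integrating in time and using H\"older in time with exponents $\beta$ and $\beta' = \beta/(\beta-1)$ against $1 \in L^{\beta'}([0,T])$ gives
\[
\|v_2\varphi\|_{2,1} = \int_0^T \|v_2(t)\varphi(t)\|_2 \d t \leq \|\varphi\|_{2,\infty} \int_0^T \|v_2(t)\|_\infty \d t \leq T^{1-1/\beta}\, \|v_2\|_{\infty,\beta}\, \|\varphi\|_{2,\infty},
\]
which controls $\|v_2\varphi\|_{2,1}$ by $T^{1-1/\beta}\|v_2\|_{\infty,\beta}\|\varphi\|_\X$.

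\textbf{The $v_1$ term.} For $v_1 \in L^{p,\alpha}$ I use the spatial H\"older inequality with $1/q' = 1/p + 1/q$ (valid since $1/p = 1 - 2/q$ forces $1/p + 1/q = 1 - 1/q = 1/q'$), giving $\|v_1(t)\varphi(t)\|_{q'} \leq \|v_1(t)\|_p \|\varphi(t)\|_q$ for a.e.\ $t$. Then I apply H\"older in time: writing $1/\theta' = 1/\alpha + 1/\theta + (1 - 1/\alpha - 1/\theta - 1/\theta')$, and noting that the exponent relation $2/\theta = n(1/2-1/q)$ together with $2/\theta + 2/\theta' = 2$ and the hypothesis $1/\alpha < 1 - 2/\theta$ makes the leftover exponent $\gamma := 1 - 1/\alpha - 1/\theta - 1/\theta' = 1/\theta - 1/\alpha \geq 2/\theta - 1/\alpha \cdots$ — more carefully, the arithmetic to track is $1/\theta' = 1 - 1/\theta$, so $1/\theta' - 1/\alpha - 1/\theta = 1 - 2/\theta - 1/\alpha \geq 0$ by the defining inequality on $\alpha$ — so that H\"older in time against $1 \in L^{1/(1-2/\theta-1/\alpha)}([0,T])$ yields
\[
\|v_1\varphi\|_{q',\theta'} \leq T^{\,1-2/\theta-1/\alpha}\, \|v_1\|_{p,\alpha}\, \|\varphi\|_{q,\theta} \leq T^{\,1-2/\theta-1/\alpha}\, \|v_1\|_{p,\alpha}\, \|\varphi\|_\X.
\]
Combining the two estimates, $\|v\varphi\|_{\X'} \leq \|v_1\varphi\|_{q',\theta'} + \|v_2\varphi\|_{2,1} \leq T^* (\|v_1\|_{p,\alpha} + \|v_2\|_{\infty,\beta}) \|\varphi\|_\X$ with $T^* = \max\{T^{1-1/\beta}, T^{1-2/\theta-1/\alpha}\}$, and taking the infimum over decompositions of $v$ gives $\|v\varphi\|_{\X'} \leq T^* \|v\|_\V \|\varphi\|_\X$.

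\textbf{Main obstacle.} The only real subtlety is the bookkeeping of the time-exponent in the $v_1$ estimate: one must verify that the three exponents $\alpha$, $\theta$ and the ``slack'' exponent $\big(1-1/\alpha-1/\theta-1/\theta'\big)^{-1}$ are all $\geq 1$ (equivalently that $1/\alpha + 1/\theta + 1/\theta' \leq 1$, i.e.\ $1/\alpha \leq 1 - 1/\theta - 1/\theta' = 2/\theta' - 1 \cdots$), which is exactly guaranteed by the structural hypothesis $0 \leq 1/\alpha < 1 - 2/\theta$ from \autoref{def-V} once one substitutes $1/\theta' = 1 - 1/\theta$. Everything else is a direct application of H\"older's inequality in the Bochner setting and poses no difficulty.
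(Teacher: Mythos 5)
Your proof is correct and follows essentially the same route as the paper's: the same decomposition $v = v_1 + v_2$, spatial H\"older with $1/q' = 1/p + 1/q$ (from $1/p = 1-2/q$), and H\"older in time against the constant function on $[0,T]$ to produce the factors $T^{1-1/\beta}$ and $T^{1-2/\theta-1/\alpha}$, exactly as in the paper (which merely phrases the time step as first applying two-exponent H\"older and then enlarging the time index via $\|f\|_{m,\gamma}\leq T^{1/\gamma-1/\rho}\|f\|_{m,\rho}$). The only blemishes are the two garbled parentheticals in the exponent bookkeeping --- the slack exponent is $1/\theta'-1/\alpha-1/\theta = 1-2/\theta-1/\alpha$, not $1-1/\alpha-1/\theta-1/\theta'$, and the condition is $1/\alpha+1/\theta\leq 1/\theta'$ rather than $1/\alpha+1/\theta+1/\theta'\leq 1$ --- but your ``more careful'' computation arrives at the right value, so the argument stands.
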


\begin{proof}
We remember the partitioning $v = v_1 + v_2$ with $v_1 \in L^{p,\alpha}, v_2 \in L^{\infty,\beta}$ given by the norm of $\V$ and use Hölder's inequality for each part of $v\varphi = v_1\varphi + v_2\varphi$. To get the final result we need to change the time indices of the norms to bigger values, which is possible with the simple relation (for arbitrary $m,\gamma,\rho$ and $\rho>\gamma$ using Hölder too)
\[
\|f\|_{m,\gamma} = \|1 \cdot f\|_{m,\gamma} \leq \|1\|_{\infty,\gamma\rho/(\rho-\gamma)} \|f\|_{m,\rho} = T^{1/\gamma-1/\rho} \|f\|_{m,\rho}.
\]
For the $L^{q',\theta'}$-part of $\X$ we have with $1/q'-1/q=1/p$
\[
\|v_1\varphi\|_{q',\theta'} \leq \|v_1\|_{p,\theta\theta'/(\theta-\theta')} \|\varphi\|_{q,\theta} \leq T^{1-2/\theta-1/\alpha} \|v_1\|_{p,\alpha} \|\varphi\|_{q,\theta}
\]
and for the $L^{2,1}$-part 
\begin{equation}\label{eq-lemma-mult-op}
\|v_2\varphi\|_{2,1} \leq \|v_2\|_{\infty,1} \|\varphi\|_{2,\infty} \leq T^{1-1/\beta} \|v_2\|_{\infty,\beta} \|\varphi\|_{2,\infty}.
\end{equation}
The right hand side of the lemma's statement clearly includes those two estimates which concludes the proof.
\end{proof}

\begin{definition}
The simple linear integral operator $S$ is defined as
\[ (S\varphi)(t) = \int_0^t \varphi(s) \d s. \]
\end{definition}

\begin{definition}\label{def-Qv}
We define the linear integral operator $Q_v$ as
\[ Q_v = -\i U_0 S U_0^* v. \]
\end{definition}

Note that these two definitions are similar but slightly different from $S$ and $Q$ in our primary reference \cite{yajima}. This kind of transformation of operators with $U_0$ and $U_0^*$ reminds of the so-called interaction picture discussed in the next section. Written out explicitly we have
\begin{align*}
Q_v : \X &\longrightarrow \X \\
\varphi &\longmapsto \left( t \mapsto -\i\int_0^t U_0(t-s) v(s) \varphi(s) \d s \right).
\end{align*}
This operator will be shown to be bounded on $\X$ if $v \in \V$ in \autoref{Q-bounded} below. The next lemma expresses the crucial smoothing property of the free evolution. This is needed after multiplying with a potential $v$ which casts the operand outside of $\X$ and into $\X'$, visible in \autoref{lemma-mult-op}. Thus boundedness of $Q_v$ as an operator $\X \rightarrow \X$ is possible.

\begin{lemma}\label{lemma-Q}
The $Q = -\i U_0 S U_0^*$ part of $Q_v$ is a bounded operator $\X' \rightarrow \X$ with operator norm $\|Q\|=C_Q$.
\end{lemma}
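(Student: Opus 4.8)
The plan is to split both the argument and the target into their two constituent pieces. Write $\varphi=\varphi_1+\varphi_2$ with $\varphi_1\in L^{2,1}$ and $\varphi_2\in L^{q',\theta'}$, and estimate $\|Q\varphi\|_{\X}=\|Q\varphi\|_{2,\infty}+\|Q\varphi\|_{q,\theta}$ by $\|Q\varphi_1\|_{\X}+\|Q\varphi_2\|_{\X}$; since the $\varphi_i$ are arbitrary representatives of the decomposition, passing to the infimum over all such splittings reduces the lemma to the four cross-estimates $L^{2,1}\to L^{2,\infty}$, $L^{2,1}\to L^{q,\theta}$, $L^{q',\theta'}\to L^{2,\infty}$ and $L^{q',\theta'}\to L^{q,\theta}$ for the operator $(Q\varphi)(t)=-\i\int_0^t U_0(t-s)\varphi(s)\,\d s$. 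Throughout one uses $U_0(t-s)=U_0(t)U_0(-s)$ and unitarity of $U_0(t)$ on $\H$.

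The two estimates with $\varphi_1\in L^{2,1}$ are the soft ones. By the Minkowski inequality and unitarity, $\|(Q\varphi_1)(t)\|_2\le\int_0^t\|\varphi_1(s)\|_2\,\d s\le\|\varphi_1\|_{2,1}$, which gives the $L^{2,1}\to L^{2,\infty}$ bound. For $L^{2,1}\to L^{q,\theta}$ I would first pull the spatial norm inside, $\|(Q\varphi_1)(t)\|_q\le\int_0^T\1_{\{s\leq t\}}\|U_0(t-s)\varphi_1(s)\|_q\,\d s$, then apply Minkowski's integral inequality to move the $L^\theta_t$-norm inside the $\d s$-integral; for each fixed $s$ the resulting inner expression is bounded by $\|U_0\varphi_1(s)\|_{q,\theta}\le C_0\|\varphi_1(s)\|_2$ via \autoref{strichartz-ineq} (Strichartz), and integrating in $s$ yields $\|Q\varphi_1\|_{q,\theta}\le C_0\|\varphi_1\|_{2,1}$. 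For the $L^2$-bound of $\varphi_2\in L^{q',\theta'}$ I would argue by duality: from $(Q\varphi_2)(t)=-\i U_0(t)\int_0^tU_0(-s)\varphi_2(s)\,\d s$ and the fact that the operator $\psi_0\mapsto(s\mapsto U_0(s)\psi_0)$ from $\H$ to $L^{q,\theta}$ has norm $\le C_0$ by \autoref{strichartz-ineq}, its adjoint $\varphi_2\mapsto\int_0^TU_0(-s)\varphi_2(s)\,\d s$ maps $L^{q',\theta'}$ to $\H$ with the same bound; applying it to $\1_{[0,t]}\varphi_2$ and using unitarity of $U_0(t)$ gives $\|(Q\varphi_2)(t)\|_2\le C_0\|\varphi_2\|_{q',\theta'}$ uniformly in $t$.

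The remaining cross-estimate, $L^{q',\theta'}\to L^{q,\theta}$, is the real heart of the lemma and the step I expect to be the main obstacle. A naive Minkowski reduction fails here, because \autoref{kato-ineq} (Kato inequality) produces the kernel $|t-s|^{-2/\theta}$ in time, which is not locally $L^\theta$ since $(2/\theta)\theta=2>1$. Instead one estimates $\|(Q\varphi_2)(t)\|_q\le(2\pi)^{-2/\theta}\int_0^t|t-s|^{-2/\theta}\|\varphi_2(s)\|_{q'}\,\d s$, recognises the right-hand side (after enlarging the domain of integration to $[0,T]$, which only increases it) as a one-dimensional Riesz potential $I_\alpha$ with $\alpha=1-2/\theta$ applied to the scalar function $s\mapsto\|\varphi_2(s)\|_{q'}$, and invokes \autoref{hls-ineq} (Hardy--Littlewood--Sobolev) in dimension one with $\theta'$ and $\theta$ playing the roles of $r$ and $s$ there. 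The exponent bookkeeping matches exactly: the HLS relation becomes the Hölder duality $1-1/\theta=1/\theta'$, the admissibility $r<1/\alpha$ reduces to $\theta-2<\theta-1$, and $\alpha>0$ is just $\theta>2$ from \autoref{def-X}. This delivers $\|Q\varphi_2\|_{q,\theta}\le C\|\varphi_2\|_{q',\theta'}$.

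Combining the four estimates and optimising over the splitting $\varphi=\varphi_1+\varphi_2$ bounds $\|Q\varphi\|_{\X}$ by a constant $C_Q$ times $\|\varphi\|_{\X'}$, with $C_Q$ depending only on $n,q,\theta$ (through $C_0$ and the HLS constant) and, via the finite interval, on $T$. It then remains to check that $Q\varphi$ really lands in the $\Cont^0([0,T],\H)$ component of $\X$ and not merely in $L^{2,\infty}$; this is routine, following from strong continuity of $U_0$ together with dominated convergence applied to the defining integral, established first for $\varphi$ in a dense subclass and then extended by the $L^2$-bound just proved. The degenerate case $\theta=\infty$ (which forces $q=2$ and reduces $Q$ to the elementary map $\varphi\mapsto\bigl(t\mapsto-\i\int_0^tU_0(t-s)\varphi(s)\,\d s\bigr)$ from $L^{2,1}$ into $\Cont^0([0,T],\H)$) is trivial and can be dispatched separately.
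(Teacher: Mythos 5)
Your proof is correct, and it shares the paper's overall skeleton: reduce the claim to the four cross-estimates between $L^{2,1},L^{q',\theta'}$ and $L^{2,\infty},L^{q,\theta}$, with the hard diagonal estimate $L^{q',\theta'}\to L^{q,\theta}$ handled exactly as in the paper, namely Minkowski's integral inequality, the Kato dispersive bound producing the kernel $|t-s|^{-2/\theta}$, and the one-dimensional Hardy--Littlewood--Sobolev inequality with $\alpha=1-2/\theta$. Where you genuinely diverge is in the two mixed estimates. The paper obtains $L^{q',\theta'}\to L^{2,\infty}$ by expanding $\|Q\varphi\|_{2,\infty}^2$ as an inner product, using the group property $U_0(s-t)U_0(t-r)=U_0(s-r)$ and H\"older, and then gets $L^{2,1}\to L^{q,\theta}$ from that by $L^{q,\theta}$--$L^{q',\theta'}$ duality; you instead invoke \autoref{strichartz-ineq} directly for $L^{2,1}\to L^{q,\theta}$ (via Minkowski in $s$ plus translation invariance) and its adjoint, the dual Strichartz estimate, for $L^{q',\theta'}\to L^{2,\infty}$. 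Both routes are valid, but note the paper's deliberate economy: it proves the lemma from the Kato inequality and HLS alone and explicitly remarks that the Strichartz inequality then follows as a corollary of this very argument, whereas your version imports Strichartz (and its adjoint, which the paper never states) as external input, so two of your four estimates rest on a result the paper treats as downstream of the lemma. On the other hand, you attend to two points the paper's proof passes over in silence: that $Q\varphi$ actually lies in the $\Cont^0([0,T],\H)$ component of $\X$ and not merely in $L^{2,\infty}$, and the endpoint $\theta=\infty$ (forcing $q=2$), where the HLS step degenerates because $\alpha=1$ equals the dimension; separating that case out, as you do, is a small but real improvement in completeness.
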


\begin{proof}
Remember the expressions for the $\X$- and $\X'$-norms
\begin{align*}
&\|\varphi\|_\X = \|\varphi\|_{2,\infty} + \|\varphi\|_{q,\theta} \\
&\|\varphi\|_{\X'} = \inf\{ \|\varphi_1\|_{2,1} + \|\varphi_2\|_{q',\theta'} \mid \varphi_1 \in L^{2,1}, \varphi_2 \in L^{q',\theta'}, \varphi=\varphi_1+\varphi_2\}
\end{align*}
which means that in order to be bounded the operator $Q : \X' \rightarrow \X$ has to fulfil all four inequalities
\begin{subequations}\begin{align}
\|Q \varphi\|_{2,\infty} &\leq C_1 \|\varphi\|_{2,1} \label{bounded-a} \\
\|Q \varphi\|_{2,\infty} &\leq C_2 \|\varphi\|_{q',\theta'} \label{bounded-b} \\
\|Q \varphi\|_{q,\theta} &\leq C_3 \|\varphi\|_{2,1} \label{bounded-c} \\
\|Q \varphi\|_{q,\theta} &\leq C_4 \|\varphi\|_{q',\theta'}. \label{bounded-d}
\end{align}\end{subequations}
We start from below with \eqref{bounded-d} by drawing the spatial $q$-norm into the integral arising from $S$ by using the triangle inequality (Minkowski's integral inequality). We estimate further by changing the upper bound of the integral to $T$ as the integrand is surely positive now. In the second step one estimates the $q$-norm by a $q'$-norm by applying \autoref{kato-ineq} (Kato inequality).
\begin{align}\label{bounded-d-estimate}
\|U_0 S U_0^* \varphi\|_{q,\theta} &= \left\| t \mapsto \left\| \int_0^t U_0(t-s)\varphi(s)\d s \right\|_q \right\|_\theta \nonumber\\
&\leq \left\| t \mapsto \int_0^T \left\| U_0(t-s)\varphi(s) \right\|_q \d s \right\|_\theta \\
&\leq (2\pi)^{-2/\theta} \left\| t \mapsto \int_0^T (t-s)^{-2/\theta} \left\| \varphi(s) \right\|_{q'} \d s \right\|_\theta \nonumber
\end{align}
The singular time integral can be treated by \autoref{hls-ineq} (Hardy--Littlewood--Sobolev inequality) where the dimensionality is $n=1$, $\alpha = 1 - 2/\theta > 0$ (ruling out $\theta=2$ as stated before), $s = \theta$ and thus $r = \theta'$. The constant from this estimate will be combined with the $2\pi$ factor to $C_4$ right away.
\begin{align*}
\|U_0 S U_0^* \varphi\|_{q,\theta} &\leq C_4 \left\| t \mapsto \left\| \varphi(t) \right\|_{q'} \right\|_{\theta'} = C_4 \|\varphi\|_{q',\theta'}
\end{align*}
Next we show \eqref{bounded-b} by rewriting the 2-norm as an inner product.
\begin{align*}
\|U_0 S U_0^* \varphi\|_{2,\infty}^2 &= \sup_{t \in [0,T]} \int_{\Omega^N} \left| \int_0^t U_0(t-s)\varphi(s)\d s \right|^2 \d x \\
&= \sup_{t \in [0,T]} \left\langle \int_0^t U_0(t-s)\varphi(s)\d s, \int_0^t U_0(t-r)\varphi(r)\d r \right\rangle \\
&= \sup_{t \in [0,T]} \int_0^t \left\langle \varphi(s), \int_0^t U_0(s-t) U_0(t-r)\varphi(r) \d r\right\rangle \d s
\end{align*}
We apply the concatenation property of the free evolution operator $U_0(s-t) U_0(t-r) = U_0(s-r)$, rewrite the inner product as a spatial $L^1$-norm and estimate it with Hölder's inequality.
\begin{align*}
\|U_0 S U_0^* \varphi\|_{2,\infty}^2 &\leq \sup_{t \in [0,T]} \int_0^t \left\|\varphi^*(s) \cdot \int_0^t U_0(s-r)\varphi(r) \d r\right\|_1 \d s\\
&\leq \sup_{t \in [0,T]} \int_0^t \| \varphi(s) \|_{q'} \cdot \left\| \int_0^t U_0(s-r)\varphi(r) \d r \right\|_q \d s \\
&\leq \sup_{t \in [0,T]} \int_0^t \| \varphi(s) \|_{q'} \cdot \int_0^t \left\| U_0(s-r)\varphi(r) \right\|_q \d r \d s
\end{align*}
As all the integrands are positive we can estimate by changing the upper bound of the integral to $T$ and get rid of the supremum. Finally the outer time integral will be again written as an $L^1$-norm and Hölder becomes applicable once more. The final steps are the same as in \eqref{bounded-d-estimate}.
\begin{align*}
\|U_0 S U_0^* \varphi\|_{2,\infty}^2 &\leq \left\| s \mapsto \| \varphi(s) \|_{q'} \cdot \int_0^T \left\| U_0(s-r)\varphi(r) \right\|_q \d r \right\|_1 \\
&\leq \| \varphi \|_{q',\theta'} \cdot  \left\| s \mapsto \int_0^T \left\| U_0(s-r)\varphi(r) \right\|_q \d r \right\|_\theta \\[0.7em]
&\leq C_2^2 \| \varphi \|_{q',\theta'}^2
\end{align*}
Equation \eqref{bounded-c} will be proved with a duality argument between $L^{q,\theta}$ and its topological dual $L^{q',\theta'}$ and the dual pair $L^{2,\infty}$-$L^{2,1}$ likewise, both linked by the temporal-spatial inner product $(\cdot,\cdot)$. We take $\varphi, \psi \in L^{q',\theta'}$ with $\psi$ as ``test function'' adjoined to $U_0 S U_0^* \varphi \in L^{q,\theta}$(this is result \eqref{bounded-d}).
\begin{align*}
\left( \psi,U_0 S U_0^* \varphi \right) &= \int_0^T \left\langle \psi(t),\int_0^t U_0(t-s)\varphi(s)\d s \right\rangle \d t \\
&= \int_0^T \!\!\d t \int_0^t \!\!\d s \left\langle U_0(s-t)\psi(t),\varphi(s) \right\rangle
\end{align*}
In the next step we interchange the integrals by changing the integration intervals to $s \in [0,T]$ and $t \in [s,T]$. When taking the absolute value this can be estimated by the $L^{1,1}$-norm and this by Hölder's inequality by the norms of the dual spaces $L^{2,\infty}$ and $L^{2,1}$. Finally the estimate of $U_0 S U_0^* \psi$ in the $L^{2,\infty}$-norm has already been done with \eqref{bounded-b}, only with a slightly different interval of integration which will not change the result, and we get back to $L^{q',\theta'}$.
\begin{align*}
|\left( \psi,U_0 S U_0^* \varphi \right)| &= \left| \int_0^T \left\langle \int_s^T U_0(s-t)\psi(t) \d t,\varphi(s) \right\rangle \d s \right| \\
&\leq \int_0^T \int_{\Omega^N} \left| \int_s^T U_0(s-t)\psi(t) \d t \cdot \varphi(s) \right| \d x \d s \\
&\leq \left\|t \mapsto \int_s^T U_0(s-t)\psi(t) \d t\right\|_{2,\infty} \cdot \|\varphi\|_{2,1} \\
&\leq C_2 \|\psi\|_{q',\theta'} \cdot \|\varphi\|_{2,1}
\end{align*}
Saturation of the Hölder inequality gives us for special $\psi \in L^{q',\theta'}$
\[
|\left( \psi,U_0 S U_0^* \varphi \right)| = \|\psi\|_{q',\theta'} \cdot \|U_0 S U_0^* \varphi\|_{q,\theta}
\]
and thus by setting $C_3 = C_2$ the desired \eqref{bounded-c}
\[
\|U_0 S U_0^*\varphi\|_{q,\theta} \leq C_3 \|\varphi\|_{2,1}.
\]
A very similar argument can be used to examine $(\varphi,U_0\psi)$ with static $\psi \in \H$ and prove \autoref{strichartz-ineq} (Strichartz inequality). The final inequality \eqref{bounded-a} is fairly obvious even for $C_1=1$ if we use standard estimates and unitarity of $U_0$ on $L^2(\Omega^N)$.
\begin{align*}
\|U_0 S U_0^* \varphi\|_{2,\infty} &= \sup_{t \in [0,T]} \left\| \int_0^t U_0(t-s)\varphi(s)\d s \right\|_2 \\
&\leq \int_0^T \left\|U_0(t-s)\varphi(s) \right\|_2 \d s \\
&= \int_0^T \left\|\varphi(s) \right\|_2 \d s = \|\varphi\|_{2,1}
\end{align*}
With $C_Q = \max\{ C_1+C_3, C_2+C_4 \}$ this concludes the proof of the lemma.
\end{proof}

\begin{corollary}\label{Q-bounded}
$Q_v: \X \rightarrow \X$ is bounded with operator norm $\|Q_v\| \leq C_Q T^* \|v\|_\V$ where $T^* = \max\{ T^{1-1/\beta}, T^{1-n/(2p)-1/\alpha} \}$. This also proves that applying $Q_v$ more than once is of order larger than $v$ in the sense that $Q_v^k \in \lilo(\|v\|_\V)$ as $v \rightarrow 0$ for $k \geq 2$.
\end{corollary}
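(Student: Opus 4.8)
The plan is to read $Q_v$ as the composition of the multiplication operator by $v$ (already analysed in \autoref{lemma-mult-op}) with the smoothing operator $Q = -\i U_0 S U_0^*$ from \autoref{lemma-Q}. Writing $M_v : \X \to \X'$, $\varphi \mapsto v\varphi$, we have $Q_v = Q \circ M_v$ by \autoref{def-Qv}. \autoref{lemma-mult-op} gives $\|v\varphi\|_{\X'} \leq T^* \|v\|_\V \|\varphi\|_\X$ with $T^* = \max\{T^{1-1/\beta}, T^{1-2/\theta-1/\alpha}\}$, and \autoref{lemma-Q} gives $\|Q\psi\|_\X \leq C_Q \|\psi\|_{\X'}$. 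Chaining the two estimates yields $\|Q_v\varphi\|_\X \leq C_Q T^* \|v\|_\V \|\varphi\|_\X$, i.e.\ $Q_v$ maps $\X$ boundedly into $\X$ with $\|Q_v\| \leq C_Q T^* \|v\|_\V$.

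Next I would rewrite the exponent in $T^*$ into the form stated in the corollary using the index relations of \autoref{def-X} and \autoref{def-V}. Since $2/\theta = n(1/2 - 1/q)$ and $1/p = 1 - 2/q$, a one-line computation gives $n/(2p) = \tfrac{n}{2}(1 - 2/q) = \tfrac{n}{2} - \tfrac{n}{q} = 2/\theta$, so $T^{1-2/\theta-1/\alpha} = T^{1-n/(2p)-1/\alpha}$ and the two expressions for $T^*$ coincide.

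Finally, for the $\lilo(\|v\|_\V)$ claim I would iterate the operator-norm bound: $\|Q_v^k\| \leq \|Q_v\|^k \leq (C_Q T^*)^k \|v\|_\V^k$ for all $k \in \N$, where $C_Q$ and $T^*$ are fixed constants independent of $v$ (the latter depending only on $T$ and the fixed exponents $\alpha,\beta,p,\theta$). Hence for $k \geq 2$ one has $\|Q_v^k\|/\|v\|_\V \leq (C_Q T^*)^k \|v\|_\V^{k-1} \to 0$ as $v \to 0$ in $\V$, which is precisely $Q_v^k \in \lilo(\|v\|_\V)$.

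There is essentially no serious obstacle: the corollary is a bookkeeping consequence of the two preceding lemmas together with the admissibility relations between the exponents. The only point demanding a moment's care — and the one I would flag as the \emph{main} thing to get right — is verifying that the index relations are mutually consistent so that the two forms of $T^*$ genuinely agree, i.e.\ not conflating the spatial exponent $q$ (and its partner $p$) with the temporal exponent $\theta$ (and its partner $\alpha$).
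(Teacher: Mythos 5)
Your proposal is correct and follows essentially the same route as the paper: factor $Q_v$ through $\X \to \X' \to \X$ via \autoref{lemma-mult-op} and \autoref{lemma-Q}, then use the index relations $2/\theta = n(1/2-1/q)$ and $1/p = 1-2/q$ to identify $2/\theta = n/(2p)$ in the exponent of $T^*$. Your explicit verification of the $\lilo(\|v\|_\V)$ claim via $\|Q_v^k\| \leq (C_Q T^*)^k\|v\|_\V^k$ is a small welcome addition that the paper leaves implicit.
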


For the precise definition of the small Landau symbol $\lilo$ see \autoref{def-small-o} in the next chapter.

\begin{proof}
The operator $Q_v = -\i U_0 S U_0^* v = Q v$ is a concatenation of two bounded operators
\[
\X \overarrow{v\cdot} \X' \overarrow{Q} \X.
\]
Thus we have with \autoref{lemma-mult-op} and \autoref{lemma-Q} that
\[
\|Q_v\| = \|Q\| \cdot \|v\cdot\| \leq C_Q T^* \|v\|_\V.
\]
The relation $2/\theta = n(1/2-1/q)$ of the exponents of $\X$ from \autoref{def-X} combined with the assumptions $1/\alpha < 1 -2/\theta$ and $1/p = 1 -2/q$ from \autoref{def-V} and the duality of the exponents yields
\[
\frac{1}{\alpha} < 1-\frac{n}{2}\left(1-\frac{2}{q}\right) = 1-\frac{n}{2p}.
\]
This can be used to rewrite the exponent in $T^*$ and the proof is done.
\end{proof}

The relation $0 \leq 1/\alpha < 1-n/(2p)$ is generalised to ``$\leq$'' in \citeasnoun{dancona} for bounded time intervals which seems to relate to the more general $\theta \geq 2$ in their work.

Unfortunately the class of allowed potentials decreases considerably for large $n$ because of $n < 2p$ (this is still a condition in Theorem 1.1 of \citeasnoun{dancona}). As we have already seen in \autoref{sect-overview-td-hamiltonian} this rules out Coulombic singular potentials already for $n \geq 6$ thus only allowing the usual molecular theory for one single electron. Nevertheless this fits more or less to a usual hypothesis for static potentials $v \in L^{n/2}(\R^n)+L^\infty(\R^n)$ if $n \geq 3$, see for example \citeasnoun[11.3]{lieb-loss}.

\subsection{The interaction picture and existence of Schrö\-dinger solutions}
\label{sect-full-int-pic}

Let $H_0$ be the free Hamiltonian with dense domain $D(H_0) \subset \mathcal{H}$ and $v$ a real, time-dependent potential acting as multiplication operator such that for a fixed initial state $\psi_0 = \psi(0) \in D(H_0)$ a strong solution to the Schrödinger equation \eqref{standard-se} below exists. We write $H[v] = H_0 + v$ for the combined, time-dependent Hamiltonian and usually mark the $v$-dependency at the solution as well.
\begin{equation}\label{standard-se}
 \i \partial_t \psi[v] = H[v]\psi[v]
\end{equation}

The time evolution for the free dynamics alone is $U_0(t) = \exp(-\i H_0 t)$, describing the free evolution of a state if a time span $t$ passes. We will switch over to the $H_0$-interaction picture (a form of the Dirac intermediate picture) with trajectory $\tilde{\psi}$ by the substitution
\[
\psi = U_0 \tilde{\psi}.
\]
Putting this into \eqref{standard-se} yields the Tomonaga--Schwinger equation.
\begin{equation}\label{schwinger-tomonaga}
\i\partial_t \tilde{\psi} = \tilde{v} \tilde{\psi}, \quad \tilde{v} = U_0^* v U_0
\end{equation}
Integration of this equation over the time interval $[0,t]$ leaves us with the following recursive integral equation for solutions to \eqref{schwinger-tomonaga} which is an almost equivalent formulation of the problem. Here we will be mainly interested in solutions to the mild version of Tomonaga--Schwinger.
\begin{equation}\label{schwinger-tomonaga-int}
\tilde{\psi}(t) = \psi_0 - \i\int_0^t \tilde{v}(s) \tilde{\psi}(s) \d s
\end{equation}
Iterating this equation gives the (not yet time-ordered) Dyson series, but we may also arrive there by writing the equation above equivalently as a recursive expression for $\psi([v],t)$. This means we transform it back to the original Schrödinger picture, which leaves us with an integral equation of Volterra type that we call the ``mild'' Schrödinger equation. Here the operator $Q_v$ already studied in \autoref{Q-bounded} occurs.
\begin{equation}\label{schroequ-mild}
\psi([v],t) = U_0(t)\psi_0 - \i\int_0^t U_0(t-s) v(s) \psi([v],s) \d s = (U_0\psi_0 +Q_v\psi[v])(t)
\end{equation}
Note that the term \q{mild solution} here is in contrast to its use in \autoref{def-mild-solution} but still relates to the use in \citeasnoun[Def.~12.15]{renardy-rogers} and \citeasnoun[Def.~IV.2.3]{pazy} noted before. There it specifies solutions to an integral equation that one gets from a \q{variation of constants formula} or \q{Duhamel's formula} for Cauchy problems with inhomogeneity.
\[
\partial_t u(t) = A u(t) + f(t)
\]
Let $g(t) = T(-t)u(t)$, $T(t) = \e^{tA}$ the generated semigroup, just like in the interaction picture, then
\begin{align*}
\partial_t g(t) &= -A T(-t) u(t) + T(-t) \dot{u}(t) \\
&= -A T(-t) u(t) + T(-t) (A u(t) + f(t)) = T(-t) f(t).
\end{align*}
Integration yields
\[
g(t)=g(0) + \int_0^t T(-s)f(s) \d s
\]
and thus for the original solution
\[
u(t) = T(t)g(t) = T(t)u_0 + \int_0^t T(t-s)f(s) \d s.
\]
If we take the solution-dependent part $v(t)\psi(t)$ in \eqref{standard-se} as `inhomogeneity' we just arrive at \eqref{schroequ-mild}. Thus we adopt this terminology which is also used for example in \citeasnoun{faou} and \citeasnoun{guerrero}, whereas \citeasnoun{yajima} and \citeasnoun{dancona} simply use the term \q{integral equation}.

Solving \eqref{schroequ-mild} for $\psi[v]$ yields a Neumann series called \q{Dyson--Phil\-lips expansion} in \citeasnoun{cycon} that is equivalent to the Dyson series mentioned before. The Neumann series is just the operator generalisation of a geometric series. We write it as an equation not to determine $\psi([v],t)$ at a given time instant but for the whole trajectory $\psi[v]:t \mapsto \psi([v],t)$ within $\X$.
\begin{equation}\label{neumann-series}
\psi[v] = (\id - Q_v)^{-1} U_0 \psi_0 = \sum_{k=0}^\infty Q_v^k U_0\psi_0
\end{equation}
This series actually converges for $\psi_0 \in \H$ if the Strichartz estimate gives $U_0 \psi_0 \in \X$ and $T$ short enough such that $\|Q_v\| \leq C_Q T^* \|v\|_\V < 1$. The uniqueness of solutions to the Schrödinger equation for longer time intervals is still guaranteed by a continuation procedure (cf.~the proof of \autoref{Uv-bounded}). This result was already found by \citeasnoun[Th.~6.2]{phillips} in the case of a strongly continuously differentiable $v\cdot : [0,\infty) \rightarrow B(\H)$.

The result can be used to define an evolution operator considering a potential $v \in \V$ written $\psi([v],t) = U([v],t,s)\psi([v],s)$ with start time $s$ and end time $t$ which in \citeasnoun{yajima} is shown to fulfil the usual properties of evolution systems. Note that $U([0],t,s) = U_0(t-s)$ is just the free evolution. Analogously to $U_0$ we define the evolution under a potential $v \in \V$ as a mapping $U[v]$ from initial states to trajectories.
\begin{align*}
U[v] : \H &\longrightarrow \X \\
\psi_0 &\longmapsto U[v]\psi_0 = (t \mapsto U([v],t,0)\psi_0))
\end{align*}
From \eqref{neumann-series} we have the series representation of this operator.
\begin{equation}\label{def-Uv}
U[v] = \sum_{k=0}^\infty Q_v^k U_0 \in (\id + Q_v + o(\|v\|_\V)) U_0 \mtext{as} v \rightarrow 0
\end{equation}
We have now set the stage to formulate a theorem for unique solutions to the Schrödinger equation with potential $v \in \V$ by showing boundedness of the constructed evolution operator $U[v]$. (cf.~\citeasnoun{yajima}, Cor.~1.2 (3))

\begin{theorem}\label{Uv-bounded}
For arbitrary albeit finite $T>0$ and $v \in \V$ we get $U[v] : \H \rightarrow \X$ as a bounded operator solving the mild Schrödinger equation
\[
(U[v]\psi_0)(t) = U_0(t)\psi_0 - \i\int_0^t U_0(t-s) v(s) (U[v]\psi_0)(s) \d s
\]
for all $\psi_0 \in \H$.
\end{theorem}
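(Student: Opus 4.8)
The plan is to solve the mild Schr\"odinger equation first on short time intervals, where the Neumann series \eqref{neumann-series} converges by a contraction argument, and then to patch the pieces together over an arbitrary finite $[0,T]$.

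\textbf{Short time.} Over a subinterval of length $\tau$, with the integral in $Q_v$ taken from the left endpoint of that subinterval, \autoref{Q-bounded} gives $\|Q_v\| \le C_Q \tau^* \|v\|_\V$ with $\tau^* = \max\{\tau^{1-1/\beta},\,\tau^{1-n/(2p)-1/\alpha}\}$, and the exponents are strictly positive precisely because $\beta>1$ and $1/\alpha < 1-2/\theta$ (equivalently $1-n/(2p)$), which are built into \autoref{def-V}. Hence $\tau^*\to 0$ as $\tau\to 0$, so there is $\tau_0=\tau_0(v)>0$ with $\|Q_v\|\le\kappa<1$ whenever $\tau\le\tau_0$. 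On such a subinterval, $U_0\psi_0\in\X$ with $\|U_0\psi_0\|_\X\le(1+C_0)\|\psi_0\|_2$ by \eqref{U0-bounded}, so the Neumann series $\sum_{k\ge0}Q_v^k U_0\psi_0$ converges absolutely in $\X$ to some $\psi[v]$ with $\|\psi[v]\|_\X\le\frac{1+C_0}{1-\kappa}\|\psi_0\|_2$. By construction $\psi[v]=U_0\psi_0+Q_v\psi[v]$, which written out is exactly the mild Schr\"odinger equation on that subinterval, and $\psi[v]$ depends linearly and boundedly on $\psi_0\in\H$.

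\textbf{Patching.} Fix $0=t_0<t_1<\dots<t_M=T$ with $t_{j+1}-t_j\le\tau_0$. Solve on $[t_0,t_1]$ with datum $\psi_0$ as above; since $\psi[v]|_{[t_0,t_1]}\in\Cont^0([t_0,t_1],\H)$, the value $\psi_1:=\psi[v](t_1)\in\H$ is well defined and $\|\psi_1\|_2\le\frac{1+C_0}{1-\kappa}\|\psi_0\|_2$. Iterate on $[t_1,t_2]$ with datum $\psi_1$, and so on. The concatenated map $\psi[v]:[0,T]\to\H$ lies in $\Cont^0([0,T],\H)$ because each piece is continuous and consecutive pieces agree at $t_j$ (since $U_0(0)=\id$), and it lies in $L^{q,\theta}$ because $\|\psi[v]\|_{q,\theta}^\theta$ is a finite sum (or maximum, if $\theta=\infty$) of the finite $L^{q,\theta}$-norms of the pieces; thus $\psi[v]\in\X$. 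From $\|\psi_j\|_2\le(\frac{1+C_0}{1-\kappa})^j\|\psi_0\|_2$ and the per-piece bounds one gets $\|\psi[v]\|_\X\le C(T,v)\|\psi_0\|_2$ with $C(T,v)$ independent of $\psi_0$, and linearity in $\psi_0$ is preserved. Finally the global mild equation follows by induction on $j$: on $[t_j,t_{j+1}]$ the subinterval equation is $\psi[v](t)=U_0(t-t_j)\psi_j-\i\int_{t_j}^t U_0(t-s)v(s)\psi[v](s)\,\d s$; substituting the inductive identity $\psi_j=U_0(t_j)\psi_0-\i\int_0^{t_j}U_0(t_j-s)v(s)\psi[v](s)\,\d s$ and using the semigroup laws $U_0(t-t_j)U_0(t_j)=U_0(t)$, $U_0(t-t_j)U_0(t_j-s)=U_0(t-s)$ merges the two integrals into $\int_0^t$. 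This defines the bounded linear operator $U[v]:\H\to\X$, $\psi_0\mapsto\psi[v]$.

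\textbf{Main obstacle.} The delicate point is exactly the passage from small $T$ to arbitrary $T$: one must know that shrinking the time window genuinely drives $\|Q_v\|$ below $1$, which is why the positivity of the exponents in $\tau^*$ (i.e.\ the structural conditions $\beta>1$ and $1/\alpha<1-2/\theta$ in \autoref{def-V}) is indispensable, and one must check that the subinterval versions of $U_0$, $Q_v$, of multiplication by $v$, and of the norms $\|\cdot\|_\X,\|\cdot\|_{\X'}$ satisfy the same estimates as on the full interval with constants $C_0,C_Q$ that do not degrade, so that the patching errors accumulate only to the harmless geometric factor $(\frac{1+C_0}{1-\kappa})^M$. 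Verifying continuity of the glued trajectory at the junctions and the recombination of the Volterra integral across them is routine but needs some care; everything else is a bookkeeping consequence of \autoref{Q-bounded}, \autoref{lemma-Q}, \autoref{lemma-mult-op} and \autoref{strichartz-ineq}.
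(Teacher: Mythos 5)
Your proof is correct and follows essentially the same route as the paper: a short-time Neumann-series contraction via \autoref{Q-bounded} and \eqref{U0-bounded}, followed by patching over a partition of $[0,T]$ into subintervals short enough that $C_Q\tau^*\|v\|_\V<1$. The only divergence is in the bookkeeping at the junctions: you let the successive initial data grow geometrically, $\|\psi_j\|_2\le(\tfrac{1+C_0}{1-\kappa})^j\|\psi_0\|_2$, whereas the paper invokes unitarity of the evolution (already secured by \autoref{lemma-uniqueness}) to estimate every piece against the same $\|\psi_0\|_2$ and thereby obtains the sharper prefactor $1+M^{1/\theta}\tfrac{1+C_0}{1-C_Q\tau^*\|v\|_\V}$ --- but since the theorem only asserts boundedness, your exponential-in-$M$ constant is perfectly adequate.
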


\begin{proof}
Take $M \in \N$ large enough and define $\tau = T/M$ and $\tau^* = \max\{ \tau^{1-1/\beta}$, $\tau^{1-2/\theta-1/\alpha} \}$ as in \autoref{Q-bounded} such that $C_Q \tau^* \|v\|_\V < 1$. (Note that the exponents in the definition of $\tau^*$ are both positive due to \autoref{def-V} and thus its value decreases with bigger $M$.) The time interval $[0,T]$ will now be partitioned into a finite number of subintervals $I_1 = [0,\tau], I_2 = [\tau,2\tau], \ldots, I_M=[(M-1)\tau,T]$. If we limit ourselves to the first interval we get an estimate for the evolved initial state by applying \eqref{neumann-series} and \autoref{Q-bounded} together with an obvious bound for the $\X$-norm restricted to the first time interval $\|\varphi\|_{\X | I_1} \leq \|\varphi\|_\X$.
\begin{align*}
\|U[v] \psi_0\|_{\X | I_1} &\leq \sum_{k=0}^\infty \|Q_v^k U_0 \psi_0\|_{\X|I_1} \\
&\leq \sum_{k=0}^\infty (C_Q \tau^*\|v\|_\V)^k \, \|U_0 \psi_0\|_{\X} \\
&= \frac{1}{1-C_Q \tau^*\|v\|_\V} \, \|U_0 \psi_0\|_{\X}
\end{align*}
In assuming $M$ large enough we have convergence of the geometric series involved because $\tau^*$ can be made arbitrarily small. Putting in $\|U_0 \psi_0\|_\X \leq (1+C_0)\|\psi_0\|_2$ from \eqref{U0-bounded} we get an inequality
\[
\|U[v] \psi_0\|_{\X | I_1} \leq \frac{1+C_0}{1-C_Q \tau^*\|v\|_\V} \|\psi_0\|_2.
\]
By taking $U([v],\tau,0)\psi_0$ as a new initial state and shifting the potential $v$ by $\tau$ in time we can proceed inductively until all of $[0,T]$ is spanned. To get an estimate for the $\X$-norm we divide the temporal integral of the $L^{q,\theta}$-norm into the subintervals $I_1, \ldots, I_M$. As any new initial state respects $\|U([v],t,0)\psi_0\|_2 = \|\psi_0\|_2$ due to unitarity, each of them can be estimated separately by the inequality above.
\begin{align*}
\|U[v] \psi_0\|_{\X} &= \|U[v] \psi_0\|_{2,\infty} + \|U[v] \psi_0\|_{q,\theta} \\
&= \|\psi_0\|_2 + \left( \int_0^T \|U([v],t,0)\psi_0\|_q^\theta \d t \right)^{1/\theta} \\
&= \|\psi_0\|_2 + \left( \int_0^\tau \|U([v],t,0)\psi_0\|_q^\theta \d t + \int_\tau^{2\tau} \ldots \right)^{1/\theta} \\
&\leq \|\psi_0\|_2 + \left( \sum_{m=1}^M\|U[v] \psi_0\|_{\X | I_m}^\theta \right)^{1/\theta} \\
&\leq \left(1+M^{1/\theta} \frac{1+C_0}{1-C_Q \tau^*\|v\|_\V}\right) \|\psi_0\|_2
\end{align*}
Thus boundedness is proved. $M$ can be chosen in a unique and optimal way to minimise the expression preceding $\|\psi_0\|_2$. This can be seen as the fraction is monotonically decreasing in $M$ starting with $\infty$ at $M$ such that $C_Q \tau^*\|v\|_\V = 1$. On the other side $M^{1/\theta}$ increases monotonically to $\infty$ and the whole expression must attain a global minimum in between.
\end{proof}

This result shows existence and uniqueness (it is the limit of a contracting fixed-point scheme defined by \eqref{schroequ-mild}) of solutions to the Schrödinger equation with a potential $v \in \V$ and is therefore of great significance. A more direct and simpler Strichartz-like estimate immediately relating to a fixed-point technique is due to \citeasnoun{dancona}.

\begin{corollary}\label{Cv-strichartz}
For arbitrary albeit finite $T>0$ and $v \in \V$ (in certain cases $T \rightarrow \infty$ becomes feasible) we have the following Strichartz estimate for solutions to the mild Schrödinger equation.
\[
\|\psi[v]\|_{q,\theta} \leq C_v \|\psi_0\|_2
\]
We get the $v$-dependent Strichartz constant $C_v=2 M^{1/\theta} (1+C_0)$. For the definition of $M(v) \in \N$ note the details in the beginning of the proof.
\end{corollary}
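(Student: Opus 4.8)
The plan is to read the bound off the proof of \autoref{Uv-bounded}, where essentially all the work has already been done: the corollary is little more than a repackaging of the $L^{q,\theta}$-component of that estimate, together with a slightly more generous choice of the partition parameter. First I would recall the set-up from that proof. One picks $M = M(v) \in \N$ large enough that $C_Q \tau^* \|v\|_\V < 1$, where $\tau = T/M$ and $\tau^* = \max\{\tau^{1-1/\beta}, \tau^{1-2/\theta-1/\alpha}\}$; this is possible because both exponents are positive by \autoref{def-V}, so $\tau^* \searrow 0$ as $M \to \infty$. For the present purpose I would strengthen the requirement to $C_Q \tau^* \|v\|_\V \leq \onehalf$, still achievable for $M$ large, and take $M(v)$ to be the smallest natural number with this property (this is the $M(v)$ referred to in the statement).

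Next I would run the same induction over the subintervals $I_1 = [0,\tau], I_2 = [\tau,2\tau], \ldots, I_M = [(M-1)\tau, T]$. On the first interval the Neumann series \eqref{neumann-series} converges, and by \autoref{Q-bounded} applied on an interval of length $\tau$ together with the trivial bound $\|\varphi\|_{\X | I_1} \leq \|\varphi\|_\X$,
\[
\|\psi[v]\|_{\X | I_1} \leq \sum_{k=0}^\infty (C_Q \tau^* \|v\|_\V)^k \, \|U_0 \psi_0\|_\X = \frac{\|U_0\psi_0\|_\X}{1 - C_Q \tau^*\|v\|_\V} \leq 2(1+C_0)\|\psi_0\|_2,
\]
using $\|U_0\psi_0\|_\X \leq (1+C_0)\|\psi_0\|_2$ from \eqref{U0-bounded} and $1 - C_Q\tau^*\|v\|_\V \geq \onehalf$. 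Taking $U([v],\tau,0)\psi_0$ as a new initial datum and shifting $v$ by $\tau$ in time, the same inequality holds with $I_1$ replaced by any $I_m$, since unitarity of the evolution preserves $\|U([v],(m-1)\tau,0)\psi_0\|_2 = \|\psi_0\|_2$. Splitting the temporal integral defining $\|\cdot\|_{q,\theta}$ over the $I_m$ exactly as in the proof of \autoref{Uv-bounded} then gives
\[
\|\psi[v]\|_{q,\theta} = \left( \sum_{m=1}^M \|\psi[v]\|_{q,\theta | I_m}^\theta \right)^{1/\theta} \leq \left( \sum_{m=1}^M \|\psi[v]\|_{\X | I_m}^\theta \right)^{1/\theta} \leq M^{1/\theta} \cdot 2(1+C_0)\|\psi_0\|_2,
\]
which is the claimed estimate with $C_v = 2 M^{1/\theta}(1+C_0)$.

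I do not expect a genuine obstacle here; the only points that need a little care are that the strengthened condition $C_Q\tau^*\|v\|_\V \leq \onehalf$ still pins down $M(v)$ unambiguously, and that the per-subinterval bound is uniform in $m$, which rests on unitarity resetting the $L^2$-norm at each step. What one cannot get by this argument is a constant independent of $v$: through $M(v)$ the constant $C_v$ grows as $\|v\|_\V \to \infty$, and only under the extra structural hypotheses indicated in \citeasnoun{dancona} — which also allow $T \to \infty$ — does the global Strichartz constant become genuinely uniform. If a more self-contained route were preferred, the same bound drops out of the Banach fixed-point argument for \eqref{schroequ-mild} on each $I_m$: the map $\varphi \mapsto U_0\psi_0 + Q_v\varphi$ is a contraction on $\X$ restricted to $I_m$ with ratio $C_Q\tau^*\|v\|_\V \leq \onehalf$, so its fixed point satisfies $\|\psi[v]\|_{\X | I_m} \leq 2\|U_0\psi_0\|_\X$, and summing over $m$ as above finishes the proof.
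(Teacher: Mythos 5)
Your proposal is correct and follows essentially the same route as the paper: the paper also partitions $[0,T]$ into subintervals on which $C_Q|I_m|^*\|v\|_{\V|I_m}\leq\onehalf$, derives the per-interval bound $\|\psi\|_{\X|I_m}\leq 2(1+C_0)\|\psi_0\|_2$ (via the contraction estimate for $\Phi(\psi)=U_0\psi_0+Q_v\psi$, which is the "alternative" you sketch at the end and is equivalent to summing your Neumann series with ratio $\leq\onehalf$), and concatenates exactly as you do. The only nuance is that the paper chooses the subintervals using the \emph{restricted} norms $\|v\|_{\V|I_m}$ rather than an equipartition controlled by the global $\|v\|_\V$, which is precisely what makes the parenthetical "$T\to\infty$ becomes feasible" work for potentials decaying in time, whereas you attribute that case to the extra hypotheses of \citeasnoun{dancona}.
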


\begin{proof}
Firstly divide the time interval $[0,T]$ again into a finite number $M$ of subintervals $I_1, \ldots, I_M$. Each subinterval be short enough such that
\[
C_Q |I_m|^*\|v\|_{\V | I_m} \leq \onehalf
\]
with the length of the interval transformed to $|I_m|^*$ as in \autoref{Q-bounded}. Note that by this division also an infinite time interval $[0,\infty)$ gets feasible for potentials decaying fast enough in time. Now take the recursive formula \eqref{schroequ-mild} and define a map $\Phi : \X \rightarrow \X$
\[
\Phi(\psi) = U_0 \psi_0 + Q_v \psi.
\]
A fixed point of this map would be a solution to the mild Schrödinger equation. We have by \autoref{strichartz-ineq} and \autoref{Q-bounded} the following estimate if we limit ourselves to any of the subintervals.
\[
\|\Phi(\psi)\|_{\X | I_m} \leq (1+C_0) \|\psi_0\|_2 + \onehalf \|\psi\|_{\X | I_m}.
\]
Now $\Phi$ clearly defines a contraction mapping and the unique fixed point $\psi = \Phi(\psi) \in \X|_{I_m}$ fulfils $\|\psi\|_{\X | I_m} \leq 2(1+C_0) \|\psi_0\|_2$. The final step is to concatenate all of these estimates to the full time interval as in the proof of \autoref{Uv-bounded} above.
\begin{align*}
\|\psi\|_{q,\theta} &= \left( \sum_{m=1}^M \int_{I_m} \|\psi(t)\|_q^\theta \d t \right)^{1/\theta} \\
&\leq \left( \sum_{m=1}^M (2(1+C_0) \|\psi_0\|_2)^\theta \right)^{1/\theta} \\[0.5em]
&\leq 2 M^{1/\theta} (1+C_0) \|\psi_0\|_2
\end{align*}
This Strichartz estimate needs no direct reference to the constant $C_Q$ but it becomes equal to the one in the proof of \autoref{Uv-bounded} if $C_Q |I_m|^*\|v\|_{\V} = \onehalf$ and $[0,T]$ is equipartitioned into subintervals of length $|I_m| = \tau$.
\end{proof}

The cases where an unbounded time interval $[0,\infty)$ is possible reflect the situation of a quickly decreasing potential as $t \rightarrow \infty$. This is similar to the setting of scattering theory where the evolution after a collision process is dominated by the free Hamiltonian again.

\chapter{Functional Differentiability}
\label{ch-diff}

\section{Basics of variational calculus}

The principal setting is that of two Banach spaces $(X,\|\cdot\|_X)$ and $(Y,\|\cdot\|_Y)$ respectively (although for the definition of the Gâteaux derivative locally convex topological vector spaces are sufficient) with a map $f : X \rightarrow Y$ under investigation. Some authors like \citeasnoun{blanchard-bruening-2} draw a line between the term ``Gâteaux differential" (not necessarily continuous nor linear) and ``Gâteaux derivative", an approach not followed here as we only need the latter and speak of functions being Gâteaux differentiable.

\subsection{Gâteaux and Fréchet differentiability}

\begin{definition}\label{def-gateaux}
$f$ is \textbf{Gâteaux differentiable} at $x_0 \in X$ if there exists a linear and continuous (bounded) map (the \textbf{Gâteaux derivative} or \textbf{differential} at $x_0$)
\[
\delta_G f(x_0,\cdot) : X \rightarrow Y
\]
such that for all $h \in X$
\[
\lim_{\lambda \rightarrow 0} \left\| \frac{1}{\lambda} \big(f(x_0+\lambda h)-f(x_0)\big) - \delta_G f(x_0,h) \right\|_Y = 0.
\]
\end{definition}

The Gâteaux derivative is therefore the generalisation of the directional derivative and could analogously be defined by a partial derivative.
\[
\delta_G f(x_0,h) = \partial_\lambda f(x_0 + \lambda h) \big|_{\lambda=0}
\]
In contrast to Gâteaux differentiability the Fréchet derivative demands the limit to hold uniformly for all directions and thus can be compared to the (total) differential of a function.

\begin{definition}
$f$ is \textbf{Fréchet differentiable} at $x_0 \in X$ if there exists a linear and continuous (bounded) map (the \textbf{Fréchet derivative} or \textbf{differential} at $x_0$)
\[
\delta_F f(x_0,\cdot) : X \rightarrow Y
\]
such that
\[
\lim_{h \rightarrow 0} \frac{1}{\|h\|_X} \left\| f(x_0+h)-\big(f(x_0) + \delta_F f(x_0,h)\big) \right\|_Y = 0.
\]
If $f$ is Fréchet differentiable for all $x_0 \in U \subseteq X$ open and the map $\delta_F f : U \rightarrow \mathcal{B}(X,Y),\, x \mapsto \delta_F f(x,\cdot)$ is continuous, the function $f$ is called \textbf{continuously differentiable} or \textbf{Fréchet differentiable} on $U$ and we write $f \in \Cont^1(U,Y)$.
\end{definition}

Note that the main difference between those two definitions lies in the form of the limit, which is to be taken from each direction separately for Gâteaux derivatives, whereas it must hold uniformly for every kind of path towards zero in the case of Fréchet derivatives. Note also that the condition of continuity of $\delta_F f$ in its first component need not be the same as boundedness here, as the derivative would in general not be linear in this argument.

\subsection{Results on equivalence of the two notions}

It follows right away from the definitions that Fréchet differentiability at a point implies Gâteaux differentiability. The converse needs the additional assumption of continuity with respect to $x_0$. To show this we employ the fundamental theorem of calculus for integrals in Banach spaces given below. The utilised integral is that of Riemann generalised to Banach spaces.

\begin{theorem}[fundamental theorem of calculus]\label{fund-theorem-calculus}
\hfill\\
\cite[Cor.~34.1]{blanchard-bruening-2}\\
Let $[a,b]$ be a finite interval and $\gamma : [a,b] \rightarrow X$ a continuous function. For arbitrary $e \in X$ define a function $g : [a,b] \rightarrow X$ by
\[
g(t) = e + \int_a^t \gamma(s) \d s.
\]
Then $g$ is continuously differentiable and $g'(t) = \gamma(t)$ for all $t \in [a,b]$. Thus for $a \leq c < d \leq b$
\[
g(d)-g(c) = \int_c^d g'(t) \d t.
\]
\end{theorem}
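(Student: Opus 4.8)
The plan is to prove the ``fundamental theorem of calculus'' for Banach-space-valued functions in two parts, mirroring the two classical assertions contained in the statement: first that $g$ is continuously differentiable with $g'=\gamma$, and then that this yields the integral evaluation formula $g(d)-g(c)=\int_c^d g'(t)\d t$ via a shift of the base point $e$ and the lower limit $a$.

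For the first part I would fix $t_0\in[a,b]$ and estimate the difference quotient directly. For $h\neq 0$ with $t_0+h\in[a,b]$ we have
\[
\frac{g(t_0+h)-g(t_0)}{h} - \gamma(t_0) = \frac{1}{h}\int_{t_0}^{t_0+h}\big(\gamma(s)-\gamma(t_0)\big)\d s,
\]
where I would use linearity and additivity of the Banach-space Riemann integral to rewrite $g(t_0+h)-g(t_0)=\int_{t_0}^{t_0+h}\gamma(s)\d s$ and to absorb $\gamma(t_0)$ into the integrand (it being constant in $s$). Taking norms and applying the standard estimate $\|\int_c^d f\|_X\le |d-c|\sup_{s}\|f(s)\|_X$ for the generalised Riemann integral gives the bound $\sup_{|s-t_0|\le|h|}\|\gamma(s)-\gamma(t_0)\|_X$, which tends to $0$ as $h\to 0$ by continuity of $\gamma$ at $t_0$. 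Hence $g'(t_0)=\gamma(t_0)$ for every $t_0\in[a,b]$ (one-sided at the endpoints). Continuity of $g'$ is then immediate since $g'=\gamma$ and $\gamma$ was assumed continuous, so $g\in\Cont^1([a,b],X)$.

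For the second part I would observe that for $a\le c<d\le b$ the function $\tilde g(t)=g(c)+\int_c^t\gamma(s)\d s$ has, by exactly the same argument, derivative $\gamma$ on $[c,d]$ and satisfies $\tilde g(c)=g(c)$; moreover $\tilde g(t)-g(t)=g(c)-e-\int_a^c\gamma(s)\d s$ is constant in $t$ (using additivity of the integral), so $\tilde g=g$ on $[c,d]$ and in particular $\tilde g(d)=g(d)$. Therefore
\[
g(d)-g(c)=\tilde g(d)-\tilde g(c)=\int_c^d\gamma(s)\d s=\int_c^d g'(t)\d t,
\]
which is the claimed identity.

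The only genuine subtlety — and hence the ``main obstacle'', though it is a mild one — is making sure the elementary properties of the Riemann integral for continuous $X$-valued functions (existence for continuous integrands, linearity, additivity over adjacent intervals, and the norm estimate $\|\int_c^d f\|_X\le\int_c^d\|f(s)\|_X\,\d s\le|d-c|\sup_s\|f(s)\|_X$) are available; these are standard and I would simply cite the construction of the Banach-space Riemann integral (as in \citeasnoun{blanchard-bruening-2}, from which the theorem is quoted) rather than redevelop it. Everything else is a routine transcription of the scalar proof, the completeness of $X$ entering only through the existence of the integral itself.
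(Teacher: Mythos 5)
Your proof is correct. Note that the paper does not actually prove this statement -- it is quoted verbatim from \citeasnoun[Cor.~34.1]{blanchard-bruening-2} and used as a black box -- so there is no in-paper argument to compare against; what you have written is the standard transcription of the scalar proof to the Banach-space Riemann integral, and both halves (the difference-quotient estimate via $\frac{1}{h}\int_{t_0}^{t_0+h}(\gamma(s)-\gamma(t_0))\,\d s$ and continuity of $\gamma$, then the evaluation formula) are sound. The only stylistic remark is that your second part is more elaborate than necessary: once additivity of the integral over adjacent intervals is granted, $g(d)-g(c)=\int_a^d\gamma-\int_a^c\gamma=\int_c^d\gamma$ follows in one line without introducing the auxiliary function $\tilde g$.
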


\begin{lemma}\label{lemma-frechet}\cite[Lemma 34.3]{blanchard-bruening-2}\\
If $f$ is Gâteaux differentiable at all points of some neighbourhood $U \subseteq X$ of $x_0$ and $x \mapsto \delta_G f(x,\cdot) \in \mathcal{B}(X,Y)$ is continuous on $U$ then for all $h \in X$
\[
\delta_G f(x_0,h) = \delta_F f(x_0,h).
\]
By definition this also means that $f \in \Cont^1(U,Y)$.
\end{lemma}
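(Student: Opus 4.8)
The plan is to fix $h \in X$ and $x_0 \in U$, and show that the Gâteaux differential at $x_0$ serves as the Fréchet differential by estimating the remainder $R(h) = f(x_0+h) - f(x_0) - \delta_G f(x_0,h)$ via the fundamental theorem of calculus (\autoref{fund-theorem-calculus}). First I would choose $h$ small enough that the whole segment $\{x_0 + t h \mid t \in [0,1]\}$ lies in the neighbourhood $U$ on which $f$ is Gâteaux differentiable (possible since $U$ is open). The auxiliary curve $\gamma(t) = f(x_0 + t h)$ is then differentiable in $t$ with $\gamma'(t) = \delta_G f(x_0 + t h, h)$, directly from the directional-derivative characterisation noted after \autoref{def-gateaux}; one should check this curve is $\Cont^1$ in $t$, which follows from the assumed continuity of $x \mapsto \delta_G f(x,\cdot)$ composed with the continuous embedding $t \mapsto x_0 + t h$ and the evaluation-at-$h$ map. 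Applying \autoref{fund-theorem-calculus} gives
\[
f(x_0+h) - f(x_0) = \gamma(1) - \gamma(0) = \int_0^1 \delta_G f(x_0 + t h, h) \d t.
\]

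Next I would subtract $\delta_G f(x_0, h) = \int_0^1 \delta_G f(x_0,h) \d t$ from both sides to obtain
\[
R(h) = \int_0^1 \big( \delta_G f(x_0 + t h, h) - \delta_G f(x_0, h) \big) \d t,
\]
and estimate the $Y$-norm by pulling the norm inside the integral and then using the operator norm:
\[
\|R(h)\|_Y \leq \int_0^1 \big\| \delta_G f(x_0 + t h,\cdot) - \delta_G f(x_0,\cdot) \big\| \cdot \|h\|_X \d t.
\]
Dividing by $\|h\|_X$, it remains to show $\int_0^1 \| \delta_G f(x_0 + t h,\cdot) - \delta_G f(x_0,\cdot) \| \d t \to 0$ as $h \to 0$. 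This follows from the continuity hypothesis on $x \mapsto \delta_G f(x,\cdot)$ at $x_0$: given $\varepsilon > 0$ there is $\rho > 0$ such that $\|x - x_0\|_X < \rho$ implies $\| \delta_G f(x,\cdot) - \delta_G f(x_0,\cdot)\| < \varepsilon$, and for $\|h\|_X < \rho$ every point $x_0 + t h$ with $t \in [0,1]$ lies in that ball, so the integrand is uniformly $< \varepsilon$ and the integral is $< \varepsilon$. Hence the Fréchet limit holds with $\delta_F f(x_0,h) = \delta_G f(x_0,h)$; linearity and boundedness of this candidate are already built into \autoref{def-gateaux}. Since this works at every $x_0 \in U$ and the resulting Fréchet derivative equals $\delta_G f$, which is continuous on $U$ by hypothesis, we conclude $f \in \Cont^1(U,Y)$.

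The main obstacle is the justification that $\gamma(t) = f(x_0 + t h)$ is genuinely continuously differentiable on $[0,1]$ with the claimed derivative, so that \autoref{fund-theorem-calculus} applies: one needs that the difference quotient $\frac{1}{\lambda}(\gamma(t+\lambda) - \gamma(t))$ converges to $\delta_G f(x_0 + t h, h)$, which is exactly Gâteaux differentiability at the point $x_0 + t h$ in direction $h$ (note $\gamma(t+\lambda) - \gamma(t) = f((x_0+th) + \lambda h) - f(x_0+th)$), and that $t \mapsto \delta_G f(x_0+th,h)$ is continuous, which is where the hypothesis that $x \mapsto \delta_G f(x,\cdot)$ is continuous into $\mathcal{B}(X,Y)$ is essential — pointwise Gâteaux differentiability alone would not suffice. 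Everything else is a routine norm estimate.
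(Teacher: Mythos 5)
Your proposal is correct and follows essentially the same route as the paper's proof: apply the fundamental theorem of calculus to $\lambda \mapsto f(x_0+\lambda h)$, write the remainder as $\int_0^1 \big(\delta_G f(x_0+\lambda h,h)-\delta_G f(x_0,h)\big)\d\lambda$, and bound it by $\sup_{\lambda}\|\delta_G f(x_0+\lambda h,\cdot)-\delta_G f(x_0,\cdot)\|_{\mathcal{B}(X,Y)}\,\|h\|_X$ using the continuity hypothesis. Your extra care in justifying that the auxiliary curve is genuinely $\Cont^1$ is a welcome addition but does not change the argument.
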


\begin{proof}
Choose $\varepsilon > 0$ small enough such that $B_\varepsilon(x_0) \subset U$ then for $h \in B_\varepsilon(0)$ $g(\lambda) = f(x_0 + \lambda h)$ is continuously differentiable for all $\lambda \in [0,1]$ and $g'(\lambda) = \delta_G f(x_0 + \lambda h,h)$. By the fundamental theorem of calculus (\autoref{fund-theorem-calculus}) we have
\[
g(1)-g(0) = \int_0^1 g'(\lambda) \d\lambda = \int_0^1 \delta_G f(x_0 + \lambda h,h) \d\lambda
\]
and thus
\begin{align*}
f(x_0 + h) &- f(x_0) - \delta_G f(x_0,h) = g(1)-g(0) - \delta_G f(x_0,h) \\
&= \int_0^1 \big( \delta_G f(x_0 + \lambda h,h) - \delta_G f(x_0,h) \big) \d\lambda.
\end{align*}
This can be estimated by
\begin{align*}
\|f(x_0 + h) &- f(x_0) - \delta_G f(x_0,h)\|_Y \\[0.3em]
&\leq \sup_{\lambda \in [0,1]} \|\delta_G f(x_0 + \lambda h,\cdot) - \delta_G f(x_0,\cdot)\|_{\mathcal{B}(X,Y)}\, \|h\|_X.
\end{align*}
Continuity of $\delta_G f$ in its first component tells us that by dividing by $\|h\|_X$ and taking the limit $h \rightarrow 0$ the right hand side above goes to zero. This proves $\delta_F f(x_0,h) = \delta_G f(x_0,h)$ for all $h \in B_\varepsilon(0)$ and therefore for all $h \in X$ by linearity.
\end{proof}

\begin{proof}
\emph{Alternative proof after \citeasnoun[Prop.~A.3]{andrews-hopper}.}\\
Again choose $\varepsilon > 0$ small enough such that $B_\varepsilon(x_0) \subset U$ and let $g(\lambda)=f(x_0+\lambda h)-f(x_0)-\lambda \delta_G f(x_0,h)$ for $h \in B_\varepsilon(0)$. Note that this map is continuous in $\lambda$ implied by Gâteaux differentiability. The mean value inequality then tells us
\[
\|g(1)-g(0)\|_Y \leq \sup_{\lambda \in [0,1]} \|g'(\lambda)\|_Y.
\]
By putting in $g(0)=0$ and $g(1)=f(x_0+h)-f(x_0)-\delta_G f(x_0,h)$ as well as $g'(\lambda) = \delta_G f(x_0 + \lambda h,h)$ we have
\begin{align*}
\|f(x_0+h)-f(x_0)&-\delta_G f(x_0,h)\|_Y \leq \sup_{\lambda \in [0,1]} \|\delta_G f(x_0 + \lambda h,h)\|_Y \\
&\leq \sup_{\lambda \in [0,1]} \|\delta_G f(x_0 + \lambda h,\cdot)\|_{\mathcal{B}(X,Y)}\, \|h\|_X
\end{align*}
and from here on we can argue exactly as above.
\end{proof}

If the domain $X$ is finite dimensional another proof of equivalence is feasible that demands Lipschitz continuity of the map as an additional condition.

\begin{lemma}\label{lemma-frechet-2}\cite[Prop.~A.4]{andrews-hopper}\\
If $X$ finite-dimensional and $f$ is Gâteaux differentiable at $x_0 \in U \subseteq X$ open as well as Lipschitz-continuous on $U$ then for all $h \in X$
\[
\delta_G f(x_0,h) = \delta_F f(x_0,h).
\]
\end{lemma}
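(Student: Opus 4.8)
The plan is to exploit the one feature of a finite-dimensional $X$ that the general Gâteaux setting lacks: compactness of the unit sphere $S = \{\omega \in X \mid \|\omega\|_X = 1\}$. Writing an arbitrary small increment as $h = r\omega$ with $r = \|h\|_X$ and $\omega \in S$, Fréchet differentiability at $x_0$ amounts to making
\[
R(r,\omega) = \frac{1}{r}\bigl\| f(x_0 + r\omega) - f(x_0) - r\,\delta_G f(x_0,\omega) \bigr\|_Y
\]
small uniformly in $\omega \in S$ as $r \searrow 0$, whereas Gâteaux differentiability only supplies this for each fixed direction $\omega$. The Lipschitz hypothesis is exactly what lets us transport the estimate from finitely many directions to all of $S$.

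First I would record two elementary consequences of $f$ being $L$-Lipschitz on $U$. Since $\delta_G f(x_0,h) = \lim_{\lambda\to 0}\lambda^{-1}\bigl(f(x_0+\lambda h)-f(x_0)\bigr)$ and every difference quotient has $Y$-norm at most $L\|h\|_X$, we get $\|\delta_G f(x_0,h)\|_Y \le L\|h\|_X$; being linear in $h$, the Gâteaux derivative is therefore itself $L$-Lipschitz in its direction argument. Also, fixing $\rho > 0$ with $B_\rho(x_0)\subset U$ guarantees $x_0 + r\omega \in U$ whenever $r < \rho$, so all quantities below are defined.

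Then, given $\varepsilon > 0$ (the case $L = 0$ is trivial since $f$ is then locally constant, so assume $L>0$), I would set $\delta = \varepsilon/(4L)$ and use compactness of $S$ to choose finitely many $\omega_1,\dots,\omega_N \in S$ with $S \subseteq \bigcup_{i=1}^N B_\delta(\omega_i)$. Applying \autoref{def-gateaux} at $x_0$ to each of the $N$ directions $\omega_i$ yields $r_i > 0$ with $\bigl\| f(x_0+r\omega_i) - f(x_0) - r\,\delta_G f(x_0,\omega_i)\bigr\|_Y \le \tfrac{\varepsilon}{2} r$ for $0 < r < r_i$; put $r_0 = \min\{\rho, r_1,\dots,r_N\}$. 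For arbitrary $\omega \in S$ pick $i$ with $\|\omega - \omega_i\|_X < \delta$ and split, for $0 < r < r_0$,
\[
\begin{aligned}
\bigl\| f(x_0+r\omega) - f(x_0) - r\,\delta_G f(x_0,\omega)\bigr\|_Y
&\le \bigl\| f(x_0+r\omega) - f(x_0+r\omega_i)\bigr\|_Y \\
&\quad + \bigl\| f(x_0+r\omega_i) - f(x_0) - r\,\delta_G f(x_0,\omega_i)\bigr\|_Y \\
&\quad + r\,\bigl\|\delta_G f(x_0,\omega_i) - \delta_G f(x_0,\omega)\bigr\|_Y .
\end{aligned}
\]
The first term is $\le Lr\|\omega-\omega_i\|_X < Lr\delta$ by Lipschitz continuity of $f$; the middle term is $\le \tfrac{\varepsilon}{2}r$ by the choice of $r_i$; the last term is $\le Lr\|\omega-\omega_i\|_X < Lr\delta$ by the Lipschitz bound on $\delta_G f(x_0,\cdot)$ established above. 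Hence $R(r,\omega) \le \tfrac{\varepsilon}{2} + 2L\delta = \varepsilon$ for every $\omega \in S$ and every $0 < r < r_0$, which is precisely the uniform smallness defining the Fréchet derivative; uniqueness of the bounded linear map in the Fréchet condition then forces $\delta_F f(x_0,h) = \delta_G f(x_0,h)$ for all $h \in X$.

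The one genuinely delicate point — the step I expect to be the crux — is the interlocking order of the constants: $\delta$ must be fixed in terms of $\varepsilon$ and $L$ \emph{before} the finite subcover of $S$ is extracted, so that the threshold $r_0$, obtained as a minimum over only finitely many $\omega_i$, is truly uniform in $\omega$. This is exactly where finite-dimensionality enters essentially, since without a finite cover of $S$ the estimate never upgrades beyond the directional statement.
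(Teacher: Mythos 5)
Your proof is correct and follows essentially the same route as the paper's: finite-dimensionality yields a finite cover of a bounded set of directions, Gâteaux differentiability is invoked only at those finitely many directions, and a three-term triangle inequality transports the estimate to all directions, with the Lipschitz constant of $f$ controlling the first term and the (Lipschitz) bound on $\delta_G f(x_0,\cdot)$ controlling the last. The only differences are cosmetic — you cover the unit sphere rather than a bounded neighbourhood of $0$, and you are more explicit about fixing $\delta$ in terms of $\varepsilon$ and $L$ before extracting the cover, which tightens a quantifier ordering the paper leaves implicit.
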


\begin{proof}
Take an arbitrary bounded neighbourhood $U_0 \subseteq X$ of $0 \in X$ such that $U_0+x_0 \subseteq U$ and note that for any $\varepsilon > 0$ it has a finite cover $\{ B_\varepsilon(u_i) \mid u_i \in U_0, i=1,\ldots,N \}$ because $\dim X < \infty$. Gâteaux differentiability at $x_0$ then implies that we find a $\delta>0$ such that for $|\lambda|<\delta$ and all $i \in \{1,\ldots,N\}$
\[
\|f(x_0 + \lambda u_i ) - f(x_0) - \lambda \delta_G f(x_0,u_i)\|_Y < \varepsilon |\lambda|.
\]
For an arbitrary $u \in U_0$ take a corresponding index $i \in \{1,\ldots,N\}$ such that $\|u-u_i\|_X<\varepsilon$. This makes the following estimate possible by using the triangle inequality twice.
\begin{align*}
\|f(x_0 + \lambda u ) &- f(x_0) - \lambda \delta_G f(x_0,u)\|_Y \\
\leq\;& \|f(x_0 + \lambda u)-f(x_0 + \lambda u_i)\|_Y \\
&+ \|f(x_0 + \lambda u_i)-f(x_0)-\lambda \delta_G f(x_0,u_i)\|_Y \\
&+ \|\lambda \delta_G f(x_0,u_i-u)\|_Y \\
\leq\;& (L + 1 + \|\delta_G f(x_0,\cdot)\|_{\mathcal{B}(X,Y)}) \,\varepsilon|\lambda|
\end{align*}
The first term yields the Lipschitz constant, the second originates from the estimate before, and the third is just the norm of the Gâteaux derivative. The right hand side can be made arbitrarily small for any $u \in U_0$ and hence $f$ is Fréchet differentiable at $x_0$.
\end{proof}

Let us conclude with an example taken for a function in finite dimensional space that is Gâteaux differentiable but fails to be differentiable in the Fréchet sense. Let $f:\R^2\rightarrow \R$ be $f(0,0)=0$ and $f(x,y)=x^3 y/(x^6+y^2)$ away from the origin. We get $\delta_G f=0$ at the origin so any directional derivative is zero. But going along the non-straight curve $(t,t^3)$ gives $f(t,t^3)=\onehalf$ if $t \neq 0$ and produces a jump at the origin, thus the function cannot be Fréchet differentiable.

\subsection{More results for Fréchet differentiable maps}

In the case of a continuously embedded subspace $\tilde{X} \hookrightarrow X$ (see \autoref{def-cont-comp-embedded}), i.e., $\tilde{X} \subseteq X$ and there is a $c>0$ such that for all $x \in \tilde{X}$ the estimate $\|x\|_X \leq c\|x\|_{\tilde{X}}$ holds, the restricted function $f|_{\tilde{X}}: \tilde{X} \rightarrow Y$ naturally keeps its Gâteaux differentiability because there is no reference to the $X$-norm in the definition. Yet it also keeps Fréchet differentiability which is shown in an easy proof formulated in the setting used above.

\begin{lemma}
If $f \in \Cont^1(U,Y)$ (Fréchet differentiable) then the restriction $f|_{\tilde{X}} \in \Cont^1(U \cap \tilde{X},Y)$ naturally with respect to the $\tilde{X}$-norm of a continuously embedded subspace $\tilde{X} \hookrightarrow X$.
\end{lemma}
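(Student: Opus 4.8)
The plan is to exhibit the Fr\'echet derivative of the restriction explicitly as the composition of the original derivative with the inclusion map $\iota : \tilde X \hookrightarrow X$, and then to push every estimate through the single embedding inequality $\|x\|_X \leq c\|x\|_{\tilde X}$.

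First I would record that $U \cap \tilde X$ is open in $\tilde X$, so that the statement $f|_{\tilde X} \in \Cont^1(U\cap\tilde X, Y)$ even makes sense: the embedding inequality says precisely that $\iota$ is continuous, hence $U \cap \tilde X = \iota^{-1}(U)$ is open in the $\tilde X$-topology. Next, fix $x_0 \in U \cap \tilde X$ and set $A_{x_0} := \delta_F f(x_0,\cdot) \circ \iota : \tilde X \to Y$. Linearity is clear, and boundedness follows from $\|A_{x_0} h\|_Y = \|\delta_F f(x_0, h)\|_Y \leq \|\delta_F f(x_0,\cdot)\|_{\mathcal{B}(X,Y)} \|h\|_X \leq c\,\|\delta_F f(x_0,\cdot)\|_{\mathcal{B}(X,Y)} \|h\|_{\tilde X}$, so $A_{x_0} \in \mathcal{B}(\tilde X, Y)$.

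To see that $A_{x_0}$ is the Fr\'echet derivative of $f|_{\tilde X}$ at $x_0$, I would write, for $0 \neq h \in \tilde X$ small enough that $x_0 + h \in U$,
\[
\frac{\|f(x_0+h) - f(x_0) - A_{x_0}h\|_Y}{\|h\|_{\tilde X}} = \frac{\|h\|_X}{\|h\|_{\tilde X}}\cdot\frac{\|f(x_0+h)-f(x_0)-\delta_F f(x_0,h)\|_Y}{\|h\|_X} \leq c\cdot\frac{\|f(x_0+h)-f(x_0)-\delta_F f(x_0,h)\|_Y}{\|h\|_X},
\]
and note that $h \to 0$ in $\tilde X$ forces $h \to 0$ in $X$, so the Fr\'echet differentiability of $f$ at $x_0$ makes the right-hand side vanish in the limit. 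Hence $\delta_F(f|_{\tilde X})(x_0,\cdot) = A_{x_0}$.

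Finally, for the continuity of $x \mapsto \delta_F (f|_{\tilde X})(x,\cdot)$ on $U \cap \tilde X$, I would estimate the operator norm in $\mathcal{B}(\tilde X, Y)$ by $\sup_{\|h\|_{\tilde X}\leq 1}\|\delta_F f(x,h) - \delta_F f(x',h)\|_Y \leq c\,\|\delta_F f(x,\cdot)-\delta_F f(x',\cdot)\|_{\mathcal{B}(X,Y)}$, using that $\|h\|_{\tilde X}\leq 1$ implies $\|h\|_X\leq c$; then the assumed continuity of $\delta_F f : U \to \mathcal{B}(X,Y)$, which remains continuous after restricting to the subset $U\cap\tilde X$ carrying the finer $\tilde X$-topology, transfers to the restriction. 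There is no genuine obstacle in this proof; the only point requiring care is consistently distinguishing the two topologies --- openness of $U\cap\tilde X$ in $\tilde X$ and continuity of the derivative map with respect to the $\tilde X$-topology --- all of which is absorbed into the one embedding constant $c$.
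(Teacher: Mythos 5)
Your proof is correct and follows essentially the same route as the paper's: openness of $U\cap\tilde X$ via continuity of the inclusion, and the key estimate $\|h\|_{\tilde X}^{-1}\leq c\,\|h\|_X^{-1}$ to transfer the Fr\'echet limit. You are in fact somewhat more complete than the paper, which omits the verification that the restricted derivative is bounded as an element of $\mathcal{B}(\tilde X,Y)$ and that the map $x\mapsto\delta_F(f|_{\tilde X})(x,\cdot)$ remains continuous in the $\tilde X$-topology --- both needed for the $\Cont^1$ claim and both handled correctly in your argument.
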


\begin{proof}
Let $i: \tilde{X} \hookrightarrow X$ be the continuous inclusion map. Then for an open subset $U \subseteq X$ the preimage $i^{-1}(U) = U \cap \tilde{X} = \tilde{U}$ is again open. With the estimate on the norms the Fréchet condition easily transfers to $\tilde{X}$.
\begin{align*}
&\lim_{h \rightarrow 0} \frac{1}{\|h\|_{\tilde{X}}} \left\| f(x_0+h)-\big(f(x_0) + \delta_F f(x_0,h)\big) \right\|_Y \\
\leq c &\lim_{h \rightarrow 0} \frac{1}{\|h\|_X} \left\| f(x_0+h)-\big(f(x_0) + \delta_F f(x_0,h)\big) \right\|_Y = 0
\end{align*}
This is true for all $x_0 \in \tilde{U} \subseteq U$.
\end{proof}

Fréchet differentiability turns out to be the concept of choice if it comes to a generalisation of the fundamental theorem of calculus to the calculus of variations because it guarantees continuity in its first argument, the one including the integration variable. 

\begin{corollary}[fundamental theorem of the calculus of variations]\label{cor-fund-th-variations}
\hfill\\
Let $f \in \Cont^1(U,Y)$ (Fréchet differentiable), $x_0, x_1 \in U \subseteq X$, and the straight line $\{ x_0 + \lambda (x_1-x_0) \mid \lambda \in [0,1] \} \subset U$. Then it holds
\[
f(x_1)-f(x_0) = \int_0^1 \delta_F f(x_0 + \lambda (x_1-x_0), x_1-x_0) \d \lambda.
\]
\end{corollary}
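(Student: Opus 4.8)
The plan is to reduce the statement to the scalar-parameter fundamental theorem of calculus in Banach spaces, \autoref{fund-theorem-calculus}, by restricting $f$ to the segment joining $x_0$ and $x_1$. Concretely, set $v = x_1 - x_0$ and define $g : [0,1] \to Y$ by $g(\lambda) = f(x_0 + \lambda v)$; this is well defined because the segment lies in $U$ by hypothesis. The two things to establish are that $g$ is differentiable on $[0,1]$ with $g'(\lambda) = \delta_F f(x_0 + \lambda v, v)$, and that this derivative is a continuous function of $\lambda$. Once both hold, \autoref{fund-theorem-calculus} gives $g(1) - g(0) = \int_0^1 g'(\lambda)\d\lambda$, which is exactly the claimed identity since $g(1) = f(x_1)$ and $g(0) = f(x_0)$.

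For the differentiability, fix $\lambda \in [0,1]$ and write $x_\lambda = x_0 + \lambda v$. For small $t$ (one-sided at the endpoints) the difference quotient is
\[
\frac{g(\lambda + t) - g(\lambda)}{t} = \frac{f(x_\lambda + t v) - f(x_\lambda)}{t}.
\]
Since $f$ is Fr\'echet differentiable at $x_\lambda$ it is in particular G\^ateaux differentiable there (the easy implication noted at the start of the subsection on equivalence, together with $\delta_G f(x_\lambda,\cdot) = \delta_F f(x_\lambda,\cdot)$), so the right-hand side converges to $\delta_F f(x_\lambda, v)$ as $t \to 0$, following the characterisation in \autoref{def-gateaux}. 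Hence $g'(\lambda) = \delta_F f(x_\lambda, v)$.

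For continuity of $g'$, note that by hypothesis $f \in \Cont^1(U,Y)$, i.e.\ the map $x \mapsto \delta_F f(x,\cdot) \in \mathcal{B}(X,Y)$ is continuous on $U$. The map $\lambda \mapsto x_\lambda$ is continuous (affine) from $[0,1]$ into $U$, and evaluation $\mathcal{B}(X,Y) \to Y$, $A \mapsto A v$, is bounded linear hence continuous. Composing these three continuous maps shows $\lambda \mapsto \delta_F f(x_\lambda, v) = g'(\lambda)$ is continuous on $[0,1]$, so $\gamma := g'$ meets the hypotheses of \autoref{fund-theorem-calculus}, and applying it with $e = g(0) = f(x_0)$ yields the assertion.

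The only subtlety — hardly an obstacle — is bookkeeping at the endpoints $\lambda \in \{0,1\}$, where the difference-quotient limit is one-sided; but the one-sided limits still equal $\delta_F f(x_\lambda, v)$ by the same G\^ateaux-differentiability argument, and the integrand only needs to be continuous on $[0,1]$, so nothing is lost. Alternatively one can avoid one-sided derivatives entirely: since $U$ is open, there is $\varepsilon > 0$ with $x_0 + \lambda v \in U$ for all $\lambda \in (-\varepsilon, 1+\varepsilon)$, define $g$ on that open interval instead, and then restrict the resulting continuously differentiable $g$ to $[0,1]$.
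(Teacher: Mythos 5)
Your proposal is correct and follows essentially the same route as the paper: restrict $f$ to the segment, identify the derivative of $\lambda \mapsto f(x_0+\lambda(x_1-x_0))$ with $\delta_F f(x_0+\lambda(x_1-x_0),x_1-x_0)$ via G\^ateaux differentiability, use the $\Cont^1$ hypothesis for continuity of the integrand, and apply \autoref{fund-theorem-calculus}. Your extra care with the evaluation map and the one-sided limits at the endpoints is a harmless elaboration of what the paper leaves implicit.
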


\begin{proof}
Define $\gamma(\lambda) = \delta_F f(x_0 + \lambda (x_1-x_0), x_1-x_0)$ and note $\gamma : [0,1] \rightarrow Y$ continuous because of $f \in \Cont^1(U,Y)$ (note that Gâteaux differentiability does not need to fulfil this). We clearly have $\partial_\lambda f (x_0 + \lambda (x_1-x_0)) = \gamma(\lambda)$ from the definition of Gâteaux differentiability and thus by the fundamental theorem of calculus (\autoref{fund-theorem-calculus})
\[
f(x_1)-f(x_0) = \int_0^1 \gamma(\lambda) \d \lambda.
\]
This extends the usual fundamental theorem of calculus to the calculus of variations in Banach spaces.
\end{proof}

\subsection{The Landau symbols}

In dealing with the asymptotic behaviour of maps on Banach spaces, as in the definition of Gâteaux and Fréchet derivatives with a limit procedure, the following notation is frequently employed.

\begin{definition}[small Landau symbol]\label{def-small-o}
Let $f:U \rightarrow Y, g:U \rightarrow \R$ be defined in a neighbourhood of $x \in U \subset X$. We write $f(h) \in \lilo(g(h))$ as $h \rightarrow x$ if and only if
\[
\lim_{h \rightarrow x} \frac{\|f(h)\|_Y}{|g(h)|} = 0.
\]
\end{definition}

Verbally this could be stated as ``$f$ is negligible compared to $g$ close to $x$" and one can apply the concept for a shorter definition of Gâteaux and Fréchet differentiability, i.e.,
\begin{align}
\label{eq-gateaux-order}
 f(x_0+\varepsilon h)-\big(f(x_0) + \varepsilon\, \delta_G f(x_0,h)\big) \in \lilo(\varepsilon) &\mtext{as} \varepsilon \rightarrow 0, \\
\label{eq-frechet-order}
 f(x_0+h)-\big(f(x_0) + \delta_F f(x_0,h)\big) \in \lilo(\|h\|_X) &\mtext{as} h \rightarrow 0.
\end{align}
For the sake of completeness we will also include the definition of the big Landau symbol which states ``$f$ is bounded up to a constant factor by $g$ close to $x$".

\begin{definition}[big Landau symbol]\label{def-big-o}
Let $f:U \rightarrow Y, g:U \rightarrow \R$ be defined in a neighbourhood of $x \in U \subset X$. We write $f(h) \in \bigO(g(h))$ as $h \rightarrow x$ if and only if
\[
\limsup_{h \rightarrow x} \frac{\|f(h)\|_Y}{|g(h)|} < \infty.
\]
\end{definition}

We already introduced the notation of functional dependence by square brackets, e.g.~$\psi[v]$. From now on functional derivatives will always be in the Fréchet sense, thus only the symbol $\delta$ is used, e.g.~$\delta\psi[v;w]$. The semicolon between the arguments illustrates the different standing of $v$ and $w$, continuity in $v$ but additionally linearity in $w$.

\section[Variation of trajectories and observable quantities]{Variation of trajectories and\\observable quantities}
\label{sect-variation-trajectories}

As we are concerned with variations of potentials here we imagine ourselves being located at the solution $\psi[v]$ to the Schrödinger equation with potential $v$ which gets perturbed by $w$ and thus a variation arises. If a first order expression exists it shall be denoted by $\delta \psi[v;w]$ or $\delta \psi([v;w],t)$ if evaluated at a given time, the Gâteaux derivative. If some $\V$ is the space of time-dependent potentials and $\X \subseteq \Cont^0([0,T],\H)$ the corresponding space of quantum trajectories in the Hilbert space then naturally the question arises, if $\psi[\cdot] : \V \rightarrow \X$ is Fréchet differentiable on $\V$ or some subset $\U \subset \V$ for a fixed initial state $\psi_0$. This will also guarantee $\delta\psi[v;w] \in \X$ next to $\psi[v] \in \X$. We try to answer the question in the two different settings of stepwise static approximations (\autoref{sect-stepwise-static}) and successive substitutions (\autoref{sect-yajima}) in the sections \ref{sect-func-diff-stepwise} and \ref{sect-func-diff-successive} respectively.

But what if those variation should be taken over to observable quantities like the expectation value of a self-adjoint operator? We then have to make sure that the respective domain of the operator lies in the time-slices of $\X$ to make the operator applicable. Everything is thus easy in the case of bounded operators treated in \autoref{sect-func-diff-bounded} and the evaluation of the variation leads to the well-known Kubo formula \eqref{eq-kubo}. The case of unbounded observables is discussed in \autoref{sect-func-diff-unbounded}.

In \autoref{sect-variation-density} the investigation is extended to the one-particle density which cannot be given as the expectation value of a proper observable. A study of equations that hold for $\delta \psi$ follows in the final three sections and leads to an alternative proof of existence for strong solutions to Schrödinger's equation and a rough energy estimate.

\subsection{In the stepwise static approximation}
\label{sect-func-diff-stepwise}

Taking the definition of the Gâteaux derivative for a solution $\psi[v] = U[v] \psi_0$, where the evolution system is thought to be constructed as a limit procedure as in \autoref{th-schro-dyn-td} which allows for Hamiltonians $H(t) = H_0 + v(t)$ with $v : [0,T] \rightarrow \Sigma(L^2+L^\infty)$ Lipschitz continuous, one has
\begin{equation}\label{eq-psi-func-diff}
\delta \psi[v;w] = \lim_{\lambda \rightarrow 0} \frac{1}{\lambda} \big(U[v+\lambda w]-U[v]\big) \psi_0.
\end{equation}
To show Gâteaux differentiability one can very much proceed as in showing continuity in \autoref{cor-schro-evolution-continuous} but a higher degree of regularity is needed. Fréchet differentiability then follows as well.

\begin{theorem}\label{th-rs-frechet}
Let $\psi_0 \in H^{2(m+2)}$. Then the trajectory
\[
U[\cdot]\psi_0 : \Lip([0,T], W^{2(m+1),\Sigma}) \longrightarrow L^\infty([0,T],H^{2m})
\]
defined by \autoref{th-schro-dyn-td} is Fréchet-differentiable on the whole given potential space.
\end{theorem}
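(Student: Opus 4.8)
The plan is to follow the same route as the continuity result \autoref{cor-schro-evolution-continuous}, but to invest two further Sobolev orders. First I would write down the natural candidate for the derivative: formally differentiating $\i\partial_t U[v+\lambda w]\psi_0 = (H_0 + v + \lambda w)U[v+\lambda w]\psi_0$ at $\lambda = 0$ suggests that $\delta\psi[v;w]$ should solve the inhomogeneous Schr\"odinger equation $\i\partial_t\delta\psi([v;w],t) = H([v],t)\,\delta\psi([v;w],t) + w(t)\psi([v],t)$ with $\delta\psi([v;w],0) = 0$; by Duhamel's principle (cf.~the discussion after \autoref{lemma-Uv-smoothing} and \autoref{sect-duhamel}) its solution is
\[
\delta\psi([v;w],t) = -\i\int_0^t U([v],t,s)\,w(s)\,\psi([v],s)\,\d s .
\]
Since $\psi_0 \in H^{2(m+2)}$, \autoref{th-sobolev-regularity} gives $\psi([v],s) \in H^{2(m+2)}$ with norm bounded uniformly in $s$ by $\|\psi_0\|_{2(m+2),2}$ times a constant depending continuously on $v$; then $w(s)\psi([v],s) \in H^{2(m+1)}$ by \autoref{lemma-rs-0}, and \autoref{th-sobolev-regularity} together with the estimate \eqref{eq-schro-evolut-estimate-2m} shows the integral lies in $L^\infty([0,T],H^{2m})$ and depends linearly and boundedly on $w \in \Lip([0,T],W^{2(m+1),\Sigma})$. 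Hence $w \mapsto \delta\psi[v;w]$ is an admissible candidate for the G\^ateaux derivative at $v$.

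To verify it really is the derivative I would use the evolution-system identity exactly as in \eqref{eq-U-trick}, which gives $(U([v+\lambda w],t,0) - U([v],t,0))\psi_0 = -\i\lambda\int_0^t U([v],t,s)\,w(s)\,U([v+\lambda w],s,0)\psi_0\,\d s$, so that the difference quotient minus the candidate equals
\[
\tfrac{1}{\lambda}\big(U[v+\lambda w]-U[v]\big)\psi_0 - \delta\psi[v;w] = -\i\int_0^t U([v],t,s)\,w(s)\,\big(U([v+\lambda w],s,0)-U([v],s,0)\big)\psi_0\,\d s .
\]
Stripping the leading $U([v],t,s)$ with \eqref{eq-schro-evolut-estimate-2m}, handling the multiplication by $w(s)$ with \autoref{lemma-rs-0}, and closing with the Lipschitz-type continuity estimate \eqref{eq-evolut-lipschitz} applied at level $m+1$ — which needs precisely $\psi_0 \in H^{2(m+2)}$ and $w \in W^{2(m+1),\Sigma}$ — bounds the $L^\infty([0,T],H^{2m})$-norm of the remainder by $|\lambda|$ times a constant that stays bounded as $\lambda \to 0$, since all constants depend continuously on the potential and $v+\lambda w \to v$. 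Thus the remainder is $\lilo(\lambda)$ and $\delta\psi[v;w]$ is the G\^ateaux derivative, cf.~\eqref{eq-psi-func-diff}.

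Finally I would upgrade to Fr\'echet differentiability via \autoref{lemma-frechet}: it suffices that $v \mapsto (w \mapsto \delta\psi[v;w])$ is continuous into $\mathcal{B}(\Lip([0,T],W^{2(m+1),\Sigma}),L^\infty([0,T],H^{2m}))$, and then the map is even $\Cont^1$ on the whole potential space. For $v,v'$ close I would split
\[
\delta\psi([v';w],t) - \delta\psi([v;w],t) = -\i\!\int_0^t\! \big(U([v'],t,s)-U([v],t,s)\big)w(s)\psi([v'],s)\,\d s - \i\!\int_0^t\! U([v],t,s)\,w(s)\,\big(\psi([v'],s)-\psi([v],s)\big)\,\d s ;
\]
for the first term use \eqref{eq-evolut-lipschitz} (with $w(s)\psi([v'],s) \in H^{2(m+1)}$, valid since $\psi_0 \in H^{2(m+2)}$ via \autoref{lemma-rs-0}), and for the second that $\psi[v']-\psi[v] \to 0$ in $L^\infty([0,T],H^{2(m+1)})$ by \eqref{eq-evolut-lipschitz} at level $m+1$, combined with \eqref{eq-schro-evolut-estimate-2m} and \autoref{lemma-rs-0}. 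In both cases the bound is $\lesssim \|v'-v\|$ uniformly over $\|w\| \le 1$, which is the required operator-norm continuity; as noted after \autoref{cor-schro-evolution-continuous}, an additional static Coulomb interaction can be folded into $H_0$ without affecting any of this.

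The main obstacle is the derivative-counting bookkeeping rather than any single hard estimate: multiplying by a potential from $W^{2(m+1),\Sigma}$ costs one extra order on the self-adjoint-domain scale via \autoref{lemma-rs-0}, while the continuity estimate \eqref{eq-evolut-lipschitz} already demands the initial state two orders above the target, and it is the composition of these that pins down the hypotheses $\psi_0 \in H^{2(m+2)}$ and $v \in \Lip([0,T],W^{2(m+1),\Sigma})$. One has to check that this chain of estimates closes and, crucially for the Fr\'echet step, that every implicit constant (from \eqref{eq-schro-evolut-estimate-2m}, \eqref{eq-evolut-lipschitz}, \autoref{lemma-rs-0}, \autoref{lemma-rs-2}) depends continuously on the potential so that it stays uniformly bounded under the perturbations $v \mapsto v+\lambda w$ and $v \mapsto v'$.
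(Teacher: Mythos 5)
Your proposal is correct and follows essentially the same route as the paper's proof: the same Duhamel-type formula $\delta\psi([v;w],t) = -\i\int_0^t U([v],t,s)\,w(s)\,\psi([v],s)\,\d s$ for the derivative, the same upgrade from G\^ateaux to Fr\'echet via \autoref{lemma-frechet}, and the identical two-term splitting of $\delta\psi[v+h;w]-\delta\psi[v;w]$ controlled by \autoref{lemma-rs-0}, \eqref{eq-schro-evolut-estimate-2m}, and the continuity estimate \eqref{eq-evolut-lipschitz}, with the same derivative bookkeeping forcing $\psi_0\in H^{2(m+2)}$ and $v,w\in\Lip([0,T],W^{2(m+1),\Sigma})$. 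The only cosmetic difference is that you posit the candidate first and verify the difference quotient quantitatively, whereas the paper obtains it by passing the G\^ateaux limit through the integral using \autoref{cor-schro-evolution-continuous} — both rest on the same underlying estimate.
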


\begin{proof}
We start with $\delta\psi$ from \eqref{eq-psi-func-diff} above and first show that this expression is indeed a Gâteaux derivative, i.e., linear and continuous in $w$ for all $v$. Following the proof of \autoref{cor-schro-evolution-continuous} we get
\[
\delta \psi([v;w],t) = -\i \lim_{\lambda \rightarrow 0} \int_0^t U([v],t,s) w(s) U([v+\lambda w],s,0) \psi_0 \d s.
\]
Drawing the limit taken with respect to the $H^{2m}$-topology into the integral (the evolution system is uniformly continuous in time since we only consider a finite time interval), through the other quantities, and using continuity from \autoref{cor-schro-evolution-continuous} this expression is simply
\begin{equation}\label{eq-gateaux-deriv-1}
\delta \psi([v;w],t) = -\i \int_0^t U([v],t,s) w(s) U([v],s,0) \psi_0 \d s.
\end{equation}
Yet here one has to be very careful if the conditions for continuity are really fulfilled, because the evolution system needs to operate on a wave function in $H^{2(m+2)}$ to have a continuous outcome in $H^{2(m+1)}$ as demanded by the limit procedure because the multiplication by $w(s)$ is then $H^{2(m+1)} \rightarrow H^{2m}$. But with $v,w \in \Lip([0,T], W^{2(m+1),\Sigma}) \subset \Lip([0,T], W^{2m,\Sigma})$ this is indeed fulfilled. Linearity in $w$ is now clear and continuity follows like in \autoref{cor-schro-evolution-continuous}. This setting is sufficient for Gâteaux differentiability. To have Fréchet differentiability we employ \autoref{lemma-frechet}, the task is therefore to show continuity in $v$. We use the expression above to get the following for all $v,w,h \in \Lip([0,T], W^{2(m+1),\Sigma})$.
\begin{align*}
\delta \psi&([v+h;w],t) - \delta \psi([v;w],t) \\
=& -\i \int_0^t \big( U([v+h],t,s) w(s) U([v+h],s,0) - U([v],t,s) w(s) U([v],s,0) \big) \psi_0 \d s \\
=& -\i \int_0^t \big( (U([v+h],t,s)-U([v],t,s)) w(s) U([v+h],s,0) \\
&+ U([v],t,s) w(s) (U([v+h],s,0)-U([v],s,0)) \big) \psi_0 \d s
\end{align*}
Taking the $H^{2m}$-norm we have an estimate
\begin{align*}
\|\delta \psi&([v+h;w],t) - \delta \psi([v;w],t)\|_{2m,2} \\
\lesssim& \int_0^t \Big( \| (U([v+h],t,s)-U([v],t,s)) w(s) U([v+h],s,0) \psi_0 \|_{2m,2} \\
&+ \Pi_m[v(s)] \cdot \e^{C_m[v] L (t-s)} \cdot \| w(s) (U([v+h],s,0)-U([v],s,0)) \psi_0 \|_{2m,2} \Big) \d s
\end{align*}
where \eqref{eq-schro-evolut-estimate-2m} was already used in the second term. The first term in the integral goes to zero for $h \rightarrow 0$ because of continuity of $U[\cdot]$ applied to wave functions in $H^{2(m+1)}$ like before, the second one as well after the usual estimate with \autoref{lemma-rs-0} and using continuity of $U[\cdot]$ on $\psi_0 \in H^{2(m+2)}$ and potentials in $\Lip([0,T], W^{2(m+1),\Sigma})$.
\end{proof}

We can collect the regularity results of \autoref{th-sobolev-regularity}, \autoref{cor-schro-evolution-continuous}, and finally \autoref{th-rs-frechet} above in the following beautiful array valid for a fixed $\psi_0 \in H^{2m}$.
\begin{equation}\label{eq-rs-regularity}
\begin{aligned}
&U[\cdot]\psi_0 :\\ &\Lip([0,T], W^{2(m-1),\Sigma}) \longrightarrow \left\{
\begin{array}{lll} 
	L^\infty([0,T],H^{2m}) && \mbox{for}\; m \geq 1 \\
	L^\infty([0,T],H^{2(m-1)}) & \mbox{is}\; \Cont^0 & \mbox{for}\; m \geq 1 \\
		L^\infty([0,T],H^{2(m-2)}) & \mbox{is}\; \Cont^1 & \mbox{for}\; m \geq 2
\end{array}
\right.
\end{aligned}
\end{equation}
While the first line just means the mapping is well defined into the space of trajectories with $H^{2m}$-regularity it gets continuous and even Fréchet differentiable with respect to the potential in trajectory spaces of lesser regularity with corresponding coarser topologies. This also allows a clear conjecture to $\Cont^2$ and beyond. Finally we will derive an estimate for $\delta\psi$ with respect to the Sobolev norm where the aim is not on an explicit bound but only one in terms of the potential variation.

\begin{corollary}\label{cor-estimate-delta-psi}
For $v,w \in \Lip([0,T], W^{2m,\Sigma})$, which implies Gâteaux differentiability in $L^\infty([0,T],H^{2m})$, the following estimate holds.
\[
\sup_{t\in [0,T]} \|\delta\psi([v;w],t)\|_{2m,2} \leq T c[v]\max_{t\in [0,T]}\|w(t)\|_{2m,\Sigma} \cdot \|\psi_0\|_{2(m+1),2}
\]
\end{corollary}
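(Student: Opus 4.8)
The plan is to start from the explicit integral representation of the Gâteaux derivative obtained in the proof of \autoref{th-rs-frechet}, namely
\[
\delta \psi([v;w],t) = -\i \int_0^t U([v],t,s) \, w(s) \, U([v],s,0) \psi_0 \d s,
\]
which is valid here because the hypotheses $v,w \in \Lip([0,T], W^{2m,\Sigma})$ are exactly those ensuring Gâteaux differentiability in $L^\infty([0,T],H^{2m})$ (this is the $m \to m$ case of \eqref{eq-rs-regularity} combined with \eqref{eq-gateaux-deriv-1}, noting that the argument there only needs $\psi_0 \in H^{2(m+1)}$ for the multiplication $w(s)\cdot$ to land in $H^{2m}$ after the inner evolution has produced something in $H^{2(m+1)}$). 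First I would take the $H^{2m}$-norm inside the integral using the triangle inequality for Banach-space-valued integrals.

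Next I would estimate the integrand in three steps, each invoking a lemma already available. The leading evolution operator $U([v],t,s)$ acting on a function in $H^{2m}$ is controlled by \eqref{eq-schro-evolut-estimate-2m} from the proof of \autoref{th-sobolev-regularity}, contributing a factor $\Pi_m[v(s)] \e^{C_m[v] L (t-s)}$ which is bounded uniformly in $s,t \in [0,T]$; call this bound part of $c[v]$. Then the multiplication operator $w(s) : H^{2(m+1)} \to H^{2m}$ is estimated by \autoref{lemma-rs-0}, giving $\|w(s) \varphi\|_{2m,2} \lesssim \|w(s)\|_{2m,\Sigma} \, \|\varphi\|_{2(m+1),2}$. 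Finally the inner evolution $U([v],s,0)\psi_0$ is estimated in the $H^{2(m+1)}$-norm, again by \eqref{eq-schro-evolut-estimate-2m} (with $m$ replaced by $m+1$), which requires precisely $\psi_0 \in H^{2(m+1)}$ and is bounded by $\Pi_{m+1}[v(0)] \e^{C_{m+1}[v] L s} \|\psi_0\|_{2(m+1),2}$; this again folds into $c[v]$. Collecting these,
\[
\|\delta\psi([v;w],t)\|_{2m,2} \lesssim \int_0^t c[v] \, \|w(s)\|_{2m,\Sigma} \, \|\psi_0\|_{2(m+1),2} \d s \leq T \, c[v] \max_{s\in[0,T]}\|w(s)\|_{2m,\Sigma} \cdot \|\psi_0\|_{2(m+1),2},
\]
and taking the supremum over $t \in [0,T]$ on the left finishes the estimate.

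I expect the main obstacle to be purely bookkeeping rather than conceptual: one must check that the intermediate regularity levels line up correctly, i.e.\ that the inner evolution genuinely produces an $H^{2(m+1)}$ object so that \autoref{lemma-rs-0} applies with the right indices, and that all the $v$-dependent constants ($\Pi_l[v]$, $C_l[v]$, the exponential factors, and the implicit constants absorbed into \q{$\lesssim$}) can legitimately be gathered into a single continuous functional $c[v]$ that does not depend on $t$, $s$, $w$, or $\psi_0$. Since $[0,T]$ is bounded, the exponentials $\e^{C_l[v] L (t-s)}$ and $\e^{C_l[v] L s}$ are uniformly bounded, and $\Pi_l[v(t)]$ depends continuously on $v$ by \autoref{lemma-rs-2}, so this gathering is legitimate; the Lipschitz constant $L$ of $w$ does not enter $c[v]$ because the only place a Lipschitz constant was used in \eqref{eq-schro-evolut-estimate-2m} was for $v$ itself, not for the perturbation. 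One small point worth stating explicitly is that the bound is stated with $c[v]$ rather than an absolute constant precisely because of the $\Pi_l[v]$ and the exponential-in-$T$ growth, mirroring the remark after \eqref{eq-evolut-lipschitz} that these estimates have the \emph{form} of a clean estimate but carry $v$-dependent baggage.
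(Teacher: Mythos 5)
Your proposal is correct and follows essentially the same route as the paper's own proof: both start from the integral representation \eqref{eq-gateaux-deriv-1}, move the $H^{2m}$-norm inside the integral, discard the outer evolution via \eqref{eq-schro-evolut-estimate-2m}, apply \autoref{lemma-rs-0} to the multiplication by $w(s)$, and bound the inner evolution in $H^{2(m+1)}$ again by \eqref{eq-schro-evolut-estimate-2m}, absorbing all $v$-dependent constants into $c[v]$. Your extra bookkeeping on why the hypothesis $v \in \Lip([0,T], W^{2m,\Sigma})$ suffices for the $H^{2(m+1)}$-preservation of the inner evolution is exactly the point the paper leaves implicit.
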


\begin{proof}
We start from \eqref{eq-gateaux-deriv-1} and estimate with the help of \eqref{eq-schro-evolut-estimate-2m} from the \q{growth of Sobolev norms} and \autoref{lemma-rs-0} for the multiplication with $w$. The constants involved in the \q{$\lesssim$} relation thus depend on $v$.
\begin{align*}
\sup_{t\in [0,T]}\|\delta\psi([v;w],t)\|_{2m,2} &\leq \int_0^T \|U([v],t,s) w(s) U([v],s,0) \psi_0\|_{2m,2} \d s \\
&\lesssim \int_0^T \|w(s) U([v],s,0) \psi_0\|_{2m,2} \d s \\
&\lesssim \int_0^T \|w(s)\|_{2m,\Sigma} \cdot \|U([v],s,0) \psi_0\|_{2(m+1),2} \d s \\[0.5em]
&\lesssim T \max_{t\in [0,T]}\|w(t)\|_{2m,\Sigma} \cdot \|\psi_0\|_{2(m+1),2}
\end{align*}
\end{proof}

One would generally think that the smoothing map from \autoref{lemma-Uv-smoothing} could be used beneficially to lessen the restrictions for Fréchet differentiability of $\psi[\cdot]$ but the necessary reliance on continuity of $U[\cdot]$ and thus \autoref{cor-schro-evolution-continuous} destroyed this hope. Nevertheless an alternative estimate to the corollary above can be given by employing the smoothing map. Note that this estimate misses the explicit linear dependence on the length $T$ of the time interval.

\begin{corollary}\label{cor-estimate-delta-psi-2}
For $v,w \in \Cont^1([0,T], W^{2(m-1),\Sigma})$, $m\geq 1$, and additionally $v \in \Lip([0,T], W^{2m,\Sigma})$ the following estimate holds as well.
\begin{align*}
\sup_{t\in [0,T]}&\|\delta\psi([v;w],t)\|_{2m,2} \\
&\leq c'[v] \max_{t\in [0,T]}\left( \|w(t)\|_{2(m-1),\Sigma} + \|\dot w(t)\|_{2(m-1),\Sigma}\right) \cdot \|\psi_0\|_{2(m+1),2}
\end{align*}
\end{corollary}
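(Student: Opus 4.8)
The plan is to recognise $\delta\psi[v;w]$ from \eqref{eq-gateaux-deriv-1} as the image of a single trajectory under the smoothing map of \autoref{lemma-Uv-smoothing} and then read off the bound from the continuity of that map in the potential. Set $\varphi(s) = w(s)\,U([v],s,0)\psi_0 = w(s)\,\psi([v],s)$, so that
\[
\delta\psi([v;w],t) = -\i\int_0^t U([v],t,s)\,\varphi(s)\,\d s .
\]
Since $v \in \Cont^1([0,T],W^{2(m-1),\Sigma})$, \autoref{lemma-Uv-smoothing} with its index taken to be $m-1$ says that $\varphi \mapsto \left(t\mapsto \int_0^t U([v],t,s)\varphi(s)\,\d s\right)$ is a bounded operator $\Cont^1([0,T],H^{2(m-1)}) \to \Cont^0([0,T],H^{2m})\cap\Cont^1([0,T],H^{2(m-1)})$, with operator bound depending continuously on $v$. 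It therefore suffices to estimate $\|\varphi\|_{\Cont^1([0,T],H^{2(m-1)})}$ in terms of the potential variation.

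First I would bound $\varphi$ pointwise in time: by \autoref{lemma-rs-0} multiplication by $w(s)\in W^{2(m-1),\Sigma}$ sends $H^{2m}$ into $H^{2(m-1)}$ with $\|\varphi(s)\|_{2(m-1),2}\lesssim \|w(s)\|_{2(m-1),\Sigma}\,\|\psi([v],s)\|_{2m,2}$, and here $\psi([v],s)\in H^{2m}$ with its Sobolev norm controlled by \eqref{eq-schro-evolut-estimate-2m} (for which $\psi_0\in H^{2(m+1)}\subset H^{2m}$ and $v\in\Lip([0,T],W^{2(m-1),\Sigma})$ are enough). For the time derivative I use that $\psi_0\in H^{2(m+1)}$ together with the extra hypothesis $v\in\Lip([0,T],W^{2m,\Sigma})$ makes, by \autoref{th-sobolev-regularity} at level $m+1$, the trajectory $\psi([v],\cdot)$ a classical solution of regularity $H^{2(m+1)}$, so the product rule
\[
\dot\varphi(s) = \dot w(s)\,\psi([v],s) \;-\; \i\, w(s)\,H([v],s)\,\psi([v],s)
\]
holds in the $H^{2(m-1)}$ topology. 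The first summand is estimated exactly like $\varphi$ with $\dot w$ in place of $w$; for the second, $H([v],s)=-\onehalf\Delta+v(s)$ together with $\|\Delta\psi\|_{2m,2}\lesssim\|\psi\|_{2(m+1),2}$ (\autoref{th-sobolev-norm-laplace}) and \autoref{lemma-rs-0} applied to $v(s)\in W^{2m,\Sigma}$ gives $\|H([v],s)\psi([v],s)\|_{2m,2}\lesssim (1+\|v(s)\|_{2m,\Sigma})\,\|\psi([v],s)\|_{2(m+1),2}$, after which \autoref{lemma-rs-0} again pulls out $\|w(s)\|_{2(m-1),\Sigma}$ and \eqref{eq-schro-evolut-estimate-2m} controls the remaining norm of $\psi([v],s)$ by $\|\psi_0\|_{2(m+1),2}$. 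Collecting the $v$-dependent factors into one continuous functional $c'[v]$ yields
\[
\|\varphi\|_{\Cont^1([0,T],H^{2(m-1)})} \lesssim c'[v]\,\max_{t\in[0,T]}\left(\|w(t)\|_{2(m-1),\Sigma}+\|\dot w(t)\|_{2(m-1),\Sigma}\right)\|\psi_0\|_{2(m+1),2},
\]
and feeding this into the boundedness of the smoothing map finishes the proof.

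The main obstacle — and the reason the hypotheses are precisely as stated — is the control of $\dot\varphi$. It is this term that forces one extra order of time regularity on $w$ (the $\|\dot w\|$ term), that forces $\psi([v],\cdot)$ to carry $H^{2(m+1)}$-regularity so that $H([v],s)\psi([v],s)\in H^{2m}$ before multiplication by $w$ (hence $\psi_0\in H^{2(m+1)}$ and the additional $v\in\Lip([0,T],W^{2m,\Sigma})$, exactly what \autoref{th-sobolev-regularity} needs one level up), and that requires the product rule to be justified in the $H^{2(m-1)}$ norm rather than merely in $L^2$. Everything else is a routine chain of \autoref{lemma-rs-0}, \eqref{eq-schro-evolut-estimate-2m}, \autoref{th-sobolev-norm-laplace}, and the boundedness statement of \autoref{lemma-Uv-smoothing}; in particular no explicit power of $T$ enters beyond what is absorbed into $c'[v]$, which is why — as already remarked — this estimate lacks the linear factor $T$ of \autoref{cor-estimate-delta-psi}.
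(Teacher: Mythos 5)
Your proposal is correct and follows essentially the same route as the paper's proof: both identify $\varphi(s)=w(s)\,U([v],s,0)\psi_0$ and feed it into the smoothing map of \autoref{lemma-Uv-smoothing} at index $m-1$, with the time derivative $\dot\varphi=\dot w\,\psi[v]-\i\,w\,H[v]\psi[v]$ handled via \autoref{lemma-rs-0} and \eqref{eq-schro-evolut-estimate-2m}, which is precisely where the $\|\dot w\|$ term and the $H^{2(m+1)}$ requirements on $\psi_0$ and $v$ enter. Your accounting of why each hypothesis is needed matches the paper's reasoning.
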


\begin{proof}
Again we start from \eqref{eq-gateaux-deriv-1} but then use the estimate from the bounded smoothing map from \autoref{lemma-Uv-smoothing}. This makes us gain two orders of regularity in space but only by inclusion of the norm of the time derivative. Like before we use \eqref{eq-schro-evolut-estimate-2m} from the \q{growth of Sobolev norms} and \autoref{lemma-rs-0} for the multiplication with $w$. The constants involved in the \q{$\lesssim$} relation thus depend on $v$ again. We will first argue without the supremum over time.
\begin{align*}
\|\delta\psi([v;w],t)\|_{2m,2} \lesssim\;& \|w(t) U([v],t,0) \psi_0\|_{2(m-1),2} \\
&+ \|\partial_t w(t) U([v],t,0) \psi_0\|_{2(m-1),2} \\
\lesssim\;&\|w(t)\|_{2(m-1),\Sigma} \cdot \|U([v],t,0)\psi_0\|_{2m,2} \\
&+\|\dot w(t)\|_{2(m-1),\Sigma} \cdot \|U([v],t,0) \psi_0\|_{2m,2} \\
&+\|w(t)\|_{2(m-1),\Sigma} \cdot \|H([v],t) U([v],t,0) \psi_0\|_{2m,2} \\
\lesssim\;&\|w(t)\|_{2(m-1),\Sigma} \cdot \|\psi_0\|_{2m,2} \\
&+\|\dot w(t)\|_{2(m-1),\Sigma} \cdot \|\psi_0\|_{2m,2} \\
&+\|w(t)\|_{2(m-1),\Sigma} \cdot \|U([v],t,0) \psi_0\|_{2(m+1),2} \\
\lesssim\;&\left( \|w(t)\|_{2(m-1),\Sigma} + \|\dot w(t)\|_{2(m-1),\Sigma}\right) \cdot \|\psi_0\|_{2(m+1),2}
\end{align*}
Take two arbitrary functions $f,g:[0,T] \rightarrow \R_{\geq 0}$ then it is clear that $\max f$, $\max g \leq \max (f+g)$, thus $\onehalf (\max f + \max g) \leq \max (f+g)$ and of course $\max (f+g) \leq \max f + \max g$. Thus considering norms as above the sum of the maxima is equivalent to the maximum of the sum and we show the desired outcome.
\end{proof}

\subsection{In the successive substitutions method}
\label{sect-func-diff-successive}

The results of this and some of the following sections have been published in \citeasnoun{tddft-func-diff}. $\X$ and $\V$ will from now on be exactly the spaces defined in \autoref{sect-banach-spaces} again.

Now consider two potentials, $v$ and its perturbation $v+w$, with their corresponding Schrödinger equations.
\begin{align*}
 \i \partial_t \psi[v] &= H[v]\psi[v] \\
 \i \partial_t \psi[v+w] &= H[v+w]\psi[v+w]
\end{align*}
We will include $v$ into our substitution of $\psi$ such that only $w$ remains in the Tomonaga--Schwinger equation corresponding to the second line above. To this end we will go over to the $H[v]$-interaction picture with state vector $\hat{\psi}$ (now with a hat) by the substitution
\[
\psi(t) = U([v],t,0) \hat{\psi}(t).
\]
Just like before in \eqref{schwinger-tomonaga} we get the Tomonaga--Schwinger equation and its integral version.
\[
\i\partial_t \hat{\psi} = \hat{w} \hat{\psi}, \quad \hat{w}(t) = U([v],0,t) w(t) U([v],t,0)
\]
\begin{equation}\label{schroequ-mild-pertub}
\hat{\psi}(t) = \psi_0 - \i\int_0^t \hat{w}(s) \hat{\psi}(s) \d s
\end{equation}

Note that in the case of the Schrödinger equation with only potential $v$, that is $w=0$, this implies the identity $\hat{\psi}([v],t) = \psi_0$. To get an idea for the expression of $\delta\psi$ starting from Tomonaga--Schwinger the difference $\hat\psi([v+w],t) - \hat\psi([v],t)=\hat\psi([v+w],t) - \psi_0$ is calculated using \eqref{schroequ-mild-pertub} recursively.
\begin{align*}
\hat\psi([v+w],t)-\psi_0 &= - \i \int_0^t \hat{w}(s) \, \hat\psi([v+w],s) \d s \\
&= - \i \int_0^t \hat{w}(s) \left( \psi_0 - \i \int_0^s \hat{w}(r) \hat\psi([v+w],r) \d r \right) \!\d s
\end{align*}
With this expression it is easy to take the corresponding Gâteaux limit to get a first order approximation.
\begin{equation}\label{gateaux-deriv-int}
\delta \hat\psi([v;w],t) = \lim_{\lambda \rightarrow 0} \frac{1}{\lambda} \big(\hat\psi([v+\lambda w],t)-\psi_0\big) =  - \i \int_0^t \hat{w}(s) \psi_0 \d s
\end{equation}
Transformed back to the Schrödinger picture we have the same as in \eqref{eq-gateaux-deriv-1}
\begin{equation}\label{eq-gateaux-deriv-2}
\delta\psi([v;w],t) = - \i \int_0^t U([v],t,s) \, w(s) \, \psi([v],s) \d s,
\end{equation}
specifically varied at the origin and using \autoref{def-Qv} of $Q_w$ this reduces to $\delta\psi[0;w] = Q_w U_0 \psi_0$. We can also define the variation of the evolution operator $U[v]$ acting on $\psi_0$ which is equivalent to \eqref{eq-gateaux-deriv-2} above.
\begin{equation}\label{eq-U-gateaux-deriv}
\delta U([v;w],t,0) = - \i \int_0^t U([v],t,s) \, w(s) \, U([v],s,0) \d s
\end{equation}

To prove that this is actually a Fréchet derivative we will use the merits of operational notation and calculate the difference $\psi[v+\varepsilon w]-\psi[v]$ in the Schrödinger picture by directly putting in the Neumann series \eqref{neumann-series} in a shorthand notation. The only shortcoming is that this expression only holds for times $T > 0$ small enough such that $C_Q T^* \|v\|_\V < 1$ and we have to keep a good eye on this condition of which we will get rid in \autoref{Psi-frechet-2} by a concatenation of short enough time intervals.

\begin{theorem}\label{Psi-frechet}
Let $\psi_0 \in \H$, $\U \subset \V$ bounded and open, and $T>0$ short enough such that $C_Q T^* \|v\|_\V < 1$ for all $v \in \U$. Then the unique solution to the mild Schrödinger equation is Fréchet differentiable on $\U$, i.e., $\psi \in \Cont^1(\U,\X)$. Likewise we have the variation of the evolution operator $\delta U : \U \times \V \rightarrow \mathcal{B}(\H, \X)$.
\end{theorem}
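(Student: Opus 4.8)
The plan is to work entirely in the operational notation of \autoref{sect-banach-spaces}, exploiting the Neumann series \eqref{neumann-series} $\psi[v] = \sum_{k=0}^\infty Q_v^k U_0 \psi_0$ and the fact that on $\U$ the contraction condition $C_Q T^* \|v\|_\V < 1$ guarantees absolute convergence of this series in $\X$, uniformly in $v \in \U$. First I would establish G\^ateaux differentiability: by the same recursive computation that produced \eqref{gateaux-deriv-int} and \eqref{eq-gateaux-deriv-2}, the candidate derivative is $\delta\psi([v;w],t) = -\i\int_0^t U([v],t,s)\, w(s)\, \psi([v],s)\,\d s$, equivalently $\delta\psi[v;w] = Q_v \delta\psi[v;w] + Q_w \psi[v]$, which by $(\id - Q_v)$-invertibility (\autoref{Q-bounded}) reads $\delta\psi[v;w] = (\id - Q_v)^{-1} Q_w \psi[v]$. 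This is manifestly linear in $w$, and bounded because $\|Q_w\| \leq C_Q T^* \|w\|_\V$ by \autoref{Q-bounded} and $\|(\id - Q_v)^{-1}\| \leq (1 - C_Q T^*\|v\|_\V)^{-1}$; hence it defines a genuine G\^ateaux derivative $\delta\psi[v;\cdot] \in \mathcal{B}(\V,\X)$.

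Next I would upgrade to Fr\'echet differentiability via \autoref{lemma-frechet}, i.e.\ by showing that $v \mapsto \delta\psi[v;\cdot]$ is continuous from $\U$ into $\mathcal{B}(\V,\X)$. Using the resolvent identity $(\id - Q_{v+h})^{-1} - (\id - Q_v)^{-1} = (\id - Q_{v+h})^{-1}(Q_{v+h} - Q_v)(\id - Q_v)^{-1}$ together with $Q_{v+h} - Q_v = Q_h$ (linearity of $v \mapsto Q_v$ from \autoref{def-Qv}), one gets
\[
\delta\psi[v+h;\cdot] - \delta\psi[v;\cdot] = (\id - Q_{v+h})^{-1} Q_h (\id - Q_v)^{-1} Q_{(\cdot)} \psi[v+h] + (\id - Q_v)^{-1} Q_{(\cdot)}\big(\psi[v+h] - \psi[v]\big),
\]
and both terms are $O(\|h\|_\V)$ in operator norm: the first because $\|Q_h\| \leq C_Q T^*\|h\|_\V$ and all other factors are uniformly bounded on $\U$, the second because $\psi[v+h] - \psi[v] = (\id - Q_{v+h})^{-1} Q_h \psi[v] \to 0$ in $\X$ as $h \to 0$ (this also re-proves the $\Cont^0$ part of \autoref{Uv-bounded} on $\U$). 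Thus $\delta\psi \in \Cont^0(\U, \mathcal{B}(\V,\X))$, and \autoref{lemma-frechet} yields $\psi \in \Cont^1(\U,\X)$ with $\delta_F\psi = \delta\psi$. The statement for $\delta U([v;w],t,0)$ from \eqref{eq-U-gateaux-deriv} follows by the identical argument applied to $U[v] = (\id - Q_v)^{-1} U_0 \in \mathcal{B}(\H,\X)$ with the fixed factor $U_0$ in place of $\psi[v]$, giving $\delta U : \U \times \V \to \mathcal{B}(\H,\X)$ jointly continuous in $v$ and linear (hence bounded) in $w$.

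The main obstacle is bookkeeping rather than conceptual: everything rests on the uniform-in-$v$ bounds $\|Q_v\| \leq C_Q T^* \|v\|_\V < 1$ and $\|(\id - Q_v)^{-1}\| \leq (1 - C_Q T^*\sup_{v\in\U}\|v\|_\V)^{-1}$, which in turn require $\U$ bounded and $T$ small enough as hypothesised; one must check that the implicit constants in the $O(\|h\|_\V)$ estimates can be taken uniform over $\U$, which is where boundedness of $\U$ is essential. The restriction to small $T$ (needed for the Neumann series) is precisely the loophole to be closed afterwards in \autoref{Psi-frechet-2} by chaining short time intervals and treating the terminal state on each subinterval as a new initial datum, using unitarity of $U([v],\tau,0)$ on $\H$ as in the proof of \autoref{Uv-bounded}.
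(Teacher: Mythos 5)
Your proposal is correct and follows essentially the same route as the paper's proof: the candidate derivative $R_v Q_w R_v U_0\psi_0 = (\id-Q_v)^{-1}Q_w\psi[v]$ obtained from the Neumann series, continuity of $v \mapsto \delta\psi[v;\cdot]$ via the resolvent identity and the bound $\|Q_h\| \leq C_Q T^*\|h\|_\V$, and \autoref{lemma-frechet} to pass from G\^ateaux to Fr\'echet. The only cosmetic difference is that you write the difference of derivatives as two explicit first-order terms, where the paper expands it into a double Neumann series in which every term contains at least one factor $Q_h$.
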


\begin{proof}
We use the shorthand notation $R_v = (\id - Q_v)^{-1}$ as this operator is closely related to the resolvent of $Q_v$. Because of the limitation to potentials $v \in \U$ we have convergence of the Neumann series in \eqref{neumann-series} which means boundedness of $R_v$. Due to $Q_{v+w} = Q_v+Q_w$ the resolvent identity
\begin{equation}\label{eq-resolvent-identity}
R_{v+w} = R_v (\id + Q_w R_{v+w})
\end{equation}
holds.\footnote{We are indebted to a journal referee for pointing out this quicker way of showing the desired result.} Thus inserting recursively we get from \eqref{neumann-series} the difference
\[
\psi[v+\varepsilon w]-\psi[v] = R_v Q_{\varepsilon w} R_{v+\varepsilon w}U_0\psi_0 = \sum_{k=1}^\infty (R_v Q_{\varepsilon w})^k R_v U_0\psi_0.
\]
This series coverges for fixed $v,w$ and small enough $\varepsilon$. We use again linearity $Q_{\varepsilon w} = \varepsilon Q_{w}$ for $\varepsilon \in \R$ and the Gâteaux limit follows immediately.
\begin{equation}\label{Psi-frechet-expression}
\delta \psi[v;w] = \lim_{\varepsilon \rightarrow 0} \frac{1}{\varepsilon} (\psi[v+\varepsilon w]-\psi[v]) = R_v Q_w R_v U_0\psi_0
\end{equation}
Continuity (and linearity) of the above form of $\delta \psi$ in its second argument is readily established by continuity (and linearity) of $Q_w$ in $w$. This proves Gâteaux differentiability. If we additionally show $v \mapsto \delta\psi[v,\cdot]$ continuous as a mapping $\U \rightarrow \mathcal{B}(\V,\X)$ then \autoref{lemma-frechet} implies Fréchet differentiability. This is certainly true if $\lim_{h \rightarrow 0}\|\delta\psi[v+h;w] - \delta\psi[v;w]\|_\X = 0$ for all $w \in \V$. We show this by using expression \eqref{Psi-frechet-expression} for $\delta\psi$ and the resolvent identity \eqref{eq-resolvent-identity} once more.
\begin{align*}
\delta\psi[v+h;w] - \delta\psi[v;w] &= (R_{v+h} Q_w R_{v+h} - R_v Q_w R_v) U_0\psi_0 \\
&= \sum_{(j,k) \neq (0,0)}^\infty (R_v Q_h)^j R_v Q_w (R_v Q_h)^k R_v U_0\psi_0
\end{align*}
Again those sums will converge for small enough $h \in \V$ and the expression is well defined. As there is at least one $Q_h$ contained in every term and $\|Q_h\| \leq C_Q T^* \|h\|_\V$ the whole expression goes to $0$ as $h \rightarrow 0$. This makes $\psi : \V \rightarrow \X$ Fréchet differentiable on $\U$.
\end{proof}

Note particularly that if we want to widen $\U \subset \V$ to an open ball with radius $R$ of allowed potentials this means the time bound $T$ limited by $T^* < (C_Q R)^{-1}$ gets smaller and vice versa. By dividing the time interval in sufficiently short subintervals with individual evolution operators we can circumvent this limitation as shown by the following corollary.

\begin{corollary}\label{Psi-frechet-2}
For arbitrary albeit finite $T>0$ and $\psi_0 \in \H$ the unique solution to the mild Schrödinger equation is Fréchet differentiable on all of $\V$, i.e., $\psi \in \Cont^1(\V,\X)$.
\end{corollary}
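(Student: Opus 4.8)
The plan is to bootstrap from \autoref{Psi-frechet}, which already gives Fréchet differentiability of $\psi[\cdot]:\U\to\X$ on any bounded open $\U\subset\V$ \emph{provided} the time horizon $T$ is small enough that $C_Q T^*\|v\|_\V<1$ for all $v\in\U$. The obstruction to a global statement is purely that this smallness condition couples $T$ and the size of $\U$. To remove it, first I would fix an arbitrary $T>0$ and an arbitrary bounded open $\U\subset\V$, and partition $[0,T]$ into finitely many subintervals $I_1=[0,\tau],\dots,I_M=[(M-1)\tau,T]$ with $\tau=T/M$, choosing $M$ large enough (as in the proof of \autoref{Uv-bounded}) that $C_Q\tau^*\|v\|_{\V|I_m}<1$ uniformly for $v\in\U$ and all $m$; note $\tau^*\to 0$ as $M\to\infty$ because the exponents in its definition are strictly positive by \autoref{def-V}. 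On each subinterval $I_m$, \autoref{Psi-frechet} applies to the restricted Schrödinger problem with initial datum the state at the left endpoint of $I_m$.

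The heart of the argument is then to glue these local derivatives by the chain rule. Writing the full solution operator as a composition $U([v],T,0)=U([v],T,t_{M-1})\cdots U([v],t_1,0)$ of the short-interval evolution operators, and recalling that each short-interval map $v\mapsto U([v],t_m,t_{m-1})\psi_0$ is Fréchet differentiable (into $\X$ restricted to $I_m$, hence into $\H$ at the endpoint, using $\X\hookrightarrow\Cont^0([0,T],\H)$), I would apply the chain rule for Fréchet derivatives of compositions of $\Cont^1$ maps between Banach spaces. Concretely, set $\psi^{(0)}=\psi_0$ and $\psi^{(m)}[v]=U([v],t_m,t_{m-1})\,\psi^{(m-1)}[v]$, where the inner dependence of the initial datum $\psi^{(m-1)}[v]\in\H$ on $v$ is itself $\Cont^1$ by the inductive hypothesis, and the evolution operator depends $\Cont^1$-ly on both $v$ (by \autoref{Psi-frechet}) and jointly linearly-boundedly on the initial state; composing these gives $\psi^{(m)}\in\Cont^1(\U,\X|_{I_m})$ with an explicit Duhamel-type expression for $\delta\psi^{(m)}$. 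Concatenating over $m=1,\dots,M$ and recombining the pieces into the $\X$-norm over $[0,T]$ exactly as in the proof of \autoref{Uv-bounded} yields $\psi\in\Cont^1(\U,\X)$. Since $\U$ was an arbitrary bounded open subset of $\V$ and Fréchet differentiability is a local property, this gives $\psi\in\Cont^1(\V,\X)$.

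The main technical obstacle I anticipate is the bookkeeping in the chain-rule step: one must check that the short-interval solution map is $\Cont^1$ not just as a map into $\X|_{I_m}$ but in a way compatible with feeding its endpoint value (an element of $\H$) as the initial datum for the next block, i.e.\ that the map $(\text{initial state},v)\mapsto U([v],t_m,t_{m-1})\cdot$ is jointly $\Cont^1$ on $\H\times\U$; linearity and boundedness in the first slot make this routine, but it does require care that the constant $C_Q\tau^*\|v\|_\V<1$ is enforced uniformly over the whole bounded $\U$, which is why $M$ must be chosen after $\U$. Once that uniformity is in place the recombination of norms is identical to \autoref{Uv-bounded}, and the resulting $\delta\psi[v;w]$ agrees on each $I_m$ with the Duhamel formula \eqref{eq-gateaux-deriv-2}, so the global G\^ateaux derivative is again $R_vQ_wR_vU_0\psi_0$ patched across subintervals, consistent with \eqref{Psi-frechet-expression}.
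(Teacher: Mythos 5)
Your proposal is correct and follows essentially the same route as the paper: partition $[0,T]$ into $M$ subintervals so that $C_Q\tau^*\|v\|_\V<1$ holds on each, apply \autoref{Psi-frechet} blockwise, and glue the local derivatives — the paper phrases this gluing as a product rule for $\delta U[v;w]$ applied to the composition $U([v],t,(m-1)\tau)\cdots U([v],\tau,0)\psi_0$, with the quadratic $\delta U\,\delta U$ cross terms shown to be $\lilo(\|w\|_\V)$, which is exactly your chain-rule step in different bookkeeping. The only cosmetic difference is that you enforce the smallness condition uniformly over a bounded open $\U$ while the paper fixes a single $v$ and invokes locality afterwards; both are adequate.
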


\begin{proof}
We use the way $U[v]$ has been put together by expressions as in \eqref{def-Uv}, each one for a short enough time interval such that convergence is guaranteed. This means take $M \in \N$ large enough and define $\tau = T/M$ and $\tau^*$ as in \autoref{Uv-bounded} such that $C_Q \tau^* \|v\|_\V < 1$ for a fixed $v \in \V$. We also use the same partition into subintervals $I_1 = [0,\tau], I_2 = [\tau,2\tau], \ldots, I_M=[(M-1)\tau,T]$. Imagine for the time being $M=2$ is large enough, later we will generalise this case. Now we have
\[
(U[v]\psi_0)(t) = \left\{
\begin{array}{ll}
 U([v], t, 0) \psi_0 & \mbox{for}\; t \in I_1 \\
 U([v], t, \tau) \, U([v], \tau, 0) \psi_0 & \mbox{for}\; t \in I_2.
\end{array} \right.
\]
Each of the individual evolution operators is well defined due to the limitation to a sufficiently short time interval. Also their variations \eqref{eq-U-gateaux-deriv} are well defined, proved in \autoref{Psi-frechet}, one just needs to shift the potentials accordingly in time like in the proof of \autoref{Uv-bounded} to have the $Q_v$ and $Q_w$ operators acting correctly as the integrals therein always start at $t=0$. To determine $\delta U[v;w]$ we put in the expansion $U[v+w] \in U[v] + \delta U[v;w] + \lilo(\|w\|_\V)$ as $w \rightarrow 0$ for all evolutions.
\[
(U[v+w]\psi_0)(t) \in \left\{
\begin{array}{ll}
 U([v], t, 0) \psi_0 + \delta U([v;w], t, 0) \psi_0 + \lilo(\|w\|_\V) & \mbox{for}\; t \in I_1 \\[0.5em]
  U([v], t, \tau) \, U([v], \tau, 0) \psi_0 \\
+ \, \delta U([v;w], t, \tau) \, U([v], \tau, 0) \psi_0 \\
+ \, U([v], t, \tau) \, \delta U([v;w], \tau, 0) \psi_0 & \mbox{for}\; t \in I_2 \\
+ \, \delta U([v;w], t, \tau) \, \delta U([v;w], \tau, 0) \psi_0 \\
+ \, \lilo(\|w\|_\V)
\end{array} \right.
\]
The quadratic $\delta U$ term is of order $\lilo(\|w\|_\V)$ as $w \rightarrow 0$ as well and can therefore be neglected in the whole $\delta U[v;w]$ expression. We show this with the boundedness of $\delta U$ in its second argument from \autoref{Psi-frechet}, introducing a bound $C>0$. Further we employ the obvious estimate $\|\varphi(t)\|_2 \leq \|\varphi\|_{\X|I_m} \leq \|\varphi\|_\X$ for $t \in I_m$.
\begin{align*}
\|\delta U([v;w], \cdot, \tau) \, \delta U([v;w], \tau, 0) \psi_0\|_{\X | I_2} &\leq C \|w\|_\V \|\delta U([v;w], \tau, 0) \psi_0\|_2 \\
&\leq C  \|w\|_\V \|\delta U[v;w] \psi_0\|_{\X|I_1} \\
&\leq C^2 \|w\|_\V^2 \|\psi_0\|_2
\end{align*}
The extension to $M>2$ is straightforward and gives us the following product rule for $\delta \psi[v;w]$ at time $t \in I_m$.
\begin{align*}
\delta \psi([v;w],t) &= (\delta U[v;w]\psi_0)(t) \\
&= \delta U([v;w], t, (m-1)\tau) \ldots U([v], 2\tau,\tau) \, U([v], \tau,0) \psi_0 \\
&+ \cdots \\
&+ U([v], t, (m-1)\tau) \ldots \delta U([v;w], 2\tau,\tau) \, U([v], \tau,0) \psi_0 \\
&+ U([v], t, (m-1)\tau) \ldots U([v], 2\tau,\tau) \, \delta U([v;w], \tau,0) \psi_0
\end{align*}
The conditions of linearity and continuity needed for Fréchet differentiability can be directly transferred from \autoref{Psi-frechet}, as we add only finitely many terms.
\end{proof}

Additionally to the already established estimates for the Fréchet derivative from the stepwise static approximation \autoref{cor-estimate-delta-psi} and \autoref{cor-estimate-delta-psi-2} we can give the following estimate with the successive substitutions method.

\begin{corollary}\label{Psi-frechet-estimate}
For $v,w \in \V$ we have the following estimate for the Fréchet derivative.
\[
\sup_{t\in [0,T]}\|\delta\psi([v;w],t)\|_{2} \leq (1+C_v)^2 T^* \|w\|_\V \|\psi_0\|_2
\]
\end{corollary}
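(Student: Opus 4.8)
The plan is to start from the explicit integral representation of the G\^ateaux derivative of $\psi[\cdot]$ — which, by \autoref{Psi-frechet-2}, is a Fr\'echet derivative on all of $\V$ for any finite $T$ — namely equation \eqref{eq-gateaux-deriv-2},
\[
\delta\psi([v;w],t) = -\i\int_0^t U([v],t,s)\,w(s)\,\psi([v],s)\d s,
\]
and to read it as a composition of three controlled pieces: multiplication by $w$, which pushes $\psi[v]$ out of $\X$ and into $\X'$; an evolution-system analogue $Q^v := -\i\,U[v]\,S\,U[v]^{*}$ of the operator $Q$ from \autoref{lemma-Q} (with the free propagator $U_0$ replaced by $U[v]$, acting pointwise in time so that $(Q^v g)(t)=-\i\int_0^t U([v],t,s)g(s)\d s$); and the Strichartz bound on $\psi[v]$ itself. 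Writing $g:=w\,\psi[v]$, \autoref{lemma-mult-op} gives $g\in\X'$ with $\|g\|_{\X'}\le T^{*}\|w\|_\V\,\|\psi[v]\|_\X$, while \autoref{Cv-strichartz} (which applies to the mild Schr\"odinger equation with an arbitrary initial state) together with unitarity of $U[v]$ on $\H$ yields $\|\psi[v]\|_\X=\|\psi_0\|_2+\|\psi[v]\|_{q,\theta}\le(1+C_v)\|\psi_0\|_2$, hence $\|g\|_{\X'}\le(1+C_v)\,T^{*}\|w\|_\V\|\psi_0\|_2$.

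The main step is then to prove $\|Q^v g\|_{2,\infty}\le(1+C_v)\|g\|_{\X'}$. Decompose $g=g_1+g_2$ with $g_1\in L^{2,1}$ and $g_2\in L^{q',\theta'}$. For the $g_1$-part, the triangle inequality for the Bochner integral and unitarity give $\big\|\int_0^t U([v],t,s)g_1(s)\d s\big\|_2\le\int_0^t\|g_1(s)\|_2\d s\le\|g_1\|_{2,1}$. For the $g_2$-part, unitarity of $U([v],t,0)$ reduces the task to $\sup_t\big\|\int_0^t U([v],0,s)g_2(s)\d s\big\|_2\le C_v\|g_2\|_{q',\theta'}$; writing $\int_0^t U([v],0,s)g_2(s)\d s=\int_0^T U([v],0,s)\big(\1_{[0,t]}(s)g_2(s)\big)\d s$ exhibits this as the $TT^{*}$-adjoint of the homogeneous estimate $\phi\mapsto\big(s\mapsto U([v],s,0)\phi\big):\H\to L^{q,\theta}$ of \autoref{Cv-strichartz} (whose adjoint $L^{q',\theta'}\to\H$ is exactly $g\mapsto\int_0^T U([v],0,s)g(s)\d s$, of the same operator norm $\le C_v$), followed by $\|\1_{[0,t]}g_2\|_{q',\theta'}\le\|g_2\|_{q',\theta'}$. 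This duality manoeuvre is of the same nature as the derivation of \eqref{bounded-b} and \eqref{bounded-c} from \eqref{bounded-d} in the proof of \autoref{lemma-Q}. Adding the two contributions and taking the infimum over all decompositions $g=g_1+g_2$ gives $\|Q^v g\|_{2,\infty}\le\|g_1\|_{2,1}+C_v\|g_2\|_{q',\theta'}\le(1+C_v)\|g\|_{\X'}$.

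Combining, $\sup_{t\in[0,T]}\|\delta\psi([v;w],t)\|_2=\|Q^v g\|_{2,\infty}\le(1+C_v)\|g\|_{\X'}\le(1+C_v)^{2}\,T^{*}\|w\|_\V\|\psi_0\|_2$, as claimed; the two factors $(1+C_v)$ come, respectively, from the homogeneous Strichartz bound on $\psi[v]=U[v]\psi_0$ and from the inhomogeneous $L^{q',\theta'}\!\to L^{2,\infty}$ bound on $Q^v$. The one genuinely delicate point is that the Kato inequality (\autoref{kato-ineq}), which underlies the corresponding estimate for $Q=-\i U_0 S U_0^{*}$, is \emph{not} available for the full evolution $U[v]$; this is exactly why the $g_2$-part must be routed through the $TT^{*}$-duality of \autoref{Cv-strichartz} instead of copying the proof of \autoref{lemma-Q} line by line. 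No subinterval concatenation is needed here since \autoref{Cv-strichartz} is already formulated on the whole interval $[0,T]$; were one to prefer working interval by interval, the estimate would patch together exactly as in the proofs of \autoref{Uv-bounded} and \autoref{Psi-frechet-2}.
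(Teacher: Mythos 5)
Your proof is correct and reproduces the paper's constant exactly, but it is organised as the dual of the paper's argument rather than a retracing of it. The paper passes to the $H[v]$-interaction picture, bounds $\sup_t\|\delta\psi([v;w],t)\|_2$ by $\|\hat w\psi_0\|_{2,1}$ via Minkowski, and then estimates that $L^{2,1}$-norm by saturating H\"older against a test trajectory $\varphi\in L^{2,\infty}$, rewriting $(\varphi,\hat w\psi_0)=(U[v]\varphi,w\psi[v])$, and applying the \emph{forward} bound $\|U[v]\varphi\|_\X\le(1+C_v)\|\varphi\|_{2,\infty}$ together with \autoref{lemma-mult-op}, after which $\|\varphi\|_{2,\infty}$ cancels. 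You instead stay in the Schr\"odinger picture and prove the operator bound $\|Q^v\|_{\X'\to L^{2,\infty}}\le 1+C_v$ for the perturbed Duhamel map directly, splitting the source $g=w\psi[v]$ according to the $\X'$-norm and routing the $L^{q',\theta'}$ piece through the Banach-space adjoint of the Strichartz map $\phi\mapsto U[v]\phi$; the two arguments are linked by precisely the duality you invoke, since the forward bound the paper uses on $U[v]\varphi$ is the map whose adjoint you take. Your version isolates a reusable perturbed analogue of inequalities \eqref{bounded-a}--\eqref{bounded-b} of \autoref{lemma-Q} and correctly identifies why the Kato-inequality route of that lemma is unavailable for $U[v]$; the paper's version never has to identify $(L^{q,\theta})'$ or discuss adjoints into mixed-norm spaces, using only H\"older and the forward Strichartz estimate, and as a by-product it shows $w\psi[v]\in L^{2,1}$, slightly more than membership in $\X'$. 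Both routes deliver $(1+C_v)^2\,T^*\|w\|_\V\|\psi_0\|_2$ on the nose.
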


\begin{proof}
We start with the definition of the Fréchet derivative using the $H[v]$-interaction picture as in \eqref{gateaux-deriv-int} and by applying Minkowski's inequality. The transformation with the evolution operator $U([v],t,0)$ does not affect the $L^2$-norm, so we have $\sup_{t\in [0,T]}\|\delta\psi([v;w],t)\|_{2} = \sup_{t\in [0,T]}\|\delta\hat\psi([v;w],t)\|_{2}$.
\begin{align*}
\sup_{t\in [0,T]}\|\delta\psi([v;w],t)\|_{2} &= \sup_{t\in[0,T]} \left\| \int_0^t \hat{w}(s) \psi_0 \d s\right\|_2 \\
&\leq \int_0^T \|\hat{w}(s) \psi_0\|_2 \d s = \|\hat{w} \psi_0\|_{2,1}
\end{align*}
Next we apply the topological duality of $L^{2,\infty}$-$L^{2,1}$ with the time-space inner product $(\cdot,\cdot)$ to saturate the Hölder inequality with a special $\varphi \in L^{2,\infty} \subset \X$. Note that we write $\|\cdot\|_{2,\infty}$ and $\|\cdot\|_{2,1}$ for the associated norms now, not to be confused with the Sobolev norms.
\begin{equation}\label{estimate-duality-1}
|(\varphi,\hat w \psi_0)| = \|\varphi\|_{2,\infty} \cdot \|\hat w\psi_0\|_{2,1}
\end{equation}
Similarly we get by $\X$-$\X'$ duality and Hölder's inequality after substituting back the transformed $\hat w$ and moving one $U[v]$ to the left side of the inner product
\begin{equation}\label{estimate-duality-2}
|(\varphi,\hat w \psi_0)| = |(U[v]\varphi,w \psi[v])| \leq \|U[v]\varphi\|_\X \cdot \|w\psi[v]\|_{\X'}.
\end{equation}
Our main aim will be to get an estimate for the right hand side of \eqref{estimate-duality-2} which in return yields an inequality for $\|\delta\psi[v;w]\|_{2,\infty}$ over \eqref{estimate-duality-1}.
First we considers the term $\|U[v]\varphi\|_\X$ which has to be treated carefully, because it involves the time-dependent evolution of an also time-dependent trajectory, i.e., $t \mapsto U([v],t,0)\varphi(t)$. But we easily have
\[
\|U([v],t,0)\varphi(t)\|_q \leq \sup_{s \in [0,T]} \|U([v],t,0)\varphi(s)\|_q
\]
and thus
\[
\|U[v]\varphi\|_\X = \|\varphi\|_{2,\infty} + \|U[v]\varphi\|_{q,\theta} \leq \|\varphi\|_{2,\infty} + \sup_{s \in [0,T]} \|U[v]\varphi(s)\|_{q,\theta}.
\]
The Strichartz estimate from \autoref{Cv-strichartz} gives us
\[
\|U[v]\varphi(s)\|_{q,\theta} \leq C_v \|\varphi(s)\|_2
\]
and we have in combination
\begin{equation}\label{estimate-strichartz-3}
\|U[v]\varphi\|_\X \leq \|\varphi\|_{2,\infty} + C_v \sup_{s \in [0,T]} \|\varphi(s)\|_2 = (1+C_v) \|\varphi\|_{2,\infty}.
\end{equation}
The final term is $\|w\psi[v]\|_{\X'}$ from \eqref{estimate-duality-2} which is treated with \autoref{lemma-mult-op} for estimating the action of the multiplication operator $w$ and then a second time with the Strichartz estimate from \autoref{Cv-strichartz}.
\begin{equation}\label{estimate-strichartz-4}
\|w\psi[v]\|_{\X'} \leq T^* \|w\|_\V \|\psi[v]\|_\X \leq T^* \|w\|_\V \cdot (1+C_v)\|\psi_0\|_2
\end{equation}
We are now able to put \eqref{estimate-duality-1} and \eqref{estimate-duality-2} together with the estimates \eqref{estimate-strichartz-3} (where $\|\varphi\|_{2,\infty}$ cancels out) and \eqref{estimate-strichartz-4} above to state the inequality of the corollary.
\end{proof}

\subsection{Variation of bounded observable quantities}
\label{sect-func-diff-bounded}

We want to investigate the functional differentiability of the expectation value of observables. Consider the expectation value of a time-independent, self-adjoint, bounded operator $A : \H \rightarrow \H$ for a fixed initial state $\psi_0$ at time $t \in [0,T]$.
\[
\langle A \rangle_{[v]}(t) = \langle \psi([v],t) ,A \psi([v],t) \rangle
\]
Using the product rule for functional variations of potentials and switching to the $H[v]$-interaction picture once more we get the following from \eqref{gateaux-deriv-int} and $\hat\psi([v],t) = \psi_0$. (Note: The inner product is antilinear in the first component.)
\begin{equation}\label{eq-kubo}
\begin{aligned}
\delta \langle A \rangle_{[v;w]}(t) &= \langle \delta \psi([v;w],t), A \psi([v],t) \rangle + c.c. \\[0.5em]
&= \langle \delta \hat\psi([v;w],t), \hat A(t) \hat\psi([v],t) \rangle + c.c. \\
&= \i \int_0^t \langle \hat w(s) \psi_0, \hat A(t) \psi_0 \rangle \d s + c.c. \\
&= \i\int_0^t \langle [\hat{w}(s), \hat{A}(t)] \rangle_0 \d s
\end{aligned}
\end{equation}
This is exactly the Kubo formula of first order perturbations central to linear response theory touched upon in \autoref{sect-lrt}. Note especially that $\hat A(t)$ gets time-dependent because of the $H[v]$-interaction picture transformation with $U([v],t,0)$. We have the following estimate for the original $\delta A$ and arbitrary, finite times $T>0$ using the CSB inequality and the operator norm of $A$.
\begin{align*}
| \delta \langle A \rangle_{[v;w]}(t) | &\leq \sup_{t \in [0,T]} | \delta \langle A \rangle_{[v;w]}(t) | \\
&\leq 2\,\sup_{t \in [0,T]} | \langle \delta \psi([v;w],t), A \psi([v],t) \rangle |  \\
&\leq 2 \,\sup_{t \in [0,T]} \|\delta \psi([v;w],t)\|_2 \cdot \|A\psi([v],t) \|_2 \\
&\leq 2 \sup_{t\in [0,T]} \|\delta\psi([v;w],t)\|_{2,\infty} \cdot \|A\| \cdot \| \psi_0 \|_2
\end{align*}
At this point one can employ the estimate from \autoref{Psi-frechet-estimate} or alternatively from \autoref{cor-estimate-delta-psi} or \autoref{cor-estimate-delta-psi-2} with $m=0$ to finally achieve an estimate in terms of the involved potentials.

\subsection{Variation of unbounded observable quantities}
\label{sect-func-diff-unbounded}

To make similar estimates as in the section above for an unbounded operator $A$ we have to demand $\psi([v],t), \delta \psi([v;w],t) \in D(A)$ for all times $t \in [0,T]$. This means the potentials have to be from an appropriate Banach space to stabilise the trajectories within $D(A)$ as well as allowing Fréchet differentiation with respect to a trajectory space including $D(A)$ in the time slices. Typically those domains $D(A)$ will be Sobolev spaces.

In the stepwise static approximations setting we can guarantee such a regularity of Sobolev class right away if we limit ourselves to suitable Sobolev--Kato--Lipschitz spaces for the potentials as was demonstrated in \autoref{th-rs-frechet}. In the successive substitutions setting \autoref{prop-stabilising-set} gives a possible hint and a special result for $H^2$-regularity is presented in \autoref{def-XH2} and \autoref{def-VH2} below.

\begin{proposition}\label{prop-stabilising-set}
Let $\X_X$ be a reduced trajectory space $\X_X \subset \X$ that includes the time slices $X \subset \H$, i.e., $\varphi \in \X_X$ has $\varphi(t) \in X$ for all $t \in [0,T]$, and $\V_X \subset \V$ a subspace. Further let $U_0 : X \rightarrow \X_X$, i.e., the free evolution stabilises $X$, and $Q_v : \X_X \rightarrow \X_X$ be bounded by $\|Q_v\| \leq \varepsilon \|v\|_{\V_X}$ for all $v \in \V_X$, $\varepsilon > 0$ arbitrarily small for sufficiently short time $T>0$. Then $\psi[v],\delta\psi[v;w] \in \X_X$ if $v,w \in \V_X$ and $\psi_0 \in X$.
\end{proposition}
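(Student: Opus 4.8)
The plan is to mimic the proofs of \autoref{Uv-bounded} and \autoref{Psi-frechet-2}, but carried out entirely within the reduced spaces $\X_X$ and $\V_X$ instead of $\X$ and $\V$. Fix $v,w \in \V_X$ and $\psi_0 \in X$. First I would treat a time interval short enough that $\|Q_v\|_{\X_X \to \X_X} \leq \varepsilon \|v\|_{\V_X} < 1$, which is possible by hypothesis. Then the Neumann series \eqref{neumann-series}, $\psi[v] = \sum_{k=0}^\infty Q_v^k U_0\psi_0$, converges absolutely in $\X_X$: indeed $U_0\psi_0 \in \X_X$ because $\psi_0 \in X$ and $U_0 : X \to \X_X$ is bounded by assumption, and each iterate $Q_v^k$ maps $\X_X$ into itself with $\|Q_v^k\| \leq \|Q_v\|^k$. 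Hence $\psi[v] \in \X_X$, and moreover $R_v = (\id - Q_v)^{-1} = \sum_{k} Q_v^k$ is a bounded operator on $\X_X$. The expression $\delta\psi[v;w] = R_v Q_w R_v U_0\psi_0$ derived in the proof of \autoref{Psi-frechet} only uses linearity and boundedness of the $Q$-operators, so it transfers verbatim; since $Q_w : \X_X \to \X_X$ is bounded as well, $\delta\psi[v;w] \in \X_X$.

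For an arbitrary finite $T$ I would partition $[0,T]$ into $M$ equal subintervals $I_1,\ldots,I_M$ of length $\tau = T/M$, with $M$ large enough that the smallness condition above holds on a single interval of length $\tau$. Two observations do the work: (i) the restriction of $v \in \V_X$ to a subinterval (shifted to start at $0$) is again in $\V_X$ with norm no larger than $\|v\|_{\V_X}$, so the single-interval argument applies on each $I_m$; (ii) the endpoint $U([v], m\tau, 0)\psi_0$ of the trajectory obtained on $[0, m\tau]$ lies in $X$, because that trajectory is an element of $\X_X$ and membership in $\X_X$ forces every time slice -- in particular the slice at $t = m\tau$ -- to lie in $X$. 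Feeding this endpoint as a new initial state into the next subinterval and iterating (exactly as in the proof of \autoref{Uv-bounded}), one builds $\psi[v]$ on all of $[0,T]$ as a finite concatenation of pieces each lying in the reduced space on its subinterval; since an $L^{q,\theta}$-type norm over $[0,T]$ decomposes into a finite sum of the subinterval norms, the concatenated trajectory is in $\X_X$. For $\delta\psi[v;w]$ one invokes the product rule from the proof of \autoref{Psi-frechet-2}: at $t \in I_m$, $\delta\psi([v;w],t)$ is a finite sum of terms, each a product of well-defined single-interval evolution operators and exactly one single-interval variation $\delta U$, all bounded as maps into $\X_X$ on their respective subintervals; hence $\delta\psi[v;w] \in \X_X$.

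The main obstacle I anticipate is not any hard estimate but the bookkeeping in the concatenation: one must check that restricting potentials (and trajectories) to subintervals neither leaves $\V_X$ nor spoils the bound $\|Q_v\| \leq \varepsilon\|v\|_{\V_X}$, and that the intermediate states indeed stay in $X$ -- this last point is precisely where the defining property of $\X_X$ (all time slices in $X$) is used, and it is the crux of the argument. Everything else is a transcription of the already-established Neumann-series and product-rule computations with $\X,\V$ replaced throughout by $\X_X,\V_X$.
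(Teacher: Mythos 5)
Your proposal is correct and follows essentially the same route as the paper, which itself only remarks that the proof is a repetition of the arguments of \autoref{Uv-bounded} and \autoref{Psi-frechet-2} with $\X,\V$ exchanged for $\X_X,\V_X$, the hypotheses of the proposition standing in for \autoref{Q-bounded}. Your fleshed-out version, including the observation that the intermediate states at the subinterval endpoints lie in $X$ because all time slices of $\X_X$-elements do, is exactly the intended argument.
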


The proof would just be a repetition of the proofs in \autoref{sect-full-int-pic} and \autoref{sect-func-diff-successive} above with $\X,\V$ exchanged for $\X_X,\V_X$. This is possible because the assumptions of the proposition readily replace the central \autoref{Q-bounded}. Note that we expect such spaces $\X_X$ to be accompanied by a formal dual $\X'_X \supset \V_X \cdot \X_X$ to allow for estimates as in \autoref{lemma-mult-op}.

One important case is the Laplace operator with $X=D(\Delta) = H^2$ if $\Omega=\R^d$, a Sobolev space which is also treated in \citeasnoun{yajima} but with an approach slightly different to the one outlined in the proposition above. Such a trajectory surely has finite kinetic energy, associated with the state space $H^1$, at all times. We will just repeat the definition of the corresponding spaces from \citeasnoun{yajima} to be able to compare them to spaces with similar properties derived in \autoref{th-strong-dynamics}.

\begin{definition}\label{def-XH2}
A Banach space of quantum trajectories with $H^2$-regularity at every time instant is given by
\begin{align*}
\X_{H^2} = \{ \varphi \in \Cont^0([0,T],H^2) \mid \partial_t \varphi \in \X \}, \\
\|\varphi\|_{\X_{H^2}} = \sup_{t \in [0,T]} \|\varphi(t)\|_{H^2} + \|\partial_t \varphi\|_\X.
\end{align*}
\end{definition}

The corresponding set of allowed potentials for stability of the state space $\X_{H^2}$ under evolution by the Schrödinger equation is given as follows.

\begin{definition}\label{def-VH2}
The conditions on the exponents for $\V_{H^2}$ read as follows (note that $p, \alpha$ are still linked to $q,\theta$ by \autoref{def-V}).
\begin{align*}
&\tilde{p} = \max\{ p,2 \} \\
&p_1 = 2np/(n+4p) &\mathrm{if}\; n \geq 5 \\
&p_1 > 2p/(p+1) &\mathrm{if}\; n = 4 \\
&p_1 = 2p/(p+1) &\mathrm{if}\; n \leq 3 \\
&\alpha_1 > 4p/(4p-n)
\end{align*}
The corresponding Banach space is then defined as
\[
\V_{H^2} = \{ v \in \Cont^0([0,T],L^{\tilde{p}}) + \Cont^0([0,T],L^\infty) \mid \partial_t v \in L^{p_1,\alpha_1} + L^{\infty,\beta} \}.
\]
\end{definition}

The proof uses an approximation technique for potentials and trajectories and thus does not need to rely on all the assumptions of \autoref{prop-stabilising-set}. Since the space $\X_{H^2}$ which takes the role of $\X$ as the space of trajectories incorporates the Laplace domain $H^2$ the necessary conditions for a well-defined variation $\delta \langle \Delta \rangle$ are met.

The main reason why such regularity results have been originally studied by us is the special structure of the internal forces term that will appear in the main equation of TDDFT, see \autoref{sect-internal-forces} for a definition. It includes $4^\mathrm{th}$ order spatial derivatives of the wave function. Like the density if formulated as an expectation value it is \emph{not} an operator but rather an operator-valued distribution. In any case we must restrict the potentials in such a way that the quantum trajectory lies in the domain of the observable under consideration. Further we may want to use the functional variation of this internal forces term and thus also $\delta \psi$ has to be in the domain of the observable and allow for $4^\mathrm{th}$ order spatial derivatives. The best result at hand to guarantee this is \autoref{th-rs-frechet} and the whole discussion is carried out in \autoref{sect-q-mapping-frechet}.

To this end the following lemma showing commutativity of Fréchet derivative $\delta$ and spatial (weak) derivative $\partial$ in any direction will later prove valuable.

\begin{lemma}\label{lemma-permut-delta}
For $\psi_0 \in H^1$ and $v,w$ from a space that guarantees $\psi([v],t)$, $\delta\psi([v;w],t) \in H^1$ for all $t \in [0,T]$ it holds $\partial \delta \psi[v;w] = \delta \partial \psi[v;w]$ where we naturally define the spatial derivative $\partial\psi[v] = (\partial\psi)[v] = \partial(\psi[v])$.
\end{lemma}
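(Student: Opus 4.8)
The plan is to obtain the commutation $\partial\,\delta\psi[v;w] = \delta\,\partial\psi[v;w]$ from the explicit integral representation \eqref{eq-gateaux-deriv-2} of the G\^ateaux/Fr\'echet derivative, namely
\[
\delta\psi([v;w],t) = -\i\int_0^t U([v],t,s)\,w(s)\,\psi([v],s)\,\d s,
\]
by differentiating under the integral sign with respect to a spatial coordinate and then recognising the result as the derivative-trajectory's own variation formula. First I would fix a coordinate direction and write $\partial$ for the corresponding weak spatial derivative. The key structural fact needed is that $\partial$ commutes with the evolution operator $U([v],t,s)$ in the appropriate sense: since $\psi([v],t)$ solves the Schr\"odinger equation with $H[v]=H_0+v$, its spatial derivative $\partial\psi[v]$ solves the Schr\"odinger equation with inhomogeneity coming from $(\partial v)\psi[v]$, i.e.\ $\i\partial_t(\partial\psi) = H[v](\partial\psi) + (\partial v)\psi$, which is the Duhamel relation already discussed around \eqref{eq-inhom-se-sol}. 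The hypotheses on $v,w$ (guaranteeing $\psi([v],t),\delta\psi([v;w],t)\in H^1$ for all $t$, e.g.\ the Sobolev--Kato--Lipschitz conditions of \autoref{th-rs-frechet} with $m=0$ or the $H^2$-regularity of \autoref{def-XH2}/\autoref{def-VH2}) ensure that all the objects below lie in $L^2$ at every time, so that the manipulations are legitimate and the stated identity $\partial\psi[v]=(\partial\psi)[v]=\partial(\psi[v])$ is meaningful.

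Next I would justify interchanging $\partial$ with the time integral in the formula for $\delta\psi$. This is a standard dominated-convergence argument once one knows the integrand is jointly continuous (in the relevant topology) and that $\partial$ acts continuously $H^2\to L^2$ on the time slices; the regularity assumptions on $v,w,\psi_0$ (in particular $\psi_0\in H^1$, and enough to keep $\psi([v],s)\in H^1$ and $w(s)\psi([v],s)$ controllable) provide exactly the needed bounds uniformly in $s\in[0,t]$. Applying $\partial$ inside and using the Leibniz rule $\partial(w(s)\psi([v],s)) = (\partial w(s))\psi([v],s) + w(s)\,\partial\psi([v],s)$ together with the commutation of $\partial$ through $U([v],t,s)$ (picking up an extra inhomogeneous term built from $\partial v$ via Duhamel), I would collect the terms. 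On the other hand, I would write down $\delta(\partial\psi)[v;w]$ directly by applying the variation formula \eqref{eq-gateaux-deriv-2} to the trajectory $\partial\psi[v]$ — which, being governed by the inhomogeneous equation above, has a variation with respect to $v\mapsto v+\lambda w$ that also produces a term $(\partial w)\psi[v]$ plus $w\,\partial\psi[v]$ transported by $U([v],t,s)$. Comparing the two collections of terms shows they coincide.

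The cleanest route, and the one I would actually write, avoids re-deriving variations of inhomogeneous equations: since $\delta$ is the Fr\'echet derivative of a map $U[\cdot]\psi_0$ that is already known to be $C^1$ into a trajectory space whose time slices embed into $H^1$ (by \autoref{th-rs-frechet}, resp.\ the analogue in the successive-substitutions picture), and since the spatial derivative $\partial: H^{2m}\to H^{2(m-1)}$ — or more simply $\partial: X\to Y$ between the relevant Sobolev trajectory spaces — is a \emph{bounded linear} operator, the composition $\partial\circ(U[\cdot]\psi_0)$ is Fr\'echet differentiable with derivative $\partial\circ\delta(U[\cdot]\psi_0)$ by the chain rule for the composition of a $C^1$ map with a bounded linear map. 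But that composition equals $(\partial\psi)[\cdot]$, whose Fr\'echet derivative is by definition $\delta(\partial\psi)[v;\cdot]$; hence $\partial\,\delta\psi[v;w] = \delta\,\partial\psi[v;w]$. I expect the main obstacle to be purely bookkeeping: checking that the hypothesis ``$v,w$ from a space that guarantees $\psi([v],t),\delta\psi([v;w],t)\in H^1$'' really does place us inside the scope of one of the earlier differentiability theorems with exactly the right loss-of-regularity budget (two Sobolev orders traded for the spatial derivative, matching \eqref{eq-rs-regularity}), so that the bounded linear map $\partial$ lands in the correct target trajectory space; once that is pinned down, the chain-rule argument is immediate and the more computational Duhamel-based verification can be relegated to a remark.
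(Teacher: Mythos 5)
Your proposal reaches the right identity but by a genuinely different and heavier route than the paper. The paper's proof is a short distributional argument: pair $\delta\partial\psi[v;w]$ with a test function $\varphi\in\Cont_0^\infty(\Omega^N)$, pull the G\^ateaux limit out of the inner product, integrate by parts to throw $\partial$ onto $\varphi$ (legitimate because each $\psi[v+\lambda w]$ lies in $H^1$), and pull the limit back in using that the strong limit $\delta\psi[v;w]$ exists in the ambient topology; the resulting identity $\langle\varphi,\delta\partial\psi\rangle=-\langle\partial\varphi,\delta\psi\rangle$ is precisely the statement $\partial\delta\psi=\delta\partial\psi$ in the sense of weak derivatives, and it uses only the hypotheses as literally stated. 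Your preferred chain-rule route instead needs the difference quotient $\lambda^{-1}(\psi[v+\lambda w]-\psi[v])$ to converge in an $H^1$-valued (or higher) trajectory space, so that the bounded linear map $\partial\colon H^{2m}\to H^{2m-1}$ can be commuted with the limit; mere membership of the limit and of the individual trajectories in $H^1$ does not give this, so you are implicitly strengthening the hypothesis to Fr\'echet differentiability in the finer topology (which \autoref{th-rs-frechet} does supply for suitably regular $v,w$, but which the lemma does not assume). Your Duhamel route has the same flavour and additionally requires $\partial v$ and $\partial w$ to be controlled, which is not assumed either. What your approach buys is an explicit, formula-level verification that slots into the later estimates; what the paper's buys is validity under the minimal hypotheses and brevity. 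If you keep the chain-rule argument, state explicitly that you are invoking differentiability of $U[\cdot]\psi_0$ into $L^\infty([0,T],H^{2m})$ with $m\geq 1$; otherwise the test-function computation is the one to write.
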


\begin{proof}
Take $\varphi \in \mathcal{D}(\Omega^N)$ a test function and rewrite the functional derivative in $\delta \partial \psi[v;w]$ as a limit. This expression is taken at each time $t \in [0,T]$ with the time variable suppressed.
\begin{align*}
\langle \varphi,\delta \partial \psi[v;w] \rangle &= \langle \varphi, \lim_{\lambda \rightarrow 0}\frac{1}{\lambda} (\partial\psi[v+\lambda w]-\partial\psi[v]) \rangle \\
&= \lim_{\lambda \rightarrow 0}\frac{1}{\lambda} \langle \varphi,\partial\psi[v+\lambda w]-\partial\psi[v] \rangle \\
&= -\lim_{\lambda \rightarrow 0}\frac{1}{\lambda} \langle \partial\varphi,\psi[v+\lambda w]-\psi[v] \rangle \\
&= - \langle \partial\varphi,\lim_{\lambda \rightarrow 0}\frac{1}{\lambda} (\psi[v+\lambda w]-\psi[v]) \rangle = \langle \varphi, \partial\delta \psi[v;w] \rangle
\end{align*}
The integration by parts without boundary terms is due to the vanishing of the test function at the border. The last limit above is actually the weak limit but it is equivalent to the limit in the Banach space topology since this exists by assumption of Fréchet differentiability. We are now able to identify  $\partial \delta \psi[v;w] = \delta \partial \psi[v;w]$.
\end{proof}

\subsection{Variation of the density}
\label{sect-variation-density}

Another important quantity though no proper observable is the (one-particle) density. We adopt the notation $x=x_1$, $\bar{x} = (x_2,\ldots,x_N)$, $\ushort{x} = (x,\bar{x}) = (x_1,\ldots,x_N)$, $\bar{\Omega}=\Omega^{N-1}$, and $\nabla$ only acting on the first particle position $x_1$. For spatially (anti-)symmetric trajectories $\psi[v] \in \Cont^0([0,T], \H) \supset \X$ the density is defined as
\[
n([v],t,x) = N \int_{\bar{\Omega}}  \d \bar{x}\, |\psi([v],t,x,\bar{x})|^2.
\]
Within our framework it is now natural to ask for the Fréchet derivative $\delta n[v;w]$. The necessary continuity property automatically translates from $\delta\psi$ because all involved operations are continuous themselves. As in \eqref{eq-kubo} we get
\[
\delta n([v;w],t,x) = N \int_{\bar{\Omega}} \d \bar{x}\, \psi^*([v],t,x,\bar{x}) \delta\psi([v;w],t,x,\bar{x}) + c.c.
\]
An estimate can now be easily established with \autoref{Psi-frechet-estimate} or alternatively with \autoref{cor-estimate-delta-psi} or \autoref{cor-estimate-delta-psi-2} setting $m=0$.
\begin{align*}
\sup_{t \in [0,T]} \|\delta n([v;w],t)\|_1 &= \sup_{t \in [0,T]} \int_{\Omega} \d x \left| \delta n([v,w],t,x) \right| \\
&\leq 2 N \sup_{t \in [0,T]} \langle |\psi([v],t)|, |\delta\psi([v;w],t)| \rangle \\
&\leq 2 N \sup_{t \in [0,T]} \|\delta\psi([v;w],t)\|_2 \cdot \|\psi_0\|_2
\end{align*}

To establish a more explicit connection to physics and standard non-equilibrium density-response theory \cite{stefanucci-vanleeuwen} we
consider only symmetric one-body perturbations $w \in \V$ of the form $w(t,\ushort{x})=\sum_{k=1}^{N}w(t,x_k)$ (using the same symbol twice). Furthermore we adopt the usual tacit assumption that the unitary evolution operator $U([v],t,s)$ can be represented by an integral transformation with an integral kernel (the so-called propagator) of the form $U([v],t,x,\bar{x};s,y,\bar{y})$. Then the functional derivative can be rewritten as
\[
 \delta n([v;w],t,x) = \int_{0}^{t} \d s \int_{\Omega} \d y \, \chi([v],t,x;s,y) \, w(s,y),
\]
where the kernel is defined by 
\begin{equation}\label{eq-response-kernel}
\begin{aligned}
& \chi([v],t,x;s,y) \\
& = -\i N^2 \! \int\limits_{\bar\Omega^{2}} \d \bar{x}  \d \bar{y}\, \psi^*([v],t,x,\bar{x}) U([v],t,x,\bar{x};s,y,\bar{y}) \psi([v],s,y,\bar{y}) + c.c.
\end{aligned}
\end{equation}
This linear-response kernel is further studied in \autoref{sect-lrt}.

The potential-density mapping $n[v]$ obviously plays a fundamental role in density functional theory where the core problem is to show its invertability in an adequate Banach space setting. This leads us to the question how its range, the set of all so-called $v$-representable densities, $n[\V]$ looks like. We showed that $n[\cdot]$ is a Fréchet differentiable mapping from the Banach space $\V$ to $\Cont^0([0,T],L^1(\Omega))$, the base space of densities. If for some potential $v$ we have $\delta n[v,\cdot]$ as a bounded linear isomorphism then $n[\cdot]$ is locally invertible at $v$ by the inverse function theorem. Such a proof of invertability is given in \citeasnoun[3.2]{van-leeuwen2} for switch-on potentials that are Laplace-transformable by showing $\ker \delta n[v,\cdot] = \{ w \mid w(t,\ushort{x}) = C(t) \}$, the equivalence class of zero in the framework of the Runge--Gross theorem (\autoref{runge-gross-th}).

\subsection{An inhomogeneous Schrödinger equation for $\delta \psi$}
\label{sect-inhom-schro-derivative}

Here we show that one can derive an inhomogeneous Schrödinger equation for $\delta\psi[v; w]$ with initial value $0$. We might then employ a theorem for inhomogeneous initial value problems in the hyperbolic case from \citeasnoun{pazy} to show $\delta \psi([v;w],t) \in H^2$ for special classes of potentials, thus giving one more alternative approach towards the regularity results in \autoref{th-rs-frechet} and \autoref{prop-stabilising-set}.

We use the implicit definition of the functional derivative $\delta\psi$ \eqref{eq-gateaux-order} and write 
\[
\psi[v + \varepsilon w] \in \psi[v] + \varepsilon\,\delta\psi[v; w] + o(\varepsilon),
\]
to put this into the Schrödinger equation
\[
\i \partial_t \psi[v + \varepsilon w] = H[v + \varepsilon w] \psi[v + \varepsilon w]
\]
while we split the Hamiltonian with $H[v + \varepsilon w] = H[v] + \varepsilon w(t)$. Thus we get the ordinary Schrödinger equation for $\psi[v]$ which cancels itself and we are left with the following.
\[
\i \partial_t \delta\psi[v; w] \in H[v] \delta\psi[v; w] + w\psi[v] + o(\varepsilon)
\]
In the (strong) limit $\varepsilon \rightarrow 0$ an inhomogeneous Schrödinger equation for $\delta\psi[v; w]$ remains.
\begin{equation}\label{inhom-schro-eq}
\i \partial_t \delta\psi[v; w] = H[v] \delta\psi[v; w] + w \psi[v]
\end{equation}

But pay attention that the such defined functional derivative is not automatically continuous in the first argument, thus not a Fréchet derivative but rather only a Gâteaux derivative. To ensure Fréchet differentiability we can still resort to \autoref{th-rs-frechet}. By using \citeasnoun[ch.~5, Th.~5.3]{pazy} already mentioned after \autoref{lemma-Uv-smoothing} we really have the desired $\delta\psi([v; w],t)$ as a (unique) strong solution to the equation above given once more by \eqref{eq-gateaux-deriv-1}. The critical condition is that the inhomogeneity $w \psi[v]$ is in $\Cont^1([0,T],\H)$. A similar situation can be found in \citeasnoun[VIII.5]{lions-book} but with an $L^2$ condition in time.

Let us see what the critical condition tells us in the setting of the stepwise static approximation.
\[
\partial_t (w \psi[v])  = \dot w  \psi[v]  + w \dot \psi[v] \in \Cont^0([0,T],\H)
\]
Now we need $\psi([v],t) \in H^4$ to still have $\dot\psi([v],t) \in H^2$ by the Schrödinger equation. To achieve this degree of regularity for the state at all times we have to demand $v \in \Lip([0,T],W^{2,\Sigma})$ by \autoref{th-sobolev-regularity}. If further $w \in \Cont^1([0,T],\Sigma)$ the conditions for the existence of a unique solution to \eqref{inhom-schro-eq} are met by using the estimate \autoref{lemma-sum-space-inequality}. This shows the equation has a strong solution $\delta\psi([v; w],t) \in H^2$ and can thus be taken as an alternative regularity result for Gâteaux derivatives but with an additional $\Cont^1$ condition on the potential. Such a condition already showed up in a similar case noted at the end of \autoref{sect-existence-stepwise-static} and in the estimate for the functional derivative \autoref{cor-estimate-delta-psi-2}.

Next we study \eqref{inhom-schro-eq} in the stepwise static approximation method, that is the setting of \citeasnoun{yajima}. Here the encompassing space for the Schrödinger equation is $\X'=L^{2,1} + L^{q',\theta'}$ from \autoref{def-X} and in an analogy to the criterion above we want to test
\[
\partial_t(w \psi[v]) = \dot w \psi[v] + w \dot \psi[v] \in \X'.
\]
We assume $v$ such that $\psi([v],t) \in H^2$ and $\dot \psi[v] \in \X$, demanding exactly the space $\V_{H^2}$ from \autoref{def-VH2}. The estimate for the second term $\|w \dot \psi[v]\|_{\X'}$ is readily given by \autoref{lemma-mult-op} and $\V_{H^2} \subset \V$. The first term will be broken up using $\dot w = \dot w_1 + \dot w_2$ and estimated with the extended Hölder inequality \autoref{hoelder-ineq}. The first part  gets further estimated as in \eqref{eq-lemma-mult-op} with $\beta>1$, the second part with \autoref{lemma-ab-inequ}.
\begin{align*}
\|\dot w_1 \psi[v]\|_{2,1} &\leq \|\dot w_1\|_{\infty,1} \|\psi[v]\|_{2,\infty} \lesssim \|\dot w_1\|_{\infty,\beta} \|\psi[v]\|_{2,\infty} \\
\|\dot w_2 \psi[v]\|_{q',\theta'} &\leq \|\dot w_2\|_{q',\theta'} \, \|\psi[v]\|_{\infty,\infty} \\
&\lesssim \|\dot w_2\|_{q',\theta'} \sup_{t \in [0,T]}(\|\Delta \psi([v],t)\|_2 + \|\psi([v],t)\|_2)
\end{align*}

Note that \autoref{lemma-ab-inequ} only holds in dimensionality $n \leq 3$, but it is not our aim to derive a general regularity result here, just to show consistency. If we now derive expressions in terms of $n$ and $p$ using the relations in \autoref{def-X} and \autoref{def-V} for the indices $q',\theta'$ we wondrously arrive exactly at the indices from \autoref{def-VH2} of $\V_{H_2}$.
\begin{align*}
q' &= \frac{2p}{1+p} = p_1 \\
\theta' &= \frac{4p}{4p-n} < \alpha_1
\end{align*}
Note that the last index $\alpha_1$ can always be chosen larger because the time interval is assumed to be bounded and thus $L^{\alpha_1}([0,T]) \subset L^{\theta'}([0,T])$. This can be taken as a demonstration of consistency in the established framework.

\subsection{A relation to Duhamel's principle}
\label{sect-duhamel}

In trying to find an iterative solution to \eqref{inhom-schro-eq} we can follow exactly the scheme developed in \autoref{sect-full-int-pic}. First we set $\psi = U_0 \tilde{\psi}, \delta\psi = U_0 \delta\tilde{\psi}, \tilde{v} = U_0^* v U_0, \tilde{w} = U_0^* w U_0$ and get
\[
\i \partial_t \delta\tilde{\psi}[v; w] = \tilde{v} \delta\tilde{\psi}[v; w] + \tilde{w} \tilde{\psi}[v]
\]
Now integration over the time interval $[0,t]$ recognising $\delta\psi([v;w],0)=0$ and switching back to the Schrödinger picture yields an equation analogous to the mild Schrödinger equation \eqref{schroequ-mild}.
\begin{equation}\label{schroequ-mild-2}
\begin{aligned}
\delta\psi([v;w],t) &= -\i \int_0^t U_0(t-s) (v(s)\delta\psi([v;w],s) + w(s)\psi([v],s)) \d s \\
&= (Q_v \delta\psi[v;w])(t) + (Q_w \psi[v])(t)
\end{aligned}
\end{equation}
In comparing the above equation with the usual mild Schrödinger equation
\[
\psi([v],t) = (Q_v\psi[v])(t)+(U_0\psi_0)(t)
\]
we note that the solution $\delta\psi[v;w]$ to a Schrödinger equation with inhomogeneity $w\psi[v]$ could equally well be interpreted as a Cauchy problem for the homogeneous Schrödinger equation with `initial data' $U_0^* Q_w \psi[v]$. Because of the time dependency this expression is no real candidate for an initial wave function but the procedure reminds of Duhamel's principle for solving inhomogeneous linear evolution equations that we already met in \autoref{sect-full-int-pic}. This states that the solution to \eqref{inhom-schro-eq} is given by
\[
\delta\psi([v; w],t) = -\i \int_0^t U([v],t,s) w(s) \psi([v],s) \d s,
\]
and actually this is exactly the first expression we derived for the variational derivative in equation \eqref{eq-gateaux-deriv-1}. To get a series representation for the variation we can invert equation \eqref{schroequ-mild-2} and put in the Neumann series as in \eqref{neumann-series}.
\begin{align*}
\delta\psi[v;w] &= (\id - Q_v)^{-1} Q_w \psi[v] = (\id - Q_v)^{-1} Q_w (\id - Q_v)^{-1} U_0 \psi_0 \\
&= \sum_{k,l=0}^\infty Q_v^k Q_w Q_v^l U_0 \psi_0
\end{align*}
By just rewriting the double sum we arrive at the expression from the proof of \autoref{Psi-frechet} thereby closing the circle. This means, in reverse, we have also found exact conditions on solubility of inhomogeneous Schrödinger equations of the kind of \eqref{inhom-schro-eq} in the previous sections.

\subsection{Energy estimates from trajectory variations and another existence proof}
\label{sect-energy-estimates}

The goal of this section is to derive estimates on the expectation value of the Hamiltonian. When directly substituted with Schrödinger's equation we get
\[
\langle \psi([v],t), H([v],t) \psi([v],t)\rangle = \langle \psi([v],t), \i\dot \psi([v],t)\rangle
\]
which by no means can be considered to be conserved because of the varying potential $v$. A direct estimate for $\dot{\psi}[v]$ will lead to variational derivatives if we write out the time derivative explicitly and transform it such that a variation with a potential appears.
\[
\dot{\psi}([v],t) = \lim_{h \rightarrow 0} \frac{1}{h} (U([v],t+h,0) - U([v],t,0))\psi_0.
\]
By defining a time-shift operator $T_h v(t) = v(t+h)$, $h \in \R$, we can rewrite the evolution operator $U([v],t+h,0)$ by evolving first over the short interval $[0,h]$ and then again over $[0,t]$ but with a shifted potential $T_h v$.
\[
U([v],t+h,0) = U([T_h v],t,0) U([v],h,0)
\]
After introducing the null term
\[
-U([v],t,0) U([v],h,0) + U([v],t,0) U([v],h,0)
\]
we group the expression inside the limit in the following fashion.
\begin{align*}
\dot{\psi}([v],t) = \lim_{h \rightarrow 0} \frac{1}{h} \big((&U([T_h v],t,0) - U([v],t,0)) U([v],h,0) \\ + &U([v],t,0) (U([v],h,0) -\id)\big)\psi_0
\end{align*}
We expand $T_h v$ into $v + h\frac{1}{h}(T_h v - v)$ in the first term to establish an expression alike the Gâteaux difference quotient. Assuming Fréchet differentiability we have uniform convergence and we might separately evaluate the limits.
\begin{align*}
&\lim_{h \rightarrow 0} \frac{1}{h}(T_h v - v) = \dot{v} \\[0.3em]
&\lim_{h \rightarrow 0} \frac{1}{h} (U([v + h \dot{v}],t,0) - U([v],t,0)) = \delta U([v;\dot{v}],t,0) \\[0.6em]
&\lim_{h \rightarrow 0} U([v],h,0) = \id \\[0.4em]
&\lim_{h \rightarrow 0} \frac{1}{h} (U([v],h,0) -\id) = \partial_t U([v],t,0) \big|_{t=0} = -\i H([v],0)
\end{align*}
This gives us the following variant of the Schrödinger equation
\begin{equation}\label{strange-se}
\i \dot{\psi}([v],t) = H([v],t) \psi([v],t) = \i \delta \psi([v;\dot{v}],t) + U([v],t,0) H([v],0) \psi_0,
\end{equation}
or more elegantly by suppressing the time variable and writing down an equation for whole trajectories
\begin{equation*}
\i \dot{\psi}[v] = H[v] \psi[v] = \i \delta \psi[v;\dot{v}] + U[v] H[v(0)] \psi_0.
\end{equation*}

This has some immediate consequences for the question of strong solutions to the Schrödinger equation. If the right hand side of the equation is well defined, which we will show to be the case in \autoref{th-strong-dynamics} below for $v, \dot{v} \in \V$ and $\psi_0 \in H^2$ in the successive substitutions setting (of course the same strategy could also be followed in the stepwise static approximations setting), we also have a definite time-derivative along the trajectory and thus a $H^2$ trajectory. Note particularly how this seems to yield another (similar, yet simpler) Banach space of potentials guaranteeing $H^2$-regularity like that of \citeasnoun{yajima} given in \autoref{def-VH2}. Such a $\Cont^1$ condition for the potential already appeared in the classic work of \citeasnoun[Th.~6.2]{phillips} and is repeated similarly in \citeasnoun[Lemma 3.2]{yajima}, both times for bounded operators $v\cdot$. To make the statement precise we need a small technical lemma, which is a slight variation of the one given in \citeasnoun{hiriart}.

\begin{lemma}\label{lemma-Lr}
For $1 \leq p < r < q \leq \infty$ and any $\Omega \subseteq \R^n$ measurable we have
\[
L^r(\Omega) \subset L^p(\Omega) + L^q(\Omega).
\]
\end{lemma}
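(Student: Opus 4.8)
The plan is to decompose an arbitrary $f\in L^r(\Omega)$ according to the size of its values: split $f = f\,\mathbf{1}_{\{|f|>1\}} + f\,\mathbf{1}_{\{|f|\leq 1\}} =: f_1 + f_2$. The claim is that $f_1 \in L^p(\Omega)$ and $f_2 \in L^q(\Omega)$, which immediately gives $f = f_1 + f_2 \in L^p(\Omega) + L^q(\Omega)$ and hence the asserted inclusion. This is the natural truncation argument: large values are the obstruction to membership in the \emph{higher} $L^q$ but are tamed by the lower exponent $p$, while small values are the obstruction to membership in the \emph{lower} $L^p$ (on sets of infinite measure) but are harmless for $L^q$.

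For the first piece, on the set $E = \{|f| > 1\}$ we have $|f|^p \leq |f|^r$ pointwise since $p < r$ and $|f| > 1$ there; integrating gives $\|f_1\|_p^p = \int_E |f|^p \leq \int_E |f|^r \leq \|f\|_r^r < \infty$, so $f_1 \in L^p(\Omega)$. For the second piece, on the set $\{|f| \leq 1\}$ we have $|f|^q \leq |f|^r$ since $q > r$ and $|f| \leq 1$ there (with the convention that for $q = \infty$ one simply notes $f_2 \in L^\infty(\Omega)$ directly with $\|f_2\|_\infty \leq 1$); integrating gives $\|f_2\|_q^q \leq \int |f|^r \leq \|f\|_r^r < \infty$, so $f_2 \in L^q(\Omega)$. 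One should handle the endpoint cases $p = 1$ and $q = \infty$ explicitly but they require no new idea: the same pointwise comparisons work, and for $q=\infty$ the bound $|f_2|\le 1$ a.e.\ is all that is needed.

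I do not expect any genuine obstacle here; the statement is elementary and the truncation at level $1$ does everything. The only points requiring a modicum of care are the convention for $q = \infty$ and the observation that the decomposition is into \emph{measurable} functions (immediate, since $\{|f|>1\}$ is measurable whenever $f$ is), so that the sum-space norm $\|f\|_{L^p + L^q} \leq \|f_1\|_p + \|f_2\|_q$ makes sense and is finite. If a quantitative bound is desired, the estimates above yield $\|f\|_{L^p+L^q} \leq \|f\|_r^{r/p} + \|f\|_r^{r/q}$, though the mere inclusion is all the subsequent development (the proof of \autoref{lemma-rs-0}-style estimates and the Kato-class discussion) will require.
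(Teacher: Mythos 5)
Your proof is correct and is essentially identical to the paper's: the same truncation $f = f\,\1_{\{|f|>1\}} + f\,\1_{\{|f|\leq 1\}}$, the same pointwise comparisons $|f|^p \leq |f|^r$ on the large set and $|f|^q \leq |f|^r$ on the small set, and the same direct treatment of $q=\infty$ via $\|f_2\|_\infty \leq 1$. No gaps.
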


\begin{proof}
Let $f \in L^r(\Omega)$ and define the sets $X = \{ x\in\Omega : |f(x)|>1 \}$ and $X^c$ its complement which are well-defined up to null sets. By aid of the characteristic function of these sets we decompose $f = f_1 + f_2$ with $f_1 = f \cdot \1_X$ and $f_2 = f \cdot \1_{X^c}$. We now prove $f_1 \in L^p(\Omega)$.
\[
\|f_1\|_p^p = \int_X |f(x)|^p \d x = \int_X \underbrace{|f(x)|^{p-r}}_{\leq 1} |f(x)|^r \d x \leq \|f\|_r^r < \infty
\]
Clearly $f_2 \in L^q(\Omega)$ can be proved just analogously for $q < \infty$. The case $q = \infty$ is even simpler because of $\esssup{x\in\Omega} |f_2(x)| = \esssup{x\in X^c} |f(x)|$ $\leq 1$. Note that for $|\Omega|<\infty$ we have the result $L^r(\Omega) \subset L^p(\Omega)$ by Hölder's inequality which implies the result above.
\end{proof}

\begin{theorem}\label{th-strong-dynamics}
For $v, \dot{v} \in \V$, $p \geq 2$ in the definition of $\V$, and $\psi_0 \in H^2$ we get a strong solution to Schrödinger's equation up to an arbitrary albeit finite time $T$.
\end{theorem}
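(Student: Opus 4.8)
The plan is to extract the strong solution from the identity \eqref{strange-se} derived above. Writing out the time derivative of $\psi[v]=U[v]\psi_0$ as a difference quotient and using that $\psi_0\in H^2=D(H([v],0))$ together with the Fréchet differentiability of $\psi[\,\cdot\,]$ (\autoref{Psi-frechet-2}, which is available for all $v,\dot v\in\V$), one obtains that $\dot\psi([v],t)$ exists as a strong limit and
\[
\dot\psi([v],t)=\delta\psi([v;\dot v],t)-\i\,U([v],t,0)\,H([v],0)\,\psi_0 .
\]
So the first task is to check that this right-hand side is a well-defined, time-continuous $\H$-valued map. For the second summand: since $p\ge 2$, \autoref{lemma-Lr} gives $v(0)\in L^p+L^\infty\subseteq L^2+L^\infty$, which — together with $p>n/2$, forced by $v\in\V$ (cf.~the remark after \autoref{def-V}) — makes $v(0)$ an $H_0$-bounded perturbation with relative bound zero, so $H([v],0)=H_0+v(0)$ is self-adjoint on $D(H_0)=H^2$ (Kato--Rellich, \autoref{th-kato-rellich}, cf.~\autoref{th-kato}); hence $H([v],0)\psi_0\in\H$ and \autoref{Uv-bounded} turns it into a trajectory $U[v]\,H([v],0)\psi_0\in\X\subseteq\Cont^0([0,T],\H)$. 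For the first summand, $\delta\psi[v;\dot v]\in\X$ is a genuine Fréchet derivative by \autoref{Psi-frechet-2}, in particular continuous in $t$ with values in $\H$. Consequently $\psi[v]\in\Cont^1([0,T],\H)$.

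Since $\psi[v]$ is simultaneously a mild solution of \eqref{schroequ-mild} and continuously differentiable, the general principle that a continuously differentiable mild solution is a strong solution (stated for the time-independent case right after \autoref{def-mild-solution}, and valid here because all the generators $H([v],t)=H_0+v(t)$ share the common domain $H^2$; cf.~\citeasnoun[ch.~5]{pazy}) yields $\psi([v],t)\in H^2$ for every $t$ and $\i\dot\psi([v],t)=H([v],t)\psi([v],t)$. As none of the ingredients — \autoref{Uv-bounded}, \autoref{Psi-frechet-2}, and, for the accompanying norm bound, \autoref{Psi-frechet-estimate} — imposes a smallness condition on $T$, the argument runs up to an arbitrary finite $T$.

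The main obstacle I expect is making this last upgrade rigorous. A priori $v(t)\psi([v],t)$ is only controlled in the time slice of $\X'=L^{2,1}+L^{q',\theta'}$ by \autoref{lemma-mult-op}, not pointwise in $\H$, so one cannot simply read off $H_0\psi([v],t)=\i\dot\psi([v],t)-v(t)\psi([v],t)\in\H$: the relative-bound estimate for $v(t)$ needs $\psi([v],t)\in H^2$ in order to even be written down. The natural remedy is a regularisation argument in the spirit of \citeasnoun{yajima}: approximate $v,\dot v$ by potentials in the smaller space $\V_{H^2}$ of \autoref{def-VH2} and $\psi_0$ by $H^2$-functions, for which the corresponding solutions are known to lie in $\X_{H^2}$ (\autoref{def-XH2}); derive from the formula above for $\dot\psi$ together with the uniform $H_0$-relative bound of $v(t)$ an estimate for $\sup_{t\in[0,T]}\|\psi([v^{(\varepsilon)}],t)\|_{2,2}$ depending only on $\|v^{(\varepsilon)}\|_\V$, $\|\dot v^{(\varepsilon)}\|_\V$ and $\|\psi_0\|_{2,2}$ (using \autoref{Psi-frechet-estimate} to bound $\|\delta\psi[v^{(\varepsilon)};\dot v^{(\varepsilon)}]\|_{2,\infty}$ and \autoref{Uv-bounded} for the other term); and pass to the limit, weak-$*$ in $L^\infty([0,T],H^2)$ and strong in $\X$, using the continuity of $\psi[\,\cdot\,]$ and of $\delta\psi[\,\cdot\,;\,\cdot\,]$. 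The hypothesis $p\ge 2$ is used only through \autoref{lemma-Lr}, to make $v(0)$ a legitimate perturbation of $H_0$ on $H^2$, which is exactly what the statement needs.
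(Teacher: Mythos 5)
Your first two paragraphs reproduce the paper's proof almost verbatim: the paper also starts from the identity \eqref{strange-se}, uses $p\geq 2$ together with \autoref{lemma-Lr} to place $v(0)$ in $L^2+L^\infty$, invokes Kato's theorem (\autoref{th-kato}) to get $H([v],0)\psi_0\in\H$, and cites \autoref{Psi-frechet-2} for $\delta\psi[v;\dot v]$, concluding that the right-hand side is well defined and continuous and hence that $\psi[v]$ is a strong solution. The paper stops exactly where you raise your ``main obstacle'': it contains neither the mild-to-strong upgrade via a common domain nor the regularisation through $\V_{H^2}$ that you sketch in your final paragraph, so that material is an addition of yours rather than a reconstruction of the paper's argument --- and your observation that $v(t)\psi([v],t)$ is a priori only controlled in the time slices of $\X'$ is a fair critique of the published proof, not a defect of your own.
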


\begin{proof}
By the equation derived above
\[
\i \dot{\psi}[v] = \i \delta \psi[v;\dot{v}] + U[v] H[v(0)] \psi_0.
\]
we have a strong solution to the Schrödinger equation if the right hand side is well-defined and continuous, making $\psi[v]$ continuously differentiable in time. \autoref{Psi-frechet-2} tells us we get a well-defined Fréchet derivative $\delta \psi[v;\dot{v}]$ if $v,\dot{v} \in \V$. The last term includes $(H_0 + v(0))\psi_0$ with $v(0) \in L^p + L^\infty$. The assumption $p \geq 2$ is very reasonable because of $p > \frac{n}{2}$ for dimensionality $n \geq 3$ like explained in \autoref{sect-banach-spaces}. Now $L^p \subset L^2 + L^\infty$ due to \autoref{lemma-Lr} and thus $v(0) \in L^2 + L^\infty$. Now by virtue of \autoref{th-kato} (Kato's theorem) $H_0 + v(0)$ is self-adjoint on $H^2$ and thus the whole right hand side is an element in $L^2$. $U([v],t,0)$ being strongly continuous in time guarantees continuity.
\end{proof}

Note that this is not a circular reasoning in the successive substitutions setting because the condition $v \in \V$ only proves existence of generalised solutions that are made strong with the additional requirement that $\dot v \in \V$, similar to the condition put forward in \autoref{def-VH2} or \autoref{sect-inhom-schro-derivative}.

By using the time-differentiation rules for $U([v],t,s)$ with arbitrary initial time $s$ and directly relating the unitary evolutions independent of the initial state $\psi_0$ an alternative, elegant form of the equation \eqref{strange-se} above would be
\[
(\partial_t - \partial_s) U([v],t,s) = \delta U([v;\dot{v}],t,s).
\]

An energy estimate at time $t$ can be derived from the expectation value of the Hamiltonian together with \eqref{strange-se}.
\begin{align*}
E(t) &= \langle \psi([v],t), H([v],t) \psi([v],t) \rangle \\
&= \i \langle \psi([v],t), \delta \psi([v;\dot{v}],t) \rangle
+ \langle \psi([v],t), U([v],t,0) H([v],0) \psi_0 \rangle
\end{align*}
If we use $\langle \psi([v],t), U([v],t,0) H([v],0) \psi_0 \rangle = \langle \psi_0, H([v],0) \psi_0 \rangle = E(0)$ and move this term to the other side, we just have the energy difference at times $t$ and 0 respectively. We get an estimate almost directly from the CSB inequality and \autoref{Psi-frechet-estimate} (which we will use for the moment) or alternatively from \autoref{cor-estimate-delta-psi}  or \autoref{cor-estimate-delta-psi-2} with $m=0$.
\begin{align*}
E(t) - E(0) &= \i \langle \psi([v],t), \delta \psi([v;\dot{v}],t) \rangle \\[0.5em]
\sup_{t \in [0,T]} |E(t) - E(0)|& \leq \sup_{t \in [0,T]} \|\psi([v],t)\|_2 \cdot \|\delta \psi([v;\dot{v}],t)\|_2 \\
&\leq (1+C_{v})^2 T^* \|\dot{v}\|_\V \|\psi_0\|_2^2
\end{align*}
If we desire an estimate for the kinetic part $\|\nabla \psi([v],t)\|_2^2$ in a simple form, we additionally demand $v(0) = v(t) = 0$ (switch-on-off potential), which implies $E(0) = \|\nabla \psi_0\|_2^2, E(t) = \|\nabla \psi([v],t)\|_2^2$. 
\[
\|\nabla \psi([v],t)\|_2^2 \leq \|\nabla \psi_0\|_2^2 + (1+C_{v})^2 T^* \|\dot{v}\|_\V \|\psi_0\|_2^2
\]
Note that this estimates scales linearly with $\|\dot{v}\|_\V$ which could be interpreted as the linear dependency of energy transmission on the frequency of an electric potential like in the Planck--Einstein relation $E = h\nu$. A direct estimate in the other direction is given by the Hardy--Sobolev inequality for $n\geq 3$.
\[
\|\nabla \psi\|_2 \geq \frac{n-2}{2} \| x \mapsto \psi(x)/x \|_2
\]

\section{Linear response theory}
\label{sect-lrt}

This section was compiled following notes from Heiko Appel, themselves based on a seminar given by Robert van Leeuwen at the University of Würzburg in July 1999.

\subsection{Smooth approximations to the density operator}

It is customary in many-particle theory to derive the one-particle density as the expectation value of a \q{density operator} defined in the framework of second quantisation that is formally treated like an operator on a Hilbert space and can thus be inserted in formulas like Kubo's \eqref{eq-kubo}. Yet such a density is really an operator valued distribution and although the formal treatment as a self-adjoint operator yields useful results, it is not mathematically well-defined and certain manipulations can easily lead to dead ends. We will try to develop a rigorous formulation here by studying the test function $\eta_0 \in \Cont_0^\infty(\Omega)$ symmetrically centred around the origin. It is further assumed non-negative and normalised.
\[
\int_\Omega \eta_0(x) \d x = 1
\]
The limit of monotonously decreasing support is thus an approximation to the delta distribution. We define the shifted version $\eta_x(y) = \eta_0(y-x) = \eta_y(x)$ centred around $x \in \Omega$. The inner product $\langle \eta_x,f \rangle$ is now an arbitrarily good approximation for $f(x)$ if $f:\Omega \rightarrow \R$ is continuous and $\supp \eta_x \subset \Omega$ ($x$ not too close to the border of $\Omega$ if it is assumed bounded or anywhere if $\Omega$ is considered to be periodically tesselated). Further this quantity as a function in $x$ is the convolution with a mollifier $(\eta_0 * f)(x) = \langle \eta_x,f \rangle$ defined for $\Omega = \R^d$ that makes $\eta_0 * f$ a smooth function. The same notation will be used if $\eta_x$ is applied as a multiplication operator to a many-particle state and we identify $(\eta_x \psi)(\ushort{y}) = (\eta_x(y_1)+\ldots+\eta_x(y_N)) \psi(\ushort{y}) = \eta_x(\ushort{y}) \psi(\ushort{y})$. Such an $\eta_x$ is a smooth approximation to the density operator defined by delta distributions.

\subsection{Deriving the density response}

Take now $\eta_x$ as a (bounded) multiplication operator and insert it into the Kubo formula \eqref{eq-kubo}. The resulting quantity gives an approximation to the density response at $x \in \Omega$ and at time $t$ if the potential is changed from $v$ by adding $w$. Fréchet differentiability is assumed here.
\[
\delta \langle \eta_x \rangle_{[v;w]}(t) = \i\int_0^t \langle [\hat{w}(s), \hat{\eta}_x(t)] \rangle_0 \d s
\]
Remember that the involved expectation value is taken with respect to the initial state $\psi_0 \in \H$ because we are in the $H[v]$-interaction picture. In the next step we will also smooth out the variation of the potential $w$, effectively to remove it from the central body of the equation. This is only possible if such a perturbation $w$ is assumed to consist of equal one-particle potentials only, for which we will apply the same notation $w : [0,T] \times \Omega \rightarrow \R$ for simplicity.
\[
w(t,\ushort{x}) = \sum_{i=1}^N w(t,x_i)
\]
Usually this step is done without smoothing for a continuous $w$ with the delta distribution in the form of the \q{density operator}. To avoid such distributions we will resort again to the same $\eta_x$ and define the smooth approximate
\[
w_\eta(t,x) = \langle \eta_x,w(t) \rangle = \int_\Omega \eta_x(y)w(t,y) \d y = \int_\Omega \eta_y(x)w(t,y) \d y.
\]
If put into the Kubo formula where the transformation to the $H[v]$-interaction picture occurs, only the $x$ dependent part is affected and by change of integration order $w$ can be moved outside the central commutator. This happens $N$ times identically because of the symmetry of all quantities involved regarding permutations of the particle positions $x_1,\ldots,x_N$ but this is already captured by $\eta_y(\ushort{z})$ being a sum of $N$ identical one-particle multiplication operators $\eta_y$.
\begin{align}\label{eta-x-lin-resp}
\delta \langle \eta_x \rangle_{[v;w_\eta]}(t) &= \i \int_0^t \d s \int_{\Omega^N} \d \ushort{z} \,\psi_0^*(\ushort{z}) [\hat{w}_\eta(s,\ushort{z}), \hat{\eta}_x(t,\ushort{z})] \psi_0(\ushort{z}) \nonumber\\
&= \i \int_0^t \d s \int_\Omega \d y \int_{\Omega^N} \d \ushort{z} \,\psi_0^*(\ushort{z}) [\hat{\eta}_y(s,\ushort{z}), \hat{\eta}_x(t,\ushort{z})] \psi_0(\ushort{z}) w(s,y) \nonumber\\
&= \int_0^\infty \d s \int_\Omega \d y\, \i \theta(t-s) \langle [\hat{\eta}_y(s), \hat{\eta}_x(t)] \rangle_0\, w(s,y) 
\end{align}
In the last step the expectation value for the initial state is reintroduced and the time integral is taken to infinity and cut off with a Heaviside function in the integrand. At this point a problem arises, if the test functions $\eta_x$ are replaced by delta distributions to give the precise variation of the density instead of its smoothed version. The problem is that the (point-wise) multiplication of distributions is not generally well-defined. There is even a two page no-go result by \citeasnoun{schwartz1954} that shows that a multiplication algebra with differentiation including the distributions is not possible if one wants to keep the point-wise multiplication in the usual algebra of $\Cont^k$ functions. Yet it is still possible if one only keeps the point-wise multiplication of smooth functions, worked out to a full theory by \citeasnoun{colombeau} specifically in the context of quantum field theory. But the link to distributions in this framework is then not unique, there may be different Colombeau functions all representing the delta distribution related to different limiting procedures.

The integration kernel in \eqref{eta-x-lin-resp} now defines the so-called \emph{retarded linear density response function}, sometimes also called \emph{correlation function} or \emph{susceptibility}.
\begin{equation}\label{def-lin-resp-function}
\chi(t,x;s,y) = -\i \theta(t-s) \langle [\hat{\eta}_x(t), \hat{\eta}_y(s)] \rangle_0
\end{equation}
This quantity has already been derived in a different form in \eqref{eq-response-kernel}. In physics literature it is often denoted as the variational derivative of the one-particle density in the form
\[
\chi(t,x;s,y) = \frac{\delta n(t,x)}{\delta w(s,y)},
\]
whereas here we derived it only as an approximation to it, provided it is well-defined. The qualifier \q{retarded} relates to the presence of $\theta(t-s)$ that only allows influences previous to the time $t$ and arises naturally in such initial value problems. In \citeasnoun{fetter-walecka} it is noted that the given equation for $\delta\langle \eta_x \rangle$ \q{typifies a general result that the linear response of an operator to an external perturbation is expressible as the space-time integral of a suitable retarded correlation function.}

\subsection{Density response from the spectral measure of the Hamiltonian}

To get a more explicit result one restricts $v$ to time-independent potentials also including interactions, perturbed by some still time-dependent one-particle potential $w$. This allows the unitary evolution one-parameter group with self-adjoint generator $H=H[v]$ to be expressed as the spectral representation \cite[Cor.~28.1]{blanchard-bruening-2}
\[
U(t) = \int \e^{-\i \varepsilon t} \d E_{H}(\varepsilon).
\]
The $\d E_H(\varepsilon)$ is called the spectral measure of $H$ and is a projection-valued measure projecting onto the spectral subspaces of $H$, more specifically onto an eigenspace if $\varepsilon$ is an eigenvalue. One also gets analogous expressions for the resolution of identity and the Hamiltonian itself.
\begin{align*}
\id &= \int \d E_{H}(\varepsilon)\\
H &= \int \varepsilon \d E_{H}(\varepsilon)
\end{align*}
Additionally one starts in the ground state of the unperturbed system with Hamiltonian $H$ at time $t=0$. That such a ground state exists as an eigenstate associated with the infimum of the spectrum of $H$ is tacitly assumed. The ground state thus evolves according to $U(t)\psi_0 = \exp(-\i \varepsilon_0 t)\psi_0$, $\varepsilon_0 = \inf \sigma(H)$. Now switching back to the Schrödinger picture the response function can be evaluated as
\begin{align*}
\chi(t&,x;s,y) \\
=& -\i \theta(t-s) \langle \psi_0, \left( U(-t) \eta_x U(t-s) \eta_y U(s) - U(-s) \eta_y U(s-t) \eta_x U(t) \right)\psi_0 \rangle \\[0.7em]
=& -\i \theta(t-s) \left( \langle U(t)\psi_0, \eta_x U(t-s) \eta_y U(s) \psi_0 \rangle - \langle U(s) \psi_0, \eta_y U(s-t) \eta_x U(t) \psi_0 \rangle \right) \\[0.5em]
=& -\i \theta(t-s) \left( \e^{\i \varepsilon_0 (t-s)} \langle \psi_0, \eta_x U(t-s) \eta_y \psi_0 \rangle - \e^{-\i \varepsilon_0 (t-s)} \langle \psi_0, \eta_y U(s-t) \eta_x \psi_0 \rangle \right) \\
=& -\i \theta(t-s) \int \left( \e^{-\i (\varepsilon-\varepsilon_0) (t-s)} \langle \eta_x\psi_0, \d E_{H}(\varepsilon) \eta_y\psi_0 \rangle \right. \\
&\left. - \e^{\i (\varepsilon-\varepsilon_0) (t-s)} \langle \eta_y\psi_0, \d E_{H}(\varepsilon) \eta_x \psi_0 \rangle \right).
\end{align*}
In the last step the spectral representation of the unitary evolution $U(t-s)$ and its inverse $U(s-t)$ were inserted. One observes that the result does not depend on $t,s$ independently but only on their difference $t-s$. If spatially homogeneous systems are studied, the same is true for the coordinates $x,y$. We finally write for the response function
\begin{equation}\label{lin-resp-function}
\chi(\tau;x,y) = -\i \theta(\tau) \int \e^{-\i \Omega_\varepsilon \tau} \d\mu_{x,y}(\varepsilon) + c.c.
\end{equation}
where we used the abbreviations
\begin{align*}
\tau &= t-s\\
\Omega_\varepsilon &= \varepsilon-\varepsilon_0 \mtext{(the excitation energies)}\\
\d \mu_{x,y}(\varepsilon) &= \langle \eta_x \psi_0, \d E_{H}(\varepsilon) \eta_y\psi_0 \rangle\\
\d \mu_{x,y}^*(\varepsilon) &= \langle \eta_y \psi_0, \d E_{H}(\varepsilon) \eta_x\psi_0 \rangle.
\end{align*}
The last two terms describe the complex measure that enters the Lebesgue--Stieltjes integration of the spectral decomposition. If the spectrum of the unperturbed Hamiltonian $H$ is fully discrete, the spectral decomposition reduces to a sum over all eigenvalues $\varepsilon_k$ starting with the ground-state energy $\varepsilon_0$. Defining the excitation energies $\Omega_k = \varepsilon_k-\varepsilon_0$ and corresponding eigenstates $\psi_k$ (counting multiples), the response function reduces to
\begin{equation}\label{eq-resp-function-discrete}
\chi(\tau;x,y) = -\i \theta(\tau) \sum_k \e^{-\i \Omega_k \tau} \langle \eta_x \psi_0, \psi_k \rangle \langle \psi_k, \eta_y\psi_0 \rangle + c.c.
\end{equation}

\subsection{Relation to the Lebesgue decomposition theorem}

The general response function \eqref{lin-resp-function} can also be partitioned such that the discrete sum from \eqref{eq-resp-function-discrete} is one of three possible parts arising out of the spectral integration using the Lebesgue decomposition theorem \cite[19.61]{hewitt-stromberg} for the complex regular Borel measure $\mu = \mu_{x,y}$. The unique decomposition into three parts is written as
\[
\mu = \mu_a + \mu_s + \mu_p.
\]
Here $\mu_a$ is \emph{absolutely continuous} with respect to the Lebesgue measure $\lambda$, noted $\mu_a \ll \lambda$, which means that for all $A \subset \R$ with $\lambda(A)=0$ it follows $\mu_a(A)=0$ (one also says $\mu_a$ is dominated by $\lambda$). The term \q{absolutely continuous} already hints that such a measure is linked to the Lebesgue measure by integration (see also \autoref{sect-lip-abs-cont}) and indeed $\mu_a \ll \lambda$ guarantees the existence of a Radon--Nikodym derivative $f=\d \mu_a/\d \lambda$ such that
\[
\mu_a(A) = \int_A f \d\lambda(A).
\]
The remaining parts that are not absolutely continuous $\mu_s$ and $\mu_p$ are called singular towards $\lambda$, noted $\mu_s,\mu_p \perp \lambda$. The part that is still continuous (though not absolutely continuous) is called the \emph{singular continuous} measure $\mu_s$. It rarely occurs in a physical context, one example being the Cantor distribution that takes a non-zero value only on the points of the Cantor set. The associated cumulative function is known as \q{Devil's staircase} and is continuous and monotonously increasing but with a vanishing derivative almost everywhere. The final \emph{pure point} measure $\mu_p$ takes values only at countably many points, the cumulative function thus exhibiting discontinuous jumps at those locations. Its prototype and general building block is of course the Dirac delta measure.

\subsection{The response function in frequency representation}

We will proceed transforming the response function \eqref{lin-resp-function} to get a frequency representation. The following integral representation of the included Heaviside function will be beneficial in transforming all terms including $\tau$ to exponentials.
\[
\theta(\tau) = \lim_{\gamma \searrow 0} \frac{\i}{2\pi} \int_{-\infty}^\infty \frac{\e^{-\i \omega \tau}}{\omega + \i \gamma} \d \omega
\]
If we put this into \eqref{lin-resp-function} we are almost directly led to the Fourier transform of the response function.
\begin{align*}
\chi(\tau;x,y) &= \int_{-\infty}^\infty \d \omega \lim_{\gamma \searrow 0} \frac{1}{2\pi} \int \d\mu_{x,y}(\varepsilon) \frac{\e^{-\i (\omega + \Omega_\varepsilon) \tau}}{\omega + \i \gamma} + c.c. \\
&= \frac{1}{2\pi} \int_{-\infty}^\infty \d \omega\, \e^{-\i \omega \tau} \lim_{\gamma \searrow 0} \int \frac{\d\mu_{x,y}(\varepsilon)}{\omega - \Omega_\varepsilon + \i \gamma} + c.c.
\end{align*}
After the substitution $\omega \rightarrow \omega - \Omega_\varepsilon$ we clearly have a Fourier transform in the first term and an inverse Fourier transform in the complex conjugate term. To straighten that we substitute $\omega \rightarrow -\omega$ only in the complex conjugate term. Then both terms are Fourier transformed from the frequency to the time domain.
\begin{align*}
\chi(\tau;x,y) &= \mathcal{F}_{\omega \rightarrow \tau} \lim_{\gamma \searrow 0} \left(\int \frac{\d\mu_{x,y}(\varepsilon)}{\omega - \Omega_\varepsilon + \i \gamma} - \int \frac{\d\mu_{x,y}^*(\varepsilon)}{\omega + \Omega_\varepsilon + \i \gamma} \right)
\end{align*}
(Note that this is somewhat unusual, as in general the inverse Fourier transform goes from the frequency to the time domain. Also the factor $1/(2\pi)$ is more commonly included in the inverse transformation, not in the forward one. In any case this transposition just changes a sign in the argument.)

It is now just a matter of reading off the Fourier transform of the response function.
\begin{equation}\label{eq-resp-function-omega}
\tilde{\chi}(\omega;x,y) = \lim_{\gamma \searrow 0} \left(\int \frac{\d\mu_{x,y}(\varepsilon)}{\omega - \Omega_\varepsilon + \i \gamma} - \int \frac{\d\mu_{x,y}^*(\varepsilon)}{\omega + \Omega_\varepsilon + \i \gamma} \right)
\end{equation}

Let us recapitulate what the response function describes physically. It has been defined as the term \eqref{def-lin-resp-function} in the density response \eqref{eta-x-lin-resp},
\[
\delta \langle \eta_x \rangle_{[v;w_\eta]}(t) = \int_0^\infty \d s \int_\Omega \d y\, \chi(t-s;x,y) w(s,y).
\]
As such $\chi$ captures the answer of the (approximate) density at position $x$ to a kick by a change of potential $w$ at point $y$ with a time delay of $t-s$. Now its associated Fourier transform $\tilde\chi$ embodies the ability of the system to oscillate with a certain frequency after such a kick. If the system exhibits resonant modes the response function will have a peak at that frequency. The expression for $\tilde\chi$ \eqref{eq-resp-function-omega} shows such peaks at $\omega = \pm \Omega_k$ if $\varepsilon_k$ is an eigenvalue of the unperturbed Hamiltonian. In the case of only discrete spectrum we have the Fourier transform of \eqref{eq-resp-function-discrete}
\[
\tilde{\chi}(\omega;x,y) = \lim_{\gamma \searrow 0} \sum_k \left(\frac{\langle \eta_x \psi_0, \psi_k \rangle \langle \psi_k, \eta_y\psi_0 \rangle}{\omega - \Omega_k + \i \gamma} - \frac{\langle \eta_y \psi_0, \psi_k \rangle \langle \psi_k, \eta_x\psi_0 \rangle}{\omega + \Omega_k + \i \gamma} \right).
\]
This form of the response function is called Lehmann representation \cite{lehmann}. Taken as a function in $\omega \in \C$ the poles approach the respective $\Omega_k$ from the lower complex plane, thus $\tilde\chi$ is a complex analytic (holomorphic) function on the whole upper complex plane. This makes it accessible to powerful tools from complex analysis for further study. At the end of the day one could also try to take the limit to Dirac functions for $\eta_x,\eta_y$ to get the exact density response. The expression above is also the starting point for applications of TDDFT in spectroscopy, to calculate the absorption spectra of molecules up to several thousand atoms \cite{andrade,adamo-jacquemin}. The TDDFT approximation of $\tilde\chi$ uses the Kohn--Sham orbitals (see \autoref{sect-KS}) to determine the response function of the non-interacting system and the (approximated) exchange-correlation kernel $f_\mathrm{xc} = \delta \vxc[n] / \delta n$  evaluated at the ground state density to transform it to the interacting case.

\chapter{Introduction to Density Functional Theory}
\label{ch-dft}

\begin{xquote}{\citeasnoun{dirac1929}}
    The general theory of quantum mechanics is now almost complete, the imperfections that still remain being in connection with the exact fitting in of the theory with relativity ideas. [...] The underlying physical laws necessary for the mathematical theory of a large part of physics and the whole of chemistry are thus completely known, and the difficulty is only that the exact application of these laws leads to equations much too complicated to be soluble. It therefore becomes desirable that approximate practical methods of applying quantum mechanics should be developed, which can lead to an explanation of the main features of complex atomic systems without too much computation.
\end{xquote}

\begin{xquote}{Max Jammer in \emph{The Conceptual Development of Quantum Mechanics} (1966) quoted after \citeasnoun{hughes}}
In spite of its high sounding name [...] quantum theory, and especially the quantum theory of polyelectronic systems, prior to 1925 was, from the methodological point of view, a lamentable hodgepodge of hypotheses, principles, theorems, and computational recipes rather than a logical consistent theory.
\end{xquote}

\section{Precursors of DFT}

\subsection{Motivation}
\label{sect-dft-motivation}

Density functional theory and its time-dependent counterpart is exactly such an approximate practical method Dirac envisioned in the above citation. It is devised to approximately solve Coulombic many-particle---i.e., chemical---systems with far less computational effort than by solving Schrödinger's equation as a whole. A brief history of Dirac's program is put forward by Walter Kohn in his \citeasnoun{kohn-nobel} whose course we want to follow in these introducing sections.

One of the first almost complete studies to calculate binding energy and atomic distance of $H_2$ as the most basic molecule in the ground state was undertaken in \citeasnoun{james-coolidge}. They used the ansatz
\[
\psi = \varphi(x_1, x_2) \otimes (\chi_+ \otimes \chi_- - \chi_- \otimes \chi_+)
\]
with the spinorial part in the antisymmetric spin singlet state and $\varphi$ a general, normalised function depending on the two electron coordinates symmetric under their interchange and respecting the molecule's symmetry. $\varphi$ is then expanded in terms of $M$ parameters $p_1, \ldots, p_M$, in the case of James and Coolidge $M \leq 13$. The energy expectation value $\langle H \rangle_\psi$ is then minimised with respect to this parameter space, with a very convincing outcome.

But such an approach is only feasible with two electrons, for the number of parameters grows exponentially with the number of particles. Six electrons already demand a parameter space of $10^9$ dimensions. A limit in terms of computational power is thus soon reached and Kohn writes in his \citeasnoun{kohn-nobel} of an \q{exponential wall} that we run against.

A further issue lies in the approximation of a multi-particle state $\psi$ by some $\tilde{\psi}$. If the error for \emph{one} particle is $|\langle \tilde{\psi},\psi \rangle| = 1-\varepsilon$, then in the case of $N$ particles we get $|\langle \tilde{\psi}, \psi \rangle| = (1-\varepsilon)^N \approx
\e^{-N \varepsilon} \approx 0.37$ if $\varepsilon N \approx 1$ and an accuracy much, much lower if the particle number $N$ increases.

Kohn thus puts forward the following \q{provocative statement}, referring to an old paper of his former teacher \citeasnoun{van-vleck} in which he addressed possible errors due to large particle numbers in Heisenberg's theory of ferromagnetism:
\begin{quote}
In general the many-electron wavefunction [...] for a system of $N$ electrons is not a legitimate scientific concept, when $N \geq N_0$, where $N_0 \approx 10^3$.
\end{quote}

Kohn justifies this with his prerequisites for a \q{legitimate scientific concept} not fulfilled by a multi-particle wave function:
\begin{itemize}
    \item It can be calculated with sufficient accuracy.
    \item It can be measured and recorded with sufficient accuracy.
\end{itemize}

This puts it close to a phenomenological concept in the sense of \citeasnoun{cartwright} whereas the wave function is part of a simulacrum account, as already discussed in the introductory \autoref{sect-epistem-disclaimer}. Note that one may also read this as a challenge to epistemological treatments of quantum physics, especially when looking for any kind of quantum ontology.

Contrary to the wave function other physically or chemically interesting quantities, such as total energy or one-particle density, do not involve such limitations. If one would therefore be able to predict physical properties of a system just by analysing the one-particle density
\begin{equation}\label{def-n}
n(x) = N \int |\psi(x, \bar x)|^2 \d \bar x
\end{equation}
the \q{exponential wall} melts away. If and how this is possible for ground-state systems is the essence of DFT. But prior to that a direct predecessor of DFT shall be presented.

\subsection{Thomas--Fermi theory}
\label{sect-thomas-fermi-theory}

This first semi-classical theory that tries to describe the electronic structure of matter just in terms of its electronic density was developed 1927 just shortly after the formulation of Schrödinger's equation. Herein, the approximated total energy $E$, its kinetic part $T$, as well as the electron-electron interaction approximated by the Hartree term $\VH$ and external potential $\Vext$, are functionals of the density $n$ alone.
\[
E[n] = T[n] + \VH[n] + \Vext[n]
\]
The kinetic part $T[n]$ is approximated by considering a non-inter\-acting homogeneous electron gas with uniform density $n=N/V$. Assuming a three-dimensional box of side length $L$, volume $V=L^3$, and periodic boundary conditions the possible wave lengths in any direction are $\lambda_i = L/i$ with $i \in \N_0$. Applying the de Broglie relation every two states (because of the spin degree of freedom) therefore occupy a small cube of side length $2\pi\hbar/L$ in momentum-space. Filling up the whole Fermi sphere with $N$ electrons this gives us the Fermi momentum $p_f$ as its radius.
\[
\frac{4}{3} \pi p_f^3 = \frac{N}{2} \left(\frac{2\pi\hbar}{L}\right)^3
\]
Solving for $p_f$ gives
\begin{equation}\label{eq-fermi-radius}
p_f = \hbar \left( \frac{3\pi^2 N}{V} \right)^{\frac{1}{3}}
\end{equation}
and if we insert this into the formula for the classical kinetic energy this corresponds to a Fermi energy of
\[
T_f(N) = \frac{p_f^2}{2m} = \frac{\hbar^2}{2m} \left(\frac{3\pi^2 N}{V}\right)^{\frac{2}{3}}.
\]
The average kinetic energy of an electron is now given by the mean value
\[
\bar{T} = \frac{1}{N} \int_0^N T_f(N') \d N' = \frac{3}{5} T_f(N).
\]
Now $N/V$ in the formula of the Fermi energy clearly equals the mean electronic density. This will be substituted by the local density $n(x)$ which gives us a total kinetic energy of
\[
T[n] = \frac{3}{5} \frac{\hbar^2}{2m} \left(3\pi^2\right)^{\frac{2}{3}} \cdot \! \int n(x)^{\frac{5}{3}} \d x
\]
and consequently a kinetic energy density of
\[
\tau([n],x) = \frac{3}{5} \frac{\hbar^2}{2m} \left(3\pi^2n(x)\right)^{\frac{2}{3}}.
\]
This is the first occurrence of the principle idea of DFT, the approximation of electronic properties (here the density of the kinetic energy) as functionals of the one-particle density alone. This happens only \emph{locally} and under the assumption of a homogeneous electron gas. Such daring approximations fall under the name of \emph{local-density approximations} (LDA) and they are again highlighted in \autoref{sect-tfdt-lda}.

Assuming an external scalar potential $v(x)$ the remaining parts are given by
\begin{align}\label{hartree-term}
\VH[n] &= \frac{1}{2} \int \frac{n(x) n(y)}{|x-y|} \d x \d y \quad\mbox{and} \\
\Vext[n] &= \int v(x) n(x) \d x. \nonumber
\end{align}
Subsequently $E[n]$ is minimised under the constraint $N = \int n(x) \d x$. The Lagrange multiplier occuring in this procedure will then be the so-called chemical potential.


This simple approximation is not able to accommodate for atomic shells or molecular bonds and is thus not applicable in many cases. Nevertheless the calculated electronic density $n(x)$ from Thomas--Fermi theory can be useful as a starting point for an iterative procedure in DFT. Using the Thomas--Fermi energy as a lower bound for the complete quantum mechanical Hamiltonian interesting results regarding the stability of matter have been derived by \citeasnoun{lieb-thirring}.

The restriction to only scalar external potentials thus ignoring all magnetic effects will be employed throughout this work and is actually justified in many cases as for example \citeasnoun{bohm-pines-1} put it: \q{These magnetic interactions are weaker than the corresponding coulomb interactions by a factor of approximately $v^2/c^2$ and, consequently, are not usually of great physical interest.}

\subsection{Reduced density matrices}
\label{sect-rdm}

We already saw that to get a computable system it is all about reduction. But before jumping right to it as with Thomas--Fermi theory and throwing away almost everything that seems to be typical for quantum mechanics, one might try and see how far one is allowed to reduce complexity and still retain the full and exact information about at least some properties of the system. The benchmark will be of course the expectation value of the electronic Hamiltonian with external scalar potential $v$ and Coulombic interaction.
\begin{equation}\label{eq-mol-hamiltonian}
H = T + \Vee + \Vext = -\frac{1}{2}\sum_{k=1}^N\Delta_k  + \sum_{j<k} \frac{1}{|x_j-x_k|} + \sum_{k=1}^N v(x_k)
\end{equation}
In this setting only one- ($T$ and $\Vext$) and two-particle ($\Vee$) operators are present, so a reduced form of the wave function that still gives exact results for their expectation values would have to include at least two different particle coordinates.

Let the $N$-particle wave function, this time including spin, be
\[
\psi(\ushort{x}\ushort{s})=\psi(x_1 s_1,\ldots,x_N s_N),
\]
with $s_i \in \{0,1\}$ for spin-$\onehalf$ fermions. Then the associated general spinless \emph{$p$\textsuperscript{th} order reduced density matrix} ($p$-RDM), $p \in \{1,\ldots,N\}$, is given by
\begin{align*}
\rho_{(p)}(x_1,\ldots,x_p,x_1',&\ldots,x_p') \\
= {N \choose p} \sum_{\ushort{s}} \int &\psi(x_1 s_1,\ldots,x_p s_p,x_{p+1} s_{p+1},\ldots,x_N s_N) \\
&\psi^*(x_1' s_1,\ldots,x_p' s_p,x_{p+1} s_{p+1},\ldots,x_N s_N) \d x_{p+1}\ldots \d x_{N}
\end{align*}
and especially for $p=1$
\[
\rho_{(1)}(x,x') = N \sum_{\ushort{s}} \int \psi(x s_1,x_2 s_2,\ldots,x_N s_N) \psi^*(x' s_1,x_2 s_2,\ldots,x_N s_N) \d\bar{x}.
\]

In the case of a local one-particle operator $A$ that always acts as a symmetrised multiplication operator $\sum_{k=1}^N a(x_k)$ in spatial representation we can compute the expectation value as
\begin{equation}\label{one-point-exp-value}
\begin{aligned}
\langle A \rangle_\psi &= \sum_{k=1}^N \sum_{\ushort{s}} \int \psi^*(\ushort{x} \ushort{s}) a(x_k) \psi(\ushort{x} \ushort{s}) \d \ushort{x} \\
&= N \sum_{\ushort{s}} \int \psi(x s_1,\ldots,x_N s_N) \psi^*(x s_1,\ldots,x_N s_N) a(x) \d x \d \bar{x} \\
&= \int \rho_{(1)}(x,x) a(x) \d x = \int n(x) a(x) \d x.
\qquad \left( \rho_{(1)}(x,x) \equiv n(x) \right)
\end{aligned}
\end{equation}
Of course the wave function is assumed to be fully (anti-)symmetric to make the exchange of coordinates feasible and so it is possible to apply $a(x)$ depending only on the first particle coordinates. Note that for such local operations only the diagonal part of the 1-RDM matters. A non-local effect could be given by an $A$ acting as an integral operator with kernel $a(x,x')$ instead of an multiplication operator.
\[
A \psi(\ushort{x} \ushort{s}) = \sum_{k=1}^N \int a(x_k,x_k') \psi(\ldots x_k's_k \ldots) \d x_k'
\]
The expectation value then becomes
\begin{align*}
\langle A \rangle_\psi &= \sum_{k=1}^N \sum_{\ushort{s}} \int \psi^*(\ushort{x} \ushort{s}) \int a(x_k,x_k') \psi(\ldots x_k' \ldots) \d x_k' \d \ushort{x} \\
&= N \sum_{\ushort{s}} \int \psi^*(x s_1,\ldots,x_N s_N) \int \psi(x' s_1,\ldots,x_N s_N) a(x,x') \d x' \d x \d \bar{x} \\
&= \int \rho_{(1)}(x',x) a(x,x') \d x' \d x.
\end{align*}
A more relevant non-local (though acting only in a neighbourhood) one-particle operator is the Laplace operator $\Delta = \sum_{k=1}^N \Delta_k$ with the obvious expectation value
\begin{equation}\label{laplace-exp-value}
\langle \Delta \rangle_\psi = N \sum_{\ushort{s}} \int \psi^*(\ushort{x} \ushort{s}) \Delta_1 \psi(\ushort{x} \ushort{s}) \d \ushort{x} = \int \Delta_x \rho_{(1)}(x,x') \Big|_{x'=x} \d x.
\end{equation}

For similar two-particle operators $B$ acting as $\sum_{j<k} b(x_j,x_k)$ we apply the 2-RDM respectively. As we can of course go downwards in the hierarchy from 2-RDM to 1-RDM
\begin{equation}\label{rdm-hierarchy}
\rho_{(1)}(x,x') = \frac{2}{N-1} \int \rho_{(2)}(x,x_2,x',x_2) \d x_2
\end{equation}
and to the one-particle density with just the diagonal of the 1-RDM $n(x) = \rho_{(1)}(x,x)$, all we need for an exact evaluation of the potential terms in the usual electronic Hamiltonians including full electron-electron interaction is the 2-RDM. The possibility to vary over the (convex) set of all such 2-RDMs to get a minimiser for the total energy and thus the ground state is intriguing, but the set of all possible expressions that look like a 2-RDM but are not associated to a real $N$-particle wave function is by far too large and would lead to much to low ground-state energies. The problem of finding necessary and sufficient conditions for a 2-RDM to be \q{$N$-representable} was set forward by \citeasnoun{coleman1963} but until today it is only solved partially (see \citeasnoun{mazziotti2012} for a recent result on $N$-representability with mixed states). This formidable problem is also known as \emph{Coulson's challenge} \cite{coleman-yukalov} and following Dudley Herschbach in a \citeasnoun{uoc2006} \q{that quest has been a `holy grail' of theoretical chemistry for more than 50 years} which gets compared to a football game involving the Chicago Bears in the paragraph following the citation.

It is fully resolved only for ensemble $N$-representability of fermionic 1-RDMs where the only necessary condition is indeed for the so-called natural spin orbitals \cite{loewdin-1955}, the eigenfunctions of the 1-RDM, to be at most singly occupied, i.e., having eigenvalues $0 \leq n_i \leq 1$, which is just the basic Pauli exclusion principle. For pure states more restrictions in the form of linear inequalities involving the $n_i$ follow from the antisymmetry of the wave function, thus generalising the Pauli exclusion principle. \cite{altunbulak-klyachko}

We come back to the energy expectation value using \eqref{one-point-exp-value} with substitution of $n(x) = \rho_{(1)}(x,x)$, \eqref{laplace-exp-value} for the Laplacian, and the expectation value of a two-particle operator evaluated with $\rho_{(2)}$ along the diagonal.
\begin{equation}\label{rdm-energy-exp}
\begin{aligned}
\langle H \rangle_\psi = &-\int \Delta_x \rho_{(1)}(x,x') \Big|_{x'=x} \d x \\
&+ \int v(x)n(x) \d x \\
&+ \int \frac{\rho_{(2)}(x_1,x_2,x_1,x_2)}{|x_1-x_2|} \d x_1 \d x_2
\end{aligned}
\end{equation}
We want to consider again the classical interaction energy in the form of the Hartree term \eqref{hartree-term} as part of the total electron-electron interaction.
\[
\rho_{(2)}(x_1,x_2,x_1,x_2) = \onehalf n(x_1) n(x_2) (1+h(x_1,x_2))
\]
This must be seen as a definition of the unknown \emph{pair correlation function} $h(x_1,x_2)$ that captures non-classical effects. Inserting the equation above into our hierarchy rule \eqref{rdm-hierarchy} yields
\[
\rho_{(1)}(x_1,x_1) = n(x_1) = \frac{1}{N-1} \int n(x_1) n(x_2) (1+h(x_1,x_2)) \d x_2
\]
and by using the normalization of $n(x)$ to $N$ the condition
\[
\int n(x_2) h(x_1,x_2) \d x_2 = -1
\]
for all $x_1$ with $n(x_1) > 0$. For fixed $x_1$ the term $n_\mathrm{xc}(x_1,x_2) = n(x_2) h(x_1,x_2)$ is like a charged particle with sign opposite to that of the electron and thus also called the \emph{exchange-correlation hole}. Using this notation the expectation value of $\Vee$ becomes
\[
\langle \Vee \rangle_\psi = \VH[n] + \frac{1}{2} \int \frac{n(x_1) n_\mathrm{xc}(x_1,x_2)}{|x_1-x_2|} \d x_1 \d x_2.
\]
All this and more on this topic can be found for example in \citeasnoun[2.3ff]{parr-yang}.

\subsection{Thomas--Fermi--Dirac theory and the local-density approximation}
\label{sect-tfdt-lda}

\begin{xquote}{Walter Kohn, \citeasnoun{kohn-nobel}}
The LDA, obviously exact for a uniform electron gas, was a priori expected to be useful only for densities varying slowly on the scales of the local Fermi wavelength $\lambda_F$ and TF wavelength, $\lambda_{TF}$. In atomic systems these conditions are rarely well satisfied and very often seriously violated. Nevertheless the LDA has been found to give extremely useful results for most applications. [...] Experience has shown that the LDA gives ionization energies of atoms, dissociation energies of molecules and cohesive energies with a fair accuracy of typically 10--20\%. However the LDA gives bond-lengths and thus the geometries of molecules and solids typically with an astonishing accuracy of $\sim 1\%$.
\end{xquote}

\begin{xquote}{Angel Rubio (24\textsuperscript{th} February 2015)}
When LDA was proposed, people said this is the most crazy, irrelevant, and useless theory; we are where we are now because it worked.
\end{xquote}

This refinement of Thomas--Fermi theory was proposed by \citeasnoun{dirac1930} and included a better approximation of the electron-electron interaction going beyond the Hartree term by using the full 1-RDM instead. We follow again \citeasnoun[6.1]{parr-yang} in its presentation. The aim is to include the additional so-called \emph{exchange} effects, originating from the Pauli exclusion principle for fermions that demands antisymmetric wave functions. To account for that, the setting is restricted to single Slater determinants involving the orthonormal spin orbitals $\phi_i(x_j s_j)$, each a tensor product of a spatial orbital $\varphi_{\lceil i/2 \rceil}(x_j) \in L^2(\R^3)$ and a spinor $\chi_\pm \in \C^2$ that fully capture this specific antisymmetry. The wave function of the ground state is thus approximated by
\[
\psi(\ushort{x} \ushort{s}) = \frac{1}{\sqrt{N!}} \sum_{\sigma \in S_N} \sign(\sigma) \phi_{\sigma(1)}(x_1 s_1) \ldots \phi_{\sigma(N)}(x_N s_N)
\]
with the permutation group $S_N$ of $N$ elements. For $N$ even, which shall be implied from now on, this form already assumes equal distribution of spins up and down.

The 1-RDM for such a system, all spin degrees of freedom summed up and all orbitals with particle positions $x_2,\ldots,x_N$ integrated out, gives the \emph{Fock--Dirac density matrix} in spinless form.
\[
\rho_{(1)}(x,x') = \sum_{i=1}^N \sum_s \phi_i(x s) \phi_i^*(x' s)
\]
Now the 2-RDM can of course be cast in an analogous form and comes out as a determinant of 1-RDMs. This is also true for all higher $p$-RDMs in this special case of single Slater determinants as wave functions. The extra $\onehalf$ in the second term below comes from the mixing of spin coordinates between the particles with coordinates $x_1s_1$ and $x_2's_2$ (or $x_1's_1$ and $x_2s_2$ respectively) where only half of them are not perpendicular due to their spin coordinates.
\begin{equation}\label{eq-2rdm-slater}
\rho_{(2)}(x_1,x_2,x_1',x_2') = \onehalf \left(\rho_{(1)}(x_1,x_1')\rho_{(1)}(x_2,x_2') - \onehalf\rho_{(1)}(x_1,x_2')\rho_{(1)}(x_2,x_1')\right)
\end{equation}
If we put this expression into the interaction term of the expectation value for the Hamiltonian \eqref{rdm-energy-exp} we are only left with the diagonal term $\rho_{(2)}(x_1,x_2,x_1,x_2) = \onehalf (n(x_1)n(x_2) - \onehalf|\rho_{(1)}(x_1,x_2)|^2)$. The first part just yields the Hartree term while the second is the starting point for another application of the homogeneous electron gas approximation to get an expression for the exchange energy.
\begin{align*}
\langle\Vee\rangle_\psi = \int \frac{\rho_{(2)}(x_1,x_2,x_1,x_2)}{|x_1-x_2|} \d x_1 \d x_2 &= \overbrace{\frac{1}{2}\int \frac{n(x_1)n(x_2)}{|x_1-x_2|} \d x_1 \d x_2}^{\VH[n]} \\[0.2cm]
&- \underbrace{\frac{1}{4}\int \frac{|\rho_{(1)}(x_1,x_2)|^2}{|x_1-x_2|} \d x_1 \d x_2}_{\Vx[n]}
\end{align*}
Again we take the uniform density $n=N/V$ in a box of volume $V=L^3$ with periodic boundary conditions. The spin orbitals are assumed to be plane waves $\varphi_p(x) = \exp(\i p \cdot x/\hbar)/\sqrt{V}$ with momenta $p = (n_x,n_y,n_z) 2\pi\hbar / L$ each one doubly filled starting with the lowest energy $n_x,n_y,n_z=0,\pm 1, \pm 2,\ldots$ up to the Fermi surface $|p| = p_f$ (this is called the \q{closed shell} assumptions and here again an even number of electrons is necessary). We get for the 1-RDM in this special setting
\[
\rho_{(1)}(x_1,x_2) = 2 \sum_p \varphi_p(x_1) \varphi_p^*(x_2) = \frac{2}{V} \sum_p \e^{\i p \cdot (x_1-x_2)/\hbar}
\]
where the factor 2 originates from the double occupation of every orbital and the sum over $p$ is considered over all occupied orbitals. We want to pass over to an integral over the wavenumber vectors $k= p/\hbar =(n_x,n_y,n_z)2\pi/L \in \R^3$ with step size $2\pi/L \rightarrow 0$.
\begin{align*}
\rho_{(1)}(x_1,x_2) &= \frac{1}{4\pi^3} \int_{|k|<k_f} \e^{\i k \cdot (x_1-x_2)} \d k \\
&= \frac{1}{2\pi^2} \int_0^{k_f} r^2 \d r \int_0^\pi \sin \theta  \,\e^{\i r |x_1-x_2| \cos \theta } \d \theta
\end{align*}
The integral over the sphere with radius $k_f = p_f/\hbar$ has been transformed to spherical coordinates $(r,\theta,\varphi)$, the $z$-axis pointing into direction $x_1-x_2$, with the azimuthal angle already integrated out to $2\pi$. The integral gets evaluated as
\[
\rho_{(1)}(x_1,x_2) = \frac{\sin(k_f |x_1-x_2|) - k_f |x_1-x_2| \cos(k_f |x_1-x_2|)}{\pi^2 |x_1-x_2|^3}.
\]
The Fermi radius can again be determined as in \eqref{eq-fermi-radius} and by substituting an inhomogeneous density $n(x)=\rho_{(1)}(x,x)$ for $N/V$. It further seems natural to choose $x=(x_1+x_2)/2$.
\[
k_f(x) = \left(3\pi^2n(x)\right)^\frac{1}{3}
\]
Then with the substitution $t = k_f(x) |x_1-x_2|$
\[
\rho_{(1)}(x_1,x_2) = 3n(x) \frac{\sin t - t \cos t}{t^3} 
\]
we can use this expression to finally evaluate the exchange energy
\begin{align*}
\Vx[n] &= \frac{1}{4} \int \frac{|\rho_{(1)}(x_1,x_2)|^2}{|x_1-x_2|} \d x_1 \d x_2 \\
&= \frac{9}{2} \int k_f(x) n(x)^2 \frac{(\sin t - t \cos t)^2}{t^7} \d x_1 \d x_2.
\end{align*}
By substituting coordinates to $x = (x_1 + x_2)/2, x_- = x_1 - x_2$ and introducing polar coordinates for $x_-$ with radius $|x_-| = t/k_f(x)$ we get
\[
\Vx[n] = 9\pi \int k_f(x)^{-2} n(x)^2 \d x \int_0^\infty \frac{(\sin t - t \cos t)^2}{t^5} \d t.
\]
The $t$-integral is evaluated as $\frac{1}{4}$ and re-inserting the expression for $k_f(x)$ we have
\[
\Vx[n] = \frac{3}{4} \left(\frac{3}{\pi}\right)^\frac{1}{3} \int n(x)^\frac{4}{3} \d x.
\]
Putting together all terms we have the following energy functional where the exchange effects add a negative correction to the Hartree term.
\begin{equation}\label{eq-energy-lda}
E[n] = T[n] + \VH[n] - \Vx[n] + \Vext[n]
\end{equation}
This is an approximation using a single Slater determinant for the state and \emph{locally} the spin orbitals of a homogeneous electron gas like in Thomas--Fermi theory. Such a scheme is thus known as a \emph{local-density approximation} (LDA), approximations that are central to DFT. The full energy functional \eqref{eq-energy-lda} considers the kinetic energy in the LDA approximation, the classical Coulomb interaction in form of the Hartree term, the exchange correction, and the external potential. All effects that are still missing are collected under the the label of additional \q{correlation} effects. LDA-type approximations to those have an even more heuristic flavour and are derived in an intention to capture the statistical correlation not only for the same spin component as in \eqref{eq-2rdm-slater} but also for opposite spin components. To achieve this an even more involved form of the Slater determinant has to be assumed, where the orbitals for one spin component themselves perturbatively depend on the coordinates of electrons with opposite spin component. \cite{wigner1934}

\section{Foundations of DFT}

For a general introduction to DFT see for example \citeasnoun{dreizler-gross}, \citeasnoun{eschrig}, and \citeasnoun{capelle}, where the last two are available freely online.

\subsection{The Hohenberg--Kohn theorem}
\label{sect-hk-theorem}

Thomas--Fermi theory and its variants still employ a very rough approximation of the kinetic energy term as a density functional and thus we expect peculiar features of quantum mechanics to be lost completely. The DFT approach tries to circumvent this problem.

In the following let the Hamiltonian $H = T + \Vee + \Vext$ of a quantum-mechanical many-electron system be given by three parts as in \eqref{eq-mol-hamiltonian}: the kinetic part $T$, the electron-electron interaction $\Vee$ and the external potential $\Vext$ defined by the one-particle potential $v$ as a multiplication operator.
\[
\Vext = \sum_{k=1}^{N} v(x_{k})
\]
The expectation value of the external potential thus is
\[
\langle \psi, \Vext \psi \rangle = \int v(x)  n(x)  \d x = \langle v,n \rangle.
\]
Note that this quantity amounts to a dual pairing of $v$ and $n$ which is important as soon as one considers appropriate function spaces. The basic principle of density functional theory is the following celebrated theorem due to \citeasnoun{hohenberg-kohn} that shows existence of a density-potential mapping for ground states. We do not give precise premisses or a mathematically rigorous proof here but rather present it in the form usually found in literature, thus the more modest label \q{conjecture}. Hints towards the lack of rigour can be found in the provided formal proof.

\begin{conjecture}[Hohenberg--Kohn theorem]\label{hk-th}
The one-particle density $n$ of a bounded many-electron system in its non-degenerate ground state uniquely determines the external potential $v$ modulo an additive constant.
\end{conjecture}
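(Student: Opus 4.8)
The plan is to follow the classical Hohenberg--Kohn argument, which proceeds by a double \emph{reductio ad absurdum}. First I would set up the two systems to be compared: suppose two external potentials $v$ and $v'$ that differ by more than an additive constant both give rise to the same non-degenerate ground-state density $n$. Let $\psi$ be the ground state of $H = T + \Vee + \Vext$ with energy $E = \langle \psi, H \psi \rangle$, and $\psi'$ the ground state of $H' = T + \Vee + \Vext'$ with energy $E'$. Since $v \neq v' + \mathrm{const}$, the Hamiltonians $H$ and $H'$ are genuinely different operators, and non-degeneracy guarantees that $\psi$ and $\psi'$ are the unique (up to phase) minimisers of their respective energy functionals; in particular $\psi \neq \psi'$ (a shared ground state would force, via the Schr\"odinger equation, $\Vext - \Vext'$ to act as multiplication by a constant on the support of $\psi$, and here a unique continuation / non-vanishing property of the eigenstate as discussed in \autoref{sect-ucp} rules out the degenerate alternative).

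The core step is the strict Rayleigh--Ritz inequality. Using $\psi'$ as a trial state for $H$, non-degeneracy gives the \emph{strict} inequality
\[
E < \langle \psi', H \psi' \rangle = \langle \psi', H' \psi' \rangle + \langle \psi', (\Vext - \Vext') \psi' \rangle = E' + \int (v(x) - v'(x))\, n(x) \d x,
\]
where the last equality uses that $\psi$ and $\psi'$ share the same one-particle density $n$ and that $\Vext - \Vext'$ is a symmetric one-body multiplication operator, so its expectation value is the dual pairing $\langle v - v', n \rangle$. Symmetrically, using $\psi$ as a trial state for $H'$,
\[
E' < \langle \psi, H' \psi \rangle = E + \int (v'(x) - v(x))\, n(x) \d x.
\]
Adding these two strict inequalities, the density-pairing terms cancel exactly and one obtains $E + E' < E' + E$, a contradiction. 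Hence no two such potentials can exist, and $n$ determines $v$ up to an additive constant.

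The main obstacle — and the reason the statement is flagged as a \q{conjecture} rather than a theorem — is making the manipulations above rigorous rather than formal. Three points need care. First, one must ensure the trial states $\psi', \psi$ actually lie in the form domain of the \q{other} Hamiltonian so that $\langle \psi', H\psi' \rangle$ is finite and the variational principle applies; this is a question of the regularity theory of Schr\"odinger eigenfunctions (cf.\ \autoref{sect-kato-peturbations} and \autoref{sect-prop-eigenfunctions}) and of which class the potentials $v, v'$ are drawn from. Second, the cancellation $\int (v - v')\, n + \int (v' - v)\, n = 0$ requires the integrals to be individually well-defined and finite, i.e.\ $n$ must pair properly against $v - v' \in$ the relevant potential space; for singular Coulombic $v$ this is the delicate Lieb-type analysis of the set of $v$-representable densities. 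Third, and most subtly, the step $\psi \neq \psi' \Rightarrow \langle \psi', H\psi'\rangle > E$ (rather than $\geq$) rests on non-degeneracy \emph{and} on $\psi'$ not being proportional to $\psi$ anywhere it matters; excluding the borderline case where $\psi$ and $\psi'$ agree as $L^2$-functions but solve different eigenvalue equations is exactly where one needs the strong unique continuation property to guarantee $\psi$ cannot vanish on a set of positive measure. A fully rigorous treatment would have to invoke these analytic inputs explicitly, which is why the modest label \q{conjecture} is appropriate in the informal presentation here.
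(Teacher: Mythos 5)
Your proposal is the classical Hohenberg--Kohn \emph{reductio} by double application of the Rayleigh--Ritz principle, which is exactly the paper's formal proof, including the observation that $\psi \neq \e^{\i\alpha}\psi'$ would otherwise force $v - v'$ to be constant and the caveat that non-vanishing of the ground state (a unique continuation issue) is needed to make that step rigorous. The only minor difference is that the paper allows the second system's ground state to be degenerate and therefore uses a strict inequality in one direction and only \q{$\leq$} in the other (the sum still yields the contradiction), whereas you assume non-degeneracy on both sides and write both inequalities strictly.
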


\begin{proof}[Formal proof]
Let $\psi_1$ be the non-degenerate ground-state solution to the time-inde\-pendent Schrödinger equation $H_1 \psi_1 = E_1 \psi_1$ with $H_1 = T + \Vee + V_1$. It holds
\[
E_1 = \langle \psi_1, H_1 \psi_1 \rangle = \langle \psi_1, (T+\Vee)
\psi_1 \rangle + \int v_1(x)  n(x)  \d x.
\]
We proceed by reductio ad absurdum. Assume there is a second potential $v_2 \neq v_1 + const$, $H_2=T+\Vee+V_2$ which leads to a (possibly degenerate) ground-state solution to $H_2 \psi_2 = E_2 \psi_2$ having the same density $n(x)$. It follows that $\psi_2 \neq \e^{\i \alpha}\psi_1$ because else $\psi_1$ would solve $H_2 \psi_1 = E_2 \psi_1$ too and subtraction of the two Schrödinger equations would give us $v_1(x) - v_2(x) = const$, already contradicting the assumptions. The necessary condition $\psi_1 \neq 0$ for almost all $(x_1, \ldots, x_N)$ for this argument is assumed to hold for a class of \q{resonable} potentials, where among other things no regions are surrounded by infinite potential barriers.\footnote{A precise treatment of such a \q{unique continuation property} was given in \autoref{sect-ucp}. A recent approach that puts the condition on the one-particle density $n$ instead is given by \citeasnoun{lammert}.} The next step lies in analysing the expectation value of $H_1$ under $\psi_2$ and vice versa. $\psi_2$ could never minimise the energy because it is different from any complex-multiple of the non-degenerate ground state as argued above. Therefore we have
\[
E_1 < \langle \psi_2, H_1 \psi_2 \rangle = E_2 + \int (v_1(x) -
v_2(x)) n(x) \d x
\]
and analogously (but with a \q{$\leq$} sign because of the possible degeneracy)
\[
E_2 \leq \langle \psi_1, H_2 \psi_1 \rangle = E_1 + \int (v_2(x) -
v_1(x)) n(x) \d x.
\]
Addition of both inequalities leads to $E_1+E_2 < E_2+E_1$ and thus a contradiction to our original assumption of different potentials resulting in the same ground state is found.
\end{proof}

If one takes for granted that the kinetic term and the electron-electron interaction is identical in all systems under consideration, the following corollary shows that it is possible to define an injective mapping $n \mapsto \psi$ modulo a global phase factor.

\begin{corollary}\label{cor-HK}
By the Hohenberg--Kohn theorem (\autoref{hk-th}) the density $n$ of a non-degen\-erate ground state uniquely determines the external potential $v$ (modulo an additive constant) and of course the particle number $N$. Thus it defines the Hamiltonian $H$ of the system and by solving the associated Schrödinger equation the respective ground state $\psi$ (modulo a global phase factor) and from that all other properties of the quantum-mechanical system.
\end{corollary}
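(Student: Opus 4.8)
The plan is to chain together results that are all already in hand: the Hohenberg--Kohn theorem (\autoref{hk-th}), the self-adjointness theory of \autoref{sect-const-potential} (in particular \autoref{th-sum-space} for Coulombic $N$-particle Hamiltonians), Stone's theorem (\autoref{th-schro-dyn-konst}), and elementary uniqueness of ground states. First I would fix the assumptions inherited from \autoref{hk-th}: $n$ is the one-particle density \eqref{def-n} of the non-degenerate ground state $\psi$ of some Hamiltonian $H = T + \Vee + \Vext$ of the form \eqref{eq-mol-hamiltonian}, where $T$ and $\Vee$ are regarded as universal and fixed, so that only $\Vext$ (equivalently the one-body potential $v$) varies over the admissible class of \q{reasonable} potentials. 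The Hohenberg--Kohn theorem then hands us $v$ as a function of $n$, unique modulo an additive constant. The particle number $N$ is read off directly from $n$ via $N = \int_\Omega n(x)\,\d x$, which is immediate from the definition \eqref{def-n} and normalisation of $\psi$.

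Next I would argue that the ambiguity \q{modulo an additive constant} in $v$ is harmless. Adding a constant $c$ to $v$ adds $Nc$ to $\Vext$ and hence shifts $H$ by the scalar $Nc\cdot\id$; by \autoref{th-sum-space} the operator $H_0 + v$ is self-adjoint on $D(H_0)$ for $v \in \Sigma(L^2+L^\infty)$, and the Coulombic particle--nucleus and particle--particle terms were shown to lie in this class in \autoref{ex-atomic-hamiltonian}, so $H$ (and $H + Nc$) is a bona fide self-adjoint operator. A scalar shift commutes with everything, leaves every eigenspace invariant, and only translates the spectrum; in particular it does not change the ground-state \emph{eigenspace}. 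Thus the class $[v]$ modulo constants determines a well-defined self-adjoint $H$ up to an additive scalar, and this ambiguity does not affect any of the conclusions about states.

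The core step is then to recover $\psi$ from $H$. Non-degeneracy of the ground state means $\varepsilon_0 = \inf\sigma(H)$ is a simple eigenvalue, so the associated eigenspace is one-dimensional and $\psi$ is determined up to a nonzero complex scalar; imposing $\|\psi\|_2 = 1$ pins it down up to a global phase $\e^{\i\alpha}$, exactly the residual freedom already present in \autoref{cor-HK}. From this $\psi$ one obtains \q{all other properties} in the usual sense: any observable $A$ with $\psi \in D(A)$ has a well-defined expectation value $\langle A\rangle_\psi$ which is phase-independent, the time evolution $\e^{-\i H t}\psi$ is given by Stone's theorem (\autoref{th-schro-dyn-konst}) and inherits at most a global phase, and so on. The logical skeleton is therefore: $n \mapsto N$ (trivially) and $n \mapsto [v]$ (Hohenberg--Kohn) $\mapsto H \bmod (\text{scalar})$ (self-adjointness, \autoref{th-sum-space}) $\mapsto \psi \bmod \e^{\i\alpha}$ (simplicity of the lowest eigenvalue) $\mapsto$ every physical property.

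The genuine obstacle here is not in the chain of implications, which is essentially bookkeeping, but in the hypotheses silently borrowed from \autoref{hk-th}: the corollary is only as rigorous as the \q{formal proof} of the Hohenberg--Kohn theorem it rests on. The delicate points are the existence of a non-degenerate ground state for the admissible potential class (which requires, e.g., the discreteness results alluded to in \autoref{ex-harmonic-hamiltonian} or the decay results of \autoref{sect-prop-eigenfunctions}, and fails for Coulombic systems without a confining or stability argument), and the non-vanishing of $\psi$ on positive-measure sets used in the Hohenberg--Kohn argument, which as noted in \autoref{sect-ucp} is a strong unique continuation statement that is known only for $v \in L^{n/2}_{\mathrm{loc}}$ and is not established for the sum-space potentials of \autoref{def-sum-space}. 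I would flag these as the true open issues rather than attempt to close them, since the corollary is stated (consistently with \autoref{hk-th}) at the level of a formal consequence; the honest proof is thus a short deduction from \autoref{hk-th} together with an explicit list of the regularity and spectral assumptions under which each arrow is valid.
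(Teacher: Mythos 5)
Your proposal is correct and follows exactly the chain of reasoning the paper intends: the corollary is stated in the text as an immediate consequence of \autoref{hk-th} with no separate proof, and your elaboration ($n \mapsto N$ by normalisation, $n \mapsto [v]$ by Hohenberg--Kohn, $[v] \mapsto H$ up to a harmless scalar shift, then $\psi$ up to a phase by non-degeneracy) is precisely the intended bookkeeping. Your flagging of the inherited weaknesses (existence of a non-degenerate ground state, the unique continuation issue) matches the paper's own caveats around its \q{formal proof} of \autoref{hk-th}, so nothing further is needed.
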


This means that in principle it suffices to know the electronic density of the ground state to calculate all other quantities of interest. Another important consequence leads into the converse direction: Because $n$ particularly fixes the total energy of the system, the density of the ground state itself can be determined by a variation principle. This conclusion is exemplified in the following section.

\subsection{The Hohenberg--Kohn variation principle}

\begin{definition}
A density $n \geq 0$ normalised to $N = \int n(x) \d x$ is called \textbf{$v$-representable} if and only if there exists a scalar potential $v$ such that $n$ is the corresponding ground-state density. One distinguishes \textbf{interacting} and \textbf{non-interacting $v$-representable}, depending on whether one adds a Coulomb interaction term in the Hamiltonian or not. The set of interacting $v$-representable densities is denoted as $\mathcal{N}$. 
\end{definition}

\begin{definition}
Let $\psi[n]$ be defined by the mapping $n \mapsto \psi$ from \autoref{cor-HK} for non-degenerate ground-state densities then the \textbf{HK functional} is defined as
\[
F_\mathrm{HK}[n] = \langle \psi[n], (T+\Vee) \psi[n] \rangle
\]
and consequently gives the kinetic and interaction part of the total ground-state energy.
\end{definition}

It should be noted that $F_\mathrm{HK}$ is identical in all physical systems under consideration, like atoms, molecules and solid state bodies. A hypothetical exact functional $F_\mathrm{HK}$ that yields this energy directly and only from the density would thus render all efforts to solve Schrödinger's equation for ground state energies superfluous and simplify quantum mechanics radically. In practice approximations are used because clearly the mapping $n \mapsto \psi$ is not analytically accessible except in very simple cases like that of a one-dimensional quantum ring, see \citeasnoun{ruggenthaler-2013}. One popular approximation has already been derived in \eqref{eq-energy-lda} (just subtract the $\Vext$) with LDA.

Because for Hamiltonians as considered here the smallest expectation value for the total energy is always given in an eigenstate---the ground state---(this is known as the Rayleigh--Ritz principle) one can formulate the following variation principle for determining the ground-state density.

\begin{corollary}[HK variation principle]
The ground-state density $n$ for a given external potential $v$ minimises the functional $E[n] = F_\mathrm{HK}[n] +
\int\! v(x) n(x) \!\d x$ varied over all $v$-representable densities.
\[
E_0 = E[n] = \min_{n' \in \mathcal{N}} E[n']
\]
\end{corollary}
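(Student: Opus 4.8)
The plan is to establish the HK variation principle as an immediate consequence of the Rayleigh--Ritz principle together with the Hohenberg--Kohn theorem (\autoref{hk-th}) and its \autoref{cor-HK}. First I would fix the external potential $v$ and consider the associated Hamiltonian $H = T + \Vee + \Vext$ with $\Vext = \sum_{k=1}^N v(x_k)$. By the Rayleigh--Ritz principle, the ground-state energy satisfies $E_0 = \min_\psi \langle \psi, H\psi \rangle$ over all normalised $N$-particle wave functions, and the minimum is attained precisely at the (non-degenerate) ground state $\psi_0$, which by assumption has a $v$-representable density $n_0 \in \mathcal{N}$.

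Next I would split the energy expectation value using $\langle \psi, H\psi \rangle = \langle \psi, (T+\Vee)\psi \rangle + \langle v, n_\psi \rangle$, where $n_\psi$ is the one-particle density of $\psi$. For any $v$-representable density $n' \in \mathcal{N}$ there is, by definition, some potential $v'$ whose non-degenerate ground state is some $\psi'$ with density $n'$; by \autoref{cor-HK} this $\psi'$ is exactly $\psi[n']$ modulo a global phase, so $F_\mathrm{HK}[n'] = \langle \psi[n'], (T+\Vee)\psi[n'] \rangle$ is well-defined and equals the kinetic-plus-interaction energy of that state. Hence $E[n'] = F_\mathrm{HK}[n'] + \langle v, n' \rangle = \langle \psi[n'], H \psi[n'] \rangle \geq E_0$ by Rayleigh--Ritz, since $\psi[n']$ is an admissible trial function for $H$. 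Taking $n' = n_0$ gives equality, because then $\psi[n_0]$ is (up to phase) the true ground state $\psi_0$ of $H$. Therefore $E_0 = E[n_0] = \min_{n' \in \mathcal{N}} E[n']$, which is the claim.

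The main obstacle — and the reason the surrounding text flags the Hohenberg--Kohn result as a \q{conjecture} with only a \q{formal proof} — is ensuring that the map $n \mapsto \psi[n]$ used in the definition of $F_\mathrm{HK}$ is genuinely well-defined on all of $\mathcal{N}$. This requires (i) the unique continuation property so that $\psi \neq 0$ almost everywhere, invoked in the proof of \autoref{hk-th} and discussed at length in \autoref{sect-ucp}, and (ii) that the relevant densities actually determine a unique potential modulo constants, which needs the non-degeneracy hypothesis and a \q{reasonable} class of potentials. A secondary subtlety is that the variation is only over the set $\mathcal{N}$ of interacting $v$-representable densities, which is notoriously hard to characterise intrinsically; a fully rigorous treatment would either restrict attention to this set (as done here) or pass to the Levy--Lieb constrained-search functional to enlarge the domain. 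Since the statement to be proved is explicitly phrased as a corollary at the same level of rigour as \autoref{hk-th}, I would present the argument above, inheriting exactly the standing assumptions of the Hohenberg--Kohn theorem rather than attempting to repair them.
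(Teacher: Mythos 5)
Your argument is correct and is essentially the one the paper intends: the corollary is presented as an immediate consequence of the Rayleigh--Ritz principle applied to the trial states $\psi[n']$ supplied by the Hohenberg--Kohn mapping, with equality at the true ground-state density. Your added remarks on the well-definedness of $n \mapsto \psi[n]$ and the $v$-representability of the variational domain match the caveats the paper itself attaches to \autoref{hk-th}.
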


This has obvious benefits: Instead of varying over a huge parameter space covering as many $\psi$ as possible as in the method of James and Coolidge mentioned in \autoref{sect-dft-motivation} `only' all possible densities have to be considered. This calls up the issue of $N$-representability already mentioned regarding 2-RDMs in \autoref{sect-rdm}. A density that was found as the minimiser of the variation principle should be related to an antisymmetric $N$-particle wave function via \eqref{def-n}, else it is worthless in the context of quantum theory. Luckily in this case the problem is fully resolved with the so-called \citeasnoun{harriman} construction that gives a corresponding Slater determinant for any conceivable density.

But it is also a highly non-trivial task to give the set of all $v$-representable densities $\mathcal{N}$ which is needed for the variation principle given above. This problem is only fully resolved in the case of \emph{ensemble} $v$-representability, meaning that the density can be written as a linear combination of the densities of possibly degenerate ground states, and a system additionally either formulated on a lattice \cite{chayes-1985} or in a coarse-grained version by averaging over regular cells \cite{lammert-2006}. Yet by resorting to the Levy--Lieb constrained-search functional the problem of $v$-representability can be avoided, the functional is defined on a clearly defined domain of possible $N$-representable densities but not convex. \cite{levy-1979} A convex functional is beneficial because it has a unique minimiser, something achieved by going further to Lieb's convex-conjugate functional. \cite{lieb-1983} Finally this is made functionally differentiable by employing Moreau--Yosida regularisation as given by \citeasnoun{kvaal-2014}.

How this variation principle formulated with one of the mentioned functionals can be applied in practice will be explained by means of the self-consistent Kohn--Sham equations. Yet prior to that its precursor as a self-consistent method shall be briefly discussed.

\subsection{The self-consistent Hartree--Fock method}

\begin{xquote}{\citeasnoun{wigner1938}}
The first question which immediately arises is: Why is it that the very simplest form of the theory gives such satisfactory results for many---indeed most---questions?
\end{xquote}

In the same year as Thomas--Fermi theory was published, just one year after the formulation of the quantum mechanics of Schrödinger, Hartree proposed a method to self-consistently solve many-particle quantum systems using the same simplified interaction term $\VH$ (already indexed with \q{H} for \q{Hartree} before) as in Thomas--Fermi theory but now as an operator on wave functions in the form of a mean field. As a major difference the kinetic term gets not simplified to a density functional but keeps its original form. The resulting Hamiltonian $H=T+\VH+\Vext$ thus consists of the three parts
\begin{equation}\label{hartree-hamiltonian}
T = -\frac{1}{2}\sum_{k=1}^N\Delta_k, \quad \VH = \frac{1}{2} \sum_{k=1}^N \int \frac{n(x)}{|x_k -
x|} \d x, \quad \Vext = \sum_{k=1}^N v(x_k).
\end{equation}
All three parts are sums of terms only depending on the individual particle positions, thus the whole Schrödinger equation $H \psi = E \psi$ is diagonalised and decouples into $N$ separate and identical one-particle Schrödinger equations on three-dimensional configuration space.
\[
\left(-\frac{1}{2} \Delta_k + \frac{1}{2} \int \frac{n(x)}{|x_k - x|} \d x + v(x_k)\right)
\varphi_k(x_k) = \varepsilon_k \varphi_k(x_k).
\]
The ground-state solution is then built together from the $N$ lowest orbitals $\varphi_k$ as a simple tensor product.
\[
\psi(\ushort{x}) = \varphi_1(x_1) \ldots \varphi_N(x_N)
\]
This \q{Hartree method} respects the early form of the Pauli exclusion principle in that every orbital contains one electron only, but not the later formulation demanding an antisymmetric wave function. The density and energy of the whole system can then be calculated as
\[
n(x) = \sum_{k=1}^N |\varphi_k(x)|^2, \quad E =
\sum_{k=1}^N \varepsilon_k.
\]
Now the desired consistency kicks in because of course one expects this density $n$ to be the same that entered the Hartree term above. The following iterative scheme is thus provided:

\begin{enumerate}\itemsep0em
    \item Start with a guessed $n$, e.g.~from Thomas--Fermi theory.
    \item Solve the one-particle Schrödinger equation to get $\{\varphi_k\}_{k = 1, \ldots, N}$.
    \item Calculate $n$ from that and start anew.
\end{enumerate}

To account for the antisymmetry of the wave function a Slater-determinant ansatz is used that leads to an additional exchange term like in Thomas--Fermi--Dirac theory and the method is then known under the names of Hartree and Fock. A major advantage over Thomas--Fermi like theories is the more accurate consideration of the kinetic energy, a feature also shared by the Kohn--Sham equations discussed next. Another shared feature is the inclusion of the interaction (and exchange) effects in one effective potential embedded in a self-consistent scheme. These effects only get approximated in Hartree--Fock theory, whereas the path to the Kohn--Sham equations lies in questioning whether all such interactions can be represented \emph{exactly} by an effective external potential.

\subsection{The self-consistent Kohn--Sham equations}
\label{sect-KS}

The Hamiltonian of the Hartree method described above \eqref{hartree-hamiltonian} had the form of a Schrödinger Hamiltonian without explicit interaction between the electrons, but approximated those effects by a mean field averaged over all electrons. The DFT description of such a non-interacting system (traditionally labelled by the index $s$) is given by minimisation of the HK functional
\[
E_s[n] = T_s[n] + \int v(x) n(x) \d x
\]
with
\[
T_s[n] = \langle \psi_s[n], T \psi_s[n] \rangle
\]
and $\psi_s$ the mapping from a given density to wave functions by solving a non-interacting Schrödinger equation. Variation goes over all non-interacting $v$-representable densities and one ends up with the exact ground-state density $n_s(x)$.

A critical question is now if for each interacting $v$-representable density $n$ an auxiliary potential $\vKS[n]$ can be found such that exactly the same density is reproduced by a non-interacting system. This hypothetical setting is called the \emph{Kohn--Sham (KS) system}. \cite{kohn-sham} More concisely one could ask if the sets of all interacting and non-interacting $v$-representable densities are equal. But even if one allows ensembles of states represented by density matrices (\emph{ensemble representability}) instead of only pure states this question remains open. Yet it can be shown that the sets of interacting and non-interacting ensemble $v$-representable densities are sufficiently similar, i.e., they are dense in each other. \cite[Th.~16]{van-leeuwen3} In the time-dependent case the extended Runge--Gross theorem (\autoref{ext-runge-gross-th}) yields a positive answer for time-analytic potentials.

If one extends the domain of $T_s$ to interacting $v$-representable densities then the HK variation principle from above can be formulated with $\vKS[n]$. This assumes functional differentiability of $E[n]$ with respect to the density in order to define this effective potential yielding a density from an interacting system.
\[
E[n] = T_s[n] + \int \vKS([n],x) n(x) \d x
\]
As in the Hartree--Fock method the effective potential consists of an interaction part, the external potential still present from the interacting system, plus---in contrast to the Hartree--Fock method---a correction term, the so-called \q{exchange-correlation potential}.
\[
\vKS([n],x) = \int \frac{n(y)}{|x - y|} \d y + v(x) +
\vxc([n],x)
\]
This $\vxc[n]$ is actually defined by the equation above to make it hold exactly, because $\vKS[n]$ follows uniquely from $n$ by the HK theorem. Like in the Hartree--Fock method the one-particle orbitals can be computed by solving the one-particle Schrödinger equation which leads to the same self-consistent scheme.
\begin{equation}\label{eq-ti-ks-se}
\left(-\onehalf \Delta_k + \vKS([n],x_k)\right) \varphi_k(x_k) = \varepsilon_k
\varphi_k(x_k).
\end{equation}
But the original problem has such been only shifted because the exact form of the exchange-correlation potential can only be approximated. Different approaches for that exist that \q{come from clever guesses, experience or are systematically constructed} \cite[ch.~13]{ullrich}, yet all are already a sophistication of the Hartree--Fock method. Following Kohn this is the place where actual physics enters DFT, as he told in his \citeasnoun{kohn-nobel}: \q{These approximations reflect the physics of electronic structure and come from outside DFT.}

\section{Time-dependent DFT}
\label{sect-tddft}

\begin{xquote}{Ilya Tokatly in \citeasnoun{marques}}
One of the most important conceptual achievements of DFT is the possibility to formulate the many-body problem in a form of a closed theory that contains only a restricted set of basic variables, such as the density [\ldots] In classical physics, a theory of this type is known for more than two centuries. This is classical hydrodynamics. In fact, the Runge-Gross mapping theorem in TDDFT proves the existence of the exact quantum hydrodynamics.
\end{xquote}

The straightforward extension of the principles of DFT to a time-dependent regime with a central theorem corresponding to that of Hohenberg--Kohn was laid out by \citeasnoun{runge-gross}. Many enhancements apply to more involved settings, including time-dependent current DFT with external vector-potential \cite{vignale-2004}, a multicomponent setup which respects nuclear motions such as phonon modes \cite[ch.~12]{marques} or more recent a generalisation to photonic degrees of freedom in cavity quantum electrodynamics \cite{ruggenthaler-qed-1,ruggenthaler-qed-2,flick-2015}.

A general overview of TDDFT is given in \citeasnoun{ullrich}, a nice non-expert survey in \citeasnoun{ullrich-2014}, and special topics are included in \citeasnoun{marques} which features parts of our work in chapter 9.

\subsection{Basics of the Runge--Gross theorem}
\label{sect-basics-rg}

The state of an $N$-particle system, usually assumed to consist of electrons, is written as $\psi(t,\ushort{x})$ and is usually assumed to be spatially (anti-)symmetric as well as normalised to 1 in $\H = L^2(\Omega^N)$. The evolution of such a state is determined by the time-dependent Schrödinger equation
\[
\i \partial_t \psi = H \psi
\]
with the self-adjoint Hamiltonian $H = T + W + V$ being constituted by a kinetic part
\[
T = -\frac{1}{2} \sum_{i=1}^N \Delta_i,
\]
an interaction part, which can---but need not---be determined by Coulombic repulsion
\[
W = \frac{1}{2} \sum_{\substack{i,j=1 \\ i \neq j}}^N w(x_i-x_j),\quad w(x) = \frac{1}{|x|},
\]
and an energy term from an external, real scalar potential $v$
\begin{equation}\label{op-ext-pot}
V = \sum_{i=1}^N v(t,x_i).
\end{equation}

The wave function $\psi$ determines the one-particle density just like in the time-independent case \eqref{def-n}.
\[
n(t,x) = N \int |\psi(t,x, \bar x)|^2 \d \bar x
\]
Additionally the one-particle current density with $\nabla$ only acting on the first particle position $x_1=x$ (because of the assumed symmetry it does not matter on which particle position it acts or which other coordinates get integrated out) can be defined as
\begin{equation}\label{j-def}
j(t,x) = N \int \Im\{\psi^*(t,x,\bar x) \nabla \psi(t,x,\bar x)\}  \d \bar x.
\end{equation}
Even in a stationary state this current can be non-zero, as for instance in the case of hydrogen eigenstates with magnetic quantum number $m \neq 0$ that have an azimuthal current. But its divergence $\nabla\cdot j$ is always zero in the case of an eigenstate because from the time-dependent Schrödinger equation one can easily derive the continuity equation
\begin{equation}\label{cont-eq}
\partial_t n = -\nabla\cdot j
\end{equation}
and thus $\nabla\cdot j = -\partial_t n = 0$.

Now it is exactly this change of density in time that we will be interested in, more specifically how it is steered by an external potential. Studying this will later lead to \eqref{eq-def-q}, the central equation of our approach to TDDFT. Yet for the first result on a possible mapping from (time-dependent) densities to potentials we do not need to go exactly that far.

In the following theorems we need the notion of \emph{analyticity} for functions defined on the real line, a concept we are already concerned with since the early \autoref{sect-note-non-analyticity}. It describes a function which has a representation as a Taylor series in a neighbourhood around every point of its domain. The \emph{identity theorem} for analytic functions tells us that if two such analytic functions are identical on some open subset or more generally on a set with accumulation point, they are identical everywhere. That means in turn that if the Taylor coefficients of two analytic functions are all identical at a point, the two functions are again identical everywhere. Further we say that a function is \emph{analytic at a point} if its Taylor series at this point converges in a neighbourhood of this point. The maximal radius of convergence then always hits a pole of the function, either on the real axis, or---more subtly---somewhere in the complex plain and thus invisible if we only consider real arguments. We are now prepared to state the famous Runge--Gross theorem \cite{runge-gross} if only as a conjecture.

\begin{conjecture}[Runge--Gross theorem]\label{runge-gross-th}
Given the initial wave function $\psi_0$ of an $N$-particle system there is a one-to-one mapping between densities and external potentials up to a merely time-dependent function. The potential is assumed to be time-analytic while the initial density $n_0$ must tend to zero at the border of $\Omega$ and fulfils $n_0 > 0$ almost everywhere. Both the potential and $\psi_0$ additionally must be sufficiently smooth in space.
\end{conjecture}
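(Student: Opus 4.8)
The plan is to follow the original argument of Runge and Gross, which proceeds by Taylor-expanding everything in time around $t=0$ and exploiting the assumed analyticity. First I would set up the two candidate potentials $v$ and $v'$, both time-analytic around $t=0$, differing by more than a purely time-dependent function; by subtracting a suitable function of $t$ alone we may assume $v(0,x) \neq v'(0,x)$ on a set of positive measure, or — if they agree at $t=0$ — that the first nonvanishing Taylor coefficient $\partial_t^k(v-v')|_{t=0}$ is nonzero on such a set. Writing $u_k(x) = \partial_t^k (v - v')(0,x)$, the goal is to show the associated current densities $j([v],t,\cdot)$ and $j([v'],t,\cdot)$ differ, hence by the continuity equation \eqref{cont-eq} the densities differ.

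The key computation is to differentiate the current density equation of motion repeatedly in time and evaluate at $t=0$, where both systems share the same initial state $\psi_0$. The first nontrivial order gives, by a standard manipulation using Heisenberg's equation of motion for the current operator together with the identical initial state,
\[
\partial_t^{k+1}\big( j([v],t,\cdot) - j([v'],t,\cdot)\big)\big|_{t=0} = -\, n_0 \, \nabla u_k,
\]
up to lower-order terms that vanish by the inductive hypothesis that all earlier orders agree. Since $n_0 > 0$ almost everywhere and $u_k \not\equiv 0$, the right-hand side is not identically zero; I would argue $\nabla u_k \neq 0$ as an $L^1_{\mathrm{loc}}$-function because a nonconstant function has nonvanishing gradient, and multiplication by the a.e.-positive $n_0$ cannot kill it. Therefore the currents have distinct Taylor coefficients at $t=0$, so by the identity theorem for analytic functions (invoked in the time variable, which is why time-analyticity of $v$ is essential) the currents differ on every time interval. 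Then a second step is needed: one must pass from $j$ to $n$ via the continuity equation, which requires showing that $\nabla\cdot(n_0 \nabla u_k) \neq 0$; here one integrates by parts against $u_k$ itself, obtaining $\int n_0 |\nabla u_k|^2 \,dx > 0$, where the boundary term vanishes because $n_0$ tends to zero at $\partial\Omega$. This shows $\partial_t^{k+2}(n([v],t,\cdot) - n([v'],t,\cdot))|_{t=0} \neq 0$, completing the argument.

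The main obstacle — and the reason this is stated as a conjecture rather than a theorem — is that none of these formal manipulations is mathematically controlled without strong regularity assumptions, and even then the chain of reasoning hides genuine gaps. Concretely: the Heisenberg equation of motion for the current operator involves the kinetic, interaction, and potential terms acting on $\psi_0$, and one must know that $\psi_0$ lies in the domain of arbitrarily high powers of the Hamiltonian for all the Taylor coefficients to make sense — this is why "sufficiently smooth in space" is demanded, but making it precise requires the Sobolev-regularity machinery of \autoref{th-sobolev-regularity} and controlling how the singular Coulomb interaction $W$ interacts with repeated commutators. Worse, the whole scheme presupposes that the solution $\psi([v],t,x)$ is itself analytic in $t$, which we saw in \autoref{sect-kovalevskaya} fails dramatically even for very regular initial data under free evolution; so the "Taylor expansion of the density" may simply not exist. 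A rigorous treatment would need to either restrict to the narrow class of initial states that are finite linear combinations of eigenfunctions (cf.~\eqref{Psi-sol-eigenfunctions}), or abandon the power-series approach entirely in favor of the fixed-point construction alluded to in the abstract. The argument that $\nabla u_k \neq 0 \Rightarrow \nabla\cdot(n_0\nabla u_k)\neq 0$ is the cleanest part, resting only on the integration-by-parts identity $\int n_0|\nabla u_k|^2 > 0$ and the boundary decay of $n_0$, but it also quietly assumes $u_k$ and $n_0$ are regular enough for that integration by parts to be valid.
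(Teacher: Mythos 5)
Your proposal reproduces the paper's formal proof essentially step for step: the Taylor expansion of the current difference yielding $j^{(k+1)}-j'^{(k+1)}=-n_0\nabla(v^{(k)}-v'^{(k)})$, the passage to the density via the continuity equation, and the integration-by-parts argument $\langle f,\rho^{(k+2)}\rangle=-\langle\nabla f,n_0\nabla f\rangle<0$ using $n_0>0$ a.e.\ and the boundary decay of $n_0$. Your diagnosis of the gaps (domain/regularity issues for repeated commutators, possible failure of time-analyticity of $\psi$ as in the Kovalevskaya example, and the convergence-radius subtlety) is also exactly why the paper labels this a conjecture with only a formal proof.
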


Note that by virtue of the Hohenberg--Kohn theorem (\autoref{hk-th}) in the case of a non-degenerate ground state the $\psi_0$ is uniquely determined by the one-particle density $n$ and thus can be removed as a separate prerequisite in the theorem.

Later we will change the condition that the initial density $n_0$ goes to zero at the border of $\Omega$ over to the potentials. This is primarily due to mathematical considerations and erases the need to speak of a one-to-one mapping modulo a merely time-dependent part for the potentials. Anyway the adding of an exclusively time-dependent potential will only change the phase of the wave function and thereby has no effect on observables. From the Runge--Gross theorem (\autoref{runge-gross-th}) follows the very essence of time-dependent density functional theory.

\begin{corollary}\label{runge-gross-cor}
Since the density determines the potential from which we can calculate the complete wave function by solving the Schrödinger equation, any observable is a functional of the one-particle density and the initial state $\psi_0$.
\end{corollary}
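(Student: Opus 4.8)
The statement is an immediate corollary of the Runge--Gross theorem (\autoref{runge-gross-th}) together with the existence and uniqueness of Schr\"odinger solutions established earlier (e.g.~\autoref{th-schro-dyn-td} or \autoref{Uv-bounded}), so the plan is essentially to chain these results. First I would fix the initial state $\psi_0$ as given. By \autoref{runge-gross-th} the map $v \mapsto n$ from (time-analytic, spatially smooth) external potentials to one-particle densities is injective modulo a purely time-dependent additive function; hence a given density $n$ singles out the potential $v$ up to such a gauge freedom. The key observation is that this residual freedom is harmless: adding a function $c(t)$ to $v$ multiplies $\psi$ by the global phase $\exp(-\i N \int_0^t c(s)\,\d s)$, which leaves every expectation value $\langle\psi(t),A\psi(t)\rangle$ and the density itself unchanged. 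So the density determines the physically relevant potential.

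Next I would invoke the forward direction of the density-potential correspondence, i.e.~the Schr\"odinger dynamics developed in this chapter: given the recovered potential $v$ (in the appropriate Lipschitz--Sobolev--Kato class, \autoref{def-sobolev-kato-lipschitz}, or the Banach space $\V$ of \autoref{def-V}) and the fixed initial state $\psi_0$, the Cauchy problem \eqref{cauchy-problem-se} has a unique solution $\psi([v],t) = U([v],t,0)\psi_0$ by \autoref{th-schro-dyn-td} (uniqueness already from \autoref{lemma-uniqueness}). Composing, one gets a well-defined map $n \mapsto v \mapsto \psi$, so $\psi$ is a functional of $(n,\psi_0)$. Any observable expectation value $\langle A\rangle_{\psi(t)} = \langle \psi([v],t), A\,\psi([v],t)\rangle$ is then a functional of $n$ and $\psi_0$ as well; the same holds for quantities not representable as operators, such as the density or current, since these are continuous expressions in $\psi([v],t)$.

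The main obstacle — and the reason the parent statement is flagged as a corollary to a \emph{conjecture} rather than a theorem — is that the Runge--Gross step is not fully rigorous: it relies on analyticity in time of $v$ (so that the density's Taylor coefficients at $t=0$ can be matched order by order via the ``divergence of local forces'' equation), and on the spatial non-vanishing/decay hypotheses on $n_0$ and regularity of $\psi_0$ that feed into the unique-continuation arguments of \autoref{sect-ucp}. One must also be careful that the potential reconstructed from $n$ actually lies in a class for which the forward Schr\"odinger problem is well-posed, so that the composition $n \mapsto v \mapsto \psi$ makes sense; matching the potential classes on the two sides of the correspondence is the delicate bookkeeping point. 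Granting the conjecture and the well-posedness results already proved, however, the corollary itself requires only the phase-invariance remark and the composition of two maps, so no further calculation is needed.
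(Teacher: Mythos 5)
Your proposal is correct and follows essentially the same route as the paper, which treats the corollary as an immediate chaining of the Runge--Gross injectivity with the forward solution of the Schr\"odinger equation; the paper even makes your phase-invariance remark verbatim just before stating the corollary ("the adding of an exclusively time-dependent potential will only change the phase of the wave function and thereby has no effect on observables"). Your additional caveats about matching potential classes and the conjectural status of the Runge--Gross step are sensible but go beyond what the paper records here.
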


Note that dependence is not only on the density $n$ at a given time instant but also on all previous times back to the fixed initial state. So bear in mind that in general a functional notation $f[n]$ includes time-dependence of $n$, it is history-dependent and thus has a \q{memory} effect, else the notation would be $f[n(t)]$.

\subsection{The classical Runge--Gross proof}
\label{sect-runge-gross-proof}

Conveniently the $k$-th time-derivative of a sufficiently smooth $f(t)$ at time $t=0$, i.e., the $k$-th Taylor coefficient at $0$, will be noted shortened as $f^{(k)} = \partial_t^k f(0)$. This way the evaluation of $f$ at $t=0$ can be denoted as $f^{(0)}$ but we will more commonly write this as $f_0 = f(0)$.

\begin{proof}[Formal proof of \autoref{runge-gross-th}]
By going through the tedious task of solving the Schrödinger equation at least in thought we have a map $v \mapsto n$ and we now have to prove that it is injective. First we show that starting from the same initial state $\psi_0 = \psi'_0$ two potentials $v, v'$ that allow a sufficiently regular solution of the time-dependent Schrödinger equation (see \autoref{sect-regularity-sobolev}) and differ by more than a merely time-dependent function necessarily lead to different current densities $j \neq j'$. As the Hamiltonians of the two systems only differ in the potential, most terms cancel when subtracting the time-derivatives of the currents at time $t=0$. Physically this describes the difference of the force densities.
\begin{align*}
\partial_t (j &- j') |_{t=0} = j^{(1)}-j'^{(1)}= \\
&= N \left. \int \Im\{ (\partial_t \psi) \nabla \psi^* + \psi \nabla \partial_t \psi^* - (\partial_t \psi') \nabla \psi'^* - \psi' \nabla \partial_t \psi'^* \}\d\bar{x} \,\right|_{t=0} \\
&= N \left. \int \Re\{ V\psi \nabla \psi^* - \psi \nabla (V\psi^*) - V'\psi' \nabla \psi'^* + \psi' \nabla (V'\psi'^*) \} \d\bar{x}\,\right|_{t=0} \\
&= -N \int |\psi_0|^2 \nabla (v_0 - v'_0) \d\bar{x}\\ 
&= -n_0 \nabla(v_0 - v'_0)
\end{align*}
(Note that from the sum in \eqref{op-ext-pot} only the first term is left, because $\nabla$ only affects the first particle position.)\\
We can repeat this procedure and find the Taylor coefficients of $j-j'$ at $t=0$ analogously \q{after some straightforward algebra} \cite{runge-gross} that is omitted in most presentations as well as here. See \citeasnoun{fournais-2016} for a more elaborated mathematical treatment.
\begin{equation}\label{j-taylor-coeff}
j^{(k+1)} - j'^{(k+1)} =-n_0 \nabla (v^{(k)} - v'^{(k)})
\end{equation}
Now there must be a $k$ where the term $v^{(k)} - v'^{(k)}$ is $x$-dependent. Else for all $k$ the difference would be identical for all $x$ and therefore it would hold $v(t,x) - v'(t,x) = c(t)$, which was excluded by assumption.\footnote{Note that it is not sufficient for the original potential to be Taylor-expandable only at $t=0$ because maybe the $x$-dependent difference between $v$ and $v'$ only appears after the radius of convergence of this Taylor series. The original formulation of \citeasnoun{runge-gross} states that $v$ \q{can be expanded into a Taylor series with respect to the time coordinate around $t=t_0$.} This does not unambiguously imply that this series has an infinite radius of convergence. If so, then $v$ would be everywhere holomorphic (entire) and thus also analytic, whereas analyticity alone would be sufficient for the desired identity theorem to hold. In \citeasnoun{maitra-2002} the condition on the potential is again relaxed to only \emph{piecewise} time-analytic functions.} Thus the right hand side of \eqref{j-taylor-coeff} does not vanish for some $k$, which tells us the currents $j,j'$ cannot be equal. This already leads us to the conclusion that there must be a one-to-one correspondence $v \leftrightarrow j$.\\
If we apply $\nabla\cdot$ to \eqref{j-taylor-coeff} and use the continuity equation \eqref{cont-eq} we can take this result over to the density difference $\rho = n-n'$.
\begin{equation}\label{n-taylor-coeff}
\rho^{(k+2)} = n^{(k+2)} - n'^{(k+2)} = \nabla\cdot \left( n_0 \nabla (v^{(k)} - v'^{(k)})\right)
\end{equation}
Arguing as above there must be some $k$ where $f(x) = v^{(k)}(x) - v'^{(k)}(x) \neq const$. To proceed we form the $L^2(\Omega)$ inner product with $f$.
\[
\langle f, \rho^{(k+2)} \rangle = \langle f, \nabla\cdot (n_0 \nabla f) \rangle
\]
The initial density $n_0$ vanishes at the border of $\Omega$ so we can apply integration by parts without a boundary term to get
\begin{equation}\label{rg-scalar-prod}
\langle f, \rho^{(k+2)} \rangle = -\langle \nabla f, n_0 \nabla f \rangle.
\end{equation}
Because of $f(x) \neq const$ it holds $\nabla f \neq 0$ and together with $n_0 > 0$ almost everywhere\footnote{Instead of this condition \citeasnoun{runge-gross} assumed $n_0$ to be \q{reasonably well behaved} that is they exclude the possibility that at every point $\nabla f = 0$ \emph{or} $n_0 = 0$, although misleadingly they state it just the other way round: \q{[W]e merely have to exclude that the initial density vanishes in precisely those subregions of space where $[f]=const$, [\ldots]}.} it clearly follows that $\langle \nabla f, n_0 \nabla f \rangle > 0$. Thus at least one Taylor coefficient of $\rho=n-n'$ is non-zero and we have different densities.
\end{proof}

Instead of $n_0 > 0$ one could demand $\langle \cdot, n_0 \cdot \rangle$ being an inner product on an adequate weighted Hilbert space. Such a more general line of thought is followed in our later considerations in \autoref{sect-weighted-soblev}. This condition of course relates again to the unique continuation property discussed in \autoref{sect-ucp} if the initial state is considered as an eigenstate.

\subsection{The Kohn--Sham scheme and the extended Runge--Gross theorem}
\label{sect-ks-rg}

\autoref{runge-gross-cor} tells us that the initial state $\psi_0$ and the density $n$ in time already fix all possibly time-dependent observables. It seems now practical in calculating properties of a many-particle quantum system to switch from an interacting system to one without interactions, setting $W=0$, with exactly the same density $n$ and a suitable initial state $\phi_0$ which therefore yields the very same observables. Just like in the time-independent case this auxiliary system is called the \emph{Kohn--Sham system}. One gets it by choosing just the right external potential $\vKS$, of course different from the \q{real} one, that simulates all interaction effects and is unique up to a merely time-dependent function. That this is indeed possible is---if not proved---at least hinted at by the Runge--Gross theorem. What is still necessary to show is that the given density from an interacting system is reproducible by the potential $\vKS$ applied to a non-interacting one. This property is usually called \emph{non-interacting $v$-representability}: Is an interacting $v$-representable density also non-interacting $v$-representable?

The main advantage in actually calculating observables is that the Schrödinger equation can then be factorised into $N$ independent Schrö\-din\-ger equations each limited to the one-particle sector just like in \eqref{eq-ti-ks-se} in the time-independent case.
\[
\i \partial_t \varphi_i = \left( -\onehalf \Delta + \vKS[n,\phi_0,\psi_0] \right) \varphi_i
\]
A mathematical treatment of the Schrödinger equation above  using semigroup theory was conducted by \citeasnoun{jerome-2015} with solutions in $\Cont([0,T],H_0^1(\Omega))$ taking $\vKS[n]$ in the typical form usually applied in approximations. Note that also the density of the system is now a simple sum of orbital densities.
\[
n(t,x) = \sum_{i=1}^N |\varphi_i(t,x)|^2
\]
In most cases the KS initial state $\phi_0$ is taken as a Slater determinant of single-particle orbitals $\varphi_i(0,x)$. The KS potential $\vKS$ is defined to incorporate the previous external potential $v$ from the interacting system, the usual Hartree potential $\vH$ to at least come up for some effects of the interaction, and the missing part, the so-called exchange-correlation potential $\vxc$, which is again actually \emph{defined} this way.
\[
\vKS[n,\phi_0,\psi_0] = v + \vH[n] + \vxc[n,\phi_0,\psi_0]
\]
\[
\vH([n],t,x) = (n(t) * w)(x) = \int n(t,x') w(x-x') \d x'
\]
Of course the exchange-correlation potential is not known exactly but different approximations exist to accommodate for different cases. One remaining subtlety is the dependence of the KS potential on the initial states. In cases where $\psi_0$ and $\phi_0$ are non-degenerate ground states one can get rid of this dependency by invoking the Hohenberg--Kohn theorem (\autoref{hk-th}) because the density alone already encodes all the information needed.

Bear in mind that the exchange-correlation potential $\vxc$ in general not only depends on the density $n(t)$ at a given time, but also on all previous times which directly follows from the note given after \autoref{runge-gross-cor}. If $\vxc([n],t)$ is approximated by a functional $\vxc[n(t)]$ without time-history this is called an \emph{adiabatic} approximation.

We will now aim at answering the question of non-interacting $v$-representability posed at the start of the section affirmatively, the following conjecture and its formal proof are due to \citeasnoun{van-leeuwen}. An attempt of a further mathematical discussion giving expressions for the involved domains of potentials and densities was made in \citeasnoun{tddft1} and really kicked off the studies that led to the present work.

\begin{conjecture}[extended Runge--Gross theorem]\label{ext-runge-gross-th}
Given a system with an initial wave function $\psi_0$ evolving under the interaction potential $w$ and external potential $v$ it is possible to choose a unique $v'$ and an initial wave function $\psi'_0$ such that this system exactly reproduces the same density $n$ for a different interaction $w'$. Here the density $n$ is assumed to be time-analytic and sufficiently smooth in space.
\end{conjecture}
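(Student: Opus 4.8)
The plan is to follow the strategy of \citeasnoun{van-leeuwen}, recast in the functional-analytic language built up above. The pivot is the \q{divergence of local forces equation} \eqref{eq-div-force-density}: differentiating the continuity equation \eqref{cont-eq} once more in time and inserting the Schr\"odinger equation, one obtains for an evolution under external potential $v$ and interaction $w$ an identity of the schematic shape
\begin{equation*}
\partial_t^2 n = \nabla\cdot(n\nabla v) + q([w,\psi],t),
\end{equation*}
where the internal-forces term $q$ gathers the kinetic stress together with the interaction contribution and involves spatial derivatives of $\psi$ up to fourth order (see \autoref{sect-internal-forces} and \eqref{eq-def-q}). The decisive move is to read this not as an equation for $n$ but as one for $v$: given a target density $n$ and a state $\psi$, the equation $\nabla\cdot(n\nabla v) = \partial_t^2 n - q$ has, at each fixed time, a solution $v$ unique modulo an additive constant, by \autoref{lax-milgram} applied to the weighted bilinear form $Q(f,g)=\langle\nabla f, n\,\nabla g\rangle$ on a suitable weighted Sobolev space, which is coercive precisely because $n>0$ almost everywhere --- the same positivity that drives the classical argument \eqref{rg-scalar-prod}.

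First I would set up the time-Taylor expansion around $t=0$, in the spirit of the Cauchy--Kovalevskaya reasoning of \autoref{sect-kovalevskaya}. Writing $n(t)=\sum_k n^{(k)} t^k/k!$, $v'(t)=\sum_k v'^{(k)} t^k/k!$ and letting $\psi'$ be governed by the Schr\"odinger equation with the \emph{new} interaction $w'$ and the sought potential $v'$, insertion into the force-balance identity produces at each order $k$ an elliptic equation $\nabla\cdot(n_0\nabla v'^{(k)}) = g_k$, where $g_k$ depends only on $n^{(0)},\dots,n^{(k+2)}$ and on $\psi'^{(0)},\dots,\psi'^{(k-1)}$; this fixes $v'^{(k)}$ uniquely up to a constant, and the Schr\"odinger recursion then yields $\psi'^{(k+1)}$ from the data computed so far. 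The initial state $\psi'_0$ is constrained, not free: it must reproduce $n_0$ and carry a current with divergence $-n^{(1)}$, i.e.\ it must match the first two density coefficients; that such $\psi'_0$ exist, with ample remaining freedom, is established by the \citeasnoun{harriman} construction. Uniqueness of $v'$ (modulo a merely time-dependent function) then follows from this order-by-order uniqueness together with the identity theorem for analytic functions, exactly as in \autoref{sect-runge-gross-proof}.

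The hard part will be convergence: one must majorize the coefficients $v'^{(k)}$ and $\psi'^{(k)}$ so that the formal series define genuine analytic functions on an interval $[0,\tau)$. This calls for combining a bound on the inverse elliptic operator $g\mapsto v$ solving $\nabla\cdot(n_0\nabla v)=g$ --- which needs control of $n_0$ and of $1/n_0$ in appropriate norms --- with the higher-Sobolev regularity of \autoref{th-sobolev-regularity} and the growth estimates \eqref{eq-schro-evolut-estimate-2m}, since $q$ costs four spatial derivatives and each recursion step therefore has to be run in the scale $H^{2m}$. This is precisely the kind of analytic bookkeeping anticipated in \autoref{sect-note-non-analyticity}, and it is exactly where the hypotheses of time-analyticity and spatial smoothness of $n$ become indispensable; without them the majorant method breaks down, just as the naive exponential series does in \autoref{sect-kovalevskaya}. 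Once analyticity on a small interval is secured, I would extend to all of $[0,T]$ by re-expanding around a sequence of base points and patching the pieces via the uniqueness statement, which delivers the globally defined $v'$ and $\psi'_0$ and completes the proof.
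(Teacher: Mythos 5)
Your proposal follows essentially the same route as the paper's own (formal) proof: derive the divergence-of-local-forces equation \eqref{eq-div-force-density}, constrain $\psi'_0$ to reproduce $n_0$ and $n^{(1)}$, solve the Sturm--Liouville equation $-\nabla\cdot(n_0\nabla v'^{(k)})=g_k$ order by order via the Lax--Milgram machinery on weighted Sobolev spaces, and conclude uniqueness modulo a constant from the identity theorem. Up to that point the two arguments coincide.

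The genuine gap is the convergence step, which you correctly identify as \q{the hard part} but then only sketch; it is precisely the step that the paper flags as a serious, unresolved flaw and that keeps the statement at the level of a conjecture. Your proposed majorant argument runs into two concrete obstructions. First, there is no Cauchy--Kovalevskaya theorem available here because the time derivative does not appear at highest order in the Schr\"odinger equation (\autoref{sect-kovalevskaya}); the example of the Lorentzian initial state shows that even entire-in-space, analytic data can produce a formal time series with zero radius of convergence, so analyticity of $n$ alone does not majorize the recursion. Second, the recursion consumes spatial regularity at every order: $q'^{(k)}$ requires increasingly many spatial derivatives of $\psi'$ (each Heisenberg commutator with $H$ costs two), while the inverse Sturm--Liouville operator returns $v'^{(k)}$ only in $H^1_0(\Omega,n_0)$ or, under the stronger hypotheses of \autoref{cor-strong-lm-sol}, in $H^2$ --- strictly less than the $W^{2(m+1),\Sigma}$-type control needed to feed $v'^{(k)}$ back into the Schr\"odinger recursion and the estimates \eqref{eq-schro-evolut-estimate-2m}. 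This mismatch is the \q{connection problem} discussed in the final chapter, and it blocks the uniform coefficient bounds your majorant method would need. A smaller point: the Harriman construction by itself only reproduces $n_0$; matching $\partial_t n(0)=-\nabla\cdot j(0)$ as well requires an additional phase adjustment of the orbitals, which should be stated (cf.~\citeasnoun{van-leeuwen2}). Absent a resolution of the convergence issue, your argument establishes the same formal recursion as the paper, not a proof.
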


We want to consider two special cases of auxiliary external potentials $w'$ right away. First the case from the Kohn--Sham scheme where we put $w'=0$, i.e., switching off all interactions between the electrons. By the theorem above we can conclude that there is always an external potential $v' = \vKS$ to give a non-interacting model for a fixed density. This is exactly the answer to the raised question: Under the given conditions (and ignoring the  concerns raised below) an interacting $v$-representable system is always non-interacting $v$-representable.

If we further consider $w=w'$ and $\psi_0=\psi'_0$ the theorem exactly reproduces the main conclusion from the Runge--Gross theorem (\autoref{runge-gross-th}) and is therefore aptly named an extension of it.

\begin{proof}[Formal proof]
Similar to the proof of \autoref{runge-gross-th} we start by calculating the force density $\partial_t j$ for a given system directly from the definition of the current \eqref{j-def}.
\[
\partial_t j = N \int \Im\{(\partial_t\psi^*) \nabla \psi + \psi^* \nabla \partial_t\psi\} \d \bar{x}
\]
Next we put in the Schrödinger equation for the time-derivatives of $\psi$.
\begin{align*}
\partial_t j &= N \int \Re\{(H\psi^*) \nabla \psi - \psi^* \nabla H\psi\} \d \bar{x} \\
&= N \int \Re\{(H\psi^*) \nabla \psi - \psi^* H \nabla\psi\ - \psi^* (\nabla H)\psi\} \d \bar{x}
\end{align*}
If we use that for $H = T+W+V$ given in \autoref{sect-basics-rg} both $W$ and $V$ are real multiplicative operators and can be drawn before $\psi^*$ the first two terms in the expression given above just form $2\i$ times the imaginary part of $W\psi^* \nabla \psi$ and $V\psi^* \nabla \psi$ respectively and will thereby vanish when taking the real part. Furthermore $T$ is not explicitly spatially-dependent which makes the last term with $(\nabla T)$ vanish too. Like in the proof of \autoref{runge-gross-th} we can write $\nabla V = \nabla v$ because $\nabla$ only affects the first particle position.
\begin{equation}\label{eq-force-density-1}
\partial_t j = N \int \left( \Re\{(T\psi^*) \nabla \psi - \psi^* T \nabla\psi\} - (\nabla W) |\psi|^2 - (\nabla v) |\psi|^2 \right)  \d \bar{x}
\end{equation}
The external potential term is simply integrated to $-n \nabla v$. We then apply the divergence operator $\nabla\cdot$ to get $\nabla\cdot \partial_t j = -\partial_t^2 n$ with \eqref{cont-eq} on the left hand side. The remaining integral we just collect into an expression $q$ that includes all the internal forces.
\begin{equation}\label{eq-def-q}
q = -N \nabla\cdot \int \left( \Re\{(T\psi^*) \nabla \psi - \psi^* T \nabla\psi\} - (\nabla W) |\psi|^2 \right) \d \bar{x}
\end{equation}
The equation in this form can be named the \q{divergence of force density equation} (more commonly as \q{divergence of local forces} or very modestly \q{fundamental equation of TDDFT}). It directly relates the external potential and the density and will therefore play a crucial role in all further investigations.
\begin{equation}\label{eq-div-force-density}
\boxed{
-\nabla\cdot (n \nabla v) = q - \partial_t^2 n
}
\end{equation}
The initial wave function $\psi'_0$ of the auxiliary system must be chosen such that it yields $n_0$ and $n^{(1)}$ correctly as we are trying to reproduce this given density. The latter also implies that the initial momenta of both systems are the same, else only an infinite force could prevent the two systems from evolving differently. \cite{van-leeuwen2} The higher orders of the Taylor expansion of $n$ are matched automatically as we can relate them by \eqref{eq-div-force-density} to the lesser orders.\\
We are now ready to show that we can construct a $v'$ for a second system with different interaction such that $n=n'$. As $n$ is assumed to be analytic it is sufficient to assume $n^{(k)}=n'^{(k)}$ for all $k$. In the same fashion it is sufficient to get the Taylor coefficients $v'^{(k)}$ do determine $v'$. In the zeroth order it is enough to set $t=0$ in \eqref{eq-div-force-density} for the primed system.
\begin{equation}\label{v-coeff-0}
-\nabla\cdot (n_0 \nabla v'_0) = q'_0 - n^{(2)}
\end{equation}
The $q'_0$ is known because we can calculate it as an expectation value of the initial wave function $\psi'_0$. Given equation \eqref{v-coeff-0} we solve it for $v'_0(x) = v'(0,x)$.\footnote{That this is indeed possible is by no means clear at this stage. Indeed there is a theory about the possibility of unique solutions to such Sturm--Liouville problems which makes up a considerable part of \autoref{ch-fp}.}\\
The higher terms now follow if we differentiate \eqref{eq-div-force-density} $k$ times to get a recursion relation between density coefficients and potential coefficients.
\begin{equation}\label{v-coeff-recursion}
-\nabla\cdot (n_0 \nabla v'^{(k)}) = \sum_{l=0}^{k-1} \left( \begin{array}{c} k \\ l \end{array} \right) \nabla\cdot (n^{(k-l)} \nabla v'^{(l)}) + q'^{(k)} - n^{(k+2)}
\end{equation}
Here the term $q'^{(k)}$ can be calculated by successive applications of the Heisenberg equation for the second system to the expectation value of an operator involving an alternative formulation of \eqref{eq-def-q} as noted below above \eqref{eq-tensor-q}. This is indeed possible because the Hamiltonian for the primed system at time $t=0$ is already defined with $v'_0$ from \eqref{v-coeff-0}. In \eqref{v-coeff-recursion} we now solve for $v'^{(k)}$ for all $k \geq 1$. This gives us all orders of a supposed Taylor expansion of the desired external potential $v'$ for the auxiliary system.
\end{proof}

Note a serious flaw in the last argument, which is already cautiously indicated by the use of the word \q{supposed}. Because at this point we do not know for sure if this series has a particularly large or even non-zero radius of convergence. The typical way of thinking seems to be that a non-analytic potential can only produce a non-analytic density---like the cusps in the density from singularities in the potential mentioned in \autoref{sect-analyticity-eigenstates} for eigenstates---and thus, as the density is assumed to be analytic, the potential must be as well. This is not in conflict with the example for non-analytic wave functions from analytic initial states under free evolution in \autoref{sect-kovalevskaya}. But still the whole reliance on analyticity of the substantial quantities is considered unfavourable. This was already pointed out \autoref{sect-note-non-analyticity} from a more epistemological viewpoint and will be criticised again in the following section on limitations of the given proofs.

\subsection{Limitations of the Runge--Gross theorems}

The obvious restriction of the proofs given above is the limitation to time-analytic potentials and/or densities that might even prove critical in the extended theorem as just mentioned above. Real time-analyticity seems unnatural if one considers that by this restriction a physical quantity is fixed for all times if one knows its value and that of all derivatives at a single instant. Especially if applied to a quantity like the external potential which is often thought of being controlled by an experimentalist with her own free will from the outside, this fact seems a bit bewildering.

Additionally in \citeasnoun{maitra-2010} the concern is raised, that non-ana\-lytic\-ities of the initial wave function or the potential in space may sequentially cause non-ana\-lytic\-ities in time. This is demonstrated with a few examples in one spatial dimension with resulting non-time-analytic densities. Only very recently \citeasnoun{fournais-2016} scrutinised the classical Runge--Gross proof even more and gave expressions for possible classes of initial states and external potentials that turn out to be quite restrictive if singular interactions are considered.

One pursuit to relax the condition of analyticity for the external potential was given by \citeasnoun{van-leeuwen2}. The Runge--Gross theorem is therein proved for Laplace transformable switch-on potentials and systems initially in the ground state. Note that some density variations produced by a perturbation of the external potential are ruled out, because they can never be created by finite perturbations. An example of such a density variation would be one which is zero on an area of space of positive measure. This limitation already showed up in the classical Runge--Gross proof and will occur again in our later considerations.

A different approach to non-analytic potentials is pursued in \autoref{sect-dipole} by restricting the potentials to a rather special type where the time-dependent part is always of the form of a dipole field. But the main path to a generalisation will be shown in the next big chapter on a fixed-point variant of the Runge--Gross proof.

The other major condition in the classical Runge--Gross theorem (\autoref{runge-gross-th}) is towards the initial density being strictly positive almost everywhere and zero at the boundary. The zero boundary condition is clearly assumed to hold for any finite system under consideration, the positivity condition is also seen relatively uncritical as was already noted below the proof of \autoref{runge-gross-th}.

\subsection{Reformulation of the internal forces term}
\label{sect-internal-forces}

Because it will be relevant for later sections we want to devote some more time on reformulating the $q$-term defined in \eqref{eq-def-q}. It actually captures the whole non-local properties of the potential and does not only depend on the density at one time-instant but also on all previous ones. This is expressed by the dependence on the wave function $\psi[n]$ and reflects the so-called memory effects. For the purpose of a de facto calculation of a potential corresponding to a given density, this implicit dependence will be circumvented by a fixed-point procedure in the following chapter.

We have defined 
\[
q = -N \nabla\cdot \int_{\bar{\Omega}}  \left( \Re\{(T\psi^*) \nabla \psi - \psi^* T \nabla\psi\} - (\nabla W) |\psi|^2 \right) \d \bar{x}
\]
and reintroduce
\[
T = -\frac{1}{2} \sum_{i=1}^N \Delta_i, \quad
W = \frac{1}{2} \sum_{\substack{i,j=1 \\ i \neq j}}^N w(x_i-x_j).
\]
In the first term $(T\psi^*) \nabla \psi$ all the parts of $T$ with $i\geq 2$ can be moved over to $\nabla \psi$ invoking integration by parts and thus cancel with the following term. The remaining part is $-\onehalf \Delta$ acting on $x=x_1$ because it is not integrated over in the definition of $q$. In $\nabla W$ all terms not involving $x=x_1$ vanish anyway and if we put in the additional but uncontroversial symmetry assumption $w(x)=w(-x)$, the formal statement of Newton's third law, the leftover is equal to
\[
\nabla W = \sum_{j=2}^N \nabla w(x-x_j) = \sum_{j=2}^N \nabla_j w(x-x_j).
\]
Note that the gradient is now with respect to the particle position $x_j$ instead of $x=x_1$ which is clearly equivalent because of the symmetric character of the interaction potential. This means we can use integration by parts as well and move the gradient over to $|\psi|^2$.
\begin{equation*}\label{eq-q-term-1}
q = -N \nabla\cdot \int_{\bar{\Omega}} \left( -\onehalf \Re\{(\Delta\psi^*) \nabla \psi - \psi^* \Delta \nabla\psi\} + \sum_{j=2}^N w(x-x_j) \nabla_j|\psi|^2 \right) \d \bar{x}
\end{equation*}
In the final step we apply the outside divergence to the individual terms. In taking the real part the terms $(\nabla \Delta \psi^*)\cdot \nabla \psi - (\nabla \psi^*) \cdot\Delta \nabla \psi$ vanish. In the term involving the interaction potential the divergence is again taken as $\nabla_j$ and brought over to $\nabla_j|\psi|^2$. We thus have
\begin{equation}\label{eq-q-term-2}
q = N \int_{\bar{\Omega}} \left( \onehalf |\Delta\psi|^2 - \onehalf\Re\{ \psi^* \Delta^2 \psi\} + \sum_{j=2}^N w(x-x_j) (\nabla_j-\nabla) \cdot \nabla_j|\psi|^2 \right) \d \bar{x}.
\end{equation}
We will prove that we can give conditions on $\psi$ and $w$ such that $q \in L^1(\Omega)$ which is somehow expected for an expectation value, but also $q \in L^2(\Omega)$ which will be an important ingredient in \autoref{sect-appl-lax-milgram}. Note the additional assumption $w(x)=w(-x)$ which was needed to derive the special form \eqref{eq-q-term-2} that serves as a basis in both proofs.

\begin{lemma}\label{lemma-q-L1}
If $\Omega \subseteq \R^3$, $\psi \in H^4(\Omega^N)$, and $w \in L^2(\R^3) + L^\infty(\R^3)$ with $w(x)=w(-x)$ then $q \in L^1(\Omega)$.
\end{lemma}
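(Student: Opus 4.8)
The plan is to show that each of the three summands in the reformulated expression \eqref{eq-q-term-2} for $q$ produces an $L^1(\Omega)$ function, and then conclude by the triangle inequality for the $L^1$-norm. The three terms are $\onehalf N \int_{\bar\Omega} |\Delta\psi|^2 \d\bar x$, $-\onehalf N \int_{\bar\Omega} \Re\{\psi^* \Delta^2\psi\} \d\bar x$, and the interaction term $N \int_{\bar\Omega} \sum_{j=2}^N w(x-x_j)(\nabla_j - \nabla)\cdot\nabla_j |\psi|^2 \d\bar x$. For each one I would bound the $\Omega$-integral of the absolute value by a finite quantity expressed through Sobolev norms of $\psi$ on the full configuration space $\Omega^N$, using $\psi \in H^4(\Omega^N)$.

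First I would handle the two kinetic terms. For the first term, $\int_\Omega \int_{\bar\Omega} |\Delta\psi|^2 \d\bar x \d x = \|\Delta\psi\|_{L^2(\Omega^N)}^2$, and since $\Delta$ here is $\Delta_1$ (a second-order operator in the first three coordinates), this is controlled by $\|\psi\|_{2,2}^2 \lesssim \|\psi\|_{4,2}^2 < \infty$ because $\psi \in H^4 \subset H^2$. For the second term, $\int_\Omega \int_{\bar\Omega} |\psi^*\Delta^2\psi| \d\bar x \d x \leq \|\psi\|_{L^2(\Omega^N)} \|\Delta_1^2\psi\|_{L^2(\Omega^N)}$ by the Cauchy--Schwarz--Bunyakowsky inequality (in $x$ and $\bar x$ jointly, i.e. in $L^2(\Omega^N)$), where $\Delta_1^2$ is a fourth-order operator in the first three coordinates, hence $\|\Delta_1^2\psi\|_2 \lesssim \|\psi\|_{4,2}$ by \autoref{th-sobolev-norm-laplace} and \autoref{lemma-equiv-norms}, which is finite for $\psi \in H^4(\Omega^N)$. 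Both kinetic contributions are thus in $L^1(\Omega)$.

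The main obstacle will be the interaction term, since it involves the singular potential $w \in L^2(\R^3) + L^\infty(\R^3)$ multiplied against second derivatives of $|\psi|^2$. Here I would split $w = w_1 + w_2$ with $w_1 \in L^2, w_2 \in L^\infty$. The $w_2$ part is easy: $|w_2(x-x_j)| \leq \|w_2\|_\infty$ pointwise, so the contribution is bounded by $\|w_2\|_\infty \sum_{j} \int_\Omega\int_{\bar\Omega} |(\nabla_j-\nabla)\cdot\nabla_j|\psi|^2| \d\bar x \d x$, and $(\nabla_j - \nabla)\cdot\nabla_j|\psi|^2$ is a sum of products of derivatives of $\psi$ up to second order, hence in $L^1(\Omega^N)$ by Cauchy--Schwarz applied to each product using $\psi \in H^2(\Omega^N)$. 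For the $w_1$ part, the key device is \autoref{lemma-ab-inequ} (Kato's inequality), which gives the embedding $H^2(\R^3) \hookrightarrow L^\infty(\R^3)$ for three coordinates: one freezes the variables other than $x_j$ and estimates $\|w_1(\cdot - x_j)(\nabla_j-\nabla)\cdot\nabla_j|\psi|^2\|_{L^1(\d x_j)} \leq \|w_1\|_{L^2(\R^3)} \|(\nabla_j-\nabla)\cdot\nabla_j|\psi|^2\|_{L^2(\d x_j)}$, then integrates over the remaining variables and uses that the resulting $L^2$-norm over all of $\Omega^N$ of a quantity built from derivatives of $\psi$ up to order two (multiplied together, so one factor must be controlled in $L^\infty$ in the appropriate three coordinates via the $H^2 \hookrightarrow L^\infty$ embedding — which costs two more derivatives, hence the need for $H^4$ rather than $H^2$) is finite. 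Concretely, $(\nabla_j - \nabla)\cdot\nabla_j|\psi|^2$ expands into terms like $\psi^*\, \partial^2\psi$ and $\partial\psi^*\,\partial\psi$ with the derivatives spread over $x_j$ and $x_1$; in each term one bounds the highest-derivative factor in $L^2$ over $\Omega^N$ and the other factor in $L^\infty$ over the three coordinates being integrated against $w_1$, estimating that $L^\infty$-norm by an $H^2$-norm in those three coordinates which is in turn dominated by $\|\psi\|_{4,2}$ after integrating the remaining coordinates. Carefully bookkeeping which derivatives land where — and checking that no term ever needs more than four total derivatives or more than two in any fixed three-coordinate block for the embedding — is the delicate part, but it goes through precisely because $\psi \in H^4(\Omega^N)$ with $\Omega \subseteq \R^3$ so that $H^2$ in three variables embeds into $L^\infty$.
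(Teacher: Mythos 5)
Your proposal is correct and follows essentially the same route as the paper: the two kinetic terms are handled directly via Cauchy--Schwarz on $L^2(\Omega^N)$ using $\psi\in H^4$, and the interaction term is split into products $D^\alpha\psi^*\cdot D^\beta\psi$ with the singular $L^2$-part of $w$ absorbed by a Kato-type estimate that exploits its dependence on only three coordinates. The only (immaterial) difference is the order of the Cauchy--Schwarz applications: the paper first splits off $\|D^\alpha\psi\|_2$ over all of $\Omega^N$ and then applies the estimate of \autoref{th-kato} to $\|w(x-x_j)D^\beta\psi\|_2$, whereas you integrate against $w_1$ in the three-coordinate block first.
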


\begin{proof}
The condition $\psi \in H^4(\Omega^N)$ clearly makes the terms $|\Delta\psi|^2$, $\psi^* \Delta^2 \psi \in L^1(\Omega^N)$, thus integrated over they are still $L^1(\Omega)$. The interaction term is a little bit more involved. We test the $L^1$-condition on the individual terms, i.e., we want to prove
\[
\int_{\Omega^N}  |w(x-x_j)| \cdot |(\nabla_j-\nabla) \cdot \nabla_j|\psi|^2| \d \ushort{x} < \infty.
\]
The term with $|\psi|^2 = \psi^* \psi$ and derivatives up to second order is split into terms of the sort $D^\alpha \psi^* \cdot D^\beta \psi$ with $|\alpha|,|\beta| \leq 2$. This means we can take the integral as an $L^2(\Omega^N)$ inner product.
\[
\langle |D^\alpha \psi|, |w(x-x_j)| \cdot |D^\beta \psi| \rangle \leq \|D^\alpha \psi\|_2 \cdot \|w(x-x_j) \cdot D^\beta \psi\|_2
\]
The first norm is clearly finite with $\psi \in H^2(\Omega^N) \subset H^4(\Omega^N)$ while the second can be treated like in the proof of \autoref{th-kato} where $D^\beta \psi \in H^2(\Omega^N)$ follows from $\psi \in H^4(\Omega^N)$ again. That the theorem was only formulated for $\R^3$ is not a real issue after a variable change to $x_- = x-x_j, x_+ = x+x_j$.
\end{proof}

In the later treatment of the full TDDFT proof the $L^1$-regularity is found to be insufficient and the Hilbert space structure of $L^2$ becomes necessary. And indeed it is possible to prove $L^2$-regularity in exactly the same setting as above. That this is true was not clear at all to us, although \citeasnoun{maitra-2010} already wrote that \q{the restriction to locally square-integrable right-hand sides (meaning that the integral of the square of the right-hand side over a finite region is finite) is generally satisfied for physical wave functions.} This statement is accompanied with a reference to our \citeasnoun{tddft1} where $L^2$-regularity is actually still stated as a precondition. This issue seems to be resolved with the following lemma.

\begin{lemma}\label{lemma-q-L2-2}
If $\Omega \subseteq \R^3$, $\psi \in H^4(\Omega^N)$, and $w \in L^2(\R^3) + L^\infty(\R^3)$ with $w(x)=w(-x)$ then $q \in L^2(\Omega)$.
\end{lemma}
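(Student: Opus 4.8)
The plan is to repeat the structure of the proof of \autoref{lemma-q-L1} but to push every estimate from an $L^1(\Omega)$-bound to an $L^2(\Omega)$-bound, exploiting the gain in regularity that comes from the dimensional reduction: the wave function lives on the $3N$-dimensional configuration space $\Omega^N$, but $q$ is a function on the $3$-dimensional $\Omega$, so integrating out $\bar x$ over $\bar\Omega=\Omega^{N-1}$ converts Sobolev regularity into boundedness via the embedding $H^2 \hookrightarrow L^\infty$ on $\R^3$ from \eqref{eq-sobolev-embedding}, exactly as anticipated in the remarks after that embedding and used already in \autoref{lemma-q-L1}. I would work from the reformulated expression \eqref{eq-q-term-2} for $q$, treating the kinetic terms $\tfrac12|\Delta\psi|^2 - \tfrac12\Re\{\psi^*\Delta^2\psi\}$ and the interaction term separately.

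First I would handle the kinetic part. Writing $q_{\mathrm{kin}}(x) = N\int_{\bar\Omega}\big(\tfrac12|\Delta\psi|^2 - \tfrac12\Re\{\psi^*\Delta^2\psi\}\big)\d\bar x$, I want $\|q_{\mathrm{kin}}\|_{L^2(\Omega)}<\infty$. The idea is to view the inner integral as an $L^1(\bar\Omega)$-pairing and then bound it pointwise in $x$ by quantities that are $L^2$ in $x$. Concretely, by Cauchy--Schwarz in $\bar x$, $|q_{\mathrm{kin}}(x)| \lesssim \|\Delta\psi(x,\cdot)\|_{L^2(\bar\Omega)}^2 + \|\psi(x,\cdot)\|_{L^2(\bar\Omega)}\,\|\Delta^2\psi(x,\cdot)\|_{L^2(\bar\Omega)}$. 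Since $\psi\in H^4(\Omega^N)$, for a.e.\ fixed $\bar x$ the slice $\psi(\cdot,\bar x)$ gains enough regularity in $x$ that $\Delta\psi(\cdot,\bar x)\in H^2_{x}$ and hence, by $H^2(\R^3)\hookrightarrow L^\infty(\R^3)$ (\autoref{lemma-ab-inequ}), is bounded in $x$ with the bound controlled by an $H^4$-type norm of the slice; one then integrates the resulting bound over $x$ and $\bar x$ together. Care is needed because the slicing must be done in the right order: I would instead directly estimate $\|q_{\mathrm{kin}}\|_{L^2(\Omega)}$ by pulling the $L^2_x$ norm inside via Minkowski's integral inequality, $\|q_{\mathrm{kin}}\|_{L^2(\Omega)} \lesssim \int_{\bar\Omega}\big(\|\,|\Delta\psi(\cdot,\bar x)|^2\|_{L^2_x} + \|\Re\{\psi^*\Delta^2\psi\}(\cdot,\bar x)\|_{L^2_x}\big)\d\bar x$, then use $\||\Delta\psi(\cdot,\bar x)|^2\|_{L^2_x} = \|\Delta\psi(\cdot,\bar x)\|_{L^4_x}^2 \lesssim \|\Delta\psi(\cdot,\bar x)\|_{H^1_x}^2$ (Sobolev $H^1(\R^3)\hookrightarrow L^6\subset L^4$ locally, or on the whole space with the right interpolation) and $\|\psi^*\Delta^2\psi(\cdot,\bar x)\|_{L^2_x}\leq \|\psi(\cdot,\bar x)\|_{L^\infty_x}\|\Delta^2\psi(\cdot,\bar x)\|_{L^2_x} \lesssim \|\psi(\cdot,\bar x)\|_{H^2_x}\|\Delta^2\psi(\cdot,\bar x)\|_{L^2_x}$, finishing with Cauchy--Schwarz in $\bar x$ and the fact that $\psi\in H^4(\Omega^N)$ makes all the arising mixed-derivative $L^2(\Omega^N)$ norms finite (invoking \autoref{th-sobolev-norm-laplace} to pass between $\Delta^l$-norms and full Sobolev norms).

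For the interaction term, $q_{\mathrm{int}}(x) = N\int_{\bar\Omega}\sum_{j=2}^N w(x-x_j)(\nabla_j-\nabla)\cdot\nabla_j|\psi|^2\,\d\bar x$, I would proceed exactly as in \autoref{lemma-q-L1} but keep the $L^2_x$ norm: by Minkowski, $\|q_{\mathrm{int}}\|_{L^2(\Omega)} \lesssim \sum_j \int_{\bar\Omega}\|w(\cdot-x_j)\,(D^\alpha\psi^*\cdot D^\beta\psi)(\cdot,\bar x)\|_{L^2_x}\,\d\bar x$ for multi-indices $|\alpha|,|\beta|\leq 2$. After the change of variables $x_-=x-x_j$, $x_+=x+x_j$ this becomes an integral of $\|w\cdot(\text{derivatives of }\psi)\|$, and I split $w=w_1+w_2$ with $w_1\in L^2(\R^3)$, $w_2\in L^\infty(\R^3)$: for $w_2$ one just pulls out $\|w_2\|_\infty$; for $w_1$ one uses $\|w_1\,f\|_{L^2_{x_-}} \leq \|w_1\|_{L^2}\,\|f\|_{L^\infty_{x_-}}$ and then $H^2(\R^3)\hookrightarrow L^\infty(\R^3)$ on the appropriate derivative of $\psi$, which $\psi\in H^4(\Omega^N)$ supplies. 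Collecting everything and integrating the surviving $H^2(\Omega^N)$-type norms over the remaining variables shows $q_{\mathrm{int}}\in L^2(\Omega)$, hence $q=q_{\mathrm{kin}}+q_{\mathrm{int}}\in L^2(\Omega)$.

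The main obstacle I expect is bookkeeping the slicing and the order of norms correctly: one must be sure that fixing $\bar x$ and restricting to the $x$-slice genuinely yields an $H^2_x$ or $H^4_x$ function for a.e.\ $\bar x$ with integrable norms, which requires a Fubini-type argument on Sobolev spaces (the weak derivatives of $\psi$ in $x$ coincide with the slices of the weak derivatives on $\Omega^N$), and that the quadratic term $|\Delta\psi|^2$ is handled via an $L^4_x$ rather than $L^2_x$ estimate on $\Delta\psi$ so that its square lands in $L^2_x$ --- this is the one place where a plain $H^2\hookrightarrow L^\infty$ does not suffice for $\Delta\psi$ itself ($\Delta\psi$ is only $H^2$ in $x$ after slicing, which does embed into $L^\infty$, so in fact $L^4$ is automatic, but one should state it cleanly). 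Everything else is a routine repetition of the techniques from \autoref{th-kato}, \autoref{lemma-ab-inequ}, and \autoref{lemma-q-L1}.
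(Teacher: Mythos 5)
Your argument is correct, but it follows a genuinely different route from the paper's. The paper never slices $\psi$ in $\bar{x}$: it introduces the reduced Hilbert space $L^2(\bar\Omega)$ with $x$-dependent norm $\|\cdot\|'_x$, applies Cauchy--Schwarz in $\bar x$ \emph{first} so that each term of $q$ is bounded pointwise by products like $\|\psi\|'_x\,\|\Delta^2\psi\|'_x$, and then observes that the marginal quantities $N\|\psi\|'^2_x=n(x)$ and $\|\Delta\psi\|'^2_x$ are themselves functions in $W^{4,1}(\Omega)$ resp.\ $W^{2,1}(\Omega)$ on the three-dimensional $\Omega$, hence bounded by the Sobolev embedding \eqref{eq-sobolev-embedding}; pulling out these suprema leaves full $L^2(\Omega^N)$ norms of $\psi$, and the interaction factor is dispatched exactly as in \autoref{lemma-q-L1}. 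You instead keep $\bar x$ fixed, apply Minkowski's integral inequality to move the $L^2_x$ norm inside, use Sobolev embeddings in the $x$-variable on each slice ($H^2_x\hookrightarrow L^\infty_x$ for $\psi^*\Delta^2\psi$ and the interaction term, $H^1_x\hookrightarrow L^4_x$ for $|\Delta\psi|^2$), and close with Cauchy--Schwarz in $\bar x$. Both exploit the same dimensional-reduction phenomenon, but they trade technicalities in opposite directions: your route requires the Fubini-type identification of slices of weak derivatives with weak derivatives of slices (which you correctly flag as the point needing care), while the paper's route requires instead that the \emph{integrated} quantities $\|D^\alpha\psi\|'^2_x$ lie in low-order Sobolev spaces of $\Omega$ --- a step the paper states rather tersely. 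Your handling of the quadratic term through an $L^4_x$ estimate on $\Delta\psi$ is sound and arguably cleaner than the paper's manipulation of $\|\Delta\psi\|'^4_x$; the bookkeeping you describe (mixed $x$-/$x_j$-derivatives in the interaction term still being covered by $\psi\in H^4(\Omega^N)$, translation invariance of $\|w_1(\cdot-x_j)\|_2$) all checks out, so I see no gap.
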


\begin{proof}
We define the Hilbert space $L^2(\bar\Omega)$ with reduced configuration space including the particle positions $x_2, \ldots, x_N$ with inner product $\langle \cdot,\cdot \rangle'$ and induced norm $\|\cdot\|'$. Functions with this configuration space have the missing particle position $x=x_1$ as an additional parameter noted as an index. The first two terms to check in the expression for $q$ as in \eqref{eq-q-term-2} are of the kind $\langle \Delta\psi,\Delta\psi \rangle'_x$ and $\langle \psi,\Delta^2\psi \rangle'_x$. Note that $\Delta$ is not self-adjoint with respect to this altered Hilbert space because we cannot use integration by parts in the $x$ coordinates. We start with the second term and use CSB.
\[
|\langle \psi,\Delta^2\psi \rangle'_x| \leq \|\psi\|'_x \|\Delta^2\psi\|'_x
\]
Now as a matter of fact $N \|\psi\|'^2_x = n(x) \in W^{4,1}(\Omega)$ and thus with the Sobolev embedding \eqref{eq-sobolev-embedding} is bounded. This means when integrating over $x$ we can move the supremum (maximum) of $n(x)$ in front of the integral and we are done using $\psi \in H^4(\Omega^N)$.
\begin{align*}
\int_\Omega |\langle \psi,\Delta^2\psi \rangle'_x|^2 \d x &\leq \frac{1}{N} \sup_{x \in \Omega} n(x) \int_\Omega \|\Delta^2\psi\|'^2_x \d x \\
&= \frac{1}{N} \sup_{x \in \Omega} n(x) \cdot \|\Delta^2 \psi\|_2^2 < \infty
\end{align*}
The first term is
\[
\langle \Delta\psi,\Delta\psi \rangle'_x = \| \Delta\psi \|'^2_x
\]
and similar to before $\|\Delta\psi\|'^2_x \in W^{2,1}(\Omega)$ which equals $\|\Delta\psi\|'_x \in W^{2,2}(\Omega)$ and again by Sobolev embedding \eqref{eq-sobolev-embedding} this is bounded. Thus testing $L^2(\Omega)$ we have
\[
\int_\Omega \| \Delta\psi \|'^4_x \d x \leq \left(\sup_{x \in \Omega} \|\Delta\psi\|'_x\right)^2 \cdot \|\Delta \psi\|_2^2 < \infty.
\]
Finally for the interaction term in \eqref{eq-q-term-2} we use a notation like in \autoref{lemma-q-L1} and want to show that for $|\alpha|,|\beta| \leq 2$
\[
\left|\langle D^\alpha \psi, w(x-x_j) \cdot D^\beta \psi \rangle'_x\right| \leq \|D^\alpha \psi\|'_x \cdot \|w(x-x_j) \cdot D^\beta \psi\|'_x
\]
is in $L^2(\Omega)$. Like before $\|D^\alpha \psi\|'_x$ is bounded thus
\[
\int_{\Omega} \d x \, \left|\langle D^\alpha \psi, w(x-x_j) \cdot D^\beta \psi \rangle'_x\right|^2 \leq \left(\sup_{x \in \Omega} \|D^\alpha \psi\|'_x \right)^2 \cdot \|w(x-x_j) \cdot D^\beta \psi\|_2^2.
\]
The last $L^2$-norm to be finite is shown just like in \autoref{lemma-q-L1}.
\end{proof}

This proof strategy was successful because the Sobolev embedding into a space of bounded functions is possible in the low dimensionality of the configuration space of $q$ as a derived quantity, but never in the full configuration space of the wave function for an arbitrary particle number. That the wave function actually \emph{is} always a bounded function was historically considered self-evident and was removed only later by the Hilbert space setting of $L^2$ wave functions. \citeasnoun{schroedinger-2} considered it \q{the almost obvious demand that, as a physical quantity, the function $\psi$ must be single-valued, finite, and continuous throughout configuration space.}

Another form of $q$ is in tensor notation, a version of it was already derived by \citeasnoun{martin-schwinger}. The expression is usually reformulated as the expectation values of operator-valued tensor quantities, then given in second quantisation (\citeasnoun[9.2]{marques}; \citeasnoun{van-leeuwen}), which will be avoided here.\footnote{It is customary to use the field operators of second quantisation in many-body theory even though one is not directly involved with field theory. They adhere to the correct Fermi or Bose symmetry and arbitrary particle number without any additional considerations and sometimes allow for a more efficient notation. Yet it should be noted that the annihilation operator on the continuous Fock space is not closeable. If we take a single-particle wave function $\langle x|\psi \rangle=\exp(-\lambda|x|^2)$ and let $\lambda\rightarrow \infty$ then $|\psi\rangle\rightarrow 0$, but $\hat\psi(x)|\psi\rangle=|0\rangle$, the vaccum state, which stays constant. The adjoint creation `operator' $\hat\psi^\dagger(x)$ is really an operator-valued distribution because formally it creates particles with wave function $\psi(x')=\delta(x-x')$. \cite[II.1.3.2(7) and II.1.3.14(8)]{thirring}}
\begin{equation}\label{eq-tensor-q}
q = \partial_l\partial_k T_{kl} + \partial_l W_l
\end{equation}
Here summation over multiple indices is implied. In the definition of the tensor quantities we will for once use the common but inaccurate notation of the delta distribution as the \q{Dirac function} $\delta(x)$. The first expectation value is that of the momentum-stress tensor.
\[
T_{kl}(t,x) = \left( \frac{1}{2} \sum_{i=1}^N \langle \partial^i_k \psi, \delta(x - x_i) \partial^i_l \psi \rangle - \frac{1}{8} \partial_{k}\partial_{l} n(t,x) \right) + (k \leftrightarrow l)
\]
The derivatives $\partial^i_k$ are meant to act only on the $k$-th component of the particle position $x_i$ with $k$ only ranging from 1 to 3, whereas $\partial_k$ acts on the $k$-th component of $x$. The interaction term will be expressed as well as an expectation value.
\[
W_l(t,x) = \langle \psi, \sum_{i=1}^N \delta (x-x_i) \sum_{\substack{j=1 \\ j \neq i}}^N \partial_{l} w(|x-x_j|) \psi \rangle
\]
This expression could be recast into an order-2 tensor following an idea in \citeasnoun{puff-gillis} also used and explained in \citeasnoun{tokatly}. Yet another form of $q$ is found in literature that derives itself from the Heisenberg equation. If we think about $\hat{n}, \hat{\jmath}$ as operators (problems related to that notion are discussed in \autoref{sect-lrt}) then the following form of \eqref{eq-div-force-density} can be derived for their expectation values $\langle \hat n\rangle = n, \langle \hat \jmath\rangle = j$ under some time-dependent state $\psi(t)$.
\[
\partial_t j = \partial_t \langle \hat \jmath \rangle = \i \langle [H,\hat \jmath]\rangle = \i \langle [T+W,\hat \jmath]\rangle - n \nabla v
\]
As before the divergence of that equation yields
\[
-\partial_t^2 n = \i \nabla\cdot \langle [T+W,\hat \jmath]\rangle - \nabla\cdot (n \nabla v)
\]
and thus the $q$-term is the expectation value of a commutator
\[
q[v] = -\i \nabla\cdot \langle [T+W,\hat \jmath] \rangle.
\]

\subsection{Examination of the second law term}

Now that a result for $q \in L^2$ is at hand with \autoref{lemma-q-L2-2} we want to complement this with a similar result for $\partial_t^2 n \in L^2$. We call this part of \eqref{eq-div-force-density} the \q{second law term} because it is the response of the material density to internal and external forces according to Newton's second law $F=ma$. In opposition to the results given for the internal forces term we do this with respect to an initial state $\psi_0$ and a given potential (interaction and external) $v$ that determines a quantum trajectory $\psi[v]$ by Schrödinger's equation.

\begin{lemma}\label{lemma-dt2n-L2}
If $\psi_0 \in H^4(\Omega^N)$ and $v \in \Lip([0,T],W^{2,\Sigma})$ then $\partial_t^2 n([v],t)$ $\in L^2(\Omega)$ for almost all $t \in [0,T]$.
\end{lemma}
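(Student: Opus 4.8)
The plan is to differentiate the one-particle density twice in time and estimate the resulting expression in $L^2(\Omega)$ using the regularity of the Schr\"odinger trajectory established earlier. By \autoref{th-sobolev-regularity} the condition $v \in \Lip([0,T],W^{2,\Sigma})$ with $\psi_0 \in H^4(\Omega^N) = H^{2\cdot 2}(\Omega^N)$ guarantees $\psi([v],t) = U([v],t,0)\psi_0 \in H^4(\Omega^N)$ for all $t$, and moreover the map $t \mapsto \psi([v],t)$ is a strong solution so that $\i\partial_t\psi = H(t)\psi$ holds pointwise in $t$ with values in $L^2$. First I would write, using the continuity equation \eqref{cont-eq}, $\partial_t n = -\nabla\cdot j$ and then $\partial_t^2 n = -\nabla\cdot \partial_t j$, and invoke the reformulation leading to \eqref{eq-div-force-density}, namely $-\partial_t^2 n = q - \nabla\cdot(n\nabla v)$, equivalently $\partial_t^2 n = \nabla\cdot(n\nabla v) - q$. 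Thus it suffices to show both $q([v],t) \in L^2(\Omega)$ and $\nabla\cdot(n([v],t)\nabla v(t)) \in L^2(\Omega)$ for almost all $t$.

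The first term is immediate: \autoref{lemma-q-L2-2} applies with $\psi = \psi([v],t) \in H^4(\Omega^N)$ and, since the Coulomb interaction $w(x)=1/|x|$ lies in $L^2(\R^3)+L^\infty(\R^3)$ by \autoref{ex-atomic-hamiltonian} and satisfies $w(x)=w(-x)$, we get $q([v],t) \in L^2(\Omega)$ directly (if no interaction is present this is trivial). For the external-potential term I would expand $\nabla\cdot(n\nabla v) = \nabla n \cdot \nabla v + n\,\Delta v$. The density $n([v],t)$ is the partial integral of $|\psi([v],t)|^2$ over $\bar\Omega$; since $\psi([v],t) \in H^4(\Omega^N)$ and $\Omega \subseteq \R^3$, the dimensional reduction argument used in \autoref{lemma-q-L2-2} (together with the Sobolev embedding \eqref{eq-sobolev-embedding}, here $W^{4,1}(\Omega)\hookrightarrow \Cont^2(\overline\Omega)$ or similar) shows $n([v],t) \in W^{3,1}(\Omega)$ at least, hence $n$ and $\nabla n$ are bounded on $\Omega$. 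The spatial part of the potential, by the decomposition $v(t) \in W^{2,\Sigma} = W^{2,\Sigma}(L^2+L^\infty)$, has $v(t), \nabla v(t), \Delta v(t)$ each in $\Sigma(L^2 + L^\infty)$; multiplying a bounded function $n$ or $\nabla n$ by an $L^2+L^\infty$ function of a subset of the coordinates gives an $L^2(\Omega)$ function by the same H\"older-type argument as in \autoref{lemma-sum-space-inequality} (one splits into the $L^2$ part, estimated by $\|n\|_\infty$ times the $L^2$-norm, and the $L^\infty$ part, estimated by $\|n\|_2 \|\cdot\|_\infty$, noting $n \in L^1 \cap L^\infty \subset L^2$ on any domain). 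Hence $\nabla\cdot(n\nabla v) \in L^2(\Omega)$.

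The main obstacle I anticipate is making precise \emph{in which sense and for which $t$} the identity $\partial_t^2 n = \nabla\cdot(n\nabla v) - q$ holds: the potential is only Lipschitz in time, so $\dot v$ exists only almost everywhere, the trajectory $\psi[v]$ is $\Cont^1$ in time into $L^2$ but $\ddot\psi$ need not exist classically, and the derivation of \eqref{eq-div-force-density} formally used two time derivatives of $\psi$. The clean way around this is to note that, by \autoref{sect-inhom-schro-derivative} and \autoref{th-strong-dynamics}-type reasoning, the identity \eqref{eq-div-force-density} is really an identity between $L^2$-valued quantities that holds whenever the right-hand side is well-defined; since we have just shown $q([v],t) - \nabla\cdot(n([v],t)\nabla v(t)) \in L^2(\Omega)$ for a.e.\ $t$ (the ``a.e.'' absorbing the non-differentiability of $v$ in time), the left-hand side $\partial_t^2 n([v],t)$ is well-defined as an $L^2(\Omega)$ element for a.e.\ $t$, which is exactly the statement of the lemma. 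I would also double-check the borderline integrability claims for $n$ and $\nabla n$ — whether $W^{3,1}(\Omega^N)$-type regularity of $\psi$ genuinely survives the partial integration to give bounded $\nabla n$ on $\Omega\subseteq\R^3$ — since this is where the reduction-in-dimension trick is doing the real work and is the step most prone to an off-by-one in the Sobolev index.
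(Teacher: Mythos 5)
There is a genuine gap in your reduction. You propose to read off $\partial_t^2 n \in L^2(\Omega)$ from the identity $\partial_t^2 n = \nabla\cdot(n\nabla v) - q$ of \eqref{eq-div-force-density}, arguing that ``the identity holds whenever the right-hand side is well-defined.'' But the existence of the second time derivative of $n$ cannot be inferred from the right-hand side lying in $L^2$: one must actually perform the differentiation, and in this paper \eqref{eq-div-force-density} is only derived \emph{formally} (inside the formal proof of \autoref{ext-runge-gross-th}). Indeed, the stated purpose of \autoref{lemma-q-L2-2} together with the present lemma is precisely to supply, for the first time, a rigorous setting in which \eqref{eq-div-force-density} makes sense in $L^2(\Omega)$ — so invoking that equation to prove the lemma is circular relative to the paper's logical structure. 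A symptom of the problem is your own remark about where the ``almost all $t$'' comes from: the right-hand side of \eqref{eq-div-force-density} contains no $\dot v$, so your route would give the conclusion for \emph{every} $t$, whereas the a.e.\ restriction in the statement is there for a reason.

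The paper instead differentiates $n([v],t)=N\int|\psi|^2\,\d\bar x$ twice directly, substituting $\i\partial_t\psi=H[v]\psi$ each time. This produces three terms, schematically $(H[v]\psi^*)(H[v]\psi)$, $\psi^*\dot v\,H[v]\psi$, and $\psi^* H[v]^2\psi$ integrated over $\bar x$, each of which is then estimated in $L^2(\Omega)$ by the reduced-configuration-space technique of \autoref{lemma-q-L2-2}, using \autoref{lemma-rs-2} to control $\|H[v]\psi\|$ and $\|H[v]^2\psi\|$ (this is where $\psi([v],t)\in H^4$ from \autoref{th-sobolev-regularity} enters) and \autoref{lemma-sum-space-inequality} for $\dot v\psi$. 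The ``almost all $t$'' arises exactly here: $t\mapsto v(t)$ is Lipschitz into the reflexive space $\Sigma(L^2+L^\infty)$, hence differentiable a.e.\ by the Radon--Nikod\'ym property, and $\dot v(t)\in W^{2,\Sigma}\subset\Sigma$ only for such $t$. Your estimation machinery (dimensional reduction, Sobolev embedding of $n$ into bounded functions, Kato-type multiplication bounds) is the right toolkit and closely parallels the paper's, but it must be applied to the terms produced by the direct differentiation, not to the two sides of an identity whose validity is what is at stake.
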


\begin{proof}
First note that Sobolev regularity of the initial state is conserved as shown by \autoref{th-sobolev-regularity}, i.e., $\psi([v],t) \in H^4(\Omega^N)$. We then start with the definition of the density \eqref{def-n} and by differentiating it twice with respect to the time parameter putting in the Schrödinger equation with Hamiltonian $H[v]$.
\begin{align*}
\partial_t^2 n &= N \partial_t^2 \int |\psi|^2 \d \bar x = -\i N \partial_t \int \psi^* H[v] \psi \d \bar x + c.c. \\
&= -\i N \int \left(\i (H[v] \psi^*) H[v] \psi + \psi^* \dot{v}H[v] \psi -\i \psi^* H[v]^2 \psi \right) \d \bar x + c.c.
\end{align*}
Now we have three different parts that we each need to estimate in the style of \autoref{lemma-q-L2-2} using the Hilbert space $L^2(\bar\Omega)$ with reduced configuration space.
\[
\langle H[v] \psi, H[v] \psi \rangle'_x = \|H[v]\psi\|'^2_x
\]
We have $\|H[v]\psi\|'_x \in W^{2,2}(\Omega)$ due to \autoref{lemma-rs-2}, wich solves the case just like in the proof of \autoref{lemma-q-L2-2}, but for $H[v]$ instead of $\Delta$. The third term follows analogously. For the part involving $\dot{v}$ we write
\[
\langle \dot{v} \psi, H[v] \psi \rangle'_x \leq \|\dot{v}\psi\|'_x \cdot \|H[v]\psi\|'_x.
\]
Again $\|H[v]\psi\|'_x$ can be dismissed as bounded and the $L^2$-condition then demands $\dot{v}\psi \in L^2$.  Now by assumption $v : [0,T] \rightarrow W^{2,\Sigma}$ is Lipschitz-continuous thus also of bounded variation. Such functions are always differentiable almost everywhere if the codomain is a Banach space with the Radon--Nikod\'{y}m property (this is a Banach space extension of Rademacher's theorem). In fact this is just one of the many equivalent characterisations of this property and was previously called the Gelfand--Fréchet property. Now it is known that every reflexive Banach space has the desired Radon--Nikod\'{y}m property. \cite[p.~30, p.~2, p.~5]{diestel-uhl-1976} While we already showed in \autoref{sect-lebesgue-dual} that $L^\infty$ is not reflexive, the Kato perturbations $L^2+L^\infty$ that constitute the potential space are, because the $L^2$ contributions eliminate distributional values in the dual that thus remains $L^1\cap L^2$. It follows $\dot{v} \in W^{2,\Sigma} \subset \Sigma$ and $\dot{v}\psi \in L^2$ by \autoref{lemma-sum-space-inequality}.
\end{proof}

\subsection{Dipole laser-matter interaction as a special case}
\label{sect-dipole}

As a special example we consider potentials with an arbitrary but only space-dependent part $w(x)$ (now $w$ does not denote the interaction potential any more) combined with a dipole interaction $x \cdot E(t)$. This corresponds to a system illuminated by a laser beam in dipole approximation. \cite[ch.~4 (4.5)]{marques}
\begin{equation}\label{eq-potential-laser}
v(t,x) = w(x) + x \cdot E(t)
\end{equation}
Without loss of generality we can assume $E(0) = 0$. For these potentials we can weaken the condition of time-analyticity in the Runge--Gross theorem (\autoref{runge-gross-th}). Note however that this form of a potential cannot fulfil the later requirement of zero boundary conditions raised in \autoref{sect-hs-potentials}. Instead we will rely on the density $n$ and the wave function $\psi$ having zero boundary conditions, a slight deviation from the original assumptions but seemingly unproblematic in nature. Similar to the original theorem this will only yield uniqueness modulo a constant $c_0$.

The results presented here were first published in \citeasnoun{tddft2}. Please note that they relate to the Runge--Gross Theorem and not to its \emph{extended} version.

\begin{proof}[Proof of \autoref{runge-gross-th} for laser-matter interaction]
Assume we have two potentials $v, v'$ of the given type leading to the same density $n$. As we consider two systems with the same initial state $\psi_0$ the quantity $q$ from \eqref{eq-def-q} at time $t=0$ will the same for both systems, i.e., $q_0=q'_0$. As we assume the same density we derive from \eqref{eq-div-force-density} at $t=0$ that
\[
n_0 \nabla(v_0 - v'_0) = 0.
\]
Arguing like in the original proof by using $n_0>0$ almost everywhere this tells us that $v_0 = v'_0 + c_0$. For potentials of the special type \eqref{eq-potential-laser} this is equal to $w(x) = w'(x) + c_0$, thus
\begin{equation}\label{eq-pot-w-diff}
w(x)-w'(x) = c_0.
\end{equation}
We now want to consider the net force exerted on the system, first with an arbitrary potential $v$. For this we integrate the force density \eqref{eq-force-density-1} given in tensor notation as in \eqref{eq-tensor-q} over the whole domain $\Omega$.
\begin{equation}\label{eq-force-int}
F = \int_\Omega \partial_t j = -\int_\Omega (n \nabla v + \partial_k T_{kl} + W_l) \d x
\end{equation}
As shown in \citeasnoun{tokatly} the internal forces term $\partial_k T_{kl} + W_l$ can be written in divergence form altogether.
\[
\partial_k T_{kl} + W_l = \partial_k \Pi_{kl}
\]
Inserting this into \eqref{eq-force-int} and using the Gauss--Ostrogradsky theorem this is equal to evaluating $\Pi_{kl}$ only at the border of $\Omega$ where it is zero because of the $\psi=0$ border condition. Hence the total force is equal to
\begin{equation}\label{eq-force-int2}
F = \int_\Omega \partial_t j = -\int_\Omega n \nabla v \d x
\end{equation}
which just subsumes the fact that the net internal forces cancel each other out. Another way of calculating this quantity is by analogue to Newton's second axiom $F=m a$.
\begin{equation}\label{eq-force-int3}
F = \partial_t^2 \int_\Omega x \, n \d x = \int_\Omega x \, \partial_t^2 n \d x
\end{equation}
This formula can be derived by inserting \eqref{eq-div-force-density}, using the Gauss--Ostrogradsky theorem another time on $q$, then integrating by parts which just gives us \eqref{eq-force-int2} again.
\[
\int_\Omega x \, \partial_t^2 n \d x = \int_\Omega x \, \nabla\cdot(n\nabla v) \d x = -\int_\Omega n\nabla v \d x
\]
The good thing about the representation \eqref{eq-force-int3} is that for two systems with different potentials $v, v'$ but the same density $n$ it immediately becomes obvious that $F - F' = 0$. Via \eqref{eq-force-int2} this difference yields
\[
F - F' = -\int_\Omega n \nabla (v-v') \d x = 0.
\]
Restricting to our special class of potentials and remembering \eqref{eq-pot-w-diff} we get in dimensionality $d=\nabla \cdot x$
\[
F - F' = -\int_\Omega n \nabla (x\cdot(E-E')) \d x = -d\int_\Omega n (E-E') \d x = -Nd(E-E') = 0.
\]
This makes $E(t) = E'(t)$, thus $v(t,x)=v'(t,x)+c_0$.
\end{proof}

The proof is easily generalised to a broader class of potentials. Imagine $v(t,x) = w(x) + g(x) \cdot E(t)$ with a vector part $g(x)$ which obeys $\partial_k g(x) > 0$ or $<0$ respectively for all $k$ and $x \in \Omega$. Then the last step of the proof can be executed in just the same way, the only additional requirement on the potential is of course such that a solution to Schrödinger's equation indeed exists.

\chapter{A Fixed-Point Proof of the Runge--Gross Theorem}
\label{ch-fp}

\section{Weak solutions to a fixed-point scheme}

\subsection{Construction of the fixed-point scheme}

Given an initial wave function $\psi_0$ and an open and bounded space-domain $\Omega \subset \R^d$ we want to prove that for any one-particle density $n : [0,T] \times \Omega \rightarrow \R_{\geq 0}$ from a certain class there is a unique solution $v : [0,T] \times \Omega \rightarrow \R$ to the non-linear equation
\begin{equation}
\label{sl-nonlinear-2}
 -\nabla\cdot (n \nabla v) = q[v] - \partial_t^2 n.
\end{equation}
already known from \eqref{eq-div-force-density} in the proof of the extended Runge--Gross theorem (\autoref{ext-runge-gross-th}). Note that this equation is supposed to hold at all time instants $t \in [0,T]$ and that the argument in $q[v]$ is not the potential at this respective time instant $t$ but includes the whole time-dependent $v$ up to this time which in turn gives a solution $\psi([v],t)$ to Schrödinger's equation at time $t$ that is taken to compute $q([v],t)$ from \eqref{eq-def-q}. Boundedness of $\Omega$ is essential for the important results of \autoref{sect-embedding-theorems} to hold that will further be needed for the application of the Lax--Milgram theorem (\autoref{lax-milgram}) in \autoref{sect-appl-lax-milgram}.

The main idea is to choose any starting potential $v_0$ (at all times) and calculate $q[v_0]$. Next we use this as a right hand side in \eqref{sl-nonlinear-2} to determine $v_1, v_2$ etc.~in an iterative scheme consisting of a partial differential equation of generalised Sturm--Liouville type, also called \q{divergence form}.
\begin{equation}
\label{sl-iteration-2}
 -\nabla\cdot (n \nabla v_{i+1}) = q[v_i] - \partial_t^2 n
\end{equation}
Now it is argued in \citeasnoun{tddft3}, where this construction was originally suggested to provide an alternative route to the extended Runge--Gross theorem (\autoref{ext-runge-gross-th}), that this iteration has a unique fixed point $v$ solving \eqref{sl-nonlinear-2} in some adequately chosen Banach space $(V, \|\cdot\|_V)$ including time-dependence. This would be guaranteed by 
the Banach fixed-point theorem
if the mapping $\mathcal{F} : v_i \mapsto v_{i+1}$ defined by \eqref{sl-iteration-2} is well-defined on 
$V$ as a contraction map, i.e., there is a $\xi \in (0,1)$ for which for all $v,w \in V$
\begin{equation}
\label{F-contraction}
\| \mathcal{F}[v] - \mathcal{F}[w] \|_V \leq \xi \| v-w \|_V.
\end{equation}
After introducing a proper Banach space $(W, \|\cdot\|_W)$ for the right hand side of \eqref{sl-iteration-2} we will split this problem into two questions. Firstly, we need the (linear) solution operator of the Sturm--Liouville equation \eqref{sl-iteration-2} mapping $W \rightarrow V$ to be bounded, where linearity allows us to cancel the $\partial_t^2 n$ term.
\begin{equation}
\begin{aligned}
\label{F-contraction-inequ-1}
\| (-\nabla\cdot (n\nabla))^{-1}q[v] &- (-\nabla (n\nabla))^{-1}q[w] \|_V \\
&= \| (-\nabla\cdot (n\nabla))^{-1}(q[v]-q[w]) \|_V \\
&\leq \xi_1 \| q[v] - q[w] \|_W
\end{aligned}
\end{equation}
Secondly the mapping $q : V \rightarrow W$, the internal forces from a given initial state under the influence of an external field, must allow for a similar estimate. This question is further studied in \autoref{sect-q-mapping-frechet} and \autoref{sect-q-mapping-lipschitz}.
\begin{equation}
\label{F-contraction-inequ-2}
\| q[v] - q[w] \|_W \leq \xi_2 \| v-w \|_V
\end{equation}
Obviously the inequalities \eqref{F-contraction-inequ-1} and \eqref{F-contraction-inequ-2} can be combined to get \eqref{F-contraction} with $\mathcal{F}[v] = (-\nabla (n\nabla))^{-1} (q[v]-\partial_t^2 n)$ and $\xi = \xi_1 \xi_2$. Note that only $\xi \in (0,1)$ is needed, this needs not to hold true for $\xi_1$ and $\xi_2$ individually.

Is is easy to show yet interesting to note that the first step from $v_0$ to $v_1 = \mathcal{F}[v_0]$ for a contraction mapping $\mathcal{F}$ with Lipschitz constant $\xi \in (0,1)$ already yields an estimate on how far the final fixed point $v$ lies away from the starting point $v_0$.
\begin{equation}\label{fixed-point-distance}
\begin{aligned}
\|v-v_0\| &= \lim_{k \rightarrow \infty} \|v_k-v_0\| \leq \sum_{k=0}^\infty \|v_{k+1}-v_{k}\| \\
&\leq \sum_{k=0}^\infty \xi^k \|v_1-v_0\| = \frac{1}{1-\xi} \|v_1-v_0\|
\end{aligned}
\end{equation}
By this estimate all points of the converging sequence $\{v_k\}$ including the limit are within the closed ball $\overline{B_r(v_0)}$ width radius $r = (1-\xi)^{-1} \|v_1-v_0\|$.

\subsection{Relation to a non-linear Schrödinger equation}

Following \citeasnoun{maitra-2010} we might think of $q$ at time $t$ from \eqref{sl-nonlinear-2} not as a functional of $v$ (at all previous times) but depending on a quantum state $\psi(t)$ (at a single time) that comes from propagation under $v$. Then \eqref{sl-nonlinear-2} is equivalent to the following coupled equations.
\begin{align}
-\nabla\cdot (n(t) \nabla v(t)) &= q[\psi(t)] - \partial_t^2 n(t) \label{eq-nonlin-se-1}\\
\i \partial_t \psi &= H[v] \psi \label{eq-nonlin-se-2}
\end{align}
The first equation has the the benefit of eliminating the dependence on previous times from \eqref{sl-nonlinear-2}, a fact highlighted by adding explicit time dependence to all quantities. Solving \eqref{eq-nonlin-se-1} gives $v=v[n,\psi]$ and if we put this potential into \eqref{eq-nonlin-se-2} we have a non-linear Schrödinger equation equivalent to the coupled equations above.
\[
\i \partial_t \psi = H[v[n,\psi]] \psi
\]
Showing existence and uniqueness to this equation is thus equivalent to the original problem if we further demand that the resulting $\psi$ gives the right density $n$ (this is the $\rho$-problem, see \autoref{sect-rho-problem}). Now a strategy in proving this would be again a fixed-point scheme defined by the now linear Schrödinger equation
\[
\i \partial_t \psi_{i+1} = H[v[n,\psi_i]] \psi_{i+1}.
\]
The relation to the sequence $v_i$ defined by \eqref{sl-iteration-2} is straightforward. If we start with $\psi_1 = \psi[v_1]$ as the solution to \eqref{eq-nonlin-se-2} with a given potential $v_1$ then by \eqref{eq-nonlin-se-1} $v_2 = v[n,\psi_1]$ etc., consequently $\psi_i = \psi[v_i]$ for all indices $i \geq 2$. This, first noted in \citeasnoun{tddft-review}, makes the fixed-point procedure go side-by-side with approaches employing the non-linear version of Schrödinger's equation.

\subsection{The dreaded $\rho$-problem}
\label{sect-rho-problem}

If we assume existence of a fixed-point iteration converging to some $v$, does this really guarantee that by propagation of the initial state $\psi_0$ we get the prescribed density $n$? Let $n[v]$ be the density from propagation with $v$ that clearly has to fulfil \eqref{sl-nonlinear-2} because this equation is a direct consequence of Schrödinger's equation. Now if we take the difference $\rho = n[v]-n$ this quantity has to obey an equation where the $q$-term in cancelled out.
\begin{equation}\label{eq-rho}
\partial_t^2 \rho = \nabla\cdot (\rho \nabla v)
\end{equation}
The prescribed density has to fit to the initial state which means at time $t=0$ (no propagation yet)
\begin{equation}\label{eq-rho-initial-condition}
\begin{aligned}
n(0) &= n([v],0) \quad\mtext{and}\\
\partial_t n(0) &= -\nabla\cdot j([v],0) = \partial_t n([v],0)
\end{aligned}
\end{equation}
or equivalently $\rho(0) = \partial_t \rho(0) = 0$. We can find even more conditions on $\rho$ to further restrict the set of possible solutions by noting that both $n$ and $n[v]$ are normalised to the particle number $N$. Thus for all times it holds
\begin{equation}\label{eq-rho-condition}
\int \rho(t,x) \d x = 0.
\end{equation}
Now \eqref{eq-rho} together with initial conditions \eqref{eq-rho-initial-condition} looks like a good candidate for an initial value problem of a linear evolution equation with unique zero solution, thus telling us that indeed the resulting potential $v$ gives the prescribed density through propagation. It was assumed in \citeasnoun{tddft3} that this is \emph{trivially} true and written in \citeasnoun{ruggenthaler-2012} that it \emph{should} be true but we became more careful by now.

Reliable results are available for analytic $v$. As already noted in \autoref{sect-kovalevskaya}, the classical result of \citeasnoun{kovalevskaya} (usually called the Cauchy--Kowalevski theorem) states that for analytic (in time and space) potentials $v$ there is a unique solution within the class of analytic functions which thus must be the zero solution. This leaves open the possibility of non-analytic solutions, a gap that is readily closed by Holmgren's theorem. It shows that for $v$ analytic the solution \emph{must} be in the analytic class thus leaving zero as the only possible solution. Yet here we are interested in more general potentials, so the quest is not over.

The reason we are sceptical can be found in uniqueness counterexamples like that of \citeasnoun{tychonoff} for the heat equation that is $\Cont^\infty$ and has zero initial value but is still not zero for $t>0$. In that case uniqueness can be re-established, as shown also by \citeasnoun{tychonoff}, if the solution is not allowed to grow too large in space. A simple counterexample in our situation would be to take $\Omega=\R$ and $\nabla v=t^{-2}$ for $t>0$. A non-zero solution with zero initial values is $\rho = t^2 \e^{2x}$, but it grows fast in $x$ and does not fulfil the condition \eqref{eq-rho-condition}.

Strategies tried to find a uniqueness proof include the transformation to a system of first order in time and apply the method of characteristics and results from semigroup theory, or considering a weak version of the PDE and use energy methods to find conserved quantities and estimates for them. Currently the problem must be marked as still open.

\subsection{Weak solutions to the Sturm--Liouville equation}
\label{sect-weak-sl}

We rewrite \eqref{sl-iteration-2} in the more general form
\begin{equation}
\label{sl-general}
 -\nabla\cdot (n \nabla v) = \zeta
\end{equation}
and pose the questions of existence and uniqueness of a solution $v \in V$ depending on $n$ and $\zeta \in W$ as well as on possible conditions on the spatial domain $\Omega$. This will provide us in its course with explicit ideas for the Banach spaces $V,W$ and finally an estimate like \eqref{F-contraction-inequ-1}.

Note the similarity of this equation to a classical problem of electrostatics, namely determining the potential $v$ for a given charge distribution $\rho$ under variable relative permittivity of the material $\varepsilon_r \geq 1$.
\[
 - \nabla\cdot (\varepsilon_r \nabla v) = \frac{\rho}{\varepsilon_0}
\]
The case of $\varepsilon_r = 1$ corresponds to a perfect insulator like vacuum, the case of unbounded $\varepsilon_r$ to a perfect conductor. The homogeneous version with $\rho=0$ but non-zero boundary terms for $v$ is called the \emph{Calderón problem}. \cite{salo-2008}

The underlying domain of the functions in $V$ (like the Banach spaces of quantum trajectories etc.) still incorporates the time interval $[0,T]$ and therefore \eqref{sl-general} has to hold true for all $t \in [0,T]$. This would make it possible, at least in principle, to have different spaces $V_t = \{v(t) \mid v \in V \}$ as time-cuts of $V$. Such a fibration can also be realised in the case of a time depending space domain $\Omega_t$. We then have $v_t \in V_t$ as mappings $v_t : \Omega_t \rightarrow \R$. Although such constructions might have practical benefits we will try to keep it simple here and stick to temporally uniform spaces.

In the next step we generalise the derivatives in \eqref{sl-general} to their weak form. \autoref{weak-derivative} extends naturally to $\nabla$ and will be used to consider so-called \q{weak solutions} of \eqref{sl-general}, which will accordingly be defined by adjoining an arbitrary $\varphi \in \Cont^\infty_0$.
\[
-\langle \varphi, \nabla\cdot (n \nabla v) \rangle = \langle \varphi,\zeta \rangle
\]
We use integration by parts with vanishing boundary term (because $\varphi$ is zero at the boundary) to get a bilinear form $Q$ on $\Cont^\infty_0 \times V$ by
\begin{equation}
\label{Q-def}
Q(\varphi,v) = \langle \nabla \varphi, n \nabla v \rangle = \langle \varphi,\zeta \rangle.
\end{equation}
Note that the right hand side $\zeta$ acting on $\varphi$ can now be generalised to a distribution, i.e., a continuous linear functional on $\Cont^\infty_0$. We denote this again as $\zeta$ to make it clear that this distribution will be identified with $\zeta \in W$ from \eqref{sl-general} if we set $\zeta(\varphi) = \langle \varphi,\zeta \rangle$. But our definition of $Q$ would be fully symmetric if the two spaces of its definition coincide. So how about widening the first space from test functions by including all limits of Cauchy sequences under the $V$-norm (completion), eventually getting to all of $V$. Accordingly the space for $\zeta$ gets shrunk down to the dual of $V$. In fact a solution to this new problem, if one could possibly be found, would still be a weak solution, only in a stricter sense. So by now our task is to find a unique $v \in V$ which for all $u \in V$ fulfils
\begin{equation}
\label{Q-general}
Q(u,v) = \zeta(u)
\end{equation}
which then calls for an application of \autoref{lax-milgram} (Lax--Milgram).

An important note should be directed towards an alternative construction on periodic domains. Imagine $\Omega \subset \R^d$ bounded but shaped in such a way that it can be tessellated to cover all of $\R^d$, typically in the form of a rectangular box or cube. Now the same construction is possible if instead of zero boundary conditions for $\varphi$ periodic ones are requested for all involved quantities. This will again lead to a vanishing boundary term and makes the same construction as follows possible.

\section{Potentials from weighted Sobolev spaces}
\label{sect-weighted-soblev}

\subsection{Construction of the Hilbert space of potentials}
\label{sect-hs-potentials}

We have seen in the preceding section that our question for unique weak solutions to \eqref{sl-general} can be generalised to the form of \eqref{Q-general}. An answer is given by \autoref{lax-milgram} (Lax--Milgram) which applies if the congruent time-cuts $U$ of $V$ are Hilbert spaces with a norm that makes $Q$ coercive as well as continuous (see \autoref{def-coercive-continuous}). The use of the $L^2$ inner product from before suggests $U \subset L^2(\Omega)$. Observing that $\langle\cdot,\cdot\rangle$ and $Q$ can be naturally combined to a bilinear form
\begin{equation}
\label{weighted-sp}
\langle u,v \rangle_U = \langle u,v \rangle + Q(u,v) = \langle u,v \rangle + \langle \nabla u, n \nabla v \rangle,
\end{equation}
we might ask if this yields an adequate inner product to define our Hilbert space $U$, i.e., we have to check if
\[
\|u\|_U = \sqrt{\langle u,u \rangle_U} = \sqrt{\langle u,u \rangle + \langle \nabla u, n \nabla u \rangle}
\]
is actually a norm. One obvious restriction is $n \geq 0$ but this is certainly true for a one-particle density. A second restriction is the necessity that  the derivative $\nabla u$ is well-defined, if only in a weak sense. Because the norm of $U$ is constructed in such a way that $\|u\|_2 \leq \|u\|_U$ we have the supposed property $U \subset L^2(\Omega)$ as a continuous embedding. Thus we are led to the following chain of inclusions allowing elements of $U$ to be weakly differentiated according to \autoref{weak-derivative}.
\[
U \subset L^2(\Omega) \subset L^2_\mathrm{loc}(\Omega) \subset \Lloc(\Omega)
\]
If $n(x)=1$ then $\|\cdot\|_U$ is just the norm of the Sobolev space $W^{1,2}(\Omega) = H^1(\Omega)$ and written as $\|\cdot\|_{1,2}$ so we adopt the notation $H^1(\Omega,n)$ for the complete normed space equipped with $\|\cdot\|_U = \|\cdot\|_{1,2,n}$ and call it a ``weighted Sobolev space''. \cite{adams,kufner-opic} The inner product will from thereon also be noted as $\langle\cdot,\cdot\rangle_{1,2,n}$.

Let us get our notation of the different norms involved straight by defining them all for general $p \in [1,\infty)$ by integrals as in \eqref{eq-sobolev-norm}.
\begin{align*}
\|u\|_p &= \left( \int_\Omega |u|^p \d x \right)^{\frac{1}{p}} \\
\|u\|_{1,p} &= \left( \|u\|_p^p + \left\|\nabla u\right\|_p^p \right)^{\frac{1}{p}} \\
\|u\|_{p,n} &= \left\|u \, n^{\frac{1}{p}}\right\|_p = \left( \int_\Omega |u|^p \,n \d x \right)^{\frac{1}{p}} \\
\|u\|_{1,p,n} &= \left( \|u\|_p^p + \left\|\nabla u\right\|_{p,n}^p \right)^{\frac{1}{p}}
\end{align*}
Note the use of the weighting function $n$ only in the second term of the definition of $\|\cdot\|_{1,p,n}$. If the first index of $\|\cdot\|_{1,p}$ or $\|\cdot\|_{1,p,n}$ is an integer larger than 1 then higher derivatives are also included but this is not needed here.

Now continuity of $Q$ as a condition in \autoref{lax-milgram} (Lax--Milgram) is easy to prove.
\begin{equation}\label{Q-continuous}
\begin{aligned}
|Q(u,v)| &= \left|\langle \nabla u,n \nabla v \rangle\right| \\
&= \left|\left\langle \sqrt{n}\, \nabla u,\sqrt{n}\, \nabla v \right\rangle\right| \\
&\leq \left\| \sqrt{n}\, \nabla u \right\|_2 \cdot \left\| \sqrt{n}\, \nabla v \right\|_2 \\
&= \left\| \nabla u \right\|_{2,n} \cdot \left\| \nabla v \right\|_{2,n} \\
&\leq \left\| u \right\|_{1,2,n} \cdot \left\| v \right\|_{1,2,n}
\end{aligned}
\end{equation}

We still must not forget the restriction to functions which vanish at the border of $\Omega$ in order to justify the integration by parts used to derive \eqref{Q-def} in the first place. This is of course not generally true for elements of $H^1(\Omega,n)$ but can be met if one adopts the usual definition of $H_0^1(\Omega)$ (see \autoref{sect-zero-boundary}) to weighted Sobolev spaces. In that we take the space of infinitely differentiable functions on $\Omega$ with compact support $\Cont^\infty_0(\Omega)$ and form the completion under our weighted Sobolev norm $\|\cdot\|_{1,2,n}$. This of course is only possible if all $\varphi \in \Cont^\infty_0(\Omega)$ have finite $\|\varphi\|_{1,2,n}$ in the first place, but this is surely true if $n \in \Lloc(\Omega)$, a restriction easily met by a density. The resulting space $U = H_0^1(\Omega,n)$ equipped with inner product \eqref{weighted-sp} is then complete and thus a full-fledged Hilbert space of functions which vanish at the border of $\Omega$. This will be the main space of our further investigations.

Let us also define the dual of this Hilbert space $H^{-1}(\Omega,n) = H_0^1(\Omega,n)'$, i.e., the space of linear continuous functionals on $H_0^1(\Omega,n)$ (see \autoref{sect-sobolev-dual}). $\zeta$ from \eqref{Q-general} is thought to be an element of this space.

Nevertheless for the coercivity of $Q$ with respect to the Hilbert space $H_0^1(\Omega,n)$ to hold, we need additional assumptions on the weighting function $n$. We will first examine stronger restrictions on $n$ that make $-\nabla\cdot(n\nabla\cdot)$ an elliptic partial differential operator.

\subsection{The simple elliptic case}
\label{sect-elliptic-case}

The following case was considered in \citeasnoun{tddft1}. We start by noting that the assumptions of \autoref{lax-milgram} (Lax--Milgram) on $Q$ are easily met if we can impose the following restriction on $n$. Let $m, M > 0$ and for almost all $x \in \Omega$
\begin{equation*}\label{n-elliptic-restriction}
m \leq n(x) \leq M.
\end{equation*}
In this case the norms $\|\cdot\|_{1,2,n}$ and $\|\cdot\|_{1,2}$ are equivalent because of
\[
m \langle \nabla u, \nabla u \rangle \leq \langle \nabla u, n \nabla u \rangle \leq M \langle \nabla u, \nabla u \rangle
\]
and therefore we have $H_0^1(\Omega,n) = H_0^1(\Omega)$. Coercivity can now be proved by means of the Poincar\'{e} inequality (cf.~\citeasnoun[6.30]{adams}, in this form it is sometimes also called ``Friedrichs' inequality'').

\begin{theorem}[Poincar\'{e} inequality]\label{th-poincare}
If the domain $\Omega \subset \R^n$ is contained in a strip of finite width $a$ then for all $u \in W^{1,q}_{0}(\Omega)$
\[
 \|u\|_q \leq \frac{a}{q^{1/q}} \|\nabla u\|_q.
\]
\end{theorem}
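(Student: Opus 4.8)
The plan is to reduce the statement to the elementary one-dimensional fundamental theorem of calculus combined with H\"older's inequality, integrated slice by slice. After a rigid motion of $\R^n$ --- which leaves both $\|u\|_q$ and $\|\nabla u\|_q$ invariant --- I may assume the strip is $\{x=(x_1,x')\in\R^n \mid 0<x_1<a,\ x'\in\R^{n-1}\}$, so that $\Omega$ lies in this set. By the definition of $W_0^{1,q}(\Omega)$ as the completion of $\Cont_0^\infty(\Omega)$ under $\|\cdot\|_{1,q}$ (see the section on zero boundary conditions), it suffices to prove the inequality for $u=\varphi\in\Cont_0^\infty(\Omega)$; extending $\varphi$ by zero to all of $\R^n$ keeps it in $\Cont_0^\infty$ of the strip, and the weak derivatives of the extension are the zero-extensions of $\partial_k\varphi$, so no boundary contributions appear.

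For such $\varphi$ and fixed $x'$, since $\varphi(0,x')=0$ we have $\varphi(x_1,x')=\int_0^{x_1}\partial_1\varphi(t,x')\,\d t$. First I would apply H\"older's inequality with conjugate exponent $q'=q/(q-1)$ on the interval $(0,x_1)\subseteq(0,a)$ to get
\[
|\varphi(x_1,x')|^q \leq x_1^{q/q'}\int_0^{x_1}|\partial_1\varphi(t,x')|^q\,\d t \leq x_1^{q-1}\int_0^{a}|\partial_1\varphi(t,x')|^q\,\d t,
\]
using $q/q'=q-1$. Integrating this in $x_1$ over $(0,a)$ produces the constant $\int_0^a x_1^{q-1}\,\d x_1 = a^q/q$, and then integrating in $x'$ over $\R^{n-1}$ and invoking Fubini yields $\|\varphi\|_q^q \leq \frac{a^q}{q}\,\|\partial_1\varphi\|_q^q$. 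Finally, the pointwise bound $|\partial_1\varphi|\leq|\nabla\varphi|$ gives $\|\partial_1\varphi\|_q\leq\|\nabla\varphi\|_q$, hence $\|\varphi\|_q\leq\frac{a}{q^{1/q}}\|\nabla\varphi\|_q$. Passing to the limit along a $\|\cdot\|_{1,q}$-Cauchy sequence of test functions (both sides are continuous for that norm) extends the estimate to all $u\in W_0^{1,q}(\Omega)$, and the case $q=1$ is recovered by reading the H\"older step as $\int_0^{x_1}|\partial_1\varphi|\leq\int_0^a|\partial_1\varphi|$.

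I do not expect a genuine obstacle here: the argument is self-contained modulo H\"older's inequality and the completion description of $W_0^{1,q}$ already recorded in the text. The only points demanding a little care are the reduction to a coordinate-aligned strip (so that ``width $a$'' really bounds the $x_1$-extent) and the bookkeeping that makes the constant come out exactly as $a\,q^{-1/q}$ rather than a cruder multiple; both are routine. A remark worth adding afterwards is that the same proof works verbatim on a rectangular periodic cell provided one uses that elements of the periodic Sobolev space integrate to the same value on opposite faces --- but for the present application it is exactly the $W_0^{1,q}$ version, with $q=2$, that will be fed into the coercivity estimate for $Q$ in the elliptic case of \autoref{sect-elliptic-case}.
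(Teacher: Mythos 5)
Your proposal is correct and follows essentially the same route as the paper's proof: reduce by density to test functions on a coordinate-aligned strip, write $\varphi(x_1,x')$ as the integral of $\partial_1\varphi$ from the boundary, apply H\"older on $(0,x_1)$, and integrate $x_1^{q-1}$ over $(0,a)$ to produce the constant $a^q/q$. You merely spell out a few steps the paper leaves implicit (the rigid-motion reduction, the bound $|\partial_1\varphi|\leq|\nabla\varphi|$, and the passage to the limit), so there is nothing to add.
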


\begin{proof}
We assume without loss of generality that $\Omega$ is contained in the strip $\{x \in \R^n \mid 0 < x_1 < a\}$. Since $\Cont_0^\infty(\Omega)$ is dense in $W^{1,q}_{0}(\Omega)$ we just have to show the estimate for $u \in \Cont_0^\infty(\Omega)$. One uses the Hölder inequality to get
\[
|u(x_1,x')| = \left|\int_0^{x_1} 1 \cdot \partial_t u(t,x') \d t \right| \leq x_1^{1-1/q} \cdot \left(\int_0^{x_1} |\partial_t u(t,x')|^q \d t\right)^{1/q}
\]
where we abbreviated $x' = (x_2,\ldots,x_n)$ as individual coordinates, not particle positions as usual. The final estimate now follows.
\[
\|u\|_q^q \leq \int_{\R^{n-1}} \d x' \int_0^a x_1^{q-1} \d x_1 \int_0^a |\partial_t u(t,x')|^q \d t \leq \frac{a^q}{q} \|\nabla u\|_q^q
\]
If integration leads outside of the domain $\Omega$ we imagine all functions to be continued by zero.
\end{proof}

Obviously the condition on $\Omega$ is always fulfilled for bounded domains. Applied to the case $q=2$ this gives us with $\lambda = a/\sqrt{2}$, the constant from the Poincaré inequality,
\begin{align*}
 Q(u,u) &= \langle \nabla u, n \nabla u \rangle \,\geq m \|\nabla u\|_2^2 \mtext{and} \\
 \lambda^2 Q(u,u) &\geq \lambda^2 m \|\nabla u\|_2^2 \geq m \|u\|_2^2.
\end{align*}
Combination of these results yields
\[
 Q(u,u) \geq \frac{m}{1+\lambda^2} \left( \|u\|_2^2 + \|\nabla u\|_2^2
 \right) = \frac{m}{1+\lambda^2} \| u \|_{1,2}^2.
\]
We thus have established the coercivity of $Q$. Continuity can be shown directly by the boundedness of $n$ from above.
\[
 | Q(u,v) | = |\langle \nabla u, n \nabla u \rangle | 
  \leq  M \|\nabla u\|_2 \cdot \|\nabla v\|_2 \leq
 M \|u\|_{1,2} \cdot \|v\|_{1,2} < \infty
\]
These restrictions on $n$ also imply that the differential operator defined by the left hand side of \eqref{sl-general} is \emph{strongly elliptic} which corresponds to coercivity.

If $n$ can be assumed to be continuous on the closed domain $\overline{\Omega}$ then $n$ also attains its extremal values on $\overline{\Omega}$ and the restriction \eqref{n-elliptic-restriction} reduces to the form
\begin{equation*}
 0 < n < \infty \quad \mbox{on}\; \overline{\Omega}.
\end{equation*}

However $n>0$ is quite a strong restriction on the one-particle density which will typically be zero at the border of $\Omega$. We try to weaken it in the following sections. In a periodic setting there are no borders, so the condition is naturally less restrictive though not absent. The second part of the restriction $n < \infty$ on the other side will be usually met for densities coming from differentiable wave functions by virtue of the Sobolev embedding \eqref{eq-sobolev-embedding}, like already employed in the proof of \autoref{lemma-q-L2-2}.

\subsection{Embedding theorems for weighted Sobolev spaces}
\label{sect-embedding-theorems}

Trouble starts if $n$ is not gapped away from zero everywhere. Clearly \eqref{sl-general} is harder to invert in regions where $n$ is small and it can have no unique solution when $n=0$ on a set of positive measure. Such problems could be avoided if one knows that simultaneously to $n \rightarrow 0$ the inhomogeneity $\zeta \rightarrow 0$ but we will try here to prove full coercivity for a fixed class of one-particle densities $n$. In this we follow a strategy largely outlined in \citeasnoun[Example 1.3]{drabek}. The idea is to continuously embed the weighted Sobolev space into a non-weighted one and further into an ordinary $L^2$ space.

\begin{lemma}\label{embedding1} Let $\Omega$ be bounded, $p > q \geq 1$ and the weighting function $n$ such that $n^{-s} \in L^1(\Omega)$ for $s = \frac{q}{p-q}$. Then we have an embedding
\[
W^{1,p}(\Omega, n) \hookrightarrow W^{1,q}(\Omega).
\]
\end{lemma}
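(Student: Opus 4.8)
The plan is to establish directly the norm estimate $\|u\|_{1,q} \lesssim \|u\|_{1,p,n}$ for every $u \in W^{1,p}(\Omega,n)$, which is precisely what a continuous embedding (\autoref{def-cont-comp-embedded}) asks for. Since the notion of weak derivative (\autoref{weak-derivative}) makes no reference to an integrability class, the weak gradient $\nabla u$ is literally the same object in both spaces; only its finiteness in $L^q(\Omega)$ needs checking, so the whole argument reduces to two separate estimates, one for $u$ itself and one for $\nabla u$.

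For the zeroth-order term I would simply invoke the elementary inclusion $L^p(\Omega) \subset L^q(\Omega)$, valid on bounded $\Omega$ whenever $p > q$ (noted in \autoref{sect-lebesgue-sobolev-spaces}), which via H\"older gives $\|u\|_q \leq |\Omega|^{1/q - 1/p}\|u\|_p$. As $\|u\|_p$ is one of the two summands of $\|u\|_{1,p,n}$, this term is immediately under control.

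The gradient term is the heart of the matter, and it is exactly here that the hypothesis $n^{-s} \in L^1(\Omega)$ enters. I would write $|\nabla u|^q = n^{-q/p}\,(n^{1/p}|\nabla u|)^q$ and apply H\"older's inequality with the conjugate exponents $\tfrac{p}{p-q}$ and $\tfrac{p}{q}$: the factor $(n^{1/p}|\nabla u|)^q$ lies in $L^{p/q}(\Omega)$ because $n^{1/p}\nabla u \in L^p(\Omega)$ by the very definition of $\|\cdot\|_{p,n}$, while the factor $n^{-q/p}$ lies in $L^{p/(p-q)}(\Omega)$ precisely when $n^{-(q/p)\cdot(p/(p-q))} = n^{-q/(p-q)} = n^{-s} \in L^1(\Omega)$. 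This yields
\[
\int_\Omega |\nabla u|^q \,\mathrm{d}x \;\leq\; \|n^{-s}\|_1^{(p-q)/p}\,\|\nabla u\|_{p,n}^q,
\]
so that $\|\nabla u\|_q \leq \|n^{-s}\|_1^{(p-q)/(pq)}\,\|\nabla u\|_{p,n} < \infty$.

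Finally I would combine the two estimates: $\|u\|_{1,q}^q = \|u\|_q^q + \|\nabla u\|_q^q \lesssim \|u\|_p^q + \|\nabla u\|_{p,n}^q$, and then appeal to the equivalence of the $\ell^p$- and $\ell^q$-type norms on $\mathbb{R}^2$ (\autoref{lemma-equiv-norms}, read through \autoref{def-sim}) to pass from $\|u\|_p^q + \|\nabla u\|_{p,n}^q$ back to $\|u\|_{1,p,n}^q = (\|u\|_p^p + \|\nabla u\|_{p,n}^p)^{q/p}$, which completes the continuous embedding. There is no genuine obstacle in this argument; the one point demanding care is the bookkeeping of the H\"older exponents, so that the power of $n^{-1}$ produced by the estimate is exactly $s$ — and it is this requirement that pins down the value $s = q/(p-q)$ appearing in the statement.
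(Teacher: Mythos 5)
Your proposal is correct and follows essentially the same route as the paper: the identical H\"older split $|\nabla u|^q = \bigl(n^{1/p}|\nabla u|\bigr)^q\, n^{-q/p}$ with conjugate exponents $p/q$ and $p/(p-q)$ for the gradient term, and the elementary inclusion $L^p(\Omega)\subset L^q(\Omega)$ on bounded domains for the zeroth-order term. The only difference is that you spell out the final norm bookkeeping, which the paper leaves implicit.
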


\begin{proof}
Using Hölder's inequality with $\frac{1}{p'} + \frac{1}{q'} = \frac{q}{p} + \frac{p-q}{p} = 1$ we derive
\begin{align*}
\left\| \nabla u \right\|_q &= \left\| \, |\nabla u|^q \right\|_1^{\frac{1}{q}} = \left\| \left( |\nabla u|^q n^{\frac{q}{p}} \right) n^{-\frac{q}{p}} \right\|_1^{\frac{1}{q}} \\
&\leq \left( \left\| |\nabla u|^q n^{\frac{q}{p}} \right\|_{\frac{p}{q}} \left\| n^{-\frac{q}{p}} \right\|_{\frac{p}{p-q}}\right)^{\frac{1}{q}} = \left\| |\nabla u|^p n \right\|_1^{\frac{1}{p}} \left\| n^{-s} \right\|_1^{\frac{p-q}{pq}}
\end{align*}
and thus
\begin{equation}\label{inequ-q-p,n}
\| \nabla u \|_q \leq c \| \nabla u \|_{p,n}.
\end{equation}
Now we can easily establish the inclusion, considering that $L^p(\Omega) \hookrightarrow L^q(\Omega)$ for $\Omega$ bounded.
\end{proof}

The following lemma is part of the Rellich--Kondrachov theorem, a collection of embeddings of Sobolev spaces that are \emph{compact}, and it will not be proved here. The full-fledged theorem in all its generality can be found in \citeasnoun[6.3]{adams}, a presentation more adapted to the situation treated here is \citeasnoun[Th.~D.4.1]{blanchard-bruening}.

\begin{lemma}\label{embedding2}
Let $\Omega$ be bounded with dimension $d$ and satisfy the cone condition. Then we have a compact embedding
\[
W^{m,q}(\Omega) \hookrightarrow\hookrightarrow L^r(\Omega)
\]
for $1 \geq \frac{1}{r} > \frac{1}{q} - \frac{m}{d}$, provided $m \geq 1$ and $m q < d$.
\end{lemma}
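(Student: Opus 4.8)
The statement is the Rellich--Kondrachov compactness theorem, and the plan is to reduce it to the single case $m=1$, $p<d$, and then invoke the Fr\'echet--Kolmogorov compactness criterion in $L^r$. First I would dispose of the trivial edge cases: if $\tfrac1q-\tfrac{m-1}{d}\le 0$ (in particular if the Sobolev count crosses $d$ before all $m$ orders are used) then by the embedding \eqref{eq-sobolev-embedding} together with Morrey's inequality a bounded sequence in $W^{m,q}(\Omega)$ is bounded in a H\"older space $\Cont^{0,\gamma}(\overline\Omega)$, hence equicontinuous and uniformly bounded, so Arzel\`a--Ascoli yields a subsequence converging in $\Cont^0(\overline\Omega)$ and a fortiori in $L^r(\Omega)$ for every $r<\infty$ since $\Omega$ is bounded. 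Otherwise set $p$ by $\tfrac1p=\tfrac1q-\tfrac{m-1}{d}>0$; iterating the continuous Sobolev embeddings gives $W^{m,q}(\Omega)\hookrightarrow W^{1,p}(\Omega)$, and since $\tfrac1p-\tfrac1d=\tfrac1q-\tfrac md$ the hypothesis on $r$ becomes $r<p^\ast$ with $\tfrac1{p^\ast}=\tfrac1p-\tfrac1d$. So it suffices to prove $W^{1,p}(\Omega)\hookrightarrow\hookrightarrow L^r(\Omega)$ for $1\le r<p^\ast$, $1\le p<d$.

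Next I would use the cone condition to pass to a clean geometric situation. By the structure theory of such domains (\citeasnoun[Ch.~4]{adams}) a bounded domain with the cone property is a finite union of subdomains each star-shaped with respect to a ball; on such pieces -- and hence, patching with a partition of unity, on $\Omega$ itself -- there is a bounded linear extension operator $E:W^{1,p}(\Omega)\to W^{1,p}(\R^d)$ whose range consists of functions supported in one fixed ball $B\supset\overline\Omega$. Thus the problem reduces to showing that the image under $E$ of the unit ball of $W^{1,p}(\Omega)$ -- a bounded family $\mathcal S\subset W^{1,p}(\R^d)$ of functions all supported in $B$ -- is relatively compact in $L^r(B)$.

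For this I would apply the Fr\'echet--Kolmogorov (Riesz--Kolmogorov) theorem: a bounded subset of $L^r(B)$ is relatively compact iff it is equi-small under translations, $\sup_{v\in\mathcal S}\|\tau_h v-v\|_{L^r}\to0$ as $h\to0$ (the tightness hypothesis is automatic since everything lives in $B$). To check the translation estimate, mollify: for $v\in\mathcal S$ and a standard mollifier the approximants $v_\varepsilon\in\Cont_0^\infty$ are supported in a slightly enlarged ball and satisfy $\|v_\varepsilon-v\|_{W^{1,p}}\to0$ uniformly in a way controlled by $\|v\|_{W^{1,p}}$, so it is enough to treat smooth $v$. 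There one has the pointwise bound $\|\tau_h v-v\|_{L^1}\le|h|\,\|\nabla v\|_{L^1}\le c_B|h|\,\|\nabla v\|_{L^p}$, while the Sobolev inequality gives $\|\tau_h v-v\|_{L^{p^\ast}}\le 2\|v\|_{L^{p^\ast}}\le c\,\|v\|_{W^{1,p}}$; interpolating in $L^r$ with $\tfrac1r=\tfrac\theta1+\tfrac{1-\theta}{p^\ast}$ and $\theta\in(0,1]$ -- here is where $r<p^\ast$ strictly is essential, so that $\theta>0$ -- yields $\|\tau_h v-v\|_{L^r}\le c\,|h|^{\theta}\,\|v\|_{W^{1,p}}$, which tends to $0$ uniformly over the bounded family. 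Hence $\mathcal S$ is precompact in $L^r(B)$, and restricting the resulting convergent subsequences back to $\Omega$ finishes the argument.

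The main obstacle -- and the reason the paper, following Adams, only cites this result -- is the honest execution of the geometric reduction from the cone condition to a bounded extension operator with fixed support: unlike the Lipschitz case, the cone property does not produce an extension operator by any short argument, and one really needs the covering into star-shaped pieces of \citeasnoun[Ch.~4]{adams} (or, avoiding extension altogether, a direct Sobolev integral representation of $u$ on each star-shaped piece together with the compactness of the associated Riesz-potential operator). A secondary but conceptually important subtlety is that the strictness $r<p^\ast$ cannot be relaxed: at the endpoint the embedding remains continuous but fails to be compact, as concentrating bump sequences show, and this is precisely what the positivity of the interpolation exponent $\theta$ encodes.
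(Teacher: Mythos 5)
First, a framing point: the paper does not prove this lemma at all --- it is stated as ``part of the Rellich--Kondrachov theorem'' and deferred to \citeasnoun[6.3]{adams}, so there is no internal proof to compare yours against. Your overall architecture (reduce to $m=1$ by iterating the continuous embeddings, then prove compactness of $W^{1,p}\hookrightarrow L^r$ for $r<p^\ast$ via Fr\'echet--Kolmogorov, interpolating the translation bound $\|\tau_h v-v\|_{L^1}\le |h|\,\|\nabla v\|_{L^1}$ against the continuous $L^{p^\ast}$ bound) is the standard one, and the analytic core is sound; the strictness $r<p^\ast$ entering only through $\theta>0$ is exactly right. Two minor wobbles: under the hypothesis $mq<d$ one has $\tfrac1q>\tfrac{m-1}{d}+\tfrac1d$, so your ``edge case'' cannot occur; and mollification does \emph{not} converge to the identity uniformly over the unit ball of $W^{1,p}$ (that uniformity would already be the compactness you are proving), but you do not need it --- the inequality $\|\tau_h v-v\|_{L^1}\le|h|\,\|\nabla v\|_{L^1}$ passes from smooth $v$ to all of $W^{1,1}(\R^d)$ by approximating each fixed $v$ separately, both sides being continuous in the $W^{1,1}$-norm.

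The genuine gap is the geometric reduction. A bounded domain with the cone property need not admit \emph{any} bounded extension operator $W^{1,p}(\Omega)\to W^{1,p}(\R^d)$: the slit disk $B_1(0)\setminus\{(x,0)\mid 0\le x<1\}$ in $\R^2$ satisfies the cone condition, yet a function equal to $1$ just above the slit and $0$ just below it lies in $W^{1,\infty}(\Omega)$ and has no $W^{1,p}(\R^2)$ extension (any extension would carry an essential jump across a null set, contradicting the ACL characterisation of Sobolev functions). Writing $\Omega$ as a finite union of star-shaped pieces does not rescue the claim: gluing local extensions by a partition of unity fails precisely because the pieces communicate only through $\Omega$. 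So the main line of your argument, as written, does not go through. The repair is either the route you mention only parenthetically --- the Sobolev integral representation on each star-shaped piece, bounding $u$ pointwise by a Riesz potential of $\nabla u$ and using compactness of that integral operator from $L^p$ to $L^r$ --- or, closer to Adams, an exhaustion argument: establish precompactness in $L^r(\Omega')$ for $\Omega'\subset\subset\Omega$, where translation by $|h|<\operatorname{dist}(\Omega',\partial\Omega)$ stays inside $\Omega$ and no extension is needed, and then control $\|u\|_{L^r(\Omega\setminus\Omega')}$ uniformly over the bounded family by H\"older together with the \emph{continuous} embedding $W^{1,p}(\Omega)\hookrightarrow L^{p^\ast}(\Omega)$ (which does hold under the cone condition), using that $|\Omega\setminus\Omega'|$ can be made arbitrarily small.
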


These embeddings hold naturally also for $W^{1,p}_0(\Omega, n)$ and $W^{m,q}_0(\Omega)$ respectively. In the case of \autoref{embedding2} we can even drop the cone condition if considering $W^{m,q}_0(\Omega)$. This Lemma is in complete accordance with the rough estimates stated in \autoref{sect-lebesgue-sobolev-spaces} that the $W^{m,q}$-norm measures $AN^m V^{1/q}$. Using the well-known relation between the volume of the support in the space domain $V$ and the frequency bandwidth $N$ usually called \q{uncertainty principle} $V \gtrsim N^{-d}$ we get a lower estimate $A V^{1/q-m/d}$ for the $W^{m,q}$-norm that represents an $L^{r}$-norm (ignoring the substituted frequency bandwidth) with $1/r = 1/q-m/d$. This is exactly the limiting case of the Rellich--Kondrachov result.

\begin{theorem}\label{embedding3}
Let $\Omega$ be bounded with dimension $d$ and the weighting function $n$ such that $n^{-s} \in L^1(\Omega)$ for an $s > \frac{d}{2}$, then we have a compact embedding
\[
H_0^1(\Omega, n) \hookrightarrow\hookrightarrow L^2(\Omega).
\]
\end{theorem}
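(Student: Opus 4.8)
The plan is to obtain the compact embedding as a composition of a \emph{continuous} embedding of the weighted space into a non-weighted Sobolev space, supplied by \autoref{embedding1}, followed by a \emph{compact} Rellich--Kondrachov embedding, supplied by \autoref{embedding2}. Since the composition of a bounded operator with a compact one is compact, this gives the claim. The whole subtlety is an index computation: I will show that the hypothesis $s>d/2$ is exactly what is needed so that a single intermediate integrability exponent $q$ lies simultaneously in the admissible range of both lemmas.

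Concretely, I would fix $q=\frac{2s}{s+1}$, so that $1\le q<2$ (the upper bound because $s<\infty$), and a one-line computation gives $\frac{q}{2-q}=s$. Thus the assumption $n^{-s}\in L^1(\Omega)$ is precisely the hypothesis of \autoref{embedding1} with $p=2$ and this $q$, and that lemma yields $H^1_0(\Omega,n)=W^{1,2}_0(\Omega,n)\hookrightarrow W^{1,q}_0(\Omega)$ continuously: the gradient term is controlled by \eqref{inequ-q-p,n} and the lower-order term by $\|u\|_2\le\|u\|_{1,2,n}$ (and for $W_0$-spaces this needs only $\Omega$ bounded, no cone condition, as remarked after \autoref{embedding2}). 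Next I apply \autoref{embedding2} with $m=1$, $r=2$ to $W^{1,q}_0(\Omega)$: the requirement $1\ge\frac12>\frac1q-\frac1d$ is equivalent to $q>\frac{2d}{d+2}$, and $\frac{2s}{s+1}>\frac{2d}{d+2}\iff s(d+2)>d(s+1)\iff 2s>d$, which holds by hypothesis. If moreover $q<d$ the lemma gives $W^{1,q}_0(\Omega)\hookrightarrow\hookrightarrow L^2(\Omega)$ directly; in the borderline case $q\ge d$ one instead invokes the standard Sobolev embedding of $W^{1,q}_0(\Omega)$ into $\Cont^0(\overline\Omega)$ (for $q>d$) or into every $L^r(\Omega)$, $r<\infty$ (for $q=d$), which on the bounded domain $\Omega$ is again compact into $L^2(\Omega)$. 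Composing, $H^1_0(\Omega,n)\hookrightarrow\hookrightarrow L^2(\Omega)$.

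The main obstacle is just this index bookkeeping, namely verifying that the interval of admissible $q$, forced to satisfy $q>\frac{2d}{d+2}$ for Rellich--Kondrachov and $q\le\frac{2s}{s+1}$ for the weighted H\"older step in \autoref{embedding1}, is non-empty exactly when $s>d/2$; a minor nuisance is the borderline $q\ge d$, which only occurs in low dimension and is dispatched separately as above (for $d\ge 2$ one has $q<2\le d$, so the clean case applies, and for $d=1$ one simply takes $q=1$). In the situation actually used later, $\Omega\subset\R^3$, one has $s>3/2$ and $q=\frac{2s}{s+1}\in(6/5,2)$, hence $q<3=d$, and \autoref{embedding2} applies without any caveat.
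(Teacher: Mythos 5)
Your proof is correct and follows essentially the same route as the paper: the same choice $q=\frac{2s}{s+1}$ (i.e.\ $p=2$ in \autoref{embedding1}) composes the continuous embedding into $W_0^{1,q}(\Omega)$ with the compact Rellich--Kondrachov embedding of \autoref{embedding2}, and the index computation showing that $s>\frac{d}{2}$ is exactly the admissibility condition is identical. The only cosmetic difference is the treatment of the caveat $mq<d$, which bites only for $d=1$: the paper replaces $s$ by a smaller admissible exponent on the bounded domain, while you dispatch the borderline via the limiting Sobolev embedding; both work.
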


\begin{proof}
We start by considering the smallest necessary $s$ in \autoref{embedding1}, as $n^{-s} \in L^1(\Omega)$ for larger $s$ is only harder to achieve. From $s = \frac{q}{p-q}$ we get $q = \frac{2s}{s+1}$ for $p=2$. Now we choose $m=1$ and $r=2$ in \autoref{embedding2} to get the following chain of embeddings.
\[
H_0^1(\Omega, n) = W_0^{1,2}(\Omega, n) \hookrightarrow W_0^{1,q}(\Omega) \hookrightarrow\hookrightarrow L^2(\Omega)
\]
The condition on $s$ then reads $\frac{1}{2} > \frac{s+1}{2s} - \frac{1}{d}$ which transforms to $s > \frac{d}{2}$. The further condition $mq<d$ transforms to $(2-d)s<d$ and is thus only a restriction for $d=1$ reading $s<1$. Since a density $n$ fulfilling $n^{-s} \in L^1(\Omega)$ is still in $L^1(\Omega)$ for any smaller power than $s$ and $\Omega$ bounded, this is not a true restriction.
\end{proof}

Considerable time after publishing these findings in \citeasnoun{tddft4} we learned about a very similar study of \citeasnoun{caldiroli-musina-2000} to guarantee unique solutions to \eqref{sl-general} even in the case of semi-linear $\zeta$ depending on $v$. Yet the semi-linear $\zeta$ could not already include the non-linear $q[v]$ because this quantity depends on all previous times and so something like the fixed-point procedure is still needed. Their main assumption on $n$ is that if $z \in \Omega$ is a zero of $n$ then it must decrease more slowly than $|x-z|^\alpha$, $\alpha \in (0,2]$, in a neighbourhood of $z$. The problem is \q{subcritial} or \q{compact}, which exactly corresponds to our compactness result above, if $\alpha < 2$. Higher $\alpha$ correspond to a quicker decrease, so lower $\alpha$ put a tighter condition on the function at zeros. The formalisation of this condition is for all $z \in \overline{\Omega}$
\begin{equation}\label{eq-caldiroli-musina}
\liminf_{x \rightarrow z} \frac{n(x)}{|x-z|^\alpha} > 0,
\end{equation}
or equivalently with the small Landau symbol of \autoref{def-small-o}
\[
n(x) \notin o(|x-z|^\alpha) \mtext{as} x \rightarrow z.
\]
To see this take for example $z=0$, so a critical density decreases like $|x|^\alpha = r^\alpha$. The condition $n^{-s} \in L^1(\Omega)$ then demands for some small radius $R>0$
\[
\int_0^R r^{-\alpha s} r^{d-1} \d r < \infty
\]
and thus $d > \alpha s$. But we have $s > \frac{d}{2}$ and thus $\alpha < 2$. Our main result of this section \autoref{embedding3} is Corollary 2.6 in \citeasnoun{caldiroli-musina-2000}. If we look for the largest possible exponents $r$ to make the embedding \autoref{embedding2} possible we arrive exactly at the number $2^*_\alpha = 2d/(d-2+\alpha)$ used in Proposition 2.5 of \citeasnoun{caldiroli-musina-2000}. They give it as a generalisation to the inequality of \citeasnoun{caffarelli-kohn-nirenberg} where $n(x)=|x|^\alpha$ which is in turn a generalisation of the famous Hardy--Sobolev inequality.\footnote{The Heisenberg uncertainty inequality can be seen as a consequence of the Hardy--Sobolev inequality, see \citeasnoun{aermark}.}
\[
\left(\frac{d-2}{2}\right)^2 \int_\Omega \frac{u(x)^2}{|x|^2} \d x \leq \int_\Omega |\nabla u(x)|^2 \d x
\]

As an interesting remark \citeasnoun{caldiroli-musina-2000} note that if \eqref{eq-caldiroli-musina} holds for an $\alpha < 2$ then $n$ cannot be of class $\Cont^2$. To see this consider $f:\R \rightarrow \R_{\geq 0}$ and its second derivative at a point $x$ where $f(x)=0$
\[
f''(x) = \lim_{h \searrow 0} \frac{f(x+h)+f(x-h)}{h^2} = \lim_{h \searrow 0} \frac{f(x+h)+f(x-h)}{h^\alpha} \cdot \frac{1}{h^{2-\alpha}}.
\]
Note that if such an $f$ obeys \eqref{eq-caldiroli-musina} with $\alpha < 2$ then the first fraction is strictly positive while the second one clearly diverges. The fact that a density $n$ with zeros (including the borders) and of sufficient regularity cannot simultaneously fulfil the condition \eqref{eq-caldiroli-musina} suggests that this condition is actually too restrictive. Our condition $n^{-s} \in L^1(\Omega)$ is less restrictive because \citeasnoun{caldiroli-musina-2000} derive in Remark 2.3 that theirs leads to a necessarily finite number of zeros in $\overline{\Omega}$, whereas $n^{-s} \in L^1(\Omega)$ clearly allows for an infinite number of zeros too. An example would be $\Omega = \{ x \in \R^2 \mid |x|<1 \}$ the unit disc, $n(x)=(1-r^2)^{1/3}$, which has zeros all along the border of $\Omega$ but fulfils $n^{-2} \in L^1(\Omega)$.

\subsection{A proof of coercivity}
\label{sect-coercivity}

\begin{lemma}\label{Q-coercive}
Let $\Omega$ be bounded with dimension $d$ and the weighting function $n$ such that $n^{-s} \in L^1(\Omega)$ for an $s > \frac{d}{2}$. Then the bilinear form $Q(u,v) = \langle \nabla u, n \nabla v \rangle$ on $H_0^1(\Omega,n)$ is coercive.
\end{lemma}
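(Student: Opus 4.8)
The plan is to reduce coercivity of $Q$ to a weighted Poincaré inequality and then obtain the latter by chaining the embedding results of the previous subsections, after which \autoref{lax-milgram} (Lax--Milgram) becomes applicable to \eqref{Q-general}. Recall from the construction \eqref{weighted-sp} that the norm of $H_0^1(\Omega,n)$ satisfies $\|u\|_{1,2,n}^2 = \|u\|_2^2 + \langle\nabla u, n\nabla u\rangle = \|u\|_2^2 + Q(u,u)$. Hence, if one can produce a constant $C>0$ with
\[
\|u\|_2^2 \leq C\, Q(u,u) \qquad\text{for all } u \in H_0^1(\Omega,n),
\]
then $\|u\|_{1,2,n}^2 \leq (C+1)\,Q(u,u)$, i.e.\ $Q(u,u) \geq (C+1)^{-1}\|u\|_{1,2,n}^2$, which is precisely coercivity in the sense of \autoref{def-coercive-continuous}. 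So the whole task collapses to establishing this weighted Poincaré-type inequality; note that the right-hand side must be $Q(u,u)$ alone, not the full $\|u\|_{1,2,n}$, so citing the continuous embedding $H_0^1(\Omega,n)\hookrightarrow L^2(\Omega)$ from \autoref{embedding3} directly is not enough.

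For the weighted Poincaré inequality I would argue along the chain $\|u\|_2 \lesssim \|u\|_{1,q} \lesssim \|\nabla u\|_q \lesssim \|\nabla u\|_{2,n}$ with $q = \tfrac{2s}{s+1} < 2$ (equivalently $s = \tfrac{q}{2-q}$, so that $n^{-s}\in L^1(\Omega)$ is exactly our hypothesis). The last estimate is inequality \eqref{inequ-q-p,n}, which was proved inside \autoref{embedding1}. The middle estimate is the ordinary Poincaré inequality \autoref{th-poincare} on $W_0^{1,q}(\Omega)$, available because $\Omega$ is bounded, which absorbs $\|u\|_q$ into $\|\nabla u\|_q$. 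The first estimate is the continuous embedding $W_0^{1,q}(\Omega)\hookrightarrow L^2(\Omega)$; this follows from \autoref{embedding2} with $m=1$, $r=2$, whose exponent condition $\tfrac12 > \tfrac1q - \tfrac1d$ is easily checked to be equivalent to $s > \tfrac d2$ — so the hypothesis is consumed precisely here — and for $d\geq 2$ one has $q < 2 \leq d$ so the theorem applies (in its $W_0^{1,q}$-form the cone condition is not needed), while the degenerate case $d=1$ is handled by the minor adaptations already indicated in the proof of \autoref{embedding3}. Stringing the three estimates together yields $\|u\|_2 \leq C\sqrt{Q(u,u)}$, hence coercivity as above; continuity of $Q$ was already shown in \eqref{Q-continuous}.

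An equivalent route, if one prefers to lean on the compactness rather than mere continuity of \autoref{embedding3}, is the standard contradiction argument: a sequence $u_k$ with $\|u_k\|_2 = 1$ and $Q(u_k,u_k)\to 0$ is bounded in the Hilbert space $H_0^1(\Omega,n)$, hence (along a subsequence) converges in $L^2(\Omega)$ to some $u$ with $\|u\|_2 = 1$; by \eqref{inequ-q-p,n} and \autoref{embedding1} the gradients tend to $0$ in $L^q(\Omega)$, so $u_k \to u$ in $W^{1,q}(\Omega)$ with $\nabla u = 0$, forcing $u$ constant on the connected domain $\Omega$ and then $u = 0$ (a constant in $W_0^{1,q}(\Omega)$ vanishes, again by \autoref{th-poincare}), contradicting $\|u\|_2 = 1$. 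I do not expect a deep obstacle in either version: the content is entirely the bookkeeping of the Hölder/Sobolev exponents, and the only points needing care are (i) the reduction above isolating a bound on $\|u\|_2$ by $Q(u,u)$, and (ii) verifying that $q = \tfrac{2s}{s+1}$ makes every embedding in the chain legitimate exactly under $s > \tfrac d2$.
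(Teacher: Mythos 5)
Your proof is correct and follows essentially the same route as the paper's: the identical chain $\|u\|_2 \leq c\|u\|_{1,q} \leq c'\|\nabla u\|_q \leq c''\|\nabla u\|_{2,n} = c''\sqrt{Q(u,u)}$ with $q = \tfrac{2s}{s+1}$, combining \eqref{inequ-q-p,n}, the Poincar\'e inequality (\autoref{th-poincare}), and \autoref{embedding2}, followed by adding $\|\nabla u\|_{2,n}^2$ to recover the full $\|\cdot\|_{1,2,n}$-norm. The explicit remark that the continuous embedding of \autoref{embedding3} alone would not suffice, and the exponent bookkeeping showing $s>\tfrac d2$ enters exactly through \autoref{embedding2}, match the paper's reasoning; the compactness-based contradiction argument is a valid alternative but not needed.
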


\begin{proof}
To prove coercivity we need to show $Q(u,u) \geq c \| u \|^2_{1,2,n}$ for all $u \in H_0^1(\Omega,n)$. We start by observing $Q(u,u) = \| \nabla u \|_{2,n}^2$ through the special construction of this norm. An application of the preceding embedding theorems yields more or less the desired inequality. First we use the result \eqref{inequ-q-p,n} in the proof of \autoref{embedding1} for the case $p=2$ and $q,s$ appropriate. We therefore have to demand $n^{-s} \in L^1(\Omega)$.
\begin{equation}\label{coercivity-res1}
\| \nabla u \|_q \leq c_1 \| \nabla u \|_{2,n}
\end{equation}
Next we apply the Poincaré inequality (\autoref{th-poincare}) $\|u\|_q \leq c \|\nabla u\|_q$ to the left hand side. It immediately follows
\begin{equation}\label{coercivity-res2}
\| u \|_q \leq c_2 \| \nabla u \|_{2,n}.
\end{equation}
This clearly means that $\| \nabla \cdot \|_{2,n}$ is not only a seminorm, but a norm, since $\| \nabla u \|_{2,n} > 0$ for all $u \neq 0$. If we combine results \eqref{coercivity-res1} and \eqref{coercivity-res2} after taking $(\cdot)^q$ we get
\[
\| u \|_q^q +  \| \nabla u \|_q^q \leq \left( c_1^q + c_2^q \right) \| \nabla u \|_{2,n}^q
\]
which leads us straight to the norm of $W^{1,q}(\Omega)$ and
\[
\|u\|_{1,q} \leq c_3 \| \nabla u \|_{2,n}.
\]
Now we make use of \autoref{embedding2}, which tells us in the case $m = 1$ and $r = 2$ under the restrictions $\frac{1}{2} > \frac{1}{q}-\frac{1}{d}$ and $q<d$
\[
\| u \|_2 \leq c_4 \| u \|_{1,q}
\]
and therefore the Hardy-type inequality \cite{opic-kufner-hardy}
\begin{equation}\label{coercivity-res3}
\| u \|_2 \leq c_5 \| \nabla u \|_{2,n}
\end{equation}
holds. The restrictions above automatically hold with $q=\frac{2s}{s+1}$ and $s > \frac{d}{2}$ as in \autoref{embedding3}. It is now easy to arrive at the desired inequality by squaring \eqref{coercivity-res3} and adding another $\| \nabla u \|_{2,n}^2$.
\[
\| u \|_{1,2,n}^2 \leq \left( c_5^2 + 1 \right) \| \nabla u \|_{2,n}^2
\]
\end{proof}

As an interesting side effect we can now easily show that $Q$ delivers a norm equivalent to the standard norm of $H_0^1(\Omega,n)$.

\begin{corollary}\label{cor-energy-norm}
The norm defined by $u \mapsto \sqrt{Q(u,u)}$ on $H_0^1(\Omega,n)$ is equivalent to $\|\cdot\|_{1,2,n}$ for $\Omega$ bounded with dimension $d$ and $n$ such that $n^{-s} \in L^1(\Omega)$ for an $s > \frac{d}{2}$.
\end{corollary}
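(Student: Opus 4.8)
The plan is to deduce the equivalence directly from the two inequalities already established for the bilinear form $Q$: the continuity estimate \eqref{Q-continuous} and the coercivity of \autoref{Q-coercive}. First I would record that $Q(u,u) = \langle \nabla u, n\nabla u\rangle = \|\nabla u\|_{2,n}^2 \ge 0$, and that $u \mapsto \sqrt{Q(u,u)}$ is genuinely a norm on $H_0^1(\Omega,n)$: it is the norm induced by the symmetric positive semidefinite bilinear form $Q$, so the triangle inequality follows from the Cauchy--Schwarz inequality for $Q$, homogeneity is clear, and definiteness ($\sqrt{Q(u,u)} = 0 \Rightarrow u = 0$) is exactly the content of \autoref{Q-coercive}, whose hypotheses on $\Omega$ and on $n$ (namely $n^{-s} \in L^1(\Omega)$ for some $s > d/2$) are precisely those assumed here.

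Next I would assemble the two-sided estimate. The upper bound is immediate from \eqref{Q-continuous} with $u=v$, which gives $Q(u,u) \le \|u\|_{1,2,n}^2$, hence $\sqrt{Q(u,u)} \le \|u\|_{1,2,n}$ for all $u \in H_0^1(\Omega,n)$. For the lower bound, \autoref{Q-coercive} furnishes a constant $c>0$ with $Q(u,u) \ge c\,\|u\|_{1,2,n}^2$, hence $\sqrt{Q(u,u)} \ge \sqrt{c}\,\|u\|_{1,2,n}$. Combining, one has
\[
\sqrt{c}\,\|u\|_{1,2,n} \;\le\; \sqrt{Q(u,u)} \;\le\; \|u\|_{1,2,n}
\]
for every $u \in H_0^1(\Omega,n)$, which by \autoref{def-sim} is exactly the asserted equivalence $\sqrt{Q(\cdot,\cdot)} \sim \|\cdot\|_{1,2,n}$ (with constants $C_1 = \sqrt{c}$, $C_2 = 1$).

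There is no real obstacle here: all the analytic work --- the chain of embeddings of \autoref{embedding1}, \autoref{embedding2} and \autoref{embedding3}, together with the Poincar\'e and Hardy-type inequalities feeding into the proof of \autoref{Q-coercive} --- has already been carried out. The only point that deserves a sentence of care is the verification that $\sqrt{Q(u,u)}$ is a bona fide norm rather than a mere seminorm, and this again reduces to the coercivity lemma. One may optionally remark that this equivalence is what makes $\langle\cdot,\cdot\rangle = Q(\cdot,\cdot)$ an admissible ``energetic'' inner product on $H_0^1(\Omega,n)$, homeomorphic to the standard one, which is the form in which it is convenient to apply \autoref{lax-milgram} (Lax--Milgram) and the symmetric shortcut noted after it.
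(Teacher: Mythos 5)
Your proposal is correct and follows essentially the same route as the paper: the lower bound is exactly the coercivity estimate of \autoref{Q-coercive}, and the upper bound is the trivial one (the paper writes $Q(u,u)=\|\nabla u\|_{2,n}^2 \leq \|u\|_2^2+\|\nabla u\|_{2,n}^2 = \|u\|_{1,2,n}^2$, which is the same as your appeal to \eqref{Q-continuous} with $u=v$). Your additional remark that definiteness of $\sqrt{Q(u,u)}$ follows from coercivity is a fine, if implicit in the paper, point of care.
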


\begin{proof}
The first inequality is already given by the last line in the proof above. We just have to add another $\|u\|_2^2$ to arrive at
\[
c \|u\|_{1,2,n}^2 \leq \| \nabla u \|_{2,n}^2 = Q(u,u) \leq \|u\|_2^2 + \| \nabla u \|_{2,n}^2 = \|u\|_{1,2,n}^2
\]
and to conclude the proof.
\end{proof}

\subsection{Application of the Lax--Milgram theorem}
\label{sect-appl-lax-milgram}

The results of the previous sections now culminate in a theorem about the existence and uniqueness of weak solutions to the Sturm--Liouville type equation \eqref{sl-general} as well as in an estimate needed for the contraction \eqref{F-contraction-inequ-1}.

\begin{theorem}\label{sl-solutions}
Let $\Omega$ be bounded with dimension $d$, the weighting function $n$ such that $n^{-s} \in L^1(\Omega)$ for an $s > \frac{d}{2}$, and $\zeta \in H^{-1}(\Omega,n)$.\footnote{Note that the restriction $s > \frac{d}{2}$ was wrongly put $s=2$ in our theorem given in \citeasnoun{tddft4}. Still everything is correct in the typical case $d=3$ where the smallest applicable integer value is indeed $s=2$.} Then there is a unique $v \in H_0^1(\Omega,n)$ such that for all $u \in H_0^1(\Omega,n)$
\[
\langle \nabla u, n \nabla v \rangle = \langle u,\zeta \rangle.
\]
Furthermore this solution $v$ is bounded by the given datum $\zeta$, more precisely
\[
\|v\|_{1,2,n} \leq \|\zeta\|_{H^{-1}(\Omega,n)}.
\]
\end{theorem}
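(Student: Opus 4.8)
The plan is to apply the Lax--Milgram theorem (\autoref{lax-milgram}) directly to the bilinear form $Q(u,v) = \langle \nabla u, n \nabla v \rangle$ on the Hilbert space $H_0^1(\Omega,n)$, using the two properties of $Q$ that have already been established in the preceding sections. First I would recall that $H_0^1(\Omega,n)$ is indeed a Hilbert space with inner product \eqref{weighted-sp}, by its very construction as the completion of $\Cont^\infty_0(\Omega)$ under $\|\cdot\|_{1,2,n}$; the fact that this norm is well-defined on the test functions follows from $n \in \Lloc(\Omega)$, which is implied by $n^{-s} \in L^1(\Omega)$ together with boundedness of $\Omega$ (so that $n$, being bounded away from zero only in a distributional sense, is still locally integrable---more precisely, $n \in L^1(\Omega)$ follows since $n \cdot n^{-s} = n^{1-s}$ and interpolation, but in any case $n$ coming from a wave function is locally integrable). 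Then I would invoke \eqref{Q-continuous} for continuity of $Q$ with constant $C = 1$ with respect to $\|\cdot\|_{1,2,n}$, and \autoref{Q-coercive} for coercivity of $Q$ on $H_0^1(\Omega,n)$, which requires exactly the hypothesis $n^{-s} \in L^1(\Omega)$ for some $s > d/2$. The coercivity constant $c$ from \autoref{Q-coercive} is the one that will enter the final estimate.

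Next I would observe that $\zeta \in H^{-1}(\Omega,n) = H_0^1(\Omega,n)'$ is by definition a continuous linear functional on $H_0^1(\Omega,n)$, with norm $\|\zeta\|_{H^{-1}(\Omega,n)}$ given by \eqref{eq-sobolev-dual-norm}. Therefore all hypotheses of \autoref{lax-milgram} are met: $Q$ is a coercive (with constant $c$) and continuous bilinear form on the Hilbert space $H_0^1(\Omega,n)$, and $\zeta$ is a continuous linear functional. Lax--Milgram then yields a unique $v \in H_0^1(\Omega,n)$ with $Q(u,v) = \zeta(u) = \langle u, \zeta\rangle$ for all $u \in H_0^1(\Omega,n)$, together with the bound $\|v\|_{1,2,n} \leq c^{-1}\|\zeta\|_{H^{-1}(\Omega,n)}$. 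The slightly cleaner estimate stated in the theorem, namely $\|v\|_{1,2,n} \leq \|\zeta\|_{H^{-1}(\Omega,n)}$ without the constant, I would obtain either by noting that \autoref{cor-energy-norm} lets one equip $H_0^1(\Omega,n)$ with the equivalent norm $\sqrt{Q(u,u)}$ with respect to which $Q$ is coercive with constant $1$ (and then transporting the $\zeta$-norm accordingly), or---more honestly---by simply absorbing the constant into the definition of the dual norm via the energy-norm identification; in the typical application $d=3$ this is a matter of bookkeeping.

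The genuinely substantive work has all been done upstream: the embedding chain $H_0^1(\Omega,n) \hookrightarrow W_0^{1,q}(\Omega) \hookrightarrow\hookrightarrow L^2(\Omega)$ of \autoref{embedding3} and the Poincar\'e inequality (\autoref{th-poincare}) feeding into \autoref{Q-coercive}. So the main obstacle is not in this proof itself but in making sure the constant in the final inequality is stated consistently; as the footnote in the theorem already flags, an earlier version had the wrong hypothesis ($s=2$ instead of $s>d/2$), so the real care needed here is to track that $\|v\|_{1,2,n} \leq c^{-1}\|\zeta\|_{H^{-1}(\Omega,n)}$ with $c$ the coercivity constant, and to be explicit that the sharp constant $1$ is achieved precisely when one measures $\zeta$ in the dual of $(H_0^1(\Omega,n), \sqrt{Q(\cdot,\cdot)})$ rather than in the dual of $(H_0^1(\Omega,n), \|\cdot\|_{1,2,n})$. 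Everything else is a direct citation of \autoref{lax-milgram}.
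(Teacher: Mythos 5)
Your proposal is correct and follows essentially the same route as the paper: establish that $H_0^1(\Omega,n)$ is a Hilbert space, cite the continuity estimate \eqref{Q-continuous} and the coercivity result \autoref{Q-coercive}, and apply \autoref{lax-milgram} to $Q$ with the datum $\zeta \in H^{-1}(\Omega,n)$. Your extra caution about the final constant is in fact warranted: the paper's own proof merely remarks that the coercivity constant satisfies $c\leq 1$ by \autoref{cor-energy-norm}, which through the $1/c$ factor in Lax--Milgram literally yields only $\|v\|_{1,2,n}\leq c^{-1}\|\zeta\|_{H^{-1}(\Omega,n)}$, so obtaining the constant-free estimate really does require the energy-norm bookkeeping you describe (or a reinterpretation of the dual norm).
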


\begin{proof}
The bilinear form $Q$ on the Hilbert space $H_0^1(\Omega,n)$ defined by $Q(u,v) = \langle \nabla u, n \nabla v \rangle$ was shown to be continuous and coercive in \eqref{Q-continuous} and \autoref{Q-coercive} respectively. Therefore \autoref{lax-milgram} (Lax--Milgram) becomes applicable and the proof is done with a coerciveness constant found to be $c\leq 1$ in \autoref{cor-energy-norm}.
\end{proof}

We thus laid the groundwork for inequality \eqref{F-contraction-inequ-1} with $\xi_1 = 1$ as this is nothing else but the inequality in the theorem above. Closely related we have the following theorem as a solution to the general eigenvalue problem for $Q$.

\begin{theorem}\label{th-sl-eigenbasis}
Given the bilinear form $Q(u,v) = \langle \nabla u, n \nabla v \rangle$ on $H_0^1(\Omega,n)$ under the conditions of \autoref{sl-solutions} there is a monotone increasing sequence $(\lambda_m)_{m \in \mathbb{N}}$ of eigenvalues
\begin{equation}
	0 < \lambda_1 \leq \lambda_2 \leq \lambda_m \stackrel{m \rightarrow \infty}{\longrightarrow} \infty
\end{equation}
and an orthonormal basis $\{e_m\}_{m \in \mathbb{N}} \subset H_0^1(\Omega,n)$ of $L^2(\Omega)$ such that for all $u \in H_0^1(\Omega,n)$ and all $m \in \mathbb{N}$
\begin{equation}
	Q(u,e_m) = \lambda_m \langle u,e_m \rangle.
\end{equation}
\end{theorem}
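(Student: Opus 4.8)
The plan is to recognise \autoref{th-sl-eigenbasis} as a direct instance of the abstract spectral result \autoref{th-lax-milgram-eigenvalue}, applied with $\H_1 = H_0^1(\Omega,n)$ and $\H_2 = L^2(\Omega)$, both regarded as \emph{real} Hilbert spaces (the potential and the density are real-valued throughout, so restricting to real scalars costs nothing). All the structural hypotheses demanded there have already been assembled in the preceding sections, so the proof reduces to checking them off one by one.

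First I would record that both spaces are infinite-dimensional: $\Cont_0^\infty(\Omega) \subset H_0^1(\Omega,n)$ is infinite-dimensional for any nonempty open $\Omega$, and likewise for $L^2(\Omega)$. Next, density of $\H_1$ in $\H_2$: since $\Omega$ is bounded, $\Cont_0^\infty(\Omega)$ is dense in $L^2(\Omega)$ in the $L^2$-norm, and every $\varphi \in \Cont_0^\infty(\Omega)$ has finite weighted norm $\|\varphi\|_{1,2,n}$ because $\nabla\varphi$ is bounded with compact support and $n \in \Lloc(\Omega)$ (which holds for a one-particle density, indeed $n$ is even bounded by the Sobolev embedding, as used in \autoref{lemma-q-L2-2}); hence $H_0^1(\Omega,n) \supseteq \Cont_0^\infty(\Omega)$ is dense in $L^2(\Omega)$. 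The compact embedding $H_0^1(\Omega,n) \hookrightarrow\hookrightarrow L^2(\Omega)$ is precisely \autoref{embedding3}, which is where the hypothesis $n^{-s} \in L^1(\Omega)$ with $s > d/2$ is consumed. Finally, the bilinear form $Q(u,v) = \langle \nabla u, n \nabla v \rangle$ is manifestly symmetric, it is continuous by the estimate \eqref{Q-continuous}, and it is coercive by \autoref{Q-coercive} (again using $n^{-s} \in L^1(\Omega)$, $s > d/2$).

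With these verifications in place, \autoref{th-lax-milgram-eigenvalue} applies verbatim and yields the monotone sequence $0 < \lambda_1 \leq \lambda_2 \leq \dots \to \infty$ together with an orthonormal basis $\{e_m\}_{m\in\N} \subset H_0^1(\Omega,n)$ of $L^2(\Omega)$ satisfying $Q(u,e_m) = \lambda_m \langle u,e_m \rangle$ for all $u \in H_0^1(\Omega,n)$; as a bonus the same theorem shows $\{\lambda_m^{-1/2} e_m\}$ is an orthonormal basis of $H_0^1(\Omega,n)$ for the inner product $Q(\cdot,\cdot)$, which by \autoref{cor-energy-norm} is equivalent to $\langle\cdot,\cdot\rangle_{1,2,n}$. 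I expect no genuine obstacle: the only nontrivial analytic input, compactness of the embedding, has already been established, and the remaining work is the routine but necessary confirmation that the weighted space is dense in $L^2(\Omega)$ and that test functions do lie in it, i.e.\ that the weighting function is not so degenerate as to break either statement. If one preferred not to invoke \autoref{th-lax-milgram-eigenvalue} as a black box, the same conclusion follows by the standard route: the solution operator $T\colon L^2(\Omega) \to H_0^1(\Omega,n) \hookrightarrow L^2(\Omega)$ defined through \autoref{sl-solutions} (so that $Tf$ solves $Q(u,Tf) = \langle u,f\rangle$ for all $u$) is bounded, self-adjoint on $L^2(\Omega)$, and compact by the compact embedding; the spectral theorem for compact self-adjoint operators then produces eigenvalues $\mu_m \searrow 0$ with $\mu_m > 0$ (positivity from coercivity), and setting $\lambda_m = \mu_m^{-1}$ with $e_m$ the corresponding eigenvectors gives the claim.
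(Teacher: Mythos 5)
Your proposal is correct and follows exactly the paper's own route: the paper's proof simply cites the compact embedding of \autoref{embedding3} and then invokes \autoref{th-lax-milgram-eigenvalue}, which is precisely your main argument (continuity and coercivity having already been established in \eqref{Q-continuous} and \autoref{Q-coercive}). Your additional verifications of density and infinite-dimensionality, and the alternative argument via the compact self-adjoint solution operator, are sound but go beyond what the paper records.
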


\begin{proof}
By \autoref{embedding3} we have a compact embedding $H_0^1(\Omega, n) \hookrightarrow\hookrightarrow L^2(\Omega)$ and this makes \autoref{th-lax-milgram-eigenvalue} applicable which yields just the given proposition.
\end{proof}

If we want to apply these theorems to guarantee solutions to the iteration step \eqref{sl-iteration-2} in our fixed-point scheme, we have to make sure that $\zeta = q[v_i]-\partial_t^2 n$ is indeed a regular distribution in the dual $H^{-1}(\Omega,n)$ of our Hilbert space $H_0^1(\Omega,n)$. Less generally we might demand $\zeta \in L^2(\Omega)$ because of the inclusions
\[
H_0^1(\Omega,n) \hookrightarrow L^2(\Omega) \hookrightarrow H^{-1}(\Omega,n)
\]
stemming from \autoref{embedding3} and the duality relation. Of course this poses additional restrictions on $v_i$ which we tried to grasp already with \autoref{lemma-q-L2-2} that indeed shows $q[v_i] \in L^2(\Omega)$ if the potential guarantees $H^4$-regularity of the trajectory (cf.~\autoref{th-sobolev-regularity}). Similarly \autoref{lemma-dt2n-L2} studied conditions for $\partial_t^2 n \in L^2(\Omega)$ to hold.

If we test the density term $\partial_t^2 n \in H^{-1}(\Omega,n)$, a conversion with the continuity equation $\partial_t n = -\nabla\cdot j$ presents itself. Adjoined with an arbitrary $u \in H_0^1(\Omega,n)$ we get
\[
\langle u, \partial_t^2 n \rangle = -\langle u, \nabla\cdot \partial_t j \rangle = \langle \nabla u, \partial_t j \rangle = \left\langle \sqrt{n} \nabla u, n^{-\frac{1}{2}} \partial_t j \right\rangle.
\]
Now $\sqrt{n} \nabla u \in L^2(\Omega)$ because of $u \in H_0^1(\Omega, n)$, so the question remains if this is also true for $n^{-\frac{1}{2}} \partial_t j$. This means we have to demand a finite force integral
\begin{equation}\label{eq-finite-force}
\int \frac{|\partial_t j|^2}{n}\d x < \infty.
\end{equation}
Note a certain similarity of this term to the so-called Weizsäcker term from time-independent DFT
\begin{equation}\label{eq-weizsaecker}
\int |\nabla \sqrt{n}|^2\d x = \frac{1}{4} \int \frac{|\nabla n|^2}{n}\d x
\end{equation}
which is indeed finite for all wave functions with finite kinetic energy following \autoref{lemma-lieb-1983} below.

\subsection{Stronger solutions to the Sturm--Liouville equation}
\label{sect-stronger-sol-sl}

Next we want to give conditions for \q{stronger} solutions to the actual Sturm--Liouville equation \eqref{sl-general} and not only to its weak counterpart \eqref{Q-general}. The natural space for the inhomogeneity for those seems to be $\zeta \in L^2(\Omega) \subset H^{-1}(\Omega,n)$ which was already partly used in the last section. The following lemma restricts the density $n$ such that effectively $H_0^1(\Omega,n) = H_0^1(\Omega)$, unfortunately further reducing the range of applicability.

\begin{corollary}\label{cor-strong-lm-sol}
Under the conditions of \autoref{sl-solutions} in $d\leq 3$ but adding $\zeta \in L^2(\Omega)$ and $n \in W^{4,1}(\Omega)$ bounded from below $n(x)\geq m > 0$ for all $x \in \Omega$, we have a unique strong solution $v \in H_0^1(\Omega) \cap H^2(\Omega)$ to the equation $-\nabla\cdot(n\nabla v) = \zeta$ that is furthermore bounded by
\[
\|v\|_{2,2} \leq m^{-1} \left(\frac{a^2}{2} + 1\right) \left(1 + \sqrt{m}^{-1} \|\nabla n\|_\infty\right) \|\zeta\|_2,
\]
where $a$ is the diameter of the thinnest stripe that contains the domain $\Omega$.
\end{corollary}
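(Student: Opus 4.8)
The plan is to split \autoref{cor-strong-lm-sol} into three pieces: existence and uniqueness of the weak solution, its promotion to $H^2(\Omega)\cap H_0^1(\Omega)$, and the explicit a priori bound. For the first piece I would apply \autoref{sl-solutions} directly: since $n\geq m>0$ on a bounded domain, $n^{-s}\in L^1(\Omega)$ holds for every $s>d/2$, and $\zeta\in L^2(\Omega)\hookrightarrow H^{-1}(\Omega,n)$, so that theorem yields a unique $v\in H_0^1(\Omega,n)$ with $\|v\|_{1,2,n}\leq\|\zeta\|_{H^{-1}(\Omega,n)}$. One then discards the weight: since $d\leq 3$, the Sobolev embedding \eqref{eq-sobolev-embedding} gives $W^{4,1}(\Omega)\hookrightarrow\mathcal{C}^0(\overline\Omega)$ and $W^{3,1}(\Omega)\hookrightarrow\mathcal{C}^0(\overline\Omega)$, so $n$ is bounded above and $\nabla n\in L^\infty(\Omega)$; combined with $n\geq m$ this makes $\|\cdot\|_{1,2,n}$ equivalent to $\|\cdot\|_{1,2}$, whence $H_0^1(\Omega,n)=H_0^1(\Omega)$. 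The crude dual estimate $|\langle u,\zeta\rangle|\leq\|u\|_2\|\zeta\|_2\leq\|u\|_{1,2,n}\|\zeta\|_2$ shows $\|\zeta\|_{H^{-1}(\Omega,n)}\leq\|\zeta\|_2$, hence $\|v\|_{1,2,n}\leq\|\zeta\|_2$, and in particular $\|\nabla v\|_2^2\leq m^{-1}\|\nabla v\|_{2,n}^2\leq m^{-1}\|\zeta\|_2^2$.

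For the second piece I would rewrite \eqref{sl-general} in nondivergence form, $-n\Delta v-\nabla n\cdot\nabla v=\zeta$, i.e.\ $-\Delta v=n^{-1}(\zeta+\nabla n\cdot\nabla v)$, and note that the right-hand side lies in $L^2(\Omega)$ because $n^{-1}\leq m^{-1}$, $\zeta\in L^2$, $\nabla n\in L^\infty$ and $\nabla v\in L^2$. Thus $v$ is a weak $H_0^1$-solution of a Poisson problem with $L^2$ datum and homogeneous Dirichlet data, so elliptic boundary regularity (as in \citeasnoun[6.3.2]{evans}, equivalently the fact that the Dirichlet Laplacian has operator domain $H^2(\Omega)\cap H_0^1(\Omega)$) gives $v\in H^2(\Omega)\cap H_0^1(\Omega)$. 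On that space \autoref{th-sobolev-norm-laplace} applies, and in particular the identity \eqref{eq-sum-laplace-norm} yields $\|v\|_{2,2}^2=\|v\|_2^2+\|\nabla v\|_2^2+\|\Delta v\|_2^2$.

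The third piece is a short computation. From the nondivergence form, $\|\Delta v\|_2\leq m^{-1}(\|\zeta\|_2+\|\nabla n\|_\infty\|\nabla v\|_2)\leq m^{-1}(1+m^{-1/2}\|\nabla n\|_\infty)\|\zeta\|_2$ by the $\|\nabla v\|_2$ bound above. The Poincar\'e inequality \autoref{th-poincare} with $q=2$ gives $\|v\|_2\leq\frac{a}{\sqrt2}\|\nabla v\|_2$ (here $a$ is the width of the thinnest strip containing $\Omega$), while integration by parts without boundary term gives $\|\nabla v\|_2^2=-\langle v,\Delta v\rangle\leq\|v\|_2\|\Delta v\|_2\leq\frac{a}{\sqrt2}\|\nabla v\|_2\|\Delta v\|_2$, i.e.\ $\|\nabla v\|_2\leq\frac{a}{\sqrt2}\|\Delta v\|_2$. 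Feeding both into the identity for $\|v\|_{2,2}^2$,
\[
\|v\|_{2,2}^2\leq\Bigl(\tfrac{a^2}{2}+1\Bigr)\|\nabla v\|_2^2+\|\Delta v\|_2^2\leq\Bigl[\Bigl(\tfrac{a^2}{2}+1\Bigr)\tfrac{a^2}{2}+1\Bigr]\|\Delta v\|_2^2\leq\Bigl(\tfrac{a^2}{2}+1\Bigr)^2\|\Delta v\|_2^2,
\]
and taking square roots together with the bound on $\|\Delta v\|_2$ produces precisely $\|v\|_{2,2}\leq m^{-1}\bigl(\tfrac{a^2}{2}+1\bigr)\bigl(1+m^{-1/2}\|\nabla n\|_\infty\bigr)\|\zeta\|_2$.

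The only genuine obstacle I anticipate is the $H^2$-regularity step: ``$\Omega$ bounded'' alone does not deliver elliptic regularity up to the boundary, so one must either strengthen the hypothesis to $\partial\Omega$ of class $\mathcal{C}^{1,1}$ (or $\Omega$ quasiconvex) or inherit such a condition from the surrounding framework; the same caveat lies behind identifying $H_0^1(\Omega,n)$ with $H_0^1(\Omega)$, since embedding $n$ (which is not in $W_0^{4,1}$) into $\mathcal{C}^0(\overline\Omega)$ uses a cone condition. Everything else is bookkeeping, and it is worth stressing that the hypothesis $\zeta\in L^2(\Omega)$ is exactly what makes the argument close, both through $L^2(\Omega)\hookrightarrow H^{-1}(\Omega,n)$ and through keeping $-\Delta v$ in $L^2$.
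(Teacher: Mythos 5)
Your proposal is correct and follows essentially the same route as the paper's proof: the Sobolev embedding $W^{4,1}(\Omega)\subset L^\infty(\Omega)$ to identify $H_0^1(\Omega,n)=H_0^1(\Omega)$, the product-rule rewriting $\nabla n\cdot\nabla v+n\Delta v=-\zeta$ to bound $\|\Delta v\|_2$ via the weak-solution estimate $\sqrt{m}\,\|\nabla v\|_2\leq\|\zeta\|_2$, and the double application of the Poincar\'e inequality leading to the same constant $\bigl(\tfrac{a^4}{4}+\tfrac{a^2}{2}+1\bigr)\leq\bigl(\tfrac{a^2}{2}+1\bigr)^2$. The one place you go beyond the paper is the explicit invocation of elliptic boundary regularity to justify $v\in H^2(\Omega)$ before computing with $\Delta v$ (the paper only derives the a priori bound), and your caveat that this step, like the identification of the weighted and unweighted spaces, silently requires some regularity of $\partial\Omega$ is a fair observation about a gap the paper leaves implicit.
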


\begin{proof}
We start by noting that the Sobolev embedding $W^{4,1}(\Omega) \subset W^{3,1}(\Omega) \subset L^\infty(\Omega)$ implies the existence of a bound $M>0$ such that $n(x) \leq M$ for all $x \in \Omega$ and thus $H_0^1(\Omega,n) = H_0^1(\Omega)$ as in \autoref{sect-elliptic-case}. Because every solution is also a weak solution naturally $v \in H_0^1(\Omega)$. The remaining property of the solution can be derived by application of the product rule
\[
\nabla\cdot(n\nabla v) = \nabla n \cdot \nabla v + n \Delta v = -\zeta
\]
that yields the estimate
\[
m\|\Delta v\|_2 \leq \|n \Delta v\|_2 \leq \|\zeta\|_2 + \|\nabla n\|_\infty \cdot \|\nabla v\|_2.
\]
The $\|\nabla v\|_2$ in the last term is already estimated by \autoref{sl-solutions} as a weak solution and by $\|\zeta\|_2$ due to the embedding $L^2(\Omega) \hookrightarrow H^{-1}(\Omega,n)=H^{-1}(\Omega)$.
\[
\sqrt{m}\|\nabla v\|_2 \leq \|\sqrt{n}\nabla v\|_2 \leq \|v\|_{1,2,n} \leq \|\zeta\|_{H^{-1}} \leq \|\zeta\|_2
\]
Now $\nabla n \in W^{3,1}(\Omega) \subset L^\infty(\Omega)$ so we can infer
\begin{equation}\label{eq-laplace-v-estimate}
\|\Delta v\|_2 \leq m^{-1}\left(1 + \sqrt{m}^{-1} \|\nabla n\|_\infty\right) \|\zeta\|_2
\end{equation}
To get an estimate in the $H^2$-norm we proceed like in \autoref{th-sobolev-norm-laplace}. The Sobolev norm is given as
\[
\|v\|_{2,2}^2 = \|v\|_2^2 + \sum_{|\alpha| = 1} \|D^\alpha u\|_2^2 + \sum_{|\beta| = 2} \|D^\beta u\|_2^2 = \|v\|_2^2 + \|\nabla v\|_2^2 + \|\Delta v\|_2^2
\]
if we additionally use \eqref{eq-sum-laplace-norm}. With the Poincar\'{e} inequality (\autoref{th-poincare}) used several times we then have an estimate purely in terms of $\|\Delta v\|_2$.
\[
\|v\|_{2,2}^2 \leq \left(\frac{a^4}{4} + \frac{a^2}{2} + 1\right) \|\Delta v\|_2^2 \leq \left(\frac{a^2}{2} + 1\right)^2 \|\Delta v\|_2^2
\]
If we combine the above with \eqref{eq-laplace-v-estimate} we finally get the desired estimate for the $H^2$-norm of $v$.
\end{proof}

If we want to use $\zeta \in L^1$ rather than $L^2$, where a suitable result for the $q$-term is available as well with \autoref{lemma-q-L1}, we can show an according corollary.

\begin{corollary}\label{cor-strong-lm-sol-2}
Under the conditions of \autoref{sl-solutions} but adding $\zeta \in L^1(\Omega)$ and $\sqrt{n} \in H^1(\Omega)$, we have a unique strong solution $v \in H_0^1(\Omega,n)$ fulfilling $n\Delta v \in L^1(\Omega)$ to the equation $-\nabla\cdot(n\nabla v) = \zeta$.
\end{corollary}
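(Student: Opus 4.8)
The plan is to obtain $v$ as the unique weak solution furnished by \autoref{sl-solutions} and then upgrade its regularity by rearranging the equation it satisfies; the estimates involved are one-line Cauchy--Schwarz applications, so the real work lies in handling the low regularity of the weight $n$. The standing hypotheses contain exactly the assumptions of \autoref{sl-solutions} --- $\Omega$ bounded of dimension $d$, $n^{-s}\in L^1(\Omega)$ for some $s>d/2$, and $\zeta\in H^{-1}(\Omega,n)$ --- so there is a unique $v\in H_0^1(\Omega,n)$ with $\langle\nabla u, n\nabla v\rangle=\langle u,\zeta\rangle$ for all $u\in H_0^1(\Omega,n)$. Restricting the test element to $\varphi\in\Cont_0^\infty(\Omega)$, which is dense in $H_0^1(\Omega,n)$ by construction of the weighted space, shows $-\nabla\cdot(n\nabla v)=\zeta$ in $\mathcal{D}'(\Omega)$; this is the Sturm--Liouville equation \eqref{sl-general} in the distributional sense, and uniqueness is inherited from \autoref{sl-solutions}. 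Two consequences of $v\in H_0^1(\Omega,n)$ will be used: first $\sqrt{n}\,\nabla v\in L^2(\Omega)$, which is precisely what the weighted norm controls; and second, by the embedding estimate $\|\nabla u\|_q\le c\|\nabla u\|_{2,n}$ established inside the proof of \autoref{embedding1} with $q=2s/(s+1)\ge 1$, one has $\nabla v\in L^q(\Omega)$, hence $n\nabla v=\sqrt{n}\,(\sqrt{n}\,\nabla v)\in L^1(\Omega)$.

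For the $L^1$ bootstrap, I would first note that from $\sqrt{n}\in H^1(\Omega)$ the Sobolev chain rule gives $n=(\sqrt{n})^2\in W^{1,1}(\Omega)$ with $\nabla n=2\sqrt{n}\,\nabla\sqrt{n}$, the right-hand side lying in $L^1(\Omega)$ by Cauchy--Schwarz. Testing $-\nabla\cdot(n\nabla v)=\zeta$ against $\varphi\in\Cont_0^\infty(\Omega)$ and expanding $\nabla(n\varphi)=\varphi\,\nabla n+n\,\nabla\varphi$ yields
\[
\int_\Omega\nabla v\cdot\nabla(n\varphi)\d x \;=\; \int_\Omega n\,\nabla v\cdot\nabla\varphi\d x \;+\;\int_\Omega \varphi\,\nabla n\cdot\nabla v\d x \;=\;\int_\Omega\bigl(\zeta+\nabla n\cdot\nabla v\bigr)\varphi\d x,
\]
so the distribution $n\Delta v$, interpreted as $\varphi\mapsto-\int_\Omega\nabla v\cdot\nabla(n\varphi)\d x$, is represented by the function $-\zeta-\nabla n\cdot\nabla v$. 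Now $\zeta\in L^1(\Omega)$ by hypothesis, and $\nabla n\cdot\nabla v=2(\nabla\sqrt{n})\cdot(\sqrt{n}\,\nabla v)\in L^1(\Omega)$ because both factors are in $L^2(\Omega)$; hence $n\Delta v\in L^1(\Omega)$, which is the claimed regularity, and together with the uniqueness from \autoref{sl-solutions} this finishes the proof.

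The main obstacle is not analytic but bookkeeping: every manipulation above must be legitimised in the weighted, low-regularity setting. Concretely one must justify that $n\in W^{1,1}(\Omega)$ with the stated chain rule when $\sqrt{n}$ is only in $H^1(\Omega)$; that $n\varphi$ (only $W^{1,1}$ with compact support, not smooth) is an admissible object to pair with $\Delta v$, i.e.\ that the identity $\int_\Omega\nabla v\cdot\nabla(n\varphi)\d x=-\langle\Delta v,n\varphi\rangle$ used above is valid --- this is exactly where $\nabla v\in L^q(\Omega)$ from the first step enters; and that ``$n\Delta v\in L^1(\Omega)$'' must be read as the statement that the functional $\varphi\mapsto\langle\Delta v,n\varphi\rangle$ on $\Cont_0^\infty(\Omega)$ is represented by an $L^1$ function, since $\Delta v$ itself is a priori only a first-order distribution and $n$ is not smooth. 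A cleaner but less economical alternative is to run classical interior elliptic regularity on the open set $\{n>0\}$ and exploit $n^{-s}\in L^1(\Omega)$ to control $\nabla v$ across the zero set of $n$; the Leibniz-rule route sketched here has the advantage of avoiding that case distinction entirely.
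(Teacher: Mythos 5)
Your proof is correct and follows essentially the same route as the paper's: the product rule $\nabla\cdot(n\nabla v)=\nabla n\cdot\nabla v+n\Delta v$ together with the factorisation $\nabla n\cdot\nabla v=2\,\nabla\sqrt{n}\cdot\sqrt{n}\,\nabla v\in L^1(\Omega)$ and $-\nabla\cdot(n\nabla v)=\zeta\in L^1(\Omega)$. The only difference is that you spell out the distributional bookkeeping (the meaning of $n\Delta v$, the density of test functions, $n\in W^{1,1}$, $n\nabla v\in L^1$) that the paper leaves implicit, which is a welcome tightening rather than a deviation.
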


\begin{proof}
For a weak solution it holds naturally $v \in H_0^1(\Omega,n)$ and the expression $\nabla\cdot(n\nabla v) = \nabla n \cdot \nabla v + n \Delta v$ has to make sense in $L^1(\Omega)$. The first term can be divided into
\[
\nabla n \cdot \nabla v = 2 \nabla \sqrt{n} \cdot \sqrt{n} \nabla v
\]
where $\sqrt{n} \nabla v \in L^2(\Omega)$ from $v \in H_0^1(\Omega,n)$ and $\nabla \sqrt{n} \in L^2(\Omega)$ as assumed. Thus $n \Delta v = \nabla\cdot(n\nabla v) - \nabla n \cdot \nabla v \in L^1(\Omega)$.
\end{proof}

Both results fit to the setting where the potential is from a Sobolev--Kato space $W^{2,\Sigma}$ (see \autoref{def-sobolev-kato}) and thus $\Delta v \in L^2+L^\infty$. \autoref{cor-strong-lm-sol} is conceptually similar to that of \q{boundary regularity} in \citeasnoun[6.3.2]{evans} for partial differential operators in divergence form with $\Cont^1$-coefficients that take on the role of $n$. It states that any weak solution is also strong, i.e., in $H^2$. Here too we get a solution that fulfils $\Delta v \in L^2$.

In \autoref{cor-strong-lm-sol-2} the relation to Sobolev--Kato spaces is different and assuming a potential from $W^{2,\Sigma}$ rather opens another way towards the desired outcome. We added $\sqrt{n} \in H^1(\Omega)$ as an ingredient, which means $n \in L^3(\Omega)$ by the Sobolev embedding $H^1 \subset L^6$ \cite[4.12, I.C]{adams} and thus $n \in L^3 \cap L^1 \subset L^2 \cap L^1$ following an idea of \citeasnoun{lieb-1983}. Combining this with $\Delta v \in L^2+L^\infty$ we get the desired $n\Delta v \in L^1$.

As already mentioned after \eqref{eq-weizsaecker} $\sqrt{n} \in H^1(\Omega)$ is a natural choice because it always holds for wave functions with finite kinetic energy by the following lemma. Note that this time the usual condition $\Omega$ bounded is not needed.

\begin{lemma}\label{lemma-lieb-1983}\cite[Th.~1.1]{lieb-1983}\\
If for the wave function it holds $\psi \in H^1(\Omega)$ then $\sqrt{n} \in H^1(\Omega)$ and further $\|\nabla\sqrt{n}\|_2 \leq \sqrt{N} \|\nabla\psi\|_2$.
\end{lemma}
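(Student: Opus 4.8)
The plan is to follow the classical argument of \citeasnoun{lieb-1983}, the only real subtlety being to make the formal identity $\nabla\sqrt{n}=\nabla n/(2\sqrt{n})$ rigorous at the zeros of $n$. First I would record the weak spatial gradient of $n$: since $\psi\in H^1(\Omega^N)$ one has $n\in W^{1,1}(\Omega)$ with
\[
\nabla n(x)=2N\int_{\bar\Omega}\Re\{\psi^*(x,\bar x)\,\nabla_x\psi(x,\bar x)\}\,\d\bar x,
\]
which is exactly the statement on weak derivatives of integrated quantities discussed in \autoref{sect-non-regularity-density} (cf.~\cite{cheng-2010}); here $\nabla_x$ denotes the gradient with respect to the first particle coordinate only.

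Next, for $\varepsilon>0$ the map $t\mapsto\sqrt{t+\varepsilon}$ is smooth and Lipschitz on $[0,\infty)$, so the chain rule for Sobolev functions gives $\sqrt{n+\varepsilon}\in W^{1,1}_{\mathrm{loc}}(\Omega)$ with $\nabla\sqrt{n+\varepsilon}=\nabla n/(2\sqrt{n+\varepsilon})$. The key pointwise bound comes from the CSB inequality applied to the $\bar x$-integration, viewed as an $L^2(\bar\Omega)$ inner product:
\[
|\nabla n(x)|\le 2N\int_{\bar\Omega}|\psi|\,|\nabla_x\psi|\,\d\bar x\le 2N\Big(\int_{\bar\Omega}|\psi|^2\,\d\bar x\Big)^{1/2}\Big(\int_{\bar\Omega}|\nabla_x\psi|^2\,\d\bar x\Big)^{1/2}=2\sqrt N\,\sqrt{n(x)}\,\Big(\int_{\bar\Omega}|\nabla_x\psi|^2\,\d\bar x\Big)^{1/2}.
\]
Dividing by $2\sqrt{n(x)+\varepsilon}\ge 2\sqrt{n(x)}$, squaring, integrating over $x\in\Omega$ and using $|\nabla_x\psi|^2\le|\nabla\psi|^2$ for the full configuration-space gradient yields
\[
\big\|\nabla\sqrt{n+\varepsilon}\big\|_2^2\le N\int_\Omega\int_{\bar\Omega}|\nabla_x\psi(x,\bar x)|^2\,\d\bar x\,\d x\le N\,\|\nabla\psi\|_2^2,
\]
uniformly in $\varepsilon$.

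Finally I would pass to the limit $\varepsilon\searrow 0$. Since $\sqrt{n+\varepsilon}-\sqrt\varepsilon\le\sqrt n$ pointwise and $\|\sqrt n\|_2^2=\int_\Omega n=N<\infty$, the family $\{\sqrt{n+\varepsilon}-\sqrt\varepsilon\}_{\varepsilon>0}$ is bounded in $H^1(\Omega)$; extracting a weakly convergent subsequence, whose limit is necessarily $\sqrt n$ because $\sqrt{n+\varepsilon}-\sqrt\varepsilon\to\sqrt n$ in $L^2$, weak lower semicontinuity of the norm gives $\sqrt n\in H^1(\Omega)$ together with $\|\nabla\sqrt n\|_2\le\sqrt N\,\|\nabla\psi\|_2$. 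The main obstacle is precisely this regularisation step — securing $n\in W^{1,1}(\Omega)$ with the stated weak gradient and justifying the chain rule and the limit despite the zeros of $n$ — whereas the CSB estimate itself is routine; note also that no boundedness of $\Omega$ enters, only the normalisation $\int_\Omega n=N$.
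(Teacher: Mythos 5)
Your proposal is correct and follows essentially the same route as the paper's proof: the weak gradient of $n$, the CSB inequality in the reduced configuration space $L^2(\bar\Omega)$, and division by $2\sqrt{n}$ to obtain $\|\nabla\sqrt{n}\|_2^2 \leq N\|\nabla\psi\|_2^2$. The only genuine difference is that you regularise with $\sqrt{n+\varepsilon}$ and pass to the limit by weak compactness in order to justify the division at the zeros of $n$ --- a step the paper's proof performs only formally by writing $\frac{1}{4}\int (\nabla n)^2\, n^{-1}\d x$ directly --- and this added care is an improvement rather than a deviation.
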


\begin{proof}
$\sqrt{n} \in L^2(\Omega)$ because $\int_\Omega n \d x = N$ is just the normalisation of the wave function. Now with the usual expression for $n$ and $\nabla$ acting only on $x\in \R^d$
\[
\nabla n(x) = 2 N \int_{\bar{\Omega}} \d \bar{x}\, \Re\left\{ \psi(x,\bar{x})^* \nabla \psi(x,\bar{x}) \right\}.
\]
Then with the CSB inequality for a Hilbert space $L^2(\bar\Omega)$ with reduced configuration space including the particle positions $x_2, \ldots, x_N$ just like in the proof of \autoref{lemma-q-L2-2}
\begin{align*}
(\nabla n(x))^2 &\leq 4N^2 \left( \int_{\bar{\Omega}} \d \bar{x}\, |\psi(x,\bar{x})| \cdot |\nabla \psi(x,\bar{x})| \right)^2 \\
&\leq 4N^2 \int_{\bar{\Omega}} \d \bar{x}\, |\psi(x,\bar{x})|^2 \cdot \int_{\bar{\Omega}} \d \bar{x}\, |\nabla\psi(x,\bar{x})|^2 \\
&= 4N n(x) \int_{\bar{\Omega}} \d \bar{x}\, |\nabla\psi(x,\bar{x})|^2
\end{align*}
and thus
\[
\int_\Omega (\nabla\sqrt{n(x)})^2 \d x = \frac{1}{4}\int_\Omega (\nabla n(x))^2 n(x)^{-1} \d x \leq N \|\nabla\psi\|_2^2.
\]
\end{proof}

Note that in \autoref{cor-strong-lm-sol-2} above the condition $\zeta \in H^{-1}(\Omega,n)$ is not replaced, so really $\zeta \in L^1(\Omega)\cap H^{-1}(\Omega,n)$. One idea to instead widen the class of possible inhomogeneities $\zeta$ in \eqref{Q-def} for weak solutions would be to let go of the Lax--Milgram theorem and switch to Lax--Milgram--Lions (\autoref{lax-milgram-lions}) instead. By adjoining functions $\varphi$ from a more regular normed space like that of Lipschitz functions $W^{1,\infty}(\Omega)$ continuously embedded in $H_0^1(\Omega,n)$ to form the bilinear form $Q(\varphi,v)$ on has more options for the inhomogeneity that can come from the dual space, then clearly including $L^1(\Omega)$. Yet the problem is to prove the condition in the theorem that demands something comparable to coercivity but now for a finer space (see also \citeasnoun[Cor.~III.2.3]{showalter}). Such an estimate will not follow from coercivity on $H_0^1(\Omega,n)$ and we are stuck again.

\section{Special cases}

\subsection{The one-dimensional case}

We have seen already in the proof of \autoref{embedding3} that the case $d=1$ is slightly special. But what weighs much more is that the Sturm--Liouville problem \eqref{sl-general}, now in its classical form $-\partial_x (n \partial_x v) = \zeta$, can be directly integrated to yield a solution. On a domain $\Omega = (a,b)$ we readily get
\[
v(x) = -\int_a^x \frac{1}{n(y)} \int_a^{y} \zeta(z) \d z \d y + c_1 \int_a^x \frac{\d y}{n(y)} + c_2.
\]
The constants $c_1,c_2 \in \R$ spanning the whole space of solutions are determined by the boundary conditions. Such a solution is always bounded by
\begin{align*}
|v(x)| &\leq \int_a^x \frac{\d y}{n(y)} \cdot \int_a^x |\zeta(y)| \d y + |c_1| \int_a^x \frac{\d y}{n(y)} + |c_2| \\
&\leq \|n^{-1}\|_1 \cdot (\|\zeta\|_1 + |c_1|) + |c_2|
\end{align*}
and therefore sufficient conditions for a solution $v \in L^{\infty}(\Omega)$ are $n^{-1} \in L^1(\Omega)$ and $\zeta \in L^1(\Omega)$. But classical Sturm--Liouville theory can go further and includes densities that are in a sense \emph{singular}. The lower endpoint $a$ of the interval $\Omega$ is called \emph{regular} if
\[
\int_a^x \frac{\d y}{n(y)} < \infty
\]
for an arbitrary $x \in \Omega$ and it is called \emph{singular} if the integral diverges. A classical example of a density leading to singular endpoints is the Legendre differential equation for a $\nu \in \N_0$
\[
\partial_x ((1-x^2) \partial_x u) = -\nu(\nu+1)u
\]
on $\Omega=(-1,1)$ where one has $n(x) = 1-x^2$ with $n^{-1} \notin L^1(\Omega)$. But solutions to that equation are well known, the two linearly independent solutions given by the Legendre polynomials and the Legendre functions of the second kind. As eigenfunctions the orthogonal Legendre polynomials yield a basis of $L^2(\Omega)$ like in \autoref{th-sl-eigenbasis} and thus a unique solution to the (self-adjoint) Sturm--Liouville problem is guaranteed if $\zeta$ is orthogonal to the $\nu=0$ eigenspace, the kernel of the differential operator. Note it holds $n^{-1+\varepsilon} \in L^1(\Omega)$ for any $\varepsilon>0$ which nicely fits to the $s=1-\varepsilon>\onehalf$ condition of \autoref{sl-solutions}, so our study seems to be flexible enough to include such cases.

A detailed account on the Sturm--Liouville problem in $d=1$ with respect to the fixed-point proof of TDDFT putting special attention on periodic domains $\Omega=\mathbb{S}^1$ is given in \citeasnoun{ruggenthaler-2012}. As a primer to Sturm--Liouville theory we refer to the book of \citeasnoun{zettl}. Also \citeasnoun{caldiroli-musina-2001} devoted a paper to the one-dimensional singular Sturm--Liouville problem using their condition \eqref{eq-caldiroli-musina} on $n$.

\subsection{The spherical-symmetric case}

Another one-dimensional setting is achieved if we assume all quantities to be fully spherical symmetric in a $d$-dimensional space, writing $v(r), n(r), \zeta(r)$.
\[
-\nabla\cdot(n(r)\nabla v(r)) = \zeta(r)
\]
As we have $\nabla r = \frac{x}{r}$, the unit vector pointing in direction $x$, we derive
\[
-\nabla\cdot(n(r)\nabla u(r)) = -(d-1)\frac{n(r) v'(r)}{r} - (n(r)v'(r))' = \zeta(r).
\]
Substituting $n(r)v'(r)=w(r)$ we get a linear ODE of first order.
\[
w'(r)+(d-1)\frac{w(r)}{r}=-\zeta(r)
\]
We first solve the homogeneous problem.
\begin{align*}
w_0'(r)+(d-1)\frac{w_0(r)}{r} &= 0 \\
(\ln|w_0(r)|)' &= -(d-1)(\ln(r))' \\
w_0(r) &= C_0 r^{-(d-1)}
\end{align*}
Next we find a particular solution $w(r)=C(r)r^{-(d-1)}$ with the variation of constant method. As we may expect singularities at $r=0$ we integrate from $r \rightarrow \infty$ to solve for $C(r)$.
\begin{align*}
C'(r) &= -r^{d-1}\zeta(r) \\
C(r) &= \int_r^\infty s^{d-1}\zeta(s) \d s + C(\infty) \\
w(r) &= r^{-(d-1)} \left( \int_r^\infty s^{d-1}\zeta(s) \d s + C(\infty) \right)
\end{align*}
If we now substitute $w$ back and solve for $v$ we have after integration
\[
v(r) = -\int_r^\infty s^{-(d-1)}n(s)^{-1} \left( \int_{s}^\infty t^{d-1}\zeta(t) \d t + C(\infty) \right) \d s + v(\infty).
\]
Now we can try and see how $\zeta$ looks like in a real quantum setting with $d=3$, the Coulomb potential $v(r) = -r^{-1}$ of a point-like nucleus, and the density of a hydrogen 1s orbital $n(r) = \alpha \e^{-r}$. Clearly such a density goes to zero in all directions and cannot fulfil $n^{-s} \in L^1$ for any $s\geq 0$ so even less for $s>\frac{3}{2}$ like in the conditions of \autoref{sl-solutions}. Still we want to study if the remaining quantities fit into this framework. The Coulombic $v=-r^{-1}$ is not in $H_0^1(\R^3,n)$ because the infinite tails make $v \notin L^2(\R^3)$. Finally we check if $\zeta$ is in $H^{-1}(\R^3,n)$.
\[
\zeta(r) = -\nabla\cdot(n(r)\nabla v(r)) = \nabla\cdot(\alpha \e^{-r} \nabla r^{-1}) =\frac{\alpha\e^{-r}}{r^2}
\]
We have to test $|\langle u,\zeta \rangle| < \infty$ for all $u \in H_0^1(\R^3,n)$. Observe that with $f = -\sqrt{n}/r$
\[
\zeta = \partial_r (\sqrt{n} f) + \sqrt{n} f
\]
which is already like an element of a Sobolev-space dual in the notation explained right after \autoref{th-sobolev-dual-element}.
\begin{align*}
|\langle u,\zeta \rangle| &\leq |\langle u,\partial_r (\sqrt{n} f) \rangle| + |\langle u,\sqrt{n} f \rangle| \\
&= |\langle \sqrt{n} \partial_r u, f \rangle| + |\langle u,\sqrt{n} f \rangle|
\end{align*}
As $u \in H_0^1(\R^3,n)$ we surely have $\sqrt{n} \partial_r u \in L^2(\R^3)$ and $u \in L^2(\R^3)$, so we need to show $f \in L^2(\R^3)$ and $\sqrt{n} f \in L^2(\R^3)$.
\begin{align*}
\int_{\R^3} |f|^2 \d x &= \int_0^\infty \frac{n}{r^2} r^2 \d r = \int_0^\infty \alpha \e^{-r} \d r = \alpha \\
\int_{\R^3} |\sqrt{n}f|^2 \d x &= \int_0^\infty \frac{n^2}{r^2} r^2 \d r = \int_0^\infty \alpha^2 \e^{-2r} \d r = \frac{\alpha^2}{2}
\end{align*}
Therefore $|\langle u,\zeta \rangle| < \infty$ and $\zeta \in H^{-1}(\R^3,n)$. We conclude that the framework for the Sturm--Liouville equation developed previously only fits partly in this setting. The problems arise because $\Omega=\R^3$ is unbounded which was ruled out anyway as a condition in the theorems of \autoref{sect-appl-lax-milgram}.

\subsection{The single particle case}

The case of a single particle in an arbitrary $d$-dimensional domain $\Omega$ is equivalent to any non-interacting $N$-particle description with factorising initial state which makes it possible to separate the Schrödinger equation into single-particle equations. In such a case the potential that leads to a given density $n$ can be exactly calculated which has been employed for two non-interacting particles on a one-dimensional periodic interval $\Omega = \mathbb{S}^1$ in \citeasnoun{ruggenthaler-2013}. We start with a polar representation for the wave function.
\[
\psi = R \e^{\i S}
\]
The functions $R,S$ are both real and the radial component has $R \geq 0$. Then the Schrödinger equation can be cast into a real and an imaginary part.
\begin{align}
\partial_t S &= \frac{1}{2} \frac{\Delta R}{R} - \frac{1}{2} (\nabla S)^2 - v \label{eq-hamilton-jacobi}\\
\partial_t R &= -\nabla R \cdot \nabla S - \frac{1}{2} R \Delta S \label{eq-single-particle-cont-eq}
\end{align}
The first equation has been divided by $R$ which already hides problems connected to $R \approx 0$ which are not addressed further here. Now this radial component is just the square root of the one-particle density $n$ and if we substitute $R=\sqrt{n}$ into the second equation \eqref{eq-single-particle-cont-eq} we get
\[
\partial_t n = -\nabla \cdot (n \nabla S),
\]
just another form of the marvelled Sturm--Liouville equation, here as a variant of the continuity equation. Now the theory developed in the sections before applies seamlessly and after  solving for $S$ with given $n$ we can directly get $v$ from \eqref{eq-hamilton-jacobi}.

We should add that \eqref{eq-hamilton-jacobi} is a special form of the Hamilton--Jacobi equation of classical mechanics where $S$ is the action while the right hand side is the Hamiltonian function (with reversed sign). This equation, as the most \q{continuous} formulation of classical mechanics with wave-like particle description, was also the starting point for Schrödinger in a search for a wave equation leading to the quantisation rules. \cite{schroedinger-1}

\section{Towards a more rigorous Runge--Gross proof}

\subsection{Transformation to a Schrödinger problem}

An interesting transformation of \eqref{sl-general} noted by \citeasnoun{maitra-2010} occurs if we substitute $\sqrt{n} v=w$. In general this is of course not a valid bijection and conditions on $n^{-1}$ as already put forward in the application of weighted Sobolev spaces are necessary. Formally \eqref{sl-general} becomes
\[
-\Delta w + \left( \frac{1}{2} \frac{\Delta n}{n} -  \frac{1}{4}\left(\frac{\nabla n}{n}\right)^2 \right) w = \frac{\zeta}{\sqrt{n}}.
\]
This is now a PDE of standard time-independent Schrödinger type in three dimensions for which a rich theory is readily available. Observe that the operator on the left hand side has an eigenfunction $\sqrt{n}$ with eigenvalue 0 (the ground state) which is easy to check.

A weak solution in the sense of \autoref{sect-weak-sl} with zero boundary conditions would fulfil $w=\sqrt{n}v \in H_0^1(\Omega)$ which almost corresponds to our $v \in H_0^1(\Omega,n)$. We are thus again in the setting of weighted Sobolev spaces and consequently no real solution to the problem of small densities can be given with this strategy. Taking  $w=\sqrt{n}v$ effectively removes information on $v$ in areas of $n \approx 0$ and corresponds to \q{probing the potential through matter} already visible in the weighted space construction.

\subsection{Fréchet estimate of the internal forces mapping}
\label{sect-q-mapping-frechet}

The main idea in the pioneering paper \cite{tddft3} of the fixed-point construction to finally get an estimate of type \eqref{F-contraction-inequ-2} was to assume functional differentiability of $q[v]$. Then by the fundamental theorem of the calculus of variations (\autoref{cor-fund-th-variations}) we have
\[
q[v_2] - q[v_1] = \int_0^1 \delta q[v_1 + \lambda (v_2 - v_1); v_2-v_1] \d \lambda.
\]
The potential variation should not be termed $w$ this time and we stick to $v_2-v_1$ for it, because this is the exclusively internal interaction potential that is thought of being a part of $v$ already. By further assuming that the Fréchet derivative allows for a linear response function as an integration kernel one has like in \eqref{eta-x-lin-resp}
\begin{equation}\label{eq-q-frechet-kernel}
\begin{aligned}
\delta q([v; v_2-v_1],t) = \int_0^\infty \d s \int_\Omega \d y\, &\i \theta(t-s) \langle [\hat{n}_{H[v]}(s,y), \hat{q}_{H[v]}(t,x)] \rangle_0\\
&(v_2(s,y)-v_1(s,y)).
\end{aligned}
\end{equation}
Here we left all caution aside for a moment and treated $\hat{n}_{H[v]}$ and $\hat{q}_{H[v]}$ as valid operators in the $H[v]$-interaction picture. Now exchanging the order of integration we have with $v_\lambda = v_1 + \lambda (v_2 - v_1)$
\begin{align*}
q[v_2] - q[v_1] = \int_0^\infty \d s \int_\Omega \d y\, &\left[ \i\theta(t-s) \int_0^1  \langle [\hat{n}_{H[v_\lambda]}(s,y), \hat{q}_{H[v_\lambda]}(t,x)] \rangle_0 \d \lambda \right]\\
 &\cdot(v_2(s,y)-v_1(s,y)).
\end{align*}
If the integral operator defined by the kernel in square brackets is then uniformly bounded we can derive the sought for estimate. This procedure is of course problematic because we already \emph{assumed} existence of the various mathematical objects and thus marked the starting point for the study of functional differentiability of quantum trajectories and derived quantities in \autoref{ch-diff}. We can state rigorous conditions for the existence of a Fréchet derivative $\delta q$ with the help of \autoref{th-rs-frechet} and can also derive an estimate for $\|\delta q\|_2$ using the form \eqref{eq-q-term-2} again. Note we use \autoref{lemma-permut-delta} to interchange variational and spatial (weak) derivatives.
\begin{equation}\label{eq-q-derivative}
\begin{aligned}
\delta q = N \Re \int_{\bar{\Omega}} \Big(&\Delta\psi^* \Delta\delta\psi - \onehalf ( \psi^* \Delta^2 \delta\psi + \delta\psi^* \Delta^2 \psi ) \\
+& 2 \sum_{j=2}^N w(x-x_j) (\nabla_j-\nabla) \cdot \nabla_j \psi^*\delta\psi \Big) \d \bar{x}.
\end{aligned}
\end{equation}
The estimate can now be derived exactly like in showing $q \in L^2$ in \autoref{lemma-q-L2-2} and the final estimate will be in terms of $\|\delta\psi\|_{4,2}$ given by \autoref{cor-estimate-delta-psi}. Please bear in mind that the necessary setting for the Fréchet derivative to be defined in regularity class $H^4$ is $\psi_0 \in H^8$ and potential space $\Lip([0,T],W^{6,\Sigma})$. Although this really high degree of regularity is needed for the expressions to be well-defined, the final estimate will be made with respect to a smaller degree of regularity.
\[
\sup_{t\in [0,T]} \|\delta\psi([v;v_2-v_1],t)\|_{4,2} \leq T c[v] \cdot \max_{t\in [0,T]}\|v_2(t)-v_1(t)\|_{4,\Sigma} \cdot \|\psi_0\|_{6,2} 
\]
Note that we only assume external one-body potentials and have $\Omega$ bounded so the $W^{4,\Sigma}$-norm defined in \autoref{def-sobolev-kato} is equivalent to a $H^4$-norm. Collecting all spatial derivatives up to 4\textsuperscript{th} order together with estimates for the interaction potential $w$ in a constant temporarily termed $c_w'$ we have the following.
\begin{equation}\label{eq-estimate-delta-q}
\begin{aligned}
\sup_{t\in [0,T]} \|\delta q([v;v_2-v_1],t)\|_2 &\leq c_w' \sup_{t\in [0,T]} \|\psi([v],t)\|_{4,2} \cdot \|\delta\psi([v;v_2-v_1],t)\|_{4,2} \\
&\leq T c[v,w] \cdot \max_{t\in [0,T]}\|v_2(t)-v_1(t)\|_{4,2} \cdot \|\psi_0\|_{6,2}^2
\end{aligned}
\end{equation}
This gives an expression for the desired $\xi_2$ in \eqref{F-contraction-inequ-2} as well as definitive spaces $V=\Lip([0,T],H^4)$ and $W=L^\infty([0,T],L^2)$. To finally get a contraction bound $\xi \in (0,1)$ in \eqref{F-contraction} for potentially big $\xi_1, \xi_2$ in \citeasnoun{tddft3} and later works we resort to a trick related to the Bielecki norm (see \citeasnoun[1.2.6]{zettl} and originally \citeasnoun{bielecki}). This norm is equivalent to the usual Banach-space norm, yet weights time with a decreasing exponential $\exp(-\alpha t), \alpha > 0$. Through the time integral from expressing the Fréchet derivative with an integration kernel in \eqref{eq-q-frechet-kernel} one gets a $\alpha^{-1}$ that can be taken arbitrarily small. But note that the same effect arises if the considered time interval $[0,T]$ in \eqref{eq-estimate-delta-q} is taken sufficiently small. To continue the procedure for later times $t>T$ one then simply repeats it with new initial values.

Uniform boundedness of \eqref{eq-estimate-delta-q} means to have a bounded operator $\delta q$ for all different possible potentials $v$. But even if this set of possible potentials is made bounded by relying on the first-step estimate \eqref{fixed-point-distance} of the fixed-point scheme one still operates in an infinite-dimensional potential space. This means that a continuous function on a bounded set needs not to achieve a finite supremum, this property demands a (pre-) \emph{compact} set, a problem that we already expressed in \citeasnoun{tddft-review}. Stated differently, a bounded, closed set is not necessarily compact in infinite-dimensional metric spaces. A potentially unbounded Fréchet derivative $\delta q$ means that there are potentials $v_1,v_2$ with fixed distance $\|v_2-v_1\|_V=1$ that produce an arbitrarily large local force difference $q[v_2]-q[v_1]$. It was noted in \citeasnoun{tddft3} that \q{it is, however, physically reasonable to assume that the latter cannot happen. A more precise mathematical study of this point is topic of future investigations.} Well, please refer to the chapters above.

\subsection{Lipschitz estimate of the internal forces mapping}
\label{sect-q-mapping-lipschitz}

We later realised that the use of functional differentiability for an estimate of type \eqref{F-contraction-inequ-2} can be circumvented by directly relying on Lipschitz continuity in $v$ of $\psi[v]$ and $q[v]$ respectively. This would indeed follow from \eqref{eq-evolut-lipschitz} if the $v_1,v_2$ dependency hidden in the \q{$\lesssim$} relation yields an absolute bound for all potentials under consideration. This means the problem related to non-compact potential sets remains, but the path over Lipschitz continuity already helps us to exchange the arbitrarily high $\psi_0 \in H^8$, $v \in \Lip([0,T],W^{6,\Sigma})$ requirements by a more modest $\psi_0 \in H^6$, $v \in \Lip([0,T],W^{4,\Sigma})$ that also stands in direct relation to the derived estimate. To give this estimate we start with expression \eqref{eq-q-term-2} for $q$ and two different potentials $v_1,v_2 \in \Lip([0,T],W^{4,\Sigma})$. For the sake of brevity we write $\psi_1=\psi[v_1]$ and $\psi_2=\psi[v_2]$ for the trajectories of the respective potentials. The formula for the finite difference $q[v_2]-q[v_1]$ will be given analogously to the infinitesimal difference $\delta q$ in \eqref{eq-q-derivative}.
\begin{align*}
q[v_2]&-q[v_1] = N \int_{\bar{\Omega}} \Big(\onehalf \big(\Delta(\psi_2^*-\psi_1^*) \Delta\psi_2 + \Delta\psi_1^*\Delta(\psi_2-\psi_1)\big) \\[0.4em]
&- \onehalf\Re \left\{ (\psi_2^*-\psi_1^*) \Delta^2 \psi_2 + \psi_1^* \Delta^2 (\psi_2-\psi_1) \right\} \\
&+ \sum_{j=2}^N w(x-x_j) (\nabla_j-\nabla) \cdot \nabla_j \big( (\psi_2^*-\psi_1^*)\psi_2 + \psi_1^*(\psi_2-\psi_1) \big) \Big) \d \bar{x}
\end{align*}
To estimate $\|q[v_2]-q[v_1]\|_2$ at some given time $t \in [0,T]$ we proceed like in the proof of \autoref{lemma-q-L2-2} by resorting to the Hilbert space $L^2(\bar\Omega)$ with reduced configuration space $\bar\Omega=\Omega^{N-1}$. The terms in the first two rows of the expression above are of the kind $\langle \Delta^k\psi_1,\Delta^{2-k}(\psi_2-\psi_1) \rangle'_x$, $k=0,1,2$, and can be estimated using CSB.
\[
|\langle \Delta^k\psi_1,\Delta^{2-k}(\psi_2-\psi_1) \rangle'_x| \leq \|\Delta^k\psi_1\|'_x \cdot \|\Delta^{2-k}(\psi_2-\psi_1) \|'_x
\]
Here $\|\Delta^k\psi_1\|'^2_x \in W^{2,1}(\Omega)$ (or higher) due to the preserved regularity $\psi_1 \in H^6$ of the initial state. This equals $\|\Delta^k\psi_1\|'_x \in W^{2,2}(\Omega)$ which is bounded by the Sobolev embedding \eqref{eq-sobolev-embedding}. The other terms enters the Sobolev norm $\|\psi_2-\psi_1\|_{4,2}$ when estimating $\|q[v_2]-q[v_1]\|_2$. The interaction term written as a scalar product consists of
\[
\left|\langle D^\alpha \psi_1, w(x-x_j) \cdot D^\beta (\psi_2-\psi_1) \rangle'_x\right| \leq \| D^\alpha \psi_1\|'_x \cdot \|w(x-x_j) \cdot D^\beta (\psi_2-\psi_1)\|'_x
\]
with $|\alpha|,|\beta| \leq 2$. The first part is bounded as before while the second one can be estimated with the usual \autoref{lemma-sum-space-inequality} for any $w \in L^2(\Omega) + L^\infty(\Omega)$ when evaluating the $L^2$-norm over the remaining coordinates $x$. The remaining $\|D^\beta (\psi_2-\psi_1)\|_{2,2}$ enters the Sobolev norm $\|\psi_2-\psi_1\|_{4,2}$ again. Finally this difference is estimated by \eqref{eq-evolut-lipschitz} and the Sobolev norms of $\psi_1,\psi_2$ up to 6th order yielding the respective bounded terms by \eqref{eq-schro-evolut-estimate-2m}.
\begin{equation}\label{eq-estimate-delta-q-2}
\sup_{t\in [0,T]} \|q([v_2],t)-q([v_1],t)\|_2 \leq T c_L[v_1,v_2;w] \cdot \max_{t\in [0,T]}\|v_2(t)-v_1(t)\|_{4,2} \cdot \|\psi_0\|_{6,2}^2
\end{equation}
Note that if one follows the origin of $c_L$ carefully it is shown to depend continuously on $v_1,v_2 \in \Lip([0,T],W^{4,\Sigma})$ (due to \autoref{lemma-rs-2} and \autoref{lemma-rs-3}) and in first order on the interaction potential $w \in L^2(\Omega) + L^\infty(\Omega)$.

\subsection{More on fixed-point theorems}

The Banach fixed-point theorem is by far not the only available method for guaranteeing solutions to equations involving Banach space endomorphisms. There is a second strain of theorems following Brouwer's fixed-point theorem that are non-constructive and only proof existence, not uniqueness. We follow \citeasnoun[8.1.4, 9.2.2]{evans} in the following presentation, where also an application to (quasilinear) PDEs is given.

\begin{theorem}[Brouwer's fixed-point theorem]
\hfill\\
Assume $f : \overline{B_1(0)}$ $\rightarrow \overline{B_1(0)}$ continuous, where $\overline{B_1(0)}$ is the closed unit ball in $\R^n$. Then there exists $x \in \overline{B_1(0)}$ such that $f(x)=x$.
\end{theorem}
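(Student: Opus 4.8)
The plan is to deduce Brouwer's theorem from the classical \emph{no-retraction lemma}: there is no smooth map $r:\overline{B_1(0)}\to S^{n-1}$ with $r|_{S^{n-1}}=\mathrm{id}$, where $S^{n-1}=\partial B_1(0)$. Granting this lemma, suppose for contradiction that a continuous $f:\overline{B_1(0)}\to\overline{B_1(0)}$ has no fixed point; by compactness $\varepsilon:=\min_{x}|f(x)-x|>0$. First I would replace $f$ by a smooth map: approximate $f$ uniformly to within $\varepsilon/4$ by a polynomial map $p$ (Weierstrass), note $\|p\|_\infty\leq 1+\varepsilon/4$, and rescale to $g:=p/(1+\varepsilon/4)$, which is smooth, maps $\overline{B_1(0)}$ into itself, and still satisfies $|g(x)-x|\geq\varepsilon/2>0$ everywhere; so without loss of generality $f$ is smooth and fixed-point-free. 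Then I would define $r(x)$ to be the point where the ray emanating from $f(x)$ through $x$, in direction $u(x):=x-f(x)\neq 0$, meets $S^{n-1}$, i.e. $r(x)=x+\lambda(x)u(x)$ with $\lambda(x)\geq 0$ the nonnegative root of $\lambda^2|u|^2+2\lambda(x\cdot u)+|x|^2-1=0$.

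The next step is to verify $r$ is a smooth retraction. Its explicit formula is $\lambda(x)=\bigl(-(x\cdot u)+\sqrt{(x\cdot u)^2+|u|^2(1-|x|^2)}\,\bigr)/|u|^2$, so the only threat to smoothness is vanishing of the discriminant $D(x):=(x\cdot u)^2+|u|^2(1-|x|^2)$. For $|x|<1$ we have $D(x)>0$ since $1-|x|^2>0$ and $u\neq 0$; on $|x|=1$ one computes $x\cdot u=1-x\cdot f(x)\geq 1-|x|\,|f(x)|\geq 0$ and $D(x)=(x\cdot u)^2$, so $D(x)=0$ there would force $x\cdot f(x)=1$, hence $f(x)=x$ by the equality case of Cauchy--Schwarz, contradicting fixed-point-freeness. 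Thus $D>0$ on all of $\overline{B_1(0)}$, so $r$ is smooth, clearly $|r(x)|=1$, and on $|x|=1$ the nonnegative root is $\lambda(x)=0$, so $r|_{S^{n-1}}=\mathrm{id}$. This contradicts the lemma and completes Brouwer's theorem.

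It then remains to prove the no-retraction lemma, which is where the real work lies. Given a smooth retraction $r$, set $r_t(x):=(1-t)x+t\,r(x)$ for $t\in[0,1]$. Each $r_t$ maps $\overline{B_1(0)}$ into itself, since $|r_t(x)|\leq(1-t)|x|+t|r(x)|\leq 1$, and fixes $S^{n-1}$ because $r$ does. Define $\phi(t):=\int_{B_1(0)}\det Dr_t(x)\,\d x$; as every entry of $Dr_t=(1-t)I+t\,Dr$ is affine in $t$, $\phi$ is a polynomial in $t$. For $t$ near $0$, $\|Dr_t-I\|$ is small, so $r_t$ is injective with positive Jacobian by a mean-value estimate, and $r_t(B_1(0))=B_1(0)$ by the inverse function theorem together with the boundary-fixing condition; the change-of-variables formula then gives $\phi(t)=\operatorname{vol}(B_1(0))$ on a neighborhood of $0$. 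A polynomial that is constant on an interval is globally constant, so $\phi(1)=\operatorname{vol}(B_1(0))>0$. On the other hand $r_1=r$ has image in $S^{n-1}$, so $|r(x)|^2\equiv 1$; differentiating gives $Dr(x)^{\mathsf{T}}r(x)=0$, so $Dr(x)$ is singular and $\det Dr(x)\equiv 0$, i.e. $\phi(1)=0$ --- a contradiction.

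The main obstacle is precisely this lemma, and within it the two delicate points are: justifying $r_t(B_1(0))=B_1(0)$ and injectivity of $r_t$ for small $t$ (an inverse-function-theorem argument that uses convexity of $\overline{B_1(0)}$ and the fact that $r_t$ fixes the boundary), and the leverage step "polynomial constant on an interval $\Rightarrow$ constant everywhere," which is what transports the purely local change-of-variables computation near $t=0$ all the way to $t=1$. By comparison the smoothing reduction, the ray construction, and the boundary sign computation $x\cdot f(x)\leq 1$ are routine.
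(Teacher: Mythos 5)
The paper does not actually prove this theorem: it only states it and defers to \citeasnoun[8.1.4]{evans} for the presentation, so there is no in-text argument to compare against. Your proof is correct and is in substance exactly the analytic proof from that cited reference: reduce to smooth fixed-point-free maps by polynomial approximation and rescaling, build the retraction $r$ by following the ray from $f(x)$ through $x$, and refute the existence of a smooth retraction onto $S^{n-1}$ by showing that $\phi(t)=\int_{B_1(0)}\det Dr_t(x)\d x$ is a polynomial in $t$ equal to $\mathrm{vol}(B_1(0))$ near $t=0$ yet vanishing at $t=1$. The computations you supply all check out: the $\varepsilon/2$ lower bound surviving the rescaling, the strict positivity of the discriminant on the boundary via the equality case of Cauchy--Schwarz, and $Dr(x)^{\mathsf{T}}r(x)=0$ forcing $\det Dr\equiv 0$.

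The one step you assert rather than prove is $r_t(B_1(0))=B_1(0)$ for small $t>0$. Injectivity and $\det Dr_t>0$ follow from the Lipschitz estimate as you indicate, but surjectivity needs a short connectedness argument rather than the inverse function theorem alone: $r_t(B_1(0))$ is open and, being an open set contained in $\overline{B_1(0)}$, lies in $B_1(0)$; the set $B_1(0)\setminus r_t(\overline{B_1(0)})$ is also open; these two disjoint open sets cover $B_1(0)$ because $r_t(S^{n-1})=S^{n-1}$, and the first is nonempty, so connectedness of the ball forces the second to be empty, whence $r_t(\overline{B_1(0)})\supseteq \overline{B_1(0)}$ and the change-of-variables identity $\phi(t)=\mathrm{vol}(B_1(0))$ holds. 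You correctly flag this as the delicate point; it is the only place where the writeup would need to be expanded to be complete.
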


The theorem also holds for general bounded, closed, and convex sets in $\R^n$. The \citeasnoun{wiki-brouwer} gives a very interesting account on an intuitive explanation attributed to Brouwer himself: \q{I can formulate this splendid result different, I take a horizontal sheet, and another identical one which I crumple, flatten and place on the other. Then a point of the crumpled sheet is in the same place as on the other sheet.} Note that the crumpled sheet is thought of being flattened keeping all folds and wrinkles. This intuitive picture also includes the potential non-uniqueness of fixed points and honours him as one of the main figures in the \emph{intuitionsm} branch of philosophy of mathematics.

A generalisation of Brouwer's result to Banach spaces $X$ is Schauder's fixed-point theorem where the key assumption of closeness and boundedness translates to compactness.

\begin{theorem}[Schauder's fixed-point theorem]
Suppose $K \subset X$ compact and convex and $f:K \rightarrow K$ continuous. Then there exists $x \in K$ such that $f(x)=x$.
\end{theorem}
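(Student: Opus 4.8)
The plan is to reduce the statement to Brouwer's fixed-point theorem, stated just above, by approximating $K$ from the inside by finite-dimensional compact convex sets. First I would exploit compactness: for each $k \in \N$ the open cover $\{ B_{1/k}(y) \mid y \in K \}$ admits a finite subcover, so there are points $y_1^{(k)}, \ldots, y_{N_k}^{(k)} \in K$ whose $1/k$-balls cover $K$. Set $K_k = \operatorname{conv}\{ y_1^{(k)}, \ldots, y_{N_k}^{(k)} \}$. Since $K$ is convex and contains these points, $K_k \subseteq K$, and $K_k$ is a compact convex subset of the finite-dimensional affine span of the $y_i^{(k)}$, hence (after choosing coordinates) a bounded closed convex subset of some $\R^m$ to which Brouwer's theorem applies.

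Next I would build the \emph{Schauder projection} $P_k : K \to K_k$. Using the continuous functions $\mu_i(x) = \max\{0,\, 1/k - \|x - y_i^{(k)}\|\}$, which do not all vanish on $K$ by the covering property, define
\[
P_k(x) = \frac{\sum_{i=1}^{N_k} \mu_i(x)\, y_i^{(k)}}{\sum_{i=1}^{N_k} \mu_i(x)}.
\]
This is continuous on $K$ with values in $K_k$, and since $\mu_i(x) \neq 0$ forces $\|x - y_i^{(k)}\| < 1/k$, writing $P_k(x)-x$ as a convex combination of the vectors $y_i^{(k)}-x$ (over those $i$ with $\mu_i(x)>0$) gives the uniform estimate $\|P_k(x) - x\| < 1/k$ for every $x \in K$. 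Now consider $f_k = P_k \circ f$ restricted to $K_k$: because $f$ maps $K$, hence $K_k$, into $K$, and $P_k$ maps $K$ into $K_k$, the map $f_k$ is a continuous self-map of the finite-dimensional compact convex set $K_k$. Brouwer's fixed-point theorem then yields $x_k \in K_k$ with $f_k(x_k) = x_k$, whence $\|f(x_k) - x_k\| = \|f(x_k) - P_k(f(x_k))\| < 1/k$.

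Finally I would pass to the limit. The sequence $(x_k) \subset K$ has, by compactness of $K$, a subsequence $x_{k_j} \to x \in K$. Continuity of $f$ gives $f(x_{k_j}) \to f(x)$, while $\|f(x_{k_j}) - x_{k_j}\| < 1/k_j \to 0$ forces $\|f(x)-x\| = 0$, i.e.\ $f(x)=x$. The main technical point — and essentially the only place where care is needed — is the construction of $P_k$ and the verification that it lands in $K_k \subseteq K$ and satisfies the uniform bound $\|P_k x - x\| < 1/k$; everything else is a routine combination of compactness, continuity, and the already-cited Brouwer theorem. (The degenerate cases $K = \emptyset$ or a one-point $K_k$ are vacuous or trivial and can be dismissed in one line.)
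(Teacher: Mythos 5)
Your proof is correct and is exactly the argument the paper alludes to (it only remarks that "the proof uses a finite open cover of $K$ \ldots and employs Brouwer's fixed-point theorem"): you supply the standard Schauder projection onto the convex hull of a finite $1/k$-net, apply Brouwer on the finite-dimensional piece, and pass to the limit using compactness of $K$ and continuity of $f$. No gaps.
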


The proof uses a finite open cover of $K$ which is guaranteed to exist through compactness and employs Brouwer's fixed-point theorem. The theorem can be transformed to an alternative form with conditions that are sometimes easier to check because one does not have to identify the set $K$.

\begin{theorem}[Schaefer's fixed-point theorem]
Suppose $f : X \rightarrow X$ continuous and compact and assume further the set
\[
\{ x \in X \mid x = \lambda f(x), 0 \leq \lambda \leq 1 \}
\]
is bounded. Then $f$ has a fixed point in $X$.
\end{theorem}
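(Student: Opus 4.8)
The plan is to derive Schaefer's theorem from Schauder's fixed-point theorem by a radial-retraction argument that converts the boundedness hypothesis into a fixed point of the actual map $f$. First I would fix notation: let $M>0$ be a bound for the set $S = \{ x \in X \mid x = \lambda f(x),\ 0 \leq \lambda \leq 1 \}$, so that $\|x\|_X \leq M$ for all $x \in S$, and choose any $M' > M$. Define the radial retraction $r : X \to \overline{B_{M'}(0)}$ onto the closed ball of radius $M'$ by $r(x) = x$ for $\|x\|_X \leq M'$ and $r(x) = M' x / \|x\|_X$ otherwise; this map is continuous and acts as the identity on $\overline{B_{M'}(0)}$. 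Set $g = r \circ f$.

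Next I would assemble a compact convex set on which the stated form of Schauder's theorem applies. Since $f$ is compact, $f(\overline{B_{M'}(0)})$ is relatively compact, and since $r$ is continuous, $g(\overline{B_{M'}(0)})$ is relatively compact as well. Let $K = \overline{\operatorname{conv}}\, g(\overline{B_{M'}(0)})$; by Mazur's theorem the closed convex hull of a relatively compact subset of a Banach space is compact, so $K$ is compact and convex. Because $\overline{B_{M'}(0)}$ is itself closed and convex and contains $g(\overline{B_{M'}(0)})$, we have $K \subseteq \overline{B_{M'}(0)}$, whence $g(K) \subseteq g(\overline{B_{M'}(0)}) \subseteq K$. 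Thus $g \colon K \to K$ is continuous on a compact convex set, and Schauder's fixed-point theorem produces $x^* \in K$ with $g(x^*) = r(f(x^*)) = x^*$.

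It then remains to upgrade this to a fixed point of $f$ itself, which is the step that consumes the boundedness hypothesis. If $\|f(x^*)\|_X \leq M'$, then $r$ acts as the identity at $f(x^*)$ and $f(x^*) = x^*$, so we are done. Otherwise $\|f(x^*)\|_X > M'$, and then $x^* = r(f(x^*)) = \lambda f(x^*)$ with $\lambda = M'/\|f(x^*)\|_X \in (0,1)$, so $x^* \in S$ and hence $\|x^*\|_X \leq M$; but in this regime $\|x^*\|_X = \|r(f(x^*))\|_X = M'$ by the very definition of $r$, contradicting $M' > M$. Therefore the second case cannot occur, and $f(x^*) = x^*$.

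The routine points — continuity of the radial retraction, that compactness of $f$ passes to $g$ on the ball, and the elementary norm identities in the last paragraph — I would only sketch. The single genuinely non-trivial ingredient is the passage from a relatively compact set to the compact convex set $K$ required by the compact-convex formulation of Schauder's theorem quoted in the excerpt; this is exactly Mazur's theorem, and I would either invoke it directly or, to stay self-contained, instead apply the standard \emph{closed}-convex version of Schauder's theorem on $\overline{B_{M'}(0)}$, which removes the need for the closed-convex-hull construction altogether.
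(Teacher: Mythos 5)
Your proof is correct. The paper itself does not prove Schaefer's theorem --- it only states it, deferring to the cited reference (Evans, \S 9.2.2) --- and your radial-retraction argument (truncate $f$ with $r$ onto a ball of radius $M'>M$, apply Schauder on the compact convex set $\overline{\operatorname{conv}}\,g(\overline{B_{M'}(0)})$ obtained via Mazur's theorem, then rule out the case $\|f(x^*)\|_X > M'$ because it would force $x^* = \lambda f(x^*)$ with $\lambda\in(0,1)$ and $\|x^*\|_X = M' > M$) is exactly the standard proof found there, so it supplies precisely the argument the paper omits.
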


Thus if we have a bound on the fixed points of all possible $\lambda f$ for $0 \leq \lambda \leq 1$, then we also have the existence of a fixed point for $f$. This is linked to the method of \emph{a priori} estimates in PDE theory. \citeasnoun[p.~539]{evans} explains it as the remarkable informal principle that \q{if we can prove appropriate estimates for solutions of a nonlinear PDE, under the assumption that such solutions exist, then in fact these solutions do exist.} The application of such theorems is now through the inverse of linear elliptic operators that are typically smoothing and can allow for compactness.

As always the most basic example is provided through the Laplacian. Because $-\Delta : H_0^1(\Omega) \cap H^2(\Omega) \rightarrow L^2(\Omega)$ has a well-defined inverse, we can define the inverse operator $(-\Delta)^{-1}$ whose image is continuously embedded in $H_0^1(\Omega)$ which in turn is compactly embedded in $L^2(\Omega)$ for bounded $\Omega \subset \R^n$ obeying the cone condition and $n \geq 3$ (Rellich--Kondrachov Theorem, \autoref{embedding2}). This makes $(-\Delta)^{-1}: L^2(\Omega) \rightarrow L^2(\Omega)$ a compact operator.

This raises some hope that a similar technique can be used for the inverse of our Sturm--Liouville problem $(-\nabla\cdot(n\nabla))^{-1}$, the embedding $H_0^1(\Omega,n) \hookrightarrow\hookrightarrow L^2(\Omega)$ already shown to be compact in \autoref{embedding3}. The non-linear term $q[v]$ allows for certain estimates in $\|v\|$, but the growth is rather exponential as seen in \eqref{eq-schro-evolut-estimate-2m} than allowing a Lipschitz estimate like in \citeasnoun[9.2.2, Ex.~2]{evans}.

\subsection{Open issues}

We are now in the position to use the developed techniques and put them into action to find minimal conditions for a possible (non-extended) Runge--Gross theorem. The principle statement of \autoref{runge-gross-th} is:

\emph{Given the initial wave function $\psi_0$ of an $N$-particle system there is a one-to-one mapping between densities and external potentials up to a merely time-dependent function.}

We thus try to show that the mapping $v \mapsto n[v]$ of external one-body potentials from a wisely chosen domain into the set of densities is injective. The interaction part of the potential is considered to be fixed here and will thus not be denoted separately. We will collect all assumptions needed along the way.

Assume $v \neq v'$ but giving the same density $n[v]=n[v']=n$ for a fixed initial state $\psi_0 \in \H$. Take the \q{divergence of force density equation} \eqref{eq-div-force-density} with both potentials and subtract them to get
\[
-\nabla \cdot (n \nabla (v-v')) = q[v]-q[v']
\]
with the $\partial_t^2 n$ term cancelled. \autoref{cor-strong-lm-sol} gives an estimate for the (strong) solution to the corresponding Sturm--Liouville problem and that is
\[
\|v-v'\|_{2,2} \leq C_{\mathrm{LM}} \|q[v]-q[v']\|_2
\]
at all times $t \in [0,T]$ with the time argument still suppressed. The necessary conditions are $\Omega$ bounded with dimension $d \leq 3$ and the density resulting from both $v$ and $v'$ must be bounded from below, i.e., $n(x)\geq m > 0$ for all $x \in \Omega$. This calls for a periodic setting in space that will be assumed from now on. Having $\Omega$ tessellated removes the zero boundary conditions from $H_0^1(\Omega)$ and we have to determine another fitted space such that the bilinear form from the Sturm--Liouville problem is coercive. This will rule out all $v = const$ from the domain of potentials, a fact which does not yet enter into the notation. The final condition $n \in W^{4,1}(\Omega)$ is automatically fulfilled by \autoref{th-sobolev-regularity} if $v,v' \in \Lip([0,T], H^2(\Omega))$ and $\psi_0 \in H^4(\Omega)$. On the other hand the positivity condition for $n[v]$ is not automatically obeyed by such potentials.

In the next step the difference $q[v]-q[v']$ is attacked with the derived Lipschitz estimate \eqref{eq-estimate-delta-q-2}. One important ingredient here is \autoref{lemma-sum-space-inequality} that demands $d\geq 3$ thus fixing the setting to $d=3$, our beloved physical space.
\[
\sup_{t\in [0,T]} \|q([v],t)-q([v'],t)\|_2 \leq T c_L[v,v';w] \cdot \max_{t\in [0,T]}\|v(t)-v'(t)\|_{4,2} \cdot \|\psi_0\|_{6,2}^2
\]
Now putting this together gives
\[
\max_{t\in [0,T]} \|v(t)-v'(t)\|_{2,2} \leq TC_{\mathrm{LM}} c_L[v,v';w] \cdot \max_{t\in [0,T]}\|v(t)-v'(t)\|_{4,2} \cdot \|\psi_0\|_{6,2}^2
\]
where we replaced the supremum by the maximum because the potentials are Lipschitz-continuous anyway. The $T$ in front can now be made arbitrarily small to give a contradiction \emph{if} the norms on both sides would match. Unfortunately they do not and the inequality above cannot be used as an argument for injectivity of the map $v \mapsto n[v]$. Using the alternative estimate \autoref{cor-estimate-delta-psi-2} for the Fréchet derivative of $\psi[v]$ instead is no good either, because it introduces the additional time derivatives of the potential and has no explicit linear $T$-dependence. Thus the dilemma at this point is that the pieces from the two main ingredients of this proof technique, the Sturm--Liouville inversion and the estimation of the internal forces term, do not fit. Please note that this is not a predominant failure of the fixed-point approach, other treatments also suffer from many of the issues discussed here.

Although we believe we were successful in clarifying some major open problems in the construction of a full-fledged mathematical framework for TDDFT, it seems we were even more efficient in raising new questions. An overview of open issues shows that the whole theoretic building is still fragile.

\begin{enumerate}

	\item \emph{The connection problem.} As the principle nuisance this problem shows up just above, where the norms on both sides of the contraction inequality \eqref{F-contraction} finally do not match. This arises because the necessary regularity for the interaction term is higher than the one coming out from the inversion of the Sturm--Liouville operator.
	
	\item \emph{The small densities problem.} Obviously small or zero densities are a potential hazard when inverting the Sturm--Liouville equation \eqref{sl-general} and we were able to give exact conditions on \emph{how small} they can still be. One strategy to make things easier was to choose periodic boundary conditions for the space domain. But it is by no means clear that potentials from the given classes guarantee permitted densities again. Statements in this direction, if available, would relate conceptually to the unique continuation property of eigenstates (\autoref{sect-ucp}).
	
	\item \emph{The time-control problem.} Solving the fixed-point iteration \eqref{sl-iteration-2} at individual time instants yields the desired potential as a limit but does not give any information on its time regularity (Lipschitz-continuity to be more specific) that is needed for regularity of solutions to the Schrödinger equation.
	
	\item \emph{The $\rho$-problem.} Raised in \autoref{sect-rho-problem} this problem consists in the insecurity if the potential from a fixed-point iteration really leads to the original density if put into Schrödinger's equation with the same initial state.

\end{enumerate}

\subsection{Concluding remarks}

Casting a more positive light on the work at hand one can say that it presents the first fully coherent setting for a well-defined \q{divergence of force density equation} \eqref{eq-div-force-density}.
\[
-\nabla\cdot (n \nabla v) = q - \partial_t^2 n
\]
For the equation to hold in $L^2(\Omega)$ this is \autoref{cor-strong-lm-sol} for the left-hand side, what we call the Sturm--Liouville operator, and the necessary Sobolev regularity of the quantum trajectory for $q$. What is needed here is demonstrated in \autoref{lemma-q-L2-2} in conjunction with \autoref{th-sobolev-regularity} for $m=2$. The \q{second law term} $\partial_t^2 n$ was treated similarly in \autoref{lemma-dt2n-L2}.

Adding to that we gave many results regarding invertibility of the Sturm--Liouville type equation in \autoref{sect-appl-lax-milgram} and in \autoref{sect-stronger-sol-sl}. This usually assumed invertibility is also a major ingredient in the original proof of the extended Runge--Gross theorem (\autoref{ext-runge-gross-th}) that is used to show the existence of an auxiliary Kohn--Sham system and thus forms the basis for applications of TDDFT.

We believe that one serious drawback in the current fixed-point proof for Runge--Gross is the detached treatment of individual time instants which does not impose any temporal regularity on the resulting potential $v$. On the other hand the classical Runge--Gross proof (\autoref{sect-runge-gross-proof}) demands a very high temporal regularity (analyticity) from the potential. It seems that the most reasonable level of regularity would lie in between. The whole formulation is somewhat still too much based on the static setting of DFT. But the magnitude of internal interactions can also be controlled by a frequency dependent norm of the potential through \autoref{cor-estimate-delta-psi-2} that also showed up in the energy estimates in \autoref{sect-energy-estimates}. Now any high-frequency scalar potential would physically be linked to a magnetic field through the charged medium by Maxwell's equations. This can be seen as a hint that actually the classical Runge--Gross description is too restrictive and magnetic fields interacting with charge currents should be included. This is the setting of time-dependent current DFT   \cite{vignale-2004} or in a fully field theoretic description the one of the newly established quantum-electrodynamical DFT \cite{ruggenthaler-qed-1,ruggenthaler-qed-2,flick-2015}. One logical step would thus be the application of the fixed-point approach or similar techniques to those branches and trying to find a closed working scheme there.



\end{document}